\DeclareMathAlphabet{\mathcal}{OMS}{cmsy}{m}{n}
      \tikzstyle{blockzm1} = [rectangle, draw, fill=white, 
      \tikzstyle{blockzm2} = [rectangle, draw, fill=white, 
      \tikzstyle{blockzm1r} = [rectangle, draw, fill=red!70, 
      \tikzstyle{blockzm1o} = [rectangle, draw, fill=orange!80, 
      \tikzstyle{blockzm1y} = [rectangle, draw, fill=yellow!50, 
      \tikzstyle{blockzm1g} = [rectangle, draw, fill=green!60, 
        \tikzstyle{blockzm1p} = [rectangle, draw, fill=blue!50, 
   \tikzstyle{blockz} = [rectangle, draw, fill=white, 
     \tikzstyle{blockzflexi} = [rectangle, draw, fill=white, 
    \tikzstyle{blockz2} = [rectangle, draw, fill=white, 
        \tikzstyle{blockzL} = [rectangle, draw, fill=white, 
\tikzstyle{vertex}=[circle,fill=black!25,minimum size=20pt,inner sep=0pt]
\tikzstyle{selected vertex} = [vertex, fill=red!24]
\tikzstyle{edge} = [draw,thick,-]
\tikzstyle{weight} = [font=\small]
\tikzstyle{selected edge} = [draw,line width=5pt,-,red!50]
\tikzstyle{ignored edge} = [draw,line width=5pt,-,black!20]
\tikzset{
  treenode/.style = {align=center, inner sep=0pt, text centered,
    font=\sffamily},
  arn_n/.style = {treenode, circle, white, font=\sffamily\bfseries, draw=black,
    fill=black, text width=1.5em},
  arn_r/.style = {treenode, circle, red, draw=red, 
    text width=1.5em, very thick},
  arn_x/.style = {treenode, rectangle, draw=black,
    minimum width=0.5em, minimum height=0.5em}
}
\tikzstyle{vertex}=[circle,fill=black!25,minimum size=20pt,inner sep=0pt]
\tikzstyle{selected vertex} = [vertex, fill=red!24]
\tikzstyle{edge} = [draw,thick,-]
\tikzstyle{weight} = [font=\small]
\tikzstyle{selected edge} = [draw,line width=5pt,-,red!50]
\tikzstyle{ignored edge} = [draw,line width=5pt,-,black!20]
\tikzset{
>=stealth',
  punktchain/.style={
    rectangle, 
    rounded corners, 
    draw=black, very thick,
    text width=10em, 
    minimum height=3em, 
    text centered, 
    on chain},
  line/.style={draw, thick, <-},
  element/.style={
    tape,
    top color=white,
    bottom color=blue!50!black!60!,
    minimum width=8em,
    draw=blue!40!black!90, very thick,
    text width=10em, 
    minimum height=3.5em, 
    text centered, 
    on chain},
  every join/.style={->, thick,shorten >=1pt},
  decoration={brace},
  tuborg/.style={decorate},
  tubnode/.style={midway, right=2pt},
}
\pgfplotsset{grid style={red}}
\tikzstyle{blockz} = [rectangle, draw, fill=blue!20,  text width=2em, text centered, rounded corners, minimum height=2em, minimum width = 2em]
\numberwithin{algorithm}{section}
\newtheorem{thmm}[algorithm]{Theorem}
\theoremstyle{definition}
\newtheorem{dfn}[algorithm]{Definition}
\newtheorem{exa}[algorithm]{Example}
\newtheorem{cexa}[algorithm]{Counterexample}
\newtheorem{prob}[algorithm]{Problem}
\newtheorem{rem}[algorithm]{Remark}
\theoremstyle{plain}
\newtheorem{lem}[algorithm]{Lemma}
\newtheorem{pro}[algorithm]{Proposition}
\newtheorem{cor}[algorithm]{Corollary}
\numberwithin{equation}{thmm}
\DeclarePairedDelimiter\ceil{\lceil}{\rceil}
\DeclarePairedDelimiter\floor{\lfloor}{\rfloor}
\newcommand{\li}[1]{\#\mathrm{Line}[#1]}
\newcommand{\Es}{\mathcal{E}}
\newcommand{\C}{\mathcal{C}}
\newcommand{\AP}{\mathcal{P}}
\newcommand{\Child}{\mathrm{Children}}
\newcommand{\Desc}{\mathrm{Descendants}}
\newcommand{\nxt}{\mathrm{Next}}
\newcommand{\R}{\mathbb{R}}
\newcommand{\Z}{\mathbb{Z}}
\newcommand{\PS}{\mathbb{P}}
\newcommand{\V}{\mathbb{V}}
\newcommand{\W}{\mathbb{W}}
\newcommand{\I}{\mathbb{I}}
\newcommand{\T}{\mathcal{T}}
\newcommand{\Sd}{\mathcal{S}}
\newcommand{\al}{\alpha}
\newcommand{\be}{\beta}
\newcommand{\de}{\delta}
\newcommand{\De}{\Delta}
\newcommand{\ep}{\epsilon}
\newcommand{\la}{\lambda}
\newcommand{\NN}{\mathrm{NN}}
\newcommand{\rad}{\mathrm{diam}}
\newcommand{\diam}{\mathrm{diam}}
\newcommand{\bs}{\hfill $\blacksquare$}
\newcommand{\id}{\mathrm{id}}
\newcommand{\life}{\mathrm{life}}
\newcommand{\MG}{\mathrm{MG}}
\newcommand{\PD}{\mathrm{PD}}
\newcommand{\HD}{\mathrm{HD}}
\newcommand{\BD}{\mathrm{BD}}
\newcommand{\GH}{\mathrm{GH}}
\newcommand{\ID}{\mathrm{ID}}
\newcommand{\SL}{\mathrm{SL}}
\newcommand{\MST}{\mathrm{MST}}
\newcommand{\birth}{\mathrm{birth}}
\newcommand{\death}{\mathrm{death}}
\newcolumntype{M}{>{\begin{varwidth}{3cm}}l<{\end{varwidth}}}
\title{A new compressed cover tree for k-nearest neighbour search and the
stable-under-noise mergegram of a point cloud} \let\Title\@title
\author{Yury Elkin} \let\Author\@author
\begin{document}
\pagestyle{empty}
\begin{titlepage}
	\centering
	\vspace*{1cm}
	\includegraphics[width=65mm]{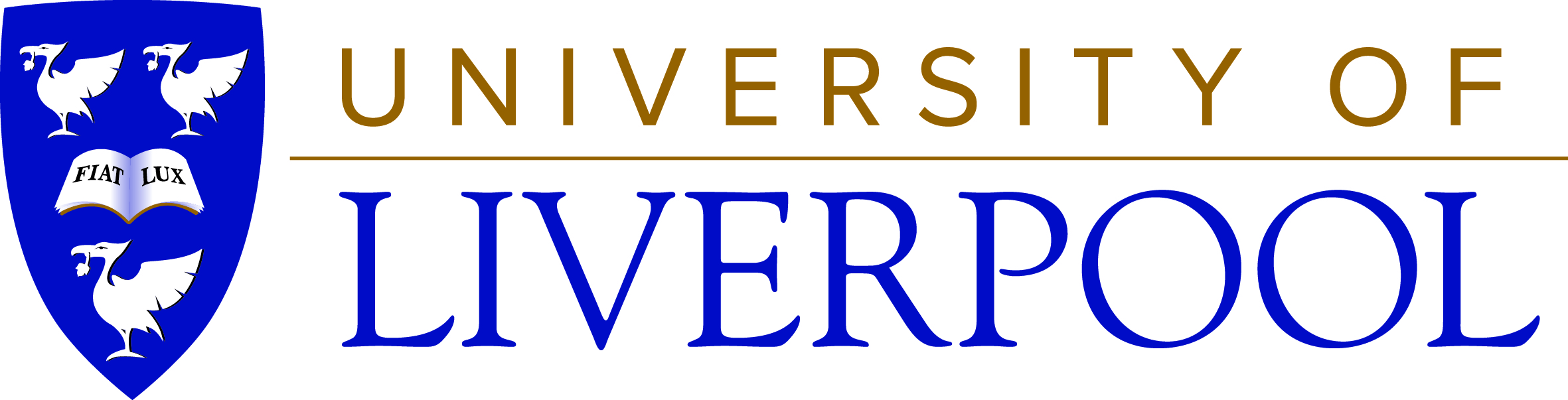}\par\vspace{1cm}
		\vspace{2cm}
	{\huge \Title\par}
	\vspace{5cm}
	{Thesis submitted in accordance with the requirements of the University of Liverpool for the  degree  of  Doctor  in  Philosophy by  \par}
	\vspace{1cm}
	{\textbf{\Author}\par}
	\vfill

	{\large \monthname\, \the\year\par}
\end{titlepage}
\clearpage

\pagenumbering{roman}
\pagestyle{fancy}
\fancyhead{} 
\fancyhead[LO,RE]{\Title}
\fancyhead[LE,RO]{\Author}
\fancyfoot{} 
\fancyfoot[C]{\thepage}

\begin{singlespace}

\addcontentsline{toc}{chapter}{Abstract} 
\chapter*{Abstract}
\thispagestyle{fancy}

The analysis of data sets mathematically representable as finite metric spaces plays a significant role in every scientific study. In this thesis we focus on constructing new effective algorithms in the area of computational geometry that can be effectively deployed for the study and classification of large data sets prevalent in natural science, economic analysis, medicine, environmental protection etc.




\medskip
\noindent 
The first major contribution of this thesis is a new near-linear time algorithm, that resolves the classical problem of finding $k$-nearest neighbors (KNN) to of query set $Q$ in a larger reference set $R$, where $Q$ and $R$ both belong to some metric space $X$.
This project was inspired by the work of Beygelzimer, Kakade, and Langford in ICML 2006 that attempted to show that such problem is resolvable for $k=1$ having a near-linear time complexity. 
However, in 2015 it was pointed out that the proof of their time complexity might contain mistakes, which has been ascertained in this thesis by showing that the proposed proof does not withstand a concrete counterexample.
An important application of the KNN algorithm is a KNN graph on a finite metric space $R$ whose edge set is formed by connecting every point $p \in R$ with its $k$-nearest neighbors. The KNN graph finds its application in areas of data-skeletonization, where it can serve as an initial skeleton of the data set, or in cluster analysis, where connected components of the KNN graph can represent the clusters.

\medskip
\noindent
Another application of the the KNN algorithm is Minimum spanning tree (MST), which is an efficient way to visualize any unstructured data while knowing only distances, for example any metric graph connecting abstract data points. Although many efficient algorithms for the MST in metric spaces have been devised, there existed only one past attempt to justify a near-linear time complexity in the size of a given metric space. In 2010 March, Ram, and Gray claimed that MST of any finite metric space can be built in a parametrized near-linear time. 
In this work we have demonstrated, with multiple counterexamples, that the attempted proof was incorrect by showing that one of its step fails.
Encouraged by the results of the work of 2010 this thesis produces a new algorithm that is based on Boruvka algorithm, which is combined with the KNN method to resolve the metric MST problem in a near-linear time complexity.

\medskip
\noindent
In the thesis final chapter the MST algorithm is applied in the computation of a new isometry invariant mergegram of Topological Data Analysis (TDA). 
TDA quantifies topological shapes hidden in unorganized data such as clouds of unordered points.
In the $0$-dimensional ($0$D) case, the distance-based persistence is determined by a single-linkage (SL) clustering of a finite set in a metric space.
Equivalently, the $0D$ persistence captures only edge lengths of a Minimum Spanning Tree (MST).
Both the SL dendrogram and the MST are unstable under perturbations of points.
In this thesis, we define the new stable-under-noise mergegram which outperforms
previous isometry invariants on a classification of point clouds.

\medskip
\noindent
In conclusion, the developed fast algorithms of this thesis can cater to a vast varieties of tasks in data science and beyond. The newly proposed corrected time complexity analysis of KNN and MST not only rectifies the past issues in their theoretical justifications but also gives a way to fix analogous issues in other similar methods based on the cover tree data structure.



\bigskip



\clearpage

\addcontentsline{toc}{chapter}{Acknowledgements} 
\chapter*{Acknowledgements}

First of all, I would like to thank my primary supervisor, Dr. Vitaliy Kurlin, for
giving me the opportunity to deeply study in a "high-voltage" field of mathematics about which I remain to be passionate as ever.
His incredible experience has been
inspiring and encouraging to me.
His persistent moral support during all these years was a significant
factor to my motivation.
I am so grateful for all our fruitful conversations that greatly aided to my research results!

I would also like to thank my second supervisor, Dr. Marja Kankaanrinta, for all her guidance,
her patience, and for sharing her knowledge with me. Her feedback has been fundamental
to my research and steered me steadily the right course.

Finally, I can not put in words how much I owe to my family, for their commitment to assist, uplift and back me up.
\clearpage

\tableofcontents
\addcontentsline{toc}{chapter}{Contents} 
\clearpage

\listoffigures
\addcontentsline{toc}{chapter}{List of Figures} 
\clearpage

\listoftables
\addcontentsline{toc}{chapter}{List of Tables} 
\clearpage


\end{singlespace}
\cleardoublepage

\pagenumbering{arabic}
\pagestyle{fancy}
\fancyhead{} 
\fancyhead[RO,LE]{\thepage}
\fancyhead[LO]{\nouppercase\leftmark}
\fancyhead[RE]{\Author}
\fancyfoot{} 

\chapter{Introduction} 
\label{ch:intro} 

\section{Motivation}

Topological Data Analysis (TDA) is a recent modern branch of mathematics that appeared from various works in applied topology and computational geometry during the early 2000s. Although one can trace back geometric approaches for data analysis quite far in the past, TDA
started as a field with the pioneering works of Edelsbrunner et al. \cite{edelsbrunner2000topological} and Zomorodian
and Carlsson \cite{zomorodian2005computingTwo} in persistent homology and was later popularized in a landmark paper in 2009
Carlsson \cite{carlsson2009topology}. TDA is mainly motivated by the idea that topology and geometry provide a
a powerful approach to access information about the structure of data \cite{chazal2016structure}.

\medskip

\noindent
TDA aims to provide adequate mathematical, statistical, and algorithmic methods to derive, analyze and exploit the underlying complex topological and geometric structures in given data. The data is often represented as point clouds in Euclidean or more complex metric spaces. During last years, gigantic efforts have been made to provide TDA with powerful data structures and algorithms that are now implemented and ready to use, this involves skeletonization methods such as \cite{singh2007topological} Mapper, Homologically Persistent Skeleton \cite{smith2021skeletonisation,kurlin2015one},  and fast approximate skeleton \cite{elkin2021fast}, as well as libraries allowing us to compute persistent homology such as the Gudhi library (C++ and Python) Cl{\'e}ment Maria et al \cite{maria2014gudhi}, as well as a set of efficient tools that can be used in combination or complementary to other data sciences tools, such as deep learning  \cite{kim2020efficient}.

\medskip

\noindent
 There now exist a large variety of methods and techniques inspired by topological and geometric advances. Providing a complete analysis of all these existing advances is beyond the scope
of this Ph.D. thesis. However, most of them rely on the following typical
pipeline pattern that constitutes as a backbone for topological data analysis:

\begin{enumerate}
\item  The input is assumed to be a finite set of points coming with a notion of distance. This distance can be induced by the metric in the underlying space
(e.g. the Euclidean metric when the data are embedded in $\R^m$ ) or come as an arbitrarily defined metric for instance described as a pairwise distance matrix. The definition of the metric on the data is
commonly given as an input advised by the application. It should be noted that the choice of metric might be crucial for bringing to light the thought-provoking topological and geometric features of the data.

 \item  A “continuous” shape is constructed on top of the data to reformulate the task in terms of topology
or geometry. This is often a CW-complex or a nested family of complexes,
called a filtration, that highlights the formation of the data at different scales. 
The complexes can be seen as a higher-dimensional analog of neighboring graphs that are
in normal circumstances built on top of data which is the case in the classical data analysis or learning procedures. The
main challenge of this step is to come up with an appropriate structure that will outline the essential information.

\item Topological or geometric information is extracted from the output of step 2. This can result in full reconstruction, typically a skeleton of the shape, from which geometric features can be easily extracted or, in approximate summaries, or so-called algebraic invariants. The latter highlights relevant information, such as persistence diagrams that are isometric invariant of the original data and that can be often quickly computed in a modern GPU-parallel way for example using a Ripser algorithm \cite{1908.02518}.
Beyond getting interesting topological/geometric information or its interpretation the goal is to prove its relevance, in particular its stability with respect to noise in the input data.

\item The summaries produced in step 3 may generate additional sets of features and descriptors of the initial data points. They can be used to improve our understanding of the data, for example, via visualization, such as using the new TDAView visualization technique \cite{walsh2020tdaview} or TTK \cite{tierny2017topology}. Or they can be combined with other features and used further in machine learning tasks, for example, a neural network can be built on persistence diagrams using a topological landscape layer \cite{kim2020efficient} or Perslay-library \cite{carriere2019perslay} containing multiple customized permutation invariants.
\end{enumerate}

\newpage

\section{Thesis structure}

The central objective of this thesis is to describe new computational methods for the analysis of finite metric spaces, as well as parametrized near-linear time complexities for $k$-nearest neighbors search, construction of minimum spanning tree, as well as newly a developed object, mergegram. The research is carried out in the framework of space partitioning data structures and topological data analysis (TDA), for which we develop and effectively utilize novel computational geometry techniques. All worst-case time complexity estimates in this work are measured by the random-access machine model \cite{Gordon_Bell1971-gy}. This work breaks into following four chapters:
\begin{itemize}
\item Chapter \ref{ch:preliminaries} introduces the basic concepts and definitions.
\item Chapter \ref{ch:knn} solves the $k$-nearest neighborhood problem.
\item Chapter \ref{ch:mst} is dedicated to the minimum spanning tree problem.
\item Chapter \ref{ch:mergegram} introduces a new invariant of finite metric spaces, mergegram.
\end{itemize}
To avoid confusion we formally define the meaning of \emph{incorrectness of proof}. 
\begin{dfn}
We say that a result (Theorem, Proposition, Lemma ...) has an incorrect proof, if it contains a claim and there exists an example for which the claim can not be satisfied. 
\end{dfn}

\noindent 
It should be noted that if a presented proof in the literature for some result is incorrect it does not necessary follow that the result is incorrect since there might exist another correct proof for it.

\subsection{A new compressed cover tree for the $k$-nearest neighbor search}

Chapter \ref{ch:knn} is dedicated to the exact $k$-nearest neighbor problem in any metric space. 
In a modern formulation, the problem is to find all $k\geq 1$ nearest neighbors in a finite reference set $R$ for all points from a finite query set $Q$.
Both sets live in an ambient space $X$ with a given distance $d$ satisfying all metric axioms.
The simplest example is $X=\R^n$ with the Euclidean metric, where a query set $Q$ can be a single point or a subset of a larger set $R$. 

\medskip
\noindent
The approach taken to resolve the $k$-nearest neighbors problem is inspired by \cite{beygelzimer2006cover}.
Publication \cite{beygelzimer2006cover} introduced new concepts: cover tree, which stored sets of $R$ into a leveled tree-type data structure and a parameter, expansion constant $c(R)$ that was defined as a growth-bound for any finite metric space $R$.
The main contribution of \cite{beygelzimer2006cover} was an attempt to prove two new results regarding time upper bound time complexity estimates of the cover tree construction algorithm and the nearest neighbors algorithm.
Theorem 6 of \cite{beygelzimer2006cover} claimed a construction time of $O(c(R)^{O(1)} \cdot |R|\log|R|)$ for cover tree on any finite metric space $R$. Theorem 5 claimed that the first nearest neighbor of any query point $q \in Q$ can be found in time $O(c(R)^{O(1)} \cdot \log|R|)$ assuming that the cover tree data structure was built on the reference set $R$.  However, in this Ph.D. thesis both of the results have been shown to have incorrect proof: 
Counterexample~\ref{cexa:construction_algorithm_of_original_cover_tree} shows that a step in the proof of Theorem 6 and 
Counterexample~\ref{cexa:original_all_nearest_neighbors_algorithm} shows a step in the proof of Theorem 5 were incorrect.

\medskip 
\noindent
In this Ph.D. thesis Definition~\ref{dfn:cover_tree_compressed} introduces a new concept: compressed cover tree, which is a simplified version of a cover tree. For its construction, Algorithm~\ref{alg:cover_tree_k-nearest_construction_whole} is employed, which is an adaption of \cite[Algorithm~2]{beygelzimer2006cover} to compressed cover trees.
Corollary~\ref{cor:construction_time_KR} proves that the upper bound time complexity estimate of the new construction method is bounded by $O(c(R)^{O(1)} \cdot |R| \cdot \log|R|)$. This thesis introduces a new method for finding $k$-nearest neighbors Algorithm~\ref{alg:cover_tree_k-nearest}. Algorithm~\ref{alg:cover_tree_k-nearest} is a modification of \cite[Algorithm~1]{beygelzimer2006cover}, where an additional stoppage is added to prove a near-logarithmic time complexity for a $k$-nearest neighbors search of a single query point $q \in Q$. The main result, Theorem~\ref{thm:knn_KR_time} gives an upper bound time complexity of 
$O\big(c(R)^{O(1)}\cdot\log(k) \cdot (\log(|R|) + k)\big)$
for the newly introduced $k$-nearest neighbors method.

\medskip
\noindent

Other important results of this section are:

\begin{itemize}

\item Definition~\ref{dfn:expansion_constant} introduces new dimensionality constant $c_m(R)$. 
Theorem \ref{thm:normed_space_exp_constant} shows that if $R$ is a finite subset of $\R^{n}$, then $c_m(R) \leq 2^{n}$.

\item Theorem~\ref{thm:construction_time} and Corollary~\ref{cor:cover_tree_knn_miniziminzed_constant_time} give an upper bound for time complexity of the compressed cover tree construction algorithm and the $k$-nearest neighbors algorithm by using aspect ratio $\Delta(R)$ of Definition \ref{dfn:radius+d_min} and minimized expansion constant $c_m(R)$ of Definition~\ref{dfn:expansion_constant} as parameters. 

\item Theorem~\ref{thm:approximate_k_nearestneighbors} gives an upper bound for time complexity of $(1+\epsilon)$-approximate $k$-nearest neighbors method, Algorithm \ref{alg:cover_tree_k-nearest_approximate}. 

\item Counterexample \ref{cexa:dualtreecode} shows that the time complexity result \cite[Theorem~3.1]{ram2009linear} of the dual-tree approach to the nearest neighbor problem had an incorrect proof as well.

\end{itemize}

\begin{table}
\caption{This table contains list of theorems that used the cover tree datastructure and were proven incorrectly, as well as new algorithms and new theorems that are intended to rectify the past mistakes.}
\begin{tabular}{|V{6.0cm}|V{6cm}|V{2.9cm}|V{3.0cm}|}
\hline
Original result &  Counterexample  & New Algorithm   & New Theorem \\
\hline
$k$-nn search \cite[Theorem~5]{beygelzimer2006cover} & Counterexample \ref{cexa:original_all_nearest_neighbors_algorithm} & Algorithm \ref{alg:cover_tree_k-nearest_construction_whole} & Corollary \ref{cor:construction_time_KR} \\
\hline
Cover-tree construction \cite[Theorem~6]{beygelzimer2006cover} & Counterexample \ref{cexa:construction_algorithm_of_original_cover_tree} & Algorithm \ref{alg:cover_tree_k-nearest} & Theorem \ref{thm:knn_KR_time} \\
\hline
Paired-tree $k$-nn search \cite[Theorem~3.1]{ram2009linear} & Counterexample \ref{cexa:dualtreeproof} & \makecell{Ryan Curtin \\ \cite[Algorithm~2,3]{curtin2015plug} } & \makecell{Ryan Curtin \\
Weaker result \\ \cite[Theorem~2]{curtin2015plug}}\\
\hline
\makecell{Construction of $\MST$ \\ \cite[Theorem~5.4]{march2010fast}} & \makecell{Counterexample \ref{cexa:mst}, \\ Counterexample \ref{cexa:mst_component}} &  \makecell{Single-tree \\ Algorithm \ref{alg:single_tree_boruvka}} & Weaker result Corollary \ref{cor:single_cover_tree_mst_time} \\
\hline       
\end{tabular}
\end{table}
  

\subsection{Fast algorithm for MST based on the compressed cover tree}

Chapter \ref{ch:mst} is dedicated to metric minimum spanning tree problem. 
A minimum spanning tree on a finite metric space $(R,d)$ is a tree $\MST(R)$ with vertex set $R$ and a minimum total length of edges among all possible spanning trees of $R$. In the metric minimum spanning tree problem the goal is to construct a near-linear time algorithm that constructs $\MST(R)$ on any finite metric space $R$ in a near-linear time.

\medskip

\noindent 
The approach taken to resolve the metric minimum spanning tree problem is inspired by \cite{march2010fast}.
In \cite[Algorithm~1]{march2010fast} there was an attempt to solve the metric MST problem by exploiting a cover tree datastructure and a dual-tree nearest neighbors algorithm \cite[Algorirthm~1]{ram2009linear} in a near linear time.
However, the proof of time complexity result \cite[Theorem~5.1]{march2010fast} was exhibited to be incorrect in Counterexample \ref{cexa:mst} and Counterexample~\ref{cexa:mst_component} .

 
 \medskip
 
\noindent 
To resolve the metric minimum spanning tree problem we propose new method, \cite[Algorithm~1]{march2010fast}, which is a single-tree version of \cite[Algorithm~3]{march2010fast}.
Theorem~\ref{thm:dtb_correctness} proves the correctness of Algorithm~\ref{alg:cover_tree_connected_components}.
Theorem~\ref{thm:single_cover_tree_mst_time} shows that the time complexity of Algorithm~\ref{alg:cover_tree_connected_components} is:
$ O\Big ((c_m(R))^{4+ \ceil{\log_2(\rho(R))}} \cdot \log_2(\Delta(R)) \cdot |R| \cdot \log(|R|) \cdot \alpha(|R|)\Big ), $
where $\rho(R)$ is the ratio: maximal edge length divided by the minimal edge length of $\MST(R)$ and $\Delta(R)$ is the
aspect ratio of Definition~\ref{dfn:radius+d_min}.

\subsection{Mergegram extends the 0-dimensional persistence}



Chapter \ref{ch:mergegram} introduces a new isometry invariant of a point cloud, mergegram. Mergegram is an extension of the $0$-dimensional persistence diagram, in such a way that it encodes a dendrogram built on $X$ into multiset of $\R^{2}$ by projecting every link of the dendrogram into a pair (the link start, the link end). It will be proved that the mergegram is stable under noise in the sense that a topological space and its noisy point sample have close persistence diagrams. The stability of mergegram shows that it is a suitable method for recognition of noisy point clouds.

\medskip

\noindent
One of the main aims of Chapter  \ref{ch:mergegram} is to solve an isometry recognition problem: given a set of different objects, the goal is to find out which of the objects have similar shapes. In the specific case, where objects are point clouds, the goal is to classify them as classes modulo isometries that preserve distances between points.
In practice, we are given a dataset of $n$ shapes, each of which belongs to one of $m$ shape classes.
The goal is to build an algorithm that takes as input a shape that is isometric to one of the $m$ shape classes and outputs its correct class. 

\medskip


\noindent
All the main results of Chapter \ref{ch:mergegram} were published in: \cite{elkin2020mergegram} and \cite{elkin2021isometry}. Here is the summary of new contributions to Topological Data Analysis:

\begin{itemize}

\item Definition~\ref{dfn:mergegram} introduces the concept of a mergegram for any dendrogram of clustering.
\item Theorem~\ref{thm:mergegram_to_0D_persistence} justifies that the mergegram of a single-linkage dendrogram is strictly stronger than the 0D persistence of a distance-based filtration of sublevel sets.
\item Theorem~\ref{thm:mergegram-to-dendrogram} showing how to reconstruct a dendrogram of a single-linkage clustering from a mergegram in general position.
\item Theorem~\ref{thm:stability_mergegram} proves bottleneck stability property of mergegram.
\item Theorem~\ref{thm:complexity} shows that the mergegram can be computed in a near-linear time.
\item Sections \ref{sec:experiments_mg_f} and \ref{sec:experiments_mg} shows experementally that mergegram performs better than $0$-dimensional persistence diagram or a diagram made of distances to two closes neighbors $\NN(2)$ for the shape recognition problems.

\end{itemize}

\section{Publications}
Counterexamples of Chapter~\ref{ch:knn} are published in \cite{2208.09447}. Rest of the chapter will be published later. Chapter~\ref{ch:mst} will be published later. Chapter~\ref{ch:mergegram} is published in \cite{elkin2020mergegram} and \cite{elkin2021isometry}.
\chapter{Basic definitions and past results} 
\label{ch:preliminaries} 

\section{Metric spaces}

Topological and geometric features can be usually found only in continuous structures and therefore as real data usually represent finite sets of observation it is impossible to directly inherit any topological information that the input contains. One of the natural ways to approach this problem is to connect data points that are "close" to each other. This unmasks global continuous shapes associated with
 the data. To specify the notion of closeness between data points we use the standard notion of metric introduced in Definition \ref{dfn:metricSpace}.
 
\begin{dfn}[Metric space]
\label{dfn:metricSpace}
Metric space  $(X, d)$ is a set $X$ with a function $d : X \times X \rightarrow \R$ ,
called a distance, such that for any triplet $x,y,z \in X $ the following properies of
\begin{enumerate}
\item Identity: $d(x, y) \geq 0 \text{ and } d(x, y) = 0 \text{ if and only if } x = y$
\item Symmetry: $d(x, y) = d(y, x)$ and,
\item Triangle inequality: $d(x, z) \leq d(x, y) + d(y, z)$
\end{enumerate}
are satisfied.
\end{dfn}

\noindent

\begin{dfn}[Hausdorff distance]
\label{dfn:HausdorffDistance}
Let $A,B$ be two non-empty subsets of a metric space $(X,d)$. We define their Hausdorff distance $d_H(A,B)$ by
$$d_H (A, B) = \max\{\sup_{b \in B} d(b, A), \sup_{a \in A} d(a, B)\}$$
\end{dfn}

\begin{dfn}[$\delta$-sparse subset]
\label{dfn:deltaSparsness}
We say that a subset $C$ of metric space $(X,d)$ is $\delta$-sparse if $d(a,b) \geq \delta$ for all $a,b \in C$.
\end{dfn}


%



\noindent 
From the TDA perspective, Hausdorff distance provides a helpful way to specify the distance between two sets which are subsets of the same metric space. However, in some cases, one has to compare data sets that belong to different metric spaces. In this case, we can use the Gromov-Hausdorff distance which generalizes Hausdorff distance in a way where it can be used for the comparison of pairs of compact metric spaces. 

\begin{dfn}[Isometric embedding]
Let $(X,d)$ and $(Y,u)$ be metric spaces.
Injection $\phi:X \rightarrow Y$ is a \emph{isometric embedding} if $d(a,b) = u(\phi(a), \phi(b))$ for any $a,b \in X$.   
\end{dfn}

%
%


\begin{dfn}[Gromov-Hausdorff distance]
\label{dfn:GromovHausdorffDistance}
The Gromov-Hausdorff distance $d_{GH}(X,Y)$ measured on two compact metric spaces $X,Y$ is the infinium $d_H(f(X),f(Y))$ of all metric space $M$ and isometric embeddings $f:X \rightarrow M$, $g:Y \rightarrow M$.
\end{dfn}

\begin{figure}[H]
 \centering
\begin{tikzpicture}[      
        every node/.style={anchor=south west,inner sep=0pt},
        x=1mm, y=1mm,
      ]   
      
	\node(fig10) at (50,40)
	{\includegraphics[scale=1.0]{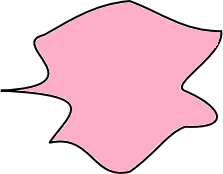}};
	\node(fig11) at (95,45)
	{\includegraphics[scale=1.0]{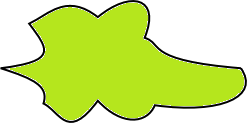}};

	 \node (fig1) at (70,5)
       {\includegraphics[scale=1.0]{Pictures/Shape0wb.png}};
     \node[opacity=0.7,rotate = 45] (fig2) at (85,0)
       {\includegraphics[scale=1.0]{Pictures/Shape1wb.png}};

     \node (fig1) at (0,23)
       {\includegraphics[scale=1.0]{Pictures/Shape0wb.png}};
     \node[opacity=0.7] (fig2) at (3,23)
       {\includegraphics[scale=1.0]{Pictures/Shape1wb.png}};  
       
      \node (txt1) at (0,40) {$A$};
       \node (txt2) at (20,20) {$B$};
       \node[shape = circle] (plc1) at (19,39){\small b};
       \node[shape = circle] (plc2) at (21,54){\small a};
       \path[draw, <->] (plc1) -- node[midway, right]{$d_H(A,B)$} (plc2);
       
        \node[shape = circle] (plc3) at (98.5,17){\small b};
       \node[shape = circle] (plc4) at (108.5,13.5){\small a};
       \path[draw, <->] (plc3) -- node[midway, below = 0.75cm]{$d_{GH}(A,B)$} (plc4);
       \node (txt3) at (70,55) {$A$};
       \node (txt4) at (110,55) {$B$};
\end{tikzpicture}
\caption{This figure illustrates Hausdorff distance between $A$ and $B$ on \textbf{left} side and their 
 Gromov Hausdorff distance on \textbf{right} side.}
\label{Gromof-Hausdorff}

\end{figure}

\begin{exa}[Hausdorff distance and Gromov-Hausdorff distance]
 Let $A,B$ of Figure \ref{Gromof-Hausdorff} be sets located in the same plane of $\R^2$. In the picture the distance of points $a$, $b$ equals $d_H(A,B)$, because $b \in B$ is the nearest neighbor of $a$ which produces maximal distance to its nearest neighbor over all points in $A$. On the other hand, while computing the Gromov-Hausdorff distance we assume that $A$ and $B$ might belong to distinct spaces. Thus $A$ can be moved and rotated to reduce its distance to $B$. Because we are allowed to perform isometry operations, while computing Gromov-Hausdorff distance on both spaces, it is evident that $d_{GH}(A,B) \leq d_H(A,B)$.

\end{exa}


\section{Graphs}

This section formally defines concept of graph.

\begin{dfn}[Graph]
\label{dfn:graph}
An undirected graph $G$ is a pair $(\mathbb{V}(G),\mathbb{E}(G))$, where:
\begin{itemize}
\item $\mathbb{V}(G)$ is a finite set of vertices.
\item $\mathbb{E}(G)$ is a finite set of unordered pairs $(a,b)$ of \hspace{0.05cm} $\mathbb{V}(G) \times \mathbb{V}(G)$ 
\end{itemize}
We can extend any graph $G$ into \emph{a weighted graph} by introducing a function $w:\mathbb{E}(G) \rightarrow \R_{+}$, that measures the length of each edge.
\end{dfn}
\noindent 
Abstract graph defined above does not represent directly any geometrical shape. However, this notion can be developed into a \emph{embedded} graph which contains the features of abstract graph, but which now exposes explicit geometrical info.

\begin{dfn}[Embedded graph]
Given a metric space $(X,d)$ define embedded graph $G$ to be a graph having
\begin{itemize}
\item Vertex set \hspace{0.05cm} $\mathbb{V}(G)$ is a subset of $X$.
\item For each edge $(u,v) \in \mathbb{E}(G)$ there exists a path in $X$ which connects $a$ to $b$.   
\end{itemize}
In a special case where $X = \mathbb{R}^m$ we say that a graph is a \emph{straight-line} graph if every path in the edge set $E$ is a straight line. 
\end{dfn}

\section{Heap}
\label{sec:BinaryHeap}

Heap is a tree-based data structure which was designed to store elements with a priority in an effective manner. 
Although there are many implementation of heap data structure for simplicity in this section we focus only on binary heap data structure.

\medskip
\noindent 
First introduced in 1964  binary heap \cite{williams1964algorithm} was first concrete implementation of heap data structure, that is effective at storing data with additional priority property. The binary heap data structure allows adding elements, removing the minimal element, and decreasing the value of any given element in logarithmic time.  To define a binary heap, we first define the notion of a complete binary tree.

\begin{figure}[H]
\centering
\begin{tikzpicture}[-,>=stealth',level/.style={sibling distance = 6cm/#1,
  level distance = 2.0cm}] 
\node [align=center, circle, white, draw=black,
    fill=black, text width=1.5em] {1}
    child{ node [align=center, circle, white, draw=black,
    fill=black, text width=1.5em] {9} 
            child{ node [align=center, circle, white, draw=black,
    fill=black, text width=1.5em] {13} 
            	child{ node [align=center, circle, white, draw=black,
    fill=black, text width=1.5em] {29} edge from parent node[above left]
                         {}} 
							child{ node [align=center, circle, white, draw=black,
    fill=black, text width=1.5em] {94}}
            }
            child{ node [align=center, circle, white, draw=black,
    fill=black, text width=1.5em] {17}
							child{ node [align=center, circle, white, draw=black,
    fill=black, text width=1.5em] {18}}
							child{ node [align=center, circle, white, draw=black,
    fill=black, text width=1.5em] {23}}
            }                            
    }
    child{ node [align=center, circle, white, draw=black,
    fill=black, text width=1.5em] {3}
            child{ node [align=center, circle, white, draw=black,
    fill=black, text width=1.5em] {11} 
							child{ node [align=center, circle, white, draw=black,
    fill=black, text width=1.5em] {36}}
							child{ node [align=center, circle, white, draw=black,
    fill=black, text width=1.5em] {39}}
            }
            child{ node [align=center, circle, white, draw=black,
    fill=black, text width=1.5em] {51}
							child{ node [align=center, circle, white, draw=black,
    fill=black, text width=1.5em] {79}}
							child{ node [align=center, circle, white, draw=black,
    fill=black, text width=1.5em] {99}}
            }
		}
; 
\end{tikzpicture}

\begin{tikzpicture}[font=\ttfamily,
array/.style={matrix of nodes,nodes={draw, minimum size=7mm, fill=green!30},column sep=-\pgflinewidth, row sep=0.5mm, nodes in empty cells,
row 1/.style={nodes={draw=none, fill=none, minimum size=5mm}},
row 1 column 1/.style={nodes={draw}}}]

\matrix[array] (array) {
1 & 2 & 3 & 4 & 5 & 6 & 7 & 8 & 9 & 10 & 11 & 12 & 13 & 14 & 15 \\
1  & 9  &  3 &  13 &  17 & 11  &  51 & 29  & 94  & 18 & 23 & 36 & 39 & 79 & 99  \\};

\begin{scope}[on background layer]
\fill[green!10] (array-1-1.north west) rectangle (array-1-15.south east);
\end{scope}

\draw (array-1-1.north)--++(90:3mm) node [above] (first) {First index};
\draw (array-1-15.east)--++(0:3mm) node [right]{Indices};
%
\end{tikzpicture}

\caption{a binary heap built on a finite subset of integers; on the \textbf{top:} its binary tree representation; on the \textbf{bottom:} its array-representation.}

\end{figure}

\begin{dfn}[Binary search tree]
 Given a binary tree, assign an index $i$ to each of its nodes. To start, assign the root an index $i=1$. Now suppose a node has been assigned an index
$i$. Then assign its left child (if it exists) the index $2i$, and assign to its right child
(if it exists) the value $2i+1$. Then we say that the tree is \emph{complete} iff the set of assigned indices
is a \emph{contiguous} set; meaning that, if $i$ and $k$ are assigned indices, with $i < k$,  then so is $j$, for every $i< j < k$.
\end{dfn}

\begin{dfn}
\label{dfn:BinaryHeap}
Let $(C, \leq)$ be a completely ordered set. A {\emph{min heap}} (respectively, \emph{max heap}) is a complete binary tree $T$ on the set $C$, which satisfies the following condition: If $p \in T$, and $q\in T$ is a child of $p$, then the element stored at $q$ is larger (respectively, is smaller) or equal to the element stored at $p$.
\end{dfn}
%

\begin{lem}[Binary heap complexities]
\label{lem:BinaryHeap}
A min binary heap $T$ built on an ordered set $(C, \leq)$ has the following complexities:
\begin{itemize}
\item Adding any element $a \in C$ to a binary tree can be done in $\log(|C|)$ time.
\item Finding the minimal of the set $C$ can be done in $O(1)$ in time.
\item Removing the minimal element of the set $C$ can be performed in $O(\log(|C|))$ time. 
\end{itemize}
\end{lem}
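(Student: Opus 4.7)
The plan is to prove the three complexity bounds by leveraging the basic structural property that a complete binary tree on $n$ nodes has height $\lfloor \log_2 n \rfloor = O(\log n)$. This height bound follows by induction on $n$: since the indices assigned to nodes form a contiguous set $\{1,2,\ldots,n\}$, a node at depth $h$ has index at least $2^h$, so the maximum depth is at most $\lfloor \log_2 n \rfloor$. I would establish this lemma about the height first, as all three time bounds reduce to it.

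For the second claim, \emph{finding the minimum in $O(1)$ time}, the argument is immediate from the heap property of Definition~\ref{dfn:BinaryHeap}: by induction on the depth, every node in the tree stores an element that is greater than or equal to the element stored at the root. Hence the minimum of $C$ is always the root element, which can be retrieved by a single pointer dereference.

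For the first claim, \emph{insertion in $O(\log |C|)$ time}, I would describe the standard sift-up procedure. Given the array-representation depicted in the figure, the next available position is the index $|C|+1$, so a new element is placed there (this preserves the completeness of the tree, since the set of assigned indices remains contiguous). Then the element is repeatedly compared with its parent (at index $\lfloor i/2 \rfloor$) and swapped whenever it is strictly smaller, stopping as soon as the heap property is restored or when the root is reached. The key point is that each swap strictly decreases the depth of the new element by one, so the number of comparisons and swaps is bounded by the height $O(\log|C|)$, and each comparison costs $O(1)$.

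For the third claim, \emph{removal of the minimum in $O(\log|C|)$ time}, I would use the symmetric sift-down procedure: replace the root with the element at the last index $|C|$, delete that last index (again preserving completeness), and then repeatedly compare the displaced element with the smaller of its two children, swapping whenever the child is strictly smaller. Each such swap moves the element one level deeper, so the procedure terminates after at most $O(\log|C|)$ steps, each of which costs $O(1)$. The main subtlety — and the step I would treat most carefully — is verifying that after a single sift-down swap the heap property holds throughout the untouched subtree; this follows because the subtree whose root did not change still satisfies the property by the induction hypothesis, while the swapped child now has an ancestor no larger than before. Together, the three arguments yield the stated bounds.
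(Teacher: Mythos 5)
Your proposal is correct: the height bound for a complete binary tree, the root-is-minimum observation, and the sift-up/sift-down analyses are exactly the standard argument. The paper itself does not spell out a proof but simply cites \cite[Section~6.5]{Cormen1990}, and your write-up reproduces precisely the classical proof found there, so there is no substantive difference in approach.
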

\begin{proof}
The proof is available in \cite[Section~6.5]{Cormen1990}.
\end{proof}

\noindent
In the explicit implementation, a binary heap often has a tree-like structure, where the minimal element can always be found at the root of the tree.

\section{Hierarchical clustering}
This section formally defines concepts related to hierarchical clustering.
Hierarchial clustering will be used in the definition of mergegram, Chapter~\ref{ch:mergegram}.

\begin{dfn}[partition set $\PS(A)$]
\label{dfn:partition}
For any set $A$, a \emph{partition} of $A$ is a finite collection of non-empty disjoint subsets $A_1,\dots,A_k\subset A$ whose union is $A$.
The \emph{single-block} partition of $A$ consists of the set $A$ itself.
The \emph{partition set} $\PS(A)$ consists of all partitions of $A$.
\bs
\end{dfn}

\noindent 
The partition set $\PS(A)$ of the abstract set $A=\{0,1,2\}$ consists of the five partitions
$$(\{0\},\{1\},\{2\}),\quad
(\{0,1\},\{2\}),\quad
(\{0,2\},\{1\}),\quad
(\{1,2\},\{0\}),\quad
(\{0,1,2\}).$$
For example, the collections $(\{0\},\{1\})$ and $(\{0,1\},\{0,2\})$ are not partitions of $A$.
\medskip

\noindent
The definition of dendrogram was first presented in {\cite[Section ~3.1 (4)]{carlsson2010characterization}}.
\begin{dfn}[{\cite[Section ~3.1]{carlsson2010characterization}}]
\label{dfn:dendrogramOriginal}
Let $(A, \theta)$ be a pair , where $A$ is a finite set and $\theta:[0,+\infty) \rightarrow \mathbb{P}(X)$ is a function from positive real numbers to the power set of $X$. We say that $(A, \theta)$ is a dendrogram if it satisfies the following conditions:
\begin{enumerate}

\item $\theta(0) = \{\{x_0\}, ... , \{x_n\}\}$, where $x_i \in A$ for all $i \in [n]$.
\item There exists $t_0$ s.t $\theta(t)$ is the single block partition for all $t \geq t_0$ i.e. $\theta(t) = \{\{x \mid x \in A\}\}$.
\item If $r \leq s$ then $\theta(r)$ refines $\theta(t)$ i.e. for every $U \in \theta_r $ there exists $V \in \theta_s$ satisfying $U \subseteq V$. 
\item Technical condition {\cite[Section ~3.1 (4)]{carlsson2010characterization}} For all $r$ there exists $\epsilon > 0$ s.t. $\theta(r) = \theta(t)$ for $t \in [r,r+\epsilon]$ .
\end{enumerate}

\end{dfn}

\noindent
 As an important type of dendrogram, we define a single linkage clustering dendrogram in Definition \ref{dfn:sl_clustering}.

\begin{dfn}[single-linkage clustering]
\label{dfn:sl_cluster}
Let $A$ be a finite set in a metric space $X$ with a distance $d:X\times X\to[0,+\infty)$ and let $s \in R_{+}$ be a scale. Define $\De_{SL}(A;s)$ to be single linkage clustering at scale $s$, in such a way that
any points $a,b\in A$  belong to \emph{SL cluster} if and only if there is a finite sequence $a=a_1,\dots,a_m=b\in A$ such that any two successive points have a distance at most $s$, i.e. $d(a_i,a_{i+1})\leq s$ for $i=1,\dots,m-1$. 
\bs
\end{dfn}
For $s=0$, any point $a\in A$ forms a singleton cluster $\{a\}$.
\begin{dfn}[single-linkage dendrogram]
\label{dfn:sl_clustering}
Representing each cluster from $\De_{SL}(A;s)$ over all $s\geq 0$ by one point, we get the \emph{single-linkage dendrogram} $\De_{SL}(A)$ visualizing how clusters merge, see Figure.~\ref{fig:SLDendrogram}.
\bs
\end{dfn}

\begin{figure}[H]
\centering
\begin{tikzpicture}[scale = 1.1]
  \draw[->] (-1,0) -- (11,0) node[right]{} ;
   \foreach \x/\xtext in {0, 4, 6, 9, 10}
    \draw[shift={(\x,0)}] (0pt,2pt) -- (0pt,-2pt) node[below] {$\xtext$};   
   \filldraw (0,0) circle (2pt);
   \filldraw (4,0) circle (2pt);
   \filldraw (6,0) circle (2pt);
   \filldraw (9,0) circle (2pt);
   \filldraw (10,0) circle (2pt);
\end{tikzpicture}

\begin{tikzpicture}[thick,scale=0.50, every node/.style={transform shape}][sloped]
\draw[style=help lines,step = 1] (-1,0) grid (10.4,6.4);
\draw [->] (-1,0) -- (-1,7) node[above] {scale $s$};
\foreach \i in {0,1,...,6}{ \node at (-1.5,\i) {\i}; }
\node (a) at (0,-0.3) {0};
\node (b) at (4,-0.3) {4};
\node (c) at (6,-0.3) {6};
\node (d) at (9,-0.3) {9};
\node (e) at (10,-0.3) {10};
\node (x) at (5,5) {};
\node (de) at (9.5,1){};
\node (bc) at (5.0,2){};
\node (bcde) at (8.0,3){};
\node (all) at (5.0,4){};
\draw [line width=0.5mm, blue ] (a) |- (all.center);
\draw [line width=0.5mm, blue ] (b) |- (bc.center);
\draw [line width=0.5mm, blue ] (c) |- (bc.center);
\draw [line width=0.5mm, blue ] (d) |- (de.center);
\draw [line width=0.5mm, blue ] (e) |- (de.center);
\draw [line width=0.5mm, blue ] (de.center) |- (bcde.center);
\draw [line width=0.5mm, blue ] (bc.center) |- (bcde.center);
\draw [line width=0.5mm, blue ] (bcde.center) |- (all.center);
\draw [line width=0.5mm, blue ] [->] (all.center) -> (x.center);
\end{tikzpicture}
\hspace*{2mm}
\begin{tikzpicture}[thick,scale=0.50, every node/.style={transform shape}][sloped]
\draw[style=help lines,step = 1] (-1,0) grid (10.4,6.4);
\draw [->] (-1,0) -- (-1,7) node[above] {scale $s$};
\foreach \i in {0,1,...,6}{ \node at (-1.5,\i) {\i}; }
\node (a) at (0,-0.3) {0};
\node (b) at (4,-0.3) {4};
\node (c) at (6,-0.3) {6};
\node (d) at (9,-0.3) {9};
\node (e) at (10,-0.3) {10};
\node (x) at (5,7) {};
\node (de) at (9.5,1){};
\node (bc) at (5.0,2){};
\node (bcde) at (8.0,3){};
\node (all) at (5.0,6){};
\draw [line width=0.5mm, blue ] (a) |- (all.center);
\draw [line width=0.5mm, blue ] (b) |- (bc.center);
\draw [line width=0.5mm, blue ] (c) |- (bc.center);
\draw [line width=0.5mm, blue ] (d) |- (de.center);
\draw [line width=0.5mm, blue ] (e) |- (de.center);
\draw [line width=0.5mm, blue ] (bc.center) |- (all.center);
\draw [line width=0.5mm, blue ] (de.center) |- (all.center);
\draw [line width=0.5mm, blue ] [->] (all.center) -> (x.center);
\end{tikzpicture}

\caption{\textbf{Top}: the 5-point cloud $A = \{0,4,6,9,10\}\subset\R$.
\textbf{Bottom left}:  Single-linkage dendrogram $\De_{SL}(A)$ from Definition~\ref{dfn:sl_clustering}.
\textbf{Bottom right}: Complete-linkage dendrogram $\De_{CL}(A)$ from Example \ref{exa:cl_clustering}.}
\label{fig:SLDendrogram}
%

\end{figure}
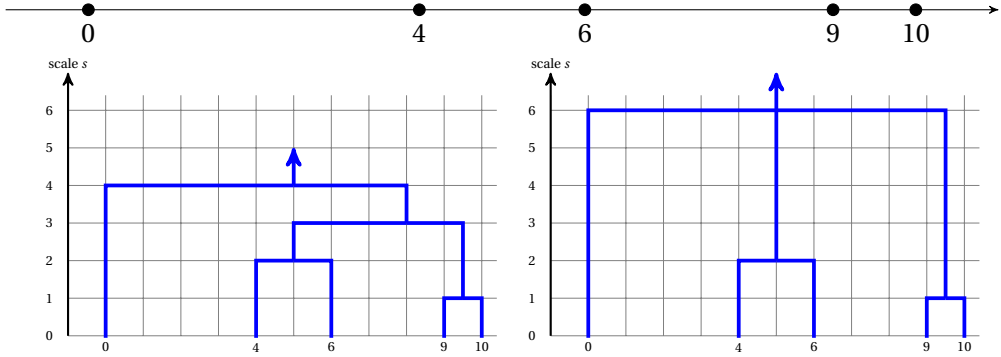

\begin{exa}[{\cite[Section ~3.2.2]{carlsson2010characterization}}]
\label{exa:cl_clustering}

Let $(A,d)$ be a finite metric space. Let $l:\PS(A) \times \PS(A) \rightarrow \R$ a function satisfying  $l(U,V) = \max_{(p,q) \in U \times V}d(p,q)$ for all $U,V \in \PS(A)$. For every $r \in \R$ consider equivalence relation $\backsim_r$ on $\PS(A)$, given by $B \backsim B'$ if there exists a sequence in   $B = B_1, ..., B_s = B'$ satisfying $l(B_k, B_{k+1}) \leq r$ for all $k \in {1,...,s-1}$. Consider the sequences $r_1, r_2, ... \in [0,\infty)$ and $\gamma_1, \gamma_2, ... \in \PS(A)$ given by $\gamma_1 = A$,
and recursively for $i \geq 1$ by $\gamma_{i+1} = \gamma_i / \backsim_{r_i}$, where 
$$r_i = \min\{ l(B, B') \mid B, B' \in \gamma_i, B \neq B'\}. $$
Finally, we define $\triangle_{CL}(A,r) = \gamma_{i(r)}$, where $i(r) = \max\{i \mid r_i \leq r\}$.


\end{exa}

\noindent 
Any dendrogram can be represented as an ultra-metric space. This representation will be used to prove the stability result of single-linkage dendrograms first presented in \cite{carlsson2010characterization}.

\begin{dfn}{{\cite[Section~3.3.1 ]{carlsson2010characterization}}}
Let $(A,\theta)$ be a dendrogram over $A$. Let us define a symmetric map $u: A \times A \rightarrow \R_{+}$ given by
$u(x,y) = \min \{ r \in \R_{+} \mid x,y \text{ belong to the same partition of } \theta(r)\}. $
We say that $(A, u)$ is the \emph{ultra-metric space} corresponding to a dendrogram $(A, \theta)$. 
\end{dfn}

\begin{thmm}[{\cite[Proposition~2]{carlsson2010characterization}}]
\label{thm:SingleLinkageDendrogramStability}
Let $(X,d_X)$ and $(Y,d_Y)$ be two finite metric spaces and let $(X,u_X)$ and $(Y,u_Y)$ be the two finite ultrametric space corresponding to single linkage dendrograms built on $(X,d_X)$ and $(Y,d_Y)$, respectively. Then 
$$d_{GH}((X,u_X), (Y, u_Y)) \leq d_{GH}((X,d_X), (Y, d_Y)).$$
Here, $d_{GH}$ stands for the Gromov-Hausdorff distance of Definition \ref{dfn:GromovHausdorffDistance}. 

\end{thmm}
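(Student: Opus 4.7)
The plan is to exploit the standard characterization of Gromov--Hausdorff distance via correspondences and show that the single-linkage ultrametric is $1$-Lipschitz with respect to the distortion functional. Recall that a correspondence between $X$ and $Y$ is a subset $R \subseteq X \times Y$ whose projections onto $X$ and $Y$ are both surjective, and its distortion with respect to $(d_X, d_Y)$ is
$$\mathrm{dis}_d(R) = \sup_{(x,y),(x',y') \in R} |d_X(x,x') - d_Y(y,y')|.$$
Then $d_{GH}((X,d_X),(Y,d_Y)) = \tfrac{1}{2}\inf_R \mathrm{dis}_d(R)$, and analogously for the ultrametrics. It will therefore suffice to establish the pointwise inequality $\mathrm{dis}_u(R) \leq \mathrm{dis}_d(R)$ for every correspondence $R$; taking infima over $R$ and multiplying by $\tfrac{1}{2}$ yields the claim.

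First I would unpack the single-linkage ultrametric via Definition~\ref{dfn:sl_cluster}: for $x,x'\in X$, $u_X(x,x')$ equals the infimum of those $r\geq 0$ for which there exists a finite chain $x = x_0, x_1, \ldots, x_n = x'$ in $X$ with $d_X(x_{i-1},x_i) \leq r$ for all $i$. The same holds for $u_Y$. Fix a correspondence $R$ and let $\eta = \mathrm{dis}_d(R)$. Pick any $(x,y),(x',y') \in R$ and set $r = u_X(x,x')$. For any $\varepsilon > 0$ take a chain $x = x_0, \ldots, x_n = x'$ witnessing $u_X(x,x') \leq r+\varepsilon$, and use surjectivity of the projection to choose $y_i \in Y$ with $(x_i,y_i) \in R$, fixing $y_0 = y$ and $y_n = y'$. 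By the definition of distortion,
$$d_Y(y_{i-1}, y_i) \leq d_X(x_{i-1}, x_i) + \eta \leq r + \varepsilon + \eta$$
for every $i$, so the chain $y_0,\ldots,y_n$ witnesses $u_Y(y,y') \leq r+\varepsilon+\eta$. Letting $\varepsilon \to 0$ gives $u_Y(y,y') \leq u_X(x,x') + \eta$; swapping the roles of $X$ and $Y$ yields the reverse bound, hence $|u_X(x,x') - u_Y(y,y')| \leq \eta$. Taking supremum over pairs in $R$ gives $\mathrm{dis}_u(R) \leq \mathrm{dis}_d(R)$, which completes the argument.

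The proof is essentially mechanical once the correspondence characterization of $d_{GH}$ is in hand, so the only subtle point is the \emph{chain-transport} step: translating a distance-chain in $X$ into a distance-chain in $Y$ via the correspondence while controlling the maximum edge length. This is where the triangle-inequality-free nature of the single-linkage ultrametric is crucial, since we only need a \emph{pointwise} bound on each consecutive distance, not a cumulative sum. No additional structure on $X$ and $Y$ beyond finiteness is needed, and the argument runs identically whether one works with the infimum or the attained minimum in the finite case.
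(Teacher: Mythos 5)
Your proof is correct. The paper itself gives no proof of this statement (it is quoted from Carlsson and M\'emoli), and your argument --- passing to the correspondence/distortion characterization $d_{GH}=\tfrac{1}{2}\inf_R \mathrm{dis}(R)$ and transporting a minimax chain through the correspondence --- is essentially the standard proof in that reference; the only thing worth adding is an explicit citation for the equivalence of the correspondence formulation with the embedding-based Definition~\ref{dfn:GromovHausdorffDistance}, since that is the form of $d_{GH}$ the paper actually defines.
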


\section{Persistent homology}

This section introduces the key concepts from the thorough review by Chazal et al. \cite{chazal2016structure} 
that will be used in Chapter~\ref{ch:mergegram}. 


\begin{dfn}[persistence module $\V$]
\label{dfn:persistence_module}
A \emph{persistence module} $\mathbb{V}$ over the real numbers $\mathbb{R}$ is a family of vector spaces $V_t$, $t\in \mathbb{R}$ with linear maps $v^t_s:V_s \rightarrow V_t$, $s\leq t$ such that $v^t_t$ is the identity map on $V_t$ and the composition is respected: $v^t_s \circ v^s_r = v^t_r$ for any $r \leq s \leq t$.
\bs
\end{dfn}

\noindent
The set of real numbers can be considered as a category  $\mathbb{R}$ in the following sense.
The objects of $\R$ are all real numbers. 
Any two real numbers such that $a\leq b$ define a single morphism $a\to b$.
The composition of morphisms $a\to b$ and $b \to c$ is the morphism $a \leq c$. 
In this language, a persistence module is a functor from $\R$ to the category of vector spaces.
\medskip

\noindent
A basic example of $\V$ is an interval module.
An interval $J$ between points $p<q$ in the line $\R$ can be one of the following types: closed $[p,q]$, open $(p,q)$ and half-open or half-closed $[p,q)$ and $(p,q]$.
It is convenient to encode types of endpoints by $\pm$ superscripts as follows:
$$[p^-,q^+]:=[p,q],\quad
[p^+,q^-]:=(p,q),\quad
[p^+,q^+]:=(p,q],\quad
[p^-,q^-]:=[p,q).$$

\noindent
The endpoints $p,q$ can also take the infinite values $\pm\infty$, but without superscripts.

\begin{exa}[interval module $\I(J)$]
\label{exa:interval_module}
For any interval $J\subset\R$, the \emph{interval module} $\I(J)$ is the persistence module defined by the following vector spaces $I_s$ and linear maps $i_s^t:I_s\to I_t$
$$I_s=\left\{ \begin{array}{ll} 
\Z_2, & \mbox{ for } s\in J, \\
0, & \mbox{ otherwise }; 
\end{array} \right.\qquad
i_s^t=\left\{ \begin{array}{ll} 
\id, & \mbox{ for } s,t\in J, \\
0, & \mbox{ otherwise }
\end{array} \right.\mbox{ for any }s\leq t.$$
\end{exa}
\medskip

\noindent
The direct sum $\W=\mathbb{U}\oplus\V$ of persistence modules $\mathbb{U},\V$ is defined  as the persistence module with the vector spaces $W_s=U_s\oplus V_s$ and linear maps $w_s^t=u_s^t\oplus v_s^t$.
\medskip

\noindent
We illustrate the abstract concepts above using geometric constructions of Topological Data Analysis.
Let $f:X\to\R$ be a continuous function on a topological space.
Its \emph{sublevel} sets $X_s^f=f^{-1}((-\infty,s])$ form nested subspaces $X_s^f\subset X_t^f$ for any $s\leq t$.
The inclusions of the sublevel sets respect compositions similarly to a dendrogram $\De$ in Definition~\ref{dfn:dendrogram}.
\medskip

\noindent
On a metric space $X$ with with a distance function $d:X\times X\to[0,+\infty)$, a typical example of a function $f:X\to\R$ is the distance to a finite set of points $A\subset X$. 
More specifically, for any point $p\in X$, let $f(p)$ be the distance from $p$ to (a closest point of) $A$.
For any $r\geq 0$, the preimage $X_r^f=f^{-1}((-\infty,r])=\{q\in X \mid d(q,A)\leq r\}$ is the union of closed balls that have the radius $r$ and centers at all points $p\in A$.
For example, $X_0^f=f^{-1}((-\infty,0])=A$ and $X_{+\infty}^f=f^{-1}(\R)=X$.
\medskip

\noindent
If we consider any continuous function $f:X\to\R$, we have the inclusion $X_s^f\subset X_r^f$ for any $s\leq r$.
Hence all sublevel sets $X_s^f$ form a nested sequence of subspaces within $X$.
The above construction of a \emph{filtration} $\{X_s^f\}$ can be considered as a functor from $\R$ to the category of topological spaces.  
Below we discuss the most practically used case of dimension 0.

\begin{exa}[persistent homology]
\label{exa:persistent_homology}
For any topological space $X$,  the 0-dimensional \emph{homology} $H_0(X)$ is the vector space (with coefficients $\Z_2$) generated by all connected components of $X$.
Let $\{X_s\}$ be any \emph{filtration} of nested spaces, e.g. sublevel sets $X_s^f$ based on a continuous function $f:X\to\R$.
The inclusions $X_s\subset X_r$ for $s\leq r$ induce the linear maps between homology groups $H_0(X_s)\to H_0(X_r)$ and define the \emph{persistent homology} $\{H_0(X_s)\}$, which satisfies the conditions of a persistence module from Definition~\ref{dfn:persistence_module}.
\bs
\end{exa}
\medskip

\noindent 
If $X$ is a finite set of $m$ points, then $H_0(X)$ is the direct sum $\Z_2^m$ of $m$ copies of $\Z_2$.  The persistence modules that can be decomposed as direct sums of interval modules can be described in a very simple combinatorial way by persistence diagrams of dots in $\R^2$.

\begin{dfn}[persistence diagram $\PD(\V)$]
\label{dfn:persistence_diagram}
Let a persistence module $\V$ be decomposed as a direct sum of interval modules from Example~\ref{exa:interval_module} : $\V\cong\bigoplus\limits_{l \in L}\I(p^{*}_l,q^{*}_l)$, where $*$ is $+$ or $-$.
The \emph{persistence diagram} $\PD(\V)$ is the multiset 
$\PD(\mathbb{V}) = \{(p_l,q_l) \mid l \in L \} \setminus \{p=q\}\subset\R^2$.
\bs
\end{dfn}
\medskip

\noindent
The 0-dimensional persistent homology of a space $X$ with a continuous function $f:X\to\R$ will be denoted by $\PD\{H_0(X_s^f)\}$.

\begin{dfn}[a homomorphism of a degree $\de$ between persistence modules]
\label{dfn:homo_modules}
Let $\mathbb{U}$ and $\mathbb{V}$ be persistent modules over $\mathbb{R}$. 
A \emph{homomorphism} $\mathbb{U}\to\V$ of \emph{degree} $\delta\in\R$ is a collection of linear maps $\phi_t:U_t \rightarrow V_{t+\delta}$, $t \in \mathbb{R}$, such that the diagram commutes for all $s \leq t$. 
\begin{figure}[H]
\centering
\begin{tikzpicture}[scale=1.0]
  \matrix (m) [matrix of math nodes,row sep=3em,column sep=4em,minimum width=2em]
  {
     U_s & U_t \\
     V_{s+\delta} & V_{t+\delta} \\};
  \path[-stealth]
    (m-1-1) edge node [left] {$\phi_s$} (m-2-1)
            edge [-] node [above] {$u^t_s$} (m-1-2)
    (m-2-1.east|-m-2-2) edge node [above] {$v^{t+\delta}_{s+\delta}$}
            node [above] {} (m-2-2)
      (m-1-2) edge node [right] {$\phi_t$} (m-2-2);
\end{tikzpicture}
\end{figure}

\noindent 
Let $\text{Hom}^\delta(\mathbb{U},\mathbb{V})$ be all homomorphisms $\mathbb{U}\rightarrow \mathbb{V}$  of degree $\delta$.
Persistence modules $\mathbb{U},\V$ are \emph{isomorphic} if they have inverse homomorphisms $\mathbb{U}\to\V\to\mathbb{U}$ of degree $0$.
\bs
\end{dfn}

\noindent 
For a persistence module $\V$ with maps $v_s^t:V_s\to V_t$, the simplest example of a homomorphism of a degree $\de\geq 0$
 is $1_{\V}^{\de}:\V\to\V$ defined by the maps $v_s^{s+\de}$, $t\in\R$.
So $v_s^t$ defining the structure of $\V$ shift all vector spaces $V_s$ by the difference $\de=t-s$.
\medskip

\noindent 
The concept of interleaved modules below is an algebraic generalization of a geometric perturbation of a set $X$ in terms of (the homology of) its sublevel sets $X_s$.

\begin{dfn}[interleaving distance ID]
\label{dfn:interleaving_distance}
Persistence modules $\mathbb{U}$ and $\mathbb{V}$ are called $\delta$-interleaved if there are homomorphisms $\phi\in \text{Hom}^\delta(\mathbb{U},\mathbb{V})$ and $\psi \in \text{Hom}^\delta(\mathbb{V},\mathbb{U}) $ such that $\phi\circ\psi = 1_{\mathbb{V}}^{2\de} \text{ and } \psi\circ\phi = 1_{\mathbb{U}}^{2\de}$.
The \emph{interleaving distance} between the persistence modules $\mathbb{U}$ and $\mathbb{V}$ is 
$\ID(\mathbb{U},\V)=\inf\{\de\geq 0 \mid \mathbb{U} \text{ and } \mathbb{V} \text{ are } \delta\text{-interleaved} \}.$
\bs
\end{dfn}

\noindent 
If $f,g:X\to\R$ are continuous functions such that $||f-g||_{\infty}\leq\de$ in the $L_{\infty}$-distance, the modules $H_k\{f^{-1}(-\infty,s]\}$, $H_k\{g^{-1}(-\infty,s]\}$ are $\de$-interleaved for any $k$ \cite{cohen2007stability}.
The last conclusion extends to persistence diagrams for the bottleneck distance below.

\begin{dfn}[bottleneck distance BD]
\label{dfn:bottleneck_distance}
Let multisets $C,D$ contain finitely many points $(p,q)\in\R^2$, $p<q$, of finite multiplicity and all diagonal points $(p,p)\in\R^2$ of infinite multiplicity.
For $\de\geq 0$, a $\de$-matching is a bijection $h:C\to D$ such that $|h(a)-a|_{\infty}\leq\de$ in the $L_{\infty}$-distance for any point $a\in C$.
The \emph{bottleneck} distance between persistence modules $\mathbb{U},\V$ is $\BD(\mathbb{U},\mathbb{V}) = \text{inf}\{ \delta \mid \text{ there is a }\delta\text{-matching between } \PD(\mathbb{U}), \PD(\mathbb{V})\}$. 
\bs
\end{dfn}

\noindent 
The original stability of persistence for sequences of sublevel sets found by Herbert Edelsbrunner \cite{cohen2007stability} was extended as Theorem~\ref{thm:stability_persistence} to $q$-tame persistence modules. 
A persistence module $\V$ is $q$-tame if any non-diagonal square in the persistence diagram $\PD(\V)$ contains only finitely many of points, see \cite[section~2.8]{chazal2016structure}.  
Any finitely decomposable persistence module is $q$-tame.
  
\begin{thmm}[stability of persistence modules]\cite[isometry theorem~4.11]{chazal2016structure}
\label{thm:stability_persistence}
 Let $\mathbb{U}$ and $\mathbb{V}$ be q-tame persistence modules. Then $\ID(\mathbb{U},\mathbb{V}) = \BD(\PD(\mathbb{U}),\PD(\mathbb{V}))$,
 where $\ID$ is the interleaving distance, $\BD$ is the bottleneck distance between persistence modules.
\bs
\end{thmm}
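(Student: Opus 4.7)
The plan is to prove the isometry as two separate inequalities: the stability direction $\BD(\PD(\mathbb{U}),\PD(\mathbb{V})) \leq \ID(\mathbb{U},\mathbb{V})$ and the converse $\ID(\mathbb{U},\mathbb{V}) \leq \BD(\PD(\mathbb{U}),\PD(\mathbb{V}))$. First I would reduce, when possible, to the case in which $\mathbb{U}$ and $\mathbb{V}$ admit interval decompositions $\mathbb{U} \cong \bigoplus_{l \in L} \I(J_l^U)$ and $\mathbb{V} \cong \bigoplus_{m \in M} \I(J_m^V)$ as in Definition~\ref{dfn:persistence_diagram}; the general $q$-tame case is recovered at the end via an approximation argument.

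For the converse direction, suppose a $\delta$-matching $h\colon \PD(\mathbb{U}) \to \PD(\mathbb{V})$ is given. For every pair $(\I(J^U), \I(J^V))$ whose endpoints are matched with $L_{\infty}$-error at most $\delta$, I construct a $\delta$-interleaving between $\I(J^U)$ and $\I(J^V)$ by taking the identity $\Z_2 \to \Z_2$ wherever both source and target are one-dimensional and the zero map otherwise; a direct check of the four commuting squares of Definition~\ref{dfn:interleaving_distance} shows that the resulting maps are well defined and satisfy the interleaving identities. Intervals $\I(J^U)$ matched to the diagonal have length at most $2\delta$, so the structure map shifted by $2\delta$ already vanishes on $\I(J^U)$ and the pair $(\I(J^U), 0)$ is trivially $\delta$-interleaved. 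Summing over all matched components yields a $\delta$-interleaving between $\mathbb{U}$ and $\mathbb{V}$, proving $\ID \leq \BD$.

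For the stability direction, the main tool is a \emph{box lemma} that controls the ranks of composite structure maps restricted to rectangular regions of the half-plane $\{p < q\}$. Given a $\delta$-interleaving $(\phi,\psi)$ and a rectangle $R = [a,b] \times [c,d]$ with $b < c$, the box lemma asserts that the number of points of $\PD(\mathbb{U})$ inside $R$ is at most the number of points of $\PD(\mathbb{V})$ inside its $\delta$-thickening $R^{\delta} = [a-\delta, b+\delta] \times [c-\delta, d+\delta]$, and symmetrically with $\mathbb{U}$ and $\mathbb{V}$ swapped; the bound follows from chasing ranks around the interleaving squares. Applying Hall's marriage theorem to the bipartite graph whose edges join points of $\PD(\mathbb{U})$ and $\PD(\mathbb{V})$ within $L_{\infty}$-distance $\delta$ produces the required $\delta$-matching, with any unmatched points lying within $\delta$ of the diagonal and sent there.

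The main obstacle is the $q$-tame case in which $\mathbb{U},\mathbb{V}$ need not decompose into interval modules, so $\PD$ must be defined through a \emph{persistence measure} $\mu_{\mathbb{V}}$ whose value on a rectangle counts an alternating sum of ranks of composition maps. I would establish that $q$-tameness implies $\mu_{\mathbb{V}}$ is locally finite on $\R^2 \setminus \{p = q\}$, so $\PD(\mathbb{V})$ is a well-defined countable multiset with finite multiplicities off the diagonal. The box lemma transfers verbatim from interval counts to $\mu$, and a limiting argument — exhausting $\R^2 \setminus \{p = q\}$ by nested compact rectangles, applying the finite Hall argument on each, and extracting a convergent partial matching by compactness — assembles a global $\delta$-matching. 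This measure-theoretic apparatus, developed by Chazal and coauthors in \cite{chazal2016structure}, is where the bulk of the technical work concentrates.
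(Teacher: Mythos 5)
The paper itself contains no proof of this statement: it is quoted as background (the isometry theorem of \cite{chazal2016structure}), so the "paper's proof" is simply the citation, and your outline reproduces essentially the argument of that reference — building a $\delta$-interleaving summand-wise from a $\delta$-matching for $\ID\leq\BD$, and using persistence measures, the box lemma (for rectangles at distance more than $\delta$ from the diagonal), and Hall's theorem with a compact-exhaustion limit for $\BD\leq\ID$. The only thin spot is your reduction of the $\ID\leq\BD$ direction to interval-decomposable modules: a $q$-tame module need not decompose and need not be isomorphic to the interval sum encoded by its diagram, so the "approximation argument" really requires the ephemeral/observable-category machinery of \cite{chazal2016structure} (the module and its diagram realization agree up to modules invisible at any positive interleaving scale) rather than a naive limit; with that substitution your plan matches the cited proof.
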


\chapter{New compressed cover tree for $k$-nearest neighbor search} 
\label{ch:knn} 

Section \ref{app:challenge_trees} and Section \ref{sec:challenges_paired_tree} were accepted to Proceedings of TopoInVis 2022 \cite{2208.09447}.

\section{The $k$-nearest neighbor search and overview of results}
\label{sec:intro_knn}

\begin{figure}
	\centering
	\begin{tikzpicture}[align=center, node distance = 1.0cm, scale = 0.45]
    
	\node (scaleminftext) {$C_{-\infty}$};
	\node [blockz, right of =scaleminftext] (scaleminf1) {1};
	\node [blockz, right of =scaleminf1] (scaleminf2) {2};
	\node [blockz, right of =scaleminf2] (scaleminf3) {3};
	\node [blockz, right of =scaleminf3] (scaleminf4) {4};
	\node [blockz, right of =scaleminf4] (scaleminf5) {5};

	\node [above of =scaleminftext](scalem1text) {$C_{-1}$};
	\node [blockz, above of =scaleminf1] (scalem11) {1};
	\node [blockz, above of =scaleminf2] (scalem12) {2};
	\node [blockz, above of =scaleminf3] (scalem13) {3};
	\node [blockz, above of =scaleminf4] (scalem14) {4};
	\node [blockz, above of =scaleminf5] (scalem15) {5};
	
	\node [above of =scalem1text](scale0text) {$C_{0}$};
	\node [blockz, above of =scalem11] (scale01) {1};
	\node [blockz, above of =scalem13] (scale03) {3};
	\node [blockz, above of =scalem15] (scale05) {5};	
	
	\node [above of =scale0text](scale1text) {$C_{1}$};
	\node [blockz, above of =scale01] (scale11) {1};
	\node [blockz, above of =scale05] (scale15) {5};
	
	\node [above of =scale1text](scale2text) {$C_{2}$};
	\node [blockz, above of =scale11] (scale21) {1};

	\node [above of =scale2text](scaleinftext) {$C_{\infty}$};
	\node [blockz, above of =scale21] (scaleinf1) {1};
    
      \draw[dashed,->] (scalem11) -> (scaleminf1);
      \draw[dashed,->] (scalem12) -> (scaleminf2);
      \draw[dashed,->] (scalem13) -> (scaleminf3);
      \draw[dashed,->] (scalem14) -> (scaleminf4);
      \draw[dashed,->] (scalem15) -> (scaleminf5);

	  \draw[->] (scale01) -> (scalem11);

	  \draw[->] (scale03) -> (scalem12); 
     \draw[dashed, ->] (scale03) -> (scalem13);

	  \draw[->] (scale05) -> (scalem15);

	  \draw[->] (scale03) -> (scalem14);
	  
	  \draw[dashed,->] (scaleinf1) -> (scale21);
	  \draw[->] (scale21) -> (scale11);
	  \draw[->] (scale15) -> (scale05);
	  
	  \draw[->] (scale11) -> (scale01);
	  \draw[->] (scale11) -> (scale03);
	  \draw[->] (scale21) -> (scale15);
      
\end{tikzpicture} 
	\hspace{0.5cm}
	\begin{tikzpicture}[align=center, node distance = 1.0cm, scale = 0.45]
    

	\node (scalem2text) {Level 2};
	\node[below of =scalem2text] (scalem1text) {Level 1};
	\node[below of =scalem1text] (scalem0text) {Level 0};
	\node[below of =scalem0text] (scalemm1text) {Level -1};
	\node [blockz,  right=25pt of scalem2text ] (node1) {1};
	\node [blockz,  right=100pt of scalem1text] (node5) {5};
	\node [blockz,  right=25pt of scalem0text] (node3) {3};
	\node [blockz,  right=5pt of scalemm1text] (node2) {2};
	\node [blockz,  right=50pt of scalemm1text] (node4) {4};

	  \draw[->] (node1) -> (node5);
	  \draw[->] (node1) -> (node3);
	  \draw[->] (node3) -> (node2);
	  \draw[->] (node3) -> (node4);

	
	
	

    

	  
	  
	  
      
\end{tikzpicture} 
	\caption{\textbf{Left:} an implicit cover tree from \cite[Section~2]{beygelzimer2006cover} for a finite set of reference points $R = \{1,2,3,4,5\}$. 
		\textbf{Right:} a new compressed cover tree in Definition~\ref{dfn:cover_tree_compressed} corrects the past worst-case complexity results for $k$-nearest neighbors search in $R$.}
	\label{fig:implicitcompressed}
\end{figure}
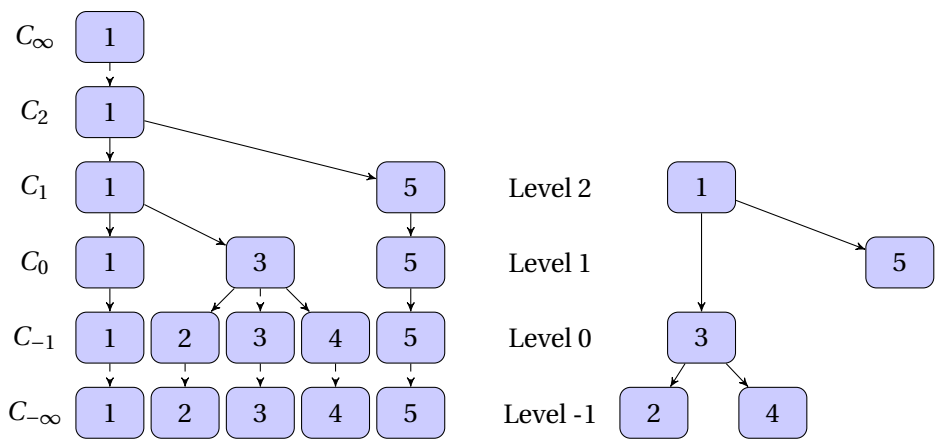

Given a query set $Q$ and a reference set $R$, which are subsets of some ambient space $X$ with a distance $d$ satisfying all metric axioms the $k$-nearest neighbors problem intends to discover all $k\geq 1$ nearest neighbors in the reference set $R$ for all points from the query set $Q$.
The simplest example is $X$ being $\R^n$ with the Euclidean metric and the query set $Q$ can be a single point or a subset of a larger set $R$. 

\medskip
\noindent
The \emph{exact} $k$-nearest neighbor problem intends to find exact (true) $k$-nearest neighbors of every query point $q$. 
Another probabilistic version of the $k$-nearest neighbor search \cite{har2006fast,manocha2007empirical} aims to find exact $k$-nearest neighbors with a given probability. It should be noted that the probabilistic $k$-nearest neighbor problem can be simplified to $k$ instances of 1-nearest-neighbors problem by splitting $R$ into $k$ subsets $R_{1}, ..., R_{k}$ and searching for nearest neighbors in each of the subsets.

\medskip
\noindent
On the other hand, the approximate version \cite{arya1993approximate,krauthgamer2004navigating,andoni2018approximate, wang2021comprehensive} looks for its $\epsilon$-approximate neighbor $r\in R$ satisfying $d(q,r) \leq (1+\epsilon)d(q,\NN(q))$ for every query point $q \in Q$, where $\epsilon>0$ is fixed and $\NN(q)$ is the exact first nearest neighbor of $q$. 

\medskip
\noindent
\textbf{Spacial data structures.}
It is well known that the time complexity of a brute-force approach of finding all 1st nearest neighbors of points from $Q$ within $R$ is proportional to the product $|Q|\cdot|R|$ of the sizes of $Q, R$.
Already by the mid of 1970s real data was big enough to motivate faster algorithms and sophisticated data structures.
One of the first spacial data structures, a \emph{quadtree} \cite{finkel1974quad}, hierarchically indexes a reference set $R\subset\R^2$ by subdividing its bounding box (a root) into four smaller boxes (children), which are recursively subdivided until final boxes (leaf nodes) contain only a small number of reference points.
A generalization of the quadtree to $\R^n$ exposes an exponential dependence of its computational complexity on $n$, because the $n$-dimensional box is subdivided into $2^n$ smaller boxes.
\medskip

\noindent
The first attempt to overcome this dimensionality curse was the $kd$-tree \cite{bentley1975multidimensional} that subdivides a subset of the reference set $R$ at every  recursion step into two subsets instead of $2^n$ subsets.
Since then more advanced new algorithms utilizing spatial data structures have positively impacted various related research areas such as a minimum spanning tree \cite{bentley1978fast}, range search \cite{pelleg1999accelerating},  $k$-means clustering  \cite{pelleg1999accelerating}, and ray tracing \cite{fussell1988fast}.
The spacial data structures for finding nearest neighbors in the chronological order are $k$-means tree \cite{fukunaga1975branch}, $R$ tree \cite{beckmann1990r}, ball tree \cite{omohundro1989five}, $R^*$ tree \cite{beckmann1990r}, vantage-point tree \cite{yianilos1993data}, TV trees \cite{lin1994tv}, X trees \cite{berchtold1996x}, principal axis tree \cite{mcnames2001fast}, spill tree \cite{liu2004investigation}, cover tree \cite{beygelzimer2006cover}, cosine tree \cite{holmes2008quic}, max-margin tree \cite{ram2012nearest}, cone tree \cite{ram2012maximum} and others.


\begin{dfn}[$k$-nearest neighbor set $\NN_k$]
	\label{dfn:kNearestNeighbor}
For any fixed point $q\in Q$, let $d_1\leq\dots\leq d_{|R|}$ be ordered
 distances from $q$ to all points of $R$, where $|R|$ is the number of points in $R$.
For any $k\geq 1$, the $k$-nearest neighbor sets $\NN_k(q;R)$ consists of all points $u\in R$ with $d(q,u)\leq d_k$. 
\bs
	\bs
\end{dfn}

\noindent
For $Q=R=\{0,1,2,3\}$, the point $q=1$ has ordered distances $d_1=0<d_2=1=d_3<d_4=2$. 
The nearest neighbor sets are $\NN_1(1;R)=\{1\}$,
$\NN_2(1;R)=\{0,1,2\}=\NN_3(1;R)$, $\NN_4(1;R)=R$. 
So 0 can be a 2nd neighbor of 1, then 2 becomes a 3rd neighbor of 1, or these neighbors of $0$ can be found in a different order.


\begin{prob}[$k$-nearest neighbor search]
	\label{pro:knn}
Let $Q,R$ be finite subsets of query and reference points in a metric space $(X,d)$.
For any fixed $k\geq 1$, design an algorithm to find $k$ distinct points from $\NN_k(q;R)$ for all $q\in Q$ so that the parametrized worst-case time complexity of the algorithm is near-linear in $\max\{|Q|,|R|\}$, where hidden constants may depend on structures of $Q,R$ but not on their sizes $|Q|,|R|$. 
\bs
\end{prob}

\noindent

\noindent
\textbf{Expansion constants}. 
Recall that $|\bar B(p,t)|$ is the number (if finite) of points in closed ball $\bar B(p,t)$.


\begin{dfn}[Locally finite space]
	\label{dfn:locally_finite_space}
	We say that a subset $R$ of a metric space $(X,d)$ is locally finite, if for all $p \in X$ and $t \in \R_{+}$ set $\bar{B}(p,t) \cap R$ is finite. 
	\bs
\end{dfn}
\noindent
The following definition of the expansion constant is borrowed from \cite{beygelzimer2006cover}. It also introduces the new minimized expansion constant. Minimized expansion constant is a discrete analog of doubling dimension.

\begin{dfn}[expansion constants $c,c_m$]
	\label{dfn:expansion_constant}
	Let $R$ be a locally finite set in a metric space $X$. 
	The \emph{expansion constant} $c(R)$ is the smallest $c(R)\geq 2$ such that $|\bar{B}(p,2t)|\leq c(R) \cdot |\bar{B}(p,t)|$ for any $p\in R$ and $t\geq 0$ \cite{beygelzimer2006cover}.  
	The \emph{minimized expansion constant} $c_m(R) = \inf\limits_{0 < \xi}\inf\limits_{R\subseteq A\subseteq X}\sup\limits_{p \in A,t > \xi}\dfrac{|\bar{B}(p,2t) \cap A|}{|\bar{B}(p,t) \cap A|}$ where $A$ is a locally finite set which covers $R$.
	\bs
\end{dfn}

\begin{lem}[properties of $c_m$]
\label{lem:expansion_constant_property}
For any finite sets $R\subseteq A$ in a metric space, the following inequalities hold: $c_m(R) \leq c_m(A)$, $c_m(R) \leq c(R)$.
	\bs 
\end{lem}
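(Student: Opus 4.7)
The plan is to prove both inequalities by directly unpacking the infimum in Definition~\ref{dfn:expansion_constant} and exploiting its monotonicity with respect to the collection of admissible covering sets. Both bounds should fall out without any metric computation; no triangle-inequality or volume-doubling argument is needed, only set-theoretic monotonicity of $\inf$ and the simple observation that $R$ itself is a legitimate covering set in the definition of $c_m(R)$.

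For the first inequality $c_m(R)\leq c_m(A)$, the key observation is that the outer infimum $\inf_{R\subseteq A'\subseteq X}$ in $c_m(R)$ ranges over a \emph{larger} family of subsets than the corresponding infimum $\inf_{A\subseteq A'\subseteq X}$ in $c_m(A)$: since $R\subseteq A$, every $A'$ with $A\subseteq A'\subseteq X$ automatically satisfies $R\subseteq A'\subseteq X$. Thus the function
\[
(\xi,A')\;\longmapsto\;\sup_{p\in A',\,t>\xi}\dfrac{|\bar B(p,2t)\cap A'|}{|\bar B(p,t)\cap A'|}
\]
is being infimized over a strictly larger index set in the definition of $c_m(R)$, which can only decrease its value. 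This yields $c_m(R)\leq c_m(A)$ immediately.

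For the second inequality $c_m(R)\leq c(R)$, I would specialize the outer infimum of $c_m(R)$ to the choice $A'=R$, which is admissible since trivially $R\subseteq R\subseteq X$ and $R$ is locally finite as a finite set. With this choice the inner expression becomes
\[
\inf_{\xi>0}\;\sup_{p\in R,\,t>\xi}\dfrac{|\bar B(p,2t)\cap R|}{|\bar B(p,t)\cap R|},
\]
and every such ratio is bounded above by $c(R)$ by the defining property of the expansion constant (note the denominator is at least $1$ since $p\in R\cap\bar B(p,t)$, so no division-by-zero issue arises). Taking supremum over $p\in R$ and $t>\xi$ preserves this bound, and then the infimum over $\xi>0$ is still at most $c(R)$. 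Combined with the fact that the outer infimum over $A'$ is no larger than the value at $A'=R$, this gives $c_m(R)\leq c(R)$. I do not anticipate any real obstacle; the only subtlety worth stating carefully is that the infimum over covering sets $A'$ makes the inequality $c_m(R)\leq c_m(A)$ go in the ``correct'' direction precisely because enlarging $R$ to $A$ shrinks the class of admissible $A'$.
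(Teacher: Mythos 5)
Your proof is correct. The paper actually states Lemma~\ref{lem:expansion_constant_property} without any proof, and your argument is exactly the intended one: the quantity $\sup_{p\in A',\,t>\xi}|\bar B(p,2t)\cap A'|/|\bar B(p,t)\cap A'|$ depends only on $(\xi,A')$, so $c_m(R)\leq c_m(A)$ follows from the inclusion of the admissible families $\{A':A\subseteq A'\subseteq X\}\subseteq\{A':R\subseteq A'\subseteq X\}$, and $c_m(R)\leq c(R)$ follows by taking $A'=R$ and bounding each ratio by the defining property of $c(R)$ (with your correct remark that the denominator is at least $1$ because $p\in\bar B(p,t)\cap R$).
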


\noindent
Note that both $c(R), c_m(R)$ is always defined when $R$ is finite. 
We will show that a single outlier point can make the expansion constant $c(R)$ as large as $O(|R|)$.
The set $R=\{1,2,\dots,n,2n+1\}$ of $|R|=n+1$ points has $c(R)=n+1$ because $\bar B(2n+1;n)=\{2n+1\}$ is a single point, while $\bar B(2n+1;2n)=R$ is the full set of $n+1$ points. 
On the other hand the same set $R$ can be extended to a larger uniform set $A=\{1,2,\dots,2n-1,2n\}$ whose expansion constant $c(A)=2$, therefore the minimized constant of the original set $R$ becomes much smaller: $c_m(R) \leq c(A)=2<c(R)=n+1$.
\medskip

\noindent
The constant $c$ from \cite{beygelzimer2006cover} equals to $2^{\text{dim}_{KR}}$ from \cite[Section~2.1]{krauthgamer2004navigating}. 
In \cite[Section~1.1]{krauthgamer2004navigating} the doubling dimension $2^{\text{dim}}$ is defined as a minimum value $\rho$ such that any set $X$ can be covered by $2^{\rho}$ sets whose diameters are half of the diameter of $X$.
The work \cite{krauthgamer2004navigating} proves that  $2^{\text{dim}} \leq 2^{n}$ for any subset of $\R^n$.
In Theorem \ref{thm:normed_space_exp_constant} it is shown that $c_m(R) \leq 2^{n}$ for any a finite subset $R$ of a normed vector space $\R^{n}$. It is not a surprise because $c_m(R)$ mimics $2^{\text{dim}}$.

\medskip
\noindent
\textbf{Navigating nets}.
In 2004, \cite[Theorem~2.7]{krauthgamer2004navigating} claimed that a navigating net can be constructed in time 
$O\big(2^{O(\text{dim}_{KR}(R)} |R| (\log|R|) \log(\log|R|)\big)$ and
all $k$-nearest neighbors of a query point $q$ can be found in time $O(2^{O(\text{dim}_{KR}(R \cup \{q\})}(k + \log|R|)$, where $\text{dim}_{KR}(R \cup \{q\})$ is the expansion constant defined above. 
All proofs and pseudo-codes were unfortunately omitted.
The authors didn't reply to our request for details. 
\medskip

\noindent
\textbf{Modified navigating nets} \cite{cole2006searching} were used in 2006 to claim the time $O(\log(n) + (1/\epsilon)^{O(1)})$ for the $(1+\epsilon)$-approximate neighbors.
All proofs and pseudo-codes were left out, also for the construction of the modified navigating net for the claimed time $O(|R| \cdot \log(|R|))$. 
\medskip


\noindent
\textbf{Cover trees}. 
In 2006, \cite{beygelzimer2006cover} introduced a cover tree inspired by the navigating nets \cite{krauthgamer2004navigating}. 
This cover tree was designed to prove a worst-case bound for the nearest neighbor search in terms of the size $|R|$ of a reference set $R$ and the expansion constant $c(R)$ of Definition \ref{dfn:expansion_constant}. 
Assume that a cover tree is already constructed on set $R$. Then \cite[Theorem~5]{beygelzimer2006cover} claims that nearest neighbor of any query point $q \in Q$ could be found in time $O(c(R)^{12} \cdot \log|R|)$. In 2015, \cite[Section~5.3]{curtin2015improving} pointed out that the proof of \cite[Theorem~5]{beygelzimer2006cover} contains a crucial gap, now have been confirmed by a specific dataset in Counterexample~\ref{cexa:original_all_nearest_neighbors_algorithm}. The time complexity result of the cover tree construction algorithm \cite[Theorem~6]{beygelzimer2006cover} had a similar issue, the gap of which is exposed rigorously in Counterexample~\ref{cexa:construction_algorithm_of_original_cover_tree}.

\medskip
\noindent 
\textbf{Further studies in cover trees.} A noteworthy paper on cover trees \cite{kollar2006fast} introduced a new probabilistic algorithm for the nearest neighbor search, as well as corrected the pseudo-code of the cover tree construction algorithm of \cite[Algorithm~2]{beygelzimer2006cover}.
Later in 2015, a new, more efficient implementation of cover tree was introduced in \cite{izbicki2015faster}. However, no new time-complexity results were proven. A study \cite{jahanseir2016transforming} explored connections between modified navigating nets \cite{cole2006searching} and cover trees \cite{beygelzimer2006cover}.
 Multiple papers \cite{beygelzimer2006coverExtend, ram2009linear, curtin2015plug} studied possibility of solving $k$-nearest neighbor problem (Problem \ref{pro:knn}) by using cover tree on both, the query set and the reference set, for further details see Section~\ref{sec:challenges_paired_tree}.


\begin{table}
	\label{table:KR:knearest}
	\centering
	\caption{Results for exact $k$-nearest neighbors of one query point $q \in X$ using hidden classic expansion constant $c(R)$ of Definition \ref{dfn:expansion_constant} or KR-type constant $2^{\text{dim}_{KR}}$  \cite[Section~2.1]{krauthgamer2004navigating} and assuming that all data structures are already built. Note that $2^{\text{dim}_{KR}}$ corresponds to $c(R)^{O(1)}$ }
	\begin{tabular}{|V{3.0cm}|V{5cm}|V{3cm}|V{3.0cm}|}
		\hline
		Data structure, reference     & time complexity   & space  & proofs \\
		\hline
		Navigating nets \cite{krauthgamer2004navigating} & $O\big(2^{O(\text{dim}_{KR})}(\log(|R|) + k)\big)$ for $k\geq 1$ \cite[Theorem~2.7]{krauthgamer2004navigating} & $O(2^{O(\text{dim})} \cdot |R|)$ & Not available \\
		\hline
		Cover tree \cite{beygelzimer2006cover}   &  $O\big(c(R)^{O(1)}\log(|R|)\big)$ for $k=1$, \cite[Theorem~5]{beygelzimer2006cover} & $O(|R|)$ & Counterexample \ref{cexa:original_all_nearest_neighbors_algorithm} shows that the proof of Theorem 5 is incorrect \\ 
		\hline
		Compressed cover tree [ Definition \ref{dfn:cover_tree_compressed} ]&     $O\big(c(R)^{O(1)} \cdot \log(k) \cdot (\log(|R|) + k)\big)$  & $O(|R|)$, Lem \ref{lem:linear_space_cover_tree}      & Theorem \ref{thm:knn_KR_time} \\
		\hline            
	\end{tabular}
\end{table}

\begin{table}
	\label{table:KR:construction}
	\centering
	\caption{Results for building data structures with hidden classic expansion constant $c(R)$ of Definition \ref{dfn:expansion_constant} or KR-type constant $2^{\text{dim}_{KR}}$ \cite[Section~2.1]{krauthgamer2004navigating}}
	\begin{tabular}{|V{3.0cm}|V{5cm}|V{2.9cm}|V{3.0cm}|}
		\hline
		Data structure, reference     & time complexity   & space & proofs \\
		\hline
		Navigating nets \cite{krauthgamer2004navigating} & $O\big(2^{O(\text{dim}_{KR})} \cdot |R| \log(|R|) \log(\log(|R|) )\big)$, \cite[Theorem~2.6]{krauthgamer2004navigating} & $O(2^{O(\text{dim})}|R|)$ & Not available \\
		\hline
		Cover tree \cite{beygelzimer2006cover}   &  $O(c(R)^{O(1)} \cdot |R| \cdot \log(|R|))$, \cite[Theorem~6]{beygelzimer2006cover} & $O(|R|)$ & Counterexample \ref{cexa:construction_algorithm_of_original_cover_tree} shows that the proof of Theorem 6 is incorrect \\ 
		\hline
		Compressed cover tree [dfn \ref{dfn:cover_tree_compressed}] &     $O\big(c(R)^{O(1)} \cdot |R| \cdot \log(R) \big)$  &  $O(|R|)$, Lem \ref{lem:linear_space_cover_tree}     & Corollary \ref{cor:construction_time_KR} \\
		\hline       
	\end{tabular}
\end{table}

\begin{table}
	\centering
	\caption{Results for exact $k$-nearest neighbors of one point $q$ using hidden $c_m(R)$ or dimensionality constant $2^{\text{dim}}$ \cite[Section~1.1]{krauthgamer2004navigating}  assuming that all structures are built.}
	\begin{tabular}{|V{3.0cm}|V{5cm}|V{2.5cm}|V{3.0cm}|}
		\hline
		Data structure, reference     & time complexity & space & proofs \\
		\hline
		Navigating nets \cite{krauthgamer2004navigating} & $O\big(2^{O(\text{dim})} \cdot \log(\Delta) + |\bar{B}(q,O(1) \cdot d(q,R))|\big)$, for $k = 1$, \cite[Theorem~2.3]{krauthgamer2004navigating} & $O(2^{O(\text{dim})}\cdot|R|)$  & a proof outline in \cite[Theorem~2.3]{krauthgamer2004navigating} \\
		\hline
		Compressed cover tree [dfn \ref{dfn:cover_tree_compressed}] &     $O\big(c_m(R)^{O(1)} \cdot \log(k) \cdot (\log(|\Delta|) + |\bar{B}(q,O(1) \cdot d_k(q,R))| ) \big )$  & $O(|R|)$, Lem \ref{lem:linear_space_cover_tree}     & Corollary~\ref{cor:cover_tree_knn_miniziminzed_constant_time} \\
		\hline          
	\end{tabular}
	\label{table:dim:knearest}
\end{table}

\begin{table}
	\centering
	\caption{Building data structures with hidden $c_m(R)$ or dimensionality constant $2^{\text{dim}}$ \cite[Section~1.1]{krauthgamer2004navigating}}
	\begin{tabular}{|V{3.0cm}|V{5.5cm}|V{3cm}|V{3.0cm}|}
		\hline
		Data structure, reference     & time complexity   & space & proofs \\
		\hline
		Navigating nets \cite{krauthgamer2004navigating} & $O\big(2^{O(\text{dim})} \cdot |R| \cdot \log(\Delta) \cdot \log(\log((\Delta ))\big)$
		& $O(2^{O(\text{dim})} \cdot |R|)$  & \cite[Theorem~2.5]{krauthgamer2004navigating} \\
		\hline
		Compressed cover tree [dfn \ref{dfn:cover_tree_compressed}] & $O\big(c_m(R)^{O(1)}\cdot|R|\log(\Delta(|R|))\big)$    & $O(|R|)$, Lem \ref{lem:linear_space_cover_tree}     & Theorem \ref{thm:construction_time}  \\
		\hline           
	\end{tabular}
	\label{table:dim:construction}  
\end{table}

\noindent
\textbf{Compressed cover tree.}
In this work we correct the past issues of the standard single-tree cover tree approach \cite{beygelzimer2006cover} by applying a new compressed cover tree $\T(R)$ of Definition \ref{dfn:cover_tree_compressed}, which can be constructed on any finite set $R$ with a metric $d$. 
In Theorem~\ref{thm:construction_time_KR} it will be shown that a compressed cover tree $\T(R)$ can be built in time
$O(c_m(R)^8 \cdot c(R)^2 \cdot \log_2(|R|) \cdot |R|)$.

\medskip
 \noindent
The past gap in time complexity result \cite[Theorem~1]{beygelzimer2006cover} of nearest neighborhood search is tackled by introducing a new method, Algorithm \ref{alg:cover_tree_k-nearest} which differs from the original method \cite[Algorithm~1]{beygelzimer2006cover} by having an additional block of code.
The extra block eliminates the issue of having too many successive iterations, in case where the query point $q$ is disproportionately far away from the remaining candidate set $R_i$ on some level $i$.
By means of the new block, Lemma \ref{lem:knn_depth_bound} guarantees that the number of iterations of Algorithm \ref{alg:cover_tree_k-nearest} is bounded by $O(c(R)^2\log_2(|R|))$. 
This new lemma replaces the old result \cite[Lemma~4.3]{beygelzimer2006cover}, which had a similar bound for the number of explicit levels of a cover tree, see Definition~\ref{dfn:explicit_depth_for_compressed_cover_tree}, but couldn't be used to estimate the number of iterations of \cite[Algorithm~1]{beygelzimer2006cover} due to Counterexample~\ref{cexa:original_all_nearest_neighbors_algorithm}. 
\medskip

\noindent
Assume that a compressed cover tree $\T(R)$ is already constructed on the reference set $R$. 
The main result of this work Theorem \ref{thm:knn_KR_time} shows that $k$-nearest neighbors of a query node $q$ can be found in time of
	$$O\Big ( c(R \cup \{q\})^2 \cdot \log_2(k) \cdot \big((c_m(R))^{10}  \cdot \log_2(|R|) + c(R \cup \{q\}) \cdot k\big) \Big).$$
Recall that $c(R)$ can potentially become as large as $O(|R|)$ when $R$ is not uniformly distributed.
Our second result Corollary~\ref{cor:cover_tree_knn_miniziminzed_constant_time} bounds the time complexity
of the new $k$-nearest neighborhood algorithm through making use of only minimized expansion constant $c_m(R)$ of Definition \ref{dfn:expansion_constant} and the aspect ratio $\Delta(R)$ of Definition \ref{dfn:radius+d_min} as parameters. The advantage of this is that the parameters are less dependent on the noise of the data set, in most cases $\Delta(R)$ is relatively small and $c_m(R)$ depends mostly on the dimension of the ambient space $X$. It is shown that $k$-nearest neighbors of $q$ in a reference set $R$ can be computed in time of
$$O\Big ((c_m(R))^{10} \cdot \log_2(k) \cdot \log_2(\Delta(R)) + |\bar{B}(q, 5d_k(q,R))| \cdot \log_2(k) \Big ),$$
 where $d_k(q,R)$ is the distance of $q$ to its $k$th nearest neighbor and $|\bar{B}(q, 5d_k(q,R))|$ is the number of points in the ball 
$\bar{B}(q, 5d_k(q,R))$.



\section{Compressed cover tree}
\label{sec:cover_tree}

This section introduces a new data structure, compressed cover tree in Definition \ref{dfn:cover_tree_compressed} that will be used to solve Problem \ref{pro:knn}. This section also contains important results regarding expansion constants: Lemma \ref{lem:packing} and Lemma \ref{lem:growth_bound}.
Given a $\delta$-spare finite metric space $R$, Lemma \ref{lem:packing} shows that the number of points of $R$ in a closed ball $\bar{B}(p,t)$ can be bound by $c_m(S)^{\mu}$, where $\mu$ depends on ratio $\frac{t}{\delta}$.
On the other hand Lemma \ref{lem:growth_bound} can be used to show that if there exists points $p,q$ in a finite metric space $R$ satisfying $2r < d(p,q) \geq 3r$ for some $r \in \R$, then $|bar{B}(q,4r)| \geq (1 + \frac{1}{c(R)^2})|\bar{B}(q,r)|$.
 
\begin{dfn}[a compressed cover tree $\T(R)$]
	\label{dfn:cover_tree_compressed}
	Let $R$ be a finite set in a metric space $(X,d)$. 
	\emph{A compressed cover tree} $\T(R)$ has the vertex set $R$ with a root $r \in R$ and a \emph{level} function $l : R \rightarrow \Z$ satisfying the conditions below.
	\medskip
	
	\noindent
	(\ref{dfn:cover_tree_compressed}a)
	\emph{Root condition} :  
	the level of the root node $r$ is $l(r) \geq 1 +  \max\limits_{p \in R \setminus \{r\}}l(p)$.
	
	\noindent
	(\ref{dfn:cover_tree_compressed}b)
	\emph{Covering condition} : 
	for every non-root node $q \in R\setminus \{r\}$, we select a unique \emph{parent} $p$ and a level $l(q)$ such that $d(q,p) \leq 2^{l(q)+1}$ and $l(q) < l(p)$; 
	this parent node $p$ has a single link to its  \emph{child} node $q$ in the tree $\T(R)$. 
	\medskip
	
	\noindent
	(\ref{dfn:cover_tree_compressed}c)
	\emph{Separation condition} : 
	for $i \in \Z$, the \emph{cover set} 
	$C_i = \{p \in R \mid l(p) \geq i\}$ has
	$d_{\min}(C_i) = \min\limits_{p \in C_{i}}\min\limits_{q \in C_{i}\setminus \{p\}} d(p,q) > 2^{i}$.
	\medskip
	
	\noindent
	Since there is a 1-1 map between all points of $R$ and all nodes of $\T(R)$, the same notation $p$ can refer to a point in the set $R$ or to a node of the tree $\T(R)$.  
	Set $l_{\max} = 1 +  \max\limits_{p \in R \setminus \{r\}}l(p) $ and $l_{\min} = \min\limits_{p \in R}l(p)$.
	For any node $p\in\T(R)$, $\Child(p)$ denotes the set consisting of all children of $p$, including $p$ itself.
	For any node $p \in\T(R)$, define the \emph{node-to-root} path as a unique sequence of nodes $w_0,\dots,w_m$ such that $w_0 = p$, $w_m$ is the root and $w_{j+1}$ is the parent of $w_{j}$ for $j=0,...,m-1$. 
	A node $q \in\T(R)$ is a \emph{descendant} of a node $p$ if $p$ is in the node-to-root path of $q$. 
	A node $p$ is an \emph{ancestor} of $q$ if $q$ is in the node-to-root path of $p$. 
	Let $\Desc(p)$ be the set of all descendants of $p$, including $p$. 
	\bs
\end{dfn} 

\begin{lem}[Linear space of $\T(R)$]
\label{lem:linear_space_cover_tree}
Let $(R,d)$ be a finite metric space. Then any cover tree $\T(R)$ from Definition \ref{dfn:cover_tree_compressed} takes $O(|R|)$ space. 
\end{lem}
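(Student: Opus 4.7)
The plan is to exploit the fundamental design feature of the compressed cover tree that is absent from the original cover tree of \cite{beygelzimer2006cover}: Definition~\ref{dfn:cover_tree_compressed} explicitly sets up a bijection between the points of $R$ and the nodes of $\T(R)$, so each point is represented by exactly one node. This contrasts with the implicit cover tree illustrated in Figure~\ref{fig:implicitcompressed} (left), where each point is repeated on every level below its own, which is what prevents the naive implementation there from being linear in $|R|$.

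The first step would be to state this bijection explicitly and conclude that the total number of nodes in $\T(R)$ equals $|R|$. The second step is to enumerate the per-node storage: under the random-access machine model of \cite{Gordon_Bell1971-gy} that the thesis adopts, each node stores (i) an identifier for its point in $R$, (ii) the integer level $l(p)$, and (iii) a pointer to its parent, each of which occupies $O(1)$ words. The third step is to bound the total storage of the parent/child links: by the covering condition (\ref{dfn:cover_tree_compressed}b), every non-root node has exactly one parent, so the union $\bigcup_{p\in R}\Child(p)\setminus\{p\}$ has cardinality $|R|-1$, and hence the total number of parent-to-child pointers over the entire tree is $|R|-1$.

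Summing these contributions gives $O(|R|)$ total space. The only mild subtlety is to observe that the level values $l(p)$, although they range in a possibly large interval $[l_{\min},l_{\max}]$ depending on the aspect ratio of $R$, are still single integers stored in $O(1)$ words under the RAM model, so they do not inflate the bound; no explicit per-level cover set $C_i$ needs to be materialized in memory, since the $C_i$ of Definition~\ref{dfn:cover_tree_compressed}(c) are only conceptual objects used in proofs. There is no substantive obstacle here: the whole point of compressing the cover tree is precisely to enforce the $|R|$-node bijection, after which the linear space bound is a one-line accounting argument.
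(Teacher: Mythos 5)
Your proposal is correct and follows essentially the same route as the paper's proof: since each point of $R$ corresponds to exactly one node and the structure is a tree, the vertex and edge counts are at most $|R|$, giving $O(|R|)$ space. Your extra accounting of per-node fields and the observation that the cover sets $C_i$ are never materialized are fine elaborations but not a different argument.
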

\begin{proof}
Since $\T(R)$ is a tree , both its vertex set and its edge set contain at most $|R|$ nodes. Therefore $\T(R)$ takes at most $O(|R|)$ space. 
\end{proof}

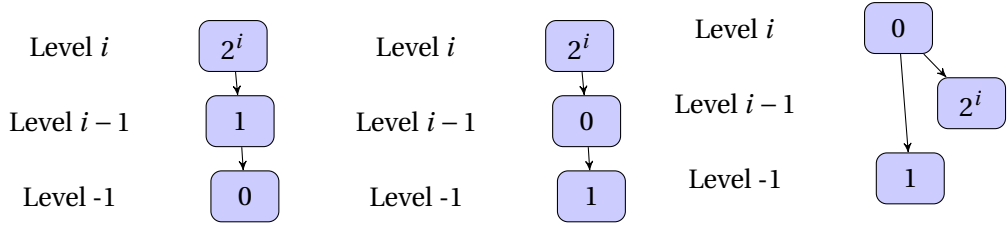
\begin{figure}[h]
	\centering
	\begin{subfigure}{.30\textwidth}
		\centering
		\begin{tikzpicture}[align=center, node distance = 1.0cm, scale = 0.45]

	\node (scalem2text) {Level $i$};
	\node[below of =scalem2text] (scalem1text) {Level $i-1$};
	\node[below of =scalem1text] (scalem0text) {Level -1};
	\node [blockz,  right=30pt of scalem2text ] (node1) {$2^{i}$};
    \node [blockz,  right=25pt of scalem1text ] (node2) {1};
	\node [blockz,  right=32pt of scalem0text ] (node3) {0};
	

	  \draw[->] (node1) -> (node2);
	   \draw[->] (node2) -> (node3);

\end{tikzpicture}
		\label{fig:cover_tree_variant_one}
		
	\end{subfigure}
	\begin{subfigure}{.30\textwidth}
		\centering
		\begin{tikzpicture}[align=center, node distance = 1.0cm, scale = 0.45]

	\node (scalem2text) {Level $i$};
	\node[below of =scalem2text] (scalem1text) {Level $i-1$};
	\node[below of =scalem1text] (scalem0text) {Level -1};
	\node [blockz,  right=30pt of scalem2text ] (node1) {$2^{i}$};
    \node [blockz,  right=25pt of scalem1text ] (node2) {0};
	\node [blockz,  right=32pt of scalem0text ] (node3) {1};
	

	  \draw[->] (node1) -> (node2);
	   \draw[->] (node2) -> (node3);

\end{tikzpicture}
		\label{fig:cover_tree_variant_two}
	\end{subfigure}
	\begin{subfigure}{.30\textwidth}
		\centering
		\begin{tikzpicture}[align=center, node distance = 1.0cm, scale = 0.45]

	\node (scalem2text) {Level $i$};
	\node[below of =scalem2text] (scalem1text) {Level $i-1$};
	\node[below of =scalem1text] (scalem0text) {Level -1};
	\node [blockz,  right=30pt of scalem2text ] (node1) {0};
    \node [blockz,  right=50pt of scalem1text ] (node2) {$2^{i}$};
	\node [blockz,  right=32pt of scalem0text ] (node3) {1};
	

	  \draw[->] (node1) -> (node2);
	   \draw[->] (node1) -> (node3);

\end{tikzpicture}
		\label{fig:cover_tree_variant_three}
	\end{subfigure}
	\caption{
		Compressed cover trees $\T(R)$ from Definition~\ref{dfn:cover_tree_compressed} for $R = \{0,1,2^{i}\}$. 
	}
	\label{fig:cover_tree_easy_example}
\end{figure}

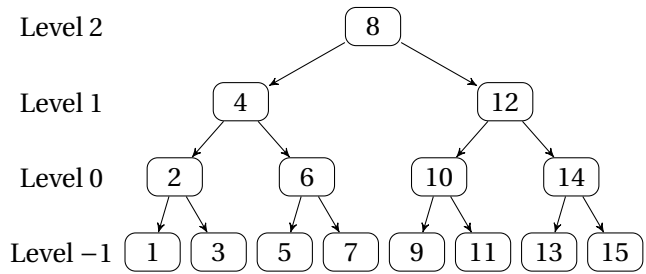
\begin{figure}
	\centering
	\begin{tikzpicture}[align=center, node distance = 1.0cm, scale = 0.45]

	\node (scale3) {Level $2$};
	\node[below of =scale3] (scale2) {Level 1};
	\node[below of =scale2] (scale1) {Level 0};
	\node[below of =scale1] (scale0) {Level $-1$};
	\node [blockzm1,  right=1pt of scale0 ] (node1) {1};
	\node [blockzm1,  right=26pt of scale0 ] (node3) {3};
	\node [blockzm1,  right=51pt of scale0 ] (node5) {5};
	\node [blockzm1,  right=76pt of scale0 ] (node7) {7};
	\node [blockzm1,  right=101pt of scale0 ] (node9) {9};
	\node [blockzm1,  right=126pt of scale0 ] (node11) {11};
	\node [blockzm1,  right=151pt of scale0 ] (node13) {13};
	\node [blockzm1,  right=176pt of scale0 ] (node15) {15};
		\node [blockzm1,  right=13pt of scale1 ] (node2) {2};
		\node [blockzm1,  right=63pt of scale1 ] (node6) {6};
		\node [blockzm1,  right=113pt of scale1 ] (node10) {10};
		\node [blockzm1,  right=163pt of scale1 ] (node14) {14};

		\node [blockzm1,  right=38pt of scale2 ] (node4) {4};
		\node [blockzm1,  right=138pt of scale2 ] (node12) {12};
		\node [blockzm1,  right=88pt of scale3 ] (node8) {8};
	

	
    \draw[->] (node2) -> (node1);
    \draw[->] (node2) -> (node3);
     \draw[->] (node6) -> (node5);
    \draw[->] (node6) -> (node7);
     \draw[->] (node10) -> (node9);
    \draw[->] (node10) -> (node11);
     \draw[->] (node14) -> (node13);
    \draw[->] (node14) -> (node15);

    \draw[->] (node8) -> (node4);
    \draw[->] (node8) -> (node12);

      \draw[->] (node4) -> (node2);
     \draw[->] (node4) -> (node6);
         
         \draw[->] (node12) -> (node10);
     \draw[->] (node12) -> (node14);

\end{tikzpicture}
	\caption{Compressed cover tree $\T(R)$ on the set $R$ in Example \ref{exa:cover_tree_big} with root $16$. }
	\label{fig:cover_tree_big}
\end{figure}

\begin{exa}[$\T(R)$ in Fig.~\ref{fig:cover_tree_big}]
	\label{exa:cover_tree_big}
	Let $(\R, d = |x-y|)$ be the real line with euclidean metric.
	Let $R = \{1,2,3,...,15\} $ be its finite subset. 
	Fig.~\ref{fig:cover_tree_big} shows a compressed cover tree on the set $R$ with the root $r=8$. 
	The cover sets of $\T(R)$ are $ C_{-1} = \{1,2,3,...,15\}$, $C_0 = \{2,4,6,8,10,12,14\}$, $C_{1} = \{4,8,12\}$ and $C_{2} = \{8\}$.
	We check the conditions of Definition \ref{dfn:cover_tree_compressed}.
	\begin{itemize}
		\item Root condition $(\ref{dfn:cover_tree_compressed}a)$: 
		since $\max_{p \in R \setminus \{8\}}d(p, 8) = 7$ and $\ceil{\log_2(7)} - 1= 2$, the root can have the level $l(8) = 2$.
		\item Covering condition (\ref{dfn:cover_tree_compressed}b) : for any $i \in {-1,0,1,2}$, let $p_i$ be arbitrary point having $l(p_i) = i$. Then we have 
		$d(p_{-1}, p_{0}) = 1 \leq 2^{0}$, 
		$d(p_0, p_1) = 2 \leq 2^{1}$ and  
		$d(p_1, p_2) = 4 \leq 2^{2}$.
		\item  Separation condition (\ref{dfn:cover_tree_compressed}c) : $d_{\min}(C_{-1}) = 1 > \frac{1}{2} = 2^{-1}$, $d_{\min}(C_{0}) = 2 > 1 = 2^{0}, d_{\min}(C_{1}) = 4 > 2 = 2^{1}$. 
		\bs
	\end{itemize}
\end{exa}

\noindent
A cover tree was defined in \cite[Section~2]{beygelzimer2006cover} as a tree version of a navigating net from \cite[Section ~ 2.1]{krauthgamer2004navigating}. 
For any index $i \in \Z\cup \{\pm\infty\}$, the level $i$ set of this cover tree coincides with the cover set $C_i$ above, which can have nodes at different levels in Definition~\ref{dfn:cover_tree_compressed}. 
Any point $p \in C_i$ has a single parent in the set $C_{i+1}$, which satisfied conditions (\ref{dfn:cover_tree_compressed}b,c). 
\cite[Section~2]{beygelzimer2006cover} referred to this original tree as an implicit representation of a cover tree.
Such a tree in Figure \ref{fig:tripleexample} (left) contains infinitely many repetitions of every point $p\in R$ in long branches and will be called an \emph{implicit cover tree}.
\medskip

\noindent
Since an implicit cover tree is formally infinite, for practical implementations, the authors of \cite{beygelzimer2006cover} had to use another version that they named an explicit representation of a cover tree. 
We call this version an \emph{explicit cover tree}. 
Here is the full defining quote at the end of \cite[Section~2]{beygelzimer2006cover}: "The explicit representation of the tree coalesces all nodes in which the only child is a self-child". 
In an explicit cover tree, if a subpath of every node-to-root path consists of all identical nodes without other children, all these identical nodes collapse to a single node, see Figure \ref{fig:tripleexample} (middle). 
\medskip

\noindent
Since an explicit cover tree still contains repeated points, Definition~\ref{dfn:cover_tree_compressed} is well-motivated by the aim to include every point only once, which saves memory and simplifies all subsequent algorithms, see Fig.~\ref{fig:tripleexample} (right).

\begin{figure}
	\centering
	\begin{tikzpicture}[align=center, node distance = 1.0cm, scale = 0.45]

    \node (scaleinf) {$l = \infty$};
    \node [below = 12.5pt of scaleinf](scalemidinf) {};
     \node[below=25pt of scaleinf] (scale6) {$l = 5$};
    \node[below of =scale6] (scale5) {$l = 4$};
    \node[below of =scale5] (scale4) {$l = 3$};
	\node[below of =scale4] (scale3) {$l = 2$};
	\node[below of =scale3] (scale2) {$l = 1$};
	\node[below=25pt of scale2] (scale1) {$l = -\infty$};

 	\node [blockzm2,  right=70pt of scaleinf ] (node1x) {$r$};
 	\node [blockzm2, below=25pt of node1x ] (node1a) {$r$};
 	\node [blockzm2, below of = node1a ] (node1b) {$r$};
 	\node [blockzm2,  below of = node1b ] (node1c) {$r$};
 	\node [blockzm2,   below of = node1c ] (node1d) {$r$};
 	\node [blockzm2,  below of = node1d ] (node1e) {$r$};
 	\node [blockzm2,   below =25pt of node1e ] (node1f) {$r$};
 	
 	\draw[dashed, ->] (node1x) -> (node1a);
 	    \draw[->] (node1a) -> (node1b);
 	     \draw[->] (node1b) -> (node1c);
 	      \draw[->] (node1c) -> (node1d);
 	       \draw[->] (node1d) -> (node1e);
 	      \draw[dashed, ->] (node1e) -> (node1f);

 		\node [blockzm2, right=10pt of node1b ] (node2b) {$p_{4}$};
 	\node [blockzm2,  below of = node2b ] (node2c) {$p_{4}$};
 	\node [blockzm2,   below of = node2c ] (node2d) {$p_{4}$};
 	\node [blockzm2,  below of = node2d ] (node2e) {$p_{4}$};
 	\node [blockzm2,   below =25pt of node2e ] (node2f) {$p_{4}$};
 	
 	    \draw[->] (node1a) -> (node2b);
 	    
 	    \draw[->] (node2b) -> (node2c);
 	      \draw[->] (node2c) -> (node2d);
 	       \draw[->] (node2d) -> (node2e);
 	      \draw[dashed, ->] (node2e) -> (node2f);

 		\node [blockzm2, right = 10pt of node2c ] (node3c) {$p_{3}$};
 	\node [blockzm2,   below of = node3c ] (node3d) {$p_{3}$};
 	\node [blockzm2,  below of = node3d ] (node3e) {$p_{3}$};
 	\node [blockzm2,   below =25pt of node3e ] (node3f) {$p_{3}$};
 	
 	        \draw[->] (node2b) -> (node3c);
 	      \draw[->] (node3c) -> (node3d);
 	       \draw[->] (node3d) -> (node3e);
 	      \draw[dashed, ->] (node3e) -> (node3f);

 		\node [blockzm2,   left = 10 pt of  node1d ] (node4d) {$p_{2}$};
 	\node [blockzm2,  below of = node4d ] (node4e) {$p_{2}$};
 	\node [blockzm2,   below =25pt of node4e ] (node4f) {$p_{2}$};

            \draw[->] (node1c)   -> (node4d);
 	       \draw[->] (node4d) -> (node4e);
 	      \draw[dashed, ->] (node4e) -> (node4f);

 	\node [blockzm2,  left = 10 pt of node4e ] (node5e) {$p_{1}$};
 	\node [blockzm2,   below =25pt of node5e ] (node5f) {$p_{1}$};
 	
 	\draw[->] (node4d) -> (node5e);
 	 \draw[dashed, ->] (node5e) -> (node5f);
 	
 	\node[rectangle, draw, fill=white, 
    text width=1.0em, text centered, rounded corners, minimum height=50pt, minimum width = 1.0em, right = 220pt of scalemidinf] (ynode1a) {$r$};
    
    	\node[rectangle, draw, fill=white, 
    text width=1.0em, text centered, rounded corners, minimum height=50pt, minimum width = 1.0em, below = 10pt of ynode1a] (ynode1b) {$r$};
    
    	\node[rectangle, draw, fill=white, 
    text width=1.0em, text centered, rounded corners, minimum height=80pt, minimum width = 1.0em, below = 10pt of ynode1b] (ynode1c) {$r$};

    \node[blockzm2, above right = -18 pt and 10 pt of ynode1b] (ynode2a) {$p_4$};
    
    	\node[rectangle, draw, fill=white, 
    text width=1.0em, text centered, rounded corners, minimum height=110pt, minimum width = 1.0em, below = 10pt of ynode2a] (ynode2b) {$p_4$};
    
    \node[rectangle, draw, fill=white, 
    text width=1.0em, text centered, rounded corners, minimum height=110pt, minimum width = 1.0em, right = 10pt of ynode2b] (ynode3a) {$p_3$};

        \node[blockzm2, above left = -15 pt and 10 pt of ynode1c] (ynode4a) {$p_2$};
    
    	\node[rectangle, draw, fill=white, 
    text width=1.0em, text centered, rounded corners, minimum height=55pt, minimum width = 1.0em, below = 10pt of ynode4a] (ynode4b) {$p_2$};
    
    \node[rectangle, draw, fill=white, 
    text width=1.0em, text centered, rounded corners, minimum height=55pt, minimum width = 1.0em, left = 10pt of ynode4b] (ynode5a) {$p_1$};

    \draw[->] (ynode1a) -> (ynode1b);
    \draw[->] (ynode1b) -> (ynode1c);
    \draw[->] (ynode1a) -> (ynode2a);
    \draw[->] (ynode2a) -| (ynode3a);
    \draw[->] (ynode2a) -> (ynode2b);
    
      \draw[->] (ynode1b) -> (ynode4a);
     \draw[->] (ynode4a) -| (ynode5a);
     \draw[->] (ynode4a) -> (ynode4b);
    

 	\node[blockzm2, right = 285pt of scale6] (xnode1) {$r$};
	\node[blockzm2, below right = 10 pt and 10 pt of xnode1] (xnode2) {$p_4$};
 	\node[blockzm2, below right = 70 pt and 10 pt of xnode1] (xnode4) {$p_2$};
 	\node[blockzm2, below right = 10 pt and 8 pt of xnode4] (xnode5) {$p_1$};
 	\node[blockzm2, below right = 12 pt and 8 pt of xnode2] (xnode3) {$p_3$};
 	
 	\draw[->]	(xnode1) -> (xnode2);
 	\draw[->]	(xnode2) -> (xnode3);
 	\draw[->]	(xnode1) -> (xnode4);
 	\draw[->]	(xnode4) -> (xnode5);

\end{tikzpicture}
	\caption{A comparison of past cover trees and a new tree in Example \ref{exa:implicitexplicitexample}. \textbf{Left:} an implicit cover tree contains infinite repetitions. \textbf{Middle:} an explicit cover tree. \textbf{Right:} a compressed cover tree from Definition \ref{dfn:cover_tree_compressed} includes each point once.  }
	\label{fig:tripleexample}
\end{figure}
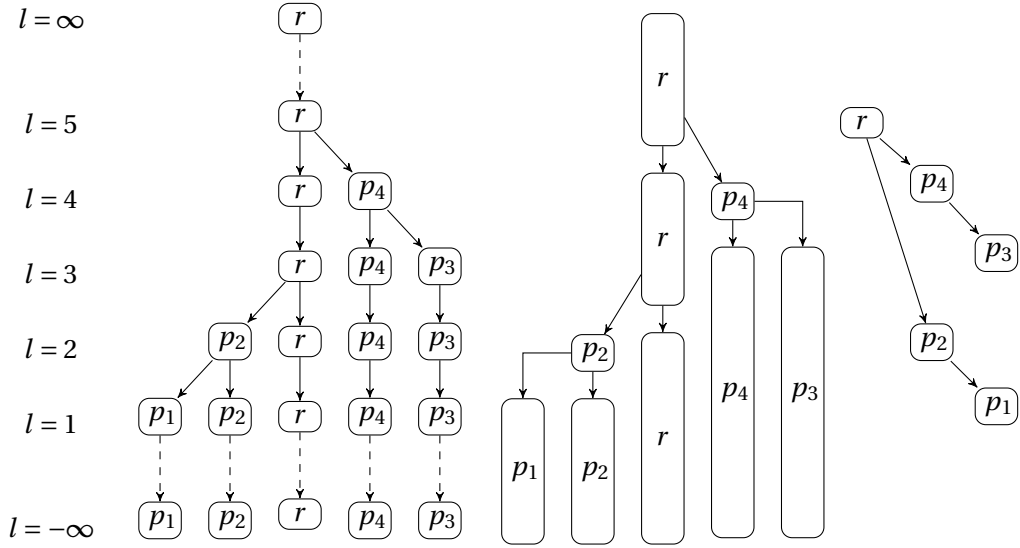

\begin{exa}[a short train line tree]
	\label{exa:implicitexplicitexample}
	Let $G$ be the unoriented metric graph consisting of two vertices $r,q$ connected by three different edges $e,h,g$ of lengths $|e| = 2^6$ , $|h| = 2^{3}$ , $|g| = 1$. Let $p_{4}$ be the middle point of the edge $e$. 
	Let $p_{3}$ be the middle point of the subedge $(p_4 , q)$. 
	Let $p_{2}$ be the middle point of the edge $h$.
	Let $p_{1}$ be the middle point of the subedge $(p_{2}, q)$. 
	Let $R = \{p_1, p_2,p_3,p_4,r\}$. 
	We construct a compressed cover tree $\T(R)$ by choosing the level $l(p_i) = i$ and by setting the root $r$ to be the parent of both $p_2$ and $p_4$, $p_4$ to be the parent of $p_{3}$, and $p_{2}$ to be the parent of $p_{1}$. 
	Then $\T(R)$ satisfies all the conditions of Definition \ref{dfn:cover_tree_compressed}, see a comparison of the three cover trees in Fig.~\ref{fig:tripleexample}.
	\bs
\end{exa}


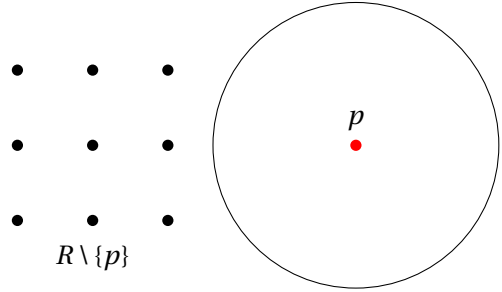
\begin{figure}[h]
	\centering
	 \begin{tikzpicture}[scale = 1.0, node distance = 1cm]


\foreach \Point in {(-2,1), (-1,1), (-2,0), (-1,0), (-2,-1) , (-1,-1) , (0,-1), (0,0), (0,1)}{
    \node[circle,fill=black,inner sep=1.5pt] at \Point {};
}


\node [circle,fill=red,inner sep=1.5pt, red, label = {$p$}] at (2.5,0) {};
\draw (2.5,0) circle[radius = 1.9];

\node at (-1,-1.5) {$R \setminus \{p\}$};

\end{tikzpicture}
	\caption{Example~\ref{exa:outlierconstruction} describes a set $R$ with a big expansion constant $c(R)$. 
		Let $R\setminus \{p\}$ be a finite subset of a unit square lattice in $\R^2$, but a point $p$ is located far away from $R\setminus \{p\}$ at a distance larger than $\diam(R \setminus \{p\})$.
		Definition \ref{dfn:expansion_constant} implies that $c(R) = |R|$. }
	\label{fig:outlierconstruction}
\end{figure}

\noindent
Even a single outlier point can make the expansion constant big. 
Consider set $R = \{1,2,...,n-1,2n\}$ for some $n \in \Z_{+}$. 
Since $|\bar{B}(2n,n)| = 1$ and $|\bar{B}(2n,n)| = |R|$, we have $c(R) = |R|$. 
Example~\ref{exa:outlierconstruction} shows that expansion constant of a set $R$ can be as big as $|R|$. 

\begin{exa}[one outlier can make the expansion constant big]
	\label{exa:outlierconstruction}
	Let $R$ be a finite metric space and $p \in R$ satisfy $d(p,R \setminus \{t\}) > \diam(R \setminus \{p\})$. 
	Since $\bar{B}(p, 2d(p,R \setminus \{t\}) = R$ , $\bar{B}(p, d(p,R \setminus \{t\}) = \{p\}$, we get $c(R) = N$, see Fig.~\ref{fig:outlierconstruction}. 
	\bs
\end{exa}

\noindent
Example~\ref{exa:minimized_normal_expansion_constant} shows that the minimized expansion can be significantly smaller than the original expansion constant. 

\begin{exa}[minimized expansion constants]
	\label{exa:minimized_normal_expansion_constant}
	Let $(\R, d)$ be the Euclidean line. 
	For an integer $n>10$, consider the finite sets $R = \{2^{i} \mid i \in [1,n]\}$ and let $Q = \{i \mid i \in [1,2^n]\}$. 
	If $0<\epsilon < 10^{-9}$, then
	$\bar{B}(2^n, 2^{n-1} - \epsilon) = \{2^n\} $ and $\bar{B}(2^n, 2(2^{n-1} - \epsilon)) = R$, so $c(R) = n$.
	For any $q \in Q$ and any $t \in \R$, we have the balls $\bar{B}(q,t) = \mathbb{Z} \cap [q - t, q + t]$ and 
	$\bar{B}(q,2t) = \mathbb{Z} \cap [q - 2t, q + 2t]$, so $c(Q) \leq 4$.  
	Lemma \ref{lem:expansion_constant_property} implies that $c_m(R) \leq c_m(Q) \leq c(Q) \leq 4$.
	\bs 
\end{exa}

\noindent
Lemma~\ref{lem:compressed_cover_tree_descendant_bound} provides an upper bound for a distance between a node and its descendants.  

\begin{lem}[a distance bound on descendants]
	\label{lem:compressed_cover_tree_descendant_bound}
	Let $R$ be a finite subset of an ambient space $X$ with a metric $d$. 
	In a compressed cover tree $\T(R)$, let $q$ be any descendant of a node $p$. Let the node-to-root path $S$ of $q$ contain a node $u$ satisfying $u \in \Child(p) \setminus \{p\}$. Then $d(p,q) \leq 2^{l(u) + 2} \leq 2^{l(p) + 1}$. 
	\bs
\end{lem}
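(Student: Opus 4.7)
The plan is to trace the ancestor structure and apply the triangle inequality once, combining the covering bound for the single edge $(u,p)$ with a standard geometric-series bound for the descendant distance $d(u,q)$.

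First I would set up notation by writing the node-to-root path of $q$ as $q = w_0, w_1, \ldots, w_m$, where $w_{j+1}$ is the parent of $w_j$. Since both $p$ and $u$ lie on this path and $u$ is a genuine child of $p$, there is some index $k \geq 1$ with $w_k = p$ and $w_{k-1} = u$. In particular the path $q = w_0, w_1, \ldots, w_{k-1} = u$ realises $q$ as a descendant of $u$. The proof then proceeds via
\[
d(p,q) \;\leq\; d(p,u) + d(u,q),
\]
so the task reduces to bounding the two summands.

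For $d(p,u)$, the covering condition (\ref{dfn:cover_tree_compressed}b) applied to the parent–child edge from $u$ to $p$ yields directly $d(p,u) \leq 2^{l(u)+1}$. For $d(u,q)$, I would iterate the covering condition along the subpath $w_0, \ldots, w_{k-1}$, obtaining
\[
d(u,q) \;\leq\; \sum_{j=0}^{k-2} d(w_j, w_{j+1}) \;\leq\; \sum_{j=0}^{k-2} 2^{\,l(w_j)+1}.
\]
Because the levels strictly increase along the path (covering condition) and are integers, $l(w_j) \leq l(u) - (k-1-j)$, so the right-hand sum is bounded by the geometric series
\[
\sum_{j=0}^{k-2} 2^{\,l(u)-(k-1-j)+1} \;=\; 2^{\,l(u)+1}\!\!\sum_{i=1}^{k-1} 2^{-i} \;\leq\; 2^{\,l(u)+1},
\]
which also covers the trivial case $q=u$ (empty sum, $d(u,q)=0$).

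Combining these two estimates gives $d(p,q) \leq 2^{l(u)+1} + 2^{l(u)+1} = 2^{l(u)+2}$, the first claimed inequality. The second inequality $2^{l(u)+2} \leq 2^{l(p)+1}$ is immediate from the covering condition, which forces $l(u) < l(p)$, hence $l(u)+2 \leq l(p)+1$ since the levels are integers. The only genuinely substantive step is the geometric-series bound on $d(u,q)$; the rest is a single triangle inequality and a level-ordering comparison. No other properties of the tree (in particular, not the separation condition) are needed.
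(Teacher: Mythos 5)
Your proof is correct and follows essentially the same route as the paper: decompose the node-to-root path from $q$ through $u$ to $p$, apply the covering condition (\ref{dfn:cover_tree_compressed}b) to each edge, and sum the resulting powers of two as a geometric series capped by the top exponent $l(u)+1$, then use $l(u)\leq l(p)-1$ for the second inequality. The only cosmetic difference is that you bound $d(p,u)$ and $d(u,q)$ separately (each by $2^{l(u)+1}$) while the paper absorbs the edge $(u,p)$ into one geometric sum $\sum_{t\leq l(u)+1}2^{t}\leq 2^{l(u)+2}$; the substance is identical.
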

\begin{proof}
	Let $(w_0, ..., w_m)$ be a subpath of the node-to-root path for $w_0 = q$ , $w_{m-1} = u$, $w_m = p$. 
	Then $d(w_{i}, w_{i+1}) \leq 2^{l(w_i) + 1}$ for any $i$. 
	The first required inequality follows from the triangle inequality below: 
	$$    d(p,q) \leq \sum^{m-1}_{j = 0}d(w_j, w_{j+1})  \leq \sum^{m-1}_{j = 0}2^{l(w_j) + 1} \leq   \sum_{t = l_{\min}}^{l(u) + 1}2^{t}\leq  2^{l(u) + 2} $$
	Finally, $l(u) \leq l(p) - 1$ implies that $d(p,q) \leq 2^{l(p)+1}$.
\end{proof}

\noindent
Lemma~\ref{lem:packing} uses the idea of \cite[Lemma~1]{curtin2015plug} to show that if $S$ is a $\delta$-sparse subset of a metric space $X$, then $S$ has at most $(c_m(S))^\mu$ points in the ball $\bar{B}(p,r)$, where $c_m(S)$ is the minimized expansion constant of $S$, while $\mu$ depends on $\delta,r$. 

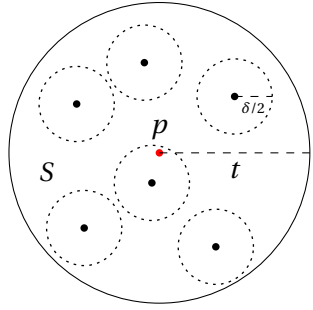
\begin{figure}
	\centering
	\input   \begin{tikzpicture}[scale = 1.0]




\node [circle,fill=red, red, inner sep=1pt, label = {$p$}] at (0,0) {};
\draw (0,0) circle[radius = 2];

\node [circle,fill=black, black, inner sep=1pt] at (-1,-1) {};
\draw[line width = .5pt, dash pattern=on 1pt off 2pt] (-1,-1) circle[ radius = 0.5];

\node [circle,fill=black, black, inner sep=1pt] at (1,0.75) {};
\draw[line width = .5pt, dash pattern=on 1pt off 2pt] (1,0.75) circle[ radius = 0.5];

 \draw[dashed] (1,0.75) -- (1.5,0.75);
 \node at (1.25,0.6) {\tiny $\delta / 2$}; 

\node [circle,fill=black, black, inner sep=1pt] at (-0.1,-0.4) {};
\draw[line width = .5pt, dash pattern=on 1pt off 2pt] (-0.1,-0.4) circle[ radius = 0.5];

\node [circle,fill=black, black, inner sep=1pt] at (0.75,-1.25) {};
\draw[line width = .5pt, dash pattern=on 1pt off 2pt] (0.75,-1.25) circle[ radius = 0.5];

\node [circle,fill=black, black, inner sep=1pt] at (-0.2,1.2) {};
\draw[line width = .5pt, dash pattern=on 1pt off 2pt] (-0.2,1.2) circle[ radius = 0.5];

\node [circle,fill=black, black, inner sep=1pt] at (-1.1,0.65) {};
\draw[line width = .5pt, dash pattern=on 1pt off 2pt] (-1.1,0.65) circle[ radius = 0.5];

 \draw[dashed] (0,0) -- node[below] {$t$} node[above] {} ++ (2,0);


\node at (-1.5,-0.25) {$S$};

\end{tikzpicture} 
	\caption{This volume argument proves Lemma~\ref{lem:packing}. By using an expansion constant, we can find an upper bound for the number of smaller balls of radius $\frac{\delta}{2}$ that can fit inside a larger $\bar{B}(p, t)$. }
	\label{fig:packingLemma}
\end{figure}

\begin{lem}[packing]
	\label{lem:packing}
	Let $S$ be a finite $\delta$-sparse set in a metric space $(X,d)$, so $d(a,b) > \delta$ for all $a,b \in S$. 
	Then, for any point $ p \in X$ and any radius $t > \delta$, we have
	$|\bar{B}(p, t)  \cap S | \leq (c_m(S))^{\mu}$, where $\mu = \lceil \log_2(\frac{4t}{\delta} + 1) \rceil $.
	\bs
\end{lem}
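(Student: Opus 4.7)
The plan is a classical doubling-dimension packing argument, adapted to the discrete minimised expansion constant $c_m(S)$. The idea is to bound the cardinality of $\bar B(p,t) \cap S$ by iteratively halving the radius around a single seed point inside $S$ until the ball becomes so small that $\delta$-sparseness forces it to contain at most one point of $S$.

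If $\bar B(p,t) \cap S$ is empty the lemma is vacuously true, so fix some $s_0 \in \bar B(p,t) \cap S$. By the triangle inequality every point of $\bar B(p,t)$ lies within distance $2t$ of $s_0$, so
\[ \bar B(p,t) \cap S \;\subseteq\; \bar B(s_0, 2t) \cap S, \]
and it suffices to bound the right-hand side. For any $\varepsilon > 0$, Definition~\ref{dfn:expansion_constant} supplies a threshold $\xi > 0$ and a superset $A \supseteq S$ such that $|\bar B(q, 2r) \cap A| \le (c_m(S) + \varepsilon)\, |\bar B(q, r) \cap A|$ for every $q \in A$ and $r > \xi$. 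Taking $\xi$ below the smallest scale that will appear below, and iterating this inequality $\mu$ times at the centre $s_0 \in S \subseteq A$ with initial radius $2t$ gives
\[ |\bar B(s_0, 2t) \cap A| \;\leq\; (c_m(S) + \varepsilon)^{\mu}\cdot |\bar B(s_0, 2t/2^{\mu}) \cap A|. \]

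The exponent $\mu = \lceil \log_2(4t/\delta + 1) \rceil$ is chosen precisely so that $2^\mu \ge (4t+\delta)/\delta$, which forces $2t/2^{\mu} < \delta/2$ with strict margin. Hence $\delta$-sparseness of $S$ makes $\bar B(s_0, 2t/2^{\mu}) \cap S = \{s_0\}$, i.e. the terminal count equals $1$ when $A$ can be identified with $S$ itself. Combining this with the inclusion of the first paragraph and letting $\varepsilon \to 0$ yields the target bound $|\bar B(p,t) \cap S| \leq c_m(S)^{\mu}$.

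The main technical obstacle is the tension in the choice of $A$ in the last step: to realise the infimum defining $c_m(S)$ one would like $A$ to be an enlarged, well-distributed superset of $S$, but to force the terminal factor $|\bar B(s_0, 2t/2^{\mu}) \cap A|$ to be exactly $1$ one would rather take $A$ equal to $S$ locally. The resolution is to exploit the freedom in the threshold $\xi$ inside Definition~\ref{dfn:expansion_constant}: placing $\xi$ just below $2t/2^{\mu}$ restricts the supremum to the relevant scales, and a limiting argument $\varepsilon \to 0$ through admissible $A$ then recovers the clean exponent $c_m(S)^{\mu}$.
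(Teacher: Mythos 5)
Your own closing remark identifies the real problem, and the proposed resolution does not repair it: this is a genuine gap. After iterating the doubling inequality you are left with the factor $|\bar{B}(s_0, 2t/2^{\mu}) \cap A|$, which counts points of the auxiliary superset $A$, not of $S$. The $\delta$-sparseness of $S$ does give $\bar{B}(s_0, 2t/2^{\mu}) \cap S = \{s_0\}$, but it says nothing about $A$: the (near-)optimal sets $A$ in Definition~\ref{dfn:expansion_constant} are typically much denser than $S$ (for instance a fine grid, as used in Section~\ref{sec:minimized_exp_constant}), so the terminal count can be arbitrarily large. Shifting the threshold $\xi$ only restricts which radii the doubling inequality may be applied to, and the limit $\epsilon \to 0$ concerns the value of the ratio, not the local geometry of $A$, so neither device makes the terminal count equal to $1$. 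If instead you force $A = S$ so that the terminal ball does contain a single point, the doubling inequality you iterate is only available with the ordinary expansion constant $c(S)$ in place of $c_m(S)$, and $c(S)$ can be of order $|S|$ while $c_m(S)$ stays bounded (the outlier example right after Definition~\ref{dfn:expansion_constant}); that proves a strictly weaker statement than the lemma.

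The paper's proof avoids ever needing the terminal ball to contain one point, by a counting (volume) argument in which the unknown $A$-count cancels: the balls $\bar{B}(u,\delta/2)$ over $u \in \bar{B}(p,t)\cap S$ are pairwise disjoint by $\delta$-sparseness and are all contained in $\bar{B}(q, 2t+\delta/2)$ for any $q \in \bar{B}(p,t)\cap S$; choosing $q$ minimizing $|\bar{B}(q,\delta/2)\cap A|$ yields $|\bar{B}(p,t)\cap S|\cdot|\bar{B}(q,\delta/2)\cap A| \leq |\bar{B}(q,2t+\delta/2)\cap A|$, and $\mu$ doubling steps at the fixed centre $q$ (with $\xi=\delta/4$, all intermediate radii exceeding $\delta/4$) bound the right-hand side by $(c_m(S)+\epsilon)^{\mu}\,|\bar{B}(q,\delta/2)\cap A|$, so dividing gives the claim. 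Your halving schedule and the arithmetic showing $2t/2^{\mu} < \delta/2$ are correct and reappear there; the missing idea is this cancellation of the $A$-count against the disjoint $\delta/2$-balls, rather than its identification with $1$.
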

\begin{proof}
	Assume that $d(p,q) > t$ for any point $q \in S$. 
	Then $\bar{B}(p, t) \cap S = \emptyset$ and the lemma holds trivially. Otherwise $\bar{B}(p, t) \cap S$ is non-empty. 
	By Definition~\ref{dfn:expansion_constant} of a minimized expansion constant, for any $\epsilon > 0$, we can always find a set $A$ such that $S \subseteq A \subseteq X$ , $\xi = \frac{\delta}{4}$ and
	\begin{ceqn}
		\begin{equation}
			\label{eqa:dfn_of_exp_constant}
			|B(q,2s) \cap A| \leq (c_m(S) + \epsilon) \cdot | B(q,s) \cap A|
		\end{equation}
	\end{ceqn}
	for any $q \in A$ and $s > \xi = \frac{\delta}{4}$. Note that for any $u \in \bar{B}(p,t) \cap S$ we have $\bar{B}(u, \frac{\delta}{2}) \subseteq  \bar{B}(u, t + \frac{\delta}{2})$. 
	Therefore, for any point $q \in \bar{B}(p,t) \cap S$, we get
	$$\bigcup_{u \in \bar{B}(p, t) \cap S}\bar{B}(u, \frac{\delta}{2}) \subseteq  \bar{B}(p,t + \frac{\delta}{2}) \subseteq \bar{B}(q, 2t +  \frac{\delta}{2})$$
	Since all the points of $S$ were separated by $\delta$, we have
	\begin{equation*}
		\label{eqa:packing_zero}
		| \bar{B}(p, t) \cap S| \cdot \min_{u \in \bar{B}(p, t) \cap S}| \bar{B}(u, \frac{\delta}{2}) \cap A|  \leq \sum_{u \in \bar{B}(p, t) \cap S} | \bar{B}(u, \frac{\delta}{2}) \cap A |\leq | \bar{B}(q, 2t + \frac{\delta}{2}) \cap A |
	\end{equation*}
	In particular, by setting $q = \mathrm{argmin}_{a \in S \cap \bar{B}(p,t)}| \bar{B}(a, \frac{\delta}{2})|$, we get:
	\begin{ceqn}
		\begin{equation}
			\label{eqa:packing_one}
			| \bar{B}(p, t) \cap S| \cdot |  \bar{B}(q, \frac{\delta}{2}) \cap A| \leq | \bar{B}(q, 2t + \frac{\delta}{2}) \cap A |
		\end{equation}
	\end{ceqn}
	Inequality~(\ref{eqa:dfn_of_exp_constant}) applied $\mu$ times 
	for the radii $s_i = \dfrac{2t + \frac{\delta}{2}}{2^{i}} $, $i = 1,...,\mu$, implies that:
	\begin{ceqn}
		\begin{equation} 
			\label{eqa:packing_two}
			|\bar{B}(q,2t + \frac{\delta}{2}) \cap A| \leq (c_m(S) + \epsilon)^{\mu}|\bar{B}(q, \dfrac{2t + \frac{\delta}{2}}{2^{ \mu}}) \cap A |  \leq  (c_m(S) + \epsilon)^{ \mu}|\bar{B}(q, \frac{\delta}{2}) \cap A|. 
		\end{equation}
	\end{ceqn}
	By combining inequalities (\ref{eqa:packing_one}) and (\ref{eqa:packing_two}), we get
	$$| \bar{B}(p,t) \cap S  |\leq \dfrac{|\bar{B}(q, 2t + \frac{\delta}{2})  \cap A |}{|\bar{B}(q, \frac{\delta}{2})  \cap A|} \leq (c_m(S)+\epsilon)^{\mu}.$$
	The required inequality is obtained by letting $\epsilon \rightarrow 0$.\end{proof}

\noindent
\cite[Section~1.1]{krauthgamer2004navigating} defined dim($X$) of a space $(X,d)$ as the minimum number $m$ such that every set $U \subseteq X$ can be covered by $2^{m}$ sets whose diameter is a half of the diameter of $U$. 
If $U$ is finite, an easy application of Lemma \ref{lem:packing} for $\delta = \frac{r}{2}$ shows that
$\text{dim}(X) \leq \sup_{A \subseteq X}(c_m(A))^4 \leq \sup_{A \subseteq X}\inf_{A \subseteq B \subseteq X}(c(B))^4,$
where $A$ and $B$ are finite subsets of $X$. 
\medskip

\noindent
Let $T(R)$ be an implicit cover tree of \cite{beygelzimer2006cover} on a finite set $R$. 
\cite[Lemma~4.1]{beygelzimer2006cover} showed that the number of children of any node $p \in T(R)$ has the upper bound $(c(R))^4$. 
Lemma~\ref{lem:compressed_cover_tree_width_bound} generalizes \cite[Lemma~4.1]{beygelzimer2006cover} for a compressed cover tree.

\begin{lem}[width bound]
	\label{lem:compressed_cover_tree_width_bound}
	Let $R$ be a finite subset of a metric space $(X,d)$. 
	For any compressed cover tree $\T(R)$, any node $p$ we have  
	$$\{q \in \Child(p) \mid l(q) = i\} \cup \{p\} \leq (c_m(R))^4, $$ where $c_m(R)$ is the minimized expansion constant of the set $R$.
	\bs
\end{lem}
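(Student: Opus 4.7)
The plan is to bound the set $S := \{q \in \Child(p) \mid l(q) = i\} \cup \{p\}$ by applying the packing Lemma~\ref{lem:packing} to $S$ itself, inside a ball centred at $p$. The two ingredients I need are a \emph{separation} (every pair of points in $S$ is well apart) and a \emph{localisation} (every point of $S$ lies in a not-too-large ball around $p$). Both come directly from the defining conditions of a compressed cover tree in Definition~\ref{dfn:cover_tree_compressed}.

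First I would observe that $S$ sits entirely inside the cover set $C_i$: each $q \in \Child(p)$ with $l(q) = i$ belongs to $C_i$ by definition of $C_i$, and $p \in C_i$ as well because the covering condition forces $l(p) > l(q) = i$. Hence the separation condition $(\ref{dfn:cover_tree_compressed}c)$ yields $d(a,b) > 2^{i}$ for every pair of distinct points $a,b \in S$, i.e.\ $S$ is $2^{i}$-sparse. Next, by the covering condition $(\ref{dfn:cover_tree_compressed}b)$, every child $q$ of $p$ at level $i$ satisfies $d(q,p) \leq 2^{l(q)+1} = 2^{i+1}$, and trivially $d(p,p)=0 \leq 2^{i+1}$, so $S \subseteq \bar{B}(p, 2^{i+1})$.

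Now I would invoke Lemma~\ref{lem:packing} on the $2^{i}$-sparse set $S$ with the parameters $\delta = 2^{i}$ and $t = 2^{i+1}$. The exponent becomes
\[
\mu \;=\; \lceil \log_2\!\tfrac{4t}{\delta} + 1 \rceil \;=\; \lceil \log_2 9 \rceil \;=\; 4,
\]
giving $|S| = |\bar{B}(p, 2^{i+1}) \cap S| \leq (c_m(S))^{4}$. Finally, since $S \subseteq R$, the monotonicity of the minimised expansion constant recorded in Lemma~\ref{lem:expansion_constant_property} gives $c_m(S) \leq c_m(R)$, which yields the claimed bound $|S| \leq (c_m(R))^{4}$.

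The only mildly delicate point is checking that the constants line up so that the packing exponent is exactly $4$ (any looser pairing of $\delta,t$ would still prove a polynomial bound, but with a worse exponent). Since $2^{i+1}/2^{i} = 2$, the ratio $4t/\delta + 1 = 9$ is already the worst case across all levels $i$, so the exponent $4$ is uniform and independent of $i$. No further case analysis is required.
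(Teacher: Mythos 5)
Your proof is correct and follows essentially the same route as the paper: covering condition to place the children in $\bar{B}(p,2^{i+1})$, separation condition for $2^{i}$-sparsity, the packing Lemma~\ref{lem:packing} with $t=2^{i+1}$, $\delta=2^{i}$ giving exponent $4$, and Lemma~\ref{lem:expansion_constant_property} to pass to $c_m(R)$. The only (immaterial) difference is that you apply the packing lemma to the child set $S$ itself, whereas the paper applies it to the cover set $C_i$ and intersects with the ball.
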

\begin{proof}
	By the covering condition of $\T(R)$, any child $q$ of $p$ located on the level $i$ has $d(q,p) \leq 2^{i+1}$. 
	Then the number of children of the node $p$ at level $i$ at most $|\bar{B}(p,2^{i+1})|$. 
	The separation condition in Definition~\ref{dfn:cover_tree_compressed} implies that the set $C_i$ is a $2^{i}$-sparse subset of $X$. 
	We apply Lemma \ref{lem:packing} for $t = 2^{i+1}$ and $\delta = 2^{i}$. 
	Since $4 \cdot \frac{t}{\delta} + 1 \leq 4 \cdot 2 + 1 \leq 2^4$, we get $|\bar{B}(q,2^{i+1}) \cap C_i|  \leq (c_m(C_{i}))^4$. Lemma \ref{lem:expansion_constant_property} implies that $(c_m(C_{i}))^4  \leq (c_m(R))^4 $, so the upper bound is proved.
\end{proof}

%

\begin{lem}[growth bound]
	\label{lem:growth_bound}
	Let $(A,d)$ be a finite metric space, let $q \in A$ be an arbitrary point and let $r \in \R$ be a real number.
	Let $c(A)$ be the expansion constant from Definition \ref{dfn:expansion_constant}.
	If there exists a point $p \in A$ such that $2r < d(p,q) \leq 3r$, then $|\bar{B}(q, 4r)|  \geq  (1+\frac{1}{c(A)^2}) \cdot |\bar{B}(q, r)|$. 
\end{lem}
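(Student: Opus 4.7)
The plan is to exhibit a disjoint subset of $\bar B(q,4r)$ sitting inside $\bar B(q,4r)\setminus\bar B(q,r)$ whose size is at least $|\bar B(q,r)|/c(A)^2$, and then add. Concretely, I would use the ball $\bar B(p,r)$ around the witness point $p$ as this extra mass, since $p$ lies at distance strictly greater than $2r$ from $q$ but at most $3r$ from $q$.

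The first step is to verify, by triangle inequality, the two containments
\[
\bar B(p,r)\subseteq \bar B(q,4r)\setminus \bar B(q,r),
\qquad
\bar B(q,r)\subseteq \bar B(p,4r).
\]
For the first, any $x\in\bar B(p,r)$ satisfies $d(x,q)\le d(x,p)+d(p,q)\le r+3r=4r$, and also $d(x,q)\ge d(p,q)-d(p,x)>2r-r=r$, so $x\notin\bar B(q,r)$. For the second, any $x\in\bar B(q,r)$ satisfies $d(x,p)\le d(x,q)+d(q,p)\le r+3r=4r$.

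The second step is to apply Definition~\ref{dfn:expansion_constant} twice, doubling the radius around $p$, to get
\[
|\bar B(p,4r)|\;\le\; c(A)\cdot|\bar B(p,2r)|\;\le\; c(A)^2\cdot|\bar B(p,r)|.
\]
Combining with $|\bar B(q,r)|\le|\bar B(p,4r)|$ from step one gives $|\bar B(p,r)|\ge |\bar B(q,r)|/c(A)^2$.

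Finally, since $\bar B(p,r)$ and $\bar B(q,r)$ are disjoint subsets of $\bar B(q,4r)$, we add to obtain
\[
|\bar B(q,4r)|\;\ge\;|\bar B(q,r)|+|\bar B(p,r)|\;\ge\;\Bigl(1+\tfrac{1}{c(A)^2}\Bigr)\,|\bar B(q,r)|,
\]
as required. The only delicate point is that the first containment relies on the strict inequality $d(p,q)>2r$ (otherwise $\bar B(p,r)$ could touch $\bar B(q,r)$), so I would be careful to separate the strict from the non-strict case in the triangle-inequality step; everything else is routine.
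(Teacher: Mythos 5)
Your proposal is correct and follows essentially the same route as the paper's proof: bound $|\bar B(q,r)|\le|\bar B(p,4r)|\le c(A)^2\,|\bar B(p,r)|$ by doubling twice around the witness $p$, then use that $\bar B(p,r)$ and $\bar B(q,r)$ are disjoint subsets of $\bar B(q,4r)$ and add. In fact you spell out the containment and disjointness checks (including the role of the strict inequality $d(p,q)>2r$) more carefully than the paper does.
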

\begin{proof}
	Since $\bar{B}(q,r) \subset \bar{B}(p,3r + r)$, we have
	$|\bar{B}(q,r)| \leq |\bar{B}(q,4r)| \leq c(A)^2 \cdot |\bar{B}(p,r)|$.
	And since $\bar{B}(p,r)$ and $\bar{B}(q,r)$ are disjoint and are subsets of $\bar{B}(q,4r)$, we have
	$$|\bar{B}(q, 4r)| \geq |\bar{B}(q,r)| + |\bar{B}(p,r)| \geq |\bar{B}(q,r)| + \frac{|\bar{B}(q,r)|}{c(A)^2} \geq (1 + \frac{1}{c(A)^2}) \cdot |\bar{B}(q,r)|,$$
	which proves the claim.
\end{proof}
\begin{lem}[extended growth bound]
	\label{lem:growth_bound_extension}
	Let $(A,d)$ be a finite metric space, let $q \in A$ be an arbitrary point. Let $p_1, ..., p_n$ be a sequence of distinct points in $R$, in such a way that  for all $i \in \{2,...,n\}$ we have $4 \cdot d(p_i, q) \leq d(p_{i+1},q)$.
	Then $$|\bar{B}(q,\frac{4}{3} \cdot d(q,p_n))| \geq (1+\frac{1}{c(A)^2})^{n} \cdot |\bar{B}(q,\frac{1}{3} \cdot d(q,p_1))|.$$
\end{lem}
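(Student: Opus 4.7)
The plan is to iterate Lemma \ref{lem:growth_bound} along the chain $p_1, p_2, \dots, p_n$, using each $p_i$ in turn as the auxiliary point that witnesses a ball doubling. For each $i \in \{1, \dots, n\}$, set $r_i = \tfrac{1}{3} d(q, p_i)$. Assuming every $p_i \ne q$ (a degenerate $p_1 = q$ case makes the right-hand side vanish and the inequality trivial), we have $r_i > 0$ and $d(q, p_i) = 3 r_i$, which satisfies $2 r_i < d(q, p_i) \leq 3 r_i$. Lemma \ref{lem:growth_bound} applied with $p = p_i$ and $r = r_i$ then yields
\begin{equation*}
|\bar{B}(q, 4 r_i)| \;\geq\; \Bigl(1 + \tfrac{1}{c(A)^2}\Bigr) \, |\bar{B}(q, r_i)|.
\end{equation*}

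The second step is to propagate from the expanded radius $4 r_i$ back to the base radius $r_{i+1}$ of the next iterate. The geometric-growth hypothesis $4 \, d(q, p_i) \leq d(q, p_{i+1})$ is equivalent to $4 r_i \leq r_{i+1}$, so $\bar{B}(q, 4 r_i) \subseteq \bar{B}(q, r_{i+1})$ and monotonicity of cardinality gives $|\bar{B}(q, r_{i+1})| \geq (1 + \tfrac{1}{c(A)^2}) \, |\bar{B}(q, r_i)|$. Chaining these inequalities for $i = 1, \dots, n-1$ and then performing one final growth step at $i = n$ telescopes to
\begin{equation*}
|\bar{B}(q, 4 r_n)| \;\geq\; \Bigl(1 + \tfrac{1}{c(A)^2}\Bigr)^n \, |\bar{B}(q, r_1)|.
\end{equation*}
Substituting $4 r_n = \tfrac{4}{3} d(q, p_n)$ and $r_1 = \tfrac{1}{3} d(q, p_1)$ produces the claimed bound.

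There is no serious obstacle here: the proof is a telescoping induction on $n$ in which the hypothesis $d(q, p_{i+1}) \geq 4 d(q, p_i)$ is calibrated precisely so that the ball of radius $4 r_i$ produced by the single-step growth bound fits inside the ball of radius $r_{i+1}$ consumed by the next step. The only care required is verifying this matching of radii and handling the trivial $p_i = q$ corner case; everything else follows from $n$ applications of Lemma \ref{lem:growth_bound} and the monotonicity of $|\bar{B}(q, \cdot)|$ in its radius argument.
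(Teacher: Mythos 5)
Your proof is correct and is essentially the paper's argument: the paper proves the same statement by induction on $n$, where each inductive step applies Lemma~\ref{lem:growth_bound} with $r=\tfrac{1}{3}d(q,p_{m+1})$ and then uses the inclusion $\bar{B}(q,\tfrac{4}{3}d(q,p_m))\subseteq\bar{B}(q,\tfrac{1}{3}d(q,p_{m+1}))$ coming from $4\,d(q,p_m)\leq d(q,p_{m+1})$, which is exactly your telescoping chain written recursively. One small inaccuracy in your aside: if $p_1=q$ the right-hand side does not vanish, since $|\bar{B}(q,0)|=1$; but this degenerate case is excluded implicitly in both your argument and the paper's (applying Lemma~\ref{lem:growth_bound} requires $d(q,p_i)>0$), and it never occurs where the lemma is used.
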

\begin{proof}
	Let us prove this by induction. In basecase $n = 1 $ define $r = \frac{d(q,p_{m})}{3}$. 
	Now by Lemma \ref{lem:growth_bound} we have 
	$$|\bar{B}(q, \frac{4}{3}d(q,p_1)) | = |\bar{B}(q, 4r)|  \geq  (1+\frac{1}{c(A)^2}) \cdot |\bar{B}(q, r)| = |\bar{B}(q, \frac{1}{3}d(q,p_1)) |.$$

	\medskip
	
	Assume now that the claim holds for some $i = m$ and let $p_1, ..., p_{m+1}$ be a sequence satisfying the condition of Lemma \ref{lem:growth_bound_extension}. 
	By induction assumption we have $ |\bar{B}(q, \frac{4}{3}d(q,p_m)) | \geq (1+\frac{1}{c(A)^2})^m \cdot |\bar{B}(q, \frac{1}{3}d(q,p_1)) |$. Let us pick $r = \frac{d(q,p_{m+1})}{3}$. Then we have:
	\begin{ceqn}
		\begin{align*}
			|\bar{B}(q,\frac{4}{3} \cdot d(q,p_{m+1}))| & \geq (1+\frac{1}{c(A)^2}) \cdot |\bar{B}(q,\frac{1}{3} \cdot d(q,p_{m+1}))| \\
			&\geq (1+\frac{1}{c(A)^2}) \cdot |\bar{B}(q,\frac{4}{3} \cdot d(q,p_{m}))| \\
			&\geq (1+\frac{1}{c(A)^2}) \cdot (1+\frac{1}{c(A)^2})^m \cdot |\bar{B}(q,\frac{1}{3} \cdot d(q,p_{1}))| \\
			&\geq (1+\frac{1}{c(A)^2})^{m+1} \cdot |\bar{B}(q,\frac{1}{3} \cdot d(q,p_{1}))|
		\end{align*}
	\end{ceqn}
	which proves the claim. 
\end{proof}

\begin{lem}
	\label{lem:hard_function_bound}
	For every $x \in \R$ satisfying $x \geq 2$, the following inequality holds:
	$$x^2 \geq \frac{1}{\log_2(1 + \frac{1}{x^2})} .$$
\end{lem}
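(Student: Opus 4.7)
The plan is to rewrite the inequality in the variable $y = 1/x^2$ so that the bound becomes a one-variable calculus estimate near $0$. Since $x \geq 2$ gives $y \in (0, 1/4]$, the claim $x^2 \geq 1/\log_2(1 + 1/x^2)$ is equivalent, after multiplying through by the positive quantity $\log_2(1+y)$ and then dividing by $1/y$, to the assertion
\[
\log_2(1+y) \;\geq\; y \qquad \text{for all } y \in (0, 1/4].
\]
This reformulation is attractive because both sides vanish at $y=0$, so we only need to compare their growth rates just above $0$.

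Next I would define $f(y) = \log_2(1+y) - y$ and show $f \geq 0$ on $(0,1/4]$ by a derivative argument. We have $f(0)=0$ and
\[
f'(y) \;=\; \frac{1}{(1+y)\ln 2} - 1,
\]
which is nonnegative precisely when $(1+y)\ln 2 \leq 1$, i.e.\ when $y \leq \tfrac{1}{\ln 2} - 1$. A direct numerical check shows $\tfrac{1}{\ln 2} - 1 > 1/4$, so $f'(y) \geq 0$ throughout $[0, 1/4]$, and thus $f$ is nondecreasing on this interval. Combined with $f(0) = 0$, this yields $f(y) \geq 0$ on $(0,1/4]$, which is exactly the reduced inequality.

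Finally I would unwind the substitution: for $x \geq 2$ we have $y = 1/x^2 \leq 1/4$, so the inequality $\log_2(1 + 1/x^2) \geq 1/x^2$ holds; dividing both sides by the positive quantity $(1/x^2)\log_2(1+1/x^2)$ recovers $x^2 \geq 1/\log_2(1+1/x^2)$ as required. The only mildly delicate step is verifying that $1/4$ lies strictly inside the interval where $f$ is still increasing (equivalently, that $\ln 2 \leq 4/5$), but this is an elementary numerical estimate rather than a genuine obstacle; no special structure of the surrounding paper is needed.
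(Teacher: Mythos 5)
Your proof is correct. It takes a mildly different route from the paper's: both arguments ultimately establish the same pivot inequality $\log_2\bigl(1+\tfrac{1}{x^2}\bigr) \geq \tfrac{1}{x^2}$, but the paper gets there via the integral comparison $\tfrac{u}{u+1} \leq \ln(1+u)$ (with $u = 1/x^2$) followed by the algebraic step $\tfrac{1}{\ln 2}\cdot\tfrac{1}{x^2+1} \geq \tfrac{1}{x^2}$, which is where the hypothesis $x \geq 2$ enters through $x^2 \geq \tfrac{\ln 2}{1-\ln 2}$; you instead substitute $y = 1/x^2 \in (0,1/4]$ and prove $\log_2(1+y) \geq y$ there by monotonicity of $f(y) = \log_2(1+y) - y$, checking $f' \geq 0$ up to $y = \tfrac{1}{\ln 2} - 1$. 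Both proofs hinge on the same elementary numerical fact ($\ln 2 < 4/5$), so neither is deeper than the other; your version has the small advantages that it avoids the auxiliary bound on $\ln(1+u)$ and that it visibly extends to all $x \geq 1$ (indeed $\log_2(1+y) \geq y$ on all of $[0,1]$ by concavity, with equality at the endpoints), whereas the paper's intermediate inequality $\tfrac{1}{\ln 2 (1+x^2)} \geq \tfrac{1}{x^2}$ genuinely needs $x^2 \gtrsim 2.26$. The paper's version, in turn, needs no differentiation, only the integral estimate. Your one loose phrase, "multiplying through by $\log_2(1+y)$ and then dividing by $1/y$", is slightly garbled as a description of the equivalence, but the reciprocal argument you give at the end is the correct and sufficient justification.
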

\begin{proof}
Let $\ln$ be natural logarithm. Note first that for any $u > 0$ we have:
$$\frac{u}{u+1} = \int^u_0 \frac{dt}{u+1} \leq \int^u_0 \frac{dt}{t+1} = \ln(u+1).$$
By setting $u = \frac{1}{x^2} > 0$ we get:
$$\log_2(1 + \frac{1}{x^2}) = \frac{\ln(\frac{1}{x^2})}{\ln(2)} \geq \frac{1}{\ln(2)} \cdot \frac{\frac{1}{x^2}}{\frac{1}{x^2}+1} 
= \frac{1}{\ln(2)} \cdot \frac{1}{x^2+1}. $$
Let us now show that for $x \geq 2$ we have: $\frac{1}{\ln(2)} \cdot \frac{1}{x^2+1} \geq \frac{1}{x^2} $.
Note first that $4 \geq \frac{\ln(2)}{1 - \ln(2)}$. Since $x \geq 2$ we have $x^2 \geq \frac{\ln(2)}{1 - \ln(2)}$.
Therefore $x^2 - \ln(2) \cdot x^2 \geq \ln(2)$ and $x^2 \geq \ln(2) \cdot (1 + x^2)$. 
It follows that $\frac{1}{\ln(2)}\frac{1}{1+x^2} \geq \frac{1}{x^2}$, which proves the claim.




\end{proof}

\begin{dfn}[the height of a compressed cover tree]
	\label{dfn:depth}
	For a compressed cover tree $\T(R)$ on a finite set $R$,
	the \emph{height set} is $H(\T(R))=\{ i \mid C_{i-1}\neq C_{i}\}\cup \{l_{\max},l_{\min}\}$.
	The size $|H(\T(R))|$ of this set is called the \emph{height} of $\T(R)$.
	\bs
\end{dfn}

\noindent
The new concept of the height $|H(\T)|$ will justify a near-linear parameterized worst-case complexity in Theorem~\ref{thm:knn_KR_time}.
By condition~(\ref{dfn:cover_tree_compressed}b), the height $|H(\T(R))|$ counts the number of levels $i$ whose cover sets $C_i$ include new points that were absent on higher levels.
Then $|H(\T)|\leq|R|$ as any point can be alone at its own level.

\begin{dfn}
	\label{dfn:radius+d_min}
	For any finite metric set $R$ with a metric $d$, the \emph{diameter} is $$\rad(R) = \max\limits_{p \in R}\max\limits_{q \in R}d(p,q).$$
	The \emph{aspect ratio} \cite{krauthgamer2004navigating} is $\Delta(R) = \dfrac{\rad(R)}{d_{\min}(R)}$.
	\bs
\end{dfn}

\begin{lem}
	\label{lem:depth_bound}
	Any finite set $R$ has the bound $|H(\T(R))|\leq 1+\log_2(\Delta(R))$.
	\bs
\end{lem}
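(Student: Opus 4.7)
The plan is to localize every element of $H(\T(R))$ inside a short integer interval whose length is controlled by $\log_2\Delta(R)$. First observe that $C_{i-1}\neq C_i$ holds if and only if some point $p\in R$ has $l(p)=i-1$, so $\{i\mid C_{i-1}\neq C_i\}=\{l(p)+1\mid p\in R\}$, and consequently $H(\T(R))\subseteq[l_{\min},l_{\max}+1]$. This already gives a crude bound $|H(\T(R))|\leq l_{\max}-l_{\min}+2$, and the task reduces to estimating the span $l_{\max}-l_{\min}$ in terms of the aspect ratio of $R$.

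Next I would bound $l_{\max}$ from above using the separation condition (\ref{dfn:cover_tree_compressed}c). By the definition $l_{\max}=1+\max_{p\neq r}l(p)$, some non-root $p\in R$ satisfies $l(p)=l_{\max}-1$; both $p$ and the root $r$ then lie in $C_{l_{\max}-1}$, so separation forces $d(p,r)>2^{\,l_{\max}-1}$. Since $d(p,r)\leq\rad(R)$, this gives $l_{\max}\leq\log_2\rad(R)$ after using integrality of levels. Symmetrically, I would bound $l_{\min}$ from below using the covering condition (\ref{dfn:cover_tree_compressed}b): if $p\in R\setminus\{r\}$ attains $l(p)=l_{\min}$, then $d(p,\mathrm{parent}(p))\leq 2^{\,l_{\min}+1}$, while $d(p,\mathrm{parent}(p))\geq d_{\min}(R)$ because the parent is a distinct point of $R$; hence $l_{\min}\geq\log_2 d_{\min}(R)-1$. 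Together these two bounds yield $l_{\max}-l_{\min}\leq 1+\log_2\Delta(R)$.

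To sharpen the crude bound into the claimed inequality $|H(\T(R))|\leq 1+\log_2\Delta(R)$, I would carefully account for overlaps in the union $\{i\mid C_{i-1}\neq C_i\}\cup\{l_{\max},l_{\min}\}$. Whenever $|R|\geq 2$ the value $l_{\max}$ already belongs to $\{i\mid C_{i-1}\neq C_i\}$ because, by construction, some point sits at level $l_{\max}-1$; thus only $l_{\min}$ can contribute a genuinely new element to the union, trimming the $+2$ loss in the crude bound down to the desired $+1$. The singleton case $|R|=1$ must be treated separately as a trivial edge case in which $l_{\max},l_{\min}$ are not both well-defined by the formula. The main obstacle I anticipate is precisely this combinatorial bookkeeping, together with the careful handling of strict versus non-strict integer inequalities in the bounds on $l_{\max}$ and $l_{\min}$, since a careless union bound yields an additive constant off by one or two from the stated result.
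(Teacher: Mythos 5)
Your strategy coincides with the paper's: count the integer levels between $l_{\min}$ and $l_{\max}$, bound $l_{\max}$ from above via the separation condition and $\rad(R)$, and bound $l_{\min}$ from below via the covering condition and $d_{\min}(R)$. The problem is that, as written, your chain of inequalities does not reach the stated constant. From the covering condition you derive (honestly) only $l_{\min}\geq\log_2 d_{\min}(R)-1$, hence $l_{\max}-l_{\min}\leq 1+\log_2\Delta(R)$; combining this with your sharpened count $|H(\T(R))|\leq l_{\max}-l_{\min}+1$ gives $|H(\T(R))|\leq 2+\log_2\Delta(R)$, one more than the lemma claims. The paper's proof closes the argument exactly at the point where yours leaves slack: it asserts the stronger bound $l_{\min}\geq\log_2 d_{\min}(R)$ (together with $|H(\T(R))|\leq l_{\max}-l_{\min}+1$), so that $l_{\max}-l_{\min}\leq\log_2\Delta(R)$ and the $+1$ comes only from the level count. (Your treatment of $l_{\max}$ has the same kind of looseness as the paper's, since separation only yields $l_{\max}<1+\log_2\rad(R)$; that is within the paper's own level of precision, so it is not the decisive issue.)

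The missing unit also cannot be recovered on the bookkeeping side, which is where you try to place it. The union count cannot be pushed below $l_{\max}-l_{\min}+1$: every level $l_{\min},\dots,l_{\max}-1$ may be occupied by some point, so $\{i\mid C_{i-1}\neq C_i\}$ can already fill all of $[l_{\min}+1,l_{\max}]$, and $l_{\min}$ is adjoined to $H(\T(R))$ by Definition~\ref{dfn:depth}, so $|H(\T(R))|=l_{\max}-l_{\min}+1$ is attainable; trimming "$+2$ to $+1$" is therefore the best the overlap argument can do, and with your bound $l_{\max}-l_{\min}\leq 1+\log_2\Delta(R)$ it still falls short. So the off-by-one you flag in your final sentence is a genuine unresolved gap: to finish along these lines you would have to justify the sharper lower bound $l_{\min}\geq\log_2 d_{\min}(R)$ that the paper's proof invokes (or otherwise absorb the extra unit), rather than rely on overlaps in the union defining the height set.
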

\begin{proof}[\textbf{Proof of Lemma}~\ref{lem:depth_bound}]
	We have $|H(\T(R))|\leq l_{\max} - l_{\min}+1$ by Definition~\ref{dfn:depth}. 
	We estimate $l_{\max} - l_{\min}$ as follows.
	\medskip
	
	\noindent
	Let $p \in R$ be a point such that $\rad(R) = \max_{q \in R}d(p,q)$. 
	Then $R$ is covered by the closed ball $\bar B(p; \rad(R))$.
	Hence the cover set $C_i$ at the level $i=\log_2(\rad(R))$ consists of a single point $p$. 
	The separation condition in Definition~\ref{dfn:cover_tree_compressed} implies that 
	$l_{\max}\leq \log_2(d_{\max}(R))$.
	Since any distinct points $p,q \in R$ have $d(p,q)\geq d_{\min}(R)$, the covering condition implies that no new points can enter the cover set $C_i$ at the level $i=[\log_2(d_{\min}(R))]$, so $l_{\min}\geq\log_2(d_{\min}(R))$.
	Then
	$|H(\T(R))| \leq 1+l_{\max} - l_{\min} \leq 
	1+\log_2(\frac{\rad(R)}{d_{\min}(R)})$.
\end{proof}

\noindent
If the aspect ratio $\Delta(R) = O(\text{Poly}(|R|))$ polynomially depends on the size $|R|$, then $|H(\T(R))| \leq O(\log(|R|))$.
Lemma \ref{lem:growth_bound} corresponds \cite[Lemma~4.2]{beygelzimer2006cover} with slightly modified notation.

\section{Minimized expansion constant in Euclidean space $\R^n$}
\label{sec:minimized_exp_constant}

The main result of this section is Theorem \ref{thm:normed_space_exp_constant},
in which we will show that for any finite subset $R$ of normed vector space $(\R^n, \Vert \cdot \Vert )$, the minimized expansion constant from 
Definition \ref{dfn:expansion_constant} satisfies 
$$c_m(R)  = \inf\limits_{0 < \xi}\inf\limits_{R\subseteq A\subseteq \R^{n}}\sup\limits_{p \in A,t > \xi}\dfrac{|\bar{B}(p,2t) \cap A|}{|\bar{B}(p,t) \cap A|} \leq 2^{n}.$$ 
The proof of Theorem \ref{thm:normed_space_exp_constant} is based on the volume argument.
We briefly explain the idea before giving the proof later.
For this purpose, we recall the definition of the Lebesgue measure in Definition \ref{dfn:lebesgue_measure}.

\medskip
\noindent
In Definition \ref{dfn:voronoi_region} we define concepts of grid $G(\delta) = \delta \cdot \Z^{n}$ and cubic regions 
$\bar{V}(p,\delta) = p + [-\frac{\delta}{2}, \frac{\delta}{2}]^n$. For every $\delta > 0$ we define grid extension $U(\delta)$ of $R$ 
as set $U(\delta) = (G(\delta) \setminus f(R)) \cup R$, where $f:R \rightarrow G(\delta)$ is used to replace points of $R$ with their nearest neighbors in grid $G(\delta)$. 

\medskip
\noindent
Note that $\xi$ in the definition of $c_m(R)$ acts as a low bound for radius $t > \xi$.
Let $\gamma > 0$ be a constant, that depends on dimension $n$ and norm $\Vert \cdot \Vert$.
In Lemma \ref{lem:u_bounds} it is shown that if $\delta$ satisfies
$0 < \delta <  \frac{\xi}{\gamma}$, then for any $p \in U(\delta)$ and $t > \xi$ we can bound $|\bar{B}(p,t) \cap U(\delta)|$ as follows:
$$\frac{\mu( \bar{B}(p,t - \delta \cdot \gamma))}{\delta^n} \leq |\bar{B}(p,t) \cap U(\delta)| \leq \frac{\mu( \bar{B}(p,t + \delta \cdot \gamma))}{\delta^n}. $$
 Therefore 
$$\frac{|\bar{B}(p,2t) \cap U(\delta)|}{|\bar{B}(p,t) \cap U(\delta)|} \leq \frac{\mu( \bar{B}(p,2t + \delta \cdot \gamma))}{\mu( \bar{B}(p,t - \delta \cdot \gamma))}.$$
Now since this inequality is satisfied for any $\delta > 0$, we can choose arbitrary dense grid extension $U(\delta)$.
It will be shown that when $\delta \rightarrow 0$ , then $$\frac{\mu( \bar{B}(p,2t + \delta \cdot \gamma))}{\mu( \bar{B}(p,t - \delta \cdot \gamma))} \rightarrow 2^{n},$$
we conclude that $c_m(R) \leq 2^{n} $.

\begin{dfn}[Normed vector space on $\R^n$, {\cite{Rudin1990-tp}}]
	\label{dfn:normed_rn_space}
	Let $\R^{n}$ be an $n$-dimensional space. A norm is a function $\Vert \cdot \Vert:\R^{n} \rightarrow \R$ satisfying the following properties:
	\begin{enumerate}
		\item Non-negativity, $\Vert x \Vert \geq 0$.
		\item It is positive on nonzero vectors, that is $\Vert x \Vert = 0$ implies $x = 0$.
		\item For every vector $x \in \R^{n}$, and every scalar $a \in \R$: $\Vert a \cdot x \Vert = |a| \cdot \Vert x \Vert$.
		\item The triangle inequality holds for every $x \in \R^{n}$ and $y \in \R^{n}$, $\Vert x + y \Vert \leq \Vert x \Vert + \Vert y \Vert$.
	\end{enumerate}
	A norm induces a metric by the formula $d(x,y) = \Vert x - y \Vert$. For every $i \in \{1,...,n\}$ let $e_i$ be a unit vector of $\R^n$ i.e. $e_i(i) = 1$ and $e_i(j) = 0$ for all $j \in \{1,...,n\} \setminus \{i\}$. Define $\rho = \max_{i \in \{1,...,n\}}\Vert e_i \Vert$. \bs 
\end{dfn}

\begin{dfn}[Lebesgue outer measure, {\cite[Section~2.A]{Jones2000-db}}]
	\label{dfn:lebesgue_measure}
	Let $\R^{n}$ be an $n$-dimensional space. Define $n$-dimensional interval as
	$$I = \{x \in \R^{n} \mid a_i \leq x_i \leq b_i, i = 1,...,n\} = [a_1,b_1] \times ... \times [a_n,b_n],$$
	with sides parallel to the coordinate axes. Define Lebesgue outer measure $\mu^{*}:\{A \mid A \subseteq \R^{n}\} \rightarrow [0,\infty) \cup 
	\{\infty\}$ 
	of interval $I$ as 
	$\mu^{*}(I) = (b_1-a_1) \cdot ... \cdot (b_n-a_n)$. The Lebesgue $\mu$ measure of a set $A \subseteq \R^n$ is defined as:
	$$\mu^{*}(A) = \inf_{A} \{\sum^{\infty}_{i = 0}\mu^{*}(I_i) \mid A \subseteq \cup^{\infty}_{i = 0}I_i\}, $$
	where the infinium is taken over all covering of $A$ by countably many intervals $I_i, i = 1,2...$.
	If set $E \subseteq \R^{n}$ satisfies $\mu^{*}(A) = \mu^{*}(A \cap E) + \mu^{*}(A \setminus E)$
	for all $A \subseteq \R^{n}$, then $E$ is lebesgue-measurable and we set $\mu(E) = \mu^{*}(E)$. 
	\bs
\end{dfn}
\noindent
It should be noted that all open sets and closed sets , as well as compact sets are Lebesgue-measurable. 

\begin{lem}[Basic properties of Lebesgue measure, {\cite[Section~2.A]{Jones2000-db}}]
	\label{lem:basic_properties_lebesgue_measure}
	A Lebesgue outer measure $\mu^{*}$ of Definition \ref{dfn:lebesgue_measure} satisfies the following conditions:
	\begin{enumerate}
		\item $\mu^{*}(\emptyset) = 0,$
		\item $\mu^{*}(A) \leq \mu^{*}(B)$ whenever $A \subseteq B \subseteq \R^{n}$ and
		\item $\mu^{*}( \cup^{\infty}_{i = 1} \mu^{*}(A_i)) \leq \sum^{\infty}_{i=1}\mu^{*}(A_i) $.
	\end{enumerate}
\end{lem}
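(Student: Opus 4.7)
The plan is to verify each of the three properties directly from the definition of the Lebesgue outer measure as an infimum over countable covers by $n$-dimensional intervals, using only elementary manipulations of infima and geometric series.

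For property~(1), I would exhibit, for every $\epsilon>0$, a countable family of intervals whose union contains $\emptyset$ and whose total measure is at most $\epsilon$. For instance, the intervals $I_k = [0,\epsilon/2^k]\times[0,1]^{n-1}$ for $k\geq 1$ are a valid covering, since $\emptyset$ is a subset of any set, and $\sum_{k=1}^{\infty}\mu^{*}(I_k) = \epsilon$. Taking $\epsilon \to 0$ forces $\mu^{*}(\emptyset)=0$.

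For property~(2), I would observe that if $A\subseteq B$ and $\{I_k\}_{k\geq 1}$ is any countable cover of $B$ by intervals, then $\{I_k\}_{k\geq 1}$ is also a cover of $A$. Hence the set of admissible covers of $A$ contains the set of admissible covers of $B$, so the infimum defining $\mu^{*}(A)$ is taken over a larger family and therefore $\mu^{*}(A)\leq\mu^{*}(B)$.

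Property~(3) is the core of the lemma and uses the standard $\epsilon/2^i$ trick. If $\mu^{*}(A_i)=\infty$ for some $i$, the inequality is trivial, so assume all are finite. Fix $\epsilon>0$; by the definition of $\mu^{*}(A_i)$ as an infimum, for each $i\geq 1$ I would choose a countable cover $\{I_{i,j}\}_{j\geq 1}$ of $A_i$ by intervals with
\[
\sum_{j=1}^{\infty}\mu^{*}(I_{i,j}) \leq \mu^{*}(A_i)+\frac{\epsilon}{2^i}.
\]
The doubly indexed family $\{I_{i,j}\}_{i,j\geq 1}$ is a countable cover of $\bigcup_{i=1}^{\infty}A_i$, so
\[
\mu^{*}\Bigl(\bigcup_{i=1}^{\infty}A_i\Bigr) \leq \sum_{i=1}^{\infty}\sum_{j=1}^{\infty}\mu^{*}(I_{i,j}) \leq \sum_{i=1}^{\infty}\mu^{*}(A_i)+\epsilon,
\]
and letting $\epsilon\to 0$ gives the claim. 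The only real subtlety here is justifying that the rearrangement of the double sum is valid, which follows because all terms are non-negative and Tonelli's theorem for counting measure (or elementary reordering of non-negative series) applies. Since properties~(1)--(3) are entirely standard, no step is a genuine obstacle; the only care required is the $\epsilon/2^i$ trick in~(3) and the treatment of the infinite case.
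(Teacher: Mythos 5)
Your proof is correct: all three properties follow directly from the definition of $\mu^{*}$ as an infimum over countable interval covers exactly as you argue, with monotonicity coming from the inclusion of admissible covers and countable subadditivity from the $\epsilon/2^{i}$ trick together with rearrangement of a non-negative double series. The paper itself gives no proof of this lemma — it is quoted from the cited reference \cite[Section~2.A]{Jones2000-db} — and your argument is precisely the standard one found there, so the proposal supplies exactly what the paper delegates to that source (note only that item (3) of the statement contains a typo, $\cup_{i}\mu^{*}(A_i)$ should read $\cup_{i}A_i$, which you have correctly interpreted).
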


\begin{lem}[Lebesgue measure scale property, {\cite[Section~3.B]{Jones2000-db}}]
	\label{lem:scale_property_lebesgue_measure}
	Let $\mu$ be Lebesgue measure on normed vector space $(\R^{n},d)$, then for any $p \in \R^{n}$ and $t \in \R_{+}$ we have:
	$\mu(\bar{B}(p,t)) =  t^{n} \cdot \mu (\bar{B}(p,1) )$.
\end{lem}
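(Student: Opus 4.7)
The plan is to reduce the claim to two standard facts about Lebesgue measure on $\R^n$: translation invariance and the dilation formula $\mu(t\cdot A) = t^n\mu(A)$ for $t>0$. First I would observe that translation invariance of $\mu$ is immediate from Definition~\ref{dfn:lebesgue_measure}: translating an interval $I=[a_1,b_1]\times\cdots\times[a_n,b_n]$ by a vector $v\in\R^n$ produces another interval with identical side lengths, hence the same elementary volume, so $\mu^{*}(A+v)=\mu^{*}(A)$ for every $A\subseteq\R^n$. This lets me assume $p=0$ and reduces the claim to $\mu(\bar{B}(0,t))=t^n\mu(\bar{B}(0,1))$.

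Next I would identify the ball $\bar{B}(0,t)$ as a dilation of the unit ball. Using property~(3) of Definition~\ref{dfn:normed_rn_space}, for every $x\in\R^n$ and $t>0$ we have $\|t\cdot x\|=t\|x\|$, so $x\in\bar{B}(0,1)$ if and only if $t\cdot x\in\bar{B}(0,t)$. Therefore $\bar{B}(0,t)=t\cdot\bar{B}(0,1)$, where $t\cdot A:=\{t\cdot x\mid x\in A\}$. It remains to prove the dilation formula $\mu(t\cdot A)=t^n\mu(A)$, and apply it with $A=\bar{B}(0,1)$.

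For the dilation formula I would argue directly from Definition~\ref{dfn:lebesgue_measure}. For any interval $I=[a_1,b_1]\times\cdots\times[a_n,b_n]$, the dilated set $t\cdot I=[ta_1,tb_1]\times\cdots\times[ta_n,tb_n]$ is again an interval, and
\[
\mu^{*}(t\cdot I)=\prod_{i=1}^{n}(tb_i-ta_i)=t^n\prod_{i=1}^{n}(b_i-a_i)=t^n\mu^{*}(I).
\]
The map $\{I_i\}\mapsto\{t\cdot I_i\}$ is a bijection between countable interval covers of $A$ and countable interval covers of $t\cdot A$, and it multiplies every sum $\sum_i\mu^{*}(I_i)$ by $t^n$. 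Taking the infimum over all covers, as in Definition~\ref{dfn:lebesgue_measure}, gives $\mu^{*}(t\cdot A)=t^n\mu^{*}(A)$. Applying this to $A=\bar{B}(0,1)$ and combining with the translation invariance step yields $\mu(\bar{B}(p,t))=t^n\mu(\bar{B}(0,1))=t^n\mu(\bar{B}(p,1))$, as required.

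The main obstacle is really the bookkeeping in the last step: justifying from scratch that the infimum over interval covers transforms compatibly under dilation. Since \cite[Section~3.B]{Jones2000-db} already establishes the scaling property as a standard theorem, an alternative is to cite it outright; the three-step reduction above then collapses to a single application of that reference, with the only extra ingredient being the norm-homogeneity identity $\bar{B}(0,t)=t\cdot\bar{B}(0,1)$.
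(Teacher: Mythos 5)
Your argument is correct. Note that the paper itself offers no proof of this lemma: it is stated purely as an imported fact, with the citation to \cite[Section~3.B]{Jones2000-db} standing in for the argument (exactly as with Lemmas~\ref{lem:basic_properties_lebesgue_measure} and~\ref{lem:measure_additivty}). Your write-up therefore does strictly more than the paper, and it does it soundly: translation invariance of $\mu^{*}$ from the invariance of interval volumes, the identification $\bar{B}(0,t)=t\cdot\bar{B}(0,1)$ via the homogeneity property~(3) of Definition~\ref{dfn:normed_rn_space}, and the dilation formula $\mu^{*}(t\cdot A)=t^{n}\mu^{*}(A)$ obtained from the bijection between interval covers of $A$ and of $t\cdot A$. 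The only point worth making explicit is the passage from $\mu^{*}$ to $\mu$: since closed balls are Lebesgue-measurable (as the paper remarks after Definition~\ref{dfn:lebesgue_measure}), $\mu$ agrees with $\mu^{*}$ on them, so the outer-measure identity gives the stated equality for $\mu$. Your closing observation that one may instead cite the reference outright, keeping only the identity $\bar{B}(0,t)=t\cdot\bar{B}(0,1)$ as the extra ingredient, is precisely the route the paper takes.
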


\begin{dfn}[Grid and Cubic region]
	\label{dfn:voronoi_region}
	Let $\R^{n}$ be a normed vector space
	and let $\delta \in \R$. 
	Define $\delta$-grid on $\R^{n}$ as the set
	$G(\delta) = \{ t \cdot \delta  \mid t \in \Z^{n} \}$. For any $p \in \R^{n}$ define its open cubic region $V(p,\delta) \subseteq \R^{n}$ as the set $\{ p + u \mid u \in (-\frac{\delta}{2},\frac{\delta}{2})^n \}$ and closed cubic region $\bar{V}(p,\delta) \subseteq \R^{n}$ as $\{ p + u \mid u \in [-\frac{\delta}{2},\frac{\delta}{2}]^n \}$.
\end{dfn}
\noindent
Note that the union $\cup_{p \in G(\delta)}V(p, \delta)$ covers whole set $\R^{n}$. 

\begin{lem}[Cubic regions are separate]
	\label{lem:open_voronoi_region_separation}
	In conditions of Definition \ref{dfn:voronoi_region} let $p, q \in G(\delta)$ be distinct points. 
	Then their cubic regions are separate i.e. $V(p,\delta) \cap V(q, \delta) = \emptyset$.
\end{lem}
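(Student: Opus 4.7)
The plan is to argue by contradiction in a single coordinate where the two grid points differ. Since $p, q \in G(\delta) = \delta \cdot \Z^n$ are distinct, we can write $p = \delta \cdot (k_1, \dots, k_n)$ and $q = \delta \cdot (m_1, \dots, m_n)$ with $k_j, m_j \in \Z$, and there exists an index $i$ with $k_i \neq m_i$. Integrality of $k_i - m_i$ then forces $|p_i - q_i| = \delta \cdot |k_i - m_i| \geq \delta$.

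Now suppose for contradiction that there exists some $x \in V(p,\delta) \cap V(q,\delta)$. By Definition~\ref{dfn:voronoi_region} we may write $x = p + u = q + v$ with $u, v \in (-\delta/2, \delta/2)^n$. Subtracting yields $p - q = v - u$, and restricting to the $i$-th coordinate gives $p_i - q_i = v_i - u_i$.

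The key observation is that the right-hand side is strictly bounded: $|v_i - u_i| \leq |v_i| + |u_i| < \delta/2 + \delta/2 = \delta$, using the strict inequalities provided by the open interval. This contradicts $|p_i - q_i| \geq \delta$ from the previous paragraph. Hence no such $x$ exists, and $V(p,\delta) \cap V(q,\delta) = \emptyset$.

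The proof is essentially a one-line computation once the correct coordinate is identified; the only subtlety to flag is that the separation relies on the \emph{openness} of the cubic regions (so that the strict inequality $|v_i - u_i| < \delta$ holds), which is precisely why the analogous statement for the closed region $\bar V(p,\delta)$ would fail on the shared boundary faces.
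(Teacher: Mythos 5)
Your proof is correct and follows essentially the same route as the paper: both pick a coordinate where the grid points differ to get $|p(j)-q(j)|\geq\delta$, and both derive a contradiction from the strict bound $|p(j)-q(j)|<\tfrac{\delta}{2}+\tfrac{\delta}{2}=\delta$ supplied by the open cubic regions (via the triangle inequality, which is what your subtraction $p-q=v-u$ amounts to). Your closing remark about why openness matters is a nice additional observation but does not change the argument.
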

\begin{proof}
	Assume contrary that there exists $a \in V(p,\delta) \cap V(q, \delta)$, then $|a(i) - p(i)| < \frac{\delta}{2}$ and $|a(i) - q(i)| < \frac{\delta}{2}$
	for all $i \in \{1,...,n\}$. Since $p \neq q$, there exists index $j$, such that $p(j) \neq q(j)$. By definition of grid $G(\delta)$ it follows that
	$|p(j) - q(j)| \geq \delta$. However, by triangle inequality we have 
	$$|p(j) - q(j)| \leq |p(j) - a(j) | + |q(j) - a(j) |< \frac{\delta}{2} + \frac{\delta}{2} = \delta,$$
	which is a contradiction. Therefore $V(p,\delta) \cap V(q, \delta) = \emptyset$. 
	
\end{proof}

\begin{lem}
	\label{lem:diam_voronoi_region}
	Let $\R^n$ be a normed vector space of Definition \ref{dfn:normed_rn_space}. 
	Let $\delta \in \R$ and let $G(\delta)$ be a grid of Definition \ref{dfn:voronoi_region}. 
	Let $p \in G(\delta)$ and let $q \in V(p,\delta)$, then $d(p,q) \leq \frac{ n \cdot \delta \cdot \rho}{2}$
\end{lem}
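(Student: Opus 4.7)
The plan is to unwind the definitions and use only the basic axioms of a norm together with the bound $\|e_i\| \leq \rho$. By Definition \ref{dfn:voronoi_region}, any $q \in V(p,\delta)$ can be written as $q = p + u$ with $u \in (-\delta/2, \delta/2)^n$. Writing $u$ in the standard basis as $u = \sum_{i=1}^n u_i e_i$, each coordinate satisfies $|u_i| < \delta/2$.

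Next, I would apply the triangle inequality and the scaling property of the norm from Definition \ref{dfn:normed_rn_space}:
\[
d(p,q) = \|q - p\| = \Big\| \sum_{i=1}^n u_i e_i \Big\| \leq \sum_{i=1}^n |u_i| \cdot \|e_i\|.
\]
Using $|u_i| \leq \delta/2$ and $\|e_i\| \leq \rho$ (by definition of $\rho$ as the maximum of $\|e_i\|$ over $i \in \{1,\dots,n\}$), each term in the sum is bounded by $\delta \rho / 2$, and summing over the $n$ coordinates yields the claim $d(p,q) \leq n\delta\rho/2$.

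There is no real obstacle here; the argument is a one-line application of the triangle inequality once the coordinate representation is made explicit. The only thing worth double-checking is consistency of notation: $V(p,\delta)$ is defined with strict inequalities, so technically one obtains the strict bound $d(p,q) < n\delta\rho/2$, but this trivially implies the stated non-strict bound, and the identical argument works verbatim for the closed cubic region $\bar V(p,\delta)$ used later in the paper.
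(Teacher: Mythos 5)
Your proof is correct and follows essentially the same route as the paper: decompose $q-p$ into its coordinate components, bound each component's norm by $\frac{\delta}{2}\cdot\rho$ using the scaling axiom, and sum via the triangle inequality. The remark about strict versus non-strict inequality is a harmless refinement the paper does not bother with.
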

\begin{proof}
	Let $\gamma \in \R$ be such that $q = p + \gamma$. By condition (3) of Definition \ref{dfn:normed_rn_space} we have
	$\Vert \gamma(i) \Vert \leq \Vert e_i \Vert \cdot \frac{\delta }{2} \leq \frac{\delta \cdot \rho}{2}$ for all $i \in \{1,...,n\}$.
	By the definition of norm and triangle inequality we have:
	$$d(p,q) = \Vert p -q \Vert = \Vert\gamma\Vert \leq \sum^n_{i = 1} \Vert \gamma(i) \Vert \leq \frac{n \cdot \delta \cdot \rho}{2}.$$
\end{proof}
\noindent
If $(\R^{n},d)$ is a normed vector space , then its metric $d(\cdot, \cdot)$ satisfies $d(x,y) = \Vert x - y \Vert$.
\begin{lem}[Existence of covering grid ]
	\label{lem:covering_grid_existence}
	Let $R$ be a  finite subset of normed vector space $(\R^{n}, d)$. 
	Then for any $\delta \in \R$ having $\delta < \frac{d_{\min}(R)}{n \cdot \rho}$, then any map
	$f: R \rightarrow G(\delta)$ which maps $p \in R$ to one of its nearest neighbor in $G(\delta)$ is a well-defined injection.
\end{lem}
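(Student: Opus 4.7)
The plan is to handle the two assertions separately: first that the map $f$ is well-defined (each $p \in R$ does have at least one nearest neighbor in $G(\delta)$ that can be picked), and then that any such selection is injective. The critical quantitative ingredient for both steps is already available in Lemma~\ref{lem:diam_voronoi_region}, which bounds the distance from a grid point $g \in G(\delta)$ to any point in its cubic region $\bar{V}(g,\delta)$ by $\frac{n \cdot \delta \cdot \rho}{2}$.

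For well-definedness, I would observe that the closed cubic regions $\{\bar{V}(g,\delta)\}_{g \in G(\delta)}$ cover $\R^n$, since for any $x \in \R^n$ one can round each coordinate to the nearest multiple of $\delta$. Hence for each $p \in R$ there exists at least one $g \in G(\delta)$ with $p \in \bar{V}(g,\delta)$, and by Lemma~\ref{lem:diam_voronoi_region} this $g$ satisfies $d(p,g) \leq \frac{n \cdot \delta \cdot \rho}{2}$. Any potential nearest neighbor must therefore lie in the finite set $G(\delta) \cap \bar{B}(p, \frac{n \cdot \delta \cdot \rho}{2})$, so the minimum is attained and $f(p)$ is well-defined (the choice among tied minimizers is absorbed into the statement "one of its nearest neighbor").

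For injectivity, suppose for contradiction that there exist distinct $p, q \in R$ with $f(p) = f(q) = g \in G(\delta)$. By the previous paragraph, both $d(p,g)$ and $d(q,g)$ are bounded above by $\frac{n \cdot \delta \cdot \rho}{2}$. Applying the triangle inequality in the norm yields
\[
d(p,q) \;\leq\; d(p,g) + d(g,q) \;\leq\; n \cdot \delta \cdot \rho.
\]
The hypothesis $\delta < \frac{d_{\min}(R)}{n \cdot \rho}$ rearranges to $n \cdot \delta \cdot \rho < d_{\min}(R)$, so $d(p,q) < d_{\min}(R)$. This contradicts the definition of $d_{\min}(R)$ (Definition~\ref{dfn:radius+d_min}) as the minimum pairwise distance in $R$, completing the proof.

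I do not anticipate a serious obstacle here: the argument is essentially a two-line packing/triangle-inequality calculation, and Lemma~\ref{lem:diam_voronoi_region} supplies the only nontrivial geometric input. The only subtlety worth stating carefully is that Lemma~\ref{lem:diam_voronoi_region} is phrased for the open cube $V(p,\delta)$, so I would either re-derive the bound for $\bar{V}(p,\delta)$ by the same argument (just replacing strict inequalities with non-strict ones) or invoke a limiting argument, so that the covering-by-closed-cubes step in the well-definedness half is rigorous.
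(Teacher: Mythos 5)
Your proposal is correct and follows essentially the same route as the paper: bound $d(p,f(p))$ and $d(q,f(q))$ by $\tfrac{n\cdot\delta\cdot\rho}{2}$ via Lemma~\ref{lem:diam_voronoi_region}, then apply the triangle inequality and the hypothesis $n\cdot\delta\cdot\rho < d_{\min}(R)$ to force $p=q$. The only difference is that you spell out well-definedness (covering by closed cubic regions, finiteness of candidates), which the paper simply declares to be clear, so your write-up is if anything slightly more careful.
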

\begin{proof}
	Let $f$ be an arbitrary map $f: R \rightarrow G(\delta)$ mapping point $p \in R$ to one of its nearest neighbors. 
	This map is clearly well-defined. Let us now show that it is injective. Assume that $x,y \in R$ are such that
	$f(x) = f(y)$. Then by triangle inequality and Lemma \ref{lem:diam_voronoi_region} we have:
	$$d(x,y) \leq d(x,p) + d(p,y) \leq n \cdot \delta \cdot \rho < d_{\min}(R),$$
	it follows that $x = y$. Therefore map $f$ is injective.

	
\end{proof}

\begin{lem}
	\label{lem:covering_grid_inequalities}
	Let $R$ be a finite subset of normed space $(\R^{n},d)$, let $\rho$ be as in Definition \ref{dfn:normed_rn_space} and let $\delta \in \R$ be such that $0 < \delta < \frac{d_{\min}(R)}{n \cdot \rho}$.
	Let $ p \in R$ be arbitrary point and let $t > \frac{n \cdot \delta \cdot \rho}{2}$ be a real number.
	Then there exists a set $U(\delta)$ satisfying $R \subseteq U(\delta)$ and
	$$|G(\delta) \cap \bar{B}(p , t - \frac{n \cdot \delta \cdot \rho}{2}) | \leq |U(\delta) \cap \bar{B}(p, t)| 
	\leq |G(\delta) \cap \bar{B}(p , t + \frac{n \cdot \delta \cdot \rho}{2}) |$$
\end{lem}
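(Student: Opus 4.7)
The plan is to construct $U(\delta)$ explicitly as suggested in the preamble to this section, namely
$$U(\delta) = (G(\delta)\setminus f(R))\cup R,$$
where $f:R\to G(\delta)$ sends each $p\in R$ to one of its nearest neighbors in $G(\delta)$. By Lemma~\ref{lem:covering_grid_existence}, the hypothesis $\delta<\frac{d_{\min}(R)}{n\cdot\rho}$ guarantees that $f$ is a well-defined injection, so $|f(R)|=|R|$. The containment $R\subseteq U(\delta)$ is immediate from the definition. I then define a bijection $g:U(\delta)\to G(\delta)$ by $g(x)=f(x)$ whenever $x\in R$ and $g(x)=x$ whenever $x\in G(\delta)\setminus f(R)$; injectivity of $f$ is what makes $g$ a bijection.

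Both inequalities will then follow from the fact that $g$ moves every point of $U(\delta)$ by at most $\frac{n\cdot\delta\cdot\rho}{2}$. For any $x\in R$, Lemma~\ref{lem:diam_voronoi_region} applied to $x\in V(f(x),\delta)$ gives $d(x,f(x))\leq\frac{n\cdot\delta\cdot\rho}{2}$; for $x\in G(\delta)\setminus f(R)$ the displacement is zero. Consequently, if $x\in U(\delta)\cap\bar{B}(p,t)$, the triangle inequality yields $g(x)\in G(\delta)\cap\bar{B}(p,t+\frac{n\cdot\delta\cdot\rho}{2})$. Since $g$ is injective, this proves the upper bound
$$|U(\delta)\cap\bar{B}(p,t)|\leq |G(\delta)\cap\bar{B}(p,t+\tfrac{n\cdot\delta\cdot\rho}{2})|.$$

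For the lower bound I would run the same argument in reverse with $g^{-1}$. Given $y\in G(\delta)\cap\bar{B}(p,t-\frac{n\cdot\delta\cdot\rho}{2})$, either $y\notin f(R)$ and then $g^{-1}(y)=y$ lies trivially in $U(\delta)\cap\bar{B}(p,t)$, or $y=f(x)$ for a unique $x\in R$ and the triangle inequality together with $d(x,f(x))\leq\frac{n\cdot\delta\cdot\rho}{2}$ places $g^{-1}(y)=x$ inside $\bar{B}(p,t)$. Injectivity of $g^{-1}$ then gives $|G(\delta)\cap\bar{B}(p,t-\frac{n\cdot\delta\cdot\rho}{2})|\leq |U(\delta)\cap\bar{B}(p,t)|$.

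The only subtlety I anticipate is ensuring that the bijection is indeed well-defined, which hinges on $f$ being injective; this is precisely where the hypothesis $\delta<\frac{d_{\min}(R)}{n\cdot\rho}$ and the preceding Lemma~\ref{lem:covering_grid_existence} are essential. The hypothesis $t>\frac{n\cdot\delta\cdot\rho}{2}$ is used only to keep the radius $t-\frac{n\cdot\delta\cdot\rho}{2}$ nonnegative so that the lower bound is meaningful; everything else is bookkeeping on the displacement induced by $g$.
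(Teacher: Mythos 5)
Your proposal is correct and follows essentially the same route as the paper: the same set $U(\delta)=(G(\delta)\setminus f(R))\cup R$, injectivity of $f$ from Lemma~\ref{lem:covering_grid_existence}, the displacement bound $d(x,f(x))\leq\frac{n\cdot\delta\cdot\rho}{2}$, and the triangle inequality. The only cosmetic difference is that you package the counting as one global bijection $g:U(\delta)\to G(\delta)$ restricted to balls, whereas the paper writes the two restricted injections $g$ and $h$ separately.
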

\begin{proof}
	Let $f: R \rightarrow G(\delta)$ be an injection from Lemma \ref{lem:covering_grid_existence}, which maps every $q \in R$ to one of its nearest neighbors in $G(\delta)$.
	Define $U(\delta) = (G(\delta) \setminus f(R)) \cup R$.
	Let us first show that $$g:U(\delta) \cap \bar{B}(p, t) \rightarrow G(\delta) \cap \bar{B}(p , t + \frac{n \cdot \delta \cdot \rho}{2}),$$
	defined by $g(q) = f(q)$, if $q \in R$ and $g(q) = q$, if $q \notin R$,  is an injection. Let us show first that the map $g$ is well-defined, if $q \notin R$, the claim is trivial. Let $q \notin R$, then by triangle inequality 
	$d(g(q),p) \leq  d(q,p) + d(g(q), q) \leq t + \frac{n \cdot \delta \cdot \rho}{2}$. Assume now that $g(a) = g(b)$ for some $a, b \in U(\delta) \cap \bar{B}(p,t)$. Let us first show that either $a,b$ both belong to $R$ or neither of $a,b$ belong to $R$. Assume contrary that $a \in R$ and $b \notin R$.
	Since $b \notin R$ we have $b \in G(\delta) \setminus f(R)$. On the other hand since $h(a) = h(b)$ we have $f(a) = b$, therefore $b \in f(R)$, which is a contradiction. If both, $a$ and $b$ belong to $R$ we have $a = b$, similarly if $a ,b \notin R$ we have $a = b$ by injectivity of function $f$.
	Therefore we have now shown that $g$ is well-defined injection. 
	It follows $|U(\delta) \cap \bar{B}(p, t)| \leq  |G(\delta) \cap \bar{B}(p , t + \frac{n \cdot \delta \cdot \rho}{2})|$.
	Let us now show that map 
	$$ h : G(\delta) \cap \bar{B}(p , t - \frac{n \cdot \delta \cdot \rho}{2}) \rightarrow U(\delta) \cap \bar{B}(p,t),$$
	defined by $h(q) = f^{-1}(q)$, if $q \in f(R)$ and $h(q) = q$, if $q \notin f(R)$ is well-defined injection. 
	Let us first show that the map is well-defined. Let $q \in G(\delta) \cap \bar{B}(p , t - \frac{n \cdot \delta \cdot \rho}{2})$, if $q \notin f(R)$ the claim is satisfied trivially. If $q \in f(R)$, then by definition $d(h(q), q) \leq \frac{n \cdot \delta \cdot \rho}{2}$. By using triangle inequality we obtain: $$d(p, h(q)) \leq d(p,q) + d(q,h(q)) \leq t - \frac{n \cdot \delta \cdot \rho}{2} + \frac{n \cdot \delta \cdot \rho}{2} \leq t. $$
	Therefore $h(q) \in U(\delta) \cap \bar{B}(p,t)$. 
	
	\medskip
	
	\noindent
	Let us now show that $h$ is an injection. Let $a,b \in G(\delta) \cap \bar{B}(p , t - \frac{n \cdot \delta}{2}) $ assume that $h(a) = h(b)$, let us show that $a = b$. Let us first show that either $a,b \in f(R)$ or neither of $a,b$ belong to $f(R)$.
	Assume contrary that $a \in f(R)$ and $b \notin f(R)$. Then $h(a) = h(b)$ 
	implies that $f^{-1}(a) = b$. Since $f^{-1}(a) \in R$, we have $b \in R$. Since $b \in G(\delta)$, it follows that $f(b) = b$, which is a contradiction since $b \notin f(R)$. Assume now that $a, b \in f(R)$, then the claim follows by noting that $f^{-1}$ is injection. If $a, b \notin f(R)$ , then claim follows by noting that $h(a) = a$ and $h(b) = b$. Therefore map $h$ is injection. It follows that 
	$|G(\delta) \cap \bar{B}(p , t - \frac{n \cdot \delta \cdot \rho}{2})| \leq |U(\delta) \cap \bar{B}(p,t)|.$

	
\end{proof}

\begin{lem}
	\label{lem:covering_grid_inclusions}
	Let $R$ be a finite subset of normed vector space $\R^{n}$ and let $\delta \in \R$.
	For any $p \in G(\delta)$ recall that $V(p,\delta)$ is Minkowski sum $p + (-\frac{\delta}{2},\frac{\delta}{2})^n$.
	Define $$\bar{W}(p, t, \delta) = \bigcup_{q \in \bar{B}(p,t) \cap G(\delta)} \bar{V}(q,\delta).$$
	Then for any $p \in R$ and $t > \frac{n \cdot \delta \cdot \rho}{2}$ we have:
	$$\bar{B}(p, t - \frac{n \cdot \delta \cdot \rho}{2}) \subseteq \bar{W}(p,t,\delta) \subseteq \bar{B}(p, t + \frac{n \cdot \delta \cdot \rho}{2}).$$
\end{lem}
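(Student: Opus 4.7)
The plan is to establish the two inclusions separately, using Lemma~\ref{lem:diam_voronoi_region} as the key geometric input and the triangle inequality to transfer between distances to $p$ and distances to a grid representative.

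For the first inclusion $\bar{B}(p, t - \frac{n \cdot \delta \cdot \rho}{2}) \subseteq \bar{W}(p,t,\delta)$, I would take an arbitrary $x \in \bar{B}(p, t - \frac{n \cdot \delta \cdot \rho}{2})$ and choose a grid point $q \in G(\delta)$ such that $x \in \bar{V}(q,\delta)$. Such a $q$ exists because the closed cubic regions $\{\bar{V}(u,\delta) \mid u \in G(\delta)\}$ cover $\R^{n}$ (any coordinate of $x - q$ can be made to lie in $[-\delta/2, \delta/2]$ by choosing each coordinate of $q$ to be the nearest multiple of $\delta$ to the corresponding coordinate of $x$). Then Lemma~\ref{lem:diam_voronoi_region} yields $d(x,q) \leq \frac{n \cdot \delta \cdot \rho}{2}$, and the triangle inequality gives
\[
d(p,q) \leq d(p,x) + d(x,q) \leq \Big(t - \tfrac{n \cdot \delta \cdot \rho}{2}\Big) + \tfrac{n \cdot \delta \cdot \rho}{2} = t,
\]
so $q \in \bar{B}(p,t) \cap G(\delta)$. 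Hence $x \in \bar{V}(q,\delta) \subseteq \bar{W}(p,t,\delta)$.

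For the second inclusion $\bar{W}(p,t,\delta) \subseteq \bar{B}(p, t + \frac{n \cdot \delta \cdot \rho}{2})$, I would pick $x \in \bar{W}(p,t,\delta)$, so by definition there exists $q \in \bar{B}(p,t) \cap G(\delta)$ with $x \in \bar{V}(q,\delta)$. Lemma~\ref{lem:diam_voronoi_region} again gives $d(x,q) \leq \frac{n \cdot \delta \cdot \rho}{2}$, and then
\[
d(p,x) \leq d(p,q) + d(q,x) \leq t + \tfrac{n \cdot \delta \cdot \rho}{2},
\]
so $x \in \bar{B}(p, t + \frac{n \cdot \delta \cdot \rho}{2})$.

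Both steps are essentially bookkeeping once one has Lemma~\ref{lem:diam_voronoi_region} that bounds the diameter of each cubic region by $\frac{n \cdot \delta \cdot \rho}{2}$; there is no real obstacle beyond pointing out that the closed cubic regions cover $\R^{n}$, which is the only step requiring a brief justification (it does not follow from Lemma~\ref{lem:open_voronoi_region_separation}, which concerns \emph{open} regions). The assumption $t > \frac{n \cdot \delta \cdot \rho}{2}$ is only needed to keep the radius $t - \frac{n \cdot \delta \cdot \rho}{2}$ non-negative so that the first inclusion is a statement about a non-empty set.
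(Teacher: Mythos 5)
Your proof is correct and follows essentially the same route as the paper's own argument: both inclusions are obtained by picking a grid point $q\in G(\delta)$ whose closed cubic region contains the given point, bounding $d(x,q)\leq\frac{n\cdot\delta\cdot\rho}{2}$ via Lemma~\ref{lem:diam_voronoi_region}, and applying the triangle inequality. Your explicit remark that the covering of $\R^{n}$ by the \emph{closed} regions needs its own (easy) justification is a welcome clarification of a point the paper glosses over, but it does not change the substance of the argument.
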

\begin{proof}
	Let $u \in \bar{B}(p, t-\frac{n \cdot \delta \cdot \rho}{2})$ be an arbitrary point. Since $\{\bar{V}(q,\delta) \mid q \in G(\delta)\}$ covers $R$ it follows that there exists
	$a \in G(\delta)$ such that $u \in \bar{V}(a, \delta)$. By triangle inequality we obtain:
	$$d(a,p) \leq d(a,u) + d(u,p) \leq \frac{n \cdot \delta \cdot \rho}{n} + t - \frac{n \cdot \delta \cdot \rho}{n} \leq t.  $$
	It follows that $\bar{V}(w,\delta) \in \bar{W}(p,t)$, therefore $p \in \bar{W}(p,t,\delta)$. We have $\bar{B}(p, t - \frac{n \cdot \delta \cdot \rho}{2}) \subseteq \bar{W}(p,t,\delta).$
	Let $u \in \bar{W}(p,t,\delta)$, then there exists $a \in G(\delta)$ such that $u \in \bar{V}(a, \delta)$ and $\bar{V}(a, \delta) \in \bar{W}(p,t)$. By triangle inequality we obtain: 
	$$d(u,p) \leq d(u,a) + d(a,p) \leq \frac{n \cdot \delta \cdot \rho}{n} + t . $$
	It follows that $u \in \bar{B}(p, t + \frac{n \cdot \delta \cdot \rho}{2})$. Therefore 
	$\bar{W}(p,t,\delta) \subseteq \bar{B}(p, t + \frac{n \cdot \delta \cdot \rho}{2}) $ which proves the claim.
	
	
\end{proof}

\begin{lem}[Countable additivity, {\cite[Section~2.A]{Jones2000-db}}]
	\label{lem:measure_additivty}
	Assume that $A_i \subseteq \R^{n}$, $i = 1,2,...,$ are pairwise disjoint i.e. $A_i \cap A_j = \emptyset$ for all $i \neq j$ Lebesgue-measurable sets. Then $$\mu(\bigcup^{\infty}_{i = 0}A_i) = \sum^{\infty}_{i = 1}\mu(A_i) .$$
\end{lem}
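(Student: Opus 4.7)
The plan is to obtain the identity by sandwiching $\mu(\bigcup_{i} A_i)$ between two matching bounds, one of which is already supplied by Lemma~\ref{lem:basic_properties_lebesgue_measure}. The upper bound
\[
\mu\Big(\bigcup_{i=1}^{\infty} A_i\Big) \leq \sum_{i=1}^{\infty}\mu(A_i)
\]
is precisely countable subadditivity, which holds for arbitrary (not necessarily disjoint, not necessarily measurable) sets by Lemma~\ref{lem:basic_properties_lebesgue_measure}(3). So the real work is the reverse inequality, and the only ingredient that has not yet been stated is finite additivity of $\mu$ on disjoint measurable sets; that is where the disjointness hypothesis and the Carath\'eodory-style measurability criterion in Definition~\ref{dfn:lebesgue_measure} enter.

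First I would establish finite additivity for two disjoint measurable sets $E_1, E_2$. Applying the measurability condition $\mu^{*}(A) = \mu^{*}(A\cap E_1) + \mu^{*}(A\setminus E_1)$ to the test set $A = E_1 \cup E_2$, and using disjointness to identify $A\cap E_1 = E_1$ and $A\setminus E_1 = E_2$, gives $\mu(E_1\cup E_2) = \mu(E_1) + \mu(E_2)$. A straightforward induction on $N$, at each step applying this two-set identity to $E_1 = \bigcup_{i=1}^{N} A_i$ and $E_2 = A_{N+1}$ (noting that the class of measurable sets is closed under finite unions, so $E_1$ is measurable and disjoint from $E_2$), yields
\[
\mu\Big(\bigcup_{i=1}^{N} A_i\Big) = \sum_{i=1}^{N}\mu(A_i) \qquad \text{for every finite } N.
\]

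Next I would pass to the infinite union by monotonicity. Since $\bigcup_{i=1}^{N} A_i \subseteq \bigcup_{i=1}^{\infty} A_i$, Lemma~\ref{lem:basic_properties_lebesgue_measure}(2) gives
\[
\sum_{i=1}^{N}\mu(A_i) = \mu\Big(\bigcup_{i=1}^{N} A_i\Big) \leq \mu\Big(\bigcup_{i=1}^{\infty} A_i\Big)
\]
for every $N$. Letting $N\to\infty$ produces $\sum_{i=1}^{\infty}\mu(A_i) \leq \mu\big(\bigcup_{i=1}^{\infty} A_i\big)$, which combined with the subadditive upper bound yields the claimed equality.

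The main obstacle, modest though it is, is the derivation of finite additivity from the Carath\'eodory criterion together with the verification that finite unions of measurable sets are themselves measurable, so that the induction step is legitimate. Since this is a textbook fact invoked from \cite[Section~2.A]{Jones2000-db}, I would simply cite it rather than unfolding the measurability-of-finite-unions argument. Everything else reduces to the subadditivity of Lemma~\ref{lem:basic_properties_lebesgue_measure} and a routine limit as $N\to\infty$.
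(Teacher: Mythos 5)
Your argument is correct, and it is the standard one: finite additivity from the Carath\'eodory criterion applied to the test set $E_1\cup E_2$, monotonicity to pass to the infinite union, and countable subadditivity (Lemma~\ref{lem:basic_properties_lebesgue_measure}(3)) for the reverse inequality. There is nothing in the paper to compare it against: the paper states this lemma purely as an imported textbook fact, citing \cite[Section~2.A]{Jones2000-db} with no proof at all, exactly as it does for Lemmas~\ref{lem:basic_properties_lebesgue_measure} and~\ref{lem:scale_property_lebesgue_measure}. So your proposal supplies a proof where the paper supplies a citation, and the proof you give is the one the cited source uses. The only point worth making explicit, beyond the closure-under-finite-unions fact you already flag as citable, is that writing $\mu$ (rather than $\mu^{*}$) of the countable union presupposes that a countable union of measurable sets is measurable; this is likewise a standard fact from the same source, and in any case your two inequalities hold verbatim for $\mu^{*}$, so the argument is unaffected.
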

\begin{lem}[Lebesgue measure of $\bar{W}(p,t,\delta)$]
\label{lem:lebesgue_measure_w}
	In notations of Lemma \ref{lem:covering_grid_inclusions}
	let $\mu$ be a Lebesgue measure on $R$ from Definition \ref{dfn:lebesgue_measure}, then $\mu(\bar{W}(p,t,\delta)) = \delta^n \cdot |\bar{B}(p,t) \cap G(\delta)|$.
\end{lem}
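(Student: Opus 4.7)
The plan is to decompose $\bar W(p,t,\delta)$ as an essentially disjoint union of $|\bar B(p,t)\cap G(\delta)|$ translates of the cube $[-\tfrac{\delta}{2},\tfrac{\delta}{2}]^n$, and then use countable additivity of Lebesgue measure together with the fact that cube boundaries have measure zero.

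First I would write $\bar B(p,t)\cap G(\delta)=\{q_1,\dots,q_N\}$, where $N=|\bar B(p,t)\cap G(\delta)|$ is finite (since $G(\delta)$ is discrete and $\bar B(p,t)$ is bounded in $\R^n$). Each $\bar V(q_i,\delta)=q_i+[-\tfrac{\delta}{2},\tfrac{\delta}{2}]^n$ is a translate of an $n$-dimensional interval, so by Definition \ref{dfn:lebesgue_measure} its Lebesgue measure is $\mu(\bar V(q_i,\delta))=\delta^n$.

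The key step is to handle the non-disjointness of the closed cubes $\bar V(q_i,\delta)$ at their boundaries. By Lemma \ref{lem:open_voronoi_region_separation}, the open cubic regions $V(q_i,\delta)$ are pairwise disjoint. Moreover $\bar V(q_i,\delta)\setminus V(q_i,\delta)$ is contained in the boundary of the cube, a finite union of $(n-1)$-dimensional faces, each of which has $\mu$-measure zero (since it is contained in an $n$-interval of the form $[a_1,b_1]\times\cdots\times[a_n,b_n]$ with $a_j=b_j$ for at least one $j$). Hence $\mu(V(q_i,\delta))=\mu(\bar V(q_i,\delta))=\delta^n$.

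Next I would apply Lemma \ref{lem:measure_additivty} (countable additivity) to the pairwise disjoint family $\{V(q_i,\delta)\}_{i=1}^N$, giving
\[
\mu\!\Bigl(\bigcup_{i=1}^N V(q_i,\delta)\Bigr)=\sum_{i=1}^N\mu(V(q_i,\delta))=N\cdot\delta^n.
\]
It remains to note that $\bigcup_{i=1}^N V(q_i,\delta)\subseteq\bar W(p,t,\delta)=\bigcup_{i=1}^N\bar V(q_i,\delta)$, and that their difference $\bar W(p,t,\delta)\setminus\bigcup_{i=1}^N V(q_i,\delta)$ is a finite union of cube boundaries, hence of measure zero. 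Monotonicity and subadditivity (Lemma \ref{lem:basic_properties_lebesgue_measure}) then yield $\mu(\bar W(p,t,\delta))=N\cdot\delta^n=\delta^n\cdot|\bar B(p,t)\cap G(\delta)|$, as claimed.

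The main obstacle is the careful bookkeeping of the zero-measure boundary overlaps; beyond that, the argument is a routine application of countable additivity. No deeper property of the particular norm is used, only that $\bar V(q,\delta)$ is an axis-aligned cube of side $\delta$.
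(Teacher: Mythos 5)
Your proposal is correct and follows essentially the same route as the paper: measure each cube as $\delta^n$, use the pairwise disjointness of the open cubic regions with countable additivity to get $N\cdot\delta^n$ for the open union, and then sandwich $\mu(\bar W(p,t,\delta))$ between this value (monotonicity) and the subadditive sum over the closed cubes. The only cosmetic difference is that you argue the boundary faces have measure zero explicitly, while the paper invokes $\mu(\bar V)=\mu(V)$ for intervals directly; this is the same fact.
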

\begin{proof}
	Define $W(p, t, \delta) = \bigcup\limits_{q \in \bar{B}(p,t) \cap G(\delta)} V(q,\delta).$
	Recall that for all $p \in \R^{n}$ and $\delta > 0$ set $\bar{V}(p,t)$ is a closed $n-$dimensional interval and ${V}(p,t)$ is an open $n$-dimensional interval. Therefore we have $\mu(\bar{V}(p,t)) = \mu(V(p,t))$. Since $\bar{V}(p,t)$ is a closed interval, it follows that $\mu(\bar{V}(p,t)) = \delta^{n}$. 
	Since all the sets of $W$ are separate we can use Lemma \ref{lem:measure_additivty} to obtain:
	$$\mu(W(p,t,\delta))  = \sum_{A \in W(p,t)} \mu(A) =  \sum_{A \in \bar{W}(p,t)} \mu(A) = \delta^{n} \cdot |\bar{B}(p,t) \cap G(\delta)|$$
	By Lemma \ref{lem:basic_properties_lebesgue_measure} (2), since $W(p,t,\delta) \subseteq \bar{W}(p,t,\delta)$ we obtain $\mu(\cup\bar{W}(p,t)) \geq \delta^{n} \cdot |\bar{B}(p,t) \cap G(\delta)|$. On the other hand, by Lemma \ref{lem:basic_properties_lebesgue_measure} (3) we obtain 
	$$\mu(\bar{W}(p,t,\delta)) \leq \sum_{A \in \bar{W}(p,t)} \mu(A) = \delta^{n} \cdot |\bar{B}(p,t) \cap G(\delta)| $$
	Therefore we have shown that $\mu(\bar{W}(p,t,\delta)) = \delta^{n} \cdot |\bar{B}(p,t) \cap G(\delta)|$.
\end{proof}

\begin{lem}[Set $U(\delta)$ bounds]
	\label{lem:u_bounds}
	Let $\R^{n}$ be a normed vector space.
	Let $R \subseteq \R^{n}$ be its finite subset.
	Then any set $U(\delta)$ of Lemma \ref{lem:covering_grid_inequalities} satisfies the following inequalities:
	$$\frac{\mu( \bar{B}(p,t - \delta \cdot n \cdot \rho))}{\delta^n} \leq |\bar{B}(p,t) \cap U(\delta)| \leq \frac{\mu( \bar{B}(p,t + \delta \cdot \gamma \cdot n \cdot \rho))}{\delta^n}, $$
	for all $p \in R$ and $t > n \cdot \delta \cdot \rho$.
\end{lem}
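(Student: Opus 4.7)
The plan is to derive both inequalities by chaining three earlier results: Lemma~\ref{lem:covering_grid_inequalities} to convert $|\bar{B}(p,t) \cap U(\delta)|$ into a count on the grid $G(\delta)$, Lemma~\ref{lem:lebesgue_measure_w} to convert the grid count into a Lebesgue measure of the cube-union $\bar{W}(p,s,\delta)$, and finally Lemma~\ref{lem:covering_grid_inclusions} to bracket $\bar{W}(p,s,\delta)$ between two genuine balls $\bar{B}(p,s \pm \tfrac{n\delta\rho}{2})$ whose Lebesgue measures appear in the statement.

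More precisely, for the \emph{upper bound} I would set $s = t + \tfrac{n\delta\rho}{2}$. Lemma~\ref{lem:covering_grid_inequalities} gives
$|\bar{B}(p,t) \cap U(\delta)| \leq |G(\delta) \cap \bar{B}(p,s)|$; Lemma~\ref{lem:lebesgue_measure_w} rewrites the right-hand side as $\delta^{-n}\mu(\bar{W}(p,s,\delta))$; and then Lemma~\ref{lem:covering_grid_inclusions} together with the monotonicity of $\mu$ from Lemma~\ref{lem:basic_properties_lebesgue_measure}(2) yields $\mu(\bar{W}(p,s,\delta)) \leq \mu(\bar{B}(p, s + \tfrac{n\delta\rho}{2})) = \mu(\bar{B}(p, t + n\delta\rho))$. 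Dividing by $\delta^n$ delivers the claimed upper bound.

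For the \emph{lower bound}, a symmetric argument with $s = t - \tfrac{n\delta\rho}{2}$ works: by Lemma~\ref{lem:covering_grid_inequalities}, $|\bar{B}(p,t) \cap U(\delta)| \geq |G(\delta) \cap \bar{B}(p,s)|$; by Lemma~\ref{lem:lebesgue_measure_w} this equals $\delta^{-n}\mu(\bar{W}(p,s,\delta))$; and by the first inclusion in Lemma~\ref{lem:covering_grid_inclusions} together with monotonicity, $\mu(\bar{W}(p,s,\delta)) \geq \mu(\bar{B}(p, s - \tfrac{n\delta\rho}{2})) = \mu(\bar{B}(p, t - n\delta\rho))$, giving the lower bound after division.

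The only real care is with the hypotheses: Lemma~\ref{lem:covering_grid_inequalities} requires $0 < \delta < d_{\min}(R)/(n\rho)$ (so that the injection $f$ and the set $U(\delta)$ exist) and that the radii used for the grid counts exceed $\tfrac{n\delta\rho}{2}$; the assumption $t > n\delta\rho$ in the statement guarantees that $s = t - \tfrac{n\delta\rho}{2}$ is still larger than $\tfrac{n\delta\rho}{2}$, so Lemma~\ref{lem:covering_grid_inequalities} is applicable in the lower-bound step, and also ensures $t - n\delta\rho > 0$ so that $\bar{B}(p, t - n\delta\rho)$ is a genuine ball of positive measure. No calculation beyond this chaining is needed, so the main obstacle is purely notational: keeping the constants $\tfrac{n\delta\rho}{2}$ and $n\delta\rho$ aligned correctly between the three lemmas.
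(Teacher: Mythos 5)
Your proposal is correct and follows essentially the same route as the paper's proof: bracket $|\bar{B}(p,t)\cap U(\delta)|$ by grid counts via Lemma~\ref{lem:covering_grid_inequalities}, convert those counts to $\mu(\bar{W}(p, t\pm\tfrac{n\delta\rho}{2},\delta))$ via Lemma~\ref{lem:lebesgue_measure_w}, and then sandwich the $\bar{W}$ sets between the balls $\bar{B}(p,t\mp n\delta\rho)$ using Lemma~\ref{lem:covering_grid_inclusions} and monotonicity of $\mu$. Your explicit check that $t>n\delta\rho$ keeps the radius $t-\tfrac{n\delta\rho}{2}$ admissible is exactly the care the argument needs (and it also silently fixes a sign typo in the paper's first display), so there is nothing to add.
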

\begin{proof}

Let $p \in \R^{n}$ be an arbitrary point and let $t > n \cdot \delta \cdot \rho$ be an arbitrary real number.
By Lemma \ref{lem:covering_grid_inequalities} it follows:
$$|G(\delta) \cap \bar{B}(p , t + \frac{n \cdot \delta \cdot \rho}{2})| \leq |\bar{B}(p,t) \cap U(\delta)| \leq |G(\delta) \cap \bar{B}(p , t + \frac{n \cdot \delta \cdot \rho}{2})|.$$
Let $ \bar{W}(p,t + \frac{n \cdot \delta \cdot \rho}{2},\delta) = \cup_{q} \{ \bar{V}(q,\delta) \mid q \in \bar{B}(p, t + \frac{n \cdot \delta \cdot \rho}{2}) \}.$ By Lemma \ref{lem:covering_grid_inclusions} we have:  
$$
\bar{B}(p, t - n \cdot \delta \cdot \rho) \subseteq \bar{W}(p,t - \frac{n \cdot \delta \cdot \rho}{2},\delta) \text{ and }
\bar{W}(p,t + \frac{n \cdot \delta \cdot \rho}{2},\delta) \subseteq \bar{B}(p, t + n \cdot \delta \cdot \rho)
$$
By Lemma \ref{lem:basic_properties_lebesgue_measure} we have $\mu(\bar{W}(p,t + \frac{n \cdot \delta \cdot \rho}{2},\delta)) \leq \mu(\bar{B}(p, t + n \cdot \delta \cdot \rho))$. By Lemma \ref{lem:lebesgue_measure_w} we have:
$$\mu(\bar{W}(p,t + \frac{n \cdot \delta \cdot \rho}{2},\delta)) = \delta^n \cdot |\bar{B}(p,t + \frac{n \cdot \delta \cdot \rho}{2}) \cap G(\delta)|.$$
By combining the facts we obtain:
$$ |\bar{B}(p,t) \cap U(\delta)| \leq |G(\delta) \cap \bar{B}(p , t + \frac{n \cdot \delta \cdot \rho}{2})| \leq \frac{\mu(\bar{W}(p,t + \frac{n \cdot \delta \cdot \rho}{2},\delta))}{\delta^n} \leq \frac{\mu(\bar{B}(p, t + n \cdot \delta \cdot \rho))}{\delta^n} $$
$$ |\bar{B}(p,t) \cap U(\delta)| \geq |G(\delta) \cap \bar{B}(p , t - \frac{n \cdot \delta \cdot \rho}{2})| \geq \frac{\mu(\bar{W}(p,t - \frac{n \cdot \delta \cdot \rho}{2},\delta))}{\delta^n} \geq \frac{\mu(\bar{B}(p, t - n \cdot \delta \cdot \rho))}{\delta^n} $$

\end{proof}


\begin{lem}[Set $U(\delta)$ is locally finite]
	\label{lem:grid_locally_finite}
	Let $\R^{n}$ be a normed vector space.
	Let $R \subseteq \R^{n}$ be its finite subset
	Then any set $U(\delta)$ of Lemma \ref{lem:covering_grid_inequalities} is locally finite.
\end{lem}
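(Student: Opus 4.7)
The plan is to reduce the claim to showing that the grid $G(\delta)$ is itself locally finite. Since $f(R) \subseteq G(\delta)$, we have $U(\delta) = (G(\delta) \setminus f(R)) \cup R \subseteq G(\delta) \cup R$. Because $R$ is finite by hypothesis, the intersection $\bar{B}(p,t) \cap R$ is automatically finite for any $p \in \R^{n}$ and $t \in \R_{+}$. So the task reduces to proving $|\bar{B}(p,t) \cap G(\delta)| < \infty$ for every $p \in \R^{n}$ and every $t > 0$.

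For this I would reuse the volume-counting idea that powered Lemma~\ref{lem:lebesgue_measure_w} and Lemma~\ref{lem:u_bounds}, but now starting from an arbitrary $p \in \R^{n}$ rather than from $p \in R$. First, I would consider the family of closed cubic regions $\{\bar{V}(q,\delta) \mid q \in G(\delta) \cap \bar{B}(p,t)\}$. By Lemma~\ref{lem:open_voronoi_region_separation}, their open counterparts are pairwise disjoint, and each $\bar{V}(q,\delta)$ is an $n$-dimensional interval of Lebesgue measure $\delta^{n}$. By Lemma~\ref{lem:diam_voronoi_region} together with the triangle inequality, any point of $\bar{V}(q,\delta)$ lies within distance $t + \tfrac{n\delta\rho}{2}$ of $p$, so the union $\bigcup_{q \in G(\delta) \cap \bar{B}(p,t)} \bar{V}(q,\delta)$ is contained in $\bar{B}(p, t + \tfrac{n\delta\rho}{2})$.

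Applying countable additivity (Lemma~\ref{lem:measure_additivty}) to the disjoint open interiors, together with monotonicity (Lemma~\ref{lem:basic_properties_lebesgue_measure}) and the scaling formula (Lemma~\ref{lem:scale_property_lebesgue_measure}), I would derive
\[
\delta^{n} \cdot |G(\delta) \cap \bar{B}(p,t)| \;\leq\; \mu\bigl(\bar{B}(p, t + \tfrac{n\delta\rho}{2})\bigr) \;=\; \bigl(t + \tfrac{n\delta\rho}{2}\bigr)^{n} \cdot \mu(\bar{B}(p,1)).
\]
The right-hand side is finite, so $|G(\delta) \cap \bar{B}(p,t)|$ is finite; combined with the finiteness of $R$, this gives $|U(\delta) \cap \bar{B}(p,t)| < \infty$, establishing local finiteness.

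I do not expect any serious obstacle here: every ingredient (disjointness of open cubic regions, containment inside a larger ball, and the scaling property of Lebesgue measure) has already been proved earlier in the section. The only mild subtlety is that Lemma~\ref{lem:u_bounds} as stated restricts to $p \in R$ and $t > n\delta\rho$, whereas local finiteness must hold for all $p \in \R^{n}$ and all $t > 0$; this is handled simply by re-running the containment argument of Lemma~\ref{lem:covering_grid_inclusions} for an arbitrary center $p$ (which uses only the triangle inequality) and allowing the trivial case of small $t$ where $\bar{B}(p,t) \cap G(\delta)$ may even be empty.
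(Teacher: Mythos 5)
Your proof is correct and rests on essentially the same volume-counting argument as the paper, whose proof simply re-runs the bound of Lemma~\ref{lem:u_bounds} for all $t>0$ and concludes finiteness. Your reduction via the inclusion $U(\delta)\subseteq G(\delta)\cup R$ (then counting grid points with the disjoint cubic regions) is slightly more direct, and it has the merit of covering arbitrary centers $p\in\R^{n}$ and all $t>0$, as Definition~\ref{dfn:locally_finite_space} actually requires, whereas the paper's one-line proof only states the bound for $p\in R$.
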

\begin{proof}

With the exact same proof of Lemma \ref{lem:u_bounds} it can be shown that 
$$|\bar{B}(p,t) \cap U(\delta)| \leq \frac{\mu( \bar{B}(p,t + \delta \cdot n \cdot \rho))}{\delta^n}$$ 
is satisfied for all $p \in R$ and $t > 0$.
Therefore $|\bar{B}(p,t) \cap U(\delta)|$ is finite as well.

\end{proof}
\noindent
Recall that \emph{minimized expansion constant} of Definition \ref{dfn:expansion_constant} of a finite subset $R$ of a metric space $(X,d)$ was defined as  $c_m(R) = \inf\limits_{0 < \xi}\inf\limits_{R\subseteq A\subseteq X}\sup\limits_{p \in A,t > \xi}\dfrac{|\bar{B}(p,2t) \cap A|}{|\bar{B}(p,t) \cap A|}$ where $A$ is a locally finite set which covers $R$.

\begin{thmm}[The minimized expansion constant of a finite subset $R$ of $\R^{n}$ is at most $2^{n}$]
	\label{thm:normed_space_exp_constant}
	Let $R$ be a finite subset of a normed Euclidean space $\R^{n}$.
	Let $c_m(R)$ be the minimized expansion constant of Definition~\ref{dfn:expansion_constant}, 
	then $c_m(R) \leq 2^{n}$.
\end{thmm}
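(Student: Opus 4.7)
The strategy is to realize the infimum in Definition~\ref{dfn:expansion_constant} using an approximating lattice. Fix an arbitrary $\epsilon>0$ and an arbitrary $\xi>0$, and choose a scale parameter $\delta>0$ small enough that $\delta<\frac{d_{\min}(R)}{n\rho}$ and also $\delta n\rho<\frac{\xi}{N}$ for a large integer $N=N(\epsilon)$ to be determined. Let $U(\delta)=(G(\delta)\setminus f(R))\cup R$ be the grid extension built in Lemma~\ref{lem:covering_grid_inequalities}, which is a locally finite superset of $R$ by Lemma~\ref{lem:grid_locally_finite}. The plan is to take $A=U(\delta)$ in the outer infimum defining $c_m(R)$, and then let $\delta\to 0$.

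For any $p\in U(\delta)$ and any $t>\xi$, the bounds of Lemma~\ref{lem:u_bounds} (whose proof uses only the covering of $\R^n$ by closed cubic regions $\bar V(q,\delta)$ with $q\in G(\delta)$ and therefore applies verbatim to points $p\in U(\delta)$, not just $p\in R$) give
\[
|\bar B(p,2t)\cap U(\delta)|\;\leq\;\frac{\mu(\bar B(p,2t+\delta n\rho))}{\delta^n},\qquad
|\bar B(p,t)\cap U(\delta)|\;\geq\;\frac{\mu(\bar B(p,t-\delta n\rho))}{\delta^n}.
\]
Dividing and using the scaling property $\mu(\bar B(p,r))=r^n\mu(\bar B(p,1))$ from Lemma~\ref{lem:scale_property_lebesgue_measure}, the common factor $\delta^n\mu(\bar B(p,1))$ cancels and we obtain the uniform estimate
\[
\frac{|\bar B(p,2t)\cap U(\delta)|}{|\bar B(p,t)\cap U(\delta)|}\;\leq\;\Bigl(\frac{2t+\delta n\rho}{t-\delta n\rho}\Bigr)^{\!n}.
\]

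Because $t>\xi$ and $\delta n\rho<\xi/N$, the right-hand side is at most $\bigl((2+1/N)/(1-1/N)\bigr)^n$, a quantity that tends to $2^n$ as $N\to\infty$. Hence for $N$ large enough (equivalently, $\delta$ small enough), this ratio is $\leq 2^n+\epsilon$ uniformly in $p\in U(\delta)$ and $t>\xi$. Taking the supremum over $p,t$ and then infimizing over the admissible $A=U(\delta)$ shows
\[
\inf_{R\subseteq A\subseteq\R^n}\,\sup_{p\in A,\,t>\xi}\frac{|\bar B(p,2t)\cap A|}{|\bar B(p,t)\cap A|}\;\leq\;2^n+\epsilon.
\]
This estimate is independent of $\xi$, so finally infimizing over $\xi>0$ and letting $\epsilon\to 0$ yields $c_m(R)\leq 2^n$, as required.

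The main subtlety I anticipate is the two-sided sandwich in Lemma~\ref{lem:u_bounds}: one must verify that its proof works for \emph{every} $p\in U(\delta)$ and not merely for $p\in R$, so that the sup over $A$ in the definition of $c_m(R)$ is validly bounded. Everything else is a clean limiting argument, with the aspect ratio-type estimate $(2t+\delta n\rho)/(t-\delta n\rho)\to 2$ as $\delta\to 0$ doing the actual work, and the factor $\delta^n$ cancelling thanks to the homogeneity of the Lebesgue measure on a normed vector space.
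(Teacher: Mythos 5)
Your proposal is correct and takes essentially the same route as the paper's own proof: approximate $R$ by the grid extension $U(\delta)$, apply the two-sided counting bounds of Lemma~\ref{lem:u_bounds}, cancel via the homogeneity of the Lebesgue measure (Lemma~\ref{lem:scale_property_lebesgue_measure}), and let $\delta\to 0$ to obtain the ratio $2^n$. If anything, you are more careful than the paper on the quantifiers: you explicitly note that the sandwich must hold for every $p\in U(\delta)$ (not just $p\in R$) because the supremum in Definition~\ref{dfn:expansion_constant} ranges over $A=U(\delta)$, a point the paper's proof passes over silently, and your uniform $(2+1/N)/(1-1/N)$ estimate makes the interchange of suprema and the $\delta\to 0$ limit cleaner than the paper's informal $\lim_{\delta\to 0}$ step.
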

\begin{proof}
	Let $0 < \xi < \frac{d_{\min}(R)}{2}$ be an arbitrary real number.  	
	Let $0< \delta < \frac{\xi}{n \cdot \rho}$ be a real number.
	Since $\delta < \frac{d_{\min}(R)}{2 \cdot n \cdot \rho}$ by Lemma \ref{lem:u_bounds} we have: $$\frac{\mu( \bar{B}(p,t - \delta \cdot n \cdot \rho))}{\delta^n} \leq |\bar{B}(p,t) \cap U(\delta)| \leq \frac{\mu( \bar{B}(p,t + \delta \cdot \gamma \cdot n \cdot \rho))}{\delta^n}$$
Note that by Lemma \ref{lem:scale_property_lebesgue_measure} we have: $\mu(\bar{B}(q,y)) =  y^{n} \cdot \mu (\bar{B}(q,1) )$ for any $q \in \R^n$ and $y \in \R_{+}$. Therefore 
	$$ \frac{|\bar{B}(p, 2t) \cap U(\delta)|}{|\bar{B}(p, t) \cap U(\delta)|} \leq \frac{\mu(\bar{B}(p, 2t + n\delta\rho)) \cdot \delta^2}{\mu(\bar{B}(p, t - n\delta\rho)) \cdot \delta^2} = \frac{ (2t + n\delta\rho)^n \cdot \mu(\bar{B}(p,1)) }{ (t - n\delta\rho)^n \cdot \mu(\bar{B}(p,1)) } 
	= \frac{(2t + n\delta\rho)^n}{(t - n\delta\rho)^n},$$
	is satisfied for for all $t > \xi$.
	Since $0 < \xi < \frac{d_{\min}(R)}{2}$ was chosen arbitrarily, we conclude that:  $$c_m(R) =  \inf\limits_{0 < \xi}\inf\limits_{R\subseteq A\subseteq X}\sup\limits_{p \in A,t > \xi}\dfrac{|\bar{B}(p,2t)| \cap A}{|\bar{B}(p,t)| \cap A} \leq \lim_{\delta \rightarrow 0} \frac{|\bar{B}(p, 2t) \cap U(\delta)|}{|\bar{B}(p, t) \cap U(\delta)|} = \lim_{\delta \rightarrow 0} \frac{(2t + \delta \cdot n\rho)^n}{(t - \delta \cdot n\rho)^n} = \frac{2^n \cdot t^n}{t^n} = 2^{n}.$$

\end{proof}

\section{Counterexamples for ICML 2006 complexity analysis}
\label{app:challenge_trees}

This section show that both proofs of the main theorems \cite[Theorem~5]{beygelzimer2006cover} and \cite[Theorem~6]{beygelzimer2006cover} are incorrect. 
Since the same false arguments were later repeated in the papers \cite{march2010fast} and \cite{ram2009linear} 
we provide a detailed counterexamples that expose the contradiction within each of the proof of the theorems. 
\medskip

\noindent
Counterexample~\ref{cexa:construction_algorithm_of_original_cover_tree} shows that the proof of worst-case time complexity theorem of the Insert algorithm for an implicit cover tree \cite[Theorem~6]{beygelzimer2006cover} is incorrect.
\medskip

\noindent
Counterexample \ref{cexa:original_all_nearest_neighbors_algorithm} shows that the proof of  \cite[Theorem~5]{beygelzimer2006cover}, which gives an upper bound for the complexity of Algorithm \ref{alg:cover_tree_k-nearest_original} is incorrect as well. 
Both counterexamples are based on Example \ref{exa:tall_imbalanced_tree}, which extends Example \ref{exa:implicitexplicitexample}.

\medskip

\noindent
Although given proofs are incorrect, it is not known if claimed worst-case time complexity estimates are true for \cite[Theorem~5]{beygelzimer2006cover} and \cite[Theorem~6]{beygelzimer2006cover} which claim a parametrized worst-case near-linear time complexities for algorithms \cite[Algorithms~1]{beygelzimer2006cover} and \cite[Algorithms~2]{beygelzimer2006cover}, respectively. The issues are fixed by using similar techniques, but by taking more rigorous approach to the problem. New approach with new worst-case time complexity theorems for cover tree construction is explained in Section~\ref{sec:ConstructionCovertree} and for $k$-nearest neighbor search in Section~\ref{sec:better_approach_knn_problem}.

\begin{figure}
	\centering
	
	\tikzset{markpos/.style args={#1 at #2}{decoration={
  markings,
  mark=at position #2 with {\coordinate(#1);}},postaction={decorate}}}
  
\begin{tikzpicture}
\node[circle,fill=black,inner sep=2pt,draw,label = below:{$r$}] (a) at (180:6cm) {};
\node[circle,fill=black,inner sep=2pt,draw,label = below:{$q$}] (b) at (0:6cm) {};
\draw[thick] (a) edge[bend left=90, markpos=mymark1 at 0.5] (b);
\draw[thick] (a) edge[bend left=90, markpos=edgemark1 at 0.30] (b);
\draw[thick] (a) edge[bend left=90, markpos=anothermark1 at 0.75] (b);

\draw[thick] (a) edge[bend left=50, markpos=mymark2 at 0.5] (b);
\draw[thick] (a) edge[bend left=50, markpos=edgemark2 at 0.25] (b);
\draw[thick] (a) edge[bend left=50, markpos=anothermark2 at 0.75] (b);

\draw[thick] (a) edge[bend left=13, markpos=mymark3 at 0.5] (b);
\draw[thick] (a) edge[bend left=13, markpos=edgemark3 at 0.33] (b);
\draw[thick] (a) edge[bend left=13, markpos=anothermark3 at 0.75] (b);

\draw[thick] (a) edge[markpos=edgemark4 at 0.33] (b);

\node[circle,fill=black,inner sep=2pt,draw,label = {$p_{m^2}$}] (pm1) at (mymark1) {};
\node[circle,fill=black,inner sep=2pt,draw,label = {$p_{m^2-m}$}] (pm2) at (mymark2) {};
\node[circle,fill=black,inner sep=2pt,draw,label = below:{$p_{m}$}] (p2) at (mymark3) {};

\node[circle,fill=black,inner sep=2pt,draw,label = {$p_{m^2-1}$}] (pm1z) at (anothermark1) {};
\node[circle,fill=black,inner sep=2pt,draw,label = {$p_{m^2-m-1}$}] (pm2z) at (anothermark2) {};
\node[circle,fill=black,inner sep=2pt,draw,label = below:{$p_{m-1}$}] (p2z) at (anothermark3) {};

\node[label = below:{$|e_{0}| = 1$}] (l1) at (edgemark4){};

\node[label = below:{$|e_{1}| = 2^{m + 2} $}] (l2) at (edgemark3){};

\node[label = {[rotate=20]above:$|e_{m-2}| = 2^{m^2 -m + 2} $}] (l2) at (edgemark2){};
\node[label = {[rotate=15]above:$|e_{m-1}| = 2^{m^2+2} $}] (l3) at (edgemark1){};

  \draw[loosely dotted, very thick] (mymark2) -- (mymark3);

\end{tikzpicture}
	
	\caption{Illustration of a graph $G$ and a point cloud $R$ defined in Example \ref{exa:tall_imbalanced_tree}}
	\label{fig:GraphConstructionOfExample}
\end{figure}

\begin{exa}[tall imbalanced tree]
	\label{exa:tall_imbalanced_tree}
	Let $(R,d)$ be a finite metric space. 
	Consider a tree in Figure \ref{fig:bad_cover_tree} containing $\sqrt{|R|}$ branches, each having a size of $\sqrt{|R|}$. 
	Simulation of Algorithms~\ref{alg:cover_tree_construction_original} and Algorithm~\ref{alg:cover_tree_k-nearest_original} will show that both algorithms iterate over all nodes of $R$. 
	Original  \cite[Theorem~5]{beygelzimer2006cover} and \cite[Theorem~6]{beygelzimer2006cover} claimed that the number of iterations is proportional to the length of a longest root-to-leaf path that has a size of $\sqrt{|R|}$, which will contradict our findings. 
	\medskip

    \noindent 	
	For any integer $m > 10$, let $G$ be a metric graph pictured in Figure \ref{fig:GraphConstructionOfExample} that has two vertices $r,q$ and $m+1$ edges $(e_i)$ for $i \in \{0, ..., m\}$, and the length of each edge $e_i$ is $|e_i| = 2^{m \cdot i +2}$ for $i \geq 1$.
	Finally, set $|e_0| = 1$. For every $i \in \{1, ..., m^2\}$ if $i$ is divisible by $m$ we set $p_{i}$ be the middle point of $e_{i / m}$ and for every other $i$ we define $p_i$ to be the middle point of segment $(p_{i+1}, q)$. 
	Let $d$ be the induced shortest path metric on the continuous graph $G$.
	Then $d(q,r) = 1$, $d(r, p_{i}) = 2^{i+1} + 1$, $d(q,p_{i}) = 2^{i} $.
	If $i > j $ and $\ceil{\frac{i}{m}} = \ceil{\frac{j}{m}}$, then $ d(p_{j}, p_{i})  = \sum\limits_{t = j+1}^{i} 2^{t}$. 
	We consider the reference set $R = \{r\} \cup \{p_{i} \mid i=1,2,3,...,m^2 \}$ with the metric $d$.
	\medskip

    \noindent 	
	Let us define a compressed cover tree $\T(R)$ by setting $r$ to be the root node and $l(p_{i}) = i$ for all $i$. 
	If $i$ is divisible by $m$, we set  $r$ to be the parent of $p_{i}$. 
	If $i$ is not divisible by $m$, we set $p_{i+1}$ to be the parent of $p_{i}$. 
	For every $i$ divisible by $m$, the point $p_i$ is in the middle of edge $e_{i / m}$, hence $d(p_{i}, r) \leq 2^{i + 1}$.
	For every $i$ not divisible by $m$, by definition, $p_i$ is the middle point of $(p_{i+1},q)$.
	Therefore, we have $d(p_i, p_{i+1}) \leq 2^{i+1}$. 
	Since for any point $p_i$ distance to its parent is at most $2^{i+1}$, the tree $\T(R)$ satisfies covering condition (\ref{dfn:cover_tree_compressed}b).
	For any integer $t$, the cover set is $C_t = \{r\} \cup \{ p_i \mid i \geq t\}$. 
	We will prove that $C_t$ satisfies (\ref{dfn:cover_tree_compressed}c). 
	Let $p_{i} \in C_t$.
	If $i$ is divisible by $m$, then 
	$d(r, p_i) = 2^{i+1} \geq 2^{t + 1} > 2^t.$ 
	If $i$ is not divisible by $m$, then
	$d(r, p_{i}) = d(r,q) + d(q,p_{i})  = 1 + 2^{i+1}  > 2^{t}$. 
	\medskip
	
	\noindent 
	Then the root $r$ is separated from the other points by the distance $2^t$. 
	Consider arbitrary points $p_{i}$ and $p_{j}$ with indices $i > j \geq t $ and $\ceil{\frac{i}{m}} = \ceil{\frac{j}{m}}$.
	Then
	$$d(p_{i}, p_{j}) = \sum^{i}_{s = j+1} 2^{s}  \geq 2^{j+1} \geq 2^{t+1} > 2^{t}.$$
	On the other hand, if $i > j \geq t $ and $\ceil{\frac{i}{m}} \neq \ceil{\frac{j}{m}}$, then
	$$d(p_{i} , p_{j}) = d(p_{i},q) + d(p_{j} ,q)  \geq  2^{i} + 2^{j} \geq 2^{j+1} \geq 2^{t+1} > 2^t.$$
	For any $t$, we have shown that all pairwise combinations of points of $C_{t}$ satisfy condition~(\ref{dfn:cover_tree_compressed}c).
	Hence this condition holds for the whole tree $\T(R)$.
	\bs
\end{exa}

\noindent 
Let us now define the concept of explicit depth, that corresponds to maximal root-to-node path of any cover tree. 
Recall that in \cite[Section~2]{beygelzimer2006cover} the explicit representation of cover tree was defined as 
"the explicit representation of the tree coalesces all nodes in which the only child is a self-child". Simplest way to interpret this is to consider cover sets $C_i$ and define $p \in C_i$ to be an explicit node, if $p$ has child at level $i-1$.  By  \cite[Lemma~4.3]{beygelzimer2006cover} depth of any node $p$ is "defined as the number of explicit grandparent nodes on the path from the root
to p in the lowest level in which $p$ is explicit". Explicit depth of a node $p$ in any compressed tree $\T$ will be defined in Definition \ref{dfn:explicit_depth_for_compressed_cover_tree} using the simplest interpretation of the aforementioned quotes.

\begin{lem}[Explicit depth for compressed cover tree]
	\label{dfn:explicit_depth_for_compressed_cover_tree}
	Let $R$ be a finite subset of a metric space with a metric $d$.
	Let $\T(R)$ be a compressed cover tree on $R$. 
	For any $p \in \T(R)$, let $s = (w_0, ... , w_m)$ be a node-to-root path of $p$. 
	Then the explicit depth $D(p)$ of node $p$ belonging to compressed cover tree can be interpreted as the sum 
	$$D(p) = \sum^{m-1}_{i = 0}| \{q \in \Child(w_{i+1}) \mid l(q) \in [l(w_i), l(w_{i+1}) - 1] \} |. $$
	\bs
\end{lem}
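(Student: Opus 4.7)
The plan is to translate the original definition quoted from \cite[Lemma~4.3]{beygelzimer2006cover}, which lives in the implicit cover-tree world, into a purely combinatorial count on the compressed tree. First I would unpack how the implicit cover tree arises from a compressed cover tree $\T(R)$: every point $r'\in R$ is duplicated as a self-child at every integer level $j\leq l(r')$, and an implicit node $r'$ at level $j$ is \emph{explicit} exactly when the chain of self-copies of $r'$ is ``broken'' at $j$, i.e.\ either $j=l(r')$ (top of the chain) or there exists a compressed-tree child $q\in\Child(r')\setminus\{r'\}$ with $l(q)=j-1$ (a non-self child forces $r'$ at level $j$ to be explicit).

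Next I would describe the implicit root-to-$p$ path induced by the node-to-root path $w_0,\ldots,w_m$ of $p$. For each $i\in\{0,\ldots,m-1\}$, this path contains a block of self-copies of $w_{i+1}$ at the levels $l(w_i)+1,l(w_i)+2,\ldots,l(w_{i+1})$, followed by the transition to the non-self child $w_i$ at level $l(w_i)$. The ``lowest level at which $p$ is explicit'' determines only where the count terminates at the $w_0$-end and therefore only affects the boundary block; the internal blocks are those indexed by $i\in\{0,\ldots,m-1\}$ and contribute independently to the total.

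Then I would count the explicit appearances of $w_{i+1}$ inside its block. By the characterisation above, $w_{i+1}$ at level $j$ is explicit in the implicit tree precisely when some $q\in\Child(w_{i+1})$ with $q\neq w_{i+1}$ satisfies $l(q)=j-1$. As $j$ ranges over $\{l(w_i)+1,\ldots,l(w_{i+1})\}$, the associated child levels $j-1$ range over $\{l(w_i),\ldots,l(w_{i+1})-1\}$, which is exactly the interval appearing in the stated formula. Summing the blocks for $i=0,\ldots,m-1$ and using $\Child(w_{i+1})$ (with $w_{i+1}$ itself excluded automatically because $l(w_{i+1})\notin[l(w_i),l(w_{i+1})-1]$) gives the formula for $D(p)$.

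The main subtle point, and the one I would flag explicitly, is what happens when $w_{i+1}$ has two or more compressed-tree children at the same level: each such child triggers the same explicit appearance of $w_{i+1}$, yet the formula counts them separately. This is why the statement says ``can be interpreted as'' rather than ``equals'': the chosen count is the simplest combinatorial proxy derived from the original quote, and it dominates (and, whenever all children of each ancestor sit at distinct levels, coincides with) the number of explicit grandparent appearances of the \cite[Lemma~4.3]{beygelzimer2006cover} definition. No growth-bound or packing argument is needed here; the justification is entirely a bookkeeping translation between the implicit and compressed representations set up in Definition~\ref{dfn:cover_tree_compressed} and the surrounding text.
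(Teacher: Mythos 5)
Your proposal matches the paper's own argument: the paper likewise expands the compressed node-to-root path into the implicit one, observes that the $l(w_{i+1})-l(w_i)-1$ extra self-copies of $w_{i+1}$ between $w_i$ and $w_{i+1}$ are explicit exactly when $w_{i+1}$ has a non-trivial child at the level just below, identifies the relevant child levels with the interval $[l(w_i),l(w_{i+1})-1]$, and sums over the blocks $i=0,\dots,m-1$. Your additional remark about several children sharing a level (where the stated formula counts children rather than explicit appearances, justifying the wording ``can be interpreted as'') is a subtlety the paper's proof glosses over by asserting exactness, so your version is, if anything, slightly more careful while following the same bookkeeping route.
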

\begin{proof}
	Note that the node-to-root path of an implicit cover tree on $R$ has $l(w_{j+1}) - l(w_{j}) - 1$ extra copies of $w_{j+1}$ between every $w_{j}$ and $w_{j+1}$ for any index $j \in [0,m-1]$. Recall that a node is called explicit, if it has non-trivial children. 
	Therefore there will be exactly $$| \{q \in \Child(w_{i+1}) \mid l(q) \in [l(w_i), l(w_{i+1}) - 1] \} | $$ 
	explicit nodes between $w_{j}$ and $w_{j+1}$.
	It remains to take the total sum.
\end{proof}

\noindent 
Lemma \ref{lem:tall_imbalanced_tree_explicit_depth} shows that the cover tree of Example \ref{exa:tall_imbalanced_tree} the maximal node-to-root has $2 \cdot \sqrt{|R|}$ size, where $|R|$ is the size of dataset. 

\begin{lem}
	\label{lem:tall_imbalanced_tree_explicit_depth}
	Let $\T(R)$ be a compressed cover tree on the set $R$ from Example \ref{exa:tall_imbalanced_tree} for some $m \in \Z$. 
	For any $p \in R$, the explicit depth $D(p)$ of Definition~\ref{dfn:explicit_depth_for_compressed_cover_tree} has the upper bound $2m+1$. 
	\bs
\end{lem}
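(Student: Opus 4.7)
The plan is to explicitly analyse the node-to-root path of each point $p \in R$ and then sum up the level-contributions prescribed by the formula in Definition~\ref{dfn:explicit_depth_for_compressed_cover_tree}. The structure of $\T(R)$ from Example~\ref{exa:tall_imbalanced_tree} makes this tractable: the root $r$ has exactly the $m$ children $p_m, p_{2m}, \dots, p_{m^2}$ located at the levels $m, 2m, \dots, m^2$, while for every other node $p_i$ (with $i \not\equiv 0 \pmod m$) the parent $p_{i+1}$ sits exactly one level higher and $p_i$ has at most one child, namely $p_{i-1}$, on the level directly below it. A quick check from the parent rule in Example~\ref{exa:tall_imbalanced_tree} shows that $p_{i+1}$ has $p_i$ as a child if and only if $i$ is not divisible by $m$.

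I would split the argument into three cases. First, $p = r$ has an empty node-to-root path, so $D(r) = 0$ trivially. Second, $p = p_{km}$ for some $k \in \{1,\dots,m\}$ has the length-one path $p_{km} \to r$; the only contributing step counts the children of $r$ whose levels lie in $[km, l(r) - 1]$, and these are exactly $p_{km}, p_{(k+1)m}, \dots, p_{m^2}$, yielding $D(p_{km}) = m - k + 1 \leq m$.

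The main case is $p = p_i$ with $(k-1)m < i < km$ for some $k \in \{1,\dots,m\}$. Here the node-to-root path is
$$ w_0 = p_i,\ w_1 = p_{i+1},\ \dots,\ w_{km-i} = p_{km},\ w_{km-i+1} = r.$$
For each intermediate index $j \in \{0, \dots, km-i-1\}$, the set $\{q \in \Child(w_{j+1}) \mid l(q) \in [l(w_j), l(w_{j+1}) - 1]\}$ equals the children of $p_{i+j+1}$ lying on the single level $i+j$; since $(k-1)m < i+j < km$ implies $i+j$ is not divisible by $m$, this set is exactly $\{p_{i+j}\}$ and contributes $1$. The final step to $r$ contributes $m - k + 1$ exactly as in the previous case. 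Summing gives
$$ D(p_i) = (km - i) + (m - k + 1) \leq (m - 1) + m = 2m - 1 \leq 2m + 1, $$
where the first inequality uses $i \geq (k-1)m + 1$ and $k \geq 1$. This exhausts every $p \in R$ and establishes the claimed bound.

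The only subtle point to verify carefully is the children-counting step: at each intermediate node $p_{i+j+1}$ along the chain, exactly one child appears on the required level range, and no other child of $p_{i+j+1}$ lies at that level. This is the mildest part of the proof, as it follows at once from the explicit parent assignment in Example~\ref{exa:tall_imbalanced_tree}; the rest of the argument is a direct arithmetic summation.
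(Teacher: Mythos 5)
Your proof is correct and follows essentially the same route as the paper: decompose the node-to-root path, observe that each intermediate step along the chain $p_i,p_{i+1},\dots,p_{m\ceil{i/m}}$ contributes exactly one child and the final step to the root contributes $m-\ceil{i/m}+1$, then sum. Your case split by $k$ and slightly tighter bound $2m-1$ are only cosmetic differences from the paper's computation $D(p_i)=(m\ceil{i/m}-i)+(m-\ceil{i/m}+1)\leq 2m+1$.
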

\begin{proof}
	For any $p_i$, if $i$ is divisible by $m$, then $r$ is the parent of $p_i$. By definition, the explicit depth is $D(p_i) = |\{p \in \Child(r) \mid l(p) \in [l(p_i), m^2] \} |$. 
	Since $r$ contains children on every level $j$, where $j$ is divisible by $m$, we have $D(p_i) =  m - \frac{i}{m} + 1$. 
	\medskip
	
	\noindent 
	Let us now consider an index $i$ that is not divisible by $m$. 
	Note that  $p_{j+1}$ is the parent of $p_j$ for all $j \in [i, m \cdot \ceil{i / m} - 1]$.  
	Then the path consisting of all ancestors of $p_i$ from $p_i$ to the root node $r$ has the form $(p_{i}, p_{i+1}, ..., p_{m \cdot \ceil{i / m}} , r)$. It follows that
	$$D(p_i) = \sum^{m \cdot \ceil{i / m} -1}_{j = i}| \{p \in \Child(p_j) \mid l(p) \in [l(p_{j}), l(p_{j+1}) - 1] \}| + D(p_{m \cdot \ceil{i / m}}).
	$$
	Since $i \geq m \cdot (\ceil{i / m} -1) + 1 $ and $\ceil{i / m} \geq 1$, we get the required upper bound: $$D(p_i) =  (m \cdot  \ceil{i / m} - i ) + (m - \ceil{i / m} + 1) \leq m + (m+1) = 2m + 1.$$ 
\end{proof}


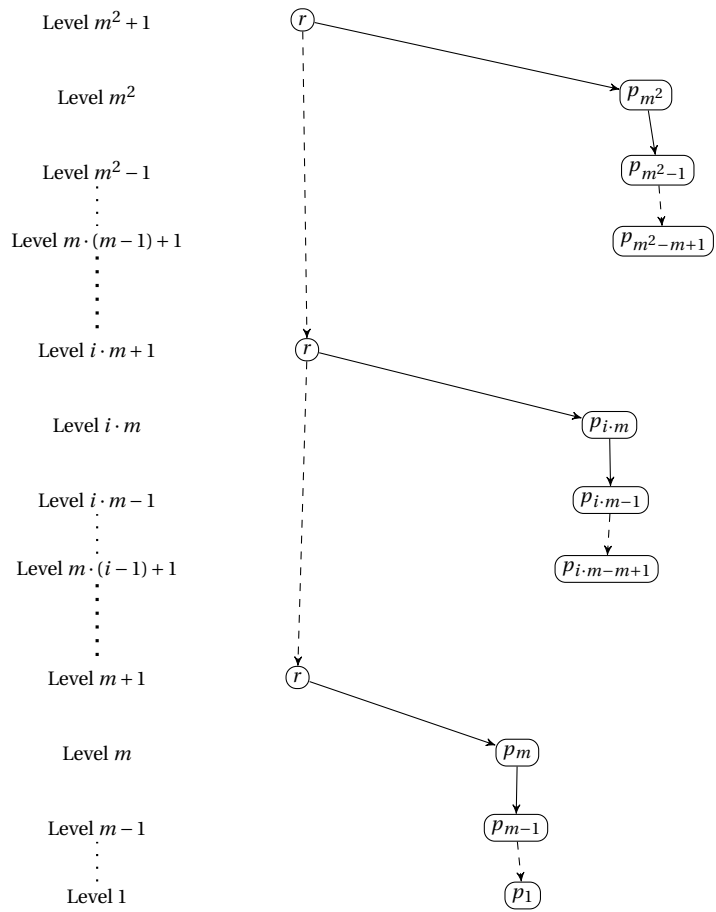
\begin{figure}
	\centering
	
	\begin{tikzpicture}[align=center, node distance = 1.0cm, scale = 0.25]

\scriptsize

	\node (scalem2text) {Level $m^2 + 1$};

	\node[below of =scalem2text] (scalem1text) {Level $m^2 $};
	\node[below of =scalem1text] (scalemh01text) {Level $m^2 - 1$};
	\node[below = 15 pt of scalemh01text] (scalem11text) {Level $m \cdot (m-1) + 1$};

		\node[below = 30 pt of  scalem11text] (scalem1text2) {Level $i \cdot m + 1 $};
		
			\node[below of = scalem1text2] (scalem1text22) {Level $i \cdot m$};
		
	\node[below of =scalem1text22] (scalemh01text2) {Level $i \cdot m - 1$};
	\node[below = 15 pt of scalemh01text2] (scalem11text22) {Level $m \cdot (i-1) + 1$};

		\node [blockzflexi,  right=50pt of scalem2text ] (noderoot) {$r$};

	\node [blockzflexi,  right=180pt of scalem1text] (node1) {$p_{m^2}$};
	\node [blockzflexi, right=175pt of scalemh01text] (node2) {$p_{m^2-1}$};

			\node [blockzflexi, right=160pt of scalem11text] (node4) {$p_{m^2 - m +1}$};

				\node [blockzflexi,  right=164pt of scalem1text22] (node1d) {$p_{i \cdot m}$};
				
					\node [blockzflexi,  right=50pt of scalem1text2] (node1dcopyr) {$r$};

	\node [blockzflexi, right=155pt of scalemh01text2] (node2d) {$p_{i \cdot m-1}$};
			\node [blockzflexi, right=140pt of scalem11text22] (node4d) {$p_{i \cdot m - m +1}$};

	\node[below = 30 pt of scalem11text22] (scalem1textla) {Level  $m+1$};

		\node[below  of =  scalem1textla] (scalem11text2) {Level  $m$};
		
			\node [blockzflexi,  right=50pt of scalem1textla] (noderootcopy3) {$r$};

	\node[below of =scalem11text2] (scalemm1text) {Level $m-1$};
	\node[below = 15 pt of scalemm1text] (level1) {Level $1$};

			\node [blockzflexi, right=135pt of scalem11text2] (node5) {$p_{m}$};
				
			\node [blockzflexi, right=125pt of scalemm1text] (node6) {$p_{m-1}$};
			
			\node [blockzflexi, right=140pt of level1] (node7) {$p_{1}$};


	  \draw[->] (noderoot) -> (node1);
	  \draw[->,dashed] (noderoot) -> (node1dcopyr);
	  \draw[->] (node1dcopyr) -> (node1d);
	  \draw[->] (node1) -> (node2);
	  \draw[->, dashed] (node2) -> (node4);
	  
	   \draw[->] (node1d) -> (node2d);
	  \draw[->, dashed] (node2d) -> (node4d);
	 
	  \draw[->, dashed] (node1dcopyr) -> (noderootcopy3);
	  
	  \draw[->] (noderootcopy3) -> (node5);
	  \draw[->] (node5) -> (node6);
	  \draw[->, dashed] (node6) -> (node7);
	 
	 \draw[loosely dotted, thick] (scalem11text22) -- (scalemh01text2);
	 \draw[loosely dotted, very thick] (scalem11text) -- (scalem1text2);
	  \draw[loosely dotted, thick] (scalemh01text) -- (scalem11text);
	  \draw[loosely dotted, very thick] (scalem11text22) -- (scalem1textla);
	  \draw[loosely dotted, thick] (level1) -- (scalemm1text);

\end{tikzpicture}
	
	\caption{Illustration of explicit cover tree built on the dataset of Example \ref{exa:tall_imbalanced_tree}}
	\label{fig:bad_cover_tree}
	
\end{figure}

\begin{algorithm}
	\caption{Copy-pasted Insert() algorithm for inserting point $p$ into an implicit cover tree $T$ \cite[Algorithm~2]{beygelzimer2006cover}. This algorithm is launched with $i = l_{\max}$ and $Q_{i} = \{r\}$, where $r$ is the root node of $T$. }
	\label{alg:cover_tree_construction_original}
	\begin{algorithmic}[1]
		\STATE \textbf{Insert}(point $p$, cover set $Q_i$, level $i$)
		\STATE Set $Q = \{\Child(q) \mid q \in Q_i\}$
		\IF {$d(p,Q) > 2^{i}$}
		\STATE \textbf{return} "no parent found"
		\ELSE
		\STATE Set $Q_{i-1} = \{q \in Q \mid d(p,q) \leq 2^{i}\}$
		\IF{\textbf{Insert}$(p,Q_{i-1}, i-1)$ = "no parent found" and $d(p,Q_{i}) \leq 2^{i}$} 
		\STATE Pick $q \in Q_i$ satisfying $d(p,q) \leq 2^{i}$ and insert $p$ into $\Child(q)$, \textbf{return} "parent found"
		\ELSE
		\STATE \textbf{return} "no parent found"
		\ENDIF
		
		\ENDIF 
		
	\end{algorithmic}
\end{algorithm}

\begin{cexa}[Counterexample for a step in the proof of {\cite[Theorem~6]{beygelzimer2006cover}}]
	\label{cexa:construction_algorithm_of_original_cover_tree}
	
	Counterexample \ref{cexa:construction_algorithm_of_original_cover_tree} shows that there is a mistake in proof of \cite[Theorem~6]{beygelzimer2006cover}. The idea is based on adding a new point $q$ of Figure \ref{fig:GraphConstructionOfExample} to the tree $\T(R)$ of Example \ref{exa:tall_imbalanced_tree} that lures the Algorithm \ref{alg:cover_tree_construction_original} into using all branches of $\T(R)$. It follows that the Algorithm \ref{alg:cover_tree_construction_original} is launched $O(|R|)$ times. However, in the proof of \cite[Theorem~6]{beygelzimer2006cover} it was claimed that Algorithm \ref{alg:cover_tree_construction_original} is
	launched at most $4 \cdot D(\T(R))$, where $D(\T(R))$ is the explicit depth of explicit cover tree $\T(R)$.
	This is a contradiction, since $D(\T(R)) \leq 2\sqrt(|R|) + 1$.
	
	\medskip
    \noindent
	For more detailed exhibition let us first cite a part of the proof of \cite[Theorem~6]{beygelzimer2006cover}:
	
	\medskip
	
	\noindent
	"\textbf{Theorem 6} Any insertion or removal takes time at most $O(c^6\log(n))$"
	[In other words the run time of Algorithm \ref{alg:cover_tree_construction_original} is $O(c^6\log(n))$, where $n$ is the number points of original dataset $S$ on which tree $T$ was constructed.]
	
	\medskip
	
	\noindent 
	[\emph{Part of the proof} ]: \emph{" Let $k = c^2 \log(|S|)$ be the maximum explicit depth
	of any point, given by Lemma 4.3. Then the total
	number of cover sets with explicit nodes is at most
	$3k+k = 4 k$, where the first term follows from the fact
	that any node that is not removed must be explicit at
	least once every three iterations, and the additional
	$k$ accounts for a single point that may be implicit for
	many iterations.
	Thus the total amount of work in Steps 1 [Our line 2] and 2 [Our lines 3-5] is
	proportional to $O(k \cdot \max_i|Q_i|)$. Step 3 [Our lines 5-11] requires work
	no greater than step 1 [Our line 2]."}
	\medskip
	
	\noindent 
	\textbf{In our interpretation}: the above arguments says that the total number of times line $1$ [our line 2] was called during the algorithm has the upper bound $4 \cdot \max_{p \in R}D(p)$ , where $D(p)$ is the explicit depth of a point $p$, see Definition \ref{dfn:explicit_depth_for_compressed_cover_tree}. In this Counterexample we will show that $\T(R)$ from Example \ref{exa:tall_imbalanced_tree} does not satisfy the claimed inequality.

\medskip	
	
	\noindent
	Take the reference set $R$, the compressed cover tree $\T(R)$ and the point $q$ from Example \ref{exa:tall_imbalanced_tree} for any parameter $m > 200$. 
	Assume that we have already constructed tree $\T(R)$. 
	Let us show that $\T(R \cup q)$ constructed by Algorithm~\ref{alg:cover_tree_construction_original} from the input $q, i = m^2+1, Q_i = \{r\}$ runs at least $m^2-2$ self-recursions. 
	This will lead to a contradiction since by Lemma \ref{lem:tall_imbalanced_tree_explicit_depth} any node $ p\in \T(R)$ has $D(p) \leq 2m + 1$.

    \smallskip
	
	\noindent
	We show by induction on $m$ going down that, for every step $i \in [1, m^2]$, we have $Q_i = \{r,p_i\}$. 
	The proof for the base case $i = m^2$ is similar to the induction step and thus will be omitted. 
	Assume that $Q_{i}$ has the desired form for some $i$.
	Let us show that the claim holds for $i-1$. 
	For all levels $i-1$ divisible by $m$, the node $p_{i-1}$ is a child of the root $r$. 
	For all levels $i-1$ not divisible by $m$, the node $p_{i}$ is a child of $p_i$. 
	Since $\T(R)$ contains exactly one node at each level, in both cases we have $Q = \{r, p_{i}, p_{i-1}\}$. 
	Since $d(q,r) = 1$, $d(q,p_{i}) = 2^{i+1}$ and $d(q,p_{i-1}) = 2^{i}$  we have $$Q_{i-1} = \{p \in Q_i \mid d(p,q) \leq 2^i\} = \{r,p_i\}.$$
	
	\smallskip
	
	\noindent
	The actual implementation of Algorithm~\ref{alg:cover_tree_construction_original} iterates over all levels $i$ for which there exists a node in $Q_i$  that contains at least one non-trivial child on level $i-1$ and for which the condition in line $7$ is satisfied. Since for every index $i \in [2,m^2+1]$ we have $Q_i = \{r,p_i\}$ and since either $r$ or $p_{i}$ has a child at level $i-1$ and the condition in line $7$ is always satisfied, it follows that $m^2-2$ is a low bound for the number $\xi$ of self-recursions. 
	Therefore the contradiction follows from the inequality:
	$$m^2-2 \leq  \xi \leq 4 \cdot \max_{p \in R}D(p) \leq 8 \cdot (2m + 1) \leq 16 \cdot m + 8$$
	where $m > 20$. 
	\bs

\end{cexa}

\begin{algorithm}
	\caption{Copy-pasted \cite[Algorithm~1]{beygelzimer2006cover} based on an implicit cover tree $T$ \cite[Section ~ 2]{beygelzimer2006cover} for nearest neighborhood search, which is used in Counterexample \ref{cexa:original_all_nearest_neighbors_algorithm}. The children of a node $q$ of an implicit cover tree are defined as the nodes at one level below $q$ that have $q$ as their parent. In the actual implementation the loop in lines 3-6 runs only for the levels containing nodes with non-trivial children (not coinciding with their parents). 
		The level $+\infty$ can be replaced by $l_{\max}(T)$ and $-\infty$ can be replaced by $l_{\min}(T)$ in the code.
	}
	\label{alg:cover_tree_k-nearest_original}
	\begin{algorithmic}[1]
		\STATE \textbf{Input} : implicit cover tree $T$, a query point $p$
		\STATE Set $Q_{\infty} = C_{\infty}$ where $C_{\infty}$ is the root level of $T$
		\FOR{$i$ from $\infty$ down to $-\infty$}
		\STATE Set $Q = \{\text{Children}(q) \mid q \in Q_i\}.$
		\STATE Form cover set $Q_{i-1} = \{q \in Q \mid d(p,q) \leq d(p,Q) + 2^{i}\}$
		\ENDFOR
		\STATE \textbf{return} $\text{argmin}_{q \in Q_{-\infty}}d(p,q)$
	\end{algorithmic}
\end{algorithm}

\begin{cexa}[for a step in the proof of {\cite[Theorem~5]{beygelzimer2006cover}}]
	\label{cexa:original_all_nearest_neighbors_algorithm}
	
	Counterexample \ref{cexa:original_all_nearest_neighbors_algorithm} shows that there is a mistake in proof of \cite[Theorem~6]{beygelzimer2006cover}. The counterexample is obtained by running Algorithm \ref{alg:cover_tree_k-nearest_original} for node $q$ of Figure \ref{fig:GraphConstructionOfExample} and tree $\T(R)$ of Example \ref{exa:tall_imbalanced_tree}. 
    It it shown that Algorithm \ref{alg:cover_tree_k-nearest_original} iterates over all branches of $\T(R)$, therefore  lines 3-6 are considered exactly $|R|$ times. However, the proof of \cite[Theorem~6]{beygelzimer2006cover} claimed that the number of times lines 3-6 are considered is bounded by multiplication $\max_i|R_i| \cdot D(\T(R))$, where $D(\T(R))$ is the maximal path-to-root path that has an upper bound $2\sqrt{|R|}$. In this counterexample it will be also shown that $\max_i|R_i| \leq 3$ during the whole iteration of the algorithm, which will lead to contradiction $|R| \leq 3 \cdot 2\sqrt{|R|}$, when $|R|$ is sufficiently big. 
    
    \medskip
    \noindent 
   	For more detailed exhibition let us first cite a part of the proof of  \cite[Theorem~5]{beygelzimer2006cover}:
    
	\noindent
	\emph{"\textbf{Theorem 5}
	If the dataset $S \cup \{p\}$ has expansion constant $c$, the nearest neighbor of $p$ can be found in time $O(c^{12}\log(n))$."}
	
	\medskip
	
    \noindent	
	\emph{[\emph{Partial proof:}] "Let $Q^{*}$ be the last $Q$ considered by the Algorithm \ref{alg:cover_tree_k-nearest_original} (so $Q^{*}$ consists only of lead nodes with scale $-\infty$). Lemma 4.3 bounds the explicit depth of any node in the tree (and in particular any node in $Q^{*}$) by $k = O(c^2 \log (N))$. Consequently the number of iterations is at most $k|Q^{*}| \leq k \max_i|Q_i|$."}
	
	\medskip
	
	\noindent 
	\textbf{By our interpretation:} the above argument claims that the total number $\xi$ of times when Algorithm \ref{alg:cover_tree_k-nearest_original} runs lines 3-6 has an upper bound $\xi \leq \max_{p \in R}D(p) \cdot \max_i|Q_i|.$ Contradiction will be obtained by showing that $\T(R)$ from Example \ref{exa:tall_imbalanced_tree} does not satisfy this inequality.

\medskip	
	
	\noindent
	Take $R,\T(R)$ and $q$ from Example \ref{exa:tall_imbalanced_tree}. We will apply Algorithm \ref{alg:cover_tree_k-nearest_original} to the tree $\T(R)$ and query point $q$. 
	By Lemma \ref{lem:tall_imbalanced_tree_explicit_depth} the cover tree $\T(R)$ having parameter $m$ has $D(p) \leq 2m+1$ for all $p \in R$. A contradiction to the original argument will follow after showing that $\max|Q_i| \leq 2$ and $\xi \geq m^2 - 2$.
	
	\medskip

\noindent	
	Let us first estimate $\max_i |Q_i|$.  
	Similarly to Counterexample \ref{cexa:construction_algorithm_of_original_cover_tree} we will show that, for every iteration (lines 3-5) $i \in [1, m^2]$ of Algorithm \ref{alg:cover_tree_k-nearest_original}, we have 
	$Q_i = \{r,p_i\}$. 
	The proof for the basecase $i = m^2$ is similar to the induction step and thus will be omitted. 
	Assume that $Q_{i}$ has the desired form for some $i$.
	Let us show that the claim holds for $i-1$. 
	For all levels $i-1$ divisible by $m$, the node $p_{i-1}$ is a child of the root $r$. 
	For all levels $i-1$ not divisible by $m$, the node $p_{i-1}$ is a child of $p_{i}$. 
	Since $\T(R)$ contains exactly one node at each level, in both cases
	we have $Q = \{r, p_{i}. p_{i-1}\}$. 
	Since $d(q,r) = 1$, $d(q,p_{i}) = 2^{i+1}$ and $d(q,p_{i-1}) = 2^{i}$, we have 
	$$Q_{i-1} = \{p \in Q_t \mid d(p,q) \leq 2^i + 1\} = \{r,p_i\}$$
	Therefore it follows that $|Q_i| \leq 2$ for all $i \in [1, m^2]$.
	
	\medskip

    \noindent	
	The actual implementation of algorithm \ref{alg:cover_tree_k-nearest_original}iterates over all levels $i$ for which there exists a node in $Q_i$ containing at least one non-trivial child at level $i-1$. 
	Since $Q_i = \{r,p_i\}$ and for every index $i \in [2,m^2+1]$, either $r$ or $p_{i}$ has a child on level $i-1$, it follows that  $m^2-2$ is a low bound for the number $\xi$  of iterations. 
	A contradiction follows from  
	$$m^2 - 2 \leq \xi \leq  \max_{p \in R}D(p) \cdot  \max_i|Q_i| \leq (2m+1) \cdot 2 \leq 4m+2 \text{  for any }m > 20.$$ 
	
	\bs
\end{cexa}


\section{Distinctive descendant sets}
\label{sec:distinctive_descendant_set}

This section introduces auxiliary concepts for future proofs. 
The main concept is a distinctive descendant set in Definition \ref{dfn:distinctive_descendant_set}. 
The distinctive descendant set at a level $i$ of a node $p \in \T(R)$ in a compressed cover tree corresponds to the set of descendants of a copy of node $p$ at level $i$ in the original implicit cover tree $T(R)$.
Other important concepts are $\lambda$-point of Definition \ref{dfn:lambda-point} that is used in Algorithm \ref{alg:cover_tree_k-nearest} as an approximation for $k$-nearest neighboring point. The $\beta$-point property of $\lambda$-point defined in Lemma \ref{lem:beta_point} plays a major role in the proof of the main worst-case time complexity result Theorem \ref{thm:knn_KR_time}.

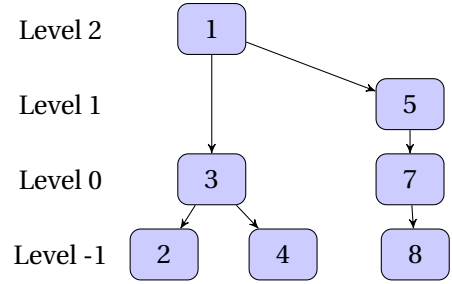
\begin{figure}
	\centering
	\begin{tikzpicture}[align=center, node distance = 1.0cm, scale = 0.45]

	\node (scalem2text) {Level $2$};
	\node[below of =scalem2text] (scalem1text) {Level 1};
	\node[below of =scalem1text] (scalem0text) {Level 0};
	\node[below of =scalem0text] (scalemm1text) {Level -1};

	\node [blockz,  right=25pt of scalem2text ] (node1) {1};
	\node [blockz,  right=100pt of scalem1text] (node5) {5};
	\node [blockz,  right=25pt of scalem0text] (node3) {3};
	\node [blockz,  right=5pt of scalemm1text] (node2) {2};
	\node [blockz,  right=50pt of scalemm1text] (node4) {4};
	
	\node [blockz,  right=100pt of scalem0text] (node7) {7};
	\node [blockz,  right=100pt of scalemm1text] (node8) {8};

	  \draw[->] (node1) -> (node5);
	  \draw[->] (node1) -> (node3);
	  \draw[->] (node3) -> (node2);
	  \draw[->] (node3) -> (node4);
	  \draw[->] (node5) -> (node7);
	  \draw[->] (node7) -> (node8);

\end{tikzpicture} 
	\caption{Consider a compressed cover tree $\T(R)$ that was built on set $R = \{1,2,3,4,5,7,8\}$. Let $\Sd_i(p, \T(R))$ be a distinctive descendant set of Definition \ref{dfn:distinctive_descendant_set}. Then $V_2(1) = \emptyset, V_{1}(1) = \{5\}$ and $V_{0}(1) = \{3,5,7\}$.
		And also $\Sd_2(1, \T(R)) = \{1, 2,3,4,5,7,8\}$, $\Sd_1(1, \T(R)) = \{1,2,3,4\} $ and $\Sd_{0}(1, \T(R)) = \{1\} $.}
	\label{fig:uniqueDescendant}
\end{figure}

\begin{dfn}[Distinctive descendant sets]
	\label{dfn:distinctive_descendant_set}
	Let $R\subseteq X$ be a finite reference set with a cover tree $\T(R)$. 
	For any node $p \in \T(R)$ in a compressed cove tree on a finite set $R$ and $i \leq l(p) - 1$, set
	$V_{i}(p) =  \{u \in \Desc(p) \mid i \leq l(u)\leq l(p) - 1\}.$
	If $i \geq l(p)$, then set $V_i(p) = \emptyset$. 
	For any level $i \leq l(p) $, the \emph{distinctive descendant set} is
	$\Sd_i(p, \T(R)) =   \Desc(p) \setminus \bigcup\limits_{u \in V_{i}(p)} \Desc(u)$ and has the size $|\Sd_i(p, \T(R)) |$. 
	\bs
\end{dfn}
\begin{lem}[Distinctive descendant set inclusion property]
	\label{lem:distinctive_descendant_inclusion}
	In conditions of Definition \ref{dfn:distinctive_descendant_set} let $p \in R$ and let $i,j$ be integers satisfying
	$l_{\min}(\T(R)) \leq i \leq j \leq l(p) - 1$.
	Then $\Sd_i(p, \T(R)) \subseteq \Sd_j(p, \T(R))$.
\end{lem}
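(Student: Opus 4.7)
The plan is to reduce the claimed inclusion to a simple monotonicity statement about the auxiliary sets $V_i(p)$. Unpacking Definition~\ref{dfn:distinctive_descendant_set}, the only way $i$ enters the distinctive descendant set is through the constraint $l(u) \geq i$ in $V_i(p)$, so I expect the proof to be essentially one line once this observation is made explicit.

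First I would show that $V_j(p) \subseteq V_i(p)$ whenever $l_{\min}(\T(R)) \leq i \leq j \leq l(p) - 1$. Indeed, if $u \in V_j(p)$ then $u \in \Desc(p)$ and $j \leq l(u) \leq l(p) - 1$; since $i \leq j$, also $i \leq l(u) \leq l(p) - 1$, and therefore $u \in V_i(p)$.

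Next I would pass from this containment of $V_j(p)$ inside $V_i(p)$ to the containment of the corresponding unions of descendant sets, namely
\[
\bigcup_{u \in V_j(p)} \Desc(u) \;\subseteq\; \bigcup_{u \in V_i(p)} \Desc(u).
\]
Taking complements inside $\Desc(p)$ on both sides reverses the inclusion and yields
\[
\Sd_i(p,\T(R)) = \Desc(p) \setminus \bigcup_{u \in V_i(p)} \Desc(u) \;\subseteq\; \Desc(p) \setminus \bigcup_{u \in V_j(p)} \Desc(u) = \Sd_j(p,\T(R)),
\]
which is exactly the stated inclusion.

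There is no real obstacle here; the only thing to be careful about is the edge case $j = l(p)$, in which $V_j(p) = \emptyset$ by the convention in Definition~\ref{dfn:distinctive_descendant_set}, so $\Sd_j(p,\T(R)) = \Desc(p)$ and the inclusion holds trivially. Thus the lemma follows from the monotonicity $i \leq j \Rightarrow V_j(p) \subseteq V_i(p)$, together with the elementary fact that removing a smaller set from $\Desc(p)$ leaves a larger remainder.
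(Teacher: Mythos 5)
Your proof is correct: the monotonicity $i\leq j\Rightarrow V_j(p)\subseteq V_i(p)$ follows immediately from the defining constraint $i\leq l(u)\leq l(p)-1$, and removing the larger union $\bigcup_{u\in V_i(p)}\Desc(u)$ from $\Desc(p)$ indeed gives $\Sd_i(p,\T(R))\subseteq\Sd_j(p,\T(R))$; note the edge case $j=l(p)$ you mention lies outside the stated hypothesis $j\leq l(p)-1$, so it is harmless but unnecessary. The paper states this lemma without any proof, so your one-line argument is exactly the intended justification and fills that omission.
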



Essential levels of a node $p \in \T(R)$ have 1-1 correspondence to the set consisting of all nodes containing $p$ in the explicit representation of cover tree in \cite{beygelzimer2006cover}, see Figure \ref{fig:tripleexample} middle. 

\begin{dfn}[Essential levels of a node]
	\label{dfn:essential_levels_node}
	Let $R\subseteq X$ be a finite reference set with a cover tree $\T(R)$. Let $q \in \T(R)$ be a node. Let $(t_i)$ for 
	$i \in \{0,1,...,n\}$ be a sequence of $H(\T(R))$ in such a way that $t_0 = l(q)$, $t_n = l_{\min}(\T(R))$ and for all $i$ we have $t_{i+1} = \nxt(q, t_{i}, \T(R))$. Define the set of essential indices $\Es(q,\T(R)) = \{ t_{i} \mid i \in \{0,...,n\} \}$.
	\bs
\end{dfn}

\begin{lem}[Number of essential levels]
	\label{lem:number_of_explicit_levels}
	Let $R\subseteq X$ be a finite reference set with a cover tree $\T(R)$. Then 
	$\sum_{p \in R}|\mathcal{E}(p,\T(R))| \leq 2 \cdot |R|,$
	where $\mathcal{E}(p,\T(R))$ appears in Definition \ref{dfn:essential_levels_node}. \bs
\end{lem}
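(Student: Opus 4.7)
The plan is to interpret $\Es(q,\T(R))$ as essentially the set $\{l(q)\}\cup\{l(c):c\in\Child(q)\setminus\{q\}\}$ and then apply a standard edge-counting argument on the tree $\T(R)$. The core insight is that when the iterator $t_{i+1}=\nxt(q,t_i,\T(R))$ descends through the levels below $l(q)$, it settles only at those levels where the ``explicit copy'' of $q$ in the implicit cover tree of \cite{beygelzimer2006cover} acquires a genuinely new (non-self) child; in the compressed cover tree, these are precisely the levels $l(c)$ for $c\in\Child(q)\setminus\{q\}$, as already exploited in Definition~\ref{dfn:explicit_depth_for_compressed_cover_tree}.

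First, I would carry out the unpacking of Definition~\ref{dfn:essential_levels_node}: by a short induction on $i$, verify that each value $t_i$ of the generating sequence equals either $l(q)$ (the starting point) or the level of some non-self child of $q$, and that every such child-level appears exactly once. This characterization gives
\[
\Es(q,\T(R))=\{l(q)\}\cup\{l(c):c\in\Child(q)\setminus\{q\}\},
\]
and hence $|\Es(q,\T(R))|=1+|\{l(c):c\in\Child(q)\setminus\{q\}\}|$.

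Second, I would sum over $q\in R$ and use the tree structure. The $|R|$ contributions from the terms $l(q)$ account for $|R|$ essential indices. For the remaining contributions, observe that every non-root node $c\in R$ has a unique parent $p=\mathrm{parent}(c)\in R$ by the covering condition (\ref{dfn:cover_tree_compressed}b), and contributes the single level $l(c)$ to the set $\{l(c'):c'\in\Child(p)\setminus\{p\}\}$; if several siblings share the same level, they still contribute only one element to this set. Hence
\[
\sum_{q\in R}|\{l(c):c\in\Child(q)\setminus\{q\}\}|\leq |R|-1,
\]
and combining the two estimates gives $\sum_{q\in R}|\Es(q,\T(R))|\leq 2|R|-1\leq 2|R|$.

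The main obstacle is the faithful identification of $\Es(q,\T(R))$ from its procedural definition, specifically the treatment of the enforced terminator $t_n=l_{\min}(\T(R))$. One must show that this forced endpoint either coincides with $l(q)$, with an existing child-level of $q$, or sits on the unique descending path from $q$ and therefore can be charged to one of $q$'s descendants without inflating the global count. Once this bookkeeping is done, the double-counting in the second step completes the argument.
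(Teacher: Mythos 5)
Your count is correct and even slightly sharper than the paper's bound ($2|R|-1$ versus $2|R|$), but the packaging differs from the paper's proof. You argue by a direct double count built on the explicit identification $\Es(q,\T(R))=\{l(q)\}\cup\{l(c)\mid c\in\Child(q)\setminus\{q\}\}$: every point contributes one term for $l(q)$, and every non-root point contributes at most one element (its level) to the child-level set of its unique parent, giving $|R|+(|R|-1)$. The paper instead proceeds by induction on $|R|$, inserting points one at a time and observing that each insertion creates a one-element essential set for the new node and enlarges only the parent's essential set, by at most one. The two arguments count exactly the same thing (each point once as itself and at most once as a child of its parent), so neither is more general; your version avoids re-running the bookkeeping after every insertion and makes the inequality transparent (the paper's inductive step even asserts an equality $|\Es(p,\T(R\cup\{v\}))|=|\Es(p,\T(R))|+1$ that should be an inequality when the parent already has a child at that level), while the paper's version never needs to pin down $\Es$ explicitly.

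Concerning the obstacle you flag: the forced endpoint $t_n=l_{\min}(\T(R))$ in Definition~\ref{dfn:essential_levels_node} should not be read as adding an element beyond your characterization, and no charging argument is needed (nor would your proposed charge to a descendant of $q$ work for a leaf, which has no proper descendants). Read literally, the definition cannot even be satisfied for a childless node above the bottom level, since $\nxt$ then returns $l_{\min}(\T(R))-1$; and if the terminator were a genuine extra member of every $\Es(q,\T(R))$, the stated bound would be false (a root with $n-1$ children at pairwise distinct levels would give a total of roughly $3n$). The paper's own proof assigns $|\Es(v,\T(R\cup\{v\}))|=1$ to a newly inserted childless node, which is consistent only with your reading, so the definition is simply loosely worded and your identification is the intended one; with that settled, your two-step count is complete.
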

\begin{proof}
	Let us prove this claim by induction on size $|R|$. In basecase $R = \{r\}$ and therefore $|\mathcal{E}(r,\T(R))| = 1$.
	Assume now that the claim holds for any tree $\T(R)$, where $|R| = m$ and let us prove that if we add any node $v \in X \setminus R$ to tree $\T(R)$, then $\sum_{p \in R}|\mathcal{E}(p,\T(R \cup \{v\}))| \leq 2 \cdot |R| + 2$. Assume that we have added $u$ to $\T(R)$, in such a way that $v$ is its new parent. Then $|\mathcal{E}(p,\T(R \cup \{v\}))| = |\mathcal{E}(p,\T(R))| + 1 $ and $|\mathcal{E}(v,\T(R \cup \{v\}))| = 1$. We have:
	$$ \sum_{p \in R \cup \{u\}}|\mathcal{E}(p,\T(R))| = \sum_{p \in R}|\mathcal{E}(p,\T(R))| + 1 + |\mathcal{E}(v,\T(R \cup \{v\}))| \leq 2\cdot |R| + 2 \leq 2(|R \cup \{v\}|) $$
	which completes the induction step. 
\end{proof}

\begin{algorithm}
	\caption{This algorithm returns sizes of distinctive descendant set $\Sd_i(p, \T(R))$ for all essential levels $i \in \Es(p,\T(R))$}
	\label{alg:cover_tree_distinctive_descendants}
	\begin{algorithmic}[1]
		\STATE \textbf{Function} : CountDistinctiveDescendants(Node $p$, a level $i$ of $\T(R)$)
		\STATE \textbf{Output} : an integer 
		
		\IF{$i > l_{\min}(\T(Q))$}
		\FOR {$q \in \text{Children}(p)$ having $l(p) = i-1$ or $q = p$}
		\STATE Set $s = 0$
		\STATE $j \leftarrow 1 + \nxt(q, i-1,\T(R))$
		\STATE $s \leftarrow s + $CountDistinctiveDescendants($q$, $j$)
		\ENDFOR
		\ELSE
		\STATE Set $s = 1$
		\ENDIF
		\STATE Set $|\Sd_i(p)| = s$ and \textbf{return} s
	\end{algorithmic}
\end{algorithm}

\begin{lem}
	\label{lem:distinctive_descendants_precompute}
	Let $R$ be a finite subset of a metric space.
	Let $\T(R)$ be a compressed cover tree on $R$. 
	Then, Algorithm~\ref{alg:cover_tree_distinctive_descendants} computes the sizes $|\Sd_{i}(p, \T(R))|$ for all $p \in R$ and essential levels $i \in \Es(p,\T(R))$ in time $O(|R|)$.
	\bs
\end{lem}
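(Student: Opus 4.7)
The plan is to establish correctness and the $O(|R|)$ running time separately. For correctness, I would prove by induction on the level $i$ going down from $l_{\max}(\T(R))$ that every call \emph{CountDistinctiveDescendants}$(p,i)$ with $i\in\Es(p,\T(R))$ returns $|\Sd_i(p,\T(R))|$. The base case $i=l_{\min}(\T(R))$ follows directly from Definition~\ref{dfn:distinctive_descendant_set}: every proper descendant $u$ of $p$ satisfies $l_{\min}\leq l(u)\leq l(p)-1$ and therefore lies in $V_{l_{\min}}(p)$, so $\Sd_{l_{\min}}(p,\T(R))=\{p\}$ has size $1$, matching the returned value. For the inductive step, I would decompose $\Sd_i(p,\T(R))$ as a disjoint union indexed by the self-link at $q=p$ together with the real children $q\in\Child(p)$ with $l(q)=i-1$, noting that children at levels below $i-1$ are absorbed through the self-recursion at $q=p$ and that children at level $\geq i$ do not contribute to $\Sd_i$ since their subtrees are removed in Definition~\ref{dfn:distinctive_descendant_set}. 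The recursive call sums exactly these contributions.

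For the time complexity, the key structural observation is that the recursion visits each pair $(p,i)$ with $i\in\Es(p,\T(R))$ at most once: the self-branch $q=p$ advances $p$ from its current essential level to $1+\nxt(p,i-1,\T(R))$, i.e.\ strictly down within $\Es(p,\T(R))$, while the branches $q\neq p$ initialise fresh recursions on subtrees that are disjoint from the subtree processed through the self-branch. Hence the total number of recursive calls is at most $\sum_{p\in R}|\Es(p,\T(R))|$, which by Lemma~\ref{lem:number_of_explicit_levels} is bounded by $2|R|$. Each call performs $O(1)$ work outside its loop, and the loop iterates over $\{q\in\Child(p)\mid l(q)=i-1\}\cup\{p\}$. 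Summing across all calls, the self-iteration $q=p$ contributes at most one per call, totalling $O(|R|)$; each real child $c$ of some node $p$ is visited in exactly one loop iteration, namely in the call at $p$'s essential level $l(c)+1$, contributing at most $|R|-1$ in total. Combining gives total running time $O(|R|)$.

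The main obstacle will be justifying the level jump $j=1+\nxt(q,i-1,\T(R))$ in both the self-branch and the child branches, as this must preserve the count. I would resolve this via Lemma~\ref{lem:distinctive_descendant_inclusion}: by construction of $\Es(q,\T(R))$, the levels strictly between $j$ and $i-1$ are not essential for $q$, which means no proper descendant of $q$ has its level in that gap, so $V_{i-1}(q)=V_{j}(q)$ and consequently $\Sd_{i-1}(q,\T(R))=\Sd_{j}(q,\T(R))$. This equality is what allows the recursion to short-circuit past non-essential levels while still returning the correct size, and it is the one point at which a careful bookkeeping of essential levels (as encoded in Definition~\ref{dfn:essential_levels_node}) is genuinely needed.
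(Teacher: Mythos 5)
Your time-complexity argument coincides with the paper's own proof: the paper proves the lemma by observing that CountDistinctiveDescendants is invoked once for every pair $(p,i)$ with $p\in R$, $i\in\Es(p,\T(R))$, and that $\sum_{p\in R}|\Es(p,\T(R))|\leq 2|R|$ by Lemma~\ref{lem:number_of_explicit_levels}; your second paragraph is exactly this, with the per-call loop cost accounted for explicitly where the paper leaves it implicit. What you add, and the paper omits entirely, is the correctness induction showing that the returned value equals $|\Sd_i(p,\T(R))|$; your decomposition of $\Sd_i(p,\T(R))$ into the self-branch plus the subtrees of the children of $p$ at level $i-1$ is sound and is essentially the content of Lemmas~\ref{lem:separation} and~\ref{lem:child_set_equivalence}, so this part is a genuine (and welcome) strengthening rather than a different route.

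One step of the correctness sketch is stated too strongly. From the fact that no level strictly between $\nxt(q,i-1,\T(R))$ and $i-1$ is essential for $q$ you infer that no proper descendant of $q$ has its level in that gap, and hence $V_{i-1}(q)=V_j(q)$. Essential levels only constrain the levels of the \emph{children} of $q$: a descendant lying below a child $a$ of $q$ with $l(a)\geq i-1$ may perfectly well have its level inside the gap, so the set equality $V_{i-1}(q)=V_j(q)$ can fail. The identity you actually need, $\Sd_{i-1}(q,\T(R))=\Sd_j(q,\T(R))$, still holds and the repair is one line: any $u\in V_j(q)\setminus V_{i-1}(q)$ has, on its node-to-root path, a child $a$ of $q$ with $l(a)\geq i-1$ (because $q$ has no children at levels in the gap), and this $a$ belongs to $V_{i-1}(q)$, so $\Desc(u)\subseteq\Desc(a)$ is already excluded from $\Sd_{i-1}(q,\T(R))$. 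With that correction your proposal is correct and strictly more complete than the proof given in the paper.
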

\begin{proof}
	By Lemma \ref{lem:number_of_explicit_levels} we have $\sum_{p \in R}|\mathcal{E}(p,\T(R))| \leq 2 \cdot |R|.$ Since CountDistinctiveDescendants is called once for every any combination $p \in R$ and $i \in \mathcal{E}(p,\T(R))$ it follows that the time complexity of Algorithm~\ref{alg:cover_tree_distinctive_descendants} is $O(R)$.
	

\end{proof}



Recall that the neighborhood $N(q;r) = \{p \in C \mid d(q,p) \leq d(q,r)\}$ was introduced in Definition~\ref{dfn:kNearestNeighbor}.
\begin{dfn}[$\la$-point]
	\label{dfn:lambda-point}
	Fix a query point $q$ in a metric space $(X,d)$ and fix any level $i \in \Z$. 
	Let $\T(R)$ be its compressed cover tree on a finite reference set $R \subseteq X$. 
	Let $C$ be a subset of a cover set $C_i$ from Definition~\ref{dfn:cover_tree_compressed} satisfying $\sum_{p \in C}|\Sd_i(p, \T(R))| \geq k$, where $\Sd_i(p, \T(R))$ is the distinctive descendant set from Definition \ref{dfn:distinctive_descendant_set}.
	For any $k\geq 1$, define $\la_k(q,C)$ as a point $\la\in C$ that minimizes $d(q,\la)$ subject to $\sum_{p \in N(q;\la)}|\Sd_i(p, \T(R)) |\geq k$. 
	\bs
\end{dfn}

\begin{algorithm}
	\caption{Finding $k$-lowest element of a finite subset $A \subseteq R$ with priority function $f: A \rightarrow \R$}
	\label{alg:k_smallest_elements}
	\begin{algorithmic}[1]
		\STATE \textbf{Input:} Ordered subset $A \subseteq R$, priority function $f:A \rightarrow \R$, an integer $k \in \Z$
		\STATE Initialize an empty max-binary heap $B$ and an empty array $D$ on points $A$.
		\FOR{$p \in A$}
		\STATE add $p$ to $B$ with priority $f(p)$
		\IF{$|H| \geq k$} 
		\STATE remove the point with a maximal value from $B$
		\ENDIF
		\ENDFOR
		\STATE Transfer points from the binary heap $B$ to the array $D$ in reverse order. 
		\STATE \textbf{return} $D$. 
	\end{algorithmic}
\end{algorithm}

\begin{algorithm}
	\caption{Computation of a $\lambda$-point of Definition \ref{dfn:lambda-point} in line \ref{line:knnu:dfnLambda} of Algorithm \ref{alg:cover_tree_k-nearest} }
	\label{alg:lambda}
	\begin{algorithmic}[1]
		\STATE \textbf{Input:} A point $q \in X$, a subset $C$ of a level set $C_i$ of a compressed cover tree $\T(R)$, an integer $k \in \Z$
		\STATE Define $f: C \rightarrow \R$ by setting $f(p) = d(p,q)$. 
		\STATE Run Algorithm \ref{alg:k_smallest_elements} on inputs $(C, f, k)$ and retrieve array $D$. 
		\STATE Find the smallest index $j$ such that $\sum^{j}_{t = 0}|\Sd_i(D[t] , \T(R))| \geq k$.
		\STATE \textbf{return} $\lambda = D[j]$. 
	\end{algorithmic}
\end{algorithm}

\begin{lem}
	\label{lem:time_k_smallest_elements}
	Let $A \subseteq R$ be a finite subset and let $f: A \rightarrow \R$ be a priority function and let $k \in \Z_{+}$. 
	Then Algorithm \ref{alg:k_smallest_elements} finds $k$-smallest elements of $A$ in time $|A| \cdot \log_2(k)$
\end{lem}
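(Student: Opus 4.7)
The plan is to verify both correctness and the time bound by analyzing the single for-loop that processes $A$ one element at a time, and then invoking the standard binary heap complexities from Lemma~\ref{lem:BinaryHeap}.

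For correctness, I would maintain the loop invariant that after the $i$-th iteration the max-heap $B$ contains exactly the $\min(i,k)$ smallest elements of $\{a_1,\dots,a_i\}$ under $f$, where $a_1,\dots,a_{|A|}$ enumerates $A$ in the order visited. The base case $i=1$ is immediate. For the inductive step, inserting $a_{i+1}$ yields a heap of size $\min(i,k)+1$ holding the previous $\min(i,k)$ smallest values together with $a_{i+1}$; if this new size exceeds $k$, removing the maximum discards precisely the largest of these $k+1$ candidates and leaves the $k$ smallest among $a_1,\dots,a_{i+1}$. At termination $B$ contains the $\min(|A|,k)$ smallest elements of $A$, and the subsequent transfer into $D$ by reversed extract-max writes them into the output array.

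For the complexity, I would cite Lemma~\ref{lem:BinaryHeap} directly: insertion and removal of the extremal element on a heap of size $n$ each cost $O(\log n)$, with the argument applying symmetrically to max-heaps. Because $B$ contains at most $k+1$ elements at any moment during the loop, each of the $|A|$ iterations performs $O(\log k)$ work, contributing $O(|A|\log k)$ in total. The final transfer performs at most $k$ extract-max operations at additional cost $O(k\log k)$, which is absorbed into $O(|A|\log k)$ since $k\leq |A|$, giving the stated bound.

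There is no serious obstacle beyond bookkeeping; the only mildly delicate point is that the size check in line~5 of Algorithm~\ref{alg:k_smallest_elements} is evaluated after the insertion on line~4, so whenever the heap reaches size $k+1$ the element discarded is the current maximum of the $k+1$ candidates, which is exactly the one that must be dropped for the invariant to be preserved.
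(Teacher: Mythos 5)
Your proposal is correct and follows essentially the same route as the paper: the paper's proof simply cites the $O(\log n)$ heap insert/extract bound, notes the heap size is capped at $k$, and multiplies by the $|A|$ iterations to get $O(|A|\log_2 k)$. Your additional loop invariant for correctness and the accounting of the final transfer (absorbed since $k\leq|A|$) are extra detail the paper leaves implicit, not a different argument.
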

\begin{proof}
	Adding and removing element from binary heap data structure \cite[section~6.5]{Cormen1990} takes at most $O(\log(n))$ time, where $n$ is the size of binary heap.
	Since the size of our binary heap is capped at $k$ and we add/remove at most $|A|$ elements, the total time complexity is 
	$O(|A| \cdot \log_2(k))$.
\end{proof}

\begin{lem}[time complexity of a $\lambda$-point]
	\label{lem:time_lambdapoint}
	In the conditions of Definition~\ref{dfn:lambda-point}, the time complexity of Algorithm \ref{alg:lambda} is $O(|C| \cdot \log_2(k))$.
\end{lem}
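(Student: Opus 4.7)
The plan is to bound the runtime of Algorithm \ref{alg:lambda} step by step, assuming (as is standard throughout this chapter) that Algorithm \ref{alg:cover_tree_distinctive_descendants} has been pre-run once during tree setup, so that each cardinality $|\Sd_i(p, \T(R))|$ can be retrieved in $O(1)$ time via Lemma \ref{lem:distinctive_descendants_precompute}.

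First, evaluating the priority function $f(p) = d(p, q)$ over all $p \in C$ costs one distance computation per point, contributing $O(|C|)$. Next, the call to Algorithm \ref{alg:k_smallest_elements} on the triple $(C, f, k)$ produces the sorted array $D$ of the $\min(k, |C|)$ points of $C$ that are nearest to $q$; by Lemma \ref{lem:time_k_smallest_elements}, this step costs $O(|C| \log_2 k)$.

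The only remaining work is the search for the smallest index $j$ with $\sum_{t=0}^{j} |\Sd_i(D[t], \T(R))| \geq k$. This is a left-to-right linear scan over $D$ maintaining a running sum, in which each cumulative update is $O(1)$ thanks to the precomputation. Since every node is its own descendant and $p \notin V_i(p)$ (because $l(p) > l(p)-1$ is excluded from $V_i(p)$), we have $p \in \Sd_i(p, \T(R))$, so every summand is at least $1$; combined with the hypothesis $\sum_{p \in C}|\Sd_i(p, \T(R))| \geq k$ from Definition \ref{dfn:lambda-point}, this guarantees that a valid $j$ exists within the $\min(k, |C|)$ entries of $D$ and need not be sought outside $D$. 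Hence this step costs $O(\min(k, |C|))$.

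Summing the three contributions gives the total $O(|C|) + O(|C| \log_2 k) + O(\min(k, |C|)) = O(|C| \log_2 k)$, as claimed. The only genuinely delicate point is justifying the restriction of the scan to $D$ rather than to the whole of $C$; this is exactly what the positivity of each $|\Sd_i(p, \T(R))|$ buys, and is the reason Algorithm \ref{alg:lambda} can extract only the $k$ nearest points of $C$ instead of sorting $C$ in full.
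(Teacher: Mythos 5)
Your proof is correct and follows essentially the same route as the paper's: line 3 is charged $O(|C|\cdot\log_2 k)$ via Lemma~\ref{lem:time_k_smallest_elements}, and line 4 is charged $O(k)$ (equivalently $O(\min(k,|C|))$) because each $|\Sd_i(D[t],\T(R))|\geq 1$. The extra details you supply — the $O(1)$ retrieval of the sizes from the precomputation of Lemma~\ref{lem:distinctive_descendants_precompute} and the argument that a valid index $j$ already lies inside $D$ — are sound and merely make explicit what the paper leaves implicit.
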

\begin{proof}
	Note that in line $4$ we have $|\Sd_i(D[t] , \T(R))| \geq 1$ for all $t = 0 , ..., j$. 
	Therefore the time complexity of line $4$ is $O(k)$. 
	By Lemma \ref{lem:time_k_smallest_elements} The time complexity of line $3$ is $O(|C| \cdot \log_2(k))$, which proves the claim. 
\end{proof}

\begin{lem}[separation]
	\label{lem:separation}
	In the conditions of Definition~\ref{dfn:distinctive_descendant_set}, let $p\neq q$ be nodes of $\T(R)$ with $l(p) \geq i$, $l(q) \geq i$. Then $\Sd_i(p , \T(R)) \cap \Sd_{i}(q, \T(R)) = \emptyset$.  \bs
\end{lem}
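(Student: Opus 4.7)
The plan is to split into two cases depending on whether one of $p,q$ lies in the ancestor chain of the other, and reduce each case to a direct application of Definition~\ref{dfn:distinctive_descendant_set}.

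First I would handle the generic case in which neither $p$ nor $q$ is an ancestor of the other. A standard tree argument shows that in this situation $\Desc(p)\cap\Desc(q)=\emptyset$: if some $v$ belonged to both, then both $p$ and $q$ would appear on the (unique) node-to-root path of $v$, so one of $p,q$ would be an ancestor of the other, contradicting the assumption. Since $\Sd_i(p,\T(R))\subseteq\Desc(p)$ and $\Sd_i(q,\T(R))\subseteq\Desc(q)$ by Definition~\ref{dfn:distinctive_descendant_set}, disjointness of distinctive sets follows immediately.

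Next I would treat the remaining case. Without loss of generality, assume $q\in\Desc(p)$; since $p\neq q$, the node $q$ is a strict descendant of $p$, so by the covering condition (\ref{dfn:cover_tree_compressed}b) we get $l(q)\leq l(p)-1$. Combined with the hypothesis $l(q)\geq i$, this gives $i\leq l(q)\leq l(p)-1$, meaning $q\in V_i(p)$ in the notation of Definition~\ref{dfn:distinctive_descendant_set}. Therefore $\Desc(q)\subseteq\bigcup_{u\in V_i(p)}\Desc(u)$, which by the definition of the distinctive descendant set yields $\Sd_i(p,\T(R))\cap\Desc(q)=\emptyset$. Since $\Sd_i(q,\T(R))\subseteq\Desc(q)$, the intersection $\Sd_i(p,\T(R))\cap\Sd_i(q,\T(R))$ is empty.

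There is no real obstacle here; the statement is a structural consequence of the tree conditions. The only subtle point worth double-checking is that we correctly handle the convention $p\in\Desc(p)$ from Definition~\ref{dfn:cover_tree_compressed}, so that the descendant-of-descendant case in the second paragraph above does not accidentally drag $p$ itself into $V_i(p)$; this is automatic because $V_i(p)$ restricts to $l(u)\leq l(p)-1$, which excludes $p$.
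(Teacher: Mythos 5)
Your proof is correct and follows essentially the same route as the paper's: split on whether one node is a descendant of the other, use $\Sd_i\subseteq\Desc$ for the disjoint-subtrees case, and in the ancestor case observe $i\leq l(q)\leq l(p)-1$ forces $q\in V_i(p)$, so $\Sd_i(p,\T(R))$ avoids $\Desc(q)$. The only cosmetic difference is that the paper phrases the case split via the WLOG assumption $l(p)\geq l(q)$ rather than via ancestry, but the substance is identical.
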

\begin{proof}
	Without loss of generality assume that $l(p) \geq l(q)$. If $q$ is not a descendant of $p$, the lemma holds trivially due to $\Desc(q) \cap \Desc(p) = \emptyset$. 
	If $q$ is a descendant of $p$, then $l(q) \leq l(p) - 1$ and therefore $q \in V_i(p)$. 
	It follows that
	$\Sd_i(p , \T(R)) \cap \Desc(q) = \emptyset$ and therefore
	$$\Sd_{i}(p, \T(R)) \cap \Sd_{i}(q, \T(R))   \subseteq \Sd_{i}(p, \T(R)) \cap \Desc(q) = \emptyset.$$ 
\end{proof}

\begin{lem}[Sum lemma]
	\label{lem:sum}
	In the notations of Definition~\ref{dfn:distinctive_descendant_set} assume that $i$ is arbitrarily index and a subset $V \subseteq R$ satisfies $l(p) \geq i$ for all $p \in V$. Then 
	$$|\bigcup\limits_{p \in V}\Sd_i(p,\T(R))| = \sum\limits_{p \in V} |\Sd_i(p, \T(R))|.$$
\end{lem}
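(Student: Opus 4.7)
The plan is to derive the Sum Lemma as an immediate consequence of the Separation Lemma (Lemma~\ref{lem:separation}) together with the standard fact that the cardinality of a finite disjoint union equals the sum of the cardinalities of its parts.

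First I would fix the subset $V \subseteq R$ and observe that, by hypothesis, every $p \in V$ satisfies $l(p) \geq i$. Then for any two distinct nodes $p, q \in V$ both of the level conditions $l(p) \geq i$ and $l(q) \geq i$ required by Lemma~\ref{lem:separation} are met, so I can invoke that lemma to conclude that $\Sd_i(p,\T(R)) \cap \Sd_i(q,\T(R)) = \emptyset$. Hence the family $\{\Sd_i(p,\T(R))\}_{p \in V}$ is pairwise disjoint.

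Next, since $V \subseteq R$ is finite (as $R$ is finite) and each $\Sd_i(p,\T(R)) \subseteq R$ is finite, I would apply the elementary counting identity for finite pairwise disjoint unions, namely
$$\Bigl|\bigcup_{p \in V}\Sd_i(p,\T(R))\Bigr| = \sum_{p \in V} |\Sd_i(p,\T(R))|,$$
which yields the required equality.

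There is no real obstacle here: the content of the lemma is entirely carried by Lemma~\ref{lem:separation}, and the Sum Lemma is essentially a re-packaging of that disjointness into a cardinality statement useful for later inductive arguments (e.g. for bounding $\sum_{p \in C}|\Sd_i(p,\T(R))|$ in the analysis of Algorithm~\ref{alg:lambda} and the $\lambda$-point machinery). The only small care needed is to verify that the level hypothesis applies to every pair of distinct nodes simultaneously, which is automatic from the assumption placed on $V$.
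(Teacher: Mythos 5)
Your proof is correct and matches the paper's approach: the paper also derives the Sum Lemma directly from Lemma~\ref{lem:separation} (pairwise disjointness of the distinctive descendant sets) combined with the standard counting identity for finite disjoint unions. Your write-up simply spells out the details that the paper leaves implicit.
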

\begin{proof}
	Proof follows from Lemma \ref{lem:separation}.
\end{proof}

\noindent 
By Lemma \ref{lem:sum} in Definition \ref{dfn:lambda-point} one can assume that $|\bigcup_{p \in C}\Sd_i(p, \T(R)) | \geq k$.

\begin{lem}
	\label{lem:distinctive_descendant_child_level}
	In the notations of Definition~\ref{dfn:distinctive_descendant_set}, let $p \in \T(R)$ be any node. 
	If $w \in \Sd_i(p,\T(R))$ then either $w = p$ or there exists $a \in \Child(p) \setminus \{p\}$ such that $l(a) < i$ and $w \in \Desc(a)$.\bs
\end{lem}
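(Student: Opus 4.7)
The plan is to prove the lemma by contradiction on the level of the unique child of $p$ through which $w$ descends. First I would use the fact that since $w \in \Sd_i(p, \T(R))$ we have $w \in \Desc(p)$ by Definition~\ref{dfn:distinctive_descendant_set}. If $w = p$ there is nothing to prove, so I assume $w \neq p$. Then the path from $w$ to its root passes through a unique node $a$ which is a proper child of $p$, that is, $a \in \Child(p) \setminus \{p\}$ and $w \in \Desc(a)$ (here I am using the tree structure of $\T(R)$: the node-to-root path of $w$ is uniquely defined, and the node just before $p$ on that path is the desired $a$).

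The main content is then to show that $l(a) < i$. I would argue by contradiction: suppose $l(a) \geq i$. By the covering condition (\ref{dfn:cover_tree_compressed}b) every child $a$ of $p$ satisfies $l(a) \leq l(p) - 1$, so the combined inequality $i \leq l(a) \leq l(p) - 1$ places $a$ into the set $V_i(p) = \{u \in \Desc(p) \mid i \leq l(u) \leq l(p)-1\}$ from Definition~\ref{dfn:distinctive_descendant_set}. Since $w \in \Desc(a)$, it follows that $w \in \bigcup_{u \in V_i(p)} \Desc(u)$, which contradicts the defining property $w \in \Sd_i(p,\T(R)) = \Desc(p) \setminus \bigcup_{u \in V_i(p)}\Desc(u)$. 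This forces $l(a) < i$, completing the proof.

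I do not expect any genuine obstacle: the argument is essentially a direct unfolding of the set-theoretic definition of $\Sd_i(p,\T(R))$, combined with the uniqueness of parents in $\T(R)$ and the basic inequality $l(a) \leq l(p)-1$ from the covering condition. The only place one has to be slightly careful is the convention that $\Child(p)$ includes $p$ itself, which is why the statement requires the strict refinement $a \in \Child(p)\setminus\{p\}$ and why the case $w=p$ must be separated at the very beginning.
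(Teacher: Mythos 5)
Your proposal is correct and follows essentially the same route as the paper's own proof: both take the node-to-root path of $w$, identify the proper child $a$ of $p$ on that path, and derive a contradiction from $l(a)\geq i$ by placing $a$ in $V_i(p)$ so that $w\in\Desc(a)$ would exclude $w$ from $\Sd_i(p,\T(R))$. Your added remark that $l(a)\leq l(p)-1$ comes from the covering condition just makes explicit a step the paper leaves implicit; no gap.
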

\begin{proof}
	Let $w \in \Sd_i(p)$ be an arbitrary node satisfying $w \neq p$. 
	Let $s$ be the node-to-root path of $w$. 
	The inclusion $\Sd_i(p) \subseteq \Desc(p)$ implies that $w \in \Desc(p)$. 
	Let $a \in \Child(p) \setminus \{p\}$ be a child on the path $s$. 
	If $l(a) \geq i$ then $a \in V_i(p)$. 
	Note that $w \in \Desc(a)$.
	Therefore $w \notin \Sd_i(p)$, which is a contradiction.
	Hence $l(a) < i$. 
\end{proof}

\begin{lem}
	\label{lem:distinctive_descendant_distance}
	In the notations of Definition~\ref{dfn:distinctive_descendant_set}, let $p \in \T(R)$ be any node. 
	If $w \in \Sd_i(p,\T(R))$ then $d(w,p) \leq 2^{i+1}$.\bs
\end{lem}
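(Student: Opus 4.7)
The plan is to reduce the claim to the already-proven distance bound on descendants (Lemma~\ref{lem:compressed_cover_tree_descendant_bound}) by leveraging the structural description of distinctive descendants provided by Lemma~\ref{lem:distinctive_descendant_child_level}.

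First, I would dispose of the trivial case $w = p$, in which $d(w,p) = 0 \leq 2^{i+1}$ holds immediately. For the nontrivial case $w \neq p$, I would invoke Lemma~\ref{lem:distinctive_descendant_child_level} to obtain a child $a \in \Child(p) \setminus \{p\}$ with $l(a) < i$ and $w \in \Desc(a)$. At this point the node-to-root path of $w$ passes through $a$, so the hypotheses of Lemma~\ref{lem:compressed_cover_tree_descendant_bound} are satisfied with the role of the intermediate node $u$ played by $a$. That lemma then yields $d(p,w) \leq 2^{l(a)+2}$.

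Since levels are integers, $l(a) < i$ forces $l(a) \leq i-1$, hence $2^{l(a)+2} \leq 2^{i+1}$, and the conclusion $d(w,p) \leq 2^{i+1}$ follows. There is essentially no obstacle here: the entire argument is a direct chaining of two previously established results. The only point that deserves care is the translation of ``distinctive descendant at level $i$'' into the structural fact that the path from $w$ up to $p$ must meet a child of $p$ whose level is strictly below $i$ — precisely what Lemma~\ref{lem:distinctive_descendant_child_level} records — since this is what tightens the generic descendant bound $2^{l(p)+1}$ down to the sharper $2^{i+1}$.
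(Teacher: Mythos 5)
Your proposal is correct and follows essentially the same route as the paper's own proof: dispose of the case $w=p$, use Lemma~\ref{lem:distinctive_descendant_child_level} to find a child $a\in\Child(p)\setminus\{p\}$ with $l(a)<i$ on the path to $w$, and then apply Lemma~\ref{lem:compressed_cover_tree_descendant_bound} to get $d(p,w)\leq 2^{l(a)+2}\leq 2^{i+1}$. If anything, your bookkeeping of the constants is slightly more careful than the paper's write-up, which loosely states the intermediate bound as $2^{i}$.
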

\begin{proof}
	By Lemma \ref{lem:distinctive_descendant_child_level} either $w = \gamma$ or $w \in \Desc(a)$ for some $a \in \Child(\gamma) \setminus \{\gamma\}$ for which $l(a) < i$.
	If $w = \gamma$, then trivially $d(\gamma, w) \leq 2^{i}$. Else $w$ is a descendant of $a$, which is a child of node $\gamma$ on level $i-1$ or below, therefore by Lemma \ref{lem:compressed_cover_tree_descendant_bound} we have  $d(\gamma, w) \leq 2^{i}$ anyway.
\end{proof}

\begin{lem}
	\label{lem:child_set_equivalence}
	Let $R$ be a finite subset of a metric pace.
	Let $\T(R)$ be a compressed cover tree on $R$. 
	Let $R_j \subseteq C_j$, where $C_j$ is the $i$th cover set of $\T(R)$. Let $i = \max_{p \in R_j}\nxt(p,j,\T(R))$.
	Set 
	$\C_j(R_j) = R_{j} \cup \{a \in \Child(p) \text{ for some }p \in R_i \mid l(a) = i \}$.
	Then 
	$$\bigcup_{p \in \C_j(R_j)}\Sd_{i}(p, \T(R)) = \bigcup_{p \in R_j}\Sd_{j}(p, \T(R)).$$
\end{lem}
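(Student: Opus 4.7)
The plan is to prove the two inclusions separately, using Lemma \ref{lem:distinctive_descendant_child_level} to decompose descendants through their unique child of $p$ on the parent chain, and the separation/structural properties of the cover tree to control which nodes appear on that chain. The key observation feeding both directions is that the choice $i = \max_{p \in R_j}\nxt(p,j,\T(R))$ forces, for every $p \in R_j$, every child $a$ of $p$ with $l(a)<j$ to satisfy $l(a)\leq i$; indeed by the definition of $\nxt$ no child of $p$ has level strictly between $i$ and $j$.

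For the forward inclusion $\supseteq$, I would fix $w \in \Sd_j(p,\T(R))$ with $p\in R_j$. If $w=p$, then $p \in R_j \subseteq \C_j(R_j)$ and $p\in \Sd_i(p,\T(R))$ since a node is never a descendant of its strict descendants. Otherwise Lemma \ref{lem:distinctive_descendant_child_level} yields a unique $a \in \Child(p)\setminus\{p\}$ with $l(a)<j$ and $w\in\Desc(a)$; by the remark above $l(a)\leq i$. In the subcase $l(a)=i$, the node $a$ belongs to $\C_j(R_j)$ by construction, and $V_i(a)=\emptyset$ makes $\Sd_i(a,\T(R))=\Desc(a)\ni w$. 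In the subcase $l(a)<i$, I would place $w$ into $\Sd_i(p,\T(R))$ directly: the parent-chain from $w$ to $p$ (excluding $p$) lies inside $\Desc(a)$ and therefore all its levels are $\leq l(a)<i$, so no element of $V_i(p)=\{u\in\Desc(p): i\leq l(u)\leq l(p)-1\}$ can be an ancestor of $w$.

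For the reverse inclusion $\subseteq$, I would fix $w \in \Sd_i(p',\T(R))$ with $p'\in\C_j(R_j)$. If $p'\in R_j$ then $l(p')\geq j>i$, which gives the inclusion $V_j(p')\subseteq V_i(p')$ directly from Definition \ref{dfn:distinctive_descendant_set}; hence $\Sd_i(p',\T(R))\subseteq \Sd_j(p',\T(R))$ and the conclusion is immediate. Otherwise $p'=a$ where $a\in\Child(p)\setminus\{p\}$ for some $p\in R_j$ with $l(a)=i$; I would show $w\in\Sd_j(p,\T(R))$. Since $w\in\Desc(a)\subseteq\Desc(p)$, it remains to rule out any $u\in V_j(p)$ with $w\in\Desc(u)$. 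Such a $u$ would be a strict ancestor of $w$ in $\Desc(p)$, but the parent-chain from $w$ to $p$ passes through $a$ and then jumps straight to $p$, so every strict descendant of $p$ on this chain has level $\leq l(a)=i<j$, contradicting $l(u)\geq j$.

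The main obstacle is the subcase $l(a)<i$ in the forward direction: the particular $p\in R_j$ from which $w$ descends may have $\nxt(p,j,\T(R))$ strictly below the global maximum $i$, so $w$ is not captured by any newly added child in $\C_j(R_j)$; here one must instead retain $p$ itself and exploit the level gap between $l(a)$ and the lower bound $i$ defining $V_i(p)$. Once this is handled, both inclusions combine with Lemma \ref{lem:separation} (ensuring no double counting) to yield the claimed equality of unions.
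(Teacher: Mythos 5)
Your proof is correct and takes essentially the same route as the paper's: both inclusions are handled by the same case split (a node of $\C_j(R_j)$ lying in $R_j$ versus being a level-$i$ child of some $p\in R_j$), using the maximality of $i$ to bound the levels of children below $j$, the monotonicity $V_j(p)\subseteq V_i(p)$ when $l(p)\geq j$, and the fact that the parent chain of a distinctive descendant passes through a single low-level child of $p$. Your organization of the harder direction via the level of that unique child from Lemma~\ref{lem:distinctive_descendant_child_level} (subcases $l(a)=i$ and $l(a)<i$) is just a cleaner packaging of the paper's case analysis on whether $p$ has children at level $i$, so no genuinely different idea is involved.
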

\begin{proof}
	Let $a \in \bigcup_{p \in \C_j(R_j)}\Sd_{i}(p, \T(R))$ be an arbitrary node. Then there exits $u \in \C_j(R_j)$ having 
	$ a \in \Sd_{i}(u, \T(R))$. 
	By definition of index $i$, either $u \in R_j$ or $u$ has a parent in $R_j$. If $u \in R_j$ then we note that $V_j(u) \subseteq V_i(u)$.
	Since $a \notin V_i(u)$, we also have $a \notin V_j(u)$. 
	
	\medskip
	
	\noindent 
	Otherwise let $w$ be a parent of $u$. Therefore there are no descendants of $w$ in having level in interval $[l(u) + 1 , l(p) - 1]$.
	Since $l(u) = i$ and $j > i$ it follows that $V_{j}(w) = \emptyset$. 
	Denote $w$ to be the lowest level ancestor of $u$ on level $j$. By cases above we have $a \notin V_{j}(w)$.
	Therefore it follows that
	$$a \in \Sd_j(w, \T(R)) \subseteq \bigcup_{p \in R_j}\Sd_j(p , \T(R)).$$
	To prove the converse inclusion assume now that $a \in \bigcup\limits_{p \in R_j}\Sd_{j}(p, \T(R))$. 
	Then $a \in \Sd_{j}(v, \T(R))$ for some $w \in R_j$.
	Assume that $w$ has no children at the level $i$. 
	Then $V_{j}(w) = V_{i}(w)$ and 
	$$a \in \Sd_{i}(w, \T(R)) \subseteq \bigcup_{p \in \C_j(R_j)}\Sd_{i}(p ,\T(R)).$$ 
	Assume now that $w$ has children at the level $i$.
	If there exists $b \in \Child(w)$ for which $a \in \Desc(b)$. 
	Since $V_{i}(b) = \emptyset$, we conclude that 
	$$a \in \Sd_{i}(b, \T(R)) \subseteq \bigcup_{p \in \C_j(R_j)}\Sd_{i}(p ,\T(R)).$$ 
	Assume that $a \notin \Desc(b)$ for all $b \in \Child(w)$ with $l(b) = i$.
	Then $a \in \Desc(w)$ and $a \notin \Desc(b')$ for any $b' \in V_j(w)$. Then $a \in \Sd_{i}(w, \T(R))$ and the proof finishes:
	$$\bigcup_{p \in R_j}\Sd_{j}(p, \T(R))   \subseteq   \bigcup_{p \in \C_j(R_j)}\Sd_{i}(p, \T(R)).  $$
\end{proof}

\begin{lem}[$\be$-point]
	\label{lem:beta_point}
	In the notations of Definition~\ref{dfn:lambda-point}, let $C\subseteq C_i$ so that $\cup_{p \in C}\Sd_i(p, \T(R))$ contains all $k$-nearest neighbors of $q$. 
	Set $\la = \la_k(q,C)$. 
	Then $R$ has a point $\beta$ among the first $k$ nearest neighbors of $q$ such that $d(q,\lambda) \leq  d(q,\beta) + 2^{i+1}$.\bs
\end{lem}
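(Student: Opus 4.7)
The plan is to argue by contradiction, using the disjointness of distinctive descendant sets (Lemma~\ref{lem:separation}) together with the diameter bound on each such set (Lemma~\ref{lem:distinctive_descendant_distance}) to contradict the minimality built into the definition of $\lambda = \lambda_k(q,C)$.

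First, I would fix $k$ distinct points $\beta_1,\dots,\beta_k$ among the $k$ nearest neighbors of $q$ in $R$; these exist since $|\NN_k(q;R)|\geq k$. By hypothesis every $\beta_j$ lies in $\bigcup_{p\in C}\Sd_i(p,\T(R))$, so by Lemma~\ref{lem:separation} there is a unique $p_j\in C$ with $\beta_j\in\Sd_i(p_j,\T(R))$. Lemma~\ref{lem:distinctive_descendant_distance} gives $d(p_j,\beta_j)\leq 2^{i+1}$, hence by the triangle inequality
$$d(q,p_j)\;\leq\;d(q,\beta_j)+2^{i+1}\qquad\text{for every } j=1,\dots,k.$$

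Next, suppose for contradiction that the conclusion fails, i.e.\ $d(q,\lambda)>d(q,\beta_j)+2^{i+1}$ for every $j$. Combined with the previous inequality this yields $d(q,p_j)<d(q,\lambda)$ for all $j$. Let $C'=\{p_1,\dots,p_k\}$ (the list with duplicates removed) and pick $\lambda'\in C'$ maximizing $d(q,\lambda')$; then $d(q,\lambda')<d(q,\lambda)$ and by construction $C'\subseteq N(q;\lambda')$. Because the $\beta_j$ are $k$ distinct points contained in $\bigcup_{p\in C'}\Sd_i(p,\T(R))$, Lemma~\ref{lem:separation} (disjointness) together with Lemma~\ref{lem:sum} gives
$$\sum_{p\in N(q;\lambda')}|\Sd_i(p,\T(R))|\;\geq\;\sum_{p\in C'}|\Sd_i(p,\T(R))|\;=\;\Bigl|\bigcup_{p\in C'}\Sd_i(p,\T(R))\Bigr|\;\geq\;k.$$
Thus $\lambda'\in C$ witnesses the defining constraint of $\lambda_k(q,C)$ with strictly smaller distance to $q$, contradicting the minimality in Definition~\ref{dfn:lambda-point}.

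Hence some $\beta_j$ must satisfy $d(q,\lambda)\leq d(q,\beta_j)+2^{i+1}$, and taking $\beta=\beta_j$ proves the lemma. The only mildly delicate point is keeping the bookkeeping straight when the $p_j$'s coincide; this is handled by passing from sums to the size of the union, which is legal precisely because of the separation provided by Lemma~\ref{lem:separation}.
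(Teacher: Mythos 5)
Your proof is correct, and it is organized differently from the paper's. The paper argues directly: from the minimality in Definition~\ref{dfn:lambda-point} it asserts that $\sum_{p \in N(q,\la)\setminus\{\la\}}|\Sd_i(p,\T(R))| < k$, hence some $k$-nearest neighbor $\beta$ lies in $\Sd_i(\gamma,\T(R))$ for a $\gamma$ with $d(q,\gamma)\geq d(q,\la)$, and the bound follows from the triangle inequality together with Lemma~\ref{lem:distinctive_descendant_distance}. You instead argue by contradiction: if every $k$-nearest neighbor $\beta_j$ violated the bound, then each ancestor $p_j\in C$ with $\beta_j\in\Sd_i(p_j,\T(R))$ would satisfy $d(q,p_j)<d(q,\la)$, and the farthest of them, $\la'$, would satisfy the counting constraint of Definition~\ref{dfn:lambda-point} (via Lemmas~\ref{lem:separation} and~\ref{lem:sum}) while being strictly closer to $q$ than $\la$, contradicting minimality. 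The two arguments use the same ingredients (disjointness of distinctive descendant sets, the $2^{i+1}$ descendant-distance bound, triangle inequality, minimality of $\la$) and are essentially contrapositives of one another; what yours buys is that it only needs the existence of a strictly closer feasible point, so it sidesteps the paper's claim that removing $\la$ alone drops the count below $k$ — a step that requires some care when several points of $C$ are at distance exactly $d(q,\la)$ from $q$ — whereas the paper's direct version has the advantage of explicitly exhibiting which neighbor $\beta$ works. Your final remark about passing from sums to the size of the union is exactly the right bookkeeping and is licensed by Lemma~\ref{lem:separation}.
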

\begin{proof}
	We show that $R$ has a point $\beta$ among the first $k$ nearest neighbors of $q$ such that
	$$\beta \in \bigcup_{p \in C}\Sd_i(p, \T(R)) \setminus \bigcup_{p \in N(q, \lambda) \setminus \{\lambda\} }\Sd_i(p, \T(R)).$$
	Lemma~\ref{lem:sum} and Definition~\ref{dfn:lambda-point} imply that
	$$ | \bigcup_{p \in N(q, \lambda) \setminus \{\lambda\} }\Sd_i(p, \T(R)) |= \sum_{p \in N(q, \lambda) \setminus \{\lambda\} }| \Sd_i(p, \T(R)) | < k.$$ 
	Since $\cup_{p \in C}\Sd_i(p, \T(R))$ contains all $k$-nearest neighbors of $q$, a required point $\beta\in R$ exists. 
	\medskip
	
	\noindent 
	Let us now show that $\beta$ satisfies $d(q,\lambda) \leq  d(q,\beta) + 2^{i+1}$.
	Let $\gamma \in C \setminus N(q,\lambda) \cup \{\lambda \}$ be such that $\beta \in \Sd_i(\gamma, \T(R))$. 
	Since $\gamma \notin N(q,\lambda) \setminus \{\lambda\}$, we get $d(\gamma, q) \geq d(q, \lambda)$.  
	The triangle inequality says that $ d(q, \gamma) \leq d(q,\beta) + d(\gamma ,\beta) $.
	Finally Lemma~\ref{lem:distinctive_descendant_distance} implies that $d(\gamma, \beta) \leq 2^{i+1}$. 
	Then
	$$d(q,\lambda) \leq d(q, \gamma) \leq d(q,\beta) + d(\gamma ,\beta) \leq d(q,\beta) +2^{i+1}$$
	So $\beta$ is a desired $k$-nearest neighbor satisfying $d(q,\lambda) \leq  d(q,\beta) + 2^{i+1}$.
\end{proof}

\section{Construction of a compressed cover tree}
\label{sec:ConstructionCovertree}

This section introduces a new method Algorithm \ref{alg:cover_tree_k-nearest_construction_whole} for construction of a compressed cover tree, which is based on Insert() method \cite[Algorithm~2]{beygelzimer2006cover} that was specifically adapted for compressed cover tree. The proof of  \cite[Theorem~6]{beygelzimer2006cover}, which estimated the time complexity of \cite[Algorithm~2]{beygelzimer2006cover} was shown to be incorrect  Counterexample \ref{cexa:construction_algorithm_of_original_cover_tree}. The main contribution of this section are two new time complexity results that bound the time complexity of Algorithm \ref{alg:cover_tree_k-nearest_construction_whole}:
\begin{itemize}
    \item Theorem \ref{thm:construction_time} bounds the time complexity as 
    $O(c_m(R)^{10} \cdot \log_2(\Delta(R)) \cdot |R|)$ by using minimized expansion constant $c_m(R)$ and aspect ratio $\Delta(R)$ as parameters.
    \item  Theorem \ref{thm:construction_time_KR} bounds the time complexity as $O(c(R)^{12} \cdot \log_2|R| \cdot |R|)$ by using expansion constant $c(R)$ as parameter.
\end{itemize}
Definition~\ref{dfn:implementation_compressed_cover_tree} explains the concrete implementation of compressed cover tree.

\begin{dfn}[$\Child(p,i)$ and $\nxt(p,i,\T(R))$]
	\label{dfn:implementation_compressed_cover_tree}
	In a compressed cover tree $\T(R)$ on a set $R$, for any level $i$ and a node $p \in R$, set $\Child(p,i) = \{ a \in \Child(p) \mid l(a) = i \}$. 
	Let $\nxt(p,i,\T(R))$ be the maximal level $j$ satisfying $j < i$ and $\Child(p,i) \neq \emptyset$. 
	If such level does not exist, we set $j = l_{\min}(\T(R)) - 1$.
	For every node $p$, we store its set of children in a linked hash map so that
	
	\noindent
	(a)  any key $i$ gives access to $\Child(p,i)$, 
	
	\noindent
	(b) every $\Child(p,i)$ has access to $\Child(p,\nxt(p,i, \T(R)))$,
	
	\noindent
	(c) we can directly access $\max \{j \mid \Child(p,j) \neq \emptyset\}$.
	\bs
\end{dfn}

\begin{dfn}[construction iteration set $L(\T(W),p)$]
	\label{dfn:cover_tree_construction_iteration_set}
	Let $W$ be a finite subset of a metric space $(X,d)$. 
	Let $\T(W)$ be a cover tree of Definition \ref{dfn:cover_tree_compressed} built on $W$ and let $p \in X \setminus W$ be an arbitrary point.
	Let $L(\T(W),p) \subseteq H(\T(R))$ be the set of all levels $i$ during iterations \ref{line:cof:loop_start}-\ref{line:cof:loop_end} of Algorithm \ref{alg:cover_tree_k-nearest_construction} launched with inputs 
	$\T(W),p$. 
	Set $\eta(i) = \min_{t} \{ t \in L(\T(W),p) \mid t > i\}$. 
	\bs
\end{dfn}
\begin{algorithm}
	\caption{Building a compressed cover tree $\T(R)$ from Definition \ref{dfn:cover_tree_compressed}.
	}
	\label{alg:cover_tree_k-nearest_construction_whole}
	\begin{algorithmic}[1]
		\STATE \textbf{Input} : a finite subset $R$ of a metric space $(X,d)$
		\STATE \textbf{Output} : a compressed cover tree $\T(R)$. 
		\STATE Choose a random point $r \in R$ to be a root of $\T(R)$ 
		\STATE Build the initial compressed cover tree $\T = \T(\{r\})$ by making $l(r) = +\infty$. 
		\FOR{$p \in R \setminus \{r\}$} \label{line:con:for:begin}
		\STATE $\T \leftarrow $ run AddPoint$(\T , p )$ described in Algorithm \ref{alg:cover_tree_k-nearest_construction}.
		\ENDFOR  \label{line:con:for:end}
		\STATE For root $r$ of $\T$ set $l(r) = 1 +  \max_{p \in R \setminus \{r\}}l(p)$
	\end{algorithmic}
\end{algorithm}
\begin{algorithm}
	\caption{Building $\T(W \cup \{p\})$ in 
		lines \ref{line:con:for:begin}-\ref{line:con:for:end} of Algorithm \ref{alg:cover_tree_k-nearest_construction_whole}.}
	\label{alg:cover_tree_k-nearest_construction}
	\begin{algorithmic}[1]
		\STATE \textbf{Function} AddPoint(a compressed cover tree $\T(W)$ with a root $r$, a point $p\in X$)
		\STATE \textbf{Output} : compressed cover tree $\T(W \cup \{p\})$. 
		\STATE Set $i \leftarrow l_{\max}(\T(W)) - 1$ and $\eta(l_{\max} - 1) = l_{\max}$
		 \COMMENT{If the root $r$ has no children then $ i \leftarrow -\infty$}
		\STATE Set $R_{l_{\max}} \leftarrow \{r\}$.
		\WHILE{$i \geq l_{\min}$} \label{line:cof:loop_start}
		\STATE Assign $\mathcal{C}_i(R_{\eta(i)}) \leftarrow  R_{\eta(i)} \cup \{a \in \Child(p) \text{ for some }p \in R_{\eta(i)} \mid l(a) = i \} $
		\label{line:cof:dfn_C}
		\STATE Set $R_{i} = \{a \in \C_i(R_{\eta(i)}) \mid d(p,a) \leq 2^{i+1} \}$
		\label{line:cof:defRim1}
		\IF {$R_i$ is empty}\label{line:cof:inner_loop:begin}
		\STATE  Launch Algorithm \ref{alg:construction_parent_assign} with parameters $(p, R_{\eta(i)})$. \label{line:cof:inner_loop:mid}
		\ENDIF \label{line:cof:inner_loop:end}
		\STATE $t = \max_{ a \in R_{i}} \nxt(a,i,\T(W)) $  \label{line:cof:dfn_t}
		\COMMENT{If $R_{i}$ has no children we set $t = l_{\min} - 1$}
		\STATE $\eta(i) \leftarrow i$ and $i \leftarrow t$ 
		\ENDWHILE \label{line:cof:loop_end}
		\STATE Launch Algorithm \ref{alg:construction_parent_assign} with parameters $(p, R_{\eta(i)})$. \label{line:cof:selectParentEnd}
	\end{algorithmic}
\end{algorithm}
\begin{algorithm}
\caption{Assign node subprocedure}
\label{alg:construction_parent_assign}
\begin{algorithmic}[1]
\STATE \textbf{Function} AssignParent(Point $p$, subset of nodes $U \subseteq \T(W)$)
\STATE \textbf{Output:} Compressed cover tree $\T(W \cup \{p\})$
\STATE Pick $v \in U$ minimizing $d(v,p)$.
\STATE Set $l(p) = \floor{\log_2(d(p,v)}-1$ and let $v$ be a parent of $p$. \label{line:construction_assign_important}
\end{algorithmic}
\end{algorithm}

\noindent
Let $R$ be a finite subset of a metric space $(X,d)$. 
A compressed cover tree $\T(R)$ will be incrementally constructed by adding points one by one as summarized in Algorithm \ref{alg:cover_tree_k-nearest_construction_whole}. 
First we select a root node $r \in R$ and form a tree $\T(\{r\})$ of a single node $r$ at the level $l_{\max} = l_{\min} = +\infty$. 
Assume that we have a compressed cover tree $\T(W)$ for a subset $W \subset R$. 
For any point $p \in R \setminus W$, Algorithm \ref{alg:cover_tree_k-nearest_construction} builds a larger compressed cover tree $\T(W \cup \{p\})$ from $\T(W)$. 

\medskip 
\noindent
Note that during the construction of the compressed cover tree in Algorithm \ref{alg:cover_tree_k-nearest_construction} we write down additional information for every node $p$ , which includes the number of descendants of node $p$ and the maximal level of nodes in set $\Child(p)$. 

\begin{lem}
\label{lem:separation_of_descendants}
Let $\T(R)$ be a cover tree and let $p \in X$ be a point and let $i \in \Z$. Assume that for some $q \in \T(R)$ we have $d(p,q) > 2^{i+1}$. 
Let $\Sd_i(q,\T(R))$ be as defined in Definition~\ref{dfn:distinctive_descendant_set}.
Then for any $\theta \in \Sd_i(q, \T(R)) \setminus \{q\}$ we have $d(\theta,p) > 2^{l(\theta)}$.
\end{lem}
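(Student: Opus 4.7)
The plan is to use Lemma~\ref{lem:distinctive_descendant_child_level} to locate $\theta$ below a specific child of $q$, and then to apply a refined version of the geometric-series bound that underlies Lemma~\ref{lem:compressed_cover_tree_descendant_bound} together with the reverse triangle inequality. The crucial observation is that the naive descendant bound $d(q,\theta)\leq 2^{l(a)+2}$ from Lemma~\ref{lem:compressed_cover_tree_descendant_bound} is too weak — combined with $2^{l(a)+2}\leq 2^{i+1}$ it only yields $d(p,\theta)>0$. The refinement must keep track of the lowest term of the geometric sum, producing a bound of the form $d(q,\theta)\leq 2^{l(a)+2}-2^{l(\theta)+1}$.

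First I would observe that $\theta\neq q$ combined with $\theta\in\Sd_i(q,\T(R))$ lets us invoke Lemma~\ref{lem:distinctive_descendant_child_level} to obtain a node $a\in\Child(q)\setminus\{q\}$ with $l(a)<i$ and $\theta\in\Desc(a)$. In particular $l(\theta)\leq l(a)\leq i-1$, which provides the ceiling we need on $l(\theta)$.

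Next I would consider the node-to-root path $w_0=\theta,w_1,\dots,w_{m-1}=a,w_m=q$ of $\theta$. By the covering condition of Definition~\ref{dfn:cover_tree_compressed}, $d(w_j,w_{j+1})\leq 2^{l(w_j)+1}$, while the same condition forces the levels $l(w_0)<l(w_1)<\dots<l(w_{m-1})$ to be strictly increasing integers lying in $[l(\theta),l(a)]$. Summing the triangle inequality and bounding the resulting sum by the full geometric series over $[l(\theta)+1,l(a)+1]$ gives
\[
d(q,\theta)\;\leq\;\sum_{j=0}^{m-1}2^{l(w_j)+1}\;\leq\;\sum_{s=l(\theta)+1}^{l(a)+1}2^{s}\;=\;2^{l(a)+2}-2^{l(\theta)+1}\;\leq\;2^{i+1}-2^{l(\theta)+1},
\]
where the last inequality uses $l(a)\leq i-1$.

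Finally, the reverse triangle inequality together with the hypothesis $d(p,q)>2^{i+1}$ yields
\[
d(p,\theta)\;\geq\;d(p,q)-d(q,\theta)\;>\;2^{i+1}-\bigl(2^{i+1}-2^{l(\theta)+1}\bigr)\;=\;2^{l(\theta)+1}\;>\;2^{l(\theta)},
\]
which is the desired conclusion. The only delicate step is the refined geometric-series bound in the second paragraph — everything else is straightforward bookkeeping using the definitions already established.
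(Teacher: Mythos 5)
Your proof is correct and follows essentially the same route as the paper's: locate $\theta$ under a child of $q$ at level $\leq i-1$ via Lemma~\ref{lem:distinctive_descendant_child_level}, bound $d(q,\theta)$ by the geometric series $\sum_{s=l(\theta)+1}^{i}2^s = 2^{i+1}-2^{l(\theta)+1}$ (keeping the lowest term), and finish with the reverse triangle inequality. Your write-up is in fact cleaner than the paper's, which uses the same argument but with some notational slips ($j$, $\gamma$, $h$ standing for $i$, $q$, $m$).
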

\begin{proof}
Let $S= (\theta = a_0, ...,a_m = q)$ be a node to node path. Since $\theta \in \Sd_i(q, \T(R)) \setminus \{q\}$ by Lemma~\ref{lem:distinctive_descendant_child_level}
we have $l(a_{m-1}) \leq i - 1$. Therefore $l(\theta) = l(a_0) \leq ... \leq l(a_{m-1}) \leq i-1$. We have the following inequality:
 $$d(q,\theta) \leq  \sum\limits^{h-1}_{z = 0}d(a_z, a_{z+1})  \leq \sum\limits^{j}_{x = l(\theta)+1} 2^{x} = (2^{j+1} - 2^{l(\theta)+1}).$$
  By triangle inequality we have: $d(p,\theta) \geq d(p,\gamma) - d(\gamma, \theta) > 2^{j+1} - (2^{j+1} - 2^{l(\theta)+1}) > 2^{l(\theta)}$.
Therefore $d(p,\theta) > 2^{l(\theta)}$ which proves the claim. 
\end{proof}



\begin{thmm}[correctness of Algorithm \ref{alg:cover_tree_k-nearest_construction_whole}]
	\label{thm:construction_correctness}
	Algorithm \ref{alg:cover_tree_k-nearest_construction_whole} builds a compressed cover tree satisfying Definition~\ref{dfn:cover_tree_compressed}.
	\bs
\end{thmm}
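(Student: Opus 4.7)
The plan is to proceed by induction on the size of the growing set $W$. The base case $W = \{r\}$ is trivial: the only node is the root at level $+\infty$, all relevant cover sets contain only $r$, and the three conditions of Definition~\ref{dfn:cover_tree_compressed} hold vacuously. For the inductive step, I would assume that $\T(W)$ is a valid compressed cover tree and show that the output of Algorithm~\ref{alg:cover_tree_k-nearest_construction} on input $(\T(W), p)$ is a valid $\T(W \cup \{p\})$. The root condition $(\ref{dfn:cover_tree_compressed}a)$ is enforced by line 8 of Algorithm~\ref{alg:cover_tree_k-nearest_construction_whole} once all points have been inserted, so during the inductive step we may treat the root level as $+\infty$ and verify only conditions $(\ref{dfn:cover_tree_compressed}b)$ and $(\ref{dfn:cover_tree_compressed}c)$.

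The covering condition requires only checking the new edge from $p$ to its parent $v$ selected by Algorithm~\ref{alg:construction_parent_assign}, since existing parent links are untouched. By the choice $l(p) = \lfloor \log_2(d(p,v)) \rfloor - 1$ and the fact that $v$ belongs to the last nonempty candidate set $R_{\eta(i)}$, we have $d(p,v) \leq 2^{\eta(i)+1}$, which combined with the separation condition on $C_{l(v)}$ in $\T(W)$ yields $l(p) < l(v)$ and $d(p,v) \leq 2^{l(p)+1}$ after correctly handling the floor (up to the standard convention by which the algorithm chooses the level so as to satisfy the covering bound).

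The separation condition is the main obstacle. I would maintain the following invariant throughout the while loop \ref{line:cof:loop_start}--\ref{line:cof:loop_end}: at the start of each iteration at index $i$, the current $R_{\eta(i)}$ equals $\{a \in C_{\eta(i)}(\T(W)) \mid d(p,a) \leq 2^{\eta(i)+1}\}$. The base of this invariant at $i = l_{\max} - 1$ holds because $R_{l_{\max}} = \{r\}$ and $d(p,r) \leq 2^{l_{\max}}$ trivially (as all of $W$ lies within that radius by the covering condition inherited from $\T(W)$). For the inductive step of the invariant, Lemma~\ref{lem:child_set_equivalence} shows $\mathcal{C}_i(R_{\eta(i)})$ captures the correct set of level-$i$ ancestors of all relevant descendants, and Lemma~\ref{lem:separation_of_descendants} justifies that any node pruned at some higher level (distance exceeded $2^{\eta(i)+1}$) has all its lower-level distinctive descendants at distances strictly greater than their own $2^{l(\theta)}$, so they cannot serve as near candidates at any later iteration and are correctly discarded.

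Given this invariant, the separation condition for $\T(W \cup \{p\})$ reduces to showing that for every $j \leq l(p)$, no node in $C_j(\T(W))$ is within distance $2^j$ of $p$. If AssignParent is invoked because $R_i = \emptyset$, then every candidate in $\mathcal{C}_i(R_{\eta(i)})$ is at distance strictly greater than $2^{i+1}$, so by the level choice $l(p) \leq i$, the invariant propagates the exclusion downward via Lemma~\ref{lem:separation_of_descendants} to guarantee distance $> 2^j$ from all points of $C_j(\T(W))$ at every $j \leq l(p)$. If instead the loop terminates by reaching $i < l_{\min}$, the argument is even simpler since $p$ is assigned to a level below all existing nodes. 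In both cases the parent $v$ chosen in line~\ref{line:construction_assign_important} of Algorithm~\ref{alg:construction_parent_assign} has strictly larger level than $l(p)$, so the tree structure is preserved, completing the inductive step.
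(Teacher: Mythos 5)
Your overall plan (induction over insertions, covering condition from the level formula in Algorithm~\ref{alg:construction_parent_assign}, separation via Lemmas~\ref{lem:child_set_equivalence} and~\ref{lem:separation_of_descendants}) is the same as the paper's, but your separation argument breaks at the two places where the level $l(p)$ actually lands. First, when AssignParent is triggered by $R_i=\emptyset$, every candidate of $\C_i(R_{\eta(i)})$, in particular the chosen parent $v\in R_{\eta(i)}$, satisfies $d(p,v)>2^{i+1}$, so $l(p)=\lfloor\log_2 d(p,v)\rfloor-1\ge i$; your claim ``by the level choice $l(p)\le i$'' is backwards, and in fact $i\le l(p)\le\eta(i)$. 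This is not cosmetic: separation must be checked at every level $j\le l(p)$, which can lie well above $i+1$, against nodes of $W$ sitting at skipped levels between $i$ and $\eta(i)$ that never appear in any candidate set; handling them requires tracing each such node to its first ancestor in $C_{\eta(i)}$, observing that this ancestor was pruned, and invoking Lemma~\ref{lem:separation_of_descendants} -- the step you gesture at but do not carry out. Second, your terminal case is wrong: when the loop exits at $i<l_{\min}$ the parent is selected from the last nonempty set $R_{\eta(i)}$, and $l(p)$ can be any value up to $\eta(i)$, so $p$ is generally \emph{not} ``assigned to a level below all existing nodes.'' This terminal case is precisely the main case of the paper's proof, and it needs the full argument: every $\theta\in W$ is classified by the considered level at which it leaves the pool $\bigcup_{q\in R_j}\Sd_j(q,\T(W))$ (Lemma~\ref{lem:child_set_equivalence}), Lemma~\ref{lem:separation_of_descendants} handles the dropped nodes, and the minimality of $d(p,v)$ over $R_{\eta(i)}$ handles the survivors.

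A further, smaller problem is the invariant you rely on, $R_{\eta(i)}=\{a\in C_{\eta(i)}(\T(W))\mid d(p,a)\le 2^{\eta(i)+1}\}$. The inclusion $\subseteq$ holds by construction, but the reverse inclusion does not follow from the lemmas you cite as stated: Lemma~\ref{lem:separation_of_descendants} only gives $d(p,\theta)>2^{l(\theta)}$ for descendants of a pruned node, which still allows a node at a considered level $j$ with $2^{j}<d(p,\theta)\le 2^{j+1}$ hanging below a pruned ancestor; establishing the equality would need the sharper bound $d(p,\theta)>2^{l(\theta)+1}$ that is implicit in that lemma's proof but not in its statement. In any case the equality is unnecessary: the paper's argument only uses the one-sided inclusion together with the coverage of $W$ by distinctive descendant sets, so you would do better to drop the equality claim and argue, as above, by cases on where each $\theta\in W$ exits the candidate pool.
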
 
\begin{proof}
	It suffices to prove that Algorithm~\ref{alg:cover_tree_k-nearest_construction} correctly extends a compressed cover tree $\T(W)$ for any finite subset $W\subseteq X$ by adding a point $p$. Let us prove that $\T(W \cup \{p\})$ satisfies Definition~\ref{dfn:cover_tree_compressed}. 
	
	\medskip
    \noindent
    We first note that the parent $v$ of $p$ is always assigned in Algorithm~\ref{alg:construction_parent_assign} by setting $l(p) =\floor{\log_2(d(p,v)}-1$.
    Note that the set $U$ is never empty, when Algorithm~\ref{alg:construction_parent_assign} is launched.
    The covering condition (\ref{dfn:cover_tree_compressed}b) after adding point $p$ to $\T(W)$ follows from the following inequality:
    $$d(p,w) \leq 2^{\floor{\log_2(d(p,v)}} \leq 2^{l(p) + 1}.$$
    
	
	\smallskip
	\noindent 
	To check (\ref{dfn:cover_tree_compressed}c) Consider arbitrary cover set $C_{h} = \{q \in \T(W \cup \{p\}) \mid l(q) \geq h\}$. Since we have assumed that $\T(W)$ is a valid cover tree, all the cover sets $C_h$ for $h > l(p)$ satisfy the condition. Let us consider cover sets having $h \leq l(p)$. Let $\theta \in C_h$ be an arbitrary node. 
 Consider a sequence of iterations $l_{\min}(\T(W)) \leq a(0) < a(1) < ... < a(t) = l_{\max}(\T(W))$ that were considered during run-time of the algorithm. Note that the parent of $p$ was assigned at $i = a(0)$. Since $\theta \in W = \Sd_{l_{\max}(r,\T(W))}$, 
 either (\textbf{a}) $\theta \in \bigcup_{q \in R_{a(0)}}\Sd_{a(0)}(q,R_{a(0)})$ or (\textbf{b}) 
 there exists index $j$ satisfying 
 $$\theta \in \bigcup_{q \in R_{a(j+1)}}\Sd_{a(j+1)}(q, \T(W)) \setminus \bigcup_{q \in R_{i}}\Sd_{a(j)}(q, \T(W)).$$
 Let us first consider case $\textbf{(a)}$. Let $v$ be a parent of $p$ in $\T(W \cup \{p\})$. Recall that the parent $v$ of $p$ was assigned in line~\ref{line:construction_assign_important} of Algorithm \ref{alg:construction_parent_assign}. Therefore we have $d(v,p) \leq d(p,\theta)$ and by line~ \ref{line:construction_assign_important} we have:
  $$d(p,\theta) \geq d(p,v) \geq 2^{l(p) + 1} > 2^{l(p)} \geq 2^{h},$$
  which proves the claim.

  \smallskip
	
 \noindent 
 Assume now (\textbf{b}) holds. Denote $i = a(j+1)$, since $a(j)$ was previous level, it follows $\eta(i) = a(j)$. By Lemma~\ref{lem:child_set_equivalence} we have: 
 $$\bigcup_{q \in \C_{i}(R_{\eta(i)})}\Sd_{i}(q, \T(W)) = \bigcup_{q \in R_{\eta(i)}}\Sd_{\eta(i)}(q, \T(W)).$$
 Therefore there exists a node $u \in \C_i(R_{\eta(i)}) \setminus R_i$ for which $\theta \in \Sd_i(u, \T(W))$.
By line~\ref{line:cof:defRim1} of Algorithm~\ref{alg:cover_tree_k-nearest_construction} we have $d(u, p) > 2^{i+1}$.
If $u = \theta$, then the parent of $p$ was selected from set $R_{\eta(i)}$ and the proof is similar to (\textbf{a}).
Else by Lemma~\ref{lem:separation_of_descendants} it follows that $d(p,\theta) > 2^{l(\theta)} \geq 2^{h}$ which proves the claim.

 \end{proof}

\begin{lem}[time complexity of a key step for $\T(R)$]
	\label{lem:general_construction_time}
	Arbitrarily order all points of a finite reference set $R$ in a metric space $(X,d)$ starting from the root: $r=p_1$, $p_2,\dots,p_{|R|}$.
	Set $W_1=\{r\}$ and $W_{y+1}=W_y\cup\{p_y\}$ for  
	$y=1,...,|R|-1$. 
	Then Algorithm~\ref{alg:cover_tree_k-nearest_construction_whole} builds a compressed cover tree $\T(R)$ in time $$O\Big((c_m(R))^8 \cdot \max\limits_{y=1,...,|R|-1}L(\T(W_{y}),p_{y}) \cdot |R|\Big),$$
	where $c_m(R)$ is the minimized expansion constant from Definition \ref{dfn:expansion_constant}.
	\bs
\end{lem}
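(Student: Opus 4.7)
The plan is to show a per-iteration cost of $O((c_m(R))^8)$ inside one call to Algorithm~\ref{alg:cover_tree_k-nearest_construction}, then multiply by $|L(\T(W_y),p_y)|$ iterations per insertion and sum over the $|R|-1$ insertions performed in lines~\ref{line:con:for:begin}--\ref{line:con:for:end} of Algorithm~\ref{alg:cover_tree_k-nearest_construction_whole}. Throughout I will use Lemma~\ref{lem:expansion_constant_property} to replace $c_m(W_y)$ by $c_m(R)$ whenever convenient, since $W_y\subseteq R$.

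First I would bound the filtered candidate set $R_i$ produced in line~\ref{line:cof:defRim1}. By construction $R_i\subseteq C_i$, and by the separation condition~(\ref{dfn:cover_tree_compressed}c) the cover set $C_i$ is $2^i$-sparse; since every $a\in R_i$ satisfies $d(p,a)\leq 2^{i+1}$, Lemma~\ref{lem:packing} applied with $\delta=2^i$ and $t=2^{i+1}$ (so $\mu=\lceil\log_2(4\cdot 2+1)\rceil=4$) gives $|R_i|\leq (c_m(R))^4$. By induction on the iteration index, the same bound holds for the set $R_{\eta(i)}$ entering the next iteration: the base case $R_{l_{\max}}=\{r\}$ is trivial, and the inductive step is exactly the argument above at level $\eta(i)$.

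Second I would control the enlarged set $\C_i(R_{\eta(i)})$ of line~\ref{line:cof:dfn_C}. For each $a\in R_{\eta(i)}$, Lemma~\ref{lem:compressed_cover_tree_width_bound} bounds the number of children of $a$ on level $i$ by $(c_m(R))^4$. Combining with $|R_{\eta(i)}|\leq (c_m(R))^4$ yields
\[
|\C_i(R_{\eta(i)})|\ \leq\ |R_{\eta(i)}|\cdot\bigl(1+(c_m(R))^4\bigr)\ =\ O\bigl((c_m(R))^8\bigr).
\]
Using the linked hash map structure of Definition~\ref{dfn:implementation_compressed_cover_tree}, computing $\C_i(R_{\eta(i)})$ takes time proportional to its size, filtering it to $R_i$ uses one distance evaluation per element, and the subsequent computation of $t=\max_{a\in R_i}\nxt(a,i,\T(W_y))$ at line~\ref{line:cof:dfn_t} costs $O(|R_i|)=O((c_m(R))^4)$ since each $\nxt$ query is $O(1)$. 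The optional AssignParent call of Algorithm~\ref{alg:construction_parent_assign} on the current $R_{\eta(i)}$ (inner block lines~\ref{line:cof:inner_loop:begin}--\ref{line:cof:inner_loop:end} or the final line~\ref{line:cof:selectParentEnd}) scans at most $|R_{\eta(i)}|=O((c_m(R))^4)$ elements. Thus every iteration of the while-loop, together with any AssignParent it triggers, costs $O((c_m(R))^8)$.

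Finally, summing over all iterations of one AddPoint call gives $O\bigl((c_m(R))^8\cdot |L(\T(W_y),p_y)|\bigr)$, and summing over the $|R|-1$ insertions in Algorithm~\ref{alg:cover_tree_k-nearest_construction_whole} yields
\[
O\Bigl((c_m(R))^8\cdot \max_{y=1,\dots,|R|-1}|L(\T(W_y),p_y)|\cdot |R|\Bigr),
\]
as claimed. The main subtlety is the inductive packing bound on $R_{\eta(i)}$: one must recognise that, at the moment iteration $i$ begins, $R_{\eta(i)}$ is the output of the previous iteration and hence itself a subset of a sparse cover set intersected with a ball of radius comparable to that sparsity, so Lemma~\ref{lem:packing} really does apply with the small exponent $\mu=4$ rather than something growing with the level gap $\eta(i)-i$. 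Once that is in place, the rest is routine accounting.
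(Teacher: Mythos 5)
Your proposal is correct and follows essentially the same route as the paper's own proof: bound $|R_i|$ by the packing Lemma~\ref{lem:packing} with $t=2^{i+1}$, $\delta=2^i$ (giving $(c_m(R))^4$), bound $|\C_i(R_{\eta(i)})|$ via the width bound Lemma~\ref{lem:compressed_cover_tree_width_bound} (giving $O((c_m(R))^8)$ per iteration), and then multiply by $|L(\T(W_y),p_y)|$ iterations and $|R|$ insertions. The only cosmetic difference is that you phrase the bound on $R_{\eta(i)}$ as an induction over iterations, whereas the paper applies the packing argument directly to an arbitrary level; both are valid and equivalent.
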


\begin{proof}
	The worst-case time complexity of Algorithm \ref{alg:cover_tree_k-nearest_construction_whole} is dominated by lines \ref{line:con:for:begin}-\ref{line:con:for:end} which call Algorithm \ref{alg:cover_tree_k-nearest_construction} $O(|R|)$ times in total. 
	\medskip
	
	\noindent
	Assume that we have already constructed a cover tree on set $\T(W_y)$, the goal Algorithm \ref{alg:cover_tree_k-nearest_construction} is to construct tree $\T(W_y \cup \{p_{y+1}\})$.
	By Definition \ref{dfn:cover_tree_construction_iteration_set} loop on lines
	\ref{line:cof:loop_start}-\ref{line:cof:loop_end} is performed $L(\T(W_y),p_{y+1})$ times. 
	Let $R_{*}$ be the maximal size of set $R_i$ during all iterations $i \in L(\T(W_y),p_{y+1})$.
	By Lemma~\ref{lem:compressed_cover_tree_width_bound} since $W_{y+1} \subseteq R \subseteq X$ we have 
	$$|\C_{i}(R_{\eta(i)})| \leq c_m(W_{y+1})^4 \cdot |R_{*}| \leq c_m(R)^4 \cdot |R_{*}|$$ nodes, where $\C_{\eta(i)}(R_{\eta(i)})$ is defined in 
 line~\ref{line:cof:dfn_C}.  Therefore both, lines \ref{line:cof:defRim1}  and \ref{line:cof:dfn_C} take at most $c_m(R)^4|R_{*}|$ time. In line \ref{line:cof:dfn_t} we handle $|R_{*}|$ elements, for each of them we can retrieve index $\nxt(a,i, \T(W_y))$ in $O(1)$ time, since for every $a \in \T(R)$ we can update the last index $j$, when $a$ had children on level $j$ in line \ref{line:cof:dfn_C}. Therefore line  \ref{line:cof:dfn_t} takes at most $O(|R_{*}|)$ time.  
    Algorithm~\ref{alg:construction_parent_assign} takes at most $O(|R_{*}|)$ time. 
    Therefore line~\ref{line:cof:inner_loop:mid} and line~\ref{line:cof:selectParentEnd} take at most $O(|R_{*}|)$ time.
	Let us now bound $|R_{*}|$ during the whole run-time of the algorithm.
	
	\medskip
	
	\noindent
	Let $i$ be an arbitrary level. 
	Note that $R_{i} \subseteq B(p,2^{i+1}) \cap C_{i}$ where $C_{i}$ is a $i$th cover set of $\T(R)$. Since $C_{i}$ is $2^{i}$-spares subset of $R$ we can apply packing Lemma \ref{lem:packing} with $r = 2^{i+1}$ and $\delta = 2^{i}$ to obtain 
	$|B(p,2^{i+1}) \cap C_{i} | \leq (c_m(W))^4 $.  
	Lemma \ref{lem:expansion_constant_property} implies $(c_m(W))^4  \leq (c_m(R))^4 $, therefore $|B(p,2^{i}) \cap C_{i} | \leq (c_m(R))^4$. 
	\smallskip
	
	\noindent
	The time complexity of loop \ref{line:cof:loop_start} - \ref{line:cof:loop_end} in Algorithm \ref{alg:cover_tree_k-nearest_construction} is dominated by line \ref{line:cof:dfn_C} that has time $O(|C(R_i)|) \leq O((c_m(R))^4 \cdot |R_{*}|) \leq O((c_m(R))^8)$. 
	Then the whole  Algorithm \ref{alg:cover_tree_k-nearest_construction_whole} has time
	$$O((c_m(R))^8 \cdot  \max\limits_{y=2,...,|R|}L(\T(W_{y-1}),p_{y}) \cdot |R|)$$ as desired. 
\end{proof}
\begin{thmm}[time complexity of $\T(R)$ via aspect ratio]
	\label{thm:construction_time}
	Let $R$ be a finite subset of a metric space $(X,d)$ having aspect ratio $\Delta(R)$.
	Algorithm~\ref{alg:cover_tree_k-nearest_construction_whole} builds
	a compressed cover tree $\T(R)$ in time $O((c_m(R))^8 \cdot \log_2(\Delta(R)) \cdot |R|),$
	where $c_m(R)$ is the minimized expansion constant from Definition~\ref{dfn:expansion_constant}.
\end{thmm}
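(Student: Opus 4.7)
My plan is to apply Lemma~\ref{lem:general_construction_time}, which already establishes a time bound of $O((c_m(R))^8 \cdot \max_{y} |L(\T(W_y), p_y)| \cdot |R|)$ for Algorithm~\ref{alg:cover_tree_k-nearest_construction_whole}. It therefore suffices to show that $|L(\T(W_y), p_y)| = O(\log_2 \Delta(R))$ for every intermediate tree $\T(W_y)$ built during the construction.

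To bound $|L(\T(W_y), p_y)|$, I would exploit the jump structure of the while loop in lines \ref{line:cof:loop_start}--\ref{line:cof:loop_end} of Algorithm~\ref{alg:cover_tree_k-nearest_construction}. After the initial iteration at $i = l_{\max}(\T(W_y)) - 1$, every subsequent level visited is of the form $t = \max_{a \in R_i} \nxt(a, i, \T(W_y))$, which by Definition~\ref{dfn:implementation_compressed_cover_tree} equals $l(c)$ for some node $c$ that is an explicit child in $\T(W_y)$. Hence each visited level corresponds to an actually occupied level of the tree, and a counting argument gives $|L(\T(W_y), p_y)| \leq |H(\T(W_y))| + O(1)$, where $H$ is the height set from Definition~\ref{dfn:depth}.

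Since $W_y \subseteq R$, the inclusions $\diam(W_y) \leq \diam(R)$ and $d_{\min}(W_y) \geq d_{\min}(R)$ yield $\Delta(W_y) \leq \Delta(R)$. Applying Lemma~\ref{lem:depth_bound} to $W_y$ then gives $|H(\T(W_y))| \leq 1 + \log_2 \Delta(W_y) \leq 1 + \log_2 \Delta(R)$. Substituting this back into the bound from Lemma~\ref{lem:general_construction_time} produces the claimed running time $O((c_m(R))^8 \cdot \log_2 \Delta(R) \cdot |R|)$.

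The main technical subtlety I anticipate is a clean verification that no iteration of the while loop visits a level outside the height set of $\T(W_y)$ by more than a constant, particularly in boundary cases where $R_i$ becomes empty and Algorithm~\ref{alg:construction_parent_assign} is triggered, or when $t$ is forced down to $l_{\min}-1$. This is essentially a bookkeeping check against the pseudocode of Algorithm~\ref{alg:cover_tree_k-nearest_construction} together with Definition~\ref{dfn:implementation_compressed_cover_tree}, but it must be handled with care since $L(\T(W_y), p_y)$ was defined purely as the multiset of loop iterates rather than in terms of $H$.
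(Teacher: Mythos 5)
Your proposal is correct and follows essentially the same route as the paper: the paper's proof is exactly the combination of Lemma~\ref{lem:general_construction_time} with the bound $\max_y|L(\T(W_{y-1}),p_y)|\leq |H(\T(R))|\leq 1+\log_2(\Delta(R))$ from Lemma~\ref{lem:depth_bound}. Your extra step of bounding the iteration set of each intermediate tree via $\Delta(W_y)\leq\Delta(R)$ is just a more careful spelling-out of the detail the paper leaves implicit (the definition of $L(\T(W),p)$ already places it inside the height set), so it is the same argument, slightly more explicit.
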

\begin{proof}
	In Lemma \ref{lem:general_construction_time} use the upper bounds  due to Lemma \ref{lem:depth_bound} as follows: \\
	$\max\limits_{y \in 2,...,|R|}|L(\T(W_{y-1}),p_{y})| \leq H(\T(R))\leq 1 + \log_2(\Delta(R))$.
\end{proof}

\begin{lem}
	\label{lem:knn_next_level_finder_for_log_depth}
	Let $(X,d)$ be a metric space and let $W \subseteq X$ be its finite subset. Let $q \in X \setminus W$ be an arbitrary point.
	Let $i \in L(\T(W),q)$ be arbitrarily iteration of Definition \ref{dfn:cover_tree_construction_iteration_set}. Assume that $t = \eta(\eta(i+1))$ is defined. Then there exists $p \in R$ satisfying $2^{i+1} < d(p,q) \leq 2^{t+1}$.
\end{lem}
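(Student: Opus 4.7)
The plan is a direct case analysis on distances from $q$ to nodes of the preceding candidate set $R_t$, using only the cover tree axioms plus two triangle inequalities.

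First I would unpack the notation. Writing $j = \eta(i+1)$ and $t = \eta(j)$, Definition~\ref{dfn:cover_tree_construction_iteration_set} says $j$ is the smallest element of $L(\T(W),q)$ strictly above $i+1$, so $j \geq i+2$; similarly $t \geq j+1 \geq i+3$. Since $j$ is the iteration level immediately following $t$ in the loop of Algorithm~\ref{alg:cover_tree_k-nearest_construction}, line~\ref{line:cof:dfn_t} forces $R_t \neq \emptyset$ and $j = \max_{a \in R_t} \nxt(a,t,\T(W))$. In particular some $a^* \in R_t$ has a child $c^*$ at level exactly $j$, and both $a^*, c^* \in W$ are available as candidates for the point $p$ whose existence we must prove.

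Next I would split into two cases. In the easy case, some $a \in R_t$ has $d(q,a) > 2^{i+1}$; since $R_t$ is defined by $d(q,\cdot) \leq 2^{t+1}$ in line~\ref{line:cof:defRim1}, the point $p = a$ already satisfies $2^{i+1} < d(q,p) \leq 2^{t+1}$. In the hard case, every $a \in R_t$ satisfies $d(q,a) \leq 2^{i+1}$; in particular $d(q,a^*) \leq 2^{i+1}$. Separation~(\ref{dfn:cover_tree_compressed}c) applied to the distinct nodes $a^*, c^* \in C_j$ gives $d(a^*,c^*) > 2^j$, while covering~(\ref{dfn:cover_tree_compressed}b) for the level-$j$ child $c^*$ gives $d(a^*,c^*) \leq 2^{j+1}$. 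Two triangle inequalities then pin $d(q,c^*)$: from below,
$$d(q,c^*) \;\geq\; d(a^*,c^*) - d(q,a^*) \;>\; 2^j - 2^{i+1} \;\geq\; 2^{i+2} - 2^{i+1} \;=\; 2^{i+1},$$
using $j \geq i+2$; and from above,
$$d(q,c^*) \;\leq\; d(q,a^*) + d(a^*,c^*) \;\leq\; 2^{i+1} + 2^{j+1} \;\leq\; 2^{j+2} \;\leq\; 2^{t+1},$$
using $t \geq j+1$. Hence $p = c^* \in W$ is the required point.

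The main obstacle is the bookkeeping in the first paragraph: translating between the static set $L$ of Definition~\ref{dfn:cover_tree_construction_iteration_set} and the dynamic execution of Algorithm~\ref{alg:cover_tree_k-nearest_construction}, so as to justify that $R_t$ is non-empty and that the max of the next-child-level function over $R_t$ is attained by a concrete node $a^* \in R_t$ which has a child at level exactly $j$. Once that link is made precise, the analytic core is just the two triangle inequalities above, and the crucial point is that the factor-$2$ gap between $2^{i+1}$ and $2^j$ enforced by $j \geq i+2$ is exactly what the lower bound needs.
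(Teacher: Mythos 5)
Your proof is correct, and it takes a recognizably different route from the paper's, although both hinge on the same pair of nodes: the paper also starts from a node $u\in R_{t}$ (with $t=\eta(\eta(i+1))$) having a child $v$ at level $j=\eta(i+1)$, exactly your $a^*,c^*$. The paper then argues by contradiction that $u,v$ cannot both lie in $R_i$ (if they did, $d(u,q)\le 2^{i+1}$ and $d(v,q)\le 2^{i+1}$ would force $d(u,v)\le 2^{i+2}\le 2^{j}$, against the separation condition), and it extracts the lower bound $>2^{i+1}$ from the algorithm's pruning: whichever of $u,v$ is missing from $R_i$ must have been discarded in line~\ref{line:cof:defRim1} at some intermediate level $t'$ with $i\le t'\le j$, hence lies at distance $>2^{t'+1}\ge 2^{i+1}$ from $q$; the upper bound then comes from membership in $R_t$ together with the covering bound $d(u,v)\le 2^{j+1}$. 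You instead split on whether every point of $R_t$ is within $2^{i+1}$ of $q$: in the easy case a far point of $R_t$ is itself the witness, and in the hard case you obtain the lower bound for the child $c^*$ statically, from separation at level $j$ minus $d(q,a^*)\le 2^{i+1}$, never referring to the candidate sets at levels strictly between $i$ and $t$. This buys a cleaner argument: you avoid both the contradiction step and the bookkeeping that the discarded node really does persist in the sets $\C_{t'}(R_{\eta(t')})$ down to the level where it is dropped (a point the paper treats rather loosely), while using exactly the same ingredients ($j\ge i+2$, $t\ge j+1$, separation, covering, and $R_t\subseteq\bar{B}(q,2^{t+1})$). The only step you leave implicit is that $a^*\in C_j$, which is immediate since $l(a^*)>l(c^*)=j$ by the covering condition, so there is no gap.
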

\begin{proof}

	Note first that since $\eta(i+3) \in L(\T(R),q)$, there exists distinct 
	$u \in R_{\eta(\eta(i+3))}$ and $v \in \C_{\eta(i+1)}(R_{\eta(\eta(i+1)}))$, in such a way that $u$ is the parent of $v$. 
	Let us show that both of $u,v$ cant belong to set $R_i$. Assume contrary that both $u,v \in R_i$. Then by line  \ref{line:cof:defRim1} of Algorithm \ref{alg:cover_tree_k-nearest_construction}
	we have $d(v,q) \leq 2^{i+1}$ and $d(u,q) \leq 2^{i+1}$. By triangle inequality $d(u,v) \leq d(u,q) + d(q,v) \leq 2^{i+2} \leq 2^{\eta(i+1)}$.
	Recall that we denote a level of a node by $l$.
	On the other hand we have $l(u) \geq \eta(i+1)$ and $l(v) \geq \eta(i+1)$, by separation condition of Definition \ref{dfn:cover_tree_compressed} we have $d(u,v) > 2^{\eta(i+1)}$, which is a contradiction. Therefore only one of $\{u,v\}$ 
	can belong to $R_i$. It sufficies two consider the two cases below: 
	
	\medskip
	
	\noindent
	\textbf{Assume that }$v \notin R_i$. Since $v$ is children of $u$ we have $d(u,v) \leq 2^{\eta(i+1) + 1}$.
	By line \ref{line:cof:defRim1} of Algorithm \ref{alg:cover_tree_k-nearest_construction} we have $d(u,q) \leq 2^{\eta(i+1) + 1}$.
	By triangle inequality 
	$$d(v,q) \leq d(v,u) + d(u,q) \leq 2^{ \eta(i+1) + 1 } + 2^{\eta(i+1) + 1} \leq 2^{\eta(i+1) + 2} \leq 2^{\eta(\eta(i+1)) + 1} $$
	Since $v \notin R_i$ there exists level $t$ having $\eta(i+1) \geq t \geq i$ and $v \in \C_{t}(R_{\eta(t)}) \setminus R_t$.
	Therefore by line \ref{line:cof:defRim1} of Algorithm \ref{alg:cover_tree_k-nearest_construction}  we have $d(q,v) > 2^{t+1} \geq 2^{i+1}$.
	It follows that we have found point $v \in R$ satisfying $2^{i+1} < v \leq 2^{\eta(\eta(i+1)) + 1}$. Therefore $p = v$, is the desired point.
	
	\medskip
	
	\noindent
	\textbf{Assume that }$u \notin R_i$. Since $u \in R_{\eta(\eta(i+1))}$, by line \ref{line:cof:defRim1} of Algorithm \ref{alg:cover_tree_k-nearest_construction} we have $d(u,q) \leq 2^{\eta(\eta(i+1)) + 1}$.
	On the other hand since $u \notin R_i$, there exists level $t$ having $\eta(i+3) \geq t \geq i$ and $u \in \C_{t}(R_{\eta(t)}) \setminus R_t$. Therefore by line \ref{line:cof:defRim1} of Algorithm \ref{alg:cover_tree_k-nearest_construction} we have $d(q,u) > 2^{t+2} \geq 2^{i+2}$.
	It follows that we have found point $u \in R$ satisfying $2^{i+1} < u \leq 2^{\eta(\eta(i+1)) + 1}$. Therefore $p = u$, is the desired point.
	
\end{proof}

\begin{lem}
	\label{lem:construction_depth_bound}
	Let $A, W$ be finite subsets of a metric space $X$ satisfying $W \subseteq A \subseteq X$, take $q \in A \setminus W$.
	Given a compressed cover tree $\T(W)$ on $W$, Algorithm~\ref{alg:cover_tree_k-nearest_construction} runs lines \ref{line:cof:loop_start}-\ref{line:cof:loop_end} this number of times: $|L(\T(W),q)| = O\big(c(A)^2 \cdot \log_2(|A|)\big)$.
	\bs
\end{lem}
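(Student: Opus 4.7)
The plan is to enumerate the iterations of Algorithm~\ref{alg:cover_tree_k-nearest_construction} as $L(\T(W),q) = \{\ell_1 > \ell_2 > \cdots > \ell_m\}$ in decreasing order of level, and then use Lemma~\ref{lem:knn_next_level_finder_for_log_depth} at every third iteration to witness a point of $A$ at a tightly controlled distance from $q$. Specifically, for each index $k$ with $3 \leq k \leq m$, Lemma~\ref{lem:knn_next_level_finder_for_log_depth} (with $\eta(\ell_k+1)=\ell_{k-1}$ and $\eta(\eta(\ell_k+1))=\ell_{k-2}$) produces a point $p^{(k)} \in W \subseteq A$ satisfying $2^{\ell_k+1} < d(p^{(k)},q) \leq 2^{\ell_{k-2}+1}$. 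These distances land in explicit dyadic windows indexed by $k$, and the task is to prune the collection $\{p^{(k)}\}$ down to a subsequence with the geometric growth required by Lemma~\ref{lem:growth_bound_extension}.

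Next I would pick indices $k_1 > k_2 > \cdots > k_n \geq 3$ with $k_{i+1} = k_i - 4$, starting from $k_1 = m$ (or the largest admissible index), giving $n \geq (m-3)/4$. Since iterations take distinct integer levels, $\ell_{k_{i+1}} = \ell_{k_i - 4} \geq \ell_{k_i-2}+2$, so
\[
 d(p^{(k_{i+1})},q) > 2^{\ell_{k_{i+1}}+1} \geq 2^{\ell_{k_i-2}+3} \geq 4\cdot d(p^{(k_i)},q).
\]
Setting $p_i = p^{(k_{n+1-i})}$ for $i = 1,\dots,n$ (so the distances increase with $i$) gives a sequence of distinct points in $A$ satisfying the hypothesis $4\,d(p_i,q)\leq d(p_{i+1},q)$ of Lemma~\ref{lem:growth_bound_extension}.

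Then I apply the extended growth bound:
\[
 |A| \;\geq\; |\bar B(q,\tfrac{4}{3}d(q,p_n))| \;\geq\; \Big(1+\tfrac{1}{c(A)^2}\Big)^{n}\cdot |\bar B(q,\tfrac{1}{3}d(q,p_1))| \;\geq\; \Big(1+\tfrac{1}{c(A)^2}\Big)^{n},
\]
since the innermost ball contains $q\in A$. Taking $\log_2$ and invoking Lemma~\ref{lem:hard_function_bound} (using $c(A)\geq 2$) yields
\[
 n \;\leq\; \frac{\log_2(|A|)}{\log_2(1+c(A)^{-2})} \;\leq\; c(A)^2\,\log_2(|A|).
\]
Combining with $n \geq (m-3)/4$ gives $m \leq 4\,c(A)^2\log_2(|A|)+3 = O(c(A)^2\log_2|A|)$, as required.

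The main obstacle is the indexing bookkeeping: one must carefully orient the enumeration so that the ``factor-4'' growth holds after stepping back exactly four iterations, accounting for the two-level gap inherent in Lemma~\ref{lem:knn_next_level_finder_for_log_depth}. Once that offset is pinned down, the rest is a clean combination of the extended growth bound and the elementary analytic inequality in Lemma~\ref{lem:hard_function_bound}. A minor detail to handle is that Lemma~\ref{lem:knn_next_level_finder_for_log_depth} requires $\eta(\eta(i+1))$ to be defined, which fails for the top few iterations; these contribute only an additive $O(1)$ and are absorbed into the final bound.
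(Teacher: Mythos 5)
Your proposal follows essentially the same route as the paper's proof: extract witness points via Lemma~\ref{lem:knn_next_level_finder_for_log_depth}, thin them to a subsequence with factor-$4$ distance growth, apply the extended growth bound of Lemma~\ref{lem:growth_bound_extension} together with Lemma~\ref{lem:hard_function_bound}, and then recover $|L(\T(W),q)|$ from the subsequence length up to a constant factor (the paper does the thinning via the recursion $s_{i+1}=\eta(\eta(\eta(s_i+1))+1)$ and a $6$/$12$-elements-per-segment count rather than a fixed index step).

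One concrete caveat in your bookkeeping: the identity $\eta(\ell_k+1)=\ell_{k-1}$ is not always true. By Definition~\ref{dfn:cover_tree_construction_iteration_set}, $\eta(\ell_k+1)$ is the smallest element of $L(\T(W),q)$ \emph{strictly} greater than $\ell_k+1$, so when $\ell_{k-1}=\ell_k+1$ you get $\eta(\ell_k+1)=\ell_{k-2}$ and hence $\eta(\eta(\ell_k+1))=\ell_{k-3}$; the upper bound from Lemma~\ref{lem:knn_next_level_finder_for_log_depth} is then $d(p^{(k)},q)\leq 2^{\ell_{k-3}+1}$, not $2^{\ell_{k-2}+1}$. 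In that case stepping back exactly four indices only gives $\ell_{k_i-4}\geq\ell_{k_i-3}+1$, which is not enough for the chain $2^{\ell_{k_{i+1}}+1}\geq 4\,d(p^{(k_i)},q)$, so the factor-$4$ growth can fail for your gap-$4$ subsequence. Stepping back five indices (or defining the subsequence through the $\eta$-recursion as in the paper) repairs this, changing only the constant, so the asymptotic bound $O(c(A)^2\log_2|A|)$ is unaffected. Your trivial lower bound $|\bar B(q,\tfrac13 d(q,p_1))|\geq 1$ (since $q\in A$) is a legitimate simplification of the paper's argument at that point.
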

\begin{proof}
	Let $x \in L(\T(R),q)$ be the lowest level of $L(\T(R),q)$.
	Define $s_1 = \eta(\eta(x)+1)$ and let $s_i = \eta(\eta(\eta(s_{i-1}+1))+1)$, if it exists. Assume that $s_{n+1}$ is the last sequence element for which $\eta(\eta(\eta(s_{n-1}+1))+1)$ is defined. Define $S = \{s_1,...,s_{n}\}$. For every $i \in \{1,...,n\}$ let $p_i$ be the point provided by Lemma \ref{lem:knn_next_level_finder_for_log_depth} that satisfies $$ 2^{s_i+1} < d(p_i,q) \leq 2^{\eta(\eta(s_{i}+1)) + 1}.$$
	Let $P$ be the sequence of points $p_i$.  Denote $n = |P| = |S|$. 
	Let us show that $S$ satisfies the conditions of Lemma \ref{lem:growth_bound_extension}. Note that:
	$$4 \cdot d(p_i,q)\leq 4 \cdot 2^{\eta(\eta(s_{i}+1)) + 1} \leq 2^{\eta(\eta(s_{i}+1)) + 3} \leq 2^{\eta(\eta(\eta(s_{i}+1))+1) + 1} \leq 2^{s_{i+1}+1} \leq d(p_{i+1},q)$$
	By Lemma \ref{lem:growth_bound_extension} applied for set $A$ and sequence $P$ we get:
	$$|\bar{B}(q,\frac{4}{3}  d(q,p_n))| \geq (1+\frac{1}{c(R)^2})^{n} \cdot |\bar{B}(q,\frac{1}{3} d(q,p_1))|$$
	Since $\eta(x) \in L(\T(R),q)$ , there exists some point $u \in R_{\eta(x)}$. 
	By definition of $R_i$ we have $d(u,q) \leq 2^{\eta(x) + 1}$. 
	Also $$2^{\eta(\eta(x) + 1)-1} \leq \frac{2^{\eta(\eta(x) + 1)+1}}{3} < \frac{d(q,p_1)}{3}$$
	It follows that:
	$$1 \leq  |\bar{B}(q, 2^{\eta(x) + 1})| \leq |\bar{B}(q, 2^{\eta(\eta(x)+1) - 1}| \leq |\bar{B}(q, \frac{d(q,p_1)}{3})|$$
	Therefore we have
	$$|A| \geq \frac{|\bar{B}(q,\frac{4}{3} \cdot d(q,p_n))|}{|\bar{B}(q,\frac{1}{3} \cdot d(q,p_1))|} \geq (1+\frac{1}{c(A)^2})^{n}$$
	Note that $c(A) \geq 2$ by definition of expansion constant. Then by applying $\log$ and by using Lemma \ref{lem:hard_function_bound} we obtain: $c(A)^2\log(A) \geq n = |S|$. 
	Let $x$ be minimal level of $L(\T(W),q)$ and let $y$ be the maximal level of $L(\T(W),q)$ 
	Note that $S$ is a sub sequence of $L$ in such a way that:
	\begin{itemize}
		\item $[x,s_1] \cap L(\T(R),q) \leq 3$, 
		\item for all $i \in 1,..., n$ we have $[s_i, s_{i+1}] \cap L(\T(R),q) \leq 6 $
		\item $[s_n, y] \cap L(\T(R),q) < 12$
	\end{itemize}
	Since segments $[x,s_1],[s_1,s_2], ...,  [s_2,s_n], [s_n,y]$ cover $|L(\T(R),q)|$,
	it follows that $|S| \geq \frac{|L(\T(R),q)|}{12}$. We obtain that $$|L(\T(R),q)| \leq 12 \cdot c(A)^2 \cdot \log_2(|A|),$$ which proves the claim.
\end{proof}

\begin{thmm}[time complexity of $\T(R)$ via expansion constants]
	\label{thm:construction_time_KR}
	Let $R$ be a finite subset of a metric space $(X,d)$. Let $A$ be a finite subset of $X$ satisfying $R \subseteq A \subseteq X$. 
	Then Algorithm~\ref{alg:cover_tree_k-nearest_construction_whole} builds a compressed cover tree $\T(R)$ in time $O((c_m(R))^8 \cdot c(A)^2 \cdot \log_2(|A|) \cdot |R|)$, see $c(A),c_m(R)$ in Definition \ref{dfn:expansion_constant}.
	\bs
\end{thmm}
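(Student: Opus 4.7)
The plan is to combine the two preceding lemmas directly. First, I will invoke Lemma~\ref{lem:general_construction_time}, which, after fixing any ordering $r=p_1,p_2,\dots,p_{|R|}$ of the points of $R$ and setting $W_y=\{p_1,\dots,p_y\}$, gives the bound
\[
O\Bigl((c_m(R))^8 \cdot \max_{y=1,\dots,|R|-1}|L(\T(W_y),p_{y+1})| \cdot |R|\Bigr)
\]
for the total running time of Algorithm~\ref{alg:cover_tree_k-nearest_construction_whole}. So the whole task reduces to bounding the per-insertion iteration count $|L(\T(W_y),p_{y+1})|$ uniformly in $y$.

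Next, for every $y$ we have $W_y\cup\{p_{y+1}\}\subseteq R\subseteq A$, so the hypothesis $W\subseteq A$ and $q\in A\setminus W$ of Lemma~\ref{lem:construction_depth_bound} is satisfied with $W=W_y$ and $q=p_{y+1}$. Lemma~\ref{lem:construction_depth_bound} then yields $|L(\T(W_y),p_{y+1})| = O(c(A)^2\log_2(|A|))$. This bound is independent of $y$, so taking the maximum costs nothing, and substitution into the expression above gives
\[
O\bigl((c_m(R))^8\cdot c(A)^2\cdot\log_2(|A|)\cdot|R|\bigr),
\]
which is exactly the claim.

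The only subtlety I would check carefully is the hypothesis for Lemma~\ref{lem:construction_depth_bound}: the lemma requires $W\subseteq A\subseteq X$ and $q\in A\setminus W$, and bounds $|L(\T(W),q)|$ in terms of $c(A)$ and $|A|$. Since the statement of the theorem allows an arbitrary superset $A$ of $R$, and at every insertion step $W_y$ is a proper subset of $R$ containing previously inserted points (in particular $p_{y+1}\in R\setminus W_y\subseteq A\setminus W_y$), the hypothesis is indeed met for every intermediate tree $\T(W_y)$ produced during the construction. No further combinatorial argument is needed; the theorem is just the composition of the two preceding lemmas.

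Because the argument is a straightforward composition, I do not expect any genuine obstacle. The only place where something could go wrong would be if the inner iteration count $L(\T(W_y),p_{y+1})$ depended on the particular tree $\T(W_y)$ in a way that prevented uniform bounding — but Lemma~\ref{lem:construction_depth_bound} already packages this away using only $c(A)$ and $\log_2|A|$, so a clean one-paragraph proof suffices.
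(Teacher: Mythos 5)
Your proposal is correct and follows exactly the paper's own argument: the paper proves this theorem by combining Lemma~\ref{lem:general_construction_time} with the uniform per-insertion bound $|L(\T(W_y),p_{y+1})| = O(c(A)^2\log_2(|A|))$ from Lemma~\ref{lem:construction_depth_bound}, which is precisely your composition. Your extra check that the hypotheses of Lemma~\ref{lem:construction_depth_bound} hold at every insertion step is a welcome, if implicit in the paper, clarification.
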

\begin{proof}
	It follows from Lemmas~\ref{lem:construction_depth_bound} and~\ref{lem:general_construction_time}.
\end{proof}


\begin{cor}
	\label{cor:construction_time_KR}
	Let $R$ be a finite subset of a metric space $(X,d)$. 
	Then Algorithm~\ref{alg:cover_tree_k-nearest_construction_whole} builds
	a compressed cover tree $\T(R)$ in time $O((c_m(R))^8 \cdot c(R)^2 \cdot \log_2(|R|)) \cdot |R|),$
	where $c(R),c_m(R)$ appeared in Definition \ref{dfn:expansion_constant}.
	\bs
\end{cor}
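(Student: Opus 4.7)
The plan is to derive this corollary as a direct specialization of Theorem~\ref{thm:construction_time_KR}, which already establishes the more general bound $O\bigl((c_m(R))^8 \cdot c(A)^2 \cdot \log_2(|A|) \cdot |R|\bigr)$ for any finite set $A$ satisfying $R \subseteq A \subseteq X$. Since the theorem holds for every such $A$, the natural choice is simply $A = R$; this is legal because $R$ itself trivially satisfies $R \subseteq R \subseteq X$ and is finite by hypothesis.

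Substituting $A = R$ into the bound of Theorem~\ref{thm:construction_time_KR} gives $c(A) = c(R)$ and $|A| = |R|$, yielding the stated time complexity $O\bigl((c_m(R))^8 \cdot c(R)^2 \cdot \log_2(|R|) \cdot |R|\bigr)$. No further argument is needed: the minimized expansion constant factor $(c_m(R))^8$ is unchanged (it depends only on $R$ and on the infimum over supersets in Definition~\ref{dfn:expansion_constant}, not on the particular witness $A$), and the replacement is valid because the bound in Theorem~\ref{thm:construction_time_KR} is universally quantified over the choice of $A$.

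There is no genuine obstacle here, since the hard work has been absorbed into Theorem~\ref{thm:construction_time_KR}, whose proof in turn rests on Lemma~\ref{lem:general_construction_time} (which supplied the $(c_m(R))^8$ factor via the width bound of Lemma~\ref{lem:compressed_cover_tree_width_bound} and packing Lemma~\ref{lem:packing}) and Lemma~\ref{lem:construction_depth_bound} (which supplied the $c(A)^2 \log_2(|A|)$ factor via the growth bound, Lemma~\ref{lem:growth_bound_extension}). The corollary merely advertises the convenient specialization where the user supplies no auxiliary superset and accepts the possibly worse dependence on $c(R)$ in place of $c(A)$.
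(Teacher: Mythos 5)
Your proposal is correct and matches the paper's own proof, which likewise obtains the corollary by invoking Theorem~\ref{thm:construction_time_KR} with the choice $A = R$, so that $c(A)=c(R)$ and $|A|=|R|$. Nothing further is needed.
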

\begin{proof}
	The proof follows from Theorem~\ref{thm:construction_time_KR} by setting $A = R$.
\end{proof}

\section{$k$-nearest neighbor search algorithm}
\label{sec:better_approach_knn_problem}


This section is motivated by
Counterexample~\ref{cexa:original_all_nearest_neighbors_algorithm} which showed that the proof of past time complexity result  \cite[Theorem~5]{beygelzimer2006cover} of nearest neighborhood search algorithm contained mistakes.  The two main results of this sections are Corollary~\ref{cor:cover_tree_knn_miniziminzed_constant_time}
and Theorem \ref{thm:knn_KR_time} which provide new time complexity results for $k$-nearest neighborhood problem, assuming that a compressed cover tree was already constructed for the reference set $R$. For the construction algorithm of compressed cover tree and its time complexity, we refer to Section~\ref{sec:ConstructionCovertree}.

 \medskip
 \noindent
 The past mistakes are resolved by introducing a new Algorithm \ref{alg:cover_tree_k-nearest} for finding $k$-nearest neighbors that generalize and improves the original method for finding nearest neighbors using an implicit cover. \cite[Algorithm~1]{beygelzimer2006cover}. The first improvement is $\lambda$-point of line~\ref{line:knnu:dfnLambda} which allows us to search for all $k$-nearest neighbors of a given query point for any $k \geq 1$. The second improvement is a new loop break condition on line \ref{line:knnu:qtoofar:condition}. The new loop break condition is utilized in the proof of Lemma \ref{lem:knn_depth_bound} to conclude that the total number of performed iterations is bounded by  $O(c(R)^2\log(|R|))$  during whole run-time of the algorithm. The latter improvement closes the past gap in proof of \cite[Theorem~5]{beygelzimer2006cover} by bounding the number of iterations independently from the explicit depth  $D(\T(R))$ of Definition~\ref{dfn:explicit_depth_for_compressed_cover_tree}, that generated the past confusion. 
 
  \medskip
 \noindent 
 Recall from Definition~\ref{dfn:essential_levels_node} that an essential set $\Es(p,\T(R)) \subseteq H(\T(R)$ consists of all levels $i \in H(\T(R))$ for which $p$ has non-trivial children in $\T(R)$ at level $i$. 
 By Lemma~\ref{lem:distinctive_descendants_precompute} the sizes of distinctive descendants $|\Sd_i(p, \T(R))|$ can be precomputed in a linear time $O(|R|)$ for all $p \in R$ and $i \in \Es(p,\T(R))$.
 Since the size of distinctive descendant set $|\Sd_i(p, \T(R))|$ can only change at indices $i \in \Es(p,\T(R))$, we assume that the sizes of $|\Sd_i(p, \T(R))|$ can be retrieved in a constant time $O(1)$ for any $p \in R$ and $i \in H(\T(R))$ during the run-time of Algorithm~\ref{alg:cover_tree_k-nearest}.



\begin{dfn}
	\label{dfn:knn_iteration_set}
	Let $R$ be a finite subset of a metric space $(X,d)$. 
	Let $\T(R)$ be a cover tree of Definition \ref{dfn:cover_tree_compressed} built on $R$ and let $q \in X$ be arbitrary point.
	Let $L(\T(R),q) \subseteq H(\T(R))$ be the set of all levels $i$ during iterations of lines~\ref{line:knnu:loop_begin}-\ref{line:knnu:loop_end} of Algorithm~\ref{alg:cover_tree_k-nearest} launched with inputs 
	$\T(R),q$. 
	If Algorithm~\ref{alg:cover_tree_k-nearest} reaches line \ref{line:knnu:qtoofar} at level 
	$\varrho \in L(\T(R),q)$, then we say that is \emph{special}. 
	We denote $\eta(i) = \min_{t} \{ t \in L(\T(R),q) \mid t > i\}$. 
	\bs
\end{dfn}

\noindent
Note that $\eta(i)$ of Definition \ref{dfn:knn_iteration_set} may be undefined. If $\eta(i)$ is defined, then by definition we have $\eta(i) \geq i + 1$.
Let $d_k(q,R)$ be the distance of $q$ to its $k$th nearest neighbor in $R$. 

\begin{algorithm}
	\caption{$k$-nearest neighbor search by a compressed cover tree}
	\label{alg:cover_tree_k-nearest}
	\begin{algorithmic}[1]
		\STATE \textbf{Input} : compressed cover tree $\T(R)$, a query point $q\in X$, an integer $ k \in \Z_{+} $
		\STATE Set $i \leftarrow l_{\max}(\T(R)) - 1$ and $\eta(l_{\max}-1) = l_{\max}$
		\STATE  Let $r$ be the root node of $\T(R)$. Set $R_{l_{\max}}=\{r\}$.
		\WHILE{$i \geq l_{\min}$} \label{line:knnu:loop_begin}
		\STATE Assign $\mathcal{C}_i(R_{\eta(i)}) \leftarrow  R_{\eta(i)} \cup \{a \in \Child(p) \text{ for some }p \in R_{\eta(i)} \mid l(a) = i \} $ \\ \COMMENT{Recall that $\Child(p)$ contains node $p$ } \label{line:knnu:dfn_C}
		\STATE Compute $\lambda = \lambda_k(q,\C_{i}(R_{\eta(i)}))$ from Definition~\ref{dfn:lambda-point} \label{line:knnu:dfnLambda} by 
		Algorithm \ref{alg:lambda}.
		\STATE Find $R_{i} = \{p \in \C_i(R_{\eta(i)}) \mid d(q,p) \leq d(q,\lambda) + 2^{i+2}\}$ \label{line:knnu:dfnRi}
		\IF {$d(q,\lambda) > 2^{i+2}$} \label{line:knnu:qtoofar:condition}
		\STATE Define list $S = \emptyset$
		\FOR{$p \in R_i$} \label{line:knnu:qtoofar:loop:start}
		\STATE Update $S$ by running Algorithm \ref{alg:cover_tree_k-nearest_final_collection} on $(p,i)$ \label{line:knnu:qtoofar:launch} 
		\ENDFOR  \label{line:knnu:qtoofar:loop:end}
		\STATE Compute and \textbf{output} $k$-nearest neighbors of the query point $q$ from set $S$.
		\label{line:knnu:qtoofar}
		\ENDIF \label{line:knnu:qtoofar:condition:endif}
		\STATE Set $j \leftarrow \max_{ a \in R_{i}} \nxt(a,i,\T(R))$
		\COMMENT{If such $j$ is undefined, we set $j = l_{\min}-1$} \label{line:knnu:dfnindexj}
		\STATE Set $\eta(j) \leftarrow i$ and $i \leftarrow j$.
		\ENDWHILE \label{line:knnu:loop_end}
		\STATE Compute and \textbf{output} $k$-nearest neighbors of query point $q$ from the set $R_{l_{\min}}$.
		\label{line:knnu:final_line}
	\end{algorithmic}
\end{algorithm}

\begin{algorithm}
	\caption{The node collector called in line~\ref{line:knnu:qtoofar:launch} of Algorithm~\ref{alg:cover_tree_k-nearest}.}
	\label{alg:cover_tree_k-nearest_final_collection} 
	\begin{algorithmic}[1]
		\STATE \textbf{Input: }$p \in R$, index $i$.
		\STATE \textbf{Output: } a list $S \subseteq R$ containing all nodes of $\Sd_i(p,\T(R))$.
		\STATE Add $p$ to list $S$. 
		\IF {$i > l_{\min}(\T(R))$}
		\STATE Set $j = \nxt(p,i,\T(R))$ 
		\STATE Set $C =\{a \in \Child(p) \mid l(a) = j \}$
		\FOR{$u \in C$}
		\STATE Call Algorithm \ref{alg:cover_tree_k-nearest_final_collection} with $(u,j)$.
		\ENDFOR
		\ENDIF
	\end{algorithmic}
\end{algorithm}

\begin{exa}[Simulated run of Algorithm \ref{alg:cover_tree_k-nearest}]
	\label{exa:simulatedRun}
	Let $R$ and $\T(R)$ be as in Example \ref{exa:cover_tree_big}. Let $q = 0$ and $k = 5$. Figures \ref{fig:iteration3goodexample}, \ref{fig:iteration2goodexample},  \ref{fig:iteration1goodexample} and \ref{fig:iteration0goodexample} illustrate simulated run of Algorithm \ref{alg:cover_tree_k-nearest} on input $(\T(R), q, k)$. Recall that $l_{\max} = 2$ and $l_{\min} = -1$. During the iteration $i$ of Algorithm \ref{alg:cover_tree_k-nearest} we maintain the following coloring: Points in $R_i$ are colored orange. Points $\C_{\eta(i)}(R_{\eta(i)})$ (of line 5) that are not contained in $R_i$ are colored yellow. The $\lambda$-point of line \ref{line:knnu:dfnLambda} is denoted by using purple color. All the nodes that were present in $R_{\eta(i)}$ , but are no longer included in $R_i$ will be colored red. Finally all the points that are selected as $k$-nearest neighbors of $q$ are colored green in the final iteration. Nodes that haven't been yet visited or that will never be visited are colored white. Let $R_{2} = \{8\}$. Consider the following steps:

	\smallskip
	
	\noindent
	\textbf{Iteration} $i = 1$:  Figure \ref{fig:iteration3goodexample} illustrates iteration $i = 1$ of the Algorithm \ref{alg:cover_tree_k-nearest}. In line \ref{line:knnu:dfn_C} we find
	$\C_2(R_2) = \{4,8,12\}$. Since node $4$ minimizes distance $d(\C_2(R_2),0)$ and distinctive descendant set $\Sd_2(4, \T(R))$ consists of 7 elements we get $\lambda = 4$ and therefore $d(q,\lambda) = 4 \leq 2^{i+2} = 8$.
	In line \ref{line:knnu:dfnRi} we find $R_{1} = \{r \in C \mid d(0,r) \leq d(q,\lambda) + 2^{3} = 12\} = \{4,8,12\}$.
	
	\smallskip
	
	\noindent
	\textbf{Iteration} $i = 0$:   Figure \ref{fig:iteration2goodexample} illustrates iteration $i = 0$ of the Algorithm \ref{alg:cover_tree_k-nearest}. In line \ref{line:knnu:dfn_C}  we find 
	$\C_1(R_1) = \{2,4,6,8,10,12,14\}.$ Since $|\Sd_1(2, \T(R))| = 3$, $|\Sd_1(4, \T(R))| = 1$ and $|\T_1(6)| = 3$ and $6$ is the node with smallest to distance $0$ satisfying $\sum_{p \in N(0, 6) = \{2,4,6\}} | \Sd_1(p, \T(R))| \geq 5 = k.$ It follows that $\lambda = 6$.  In line \ref{line:knnu:dfnRi} we find $R_{0} = \{r \in \C(R_1) \mid d(0,r) \leq d(q,\lambda) + 2^{2} = 10\} = \{2,4,6,8,10\}$.
	Since $d(q,\lambda) > 2^{i+2} = 4$. We proceed into lines \ref{line:knnu:qtoofar:condition} - \ref{line:knnu:qtoofar:condition:endif}
	
	\smallskip
	
	\noindent
	\textbf{Final block} lines \ref{line:knnu:qtoofar:condition} - \ref{line:knnu:qtoofar:condition:endif} for  
	$i = 0$:  Figure \ref{fig:iteration1goodexample} marks all the 
	points $S$ discovered by line \ref{line:knnu:qtoofar:launch} as orange. Figure \ref{fig:iteration0goodexample} illustrates the final selection of $k$ points from set $S$ that are selected as the final output $\{1,2,3,4,5\}$.


\end{exa}

\begin{figure}[H]
	\centering
	\begin{tikzpicture}[align=center, node distance = 1.0cm, scale = 0.45]

	\node (scale3) {Level 2};
	\node[below of =scale3] (scale2) {\color{orange} Level 1};
	\node[below of =scale2] (scale1) {Level 0};
	\node[below of =scale1] (scale0) {Level $-1$};
	\node [blockzm1,  right=1pt of scale0 ] (node1) {1};
	\node [blockzm1,  right=26pt of scale0 ] (node3) {3};
	\node [blockzm1,  right=51pt of scale0 ] (node5) {5};
	\node [blockzm1,  right=76pt of scale0 ] (node7) {7};
	\node [blockzm1,  right=101pt of scale0 ] (node9) {9};
	\node [blockzm1,  right=126pt of scale0 ] (node11) {11};
	\node [blockzm1,  right=151pt of scale0 ] (node13) {13};
	\node [blockzm1,  right=176pt of scale0 ] (node15) {15};

	\node [blockzm1,  right=13pt of scale1 ] (node2) {2};
	\node [blockzm1,  right=63pt of scale1 ] (node6) {6};
	\node [blockzm1,  right=113pt of scale1 ] (node10) {10};
	\node [blockzm1,  right=163pt of scale1 ] (node14) {14};

	\node [blockzm1p,  right=38pt of scale2 ] (node4) {4};
	\node [blockzm1y,  right=138pt of scale2 ] (node12) {12};
	\node [blockzm1o,  right=88pt of scale3 ] (node8) {8};

            \draw[->] (node2) -> (node1);
    \draw[->] (node2) -> (node3);
     \draw[->] (node6) -> (node5);
    \draw[->] (node6) -> (node7);
     \draw[->] (node10) -> (node9);
    \draw[->] (node10) -> (node11);
     \draw[->] (node14) -> (node13);
    \draw[->] (node14) -> (node15);

    \draw[->] (node8) -> (node4);
    \draw[->] (node8) -> (node12);

      \draw[->] (node4) -> (node2);
     \draw[->] (node4) -> (node6);
         
         \draw[->] (node12) -> (node10);
     \draw[->] (node12) -> (node14);


\end{tikzpicture}
	\caption{Iteration $i = 1$ of simulation in Example \ref{exa:simulatedRun} of Algorithm \ref{alg:cover_tree_k-nearest} }
	\label{fig:iteration3goodexample}
\end{figure}
\begin{figure}[H]
	\centering
	\begin{tikzpicture}[align=center, node distance = 1.0cm, scale = 0.45]

	\node (scale3) {Level 2};
	\node[below of =scale3] (scale2) {Level 1};
	\node[below of =scale2] (scale1) {\color{orange} Level 0};
	\node[below of =scale1] (scale0) {Level $-1$};
	\node [blockzm1,  right=1pt of scale0 ] (node1) {1};
	\node [blockzm1,  right=26pt of scale0 ] (node3) {3};
	\node [blockzm1,  right=51pt of scale0 ] (node5) {5};
	\node [blockzm1,  right=76pt of scale0 ] (node7) {7};
	\node [blockzm1,  right=101pt of scale0 ] (node9) {9};
	\node [blockzm1,  right=126pt of scale0 ] (node11) {11};
	\node [blockzm1,  right=151pt of scale0 ] (node13) {13};
	\node [blockzm1,  right=176pt of scale0 ] (node15) {15};

	\node [blockzm1y,  right=13pt of scale1 ] (node2) {2};
	\node [blockzm1p,  right=63pt of scale1 ] (node6) {6};
	\node [blockzm1y,  right=113pt of scale1 ] (node10) {10};
	\node [blockzm1y,  right=163pt of scale1 ] (node14) {14};

	\node [blockzm1o,  right=38pt of scale2 ] (node4) {4};
	\node [blockzm1o,  right=138pt of scale2 ] (node12) {12};
	\node [blockzm1o,  right=88pt of scale3 ] (node8) {8};

    \draw[->] (node2) -> (node1);
    \draw[->] (node2) -> (node3);
     \draw[->] (node6) -> (node5);
    \draw[->] (node6) -> (node7);
     \draw[->] (node10) -> (node9);
    \draw[->] (node10) -> (node11);
     \draw[->] (node14) -> (node13);
    \draw[->] (node14) -> (node15);

    \draw[->] (node8) -> (node4);
    \draw[->] (node8) -> (node12);

      \draw[->] (node4) -> (node2);
     \draw[->] (node4) -> (node6);
         
         \draw[->] (node12) -> (node10);
     \draw[->] (node12) -> (node14);


\end{tikzpicture}
	\caption{Iteration $i = 0$ of simulation in Example \ref{exa:simulatedRun} of Algorithm \ref{alg:cover_tree_k-nearest} }
	\label{fig:iteration2goodexample}
\end{figure}

\begin{figure}[H]
	\centering
	\begin{tikzpicture}[align=center, node distance = 1.0cm, scale = 0.45]

	\node (scale3) {Level 2};
	\node[below of =scale3] (scale2) {Level 1};
	\node[below of =scale2] (scale1) {\color{orange} Level 0};
	\node[below of =scale1] (scale0) {Level $-1$};
	\node [blockzm1o,  right=1pt of scale0 ] (node1) {1};
	\node [blockzm1o,  right=26pt of scale0 ] (node3) {3};
	\node [blockzm1o,  right=51pt of scale0 ] (node5) {5};
	\node [blockzm1o,  right=76pt of scale0 ] (node7) {7};
	\node [blockzm1o,  right=101pt of scale0 ] (node9) {9};
	\node [blockzm1o,  right=126pt of scale0 ] (node11) {11};
	\node [blockzm1,  right=151pt of scale0 ] (node13) {13};
	\node [blockzm1,  right=176pt of scale0 ] (node15) {15};

	\node [blockzm1o,  right=13pt of scale1 ] (node2) {2};
	\node [blockzm1o,  right=63pt of scale1 ] (node6) {6};
	\node [blockzm1o,  right=113pt of scale1 ] (node10) {10};
	\node [blockzm1r,  right=163pt of scale1 ] (node14) {14};

	\node [blockzm1o,  right=38pt of scale2 ] (node4) {4};
	\node [blockzm1r,  right=138pt of scale2 ] (node12) {12};
	\node [blockzm1o,  right=88pt of scale3 ] (node8) {8};

    \draw[->] (node2) -> (node1);
    \draw[->] (node2) -> (node3);
     \draw[->] (node6) -> (node5);
    \draw[->] (node6) -> (node7);
     \draw[->] (node10) -> (node9);
    \draw[->] (node10) -> (node11);
     \draw[->] (node14) -> (node13);
    \draw[->] (node14) -> (node15);

    \draw[->] (node8) -> (node4);
    \draw[->] (node8) -> (node12);

      \draw[->] (node4) -> (node2);
     \draw[->] (node4) -> (node6);
         
         \draw[->] (node12) -> (node10);
     \draw[->] (node12) -> (node14);

\end{tikzpicture}
	\caption{Line \ref{line:knnu:qtoofar:launch} of Iteration $i = 0$ of simulation in Example \ref{exa:simulatedRun} of Algorithm \ref{alg:cover_tree_k-nearest} }
	\label{fig:iteration1goodexample}
\end{figure}

\begin{figure}[H]
	\centering
	\begin{tikzpicture}[align=center, node distance = 1.0cm, scale = 0.45]

	\node (scale3) {Level 2};
	\node[below of =scale3] (scale2) {Level 1};
	\node[below of =scale2] (scale1) {\color{orange} Level 0};
	\node[below of =scale1] (scale0) {Level $-1$};
	\node [blockzm1g,  right=1pt of scale0 ] (node1) {1};
	\node [blockzm1g,  right=26pt of scale0 ] (node3) {3};
	\node [blockzm1g,  right=51pt of scale0 ] (node5) {5};
	\node [blockzm1r,  right=76pt of scale0 ] (node7) {7};
	\node [blockzm1r,  right=101pt of scale0 ] (node9) {9};
	\node [blockzm1r,  right=126pt of scale0 ] (node11) {11};
	\node [blockzm1,  right=151pt of scale0 ] (node13) {13};
	\node [blockzm1,  right=176pt of scale0 ] (node15) {15};

	\node [blockzm1g,  right=13pt of scale1 ] (node2) {2};
	\node [blockzm1r,  right=63pt of scale1 ] (node6) {6};
	\node [blockzm1r,  right=113pt of scale1 ] (node10) {10};
	\node [blockzm1r,  right=163pt of scale1 ] (node14) {14};

	\node [blockzm1g,  right=38pt of scale2 ] (node4) {4};
	\node [blockzm1r,  right=138pt of scale2 ] (node12) {12};
	\node [blockzm1r,  right=88pt of scale3 ] (node8) {8};

    \draw[->] (node2) -> (node1);
    \draw[->] (node2) -> (node3);
     \draw[->] (node6) -> (node5);
    \draw[->] (node6) -> (node7);
     \draw[->] (node10) -> (node9);
    \draw[->] (node10) -> (node11);
     \draw[->] (node14) -> (node13);
    \draw[->] (node14) -> (node15);

    \draw[->] (node8) -> (node4);
    \draw[->] (node8) -> (node12);

      \draw[->] (node4) -> (node2);
     \draw[->] (node4) -> (node6);
         
         \draw[->] (node12) -> (node10);
     \draw[->] (node12) -> (node14);


\end{tikzpicture}
	\caption{Line \ref{line:knnu:qtoofar} of iteration $i = 0$ of simulation in Example \ref{exa:simulatedRun} of Algorithm \ref{alg:cover_tree_k-nearest} }
	\label{fig:iteration0goodexample}
\end{figure}

Note that $\bigcup_{p \in R_i}\Sd_i(p, \T(R))$ is decreasing set for which $\bigcup_{p \in R_{l_{\max}}}\Sd_{l_{\max}}(p, \T(R)) = R$ and $$\bigcup_{p \in R_{l_{\min}}}\Sd_{l_{\min}}(p, \T(R)) = R_{l_{\min}}.$$

\begin{lem}[$k$-nearest neighbors in the candidate set for all $i$]
	\label{lem:cover_tree_knn_correct_lem}
	Let $R$ be a finite subset of an ambient metric space $(X,d)$, let $q \in X$ be a query point and let $k \in \Z \cap [1,\infty)$ be a parameter. Let $\T(R)$ be a compressed cover tree of $R$. Assume that $|R| \geq k$. Then for any iteration $i \in L(\T(R),q)$ of Definition \ref{dfn:knn_iteration_set} the candidate set $\bigcup_{p \in R_i}\Sd_i(p, \T(R))$ contains all $k$-nearest neighbors of $q$. \bs
	
\end{lem}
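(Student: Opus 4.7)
The plan is to proceed by induction on the iterations of the main while loop of Algorithm~\ref{alg:cover_tree_k-nearest}, going downward in $i$. For the base case at the first iteration $i = l_{\max}-1$, we have $\eta(i) = l_{\max}$ and $R_{l_{\max}} = \{r\}$, so by Definition~\ref{dfn:distinctive_descendant_set} the candidate set $\Sd_{l_{\max}}(r,\T(R)) = \Desc(r) = R$ trivially contains all $k$-nearest neighbors of $q$.

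For the inductive step, suppose $\bigcup_{p \in R_{\eta(i)}}\Sd_{\eta(i)}(p, \T(R))$ contains all $k$-nearest neighbors of $q$. Applying Lemma~\ref{lem:child_set_equivalence} gives the identity
$$\bigcup_{p \in \C_i(R_{\eta(i)})}\Sd_{i}(p, \T(R)) = \bigcup_{p \in R_{\eta(i)}}\Sd_{\eta(i)}(p, \T(R)),$$
so the expanded candidate set at level $i$ (before filtering) still contains all $k$-nearest neighbors. For any $k$-nearest neighbor $\beta$ of $q$, Lemma~\ref{lem:separation} provides a unique node $p \in \C_i(R_{\eta(i)})$ with $\beta \in \Sd_i(p,\T(R))$, and it suffices to verify that $p$ survives the filter in line~\ref{line:knnu:dfnRi}, i.e.\ $d(q,p) \leq d(q,\lambda) + 2^{i+2}$.

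The key geometric step is the following bound on $d_k(q,R)$ in terms of $d(q,\lambda)$. By the defining property of the $\la$-point (Definition~\ref{dfn:lambda-point}), $\sum_{u \in N(q;\lambda)} |\Sd_i(u,\T(R))| \geq k$, and by Lemma~\ref{lem:separation} the sets $\Sd_i(u,\T(R))$ for distinct $u$ are disjoint, so $\left|\bigcup_{u \in N(q;\lambda)} \Sd_i(u,\T(R))\right| \geq k$. For every $w$ in this union, Lemma~\ref{lem:distinctive_descendant_distance} together with the triangle inequality gives $d(q,w) \leq d(q,u) + d(u,w) \leq d(q,\lambda) + 2^{i+1}$. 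Hence $R$ contains at least $k$ distinct points within distance $d(q,\lambda) + 2^{i+1}$ of $q$, and therefore $d_k(q,R) \leq d(q,\lambda) + 2^{i+1}$.

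Combining this with $d(q,\beta) \leq d_k(q,R)$ and with $d(p,\beta) \leq 2^{i+1}$ (Lemma~\ref{lem:distinctive_descendant_distance}), the triangle inequality yields
$$d(q,p) \;\leq\; d(q,\beta) + d(\beta,p) \;\leq\; \bigl(d(q,\lambda) + 2^{i+1}\bigr) + 2^{i+1} \;=\; d(q,\lambda) + 2^{i+2},$$
so $p \in R_i$ and $\beta \in \bigcup_{p' \in R_i} \Sd_i(p',\T(R))$. Since $\beta$ was an arbitrary $k$-nearest neighbor, this closes the induction. The main obstacle is the geometric step above: one must extract a useful upper bound on $d_k(q,R)$ purely from the $\la$-point's $\Sd_i$-count defining property, and the disjointness guarantee of Lemma~\ref{lem:separation} is exactly what converts a sum of sizes into an honest lower bound on the number of distinct nearby reference points, yielding the slack $2^{i+2}$ that the filter in line~\ref{line:knnu:dfnRi} requires.
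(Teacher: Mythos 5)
Your proof is correct and follows essentially the same route as the paper: both rest on Lemma~\ref{lem:child_set_equivalence}, the disjointness of distinctive descendant sets (Lemma~\ref{lem:separation}/\ref{lem:sum}), the $2^{i+1}$ descendant bound of Lemma~\ref{lem:distinctive_descendant_distance}, and the defining property of the $\la$-point, with identical constants. The only difference is organizational: the paper argues by contradiction at the largest level where a $k$-nearest neighbor would be lost, showing at least $k$ points would then be strictly closer than $\beta$, whereas you run the contrapositive directly by extracting the bound $d_k(q,R)\leq d(q,\lambda)+2^{i+1}$ and checking that the node containing $\beta$ passes the filter in line~\ref{line:knnu:dfnRi}.
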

\begin{proof}
	
	Since $R_{l_{\max}} = \{r\}$, where $r$ is the root $\T(R)$ we have $S_{l_{\max}}(r,\T(R)) = R$ and therefore any point among $k$-nearest neighbor of $q$ is contained in $R_{l_{\max}}$. Let $i$ be the largest index for which there exists a point among $k$-nearest neighbor of $q$ that doesn't belong to $\bigcup_{p \in R_{i}}\Sd_i(p, \T(R))$. Let us denote such point by $\beta$, then:
	$$\beta \in \bigcup_{p \in R_{\eta(i)}}\Sd_{\eta(i)}(p, \T(R)) \setminus \bigcup_{p \in R_{i}}\Sd_{i}(p, \T(R)).$$ 
	By Lemma \ref{lem:child_set_equivalence} we have 
	\begin{ceqn}
		
		\begin{equation}
			\label{eqa:neighborsContained}
			\bigcup_{p \in \C_{\eta(i)}(R_{\eta(i)})}\Sd_{i}(p, \T(R)) = \bigcup_{p \in R_{\eta(i)}}\Sd_{\eta(i)}(p, \T(R))
		\end{equation}
		
	\end{ceqn}
	\noindent 
	Let $\lambda$ be as in line \ref{line:knnu:dfnLambda} of Algorithm \ref{alg:cover_tree_k-nearest}. By Equation (\ref{eqa:neighborsContained}) we have $$|\bigcup_{p \in \C_{\eta(i)}(R_{\eta(i)})}\Sd_{i}(p, \T(R))| \geq k,$$ therefore by Definition \ref{dfn:lambda-point} such $\lambda$ exists. Since $\beta \in \bigcup_{p \in \C_{\eta(i)}(R_{\eta(i)})}\Sd_{i}(p, \T(R))$, there exists $\alpha \in \C_{\eta(i)}(R_{\eta(i)})$ satisfying $\beta \in \Sd_{i}(\alpha, \T(R))$. By assumption it follows $\alpha \notin R_{i}$. By line \ref{line:knnu:dfnRi} of the algorithm we have
	\begin{ceqn}
		\begin{equation}
			\label{eqa:neighborsContained2}
			d(\alpha, q) > d(q, \lambda) + 2^{i+2}.
		\end{equation}
	\end{ceqn}
	Let $w$ be arbitrary point in set $\bigcup_{p \in N(q;\la)}\Sd_{i}(p, \T(R))$. Therefore $w \in \Sd_{i}(\gamma, \T(R))$ for some $\gamma \in  N(q;\la)$. By Lemma \ref{lem:distinctive_descendant_distance} applied on $i$ we have  $d(\gamma, w) \leq 2^{i+1}$. By Definition \ref{dfn:lambda-point}  since $\gamma \in  N(q;\la)$ we have $d(q,\gamma) \leq d(q,\lambda)$. By (\ref{eqa:neighborsContained2}) and the triangle inequality we obtain:
	\begin{ceqn}
		\begin{equation}
			\label{eqa:neighborsContained3}
			d(q,w) \leq d(q, \gamma) + d(\gamma,w) \leq d(q,\lambda) + 2^{i+1} < d(\alpha,q) - 2^{i+1} 
		\end{equation}
	\end{ceqn}
	On the other hand $\beta$ is a descendant of $\alpha$ thus we can estimate:
	\begin{ceqn}
		\begin{equation}
			\label{eqa:neighborsContained4}
			d(q,\beta) \geq d(q,\alpha) - d(\alpha,\beta) \geq d(\alpha,q) - 2^{i+1} 
		\end{equation}
	\end{ceqn}
	By combining Inequality (\ref{eqa:neighborsContained3}) with Inequality (\ref{eqa:neighborsContained4}) we obtain $d(q,w) < d(q,\beta)$. Since $w$ was arbitrary point from $\bigcup_{p \in N(q;\la)}\Sd_{i}(p, \T(R))$, that contains at least  $k$ points, $\beta$ cannot be any $k$-nearest neighbor of $q$, which is a contradiction. 
\end{proof}

\begin{thmm}[correctness of Algorithm~\ref{alg:cover_tree_k-nearest}]
	\label{thm:cover_tree_knn_correct}
	Algorithm~\ref{alg:cover_tree_k-nearest} correctly finds all $k$ nearest neighbors of query point $q$ within the reference set $R$. \bs
\end{thmm}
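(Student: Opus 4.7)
The plan is to treat Lemma~\ref{lem:cover_tree_knn_correct_lem} as the governing loop invariant: at every iteration $i \in L(\T(R),q)$, the candidate set $\bigcup_{p \in R_i}\Sd_i(p,\T(R))$ contains all $k$-nearest neighbors of $q$. With this in hand, proving Theorem~\ref{thm:cover_tree_knn_correct} reduces to verifying that the two exit paths of Algorithm~\ref{alg:cover_tree_k-nearest} (the early termination in line~\ref{line:knnu:qtoofar} and the normal termination in line~\ref{line:knnu:final_line}) each materialize the candidate set and then pick the correct $k$ points by brute force.

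For the normal termination path, I would first argue that once the loop variable drops below $l_{\min}$, every node $p \in R_{l_{\min}}$ satisfies $\Sd_{l_{\min}}(p,\T(R)) = \{p\}$. Indeed, by Definition~\ref{dfn:distinctive_descendant_set}, $V_{l_{\min}}(p)$ is the set of all descendants of $p$ with levels in $[l_{\min}, l(p)-1]$, which contains every immediate non-trivial child of $p$; hence every node of $\Desc(p)\setminus\{p\}$ is a descendant of some element of $V_{l_{\min}}(p)$ and is removed. Therefore $\bigcup_{p \in R_{l_{\min}}}\Sd_{l_{\min}}(p,\T(R)) = R_{l_{\min}}$, and by Lemma~\ref{lem:cover_tree_knn_correct_lem} this set already contains all $k$ nearest neighbors. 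Selecting the $k$ closest among $R_{l_{\min}}$ in line~\ref{line:knnu:final_line} therefore returns the correct answer.

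For the early termination path, the key step is to show that the list $S$ assembled in lines~\ref{line:knnu:qtoofar:loop:start}-\ref{line:knnu:qtoofar:loop:end} equals $\bigcup_{p\in R_i}\Sd_i(p,\T(R))$. The inductive claim for Algorithm~\ref{alg:cover_tree_k-nearest_final_collection} on input $(p,i)$ is that it enumerates $\Sd_i(p,\T(R))$: the base case $i = l_{\min}$ is immediate since $\Sd_{l_{\min}}(p,\T(R)) = \{p\}$, and for the inductive step I would use Lemma~\ref{lem:distinctive_descendant_child_level}, which decomposes $\Sd_i(p,\T(R))$ as $\{p\}$ together with the disjoint union of $\Desc(a)$ over all $a \in \Child(p)\setminus\{p\}$ with $l(a) < i$; the recursion via $\nxt(p,i,\T(R))$ traverses precisely these children and, since each such child $a$ satisfies $\Sd_{l(a)}(a,\T(R)) = \Desc(a)$ (as $V_{l(a)}(a) = \emptyset$), the recursion harvests all of $\Desc(a)$. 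Taking the union over $p \in R_i$ and invoking Lemma~\ref{lem:cover_tree_knn_correct_lem} then guarantees that $S$ contains every $k$-nearest neighbor of $q$, so the brute-force selection in line~\ref{line:knnu:qtoofar} is correct.

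The only subtle step is the recursive enumeration claim for Algorithm~\ref{alg:cover_tree_k-nearest_final_collection}, because in a compressed cover tree a node may have children at several distinct levels strictly below~$i$, so one must be careful that iteratively following $\nxt$ traverses all such children and not only those on a single level; the rest of the argument is a clean combination of Lemma~\ref{lem:cover_tree_knn_correct_lem} with the two termination cases of the main while-loop.
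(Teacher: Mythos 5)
Your proposal takes essentially the same route as the paper's own proof: Lemma~\ref{lem:cover_tree_knn_correct_lem} is the governing invariant, and the two termination branches are handled by noting that the candidate set collapses to $R_{l_{\min}}$ at line~\ref{line:knnu:final_line} and is materialized into $S$ by Algorithm~\ref{alg:cover_tree_k-nearest_final_collection} in the block of lines~\ref{line:knnu:qtoofar:loop:start}--\ref{line:knnu:qtoofar:loop:end}, after which brute-force selection of the $k$ closest points is correct. You actually supply more detail than the paper, which merely asserts that $\bigcup_{p\in R_{l_{\min}}}\Sd_{l_{\min}}(p,\T(R))=R_{l_{\min}}$ and that line~\ref{line:knnu:qtoofar:launch} collects all of $\bigcup_{p\in R_i}\Sd_i(p,\T(R))$; the subtlety you flag about the node collector having to reach children of $p$ at every level below $i$ (not only the level $\nxt(p,i,\T(R))$) is likewise left unaddressed in the paper's proof.
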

\begin{proof}
	
	Note that Algorithm~\ref{alg:cover_tree_k-nearest} is terminated by either reaching line \ref{line:knnu:final_line} or by going inside 
	block \ref{line:knnu:qtoofar:loop:start} - \ref{line:knnu:qtoofar:loop:end}.
	
	\medskip
	\noindent
	Assume first that Algorithm~\ref{alg:cover_tree_k-nearest} is terminated by reaching line \ref{line:knnu:final_line}.
	Claim follows directly from Lemma \ref{lem:cover_tree_knn_correct_lem} by noting that since 
	$i = l_{\min}$ all the nodes $p \in R_{l_{\min}}$ do not have any children. Therefore it follows $\bigcup_{p \in R_{l_{\min}}}\Sd_i(p, \T(R)) = R_{l_{\min}}$. Thus all the $k$-nearest neighbors of $q$ are contained in the set $R_{l_{\min}}$.
	
	\medskip
	\noindent
	Assume then that block \ref{line:knnu:qtoofar:loop:start} - \ref{line:knnu:qtoofar:loop:end} is reached during some iteration 
	$i \in L(\T(R),q)$. By Lemma \ref{lem:cover_tree_knn_correct_lem} set $\bigcup_{p \in R_i}\Sd_i(p, \T(R))$ contains all $k$-nearest neighbors of $q$. Note that in line \ref{line:knnu:qtoofar:launch} we collect all nodes of $\bigcup_{p \in R_i}\Sd_i(p, \T(R))$ into single array $S$. Therefore in line \ref{line:knnu:qtoofar} we correctly select $k$ nearest neighbors of $q$ from array $S$, which proves the claim. 
\end{proof}

\begin{lem}
	\label{lem:knn:time}
	Algorithm~\ref{alg:cover_tree_k-nearest} has
	the following time complexities of its lines
	\smallskip
	
	\noindent
	(a) 
	$\max\{\li{\ref{line:knnu:loop_begin}-\ref{line:knnu:qtoofar:condition}}, \li{\ref{line:knnu:qtoofar:condition:endif}-\ref{line:knnu:loop_end}} , \li{\ref{line:knnu:final_line}}\} = O\big(c_m(R)^{10} \cdot \log_2(k)\big)$;
	\smallskip
	
	\noindent
	(b) 
	$\li{\ref{line:knnu:qtoofar:condition}-\ref{line:knnu:qtoofar:condition:endif} } = O\big(|\bar{B}(q, 5 d_k(q,R))| \cdot \log_2(k)\big).$
	\bs
\end{lem}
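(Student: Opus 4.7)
My strategy is to bootstrap a uniform bound $|R_i| \leq c_m(R)^6$ between successive iterations of the loop in lines~\ref{line:knnu:loop_begin}--\ref{line:knnu:loop_end}, use the width bound from Lemma~\ref{lem:compressed_cover_tree_width_bound} to bound $|\C_i(R_{\eta(i)})|$, and then read off the costs of each labelled block directly.

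For part~(a), observe that whenever the test in line~\ref{line:knnu:qtoofar:condition} fails and the algorithm continues, the $R_i$ produced in that iteration satisfies $R_i \subseteq \bar{B}(q, d(q,\lambda) + 2^{i+2}) \cap C_i \subseteq \bar{B}(q, 2^{i+3}) \cap C_i$ by line~\ref{line:knnu:dfnRi} together with $d(q,\lambda) \leq 2^{i+2}$. Since the cover set $C_i$ is $2^i$-sparse by condition~(\ref{dfn:cover_tree_compressed}c), the packing Lemma~\ref{lem:packing} applied with $t = 2^{i+3}$ and $\delta = 2^i$ gives $|R_i| \leq c_m(R)^{\lceil \log_2 33 \rceil} = c_m(R)^6$. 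The initial case $R_{l_{\max}} = \{r\}$ trivially satisfies the same inequality, so at the start of every iteration we may assume $|R_{\eta(i)}| \leq c_m(R)^6$, and hence by Lemma~\ref{lem:compressed_cover_tree_width_bound} we get $|\C_i(R_{\eta(i)})| \leq c_m(R)^4 \cdot |R_{\eta(i)}| \leq c_m(R)^{10}$. Therefore line~\ref{line:knnu:dfn_C} and line~\ref{line:knnu:dfnRi} each cost $O(c_m(R)^{10})$, line~\ref{line:knnu:dfnLambda} costs $O(c_m(R)^{10} \log_2 k)$ by Lemma~\ref{lem:time_lambdapoint}, and the bookkeeping of lines~\ref{line:knnu:dfnindexj}--\ref{line:knnu:loop_end} runs in $O(c_m(R)^6)$ using the stored $\nxt$ pointers of Definition~\ref{dfn:implementation_compressed_cover_tree}. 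Line~\ref{line:knnu:final_line} applies Algorithm~\ref{alg:k_smallest_elements} to the terminal candidate set, whose size is again at most $c_m(R)^6$, giving $O(c_m(R)^6 \log_2 k)$. All three quantities in part~(a) fit inside $O(c_m(R)^{10} \log_2 k)$.

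For part~(b), the triggering condition $d(q,\lambda) > 2^{i+2}$ combined with Lemma~\ref{lem:beta_point} yields a $k$-nearest neighbor $\beta$ satisfying $d(q,\lambda) \leq d(q,\beta) + 2^{i+1} \leq d_k(q,R) + 2^{i+1}$, which forces $2^{i+1} < d_k(q,R)$. Every $w$ collected by the recursion in line~\ref{line:knnu:qtoofar:launch} lies in $\Sd_i(p, \T(R))$ for some $p \in R_i$, so Lemma~\ref{lem:distinctive_descendant_distance} and the triangle inequality give
\[
d(q,w) \leq d(q,p) + d(p,w) \leq \big(d(q,\lambda) + 2^{i+2}\big) + 2^{i+1} < 2 d_k(q,R) + 2 d_k(q,R) + d_k(q,R) = 5\, d_k(q,R).
\]
By the separation Lemma~\ref{lem:separation}, the sets $\Sd_i(p, \T(R))$ for distinct $p \in R_i$ are disjoint, so the total number of nodes entering $S$ is at most $|\bar{B}(q, 5 d_k(q,R))|$. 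Algorithm~\ref{alg:cover_tree_k-nearest_final_collection} performs $O(1)$ work per visited node via the hash-map access of Definition~\ref{dfn:implementation_compressed_cover_tree}, and the final $k$-selection in line~\ref{line:knnu:qtoofar} costs $O(|S| \log_2 k)$ through Algorithm~\ref{alg:k_smallest_elements}.

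The one delicate point is verifying that the size estimate on $R_i$ truly transfers to $R_{\eta(i)}$ in every subsequent iteration: this is where the observation that the if-test had to have failed on the previous visited level is essential, and everything else follows by routine application of the packing, width, $\lambda$-point and distinctive-descendant distance lemmas.
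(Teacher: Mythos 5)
Your proof is correct and follows essentially the same route as the paper's: for (a), packing on the $2^i$-sparse cover set at non-special levels gives $|R_i|\leq c_m(R)^6$, the width bound then yields $|\C_i(R_{\eta(i)})|\leq c_m(R)^{10}$ and the per-line costs (including the $\log_2 k$ factors from Lemma~\ref{lem:time_lambdapoint} and Lemma~\ref{lem:time_k_smallest_elements}), while for (b) the condition $d(q,\lambda)>2^{i+2}$ combined with the $\beta$-point Lemma~\ref{lem:beta_point} and Lemma~\ref{lem:distinctive_descendant_distance} places every collected node in $\bar{B}(q,5d_k(q,R))$, exactly as in the paper. The only small addition worth making is an explicit appeal to Lemma~\ref{lem:cover_tree_knn_correct_lem} to confirm that $\bigcup_{p\in R_i}\Sd_i(p,\T(R))$ contains all $k$-nearest neighbors of $q$, since that is the hypothesis required to invoke Lemma~\ref{lem:beta_point}.
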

\begin{proof}[\textbf{Proof of Lemma}~\ref{lem:knn:time}]
	\textbf{(a)}
	Let $\varrho \in L(\T(R),q)$ be as in Definition \ref{dfn:knn_iteration_set}. 
	Note that if iteration $\varrho$ is encountered, it becomes the last iteration of $L(\T(R),q)$.
	The total number of children encountered in line \ref{line:knnu:dfn_C} during single iteration (\ref{line:knnu:loop_begin}-\ref{line:knnu:loop_end}) is at most is at most 
	$(c_m(R))^4 \cdot \max\limits_{i \in L(\T(R),q) \setminus \varrho}|R_i|$
	by Lemma \ref{lem:compressed_cover_tree_width_bound}. From Lemma \ref{lem:time_lambdapoint} we obtain that line \ref{line:knnu:dfnLambda}, which launches Algorithm \ref{alg:lambda} takes at most
	$$|\C(R_i)| \cdot \log_2(k) = (c_m(R))^4 \cdot \max\limits_{L(q,\T(R)) \setminus \varrho} |R_i| \cdot \log_2(k) $$ time. 
	Line \ref{line:knnu:dfnRi} never does more work than line \ref{line:knnu:dfn_C}, since in the worst case scenario $R_{\eta(i)}$ is copied to $R_{i}$ in its current form. Line \ref{line:knnu:dfnindexj} handles $|R_{i}|$ nodes, since we can keep track of value of $\nxt(a,i,\T(R))$ of Definition \ref{dfn:implementation_compressed_cover_tree} by updating it when necessary in line \ref{line:knnu:dfn_C} we can retrieve its value in $O(1)$ time. Therefore maximal run-time of line \ref{line:knnu:dfnindexj} is $\max\limits_{i \in L(q,\T(R)) \setminus \varrho}|R_i|$.  Final line \ref{line:knnu:final_line}  picks lowest $k$-elements $R_{\eta(i)}$ ranked by function 
	$f(p) = d(p,q)$. By Lemma \ref{lem:time_k_smallest_elements} it can be computed in time 
	$O(\log_2(k) \cdot \max\limits_{L(q,\T(R)) \setminus \varrho}|R_i|)$.  
	It follows that 
	\begin{ceqn}
		\begin{equation} \label{eqa:thmimpeqadx}
			\centering
			\max(\li{\ref{line:knnu:loop_begin},\ref{line:knnu:qtoofar:condition}}, \li{\ref{line:knnu:qtoofar:condition:endif},\ref{line:knnu:loop_end}} , \li{\ref{line:knnu:final_line}}) = O\Big(c_m(R)^4 \cdot \max_{i \in L(q,\T(R)) \setminus \varrho}|R_i| \cdot \log_2(k) \Big )
		\end{equation}
	\end{ceqn}
	Let us now bound $\max_{i \in L(q,\T(R)) \setminus \varrho}|R_i|$, by showing $|R_i| \leq c_m(R)^6$. 
	Let $C_i$ be the $i$th level of $\T(R)$ as in Definition \ref{dfn:cover_tree_compressed}.
	For all $i \in L(\T(R),q) \setminus \varrho$ we have:
	\begin{ceqn}
		\begin{align}
			\label{eqa:ModifiedQBoundOne}
			R_{i} &= \{r \in \C_i(R_{\eta(i)}) \mid d(p,q) \leq d(q,\lambda) + 2^{i+2}\} \\
			&= B(q,d(q,\lambda)+2^{i+2}) \cap \C_i(R_i) \\
			&\subseteq B(q,2^{i+3}) \cap C_{i} 
			\label{eqa:ModifiedQboundTwo}
		\end{align}
	\end{ceqn}
	From cover-tree condition we know that all the points in $C_{i}$ are separated by $2^{i}$.
	We will now apply Lemma \ref{lem:packing} with $t = 2^{i+3}$ and $\delta = 2^{i}$.
	Since $4\frac{t}{\delta} + 1 = 2^5 + 1 \leq 2^6$ we obtain 
	$\max\limits_{i \in L(q,\T(R)) \setminus \varrho}|R_{i}| \leq |B(q,2^{i+2} ) \cap C_{i}| \leq c_m(R)^6$. The claim follows by replacing $\max\limits_{i \in L(q,\T(R)) \setminus \varrho}|R_{i}|$ with $c_m(R)^6$ in (\ref{eqa:thmimpeqadx}).

	
	
	\medskip
	
	\noindent
	\textbf{(b)}
	Let us now bound the run-time of $\li{\ref{line:knnu:qtoofar:condition}, \ref{line:knnu:loop_end}}$.
	which runs Algorithm \ref{alg:cover_tree_k-nearest_final_collection} for all $(p,i)$, where $p \in R_i$.
	Let $\Sd$ be a distinctive descendant set from Definition \ref{dfn:distinctive_descendant_set}. 
	Algorithm \ref{alg:cover_tree_k-nearest_final_collection} visits every node $u \in \cup_{p \in R_{i}}\Sd_{i}(p, \T(R))$ once, therefore its running time is $O(\cup_{p \in R_{i}}|\Sd_{i}(p, \T(R))|)$. Let us now show that 
	$$\cup_{p \in R_{i}}\Sd_{i}(p, \T(R)) \subseteq \bar{B}(q, 5d_k(q,R))$$
	Note first that by Lemma \ref{lem:cover_tree_knn_correct_lem} set $\cup_{p \in R_{i}}\Sd_{i}(p, \T(R))$ contains all $k$-nearest neighbors of $q$. Using Lemma \ref{lem:beta_point} we find $\beta$ among $k$-nearest neighbors of $q$ satisfying $d(q,\lambda) \leq d(q,\beta) + 2^{i+1}$. From assumption It follows $2^{i+1} \leq  d(q,\beta)$ .
	\medskip
	
	\noindent
	By line \ref{line:knnu:qtoofar:condition}
	we have $d(q, \lambda) \leq 2^{i+1}$.
	By line \ref{line:knnu:qtoofar} we perform depth-first traversal on $$A = \cup_{p \in R_i}\Sd_i(p, \T(R)).$$ 
	Let $u \in \cup_{p \in R_i}\Sd_i(p, \T(R))$ be arbitrary node and let $v \in R_i$ be such that $u \in \Sd_i(v,\T(R))$.
	By Lemma \ref{lem:distinctive_descendant_distance} we have $d(u,v) \leq 2^{i+1}$. Since $v \in R_i$ we have 
	$d(q,v) \leq d(\lambda,q) + 2^{i+2}$. By triangle inequality
	$$d(u,q) \leq d(u,v) + d(v,q) \leq  2^{i+1} + d(\lambda,v) + 2^{i+2}  \leq 2^{i+1} + 2^{i+1} + d(q,\beta) + 2^{i+2} \leq 5 \cdot d(q,\beta)$$
	It follows that $\cup_{p \in R_i}\Sd_i(p, \T(R)) \subseteq \bar{B}(q,5 \cdot d(q,\beta))$.
	Let us now bound the time complexity of line \ref{line:knnu:qtoofar}. 
	By Lemma \ref{lem:time_k_smallest_elements} for any set $A$ is takes $\log(k) \cdot |A|$ time to select $k$-lowest elements. We have:
	$$\li{\ref{line:knnu:qtoofar:condition}, \ref{line:knnu:loop_end}} = O(|\bar{B}(q,5 \cdot d_k(q,R))| \cdot \log(k)). $$
\end{proof}

\begin{thmm}
	\label{thm:cover_tree_knn_general_time}
	Let $R$ be a finite set in a metric space $(X,d)$, $c_m(R)$ be the minimized constant from Definition \ref{dfn:expansion_constant}.
	Given a compressed cover tree $\T(R)$, Algorithm~\ref{alg:cover_tree_k-nearest} finds all $k$ nearest neighbors of a query point $q\in X$ in time 
	$$O\Big ( \log_2(k) \cdot ((c_m(R))^{10}  \cdot |L(q,\T(R))| + |\bar{B}(q, 5 d_k(q,R)) |)\Big ),\text{ where }$$
	$L(\T(R),q) \subseteq H(\T(R))$ is the set of all levels $i$ in lines~\ref{line:knnu:loop_begin}-\ref{line:knnu:loop_end} of Algorithm~\ref{alg:cover_tree_k-nearest}. 
\end{thmm}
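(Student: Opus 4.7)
The proof is essentially a bookkeeping argument that combines the per-iteration bound with the total number of iterations, plus the one-shot cost of the special termination block. The plan is to account separately for three categories of work performed by Algorithm~\ref{alg:cover_tree_k-nearest} and then add their contributions.

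First I would recall that by the definition of $L(q,\T(R))$ in Definition~\ref{dfn:knn_iteration_set}, the outer while loop on lines~\ref{line:knnu:loop_begin}--\ref{line:knnu:loop_end} executes exactly $|L(q,\T(R))|$ iterations. For every such iteration, the work done in lines~\ref{line:knnu:loop_begin}--\ref{line:knnu:qtoofar:condition} (building $\C_i(R_{\eta(i)})$, computing $\lambda$, forming $R_i$) together with the work done in lines~\ref{line:knnu:qtoofar:condition:endif}--\ref{line:knnu:loop_end} (computing the next index $j$ and updating $\eta$) is bounded by $O\bigl(c_m(R)^{10}\cdot\log_2(k)\bigr)$ thanks to part~(a) of Lemma~\ref{lem:knn:time}. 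Summing over all iterations yields a contribution of $O\bigl(c_m(R)^{10}\cdot\log_2(k)\cdot|L(q,\T(R))|\bigr)$.

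Next I would observe that the conditional block on lines~\ref{line:knnu:qtoofar:condition}--\ref{line:knnu:qtoofar:condition:endif} is entered at most once during the execution, because reaching line~\ref{line:knnu:qtoofar} immediately terminates the algorithm after outputting the answer. Whenever this block is entered, part~(b) of Lemma~\ref{lem:knn:time} bounds its cost by $O\bigl(|\bar B(q,5d_k(q,R))|\cdot\log_2(k)\bigr)$. If the algorithm instead falls through the while loop and terminates at line~\ref{line:knnu:final_line}, the cost of that final line is again bounded by $O\bigl(c_m(R)^{10}\cdot\log_2(k)\bigr)$ by part~(a) of Lemma~\ref{lem:knn:time}, which is absorbed into the per-iteration contribution.

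Adding the two contributions gives the claimed total
\[
O\!\left(\log_2(k)\cdot\Bigl(c_m(R)^{10}\cdot|L(q,\T(R))|\;+\;|\bar B(q,5d_k(q,R))|\Bigr)\right),
\]
and correctness has already been established in Theorem~\ref{thm:cover_tree_knn_correct}, so nothing further is required. I do not anticipate a genuine obstacle here: the theorem is essentially a corollary that multiplies the per-iteration bound of Lemma~\ref{lem:knn:time}(a) by the number of iterations and adds the one-off termination cost from Lemma~\ref{lem:knn:time}(b); the only subtlety worth stating explicitly is that the special block is executed at most once, so its cost is additive rather than multiplicative.
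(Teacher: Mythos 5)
Your proposal is correct and matches the paper's own argument: the paper proves this theorem in one line by multiplying the per-iteration bound of Lemma~\ref{lem:knn:time}(a) by $|L(\T(R),q)|$ and adding the one-off cost of lines~\ref{line:knnu:qtoofar:condition}--\ref{line:knnu:qtoofar:condition:endif} from Lemma~\ref{lem:knn:time}(b). Your explicit remark that the special block runs at most once (so its cost is additive, not multiplicative) is exactly the implicit justification in the paper, so nothing is missing.
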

\begin{proof}
	Apply Lemma~\ref{lem:knn:time} to estimate the time complexity of Algorithm~\ref{alg:cover_tree_k-nearest}: \\ 
	$O\big( |L(\T(R),q)| \cdot 
	(\li{\ref{line:knnu:loop_begin}-\ref{line:knnu:qtoofar:condition}}
	+ \li{\ref{line:knnu:qtoofar:condition:endif}-\ref{line:knnu:loop_end}}  
	+ \li{\ref{line:knnu:final_line}}) 
	+\li{\ref{line:knnu:qtoofar:condition}-\ref{line:knnu:qtoofar:condition:endif} }\big)$.
\end{proof}

\noindent 
Corollary \ref{cor:cover_tree_knn_miniziminzed_constant_time} gives a run-time bound using only minimized expansion constant $c_m(R)$,
where if $R \subset \R^{m}$, then $c_m(R) \leq 2^{m}$. Recall that $\Delta(R)$ is aspect ratio of $R$ introduced in 
Definition \ref{dfn:radius+d_min}.

\begin{cor}
	\label{cor:cover_tree_knn_miniziminzed_constant_time}
	Let $R$ be a finite set in a metric space $(X,d)$. 
	Given a compressed cover tree $\T(R)$, Algorithm~\ref{alg:cover_tree_k-nearest} finds all $k$ nearest neighbors of $q$ in time $$O\Big ((c_m(R))^{10} \cdot \log_2(k) \cdot \log_2(\Delta(R)) + |\bar{B}(q, 5d_k(q,R))| \cdot \log_2(k) \Big ).$$
\end{cor}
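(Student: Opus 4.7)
The plan is to derive this as a direct consequence of Theorem~\ref{thm:cover_tree_knn_general_time} by replacing the abstract quantity $|L(q,\T(R))|$ with an explicit bound in terms of the aspect ratio $\Delta(R)$. Theorem~\ref{thm:cover_tree_knn_general_time} already gives the running time as
$O\bigl(\log_2(k) \cdot ((c_m(R))^{10} \cdot |L(q,\T(R))| + |\bar{B}(q, 5 d_k(q,R))|)\bigr),$
so the only work left is to estimate $|L(q,\T(R))|$.

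First I would recall from Definition~\ref{dfn:knn_iteration_set} that $L(\T(R),q)$ is defined as a subset of the height set $H(\T(R))$, since the while loop on lines \ref{line:knnu:loop_begin}--\ref{line:knnu:loop_end} of Algorithm~\ref{alg:cover_tree_k-nearest} only visits indices $i$ at which the candidate set can change, and these indices lie among the levels on which new points enter the cover set. Consequently $|L(q,\T(R))| \leq |H(\T(R))|$.

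Next I would invoke Lemma~\ref{lem:depth_bound}, which provides the clean bound $|H(\T(R))| \leq 1 + \log_2(\Delta(R))$, where $\Delta(R) = \rad(R)/d_{\min}(R)$ is the aspect ratio from Definition~\ref{dfn:radius+d_min}. Chaining the two inequalities gives $|L(q,\T(R))| = O(\log_2(\Delta(R)))$.

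Substituting this bound into the expression of Theorem~\ref{thm:cover_tree_knn_general_time} immediately yields the claimed complexity
$$O\Big ((c_m(R))^{10} \cdot \log_2(k) \cdot \log_2(\Delta(R)) + |\bar{B}(q, 5d_k(q,R))| \cdot \log_2(k) \Big ).$$
There is no real obstacle here: the proof is essentially a one-line substitution, and all the delicate analysis (the near-logarithmic bound on $|L(q,\T(R))|$ via packing, and the charging of the final collection step to $\bar{B}(q,5d_k(q,R))$) has already been carried out in Theorem~\ref{thm:cover_tree_knn_general_time} and Lemma~\ref{lem:knn:time}.
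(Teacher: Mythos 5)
Your proposal matches the paper's own proof: the paper likewise takes the bound from Theorem~\ref{thm:cover_tree_knn_general_time} and replaces $|L(q,\T(R))|$ by its upper bound $|L(q,\T(R))| \leq |H(\T(R))| \leq \log_2(\Delta(R))$ via Lemma~\ref{lem:depth_bound}, using that $L(\T(R),q)\subseteq H(\T(R))$ by Definition~\ref{dfn:knn_iteration_set}. The argument is correct and essentially identical in route and level of detail.
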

\begin{proof}
	Replace $|L(q,\T(R))|$ in the time complexity of Theorem \ref{thm:cover_tree_knn_general_time} by its upper bound from
	Lemma~\ref{lem:depth_bound}: $|L(q,\T(R))| \leq |H(\T(R))| \leq \log_2(\Delta(R))$.
\end{proof}

\noindent
If we are allowed to use the standard expansion constant, that corresponds to KR-dimension of \cite{krauthgamer2004navigating}, then we obtain a stronger result, Theorem \ref{thm:knn_KR_time}.




\begin{lem}
	\label{lem:upper_bound_to_points_beloning_to_reference_set}
	Let $R$ be a finite reference set in a metric space $(X,d)$ and let $q \in X$ be a query point.
	Let $\varrho$ be the \emph{special} level of $L(\T(R),q)$. Let $i \in L(\T(R),q) \setminus \varrho$ be any level. 
	Then if $p \in R_i$ we have $d(p,q) \leq 2^{i+3}$.
\end{lem}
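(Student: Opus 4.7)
The plan is to directly unpack the definition of $R_i$ given in line~\ref{line:knnu:dfnRi} of Algorithm~\ref{alg:cover_tree_k-nearest} and combine it with the characterization of the special level $\varrho$ from Definition~\ref{dfn:knn_iteration_set}.

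First, I would recall from line~\ref{line:knnu:dfnRi} that at any iteration $i \in L(\T(R),q)$, the set $R_i$ is defined as
\[
R_i = \{p \in \C_i(R_{\eta(i)}) \mid d(q,p) \leq d(q,\lambda) + 2^{i+2}\},
\]
where $\lambda = \lambda_k(q, \C_i(R_{\eta(i)}))$ is the $\lambda$-point computed in line~\ref{line:knnu:dfnLambda}. So every point $p \in R_i$ automatically satisfies $d(p,q) \leq d(q,\lambda) + 2^{i+2}$, and it remains only to bound $d(q,\lambda)$ at non-special iterations.

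Next, by Definition~\ref{dfn:knn_iteration_set}, the level $\varrho$ is special precisely because Algorithm~\ref{alg:cover_tree_k-nearest} reaches line~\ref{line:knnu:qtoofar}, which happens exactly when the conditional in line~\ref{line:knnu:qtoofar:condition} is true, i.e. when $d(q,\lambda) > 2^{i+2}$. Therefore, for any $i \in L(\T(R),q) \setminus \{\varrho\}$ the conditional in line~\ref{line:knnu:qtoofar:condition} must have evaluated to false, which gives $d(q,\lambda) \leq 2^{i+2}$.

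Combining these two observations yields, for any $p \in R_i$ with $i \in L(\T(R),q) \setminus \{\varrho\}$,
\[
d(p,q) \leq d(q,\lambda) + 2^{i+2} \leq 2^{i+2} + 2^{i+2} = 2^{i+3},
\]
as required. There is essentially no obstacle here since the bound is a direct consequence of the branching logic in Algorithm~\ref{alg:cover_tree_k-nearest}; the lemma is really a bookkeeping step that will be used later to estimate ball cardinalities (via the packing Lemma~\ref{lem:packing}) when bounding the number of iterations in $|L(\T(R),q)|$.
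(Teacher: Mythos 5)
Your proposal is correct and follows essentially the same route as the paper's own proof: the paper likewise combines the defining inequality $d(q,p)\leq d(q,\lambda)+2^{i+2}$ from line~\ref{line:knnu:dfnRi} with the fact that at a non-special iteration the test in line~\ref{line:knnu:qtoofar:condition} failed, so $d(q,\lambda)\leq 2^{i+2}$, giving $d(p,q)\leq 2^{i+3}$. Your version is in fact slightly more explicit, since the paper only says "by assumption in this part of the algorithm" where you spell out why non-specialness forces $d(q,\lambda)\leq 2^{i+2}$.
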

\begin{proof}
	By assumption in this part of the algorithm we have $d(q, \lambda) \leq 2^{i+2}$. By line \ref{line:knnu:dfnRi} of Algorithm \ref{alg:cover_tree_k-nearest}, since $p \in R_i$ we have $d(p,q) \leq d(q,\lambda) + 2^{i+2} \leq 2^{i+2} + 2^{i+2} \leq 2^{i+3}$, which proves the claim.
\end{proof}

\begin{lem}
	\label{lem:lower_bound_to_points_not_belonging_to_reference_set}
	Let $R$ be a finite reference set in a metric space $(X,d)$ and let $q \in X$ be a query point.
	Let $\varrho$ be the \emph{special} level of $L(\T(R),q)$. Let $i \in L(\T(R),q) \setminus \varrho$ be any level. 
	Then if $p \in \C_{i}(R_{\eta(i)}) \setminus R_i$, we have $d(p,q) > 2^{i+2}$.
\end{lem}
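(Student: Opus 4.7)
The plan is to show that this lemma follows almost directly from the defining inequality for $R_i$ in line~\ref{line:knnu:dfnRi} of Algorithm~\ref{alg:cover_tree_k-nearest}, together with the fact that distances are nonnegative. The hypothesis that $i$ is a non-special level serves mainly to ensure that at iteration $i$ the algorithm indeed executes line~\ref{line:knnu:dfnRi} (constructing $R_i$ via the rule involving $\lambda$) rather than terminating inside the if-block of lines~\ref{line:knnu:qtoofar:condition}--\ref{line:knnu:qtoofar:condition:endif}; this guarantees that the set membership characterization of $R_i$ used below is the correct one for the iteration at level $i$.

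First, I would recall that at any non-special iteration $i \in L(\T(R),q) \setminus \{\varrho\}$, the algorithm computes the candidate set
\[
\mathcal{C}_i(R_{\eta(i)}) = R_{\eta(i)} \cup \{a \in \Child(p) \text{ for some } p \in R_{\eta(i)} \mid l(a) = i\}
\]
in line~\ref{line:knnu:dfn_C}, then selects the $\lambda$-point via Algorithm~\ref{alg:lambda} on line~\ref{line:knnu:dfnLambda}, and finally defines
\[
R_{i} \;=\; \{p \in \mathcal{C}_i(R_{\eta(i)}) \mid d(q,p) \leq d(q,\lambda) + 2^{i+2}\}
\]
in line~\ref{line:knnu:dfnRi}. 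Thus $\mathcal{C}_i(R_{\eta(i)}) \setminus R_i$ is precisely the set of candidates failing this inequality.

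Now given any $p \in \mathcal{C}_i(R_{\eta(i)}) \setminus R_i$, failure of the membership criterion yields the strict inequality $d(q,p) > d(q,\lambda) + 2^{i+2}$. Since $d$ is a metric we have $d(q,\lambda) \geq 0$, and hence
\[
d(q,p) \;>\; d(q,\lambda) + 2^{i+2} \;\geq\; 2^{i+2},
\]
which is the desired bound. No additional combinatorial analysis of $\T(R)$ or the descendant structure is needed; the only subtlety is verifying that line~\ref{line:knnu:dfnRi} is the operative definition of $R_i$ at level $i$, which is exactly what the assumption $i \neq \varrho$ supplies.
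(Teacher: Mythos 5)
Your proposal is correct and matches the paper's own proof essentially verbatim: both arguments simply observe that $p \notin R_i$ means the membership test of line~\ref{line:knnu:dfnRi} fails, so $d(q,p) > d(q,\lambda) + 2^{i+2} \geq 2^{i+2}$ by nonnegativity of $d(q,\lambda)$. The only minor quibble is your remark about the role of $i \neq \varrho$: line~\ref{line:knnu:dfnRi} is in fact executed before the special-level check, so $R_i$ is defined even at $\varrho$, but this does not affect the validity of your argument.
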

\begin{proof}
	By assumption $p \in \C_{i}(R_{\eta(i)}) \setminus R_i$.
	By line \ref{line:knnu:dfnRi} of Algorithm \ref{alg:cover_tree_k-nearest}
	it follows that $d(q,p) > 2^{i+2} + d(q,\lambda) \geq 2^{i+2}$.
	Therefore $d(q,p) > 2^{i+2}$, which proves the claim. 
\end{proof}

\begin{lem}
	\label{lem:knn_next_level_finder_for_log_depth_two}
	Let $i$ be a non-minimal level of $L(\T(R),q)$ of Definition \ref{dfn:knn_iteration_set}. Assume that $t = \eta(\eta(i+3))$ is defined.
	Then there exists $p \in R$ satisfying $2^{i+2} < d(p,q) \leq 2^{t+4}$.
\end{lem}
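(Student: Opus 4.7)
The plan is to mimic the proof of Lemma~\ref{lem:knn_next_level_finder_for_log_depth} used in the construction-time analysis, but with the distance thresholds shifted by $2^{i+2}$ in order to absorb the slack introduced by the $\lambda$-point on line~\ref{line:knnu:dfnRi} of Algorithm~\ref{alg:cover_tree_k-nearest}. The corresponding filtering bounds in the $k$-NN setting are supplied by Lemmas~\ref{lem:upper_bound_to_points_beloning_to_reference_set} and~\ref{lem:lower_bound_to_points_not_belonging_to_reference_set}. I write $s = \eta(i+3)$ and $t = \eta(s)$, so $s \geq i+4$ and $t \geq s+1$. Because $s \in L(\T(R),q)$, the index selected on line~\ref{line:knnu:dfnindexj} at iteration $t$ must equal $s$, so there exist $u \in R_{t}$ and a child $v$ of $u$ with $l(v) = s$.

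The key separation is that $u,v \in C_{s}$ (since $l(u) \geq t > s$ and $l(v) = s$), so condition~(\ref{dfn:cover_tree_compressed}c) gives $d(u,v) > 2^{s} \geq 2^{i+4}$. If both $u$ and $v$ belonged to $R_{i}$, Lemma~\ref{lem:upper_bound_to_points_beloning_to_reference_set} at level $i$ would force $d(u,q),\,d(v,q) \leq 2^{i+3}$, and the triangle inequality would yield $d(u,v) \leq 2^{i+4}$, contradicting the separation. Hence at least one of $u, v$ lies outside $R_{i}$; denote it by $p$.

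For the lower bound on $d(p,q)$, observe that $p$ is present in $R_{t}$ (if $p = u$) or in $\C_{s}(R_{t})$ (if $p = v$) but is missing from $R_{i}$. As iterations descend through $L(\T(R),q)$, once a node fails the filter on line~\ref{line:knnu:dfnRi} it is never reinstated, so there is a largest level $r^{\ast} \in L(\T(R),q)$ with $r^{\ast} \geq i$ at which $p$ first fails that filter; Lemma~\ref{lem:lower_bound_to_points_not_belonging_to_reference_set} applied at $r^{\ast}$ then yields $d(p,q) > 2^{r^{\ast}+2} \geq 2^{i+2}$. For the upper bound, Lemma~\ref{lem:upper_bound_to_points_beloning_to_reference_set} at level $t$ gives $d(u,q) \leq 2^{t+3}$; if $p = v$, I further combine this with the covering condition $d(u,v) \leq 2^{l(v)+1} = 2^{s+1} \leq 2^{t}$ via the triangle inequality to obtain $d(v,q) \leq 2^{t+3} + 2^{t} < 2^{t+4}$. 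In either case $p$ meets the required two-sided bound.

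The main obstacle I anticipate is the persistence claim invoked in the previous paragraph: that once $p$ is discarded from $R_{r^{\ast}}$ it stays out of $R_{j}$ for every later iteration $j < r^{\ast}$, so that $r^{\ast} \geq i$ really follows from $p \notin R_{i}$. This reduces to observing that at such a level $j$ the set $\C_{j}(R_{\eta(j)})$ can acquire $p$ only through $R_{\eta(j)}$ or as a new child at level $j$, but $l(p) > j$ excludes the latter while $p \notin R_{\eta(j)}$ (by induction) rules out the former. The same argument also handles the edge case where $p = v$ is filtered out already at its appearance level $s$, in which case $r^{\ast} = s$ and the lower bound $2^{s+2} \geq 2^{i+6}$ is more than enough.
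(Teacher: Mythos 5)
Your proof is correct and follows essentially the same route as the paper's: take a parent $u\in R_{t}$ with a child $v$ at level $s=\eta(i+3)$, use the separation condition plus Lemma~\ref{lem:upper_bound_to_points_beloning_to_reference_set} to rule out $u,v\in R_i$ simultaneously, then get the lower bound from Lemma~\ref{lem:lower_bound_to_points_not_belonging_to_reference_set} at the level where the discarded node first fails the filter and the upper bound from the triangle inequality with $d(u,v)\leq 2^{s+1}$. Your explicit justification that a discarded node is never reinstated (so the first-failure level $r^{\ast}\geq i$ exists) merely fills in a step the paper asserts tersely, so the two arguments coincide in substance.
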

\begin{proof}
	
	
	
	
	Note first that since $\eta(i+3) \in L(\T(R),q)$, there exists distinct 
	$u \in R_{\eta(\eta(i+3))}$ and $v \in \C_{\eta(i+3)}(R_{\eta(\eta(i+3)}))$, in such a way that $u$ is the parent of $v$. 
	Let us show that both of $u,v$ cant belong to set $R_i$. Assume contrary that both $u,v \in R_i$. Then by Lemma \ref{lem:upper_bound_to_points_beloning_to_reference_set} we have
	$d(v,q) \leq 2^{i+3}$ and $d(u,q) \leq 2^{i+3}$. By triangle inequality $d(u,v) \leq d(u,q) + d(q,v) \leq 2^{i+4} \leq 2^{\eta(i+3)}$.
	Recall that we denote a level of a node by $l$.
	On the other hand we have $l(u) \geq \eta(i+3)$ and $l(v) \geq \eta(i+3)$, by separation condition of Definition \ref{dfn:cover_tree_compressed} we have $d(u,v) > 2^{\eta(i+3)}$, which is a contradiction. Therefore only one of $\{u,v\}$ 
	can belong to $R_i$. It sufficies two consider the two cases below: 
	
	\medskip
	
	\noindent
	\textbf{Assume that }$v \notin R_i$. Since $v$ is children of $u$ we have $d(u,v) \leq 2^{\eta(i+3) + 1}$.
	By Lemma \ref{lem:upper_bound_to_points_beloning_to_reference_set} we have $d(u,q) \leq 2^{\eta(\eta(i+3)) + 3}$.
	By triangle inequality 
	$$d(v,q) \leq d(v,u) + d(u,q) \leq 2^{\eta(\eta(i+3)) + 3} + 2^{\eta(i+3) + 1} \leq 2^{\eta(\eta(i+3)) + 4}$$
	Since $v \notin R_i$ there exists level $t$ having $\eta(i+3) \geq t \geq i$ and $v \in \C_{t}(R_{\eta(t)}) \setminus R_t$.
	Therefore by Lemma \ref{lem:lower_bound_to_points_not_belonging_to_reference_set} we have $d(q,v) > 2^{t+2} \geq 2^{i+2}$.
	It follows that we have found point $v \in R$ satisfying $2^{i+2} < v \leq 2^{\eta(\eta(i+3)) + 4}$. Therefore $p = v$, is the desired point.
	
	\medskip
	
	\noindent
	\textbf{Assume that }$u \notin R_i$. Since $u \in R_{\eta(\eta(i+3))}$, by Lemma \ref{lem:upper_bound_to_points_beloning_to_reference_set} we have $d(u,q) \leq 2^{\eta(\eta(i+3)) + 3}$.
	On the other hand since $u \notin R_i$, there exists level $t$ having $\eta(i+3) \geq t \geq i$ and $u \in \C_{t}(R_{\eta(t)}) \setminus R_t$. Therefore by Lemma \ref{lem:lower_bound_to_points_not_belonging_to_reference_set} we have $d(q,u) > 2^{t+2} \geq 2^{i+2}$.
	It follows that we have found point $u \in R$ satisfying $2^{i+2} < u \leq 2^{\eta(\eta(i+3)) + 4}$. Therefore $p = u$, is the desired point.

\end{proof}

\begin{lem}
	\label{lem:knn_depth_bound}
	Algorithm \ref{alg:cover_tree_k-nearest} executes lines \ref{line:knnu:loop_begin}-\ref{line:knnu:loop_end} the following number of times: $|L(\T(R),q)| = O(c(R \cup \{q\})^2 \cdot \log_2(|R|))$.
	\bs
\end{lem}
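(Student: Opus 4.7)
The plan is to mirror the proof of Lemma~\ref{lem:construction_depth_bound}, using Lemma~\ref{lem:knn_next_level_finder_for_log_depth_two} in place of Lemma~\ref{lem:knn_next_level_finder_for_log_depth}, and applying the extended growth bound to the larger set $A = R \cup \{q\}$. Set $c = c(R \cup \{q\})$ throughout.

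First, I would let $x$ denote the smallest level in $L(\T(R),q)$ and build a subsequence $s_1 < s_2 < \dots < s_n$ of levels in $L(\T(R),q)$ by iterating the operator $i \mapsto \eta(\eta(i+3))$ (up to the appropriate additive shifts forced by Lemma~\ref{lem:knn_next_level_finder_for_log_depth_two}). Concretely, $s_1$ would be chosen slightly above $x$ so that $\eta(\eta(s_1+3))$ is defined, and then $s_{i+1}$ would be chosen as the smallest element of $L(\T(R),q)$ that is at least $\eta(\eta(s_i+3))+4$. For each $s_i$, Lemma~\ref{lem:knn_next_level_finder_for_log_depth_two} furnishes a reference point $p_i \in R$ with $2^{s_i+2} < d(p_i,q) \leq 2^{\eta(\eta(s_i+3))+4}$. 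The choice of the shift $+4$ is precisely what is needed to guarantee $4 \cdot d(p_i,q) \leq d(p_{i+1},q)$, which is the hypothesis of Lemma~\ref{lem:growth_bound_extension}.

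Next, I would apply Lemma~\ref{lem:growth_bound_extension} to the finite set $A = R \cup \{q\}$ with the sequence $p_1,\dots,p_n$, giving
$$\bigl|\bar{B}(q, \tfrac{4}{3}d(q,p_n))\bigr| \;\geq\; \left(1+\tfrac{1}{c^2}\right)^{n} \cdot \bigl|\bar{B}(q, \tfrac{1}{3}d(q,p_1))\bigr|.$$
Mimicking the argument in Lemma~\ref{lem:construction_depth_bound}, the existence of any candidate at level $\eta(x)$ forces $|\bar{B}(q, \tfrac{1}{3}d(q,p_1))| \geq 1$, while the left-hand ball is trivially at most $|A| \leq |R|+1$. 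Taking logarithms and invoking Lemma~\ref{lem:hard_function_bound} (which gives $\log_2(1+1/c^2) \geq 1/c^2$ for $c \geq 2$) yields $n \leq c^2 \log_2(|R|+1) = O(c^2 \log_2|R|)$.

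Finally, I need to translate the bound on $n$ back to a bound on $|L(\T(R),q)|$. Because the recurrence $s_{i+1} \leq \eta(\eta(s_i+3))+$(a bounded shift) only walks a bounded number of elements of $L(\T(R),q)$ forward per step (each application of $\eta$ skips to the next element of $L(\T(R),q)$, and we apply it a constant number of times per step, with only a bounded number of extra elements of $L(\T(R),q)$ crossed by the additive shift), there is a universal constant $K$ such that each consecutive pair $[s_i, s_{i+1}]$ contains at most $K$ elements of $L(\T(R),q)$; the same holds for the initial segment $[x, s_1]$ and the terminal segment above $s_n$ up to the topmost level of $L(\T(R),q)$. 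Hence $|L(\T(R),q)| \leq K(n+2) = O(c^2 \log_2|R|)$, as required. The main obstacle is verifying that the additive shifts in Lemma~\ref{lem:knn_next_level_finder_for_log_depth_two} (which are larger than in the construction version) can still be absorbed into a constant ratio between $n$ and $|L(\T(R),q)|$; this is purely a bookkeeping matter that parallels the final paragraph of the proof of Lemma~\ref{lem:construction_depth_bound}.
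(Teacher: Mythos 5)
Your proposal is correct and follows essentially the same route as the paper's proof: the same subsequence of levels built by iterating $\eta(\eta(\,\cdot\,+3))$, the same use of Lemma~\ref{lem:knn_next_level_finder_for_log_depth_two} to produce points with $4\,d(p_i,q)\leq d(p_{i+1},q)$, then Lemma~\ref{lem:growth_bound_extension} on $A=R\cup\{q\}$ and Lemma~\ref{lem:hard_function_bound}, and the same bounded-gap bookkeeping (the paper's explicit constants $3$, $10$, $20$ play the role of your $K$). The only deviations are cosmetic: you take $s_{i+1}$ as the smallest element of $L(\T(R),q)$ at least $\eta(\eta(s_i+3))+4$ where the paper sets $s_{i+1}=\eta(\eta(\eta(s_i+3))+3)$, and you use the trivial lower bound $|\bar{B}(q,\tfrac13 d(q,p_1))|\geq 1$ where the paper exhibits a point of $R_{\eta(x)}$; both choices satisfy the same inequalities.
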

\begin{proof}[\textbf{Proof of Lemma}~\ref{lem:knn_depth_bound}]
	Let $x \in L(\T(R),q)$ be the lowest level of $L(\T(R),q)$.
	Define $s_1 = \eta(\eta(x)+1)$ and let $s_i = \eta(\eta(\eta(s_{i-1}+3))+3)$, if it exists. Assume that $s_{n+1}$ is the last sequence element for which $\eta(\eta(\eta(s_{n-1}+3))+3)$ is defined. Define $S = \{s_1,...,s_{n}\}$. For every $i \in \{1,...,n\}$ let $p_i$ be the point provided by Lemma \ref{lem:knn_next_level_finder_for_log_depth_two} that satisfies $$ 2^{s_i+2} < d(p_i,q) \leq 2^{\eta(\eta(s_{i}+3)) + 4}.$$
	Let $P$ be the sequence of points $p_i$.  Denote $n = |P| = |S|$. 
	Let us show that $S$ satisfies the conditions of Lemma \ref{lem:growth_bound_extension}. Note that:
	$$4 \cdot d(p_i,q)\leq 4 \cdot 2^{\eta(\eta(s_{i}+3)) + 4} \leq 2^{\eta(\eta(s_{i}+3)) + 6} \leq 2^{\eta(\eta(\eta(s_{i}+3))+3) + 2} \leq 2^{s_{i+1}+2} \leq d(p_{i+1},q)$$
	By Lemma \ref{lem:growth_bound_extension} applied for $A = R \cup q$ and sequence $P$ we get:
	$$|\bar{B}(q,\frac{4}{3}  d(q,p_n))| \geq (1+\frac{1}{c(R)^2})^{n} \cdot |\bar{B}(q,\frac{1}{3} d(q,p_1))|$$
	Since $\eta(x) \in L(\T(R),q)$ , there exists some point $u \in R_{\eta(x)}$. 
	By Lemma \ref{lem:upper_bound_to_points_beloning_to_reference_set} we have $d(u,q) \leq 2^{\eta(x) + 3}$. 
	Also $2^{\eta(\eta(x) + 1)+1} \leq \frac{2^{\eta(\eta(x) + 1)+2}}{3} < \frac{d(q,p_1)}{3}$
	It follows that:
	$$1 \leq  |\bar{B}(q, 2^{\eta(x) + 3})| \leq |\bar{B}(q, 2^{\eta(\eta(x) + 1)} + 1)| \leq |\bar{B}(q, \frac{d(q,p_1)}{3})|$$
	Therefore we have
	$$|R| \geq \frac{|\bar{B}(q,\frac{4}{3} \cdot d(q,p_n))|}{|\bar{B}(q,\frac{1}{3} \cdot d(q,p_1))|} \geq (1+\frac{1}{c(R \cup \{q\})^2})^{n}$$
	Note that $c(R \cup \{q\}) \geq 2$ by definition of expansion constant. Then by applying $\log$ and by using Lemma \ref{lem:hard_function_bound} we obtain: $c(R \cup \{q\})^2\log(|R|) \geq n = |S|$. 
	Let $x$ be minimal level of $L(\T(R),q)$ and let $y$ be the maximal level of $L(\T(R),q)$ 
	Note that $S$ is a sub sequence of $L$ in such a way that:
	\begin{itemize}
		\item $[x,s_1] \cap L(\T(R),q) \leq 3$, 
		\item for all $i \in 1,..., n$ we have $[s_i, s_{i+1}] \cap L(\T(R),q) \leq 10 $
		\item $[s_n, y] \cap L(\T(R),q) < 20$
	\end{itemize}
	Since segments $[x,s_1],[s_1,s_2], ...,  [s_2,s_n], [s_n,y]$ cover $|L(\T(R),q)|$,
	it follows that $|S| \geq \frac{|L(\T(R),q)|}{20}$. We obtain that $$|L(\T(R),q)| \leq 20 \cdot c(R \cup \{q\})^2 \cdot \log_2(|R|),$$ which proves the claim.
\end{proof}

\begin{thmm}
	\label{thm:knn_KR_time}
	Let $R$ be a finite reference set in a metric space $(X,d)$. Let $q\in X$ be a query point, $c(R \cup \{q\})$ be the expansion constant of $R \cup \{q\}$ and $c_m(R)$ be the minimized expansion constant from Definition \ref{dfn:expansion_constant}. Given a compressed cover tree $\T(R)$, Algorithm~\ref{alg:cover_tree_k-nearest} finds all $k$ nearest neighbors of $q$ in time 
	$$O\Big ( c(R \cup \{q\})^2 \cdot \log_2(k) \cdot \big((c_m(R))^{10}  \cdot \log_2(|R|) + c(R \cup \{q\}) \cdot k\big) \Big).$$
\end{thmm}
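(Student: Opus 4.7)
The plan is to combine the general time complexity bound from Theorem~\ref{thm:cover_tree_knn_general_time} with the iteration count bound from Lemma~\ref{lem:knn_depth_bound}, and then convert the ball-count term $|\bar{B}(q, 5 d_k(q,R))|$ into a bound of the form $c(R\cup\{q\})^{O(1)} \cdot k$ via the doubling property of the expansion constant.

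First, I would invoke Theorem~\ref{thm:cover_tree_knn_general_time}, which gives the running time
$$O\Big( \log_2(k) \cdot \big((c_m(R))^{10} \cdot |L(\T(R),q)| + |\bar{B}(q, 5 d_k(q,R))|\big)\Big).$$
Next, I would apply Lemma~\ref{lem:knn_depth_bound} to replace $|L(\T(R),q)|$ by $O(c(R\cup\{q\})^2 \cdot \log_2(|R|))$. This immediately produces the first additive contribution $c(R\cup\{q\})^2 \cdot \log_2(k) \cdot (c_m(R))^{10} \cdot \log_2(|R|)$ appearing in the theorem statement.

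The remaining work is to bound $|\bar{B}(q, 5 d_k(q,R))|$. By the definition of $d_k(q,R)$, the closed ball $\bar{B}(q, d_k(q,R))$ in $R\cup\{q\}$ contains at least $k$ points (the $k$ nearest neighbors of $q$ in $R$). Since $5 d_k(q,R) \leq 8 d_k(q,R)$, I would apply the expansion condition of Definition~\ref{dfn:expansion_constant} three times to the set $A = R \cup \{q\}$, with successive radii $d_k(q,R)$, $2 d_k(q,R)$, $4 d_k(q,R)$, obtaining
$$|\bar{B}(q, 5 d_k(q,R))| \leq |\bar{B}(q, 8 d_k(q,R))| \leq c(R\cup\{q\})^3 \cdot |\bar{B}(q, d_k(q,R))|,$$
where all balls are measured in $R\cup\{q\}$. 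Since the intersection $\bar{B}(q, d_k(q,R)) \cap R$ has size at most the size of the full ball, and the ball around $q$ of radius $d_k(q,R)$ contains exactly the nearest neighbors of $q$ up to ties, I would argue that the factor $k$ on the right comes from restricting to the $k$ closest points (up to constants absorbed in the $O$ notation); combining with the outside factor $c(R\cup\{q\})^2 \cdot \log_2(k)$ yields the second additive contribution $c(R\cup\{q\})^2 \cdot \log_2(k) \cdot c(R\cup\{q\}) \cdot k$ as claimed (modulo the exact power of $c(R\cup\{q\})$, which only affects the constant hidden in the $O$-notation when absorbed with the overall $c^2$ prefactor).

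The main obstacle, conceptually, is making sure that the ball-size bound is expressed in the form promised by the theorem; the expansion-constant chain is routine, but one must be careful about whether the expansion property is applied inside $R\cup\{q\}$ versus inside $R$ (the statement uses $c(R\cup\{q\})$, which is exactly what is needed here since the balls are centered at the query point $q$, which need not lie in $R$). All other steps are direct substitutions into the general bound of Theorem~\ref{thm:cover_tree_knn_general_time}.
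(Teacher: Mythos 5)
Your overall route is the same as the paper's: invoke Theorem~\ref{thm:cover_tree_knn_general_time}, replace $|L(\T(R),q)|$ via Lemma~\ref{lem:knn_depth_bound}, and then bound the ball term by applying the doubling property of $c(R\cup\{q\})$ three times to pick up a factor $c(R\cup\{q\})^3\cdot k$. However, your last step has a genuine gap. You run the doubling chain as $|\bar{B}(q,5d_k)|\leq|\bar{B}(q,8d_k)|\leq c(R\cup\{q\})^3\cdot|\bar{B}(q,d_k)|$ and then claim the factor $k$ "comes from restricting to the $k$ closest points, up to ties and constants". But $|\bar{B}(q,d_k(q,R))|$ is \emph{not} $O(k)$ in general: Definition~\ref{dfn:kNearestNeighbor} explicitly allows ties, and if many points of $R$ lie at distance exactly $d_k(q,R)$ from $q$ (in the worst case all of $R$, e.g.\ $k$ designated neighbors plus $|R|-k$ further points all at the same distance), then $|\bar{B}(q,d_k)|=\Theta(|R|)$ while $k$ stays fixed. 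The discrepancy is not a constant that can be absorbed into the $O$-notation; it destroys the claimed additive term $c(R\cup\{q\})^3\cdot k$.

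The paper avoids this by running the three halvings \emph{downward from} $5d(q,\beta)$ rather than upward to $8d_k$: for a point $\beta$ among the first $k$ nearest neighbors it uses
$$|\bar{B}(q,5d(q,\beta))|\;\leq\;c(R\cup\{q\})^3\cdot\Big|\bar{B}\Big(q,\tfrac{5}{8}d(q,\beta)\Big)\Big|,$$
and the key point is that the final radius $\tfrac{5}{8}d(q,\beta)$ is \emph{strictly smaller} than $d(q,\beta)\leq d_k(q,R)$, so the small ball contains only points strictly closer than the $k$-th neighbor distance, hence at most $k$ points of $R\cup\{q\}$ regardless of ties. With that one change (halve $5d_k$ three times to land at $\tfrac{5}{8}d_k<d_k$, instead of enlarging to $8d_k$ and landing at $d_k$), your argument goes through and coincides with the paper's proof; everything else in your proposal, including the care about measuring balls in $R\cup\{q\}$ since $q$ need not lie in $R$, is correct.
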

\begin{proof}
	By Theorem~\ref{thm:cover_tree_knn_general_time} the required time complexity is
	$$O\Big ((c_m(R))^{10} \cdot \log_2 (k) \cdot |L(q,\T(R))| + |\bar{B}(q, 5d(q,\beta)) | \cdot \log_2(k) \Big )$$
	for some point $\beta$ among the first $k$-nearest neighbors of $q$.
	Apply Definition \ref{dfn:expansion_constant}:
	\begin{ceqn}
		\begin{align}
			|B(q,5d(q,\beta))| \leq (c(R \cup \{q\}))^3 \cdot |B(q,\frac{5}{8}d(q,\beta))|
		\end{align}
	\end{ceqn}
	Since $|B(q,\frac{5}{8}d(q,\beta))| \leq k$, we have $|B(q,5d(q,\beta))| \leq (c(R \cup \{q\}))^3  \cdot k$.
	It remains to apply Lemma \ref{lem:knn_depth_bound}: $|L(q,\T(R))| = O(c(R \cup \{q\})^2 \cdot \log_2(|R|))$.
\end{proof}

\noindent
Corollary~\ref{cor:cover_tree_knn_time} combines Theorem~\ref{thm:construction_time_KR} with Theorem~\ref{thm:knn_KR_time}, to show that
Problem~\ref{pro:knn} can be resolved in $O(c^{O(1)} \cdot \log(k) \cdot \max\{|Q|, |R|\} \cdot (\log(|R|)) + k)$ time. 

\begin{cor}[solution to Problem \ref{pro:knn}]
	\label{cor:cover_tree_knn_time}
	In the notations of Theorem~\ref{thm:knn_KR_time}, 
	set $c = \max\limits_{q \in Q}c(R \cup \{q\})$.
	Algorithms~\ref{alg:cover_tree_k-nearest_construction_whole} and \ref{alg:cover_tree_k-nearest} solve Problem \ref{pro:knn} in time 
	$$O\Big( \max(|Q|,|R|) \cdot  c^2 \cdot \log_2(k) \cdot \big ((c_m(R))^{10}  \cdot \log_2(|R|) + c \cdot k \big ) \Big).$$
\end{cor}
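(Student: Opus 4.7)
The plan is to assemble the stated complexity as the sum of one construction phase plus $|Q|$ query phases, reusing Corollary~\ref{cor:construction_time_KR} and Theorem~\ref{thm:knn_KR_time} as black boxes; essentially no new geometric work is needed.

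First I would invoke Algorithm~\ref{alg:cover_tree_k-nearest_construction_whole} once on the reference set $R$. By Corollary~\ref{cor:construction_time_KR}, this builds a compressed cover tree $\T(R)$ in time
\[
O\big((c_m(R))^8 \cdot c(R)^2 \cdot \log_2(|R|) \cdot |R|\big).
\]
Since $R\subseteq R\cup\{q\}$ for every $q\in Q$, Definition~\ref{dfn:expansion_constant} gives $c(R)\le c(R\cup\{q\})$ for every $q$ (one can check this directly from the ball-doubling characterization, or alternatively fold $c(R)$ into $c$ by replacing $c$ with $\max\{c,c(R)\}$, which is harmless). Thus $c(R)^2\le c^2$, and also $(c_m(R))^8\le (c_m(R))^{10}$, so the construction cost is bounded by $O((c_m(R))^{10}\cdot c^2\cdot \log_2(|R|)\cdot |R|)$, which in turn is absorbed into the target bound with $|Q|$ replaced by $|R|$.

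Next I would run Algorithm~\ref{alg:cover_tree_k-nearest} once for each query point $q\in Q$ on the already-built tree $\T(R)$. Theorem~\ref{thm:knn_KR_time} gives, per query, a worst-case cost of
\[
O\Big(c(R\cup\{q\})^2\cdot \log_2(k)\cdot \big((c_m(R))^{10}\cdot \log_2(|R|)+c(R\cup\{q\})\cdot k\big)\Big).
\]
Bounding each factor $c(R\cup\{q\})$ by $c=\max_{q\in Q}c(R\cup\{q\})$ and summing over the $|Q|$ query points yields a total search cost of
\[
O\Big(|Q|\cdot c^2\cdot \log_2(k)\cdot\big((c_m(R))^{10}\cdot \log_2(|R|)+c\cdot k\big)\Big).
\]

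Finally I would add the construction cost and the total search cost. Using $\log_2(k)\ge 1$ (adopting the standard convention $\log_2(k):=\max\{1,\log_2(k)\}$ for $k=1$), the construction term $O((c_m(R))^{10}\cdot c^2\cdot \log_2(|R|)\cdot |R|)$ is dominated by the term obtained from the search cost with $|Q|$ replaced by $|R|$. Therefore the combined cost is at most
\[
O\Big(\max(|Q|,|R|)\cdot c^2\cdot \log_2(k)\cdot\big((c_m(R))^{10}\cdot \log_2(|R|)+c\cdot k\big)\Big),
\]
which is the claimed bound. The only subtle point is the inequality $c(R)\le c$, which either follows directly from the monotonicity of the expansion constant under adding a single point, or is side-stepped by enlarging $c$; neither path requires any new argument beyond Definition~\ref{dfn:expansion_constant}.
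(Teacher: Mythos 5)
Your query-phase and summation arguments are fine and match the paper's intent, but the construction phase contains a genuine gap: the inequality $c(R)\leq c(R\cup\{q\})$ that you invoke is false in general. The classical expansion constant of Definition~\ref{dfn:expansion_constant} is \emph{not} monotone under adding points — that non-monotonicity is precisely why the minimized constant $c_m$ exists, and Lemma~\ref{lem:expansion_constant_property} only asserts monotonicity for $c_m$, never for $c$. Concretely, take the paper's outlier set $R=\{1,2,\dots,n,2n+1\}$ with $c(R)=n+1$, and let $q$ be a point at distance $n/2$ from the outlier $2n+1$ (e.g.\ $q=3n/2+1$ on the line). Checking all centers and radii, the worst ratio for $R\cup\{q\}$ drops to roughly $n/2+1$, so $c(R\cup\{q\})<c(R)$; with $Q=\{q\}$ this gives $c=\max_{q\in Q}c(R\cup\{q\})<c(R)$, and your bound on the construction cost via Corollary~\ref{cor:construction_time_KR} (which carries the factor $c(R)^2$) cannot be absorbed into the stated bound carrying $c^2$. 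Your fallback of replacing $c$ by $\max\{c,c(R)\}$ is not harmless either: it proves a bound in terms of a possibly larger constant, i.e.\ a different statement from the corollary as written.

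The repair is exactly the route the paper takes: instead of Corollary~\ref{cor:construction_time_KR} (the case $A=R$), apply the more general Theorem~\ref{thm:construction_time_KR} with the auxiliary superset $A=R\cup\{q\}$ for any fixed $q\in Q$. That theorem bounds the construction time by $O\big((c_m(R))^8\cdot c(A)^2\cdot\log_2(|A|)\cdot|R|\big)$, and now $c(A)\leq c$ holds by the very definition of $c$, while $\log_2(|R\cup\{q\}|)\leq 2\log_2(|R|)$ handles the logarithm. After that substitution, the rest of your argument (per-query bound from Theorem~\ref{thm:knn_KR_time} with $c(R\cup\{q\})\leq c$, summing over $|Q|$ queries, and absorbing the construction term into the $\max(|Q|,|R|)$ factor using $\log_2(k)\geq 1$) goes through and is in fact spelled out slightly more carefully than in the paper's own terse proof.
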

\begin{proof}
	For any $q \in Q$, since $\log_2(R \cup \{q\}) \leq 2\log_2(R)$,
	a tree $\T(R)$ can be built in time $$O(c^2 \cdot c_m(R)^8 \cdot \log(|R|))$$ by Theorem~\ref{thm:construction_time_KR}.
	Therefore the time complexity is dominated by running Algorithm~\ref{alg:cover_tree_k-nearest} on all points $q \in Q$.
	The final complexity is obtained by multiplying the time from Theorem~\ref{thm:knn_KR_time} by $|Q|$.
\end{proof}


\section{Approximate $k$-nearest neighbor search}
\label{sec:approxknearestneighbor}




The original navigating nets and cover trees were used in \cite[Theorem~2.2]{krauthgamer2004navigating} and \cite[Section~3.2]{beygelzimer2006cover} to solve the $(1+\epsilon)$-approximate nearest neighbor problem for $k=1$.  
The main result, Theorem~\ref{thm:approximate_k_nearestneighbors} justifies a near linear parameterized complexity to find approximate a $k$-nearest neighbor set $\mathcal{P}$ formalized in Definition \ref{dfn:ApproxKNearestNeighbor}. 

\begin{dfn}[approximate $k$-nearest neighbor set $\AP$]
	\label{dfn:ApproxKNearestNeighbor}
	Let $R$ be a finite reference set and let $Q$ be a finite query set of a metric space $(X,d)$.
	Let $q \in Q \subseteq X$ be a query point, $k \geq 1$ be an integer and $\epsilon > 0$ be a real number. 
	Let $\mathcal{N}_k = \cup_{i=1}^k \NN_i(q)$ be the union of neighbor sets from Definition~\ref{dfn:kNearestNeighbor}. 
	A set $\mathcal{P} \subseteq R$ is called an \emph{approximate $k$-nearest neighbors set}, if $|\mathcal{P}| = k$ and there is an injection $f: \mathcal{P} \rightarrow \mathcal{N}_k$ satisfying $d(q, p) \leq (1+\epsilon) \cdot d(q,f(p)) $ for all $p \in \mathcal{P}$.  
	\bs
\end{dfn}


\begin{algorithm}[ht]
	\caption{This algorithm finds approximate $k$-nearest neighbor of Definition \ref{dfn:ApproxKNearestNeighbor}.}
	\label{alg:cover_tree_k-nearest_approximate}
	\begin{algorithmic}[1]
		\STATE \textbf{Input} : compressed cover tree $\T(R)$, a query point $q\in X$, an integer $ k \in \Z_{+}$, real $\epsilon \in \R_{+}$.
		\STATE Set $i \leftarrow l_{\max}(\T(R)) - 1$ and $\eta(l_{\max}-1) = l_{\max}$. Set $R_{l_{\max}}=\{\text{root}(\T(R))\}$.
		\WHILE{$i \geq l_{\min}$} \label{line:aknn:loop_begin}
		\STATE Assign $\mathcal{C}_i(R_{\eta(i)}) \leftarrow  R_{\eta(i)} \cup \{a \in \Child(p) \text{ for some }p \in R_{\eta(i)} \mid l(a) = i \}$. 
		\STATE Compute $\lambda = \lambda_k(q,\C_{i}(R_{\eta(i)}))$ from Definition~\ref{dfn:lambda-point} \label{line:aknn:dfnLambda} by 
		Algorithm \ref{alg:lambda}.
		\STATE Find $R_{i} = \{p \in \C_i(R_{\eta(i)}) \mid d(q,p) \leq d(q,\lambda) + 2^{i+2}\}$ \label{line:aknn:dfnRi}.
		\IF {$\frac{2^{i+2}}{\epsilon} + 2^{i+1}  \leq d(q, \lambda)$}\label{line:aknn:ifcondition}
		\STATE Let $\mathcal{P} = \emptyset$. 
		\FOR {$p \in \C_i(R_{\eta(i)})$}
		\IF {$d(p,q) < d(q,\lambda)$}
		\STATE $\mathcal{P} = \mathcal{P} \cup \Sd_{i}(p,\T(R))$
		\ENDIF
		\ENDFOR
		\STATE Fill $\mathcal{P}$ until it has $k$ points by adding points from sets $\Sd_{i}(p,\T(R))$, where $d(p,q) = d(q, \lambda)$.
		\STATE \textbf{return} $\mathcal{P}$. 
		\ENDIF \label{line:aknn:endifcondition}

		\STATE Set $j \leftarrow \max_{ a \in R_{i}} \nxt(a,i,\T(R))$.
		\COMMENT{If such $j$ is undefined, we set $j = l_{\min}-1$} \label{line:aknn:dfnindexj}
		\STATE Set $\eta(j) \leftarrow i$ and $i \leftarrow j$.
		\ENDWHILE \label{line:aknn:loop_end}
		\STATE Compute and \textbf{output} $k$-nearest neighbors of query point $q$ from the set $R_{l_{\min}}$.
		\label{line:aknn:final_line}
	\end{algorithmic}
\end{algorithm}




Definition \ref{dfn:knn_iteration_set_approx} is analog of Definition \ref{dfn:knn_iteration_set} for $(1+\epsilon)$-approximate $k$-nearest neighbor search. 
\begin{dfn}[Iteration set of approximate $k$-nearest neighborhood search]
	\label{dfn:knn_iteration_set_approx}
	Let $R$ be a finite subset of a metric space $(X,d)$. 
	Let $\T(R)$ be a cover tree of Definition \ref{dfn:cover_tree_compressed} built on $R$ and let $q \in X$ be an arbitrary point.
	Let $L(\T(R),q) \subseteq H(\T(R))$ be the set of all levels $i$ during iterations of lines~\ref{line:aknn:loop_begin}-\ref{line:aknn:loop_end} of Algorithm~\ref{alg:cover_tree_k-nearest_approximate} launched with inputs $(\T(R),q)$. 
	We denote $\eta(i) = \min_{t} \{ t \in L(\T(R),q) \mid t > i\}$. 
	\bs
	
\end{dfn}

\begin{lem}[$k$-nearest neighbors in the candidate set for all $i$]
	\label{lem:cover_tree_knn_correct_lem_approx}
	Let $R$ be a finite subset of an ambient metric space $(X,d)$, let $q \in X$ be a query point , let $k \in \Z \cap [1,\infty)$ and $\epsilon \in \R_{+}$ be parameters. Let $\T(R)$ be a compressed cover tree of $R$. Assume that $|R| \geq k$. Then for any iteration $i \in L(\T(R),q)$ of Algorithm \ref{alg:cover_tree_k-nearest_approximate} the candidate set $\bigcup_{p \in R_i}\Sd_i(p, \T(R))$ contains all $k$-nearest neighbors of $q$. \bs
\end{lem}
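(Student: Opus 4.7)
The plan is to observe that Algorithm~\ref{alg:cover_tree_k-nearest_approximate} differs from Algorithm~\ref{alg:cover_tree_k-nearest} only in the termination condition (line~\ref{line:aknn:ifcondition}) and the subsequent output procedure. In particular, the recursive update rule for the cover set $R_i$ in line~\ref{line:aknn:dfnRi} is syntactically identical to line~\ref{line:knnu:dfnRi} of Algorithm~\ref{alg:cover_tree_k-nearest}, and the $\lambda$-point in line~\ref{line:aknn:dfnLambda} is defined exactly as in line~\ref{line:knnu:dfnLambda}. Since the statement we need to prove is an invariant about the candidate set $\bigcup_{p \in R_i}\Sd_i(p, \T(R))$ that is maintained across iterations $i \in L(\T(R),q)$, and this invariant concerns only the quantities produced by the shared update rules, the proof of Lemma~\ref{lem:cover_tree_knn_correct_lem} can be reused essentially verbatim.

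More concretely, I would proceed by downward induction on the levels $i \in L(\T(R),q)$. The base case is the initial level $l_{\max}$, where $R_{l_{\max}}=\{r\}$ with $r$ the root, and $\Sd_{l_{\max}}(r,\T(R))=R$ trivially contains every $k$-nearest neighbor. For the inductive step, assume that $\bigcup_{p \in R_{\eta(i)}}\Sd_{\eta(i)}(p,\T(R))$ contains every $k$-nearest neighbor of $q$. By Lemma~\ref{lem:child_set_equivalence}, replacing $R_{\eta(i)}$ by $\C_i(R_{\eta(i)})$ and shifting the level index from $\eta(i)$ to $i$ preserves the union of distinctive descendant sets, so $\bigcup_{p \in \C_i(R_{\eta(i)})}\Sd_i(p,\T(R))$ also contains all $k$-nearest neighbors.

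The heart of the argument is then to show that pruning $\C_i(R_{\eta(i)})$ down to $R_i$ via the threshold $d(q,\lambda)+2^{i+2}$ does not discard any $k$-nearest neighbor. Suppose for contradiction that some $k$-nearest neighbor $\beta$ of $q$ lies in $\Sd_i(\alpha,\T(R))$ for some $\alpha \in \C_i(R_{\eta(i)})\setminus R_i$. By the construction of $R_i$ we have $d(\alpha,q) > d(q,\lambda)+2^{i+2}$. By Definition~\ref{dfn:lambda-point} together with Lemma~\ref{lem:sum}, the set $\bigcup_{p \in N(q;\lambda)}\Sd_i(p,\T(R))$ has cardinality at least $k$, and for any $w$ in this set, letting $\gamma\in N(q;\lambda)$ satisfy $w\in\Sd_i(\gamma,\T(R))$, Lemma~\ref{lem:distinctive_descendant_distance} gives $d(\gamma,w)\leq 2^{i+1}$, so by the triangle inequality $d(q,w)\leq d(q,\lambda)+2^{i+1}$. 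On the other hand, since $\beta\in\Sd_i(\alpha,\T(R))$, Lemma~\ref{lem:distinctive_descendant_distance} and the reverse triangle inequality give $d(q,\beta)\geq d(q,\alpha)-2^{i+1} > d(q,\lambda)+2^{i+1}\geq d(q,w)$. Hence at least $k$ points of $R$ are strictly closer to $q$ than $\beta$, contradicting that $\beta$ is among the first $k$ nearest neighbors.

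There is no real obstacle here beyond bookkeeping: the only conceptual point to verify is that the early-exit branch in lines~\ref{line:aknn:ifcondition}--\ref{line:aknn:endifcondition} does not alter $R_i$ before the invariant is checked for iteration $i$, which is immediate from the ordering of the pseudocode (the set $R_i$ is constructed in line~\ref{line:aknn:dfnRi}, and if the exit branch is triggered the iteration index $i$ is still in $L(\T(R),q)$ by Definition~\ref{dfn:knn_iteration_set_approx}). Consequently the invariant holds for every $i \in L(\T(R),q)$, completing the proof.
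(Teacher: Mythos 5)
Your proposal is correct and matches the paper's approach: the paper itself omits this proof, noting only that it is analogous to the proof of Lemma~\ref{lem:cover_tree_knn_correct_lem}, and your argument is precisely that proof carried over (same downward induction, same use of Lemma~\ref{lem:child_set_equivalence}, Lemma~\ref{lem:sum}, Lemma~\ref{lem:distinctive_descendant_distance}, and the same contradiction via the pruning threshold $d(q,\lambda)+2^{i+2}$), together with the correct observation that the early-exit branch of Algorithm~\ref{alg:cover_tree_k-nearest_approximate} does not affect the construction of $R_i$.
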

\begin{proof}
	Proof of this lemma is similar to Lemma \ref{lem:cover_tree_knn_correct_lem_approx} and is therefore omitted.
\end{proof}

\noindent 
Lemma \ref{lem:approximate_knn_correctness} shows that Algorithm \ref{alg:cover_tree_k-nearest_approximate} correctly returns an Approximate $k$-nearest neighbor set of Definition \ref{dfn:ApproxKNearestNeighbor}.

\begin{lem}[Correctness of Algorithm \ref{alg:cover_tree_k-nearest_approximate}]
	\label{lem:approximate_knn_correctness}
	
	Algorithm \ref{alg:cover_tree_k-nearest_approximate}
	finds an approximate $k$-nearest neighbors set of any query point $q \in X$. 
	\bs
\end{lem}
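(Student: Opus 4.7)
The plan is to case-split on where Algorithm~\ref{alg:cover_tree_k-nearest_approximate} exits. If control reaches line~\ref{line:aknn:final_line}, then $i=l_{\min}$ and every point of $R_{l_{\min}}$ is a leaf, so $\bigcup_{p\in R_{l_{\min}}}\Sd_{l_{\min}}(p,\T(R))=R_{l_{\min}}$; Lemma~\ref{lem:cover_tree_knn_correct_lem_approx} says this set contains every true $k$-nearest neighbor, so the $k$ closest points of $R_{l_{\min}}$ to $q$ form an exact $k$-NN set and the identity injection verifies Definition~\ref{dfn:ApproxKNearestNeighbor} for any $\epsilon\geq 0$.

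The substantive case is a return from inside the \textbf{if}-block on lines~\ref{line:aknn:ifcondition}--\ref{line:aknn:endifcondition} at some level $i$. First I would bound every $w\in\mathcal{P}$: such a $w$ lies in $\Sd_i(p,\T(R))$ for some $p\in\C_i(R_{\eta(i)})$ with $d(p,q)\leq d(q,\lambda)$, so Lemma~\ref{lem:distinctive_descendant_distance} yields $d(q,w)\leq d(q,\lambda)+2^{i+1}$. Set $A=\mathcal{P}\cap\mathcal{N}_k$ and $B=\mathcal{N}_k\setminus\mathcal{P}$. For $n\in B$, Lemma~\ref{lem:cover_tree_knn_correct_lem_approx} places $n$ in the candidate set $\bigcup_{\gamma\in R_i}\Sd_i(\gamma,\T(R))$, and Lemma~\ref{lem:separation} makes the containing $\gamma\in R_i\subseteq\C_i(R_{\eta(i)})$ unique; had $d(\gamma,q)<d(q,\lambda)$ the inner loop would have inserted the whole set $\Sd_i(\gamma,\T(R))\ni n$ into $\mathcal{P}$, contradicting $n\notin\mathcal{P}$, so $d(\gamma,q)\geq d(q,\lambda)$, whence $d(q,n)\geq d(q,\lambda)-2^{i+1}$. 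Definition~\ref{dfn:lambda-point} ensures $\sum_{p\in N(q;\lambda)}|\Sd_i(p,\T(R))|\geq k$, so the filler step can bring $|\mathcal{P}|$ up to exactly $k$; consequently $|B|=|\mathcal{N}_k|-|A|\geq k-|A|=|\mathcal{P}\setminus A|$.

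I would then build the injection $f:\mathcal{P}\to\mathcal{N}_k$ by letting $f$ be the identity on $A$ and extending arbitrarily to any injection $\mathcal{P}\setminus A\to B$. For $w\in A$ the approximation inequality is immediate. For $w\in\mathcal{P}\setminus A$, combining the two bounds above with the target $d(q,w)\leq(1+\epsilon)\,d(q,f(w))$ reduces to the single algebraic inequality $2^{i+1}(2+\epsilon)\leq\epsilon\cdot d(q,\lambda)$, which is exactly the guard $\tfrac{2^{i+2}}{\epsilon}+2^{i+1}\leq d(q,\lambda)$ of line~\ref{line:aknn:ifcondition} after rearranging.

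The main obstacle is the bookkeeping in the middle paragraph, namely identifying, for each missed $k$-NN $n\in B$, a unique ancestor $\gamma\in R_i$ with $n\in\Sd_i(\gamma,\T(R))$ and $d(\gamma,q)\geq d(q,\lambda)$; this is where Lemma~\ref{lem:separation}, Lemma~\ref{lem:cover_tree_knn_correct_lem_approx} and the precise inner-loop behavior must be reconciled. Once that structural step is in place, the algebraic equivalence between the \textbf{if}-guard and the $(1+\epsilon)$-approximation bound is a direct manipulation and the proof concludes.
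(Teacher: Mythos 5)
Your proposal is correct and follows essentially the same route as the paper's proof: bound every returned point above by $d(q,\lambda)+2^{i+1}$ via the distinctive-descendant distance bound, bound the unmatched true neighbors below by $d(q,\lambda)-2^{i+1}$ because their ancestor in $R_i$ must lie at distance at least $d(q,\lambda)$ from $q$, and then the guard $\tfrac{2^{i+2}}{\epsilon}+2^{i+1}\leq d(q,\lambda)$ of line~\ref{line:aknn:ifcondition} yields the $(1+\epsilon)$ factor, with the exact-termination case handled separately. Your bookkeeping for the injection (setting $A=\mathcal{P}\cap\mathcal{N}_k$, invoking Lemma~\ref{lem:cover_tree_knn_correct_lem_approx} to place missed neighbors in the candidate set, and counting $|B|\geq|\mathcal{P}\setminus A|$) is in fact slightly more careful than the paper's $\mathcal{A}/\mathcal{B}$ decomposition, which leaves the existence of the injective extension into $\mathcal{N}_k$ implicit.
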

\begin{proof}
	
	Assume first that condition on line \ref{line:aknn:ifcondition} of Algorithm \ref{alg:cover_tree_k-nearest_approximate}
	is satisfied during some iteration $i \in H(\T(R))$ of Algorithm \ref{alg:cover_tree_k-nearest_approximate}. Let us denote
	$$\mathcal{A} = \bigcup_{p \in \C_i(R_{\eta(i)})} \{\Sd_{i}(p,\T(R)) \mid d(p,q) < d(q,\lambda) \}, 
	\mathcal{B} = \bigcup_{p \in \C_i(R_{\eta(i)})} \{\Sd_{i}(p,\T(R)) \mid d(p,q) = d(q,\lambda) \}.$$
	By Algorithm \ref{alg:cover_tree_k-nearest_approximate} set $\mathcal{P}$ contains all points of $\mathcal{A}$ and rest of the points are filled form $\mathcal{B}$.
	We will now form $f: \mathcal{P} \rightarrow \mathcal{N}_k$ by mapping every point $p \in \mathcal{A} \cap \mathcal{P}$ into itself and then by extending $f$ to be injective map on whole set $\mathcal{P}$ . The claim holds trivially for all points $p \in \mathcal{A} \cap \mathcal{P}$. Let us now consider points  $p \in \mathcal{P} \setminus \mathcal{A}$. Let $\gamma \in \C_i(R_{\eta(i)})$ be such that $p \in \Sd_i(\gamma, \T(R))$ and let $\psi \in \C_i(R_{\eta(i)})$ be such that $f(p) \in  \Sd_i(\psi, \T(R)) $. By using triangle inequality, Lemma \ref{lem:compressed_cover_tree_descendant_bound} and the fact that
	$p \in \mathcal{A} \cup \mathcal{B}$ we obtain:
	\begin{ceqn}
		\begin{equation}
			\label{eqa:ANNCorrectness1}
			d(q, p) \leq  d(q, \gamma) + d(\gamma,p) \leq d(q, \lambda) + 2^{i+1}
		\end{equation}
	\end{ceqn} 
	On the other hand since $f(p) \notin \mathcal{A}$ we have
	\begin{ceqn}
		\begin{equation}
			\label{eqa:ANNCorrectness2}
			(1+\epsilon) \cdot d(q, f(p)) \geq (1+\epsilon) \cdot ( d(q, \psi) - d(\psi,f(p))) \geq (1+\epsilon) \cdot (d(q, \lambda) - 2^{i+1})
		\end{equation}
	\end{ceqn}
	Note that by line \ref{line:aknn:ifcondition} we have $\frac{2^{i+2}}{\epsilon} + 2^{i+1} \leq d(q, \lambda)$. It follows that 
	$2^{i+2} \leq \epsilon \cdot d(q, \lambda) - \epsilon \cdot 2^{i+1}$.
	Therefore we have:
	\begin{ceqn}
		\begin{equation}
			\label{eqa:ANNCorrectness3}
			d(q, \lambda) + 2^{i+1}  \leq d(q,\lambda) + 2^{i+2} - 2^{i+1} \leq (1+\epsilon) \cdot (d(q,\lambda) - 2^{i+1})
		\end{equation}
	\end{ceqn}
	By combining Equations (\ref{eqa:ANNCorrectness1}) - (\ref{eqa:ANNCorrectness3}) we obtain $d(q, p) \leq (1+\epsilon) \cdot d(q,f(p)) $.
	If the condition on line \ref{line:aknn:ifcondition} of Algorithm \ref{alg:cover_tree_k-nearest_approximate} is never satisfied, then the Algorithm finds real $k$-nearest neighbors of point $q$ in the end of the algorithm and therefore the claim holds. 
	
\end{proof}

\begin{thmm}[correctness of modified Algorithm~\ref{alg:cover_tree_k-nearest} ]
	\label{thm:approximate_k_nearestneighbors}
	In the notations of Definition \ref{dfn:ApproxKNearestNeighbor}, the complexity of Algorithm \ref{alg:cover_tree_k-nearest_approximate} is
	$O\Big(  (c_m(R))^{8 + \lceil \log(2 + \frac{1}{\epsilon}) \rceil}  \cdot \log_2(k) \cdot \log_2(\Delta(R)) + k \Big ).$
	\bs
\end{thmm}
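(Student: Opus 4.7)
My plan is to follow the template of Lemma~\ref{lem:knn:time} and Corollary~\ref{cor:cover_tree_knn_miniziminzed_constant_time} for the exact $k$-nearest neighbor complexity, but exploit the sharper early-termination rule on line~\ref{line:aknn:ifcondition} to obtain a tighter control on the radius of the candidate ball at each level. Correctness is already handled by Lemma~\ref{lem:approximate_knn_correctness}, so the whole task is a time-complexity estimate. I will separately bound (i) the number of outer iterations, (ii) the per-iteration work, and (iii) the cost of forming the final output set $\mathcal{P}$, which is the only new ingredient compared with the exact case.

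For (i), since Algorithm~\ref{alg:cover_tree_k-nearest_approximate} iterates only through levels in $H(\T(R))$, Lemma~\ref{lem:depth_bound} gives $|L(\T(R),q)| \leq 1 + \log_2(\Delta(R))$. For (ii), I will first bound $|R_i|$ at any non-terminal level $i$, i.e.\ when the test on line~\ref{line:aknn:ifcondition} fails, so $d(q,\lambda) < \tfrac{2^{i+2}}{\epsilon} + 2^{i+1}$. By line~\ref{line:aknn:dfnRi}, $R_i \subseteq \bar B(q,\, d(q,\lambda)+2^{i+2}) \cap C_i$, and since $C_i$ is $2^i$-sparse by condition~(\ref{dfn:cover_tree_compressed}c), Lemma~\ref{lem:packing} applied with $\delta = 2^i$ and $t = d(q,\lambda) + 2^{i+2}$ yields $|R_i| \leq (c_m(R))^\mu$ where
\[
\mu \;=\; \left\lceil \log_2\!\Bigl(\tfrac{4t}{\delta}+1\Bigr) \right\rceil
\;\leq\; \left\lceil \log_2\!\Bigl(\tfrac{16}{\epsilon}+25\Bigr) \right\rceil
\;\leq\; 4 + \left\lceil \log_2\!\Bigl(2+\tfrac{1}{\epsilon}\Bigr) \right\rceil.
\]
Combining this with the $c_m(R)^4$ child bound of Lemma~\ref{lem:compressed_cover_tree_width_bound} applied to form $\C_i(R_{\eta(i)})$, and with the cost $O(|\C_i(R_{\eta(i)})|\log_2 k)$ of the $\lambda$-point computation from Lemma~\ref{lem:time_lambdapoint}, the total per-iteration work is
\[
O\!\Bigl( (c_m(R))^{8 + \lceil \log_2(2+1/\epsilon) \rceil}\cdot \log_2(k) \Bigr),
\]
which is the desired per-level cost.

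For (iii), the terminal block (lines~\ref{line:aknn:ifcondition}--\ref{line:aknn:endifcondition}) outputs exactly $k$ points; the set $\mathcal{P}$ is built by descending into distinctive-descendant subtrees $\Sd_i(p,\T(R))$ of points in $\C_i(R_{\eta(i)})$, halting as soon as $|\mathcal{P}| = k$. Since we stop after collecting $k$ points, this costs $O(k)$ (cached sizes $|\Sd_i(p,\T(R))|$ from Algorithm~\ref{alg:cover_tree_distinctive_descendants} let us know when to cut off each subtree). Multiplying (ii) by the iteration bound from (i) and adding (iii) gives the claimed bound
\[
O\!\Bigl( (c_m(R))^{8 + \lceil \log_2(2+1/\epsilon) \rceil}\cdot \log_2(k) \cdot \log_2(\Delta(R)) + k \Bigr).
\]

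The main delicate point is the packing step: I need the stopping condition $d(q,\lambda) < \tfrac{2^{i+2}}{\epsilon}+2^{i+1}$ to hold at every iteration that is \emph{counted} in the per-level product — i.e.\ it must already have failed in all preceding iterations — so that the inflated radius $t$ and hence the inflated exponent $\mu$ apply uniformly. This is automatic because if the condition ever held, the algorithm would have returned and there would be no further iterations to analyze, but one has to state it cleanly. The only other subtlety is the $O(k)$ output step: one must verify that truncating the union of the $\Sd_i(p,\T(R))$ at $k$ points still produces a set satisfying Definition~\ref{dfn:ApproxKNearestNeighbor}, which was already done inside Lemma~\ref{lem:approximate_knn_correctness}, so no additional work is needed on that side.
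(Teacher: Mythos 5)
Your proposal is correct and follows essentially the same route as the paper: correctness is delegated to Lemma~\ref{lem:approximate_knn_correctness}, the iteration count is bounded by $|H(\T(R))|\leq 1+\log_2(\Delta(R))$, the failed test on line~\ref{line:aknn:ifcondition} inflates the candidate radius to $t\leq 2^{i+2}(\tfrac{3}{2}+\tfrac{1}{\epsilon})$ so that Lemma~\ref{lem:packing} with $\delta=2^i$ gives $|R_i|\leq (c_m(R))^{4+\lceil\log_2(2+1/\epsilon)\rceil}$ (the same $16/\epsilon+25\leq 2^4(2+1/\epsilon)$ computation as in the paper), and the width bound plus the $\lambda$-point cost supply the remaining $c_m(R)^4\log_2(k)$ factor. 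The only cosmetic difference is that the paper charges the terminal block as $k+|\C_i(R_{\eta(i)})|$ rather than $O(k)$, but since that block runs at most once and $|\C_i(R_{\eta(i)})|$ is absorbed into the first term, this does not affect the bound.
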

\begin{proof}
	Similarly to Lemma \ref{lem:knn:time} it can be shown that Algorithm  \ref{alg:cover_tree_k-nearest_approximate}  is bounded by: 
	\begin{ceqn}
		\begin{equation}
			\label{eqa:boundapproximate}
			O((c_m(R))^4 \cdot \log_2(k) \cdot \max_i|R_i| \cdot |H(\T(R))| + \li{\ref{line:aknn:ifcondition} - \ref{line:aknn:endifcondition}})
		\end{equation}
	\end{ceqn}
	
    \noindent
	Note first that in lines \ref{line:aknn:ifcondition} - \ref{line:aknn:endifcondition} we loop over set $\C_i(R_{\eta(i)})$ and select $k$ points from it. Therefore $\li{\ref{line:aknn:ifcondition} - \ref{line:aknn:endifcondition}} = k + |\C_i(R_{\eta(i)})|$.

	\medskip
	
	\noindent
	Let us now bound the size of $R_i$. By line \ref{line:aknn:ifcondition} of Algorithm \ref{alg:cover_tree_k-nearest_approximate} either Algorithm \ref{alg:cover_tree_k-nearest_approximate} is launched that terminates the program or $\frac{2^{i+2}}{\epsilon} + 2^{i+1}  > d(q, \lambda)$. Let $C_i$ be the $i$th cover set of $\T(R)$.
	To bound $|R_i|$ we can assume the latter. Similarly to Theorem \ref{thm:knn_KR_time} we have:
	\begin{ceqn}
		\begin{align}
			\label{eqa:QBoundOne1}
			R_{i} &= \{r \in \C_i(R_{\eta(i)}) \mid d(p,q) \leq d(q,\lambda) + 2^{i+2}\} \\
			&= \bar{B}(q,d(q,\lambda)+2^{i+2}) \cap \C_i(R_{\eta(i)}) \\
			&\subseteq \bar{B}(q,d(q,\lambda)+2^{i+2}) \cap C_{i} \\
			&\subseteq \bar{B}(q,2^{i+2}(\frac{3}{2} + \frac{1}{\epsilon})) \cap C_{i} 
			\label{eqa:QboundTwo1}
		\end{align}
	\end{ceqn}
	Since the cover set $C_{i}$ is a $2^{i}$-sparse subset of the ambient metric space $X$, we can apply Lemma~\ref{lem:packing} with $t = 2^{i+2}(\frac{3}{2} + \frac{1}{\epsilon})$ and $\delta = 2^{i}$. 
	Since $4\frac{t}{\delta} + 1 = 2^4(\frac{3}{2} + \frac{1}{\epsilon}) + 1 \leq 2^4(2 + \frac{1}{\epsilon})$, we get $\max |R_i| \leq (c_m(R))^{4 + \lceil \log_2(2 + \frac{1}{\epsilon}) \rceil}$. 
	The final complexity is obtained by plugging the upper bound of $|R_i|$ above into (\ref{eqa:boundapproximate}).
\end{proof}

\begin{cor}[complexity for approximate $k$-nearest neighbors set $\AP$]
	\label{cor:approximate_k_nearestneighbors}
	In the notations of Definition \ref{dfn:ApproxKNearestNeighbor}, an approximate $k$-nearest neighbors set is found for all $q \in Q$ in time
	$O\Big( |Q| \cdot  (c_m(R))^{8 + \lceil \log(2 + \frac{1}{\epsilon}) \rceil} \cdot \log(k) \cdot \log_2(\Delta(R)) + |Q| \cdot k \Big ).$
	\bs
\end{cor}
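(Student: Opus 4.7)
The plan is to reduce the corollary to Theorem~\ref{thm:approximate_k_nearestneighbors} by observing that finding approximate $k$-nearest neighbors for the entire query set $Q$ amounts to running Algorithm~\ref{alg:cover_tree_k-nearest_approximate} independently at each query point $q \in Q$, preceded by a single construction of the compressed cover tree $\T(R)$.

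First I would invoke Theorem~\ref{thm:construction_time} to build $\T(R)$ in time $O((c_m(R))^8 \cdot \log_2(\Delta(R)) \cdot |R|)$. This preprocessing cost is subsumed by the claimed bound, since the per-query cost in Theorem~\ref{thm:approximate_k_nearestneighbors} already contains a factor $(c_m(R))^{8 + \lceil \log(2 + \frac{1}{\epsilon}) \rceil} \cdot \log_2(\Delta(R))$, and multiplying it by $|Q|\geq 1$ dominates the $|R|$-term once we note that $|R|$ appears in the construction without the exponentiated expansion constant or the $\log(k)$ factor.

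Next I would iterate over all $q \in Q$, applying Algorithm~\ref{alg:cover_tree_k-nearest_approximate} to each. By Lemma~\ref{lem:approximate_knn_correctness}, each run correctly returns an approximate $k$-nearest neighbor set in the sense of Definition~\ref{dfn:ApproxKNearestNeighbor}. By Theorem~\ref{thm:approximate_k_nearestneighbors}, each single-query run costs
\[
O\!\Big((c_m(R))^{8 + \lceil \log(2 + \tfrac{1}{\epsilon}) \rceil}  \cdot \log_2(k) \cdot \log_2(\Delta(R)) + k\Big).
\]
Summing over all $|Q|$ query points multiplies every term by $|Q|$, yielding the claimed bound
\[
O\!\Big(|Q| \cdot (c_m(R))^{8 + \lceil \log(2 + \tfrac{1}{\epsilon}) \rceil} \cdot \log(k) \cdot \log_2(\Delta(R)) + |Q| \cdot k\Big).
\]

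There is no genuine obstacle here; the only thing to be careful about is verifying that the tree-construction cost is absorbed by the query cost and that no shared state between queries invalidates the per-query bound. Since Algorithm~\ref{alg:cover_tree_k-nearest_approximate} only reads $\T(R)$ and mutates local variables $R_i, \eta, \C_i(R_{\eta(i)})$ that are reset at the start of each call, the queries are genuinely independent, so the summation is valid and the corollary follows.
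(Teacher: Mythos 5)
Your core argument coincides with the paper's proof, which is a one-line reduction: Theorem~\ref{thm:approximate_k_nearestneighbors} bounds the cost of a single run of Algorithm~\ref{alg:cover_tree_k-nearest_approximate}, the runs for different $q\in Q$ are independent given $\T(R)$, so the total cost is the per-query bound multiplied by $|Q|$, with correctness per query supplied by Lemma~\ref{lem:approximate_knn_correctness}. That part is fine and is exactly what the paper does.

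The extra step you add about the tree construction is, however, not correct as stated. The construction cost from Theorem~\ref{thm:construction_time} is $O\big((c_m(R))^{8}\cdot\log_2(\Delta(R))\cdot|R|\big)$, which is linear in $|R|$, whereas the claimed bound of the corollary contains no factor of $|R|$ or $\max\{|Q|,|R|\}$ at all; its only size-dependent factors are $|Q|$ and $k$. Hence for $|Q|$ small and $|R|$ large (say $|Q|=1$, $k$ fixed, $|R|\to\infty$) the construction term is not dominated by $|Q|\cdot(c_m(R))^{8+\lceil\log(2+1/\epsilon)\rceil}\cdot\log_2(k)\cdot\log_2(\Delta(R))+|Q|\cdot k$, since the extra factors $(c_m(R))^{\lceil\log(2+1/\epsilon)\rceil}$ and $\log_2(k)$ do not grow with $|R|$. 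The resolution is that the corollary, like Theorem~\ref{thm:approximate_k_nearestneighbors} and its counterpart Corollary~\ref{cor:cover_tree_knn_miniziminzed_constant_time}, should be read as assuming $\T(R)$ is already built (contrast Corollary~\ref{cor:cover_tree_knn_time}, whose bound carries a $\max\{|Q|,|R|\}$ factor precisely so that construction can be absorbed). So either drop the absorption claim and state the hypothesis that $\T(R)$ is given, or weaken/augment the bound with an additive construction term; as written, that paragraph of your argument does not go through.
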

\begin{proof}
	This corollary follows directly from Theorem \ref{thm:approximate_k_nearestneighbors} .
\end{proof}

\section{Counterexamples for NIPS 2009 dual-tree complexity analysis}
\label{sec:challenges_paired_tree}

This section contains counterexamples for \cite{ram2009linear}.
In 2009 \cite[Theorem~3.1]{ram2009linear} revisited the time complexity for all 1st nearest neighbors and claimed the upper bound $O(c(R)^{12}c(Q)^{4\kappa}\max\{|Q|,|R|\})$, where $c(Q),c(R)$ are expansion constants of the query set $Q$ and reference set $R$.
The degree of bichromaticity $\kappa$ is a parameter of both sets $Q,R$, see \cite[Definition~3.1]{ram2009linear}.
We have found the following issues.
\medskip

\noindent 
First, Counterexample~\ref{cexa:dualtreecode} shows that \cite[Algorithm~1]{ram2009linear} when $Q=R$ returns the same point for all $q\in Q$, which shows that \cite[Algorithm~1]{ram2009linear} cannot be used to find non-trivial neighbors in this case.
Second, Remark~\ref{rem:kappa} explains several possible interpretations of \cite[Definition~3.1]{ram2009linear} for the parameter $\kappa$. 
Third, \cite[Theorem~3.1]{ram2009linear} similarly to \cite[Theorem~5]{beygelzimer2006cover} relied on the same estimate of recursions in the proof of \cite[Lemma~4.3]{beygelzimer2006cover}.
Counterexample~\ref{cexa:dualtreeproof} explains step-by-step why the proof of the time complexity result of \cite[Algorithm~1]{ram2009linear} is incorrect and requires a clearer definition of $\kappa$.

\medskip

\noindent
In 2015 Curtin \cite{curtin2015plug} made an attempt to fix the issues of \cite{ram2009linear} by introducing other parameters:
the imbalance $I_t$ in \cite[Definition~3]{curtin2015plug} and $\theta$ in \cite[Definition~4]{curtin2015plug}.
These parameters measured extra recursions that occurred due to possible imbalances in trees built on $Q,R$, which was missed in the past. 
\cite[Theorem~2]{curtin2015plug} shows that, for constructed cover trees on a query set $Q$ and a reference set $R$, Problem~\ref{pro:knn} for $k=1$ (only 1st nearest neighbors) can be solved in time 
$$O\Big(c^{O(1)}\big(|R| + |Q| + I_t + \theta\big)\Big).\eqno{(*)},$$
where $c$ is expansion constant that depends on $Q$ and $R$.
The problem with this approach is that in worst case $I_t$ is quadratic $O(|R|^2)$. To make the time complexity linear, we would have to show
$I_t = O(c^{O(1)} \cdot \max\{|R| , |Q|\})$. However, no such result exist at the moment. 

\begin{algorithm}
	\caption{
		Original \cite[Algorithm~1]{ram2009linear} is analyzed in Counterexamples~\ref{cexa:dualtreecode} and~\ref{cexa:dualtreeproof}.
	}
	\label{alg:cover_tree_k-nearest_dt_original}
	\begin{algorithmic}[1]
		\STATE \textbf{Function} FindAllNN(a node $q_j\in T(Q)$, a subset $R_i$ of a cover set $C_i$ of $T(R)$).
		\IF {$i = -\infty$}
		\STATE for each $q_j \in L(q_j)$ \textbf{return} $\text{argmin}_{r \in R_{-\infty}} d(q,r)$
		\STATE \COMMENT{here $L(q_j)$ is the set of all descendants of the node $q_j$}
		\ELSIF{$j < i$}
		\STATE $\C(R_i) = \{\text{Children}(r) \mid r \in R_i\} $ \COMMENT{in original pseudo-code the notation is $R = \C(R_i)$}
		\STATE $R_{i-1} = \{r \in R \mid d(q_j,r) \leq d(q_j, R) + 2^{i} + 2^{j+2} \}$
		\STATE FindAllNN($q_{j-1}, R_i$) \COMMENT{ $q_{j-1}$ is the same point as $q_j$ on one level below}
		\ELSE{}
		\STATE for each $p_{j-1} \in \text{Children}(q_j)$ FindAllNN($p_{j-1},R_{i}$)
		\ENDIF
	\end{algorithmic}
\end{algorithm}

\medskip

\noindent
The step-by-step execution of Algorithm~\ref{alg:cover_tree_k-nearest_dt_original} will show that the number of reference expansions has a lower bound $O(\max\{|Q|,|R|\}^2)$.  Recall that \cite[End of Section~1]{ram2009linear} defined the all-nearest-neighbor problem as follows. 
"\textbf{All Nearest-neighbors:} For all queries $q \in Q$ find $r^{*}(q) \in R$ such that $r^{*}(q) = \mathrm{argmin}_{r \in R}d(q,r)"$. 
For $Q=R$, the last formula produces trivial self-neighbors.
\medskip

\noindent
In original Algorithm~\ref{alg:cover_tree_k-nearest_dt_original}, the node $q_j$ has a level $j$, a reference subset $R_i\subset R$ is a subset of $C_i$ for an implicit cover tree $T(R)$. 
The algorithm is called for a pair $q_j, R_{i} = \{r\}$, where $q_j$ is the root of the query tree at the maximal level $j = l_{\max}(T(Q))$,
and $r$ is the root of the reference tree at the maximal level $i = l_{\max}(T(R))$. 
Split Algorithm~\ref{alg:cover_tree_k-nearest_dt_original} into these blocks:
\smallskip

\noindent
lines 2-4 : FinalCandidates,
\smallskip

\noindent
lines 5-9 : reference expansion,
\smallskip

\noindent
lines 9-11 : query expansion.

\begin{cexa}
	\label{cexa:dualtreecode}
	In the notations of  Example \ref{exa:tall_imbalanced_tree}, $m$ is a parameter of $R$.
	Build a compressed cover tree $\T(R)$ as in Figure \ref{fig:bad_cover_tree}. 
	Set $Q = R$. 
	First we show that Algorithm \ref{alg:cover_tree_k-nearest_dt_original} returns the trivial neighbor for every point for $\T(Q)=\T(R)$. 
	\medskip
	
	\noindent
	We start the simulation with the query node $r$ on the level $m^2+1$, which has the reference subset $R_{m^2+1} = \{r\}$. 
	The query node and the reference set are at the same levels, so we run the query expansions (lines 9-11). 
	The node $r$ has $p_{m^2}$ and $r$ as its children.
	Hence the algorithm goes into the branches that have $p_{m^2}$ as the query node and into the branches that have $r$ as the query node.
	\medskip
	
	\noindent
	Let us focus on all recursions having $p_{m^2}$ as the query node. In the first recursion involving the node $p_{m^2}$, we have $i = m^2+1, j = m^2$. Thus $j < i$ and we run reference expansions (lines 5-9). 
	The node $r$ has two children at the level $m^2$, so $\C(R_i) = \{p_{m^2}, r\}$ . Since
	$d(p_{m^2},p_{m^2}) = 0$ and $d(p_{m^2}, r) = 2^{m^2+1}$ on line 7, we have:
	$$R_{m^2}  = \{r \in \C(R_i) \mid d(q_j,r) \leq 2^{m^2 + 1} + 2^{m^2+2} \} = \{p_{m^2}, r\}.$$
	Similarly, for $i = m^2, j = m^2-1, q_j = p_{m^2}$, we have $\C(R_i) =  \{p_{m^2}, p_{m^2-1}, r\}$ and since $d(p_{m^2 - 1}, p_{m^2}) = 2^{m^2}$ and $d(r,p_{m^2}) = 2^{m^2 + 1}$ we have:
	$$R_{m^2-1} =  \{r \in \C(R_i) \mid d(q_j,r) \leq 2^{m^2} + 2^{m^{2} + 1} \} = \{p_{m^2}, p_{m^2-1}\}. $$
	For $i = m^2-1, j = m^2-2, q_j = p_{m^2}$, we have $\C(R_i) =  \{p_{m^2}, p_{m^2-1}, p_{m^2-2}\}$.
	Since $d(p_{m^2 - 1}, p_{m^2}) = 2^{m^2}$ and $d(p_{m^2-2},p_{m^2}) = 2^{m^2} + 2^{m^2-1}$, we have:
	$$R_{m^2-2} =  \{r \in \C(R_i) \mid d(q_j,r) \leq 2^{m^2 - 1} + 2^{m^{2}} \} = \{p_{m^2}, p_{m^2-1}, p_{m^2-2}\}.$$
	Finally, for $i = m^2-2, j = m^2-3, q_j = p_{m^2}$, we have $\C(R_i) =  \{p_{m^2}, p_{m^2-1},p_{m^2-2}, p_{m^3-3}\}$ and 
	$d(p_{m^2}, p_{m^3-3}) = 2^{m^2} + 2^{m^2-1} + 2^{m^2-2}$. The previous inequalities imply that
	$$R_{m^2-3} =  \{r \in \C(R_i) \mid d(q_j,r) \leq 2^{m^2-2} + 2^{m^{2}-1} \} = \{p_{m^2}\}.$$
	Since  $R_{m^2 -3} = \{p_{m^2}\}$, the nearest neighbor of $p_{m^2}$ will be chosen to be $p_{m^2}$. 
	The same argument can be repeated for all $p_t \in R$. 
	It follows that Algorithm \ref{alg:cover_tree_k-nearest_dt_original} finds trivial nearest neighbor for every point $p_t \in R$. 
	\bs
\end{cexa}







\begin{exa}
	\label{exa:dualtreeprooffaultyconst}
	To avoid the issue of finding trivial nearest neighbors in Counterexample~\ref{cexa:dualtreecode}, we will modify Example~\ref{exa:tall_imbalanced_tree}. 
	For any integer $m > 100$, let $G$ be a metric graph that has $2$ vertices $r$ and $q$ and $2m-1$ edges $\{e_{0}\} \cup \{e_{1}, ..., e_{m-1},h_{1},..., h_{m-1}\}$.
	The edge-lengths are $|e_i| = 2^{i \cdot m+2}$ and $|h_i| = 2^{i \cdot m+2}$  for all $i \in [1,m]$, finally $|e_{0}| = 1$.
	\medskip
	
	\noindent
	For every $i \in \{1, ..., m^2\}$, if $i$ is divisible by $m$, we set $q_{i}$ to be the middle point of $e_{i / m}$ and $r_{i}$ to be the middle point of $h_{i / m}$. 
	For every other $i$ not divisible by $m$, we define $q_i$ to be the middle point of segment $(q_{i+1}, q)$ and $r_i$ to be the middle point of segment $(r_{i+1}, q)$.
	\medskip
	
	\noindent
	Let $d$ be the shortest path metric on the graph $G$. 
	Then $d(q_i,r) = d(q_i, q) + 1 = 2^{i+1} + 1$ , $d(q_i, r_j) = 2^{i+1} + 2^{j+1}$ and $d(q,r) = 1$. 
	Let $R = \{r, r_{m^2}, r_{m^2-1}, ..., r_1\}$ and let $Q = \{r,  q_{m^2}, q_{m^2-1}, ..., q_1 \}$. 
	Let compressed cover trees $\T(Q),\T(R)$ have the same structure as the compressed cover tree $\T(R)$ in Example~\ref{exa:tall_imbalanced_tree}.
	\bs
\end{exa}

\begin{rem}
	\label{rem:kappa}
	\cite[Definition~3.1]{ram2009linear} introduced the degree of bichromaticity $\kappa$ as follows:
	\medskip
	
	\noindent
	"\textbf{Definition 3.1} Let $S$ and $T$ be cover trees built on query set $Q$ and reference set $R$ respectively.
	Consider a dual-tree algorithm with the property that the scales of $S$ and $T$ are kept as close as
	possible – i.e. the tree with the larger scale is always descended. Then, the degree of bichromaticity
	$\kappa$ of the query-reference pair $(Q, R)$ is the maximum number of descends in $S$ between any two
	descends in $T$".
	\medskip
	
	\noindent
	There are at least two different interpretations of this definition. Our best interpretation is that $\kappa$ is the maximal number of levels in $T$ containing at least one node between any two consecutive levels of $S$. However, if $q$ is a leaf node of $S$, but there are still many levels between level of $q$ and $l_{\min}(T)$, it is not clear from the definition if $\kappa$ includes these levels.
	\medskip
	
	\noindent
	\cite[page~3284]{curtin2015plug} pointed out that "
	Our results are similar to that of Ram et al. (2009a), but those results depend on a
	quantity called the constant of bichromaticity, denoted $\kappa$, which has unclear relation to
	cover tree imbalance. The dependence on $\kappa$ is given as $c_q^{4\kappa}$ , which is not a good bound,
	especially because $\kappa$ may be much greater than 1 in the bichromatic case (where $S_q = S_r$)".
\end{rem}

\noindent 
To keep track of the indices $i,j$ the function call FindAllNN($q_j$, $R_i$) will be expressed as\linebreak
FindAllNN($i,j,q_j, R_i$) in Counterexample \ref{cexa:dualtreecode}.

\begin{cexa}[Counterexample to {\cite[Theorem~3.1]{ram2009linear}} ]
	\label{cexa:dualtreeproof}
	We will now show that in addition to the problems in the pseudocode the proof of \cite[Theorem~3.1]{ram2009linear} is incorrect.  Let us consider the following quote from its proof. 
	\medskip
	
	\noindent
	\emph{" \textbf{Theorem 3.1} Given a reference set $R$ of size $N$ and expansion constant $c_R$, a query set $Q$ of size
	$O(N)$ and expansion constant $c_Q$, and bounded degree of bichromaticity $\kappa$ of the $(Q,R)$ pair, the
	FindAllNN subroutine of Algorithm 1 computes the nearest neighbor in $R$ of each point in $Q$ in
	$O(c^{12}_Rc^{4\kappa}_QN)$ time.}
	\medskip
	
	\noindent
	\emph{[\emph{Part of the proof}]
	Since at any level of recursion, the size of $R$ [Corresponding to $\C(R_i)$ in Algorithm \ref{alg:cover_tree_k-nearest_dt_original} ] is bounded by $c_R^4\max_i{R_i}$ (width bound), and the
	maximum depth of any point in the explicit tree is $O(c^2_R \log(N))$ (depth bound), the number of nodes
	encountered in Line 6 is $O(c_R^{6} \max_i |R_i|\log(N))$. Since the traversal down the query tree causes
	duplication, and the duplication of any reference node is upper bounded by $c_Q^{4\kappa}$ , Line 6 [corresponds to line 8 in Algorithm \ref{alg:cover_tree_k-nearest_dt_original}] takes at most
	$c^{4\kappa}_Qc^6_R\max_i|R_i|\log(N)$ in the whole algorithm. }
	"
	\medskip
	
	\noindent
	\textbf{The above arguments claimed the algorithm runs Line 8 at most this number of times:}
	\begin{equation}
		\label{eqa:AuthorClaimingDualTreeKNN}
		\#(\text{Line 8}) \leq \max_{p \in R}D(p) \cdot \max_i \C(R_i)  \cdot (\text{number of duplications}).
	\end{equation}
	\textbf{It will be shown that cover tree $\T(R)$ from Example \ref{exa:dualtreeprooffaultyconst} does not satisfy the inequality above. }

\medskip	
	
	Let $X, \T(R), \T(Q), R, Q$ be as in Example \ref{exa:dualtreeprooffaultyconst} for some parameter $m$. We will consider the simulation of Algorithm \ref{alg:cover_tree_k-nearest_dt_original} on pair $(\T(Q),\T(R))$. We note first Lemma \ref{lem:tall_imbalanced_tree_explicit_depth} applied on $\T(R)$ provides $\max_{p \in R}D(p) \leq 2m+1$
	Similarly, as in Counterexample \ref{cexa:original_all_nearest_neighbors_algorithm}, a contradiction will be achieved by showing that $R_i$ and a set of its children $\C(R_i)$ will have a constant size bound on any recursion $(i,j)$ of Algorithm \ref{alg:cover_tree_k-nearest_dt_original}.
	\medskip
	
	\noindent
	Since $\T(R)$ contains at most one children on every level $i$ we have $|\C(R_i)| \leq |R_i| + 1$ for any recursion of FindAllNN algorithm. For any $i > m^2$ denote $r_i$ and $q_i$ to be $r$.
	Note first that since $l(q_t) = t$ for any $t \in [1,m^2]$, then $q_t$ is recursed into from FindAllNN($t+1,t+1,p,R_i$), where $p$ is parent node of $q_t$. Therefore it follows that $t \geq i + 1 $ in any stage of the recursion. 
	Let us prove that for any $i \in [1,m^2+1]$ following two claims hold: (1) Function FindAllNN($i$ , $j = i-1$, $q_t$, $R_i$) is called for all $t \geq i + 1$ and (2) We have  $R_i = \{r_{i+1}, r_{i}, r\}$ in this stage of the algorithm. The claim will be proved by induction on $i$. Let us first prove case $i = 2m+1$.  Note that  Algorithm \ref{alg:cover_tree_k-nearest_dt_original} is originally launched from FindAllNN($2m+1,2m+1,r, \{r\}$), 
	therefore the first claim holds. Second claim holds trivially since $r_{2m+2} = r$ and $r_{2m+1} = r$.
	
	\medskip
	
	\noindent
	Assume now that the claim holds for some $i$, let us show that the claim will always hold for $i-1$. Assume that FindAllNN($i , j = i-1, q_{t}, R_i)$ was called for some $t \geq i+1$. Since $j = i-1$, we perform a reference expansion (lines 5-9). By line $6$ and induction assumption we have $\C(R_i) = \{r, r_{i+1}, r_{i}, r_{i-1}\}$. Assume first that $q_t = r$.
	Recall that for any $u \in [1,m^2]$ we have $d(r,r_{u}) \geq 2^{u+1} $. It follows that
	$$R_{i-1} = \{ r' \in \C(R_i) \mid d(r, r') \leq  2^{i} + 2^{i+1}\} = \{r,r_{i},r_{i-1}\}.$$
	Let us now consider case $q_t \neq r$. We have $d(r,q_t) = 2^{t+1}$ and $d(q_t, r_{u+1}) = 2^{t+1} + 2^{u+2}$ for any $u \in [1,m^2+1]$. Therefore
	$$R_{i-1} = \{ r' \in C_{i-1} \mid d(q_t, r') \leq d(q_t,r) + 2^i + 2^{i+1} \leq 2^{t+1} + 2^{i} + 2^{i+1}\}.$$
	It follows that $R_{i-1} = \{r, r_{i}, r_{i-1}\}$. In both cases we proceed to line $8$ where we launch FindAllNN$(i-1 , i-1, q_{t}, R_{i-1})$. After proceeding into the recursion we have $j = i$ and therefore query-expansion (lines 9-11) will be performed.  Note that $q_{t}$ was chosen so that $t \geq i+1$. Since every $q_{t-1}$ is either a child of $r$ or $q_{t}$ it follows that FindAllNN$(i-1 , i-2, q_{t'}, R_{i-1})$ will be called for all $t' \geq t-1 \geq i$. Then condition (2) of the induction claim holds as well. 

	\medskip
	
	\noindent
	It remains to show that Algorithm \ref{alg:cover_tree_k-nearest_dt_original} $(q, R_i = \{r\})$ has $O(m^4)$ low bound on the number of times reference expansions (lines 5-9) are performed.
	Let $\xi$ be the number of times Algorithm \ref{alg:cover_tree_k-nearest_dt_original} performs 
	reference expansions.  For every $q' \in Q$ denote $\xi(q')$ to be the total number of reference expansions performed for $q'$. Recall that any query node $q'$ is introduced in the query expansion (lines 9-11) for parameters $(i = u+1, j = u+1, p, R_i)$, where $p$ is the parent node of $q'$. Since $R_i$ is non empty for all levels $[1,u]$ we have $\xi(q_u) \geq u - 1$ for all $u$. Thus 
	$$\xi = \sum_{q' \in Q} \xi(q') \geq \sum^{m^2+1}_{u = 2} u-2 = O(m^4).$$
	There are different interpretations for the number of duplications. Note that the query tree $\T(Q)$ has exactly one new child on every level and that trees $\T(Q)$ and $\T(R)$ contain exactly the same levels. By using the definitions the number of duplications should be $2$. However, since there can be other interpretations for the number of duplications, we make a rough estimate that the number of duplications is upper bounded by the number of nodes in query tree $O(m^2)$. By using Inequality (\ref{eqa:AuthorClaimingDualTreeKNN}), we obtain the following contradiction: 
	$$O(m^4) = \xi \leq  \max_{p \in R}D(p) \cdot (\max_i \C(R_i)) \cdot (\text{number of duplications}) \leq (2m+1) \cdot 4 \cdot m^2 \leq O(m^3).$$
\end{cexa}

\section{Discussions: current contributions and future steps}
\label{sec:Conclusions}

This chapter rigorously proved the time complexity of the exact $k$-nearest neighbor search.
The motivations were the past challenges in the proofs of time complexities in \cite[Theorem~2.7]{krauthgamer2004navigating}, \cite[Theorem~5]{beygelzimer2006cover}, \cite[Theorem~3.1]{ram2009linear}, \cite[Theorem~5.1]{march2010fast}.
Though \cite[Section~5.3]{curtin2015improving} pointed out some difficulties, no corrections were published.
Main Theorem~\ref{thm:knn_KR_time} and Corollary~\ref{cor:cover_tree_knn_time} fill the above gaps.
\medskip

\noindent
First, Definition~\ref{dfn:kNearestNeighbor} and Problem~\ref{pro:knn} rigorously deal with a potential ambiguity of $k$-nearest neighbors at equal distances, which wasn't discussed in the past work.
The main new data structure of a compressed cover tree in Definition~\ref{dfn:cover_tree_compressed} substantially simplifies the navigating nets \cite{krauthgamer2004navigating} and original cover trees  \cite{beygelzimer2006cover} by avoiding any repetitions of given data points.
This compression has substantially clarified the construction and search Algorithms~\ref{alg:cover_tree_k-nearest_construction_whole} and~\ref{alg:cover_tree_k-nearest}. 
\medskip

\noindent
Second, In Section \ref{sec:minimized_exp_constant} we showed that the newly defined minimized expansion constant $c_m$ of any finite subset $R$ of normed vector space on $\R^{n}$ has upper-bound $2^{m}$. In the future, it can be similarly shown that if $R$ is uniformly distributed then classical expansion constant $c(R)$ is $2^{m}$ as well. 

\medskip

\noindent
Third, the claims of original work \cite{beygelzimer2006cover} regarding near-linear time complexities of cover tree construction algorithm, as well as nearest neighbor search had incorrect proofs.
Counterexamples~\ref{cexa:construction_algorithm_of_original_cover_tree} and~\ref{cexa:original_all_nearest_neighbors_algorithm}.
provide concrete examples of datasets $R$ that break claimed proofs. 
Additionally, in 2009 \cite[Theorem~3.1]{ram2009linear}, there was an attempt to show that after the construction of all the relevant data structures, such as the implicit cover tree, Problem \ref{pro:knn} can be solved in a linear time $c(R)^{O(1)} \cdot |R|$. 
In Section \ref{sec:challenges_paired_tree} 
Counterexample \ref{cexa:dualtreeproof} shows that the number of total reference recursions was estimated similarly wrong as in the proof of \cite[Theorem~5]{beygelzimer2006cover}.

\medskip

\noindent
Fourth, the approach of \cite{beygelzimer2006cover} is corrected in Section \ref{sec:ConstructionCovertree} and \ref{cor:construction_time_KR}.
Assuming that expansion constants and aspect ratio of a reference set $R$ are fixed, Corollary \ref{cor:construction_time_KR} and Corollary \ref{cor:cover_tree_knn_time} rigorously show that the time complexities are linear in the maximum size of $R$ and a query set $Q$ and near-linear $O(k\log k)$ in the number $k$ of neighbors. 

\medskip

\noindent
Finally, In future, we wish to improve the time complexity result of $k$-nearest neighbor search to a purely linear-time 
$O(c(R)^{O(1)} \cdot |R|$ by constructing cover trees on both, query and reference sets. Since a similar approach of \cite{ram2009linear} was shown to have incorrect proof and \cite{curtin2015plug} contain additional parameters $I, \theta$, it requires significantly more effort to understand if $O(c(R)^{O(1)} \cdot |R|)$ is archivable by means of a compressed cover tree data structure.



\chapter{Fast algorithm for minimum spanning tree based on compressed cover tree.}
\label{ch:mst}

\section{Minimum spanning tree} 
\label{sec:emst_intro}

Recall that a tree $T$ is any connected graph without cycles.
A minimum spanning tree on a finite metric space $(R,d)$ is a tree $\MST(R)$ with vertex $R$ that has a minimal total length of edges. A minimum spanning tree, $\MST(R)$ and is a fundamental object of computational geometry that finds its applications in various data structures, such as topomap \cite{doraiswamy2020topomap} and accelerated hierarchical density based clustering \cite{mcinnes2017accelerated},  mergegram \cite{elkin2021isometry}. This chapter corrects the work of \cite{march2010fast} and introduces a new algorithm that finds $\MST(R)$ in a parametrized near-linear time in the size of $|R|$. 

\begin{dfn}[Minimum spanning tree of a metric space]
	\label{dfn:mst}
	
	Let $(R,d)$ be a finite metric space. We say that a graph $T$ is a spanning tree of $X$, 
	if $T$ has the points of $X$ as its vertex set and if $T$ is connected. A graph $T$ is a minimum spanning tree of graph $G$, if the total length of all the edges $w(T)$ in the graph $T$ is minimal among all the spanning trees of $X$.
	A minimum spanning tree of $(R,d)$ will be denoted as $\MST(R)$. \bs
\end{dfn}

\medskip 

\noindent
The \emph{exact} metric spanning tree problem asks for exact $\MST(R)$ \cite{march2010fast,chatterjee2010geometric}. For any graph $G$ let us denote $\text{wt}$ to be the sum of all edge lengths in $G$.
In \emph{approximate} the goal is to find a spanning tree $T$ on $R$, in such a way that $wt(T) \leq (1+\epsilon) \cdot wt(\MST(R))$, where $\epsilon > 0$ is fixed and $\MST(R)$ is the exact minimum spanning tree  \cite{arya2016fast,wang2018fast}.
The most recent approaches \cite{wang2021fast} focus on developing parallel algorithms for the construction of $\MST(R)$.  

\begin{prob}
	\label{pro:MST}
	Given a finite subset $R$ of a metric space $(X,d)$ find a near-linear in the size $|R|$ algorithm to compute minimum spanning tree $\MST(R)$.
\end{prob} 

\noindent
Note that $\MST(R)$ is not generally uniquely defined. However, if $R$ is in general position i.e. all pairwise distances of $(R,d)$ have distinct values, then $\MST(R)$ is unique.

\begin{prob}
	\label{pro:MST_approx}
	Given a finite subset $R$ of a metric space $(X,d)$ and a real number $\epsilon > 0$
	find a near-linear in the size $|R|$ algorithm to compute $(1+\epsilon)$-approximate minimum spanning tree $\MST(R)$ 
	i.e. find spanning tree $T$ on $R$ in such a way that $wt(T) \leq (1+\epsilon) \cdot wt(\MST(R))$.
\end{prob}

\noindent
In the naive approach for finding a minimum spanning tree, first, a complete graph $G$ is built on $R$, in such a way that the edge set of $G$ contains all combinations of $R \times R$. Then any classical method, such as Bor\r{u}vka's algorithm \cite{boruuvka1926jistem} or Prim's algorithm \cite{prim1957shortest} to compute $\MST(R)$. Since there are $O(|R|^2)$ edges, the total run-time of the algorithm $|R|^2 \cdot \log(|R|)$. The obtained time worst-case time complexity estimate is too slow for most practical applications. In recent years more sophisticated techniques have been developed. 

\medskip

\noindent
One of the main aims of this Chapter is to show that the original approach \cite{march2010fast} based on dual-tree strategy and cover tree data structure failed to give a worst-case time complexity estimate for their proposed algorithm \cite[Algorithm~]{march2010fast}. In \cite[Theorem~5.1]{march2010fast} it was claimed that Problem \ref{pro:MST} can be resolved in 
$$O(\max\{c(R)^6, c_p^2c^2_l\}\cdot c(R)^{10} \cdot |R| \cdot \log(|R|) \cdot \alpha(|R|)$$
time, where $c(R)$ is expansion constant of Definition~\ref{dfn:expansion_constant}, $\alpha(|R|)$ is slowly growing Ackermann function and new expansion constants $c_p, c_l$ see Remark~\ref{rem:linkage_cluster_expansion_constants}.
However, the provided proof of Theorem 5.1 \cite{march2010fast} was incorrect.
Counterexample \ref{cexa:mst} gives an example, which produces
 $O(|R| \cdot \sqrt{|R|})$ iterations in its first Bor\r{u}vka step of \cite[Algorithm~]{march2010fast}. However, the proof claimed that the number of iterations is bounded by depth multiplied by the number of duplications. The contradiction will be obtained by showing that the given proof implies that the number of iterations should be bounded by $40 \cdot \sqrt{|R|}$, which contradicts previously found  $O(|R| \cdot \sqrt{|R|})$ low-bound for the number of iterations. Section~\ref{sec:emst_challenge_reference_linkage_expansion} will highlight further challenges in the definition of the linkage expansion constant $c_l$ , which is used in another part of the proof of Theorem~5.1.
 
\medskip

\noindent 
Shortcomings of \cite{march2010fast} are resolved in Section~\ref{sec:emst}. The main method of the section is Algorithm~\ref{alg:cover_tree_connected_components}, which is an analog of the minimum spanning tree algorithm \cite[Algorithm~1]{march2010fast} to compressed cover trees.
In Theorem~\ref{thm:dtb_correctness} we prove correctness of Algorithm~\ref{alg:cover_tree_connected_components}.
In Theorem~\ref{thm:single_cover_tree_mst_time} it is shown that the time complexity of Algorithm~\ref{alg:cover_tree_connected_components} is:
\begin{ceqn}
	\begin{equation}
		O\Big ((c_m(R))^{4+ \ceil{\log_2(\rho(R))}} \cdot |H(\T(R))| \cdot |R| \cdot \log(|R|) \cdot \alpha(|R|)\Big ), 
	\end{equation}
\end{ceqn}
where $\rho(R)$ is introduced in Definition \ref{dfn:edge_lengths} and $H(\T(R))$ is the height of Definition \ref{dfn:depth}.





\begin{figure}
	\centering
	\begin{tikzpicture}[scale=1.0, auto,swap]
  \foreach \pos/\name in {{(0:0)/a}, {(0:2)/b}, {(120:2)/c},
                            {(240:2)/d}}
        \node[vertex] (\name) at \pos {$\name$};
	 \path[edge,red] (b) -- node[weight] {2} (a);
	  \path[edge,red] (c) -- node[weight] {2} (a);
	   \path[edge,red] (d) -- node[weight] {2} (a);
	 \path[edge] (c) -- node[weight] {3} (b);
	 \path[edge] (b) -- node[weight] {3} (d);
	  \path[edge] (c) -- node[weight] {3} (d);
	 


\end{tikzpicture}
	\caption{$\MST$ drawn with red color on a data set arranged triangularly.}
	\label{fig:mst_example}
\end{figure}
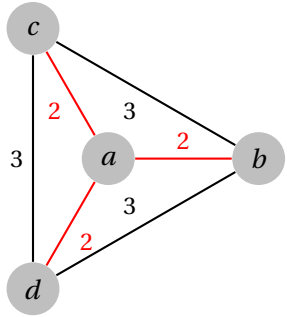
\section{Counterexample: An incorrect estimate of reference recursions}
\label{sec:emst_challenge_reference_recursion}

In this section, we assume that $(R,d)$ is a finite metric space. In this section we show that \cite[Theorem~5.1]{march2010fast} was proven incorrectly. Counterexample \ref{cexa:mst} gives a specific example, which contradicts a step in a proof of \cite[Theorem~5.1]{march2010fast}, which similarly to  \cite[Theorem~3.1]{ram2009linear} fails to estimate correctly the number of recursions during the whole execution of Algorithm \ref{alg:FindComponentNeighborsOriginal}. 

\medskip

\noindent
In original Algorithm~\ref{alg:FindComponentNeighborsOriginal}, the node $q_j$ has a level $j$, a reference subset $R_i\subseteq R$ is a subset of $C_i$ for an implicit cover tree $T(R)$. 
The algorithm is called for a pair $r, R_{i} = \{r\}$, where $r$ is the root of the reference tree at the maximal level $j = l_{\max}(T(R))$.
Split Algorithm~\ref{alg:FindComponentNeighborsOriginal} into these blocks: lines 2-8 : FinalCandidates, lines 8-13: reference expansion, and lines 13-27: query expansion.

\medskip

\begin{algorithm}
	\caption{Copy-pasted \cite[Algorithm~1]{march2010fast},  algorithm for dual-tree $\MST$ algorithm. Here $E$ is a set of edges of graph $G$ having the whole reference set $R$ as its vertex set, $N$ is the size of set $R$, and $\text{UpdateTree}(q)$ is a technical method for keeping track of necessary properties of the implicit tree $T(Q)$.}
	\label{alg:DualTreeboruvkaOriginal}
	\begin{algorithmic}[1]
		\STATE \textbf{Dual-tree Bor\r{u}vka}(Implicit cover tree node $q$)
		\STATE $E = \emptyset$
		\WHILE{$|E| < N - 1 $}
		\STATE call Algorithm \ref{alg:FindComponentNeighborsOriginal} with parameters ($q,q,e$).
		\STATE $E = E \cup e $
		\STATE UpdateTree$(q)$
		\ENDWHILE
	\end{algorithmic}
\end{algorithm}
\begin{algorithm}
	\caption{
		Copy-pasted {\cite[Algorithm~3]{march2010fast}}, now deprecated method for finding nearest neighboring components 
		during the iteration of Algorithm \ref{alg:DualTreeboruvkaOriginal}.}
	\label{alg:FindComponentNeighborsOriginal}
	\begin{algorithmic}[1]
		\STATE \textbf{Function} Fcn($T(Q)$-tree node $q_j$, Subset $R_i$ of cover set $C_i$ of $T(R)$,
		Edge set $e$).
		\IF {$i = -\infty$}
		\FOR{$q$ that are descendants of $q_j$ and $r \in R_i$ with $r \nsim q$}
		\IF {$d(q,r) < d(C_q)$}
		\STATE $d(C_q) = d(q,r)$, $e(C_q) = (q,r)$.
		\ENDIF
		\ENDFOR
		\ELSIF{$j < i$}
		\STATE $\C(R_i) = \{r \in \text{Children}(r') \mid r' \in R_i 
		\text{ and } r \nvDash q_j \}$ \label{line:FindComponentNeighbors:ChildrenR}
		\COMMENT{$r \vDash q_j$ means that all the descendants of $r$ and the descendants of $q_j$ belong to the same component }
		\STATE $d = \min\{d(C_q), \min_{r \in C, r \backsim q} \{d(q,r) + 2^{i} \},\min_{r \in \C(R_i), r \nsim q} \{d(q,r)\}\} $
		\label{line:FindComponentNeighbors:d}
		\STATE $R_{i-1} = \{r \in R \mid d(q_j,r) \leq d + 2^{i} + 2^{j+2} \}$
		\STATE \textbf{return} Fcn($q_{j-1}, R_i,e$) \COMMENT{ $q_{j-1}$ is the same point as $q_j$ on one level below}
		\ELSE{}
		\FOR{$p_{j-1} \in \text{Children}(q_j)$}
		\STATE \textbf{return} Fcn($p_{j-1},R_{i},e$)
		\ENDFOR
		\ENDIF
		
	\end{algorithmic}
\end{algorithm}

\noindent
Note that DualTreeBor\r{u}vka algorithm of \cite[Algorithm~1]{march2010fast} designed on implicit cover trees corresponds Algorithm \ref{alg:DualTreeboruvkaOriginal} that uses compressed cover tree. Algorithm \ref{alg:FindComponentNeighborsOriginal} (\cite[Algorithm~3]{march2010fast}) is launched from Line 4 of Algorithm \ref{alg:DualTreeboruvkaOriginal}.

\medskip

\noindent
Counterexample \ref{cexa:mst} shows that \cite[Theorem~5.4]{march2010fast} was proven incorrectly by using slightly modified space $R$ from Example \ref{exa:tall_imbalanced_tree}.

\begin{cexa}[Modification of Example \ref{exa:tall_imbalanced_tree} fails proof of {\cite[Theorem~5.4]{march2010fast}}]
	\label{cexa:mst}
	In this Counterexample we will point out
	that  \cite[Theorem~5.4]{march2010fast} was incorrectly proven by using Example \ref{exa:tall_imbalanced_tree}.
	Let us consider the following quote from its proof:
	
	\medskip 
	
	\noindent
	\emph{Page 608: "\textbf{Theorem 5.4}. Under the assumptions of Theorem. 5.1, the
	FindComponentNeighbors algorithm on a cover tree [Algorithm \ref{alg:DualTreeboruvkaOriginal}]
	finds the nearest neighbor of each component in time bounded by:
	$$O( N + c^4N  + \max\{c^6, c_p^2c^2_l\}\cdot N \alpha(N) + \max\{c^6, c_p^2c^2_l\} \cdot c^{10} \cdot \log(N) \cdot \alpha(N))."$$}
	
	
	\smallskip
	\noindent
	Consider the following part from proof of Theorem 5.4:
	
	\smallskip
	\noindent
	\emph{"
	At each level, $|R| \leq c^4 \max_i|R_i|$. Since the maximum depth of a node is $O(c^2\log(N))$ (depth bound), the number of nodes considered in Line 14 is at most $O(c^6\max_i|R_i|\log(N)$. Considering possible duplication across queries the total number of calls to Line 14 [Our Line 12] is at most $O(c^{10}\max_i|R_i|\log(N))$.
	"}
	\smallskip
	
	\noindent 
	\textbf{Our interpretation:} Let $D(p)$ be the explicit depth of a point $p$ of Definition~\ref{dfn:explicit_depth_for_compressed_cover_tree}. 
	Denote the number of duplications occurring in Algorithm \ref{alg:FindComponentNeighborsOriginal} as $\chi$.
	The above arguments claimed the algorithm runs Line 12 at most this number of times:
	\begin{align}
		\#(\text{Line 12}) &\leq \max_{p \in R}D(p) \cdot (\max_i \C(R_i)) \cdot \chi \\ \label{eqa:AuthorClaimingDualTremarch2010fast}
		&\leq c^2\log(N) \cdot c^4\max_{i}|R_i| \cdot c^4 \leq c^{10}\max_i|R_i|\log(N).
	\end{align}
	Using a cover tree $\T(R)$ based on  Example \ref{exa:tall_imbalanced_tree} it will be shown that $\T(R)$ does not satisfy (\ref{eqa:AuthorClaimingDualTremarch2010fast}), which ultimately crashes the proof for this specific dataset. 
	
	\smallskip
	
	\noindent 
	Let  $X, R, \T(R)$  be as in Example \ref{exa:tall_imbalanced_tree} for some $m > 5$. Let us modify $X$ by collapsing edge $(q,r)$ and by identifying $q$ with $r$. It can be shown that cover tree $\T(R)$ will have exactly same structure as in Example \ref{exa:tall_imbalanced_tree}. Let us run Algorithm \ref{alg:DualTreeboruvkaOriginal} on the dataset $R$ and the cover tree $\T(R)$. We will focus on the first iteration of the loop $3-7$ for $E = \emptyset$. In this case we launch Algorithm \ref{alg:FindComponentNeighborsOriginal} with parameters $(q_j = r, R_m = {r}, E = \emptyset)$.
	
	\medskip
	
	\noindent 
	Note first that $\T(R)$ contains at most one children on every level. Since $c^4$ corresponds to the number of children on level below according to the source, the number of duplications $\chi$ will be at most $2$. Similarly to Lemma \ref{lem:tall_imbalanced_tree_explicit_depth} we obtain $ \max_{p \in R}D(p) \leq 2m+1$. Let us now bound the maximal size of set $\C(R_i)$. For clarity Fcn$(q_j, R_i, E = \emptyset)$ will be denoted as Fcn($j,i,q_j,R_i$) in this counterexample. 
	Since $\T(R)$ contains at most one children on every level $i$, we have $|\C(R_i)| \leq |R_i| + 1$ for any recursion of FindComponentNeighbors algorithm. For any $i > m^2$ denote $r_i$ to be $r$.

	\medskip
	
	\noindent
	Note first that since $l(q_t) = t$ for any $t \in [1,m^2]$, then $r_t$ is recursed into from Fcn($t+1,t+1,p,R_i$), where $p$ is parent node of $r_t$. 
	Let us fix query node $q  = r$ and show by induction that for any recursion ($i$ , $j = i-1$, $q = r$, $R_i$) of FindComponentNeighbors we have $R_i = \{r_{i+2}, r_{i+1}, r_{i}, r\}$. The base case of the induction holds trivially since $R_{m^2+1} = \{r\}$. Assume now that $ R_i = \{r_{i+2}, r_{i+1}, r_{i}, r\}$ for some $i$. Then $\C(R_i) = \{r_{i+2},r_{i+1}, r_{i}, r, r_{i-1}\}$. Recall that $d(r_t, r) = 2^{t+1}$ for all $t$. Since there are no nodes in the same component with $r$, by line \ref{line:FindComponentNeighbors:d} we have $d = 2^{i}$. It follows that
	$R_{i-1} = \{a \in \C(R_i) \mid d(a,r) \leq d + 2^{i} + 2^{i+1}\} =  \{r_{i+1}, r_{i}, r_{i-1}, r\} $.
	Since $R_i$ is not modified during query-expansion (lines 13-17) for node $r$ we have proved the induction claim for Fcn($i-1$ , $j = i-2$, $q = r$, $R_i$). Note that if $i-1$ is divisible by $m$, then $r_{i-1}$ is a child of $r$ on level $i-1$. Therefore the claim is also satisfied for all Fcn($i-1$ , $j = i-2$, $q = r_{i-1}$, $R_i$), where $i-1$ is divisible by $m$.
	
	\medskip 
	
	\noindent
	Let us now show by induction that for parameters ($i$ , $j = i-1$, $q = r_{i}$, $R_i$) we have 
	$\{r, r_{i}\} \subseteq R_i \subseteq \{r_{i+3},r_{i+2}, r_{i+1}, r_{i}, r\}.$ From the previous case $q = r$ we obtain the this claim is satisfied in the base case and when $i$ is divisible by $m$. Assume now that the claim holds for some $i$, because the claim is satisfied for all $i$ that are divisible by $m$, it is sufficient to prove that the claim holds for $i-1$, when $i-1$ is not divisible by $m$. By induction assumption we have $ \{r_i, r_{i-1},r\} \subseteq \C(R_i) = \{r_{i+3}, r_{i+2},r_{i+1}, r_{i}, r, r_{i-1}\}$. 
	By line \ref{line:FindComponentNeighbors:d}: $d = d(r_{i}, r_{i-1}) =2^{i}$. Since $d(r_{i-1},r) = 2^{i}$ and $d(r_{i+3}, r_{i}) \geq 2^{i+2} + 2^{i+1} + 2^{i}$ we have $$\{r, r_{i-1}\} \subseteq R_{i-1} = \{r_t \in \C(R_i) \mid d(r_{i-1},r_t) \leq 2^{i+2}\} \subseteq  \{r_{i+2}, r_{i+1}, r_{i}, r_{i-1}, r\}.$$
	At line 12 we recurse into Fcn($i-1$ , $j = i-1$, $q = r_{i}$, $R_{i-1}$). Note that $i = j$ in this iteration , therefore we proceed into query expansion (lines 13-17) . Note that since $i-1$ is not divisible by $m$, $r_{i-1}$ is a child of $r_i$ at level $i-1$ and therefore FindComponentNeighbors is launched with parameters ($i-1$ , $j = i-2$, $q = r_{i-1}$, $R_{i-1}$), which proves the claim. 
	
	\medskip

	\noindent
	Let $t$ be such that $t \equiv 1 \mod m$. Let us now show that for any $i \leq t$ in recursion Fcn($i$ , $j = i-1$, $q = r_{t}$, $R_i$) we have 
	$$\{r, r_{t}\} \subseteq R_i \subseteq \{r_{i+2}, r_{i+1}, r_{i}, r_t, r\}.$$
	BaseCases $i = t$ follows from the previous paragraph. Assume now that the claim holds for $i$ and let us show that it holds for $i-1$. Again $ \{r_i, r_{i-1},r\} \subseteq \C(R_i) = \{r_{i+3},r_{i+2},r_{i+1}, r_{i}, r, r_{i-1}\}$. By line \ref{line:FindComponentNeighbors:d} $d = d(r_t, r) = 2^t$
	, $d(r_t, r_{i-1}) = 2^{i} + 2^{t}$ and $d(r_{i+3}, r_t) \geq 2^{t} + 2^{i+2}$ we have
	$$\{r, r_{i-1}\} \subseteq R_{i-1} = \{r_t \in \C(R_i) \mid d(r_{i-1},r_t) \leq 2^{i+1} + 2^{i} + 2^{t}\} \subseteq  \{r_{i+1}, r_{i}, r_{i-1}, r\}. $$
	As in previous paragraph we proceed to ($i-1$ , $j = i-1$, $q = r_{t}$, $R_{i-1}$), where ($i-1$ , $j = i-2$, $q = r_{t}$, $R_{i-1}$) is launched. Therefore the claim holds.
	
	\medskip

	\noindent
	It remains to show that Algorithm \ref{alg:FindComponentNeighborsOriginal} that is launched from $(r, R_i = \{r\}, E = \emptyset)$ has $O(m^3)$ low bound on number of times a reference expansions (lines 9-15) was performed.
	Let $\xi$ be the number of times Algorithm \ref{alg:FindComponentNeighborsOriginal} performs 
	reference expansions.  For every $r' \in R$ denote $\xi(r')$ to be the number of reference expansions that were performed for $q_j = r'$. We note that any node $r_u$ is introduced the in a query expansion (lines 9-11) of $(i = u+1, j = u+1)$. Since for all $u$ satisfying $i \equiv 1 \mod m $ set $R_i$ is non empty for all the levels $[1,u]$, we have $\xi(r_u) \geq u - 1$ for such $u$.
	Thus 
	$$\xi \geq \sum_{  i \equiv 1 \mod m } \xi(r_i) \geq \sum^{m+1}_{u = 1} u^2-2 = O(m^3).$$
	By using Inequality \ref{eqa:AuthorClaimingDualTremarch2010fast} we obtain a contradiction $$O(m^3) = \xi \leq  \max_{p \in R}D(p) \cdot (\max_i \C(R_i)) \cdot \chi \leq (2m+1) \cdot 6 \cdot 2 \leq O(m).$$

\end{cexa}

\section{Counterexample: linkage expansion constant $c_l$}
\label{sec:emst_challenge_reference_linkage_expansion}

The purpose of this section is to show linkage expansion constant $c_l$ of \cite[Definition~4]{march2010fast} cannot be used to
estimate maximal size of reference set $R_i$ in proof of \cite[Theorem~5.4]{march2010fast}.
Remark \ref{rem:linkage_cluster_expansion_constants} shows that linkage expansion constant $c_l$ was defined incompletely. 
Counterexample \ref{cexa:mst_component} uses Example \ref{exa:mst_counter_new} to show $c_l$ was incorrectly used in an important step in a proof of \cite[Theorem~5.4]{march2010fast}.

\begin{figure}
	\centering
	  \begin{tikzpicture}[align=center, node distance = 1.5cm, scale = 0.45]
    \node (scale4) {$l = 12$};
	\node[below of =scale4] (scale3) {$l = 11$};
	\node[below =35pt of scale3] (scale2) {$l = 1$};
	
		\node [blockzm1,  right=60pt of scale4 ] (nodep1) {$p_1$};
		\node [blockzm1,  right=130pt of scale3 ] (nodeq1) {$q_1$};
		
		\draw[->] (nodep1) -> (nodeq1);
		
		\node [blockzm1,  right=20pt of scale2 ] (nodep2) {$p_2$};
		
		\node [blockzm1,  right=70pt of scale2 ] (nodep3) {$p_{2^k}$};
		
			\node [blockzm1,  right=110pt of scale2 ] (nodeq2) {$q_2$};
		
		\node [blockzm1,  right=160pt of scale2 ] (nodeq3) {$q_{2^k}$};
		
		 \draw[dashed, -] (nodep2) -- (nodep3);
		 
		  \draw[dashed, -] (nodeq2) -- (nodeq3);
		  	\draw[->] (nodep1) -> (nodep2);
		  		\draw[->] (nodep1) -> (nodep3);
		  		
		  			\draw[->] (nodeq1) -> (nodeq2);
		  		\draw[->] (nodeq1) -> (nodeq3);

\end{tikzpicture}
	\caption{Compressed cover tree $\T(R)$ of Example \ref{exa:mst_counter_new}. Level $l = 1$ contains points $p_2,p_3,...,p_{2^k}$ that are all connected to $p_1$, as well as points $q_2, q_3, ..., q_{2^k}$ that are all connected to $q_1$.}
	\label{fig:mst_cover_Tree_new}
\end{figure}
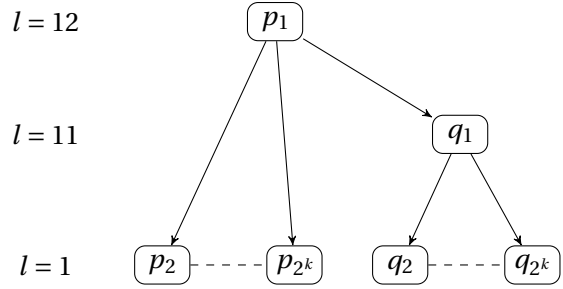

\begin{exa}[two separated sets example]
	\label{exa:mst_counter_new}
	Let $(R,d)$ be a finite metric space consisting of $2^{k+1}$ points divided into two subsets $A,B$ having size $2^k$. 
	Let $A = \{p_i \mid i \in 0,...,2^{k} - 1\}$ and let $B = \{p_i \mid i \in 0,...,2^{k}-1\}$. Metric $d(p_i,p_j)$ is defined as follows: For any $1 \leq i < j \leq 2^{k}$, let $(a_1,...,a_k)$ be the binary representation of $i$. Define $J(p_i,p_j) = \min_{t} \{ t \in [1,k]\cap\Z  \mid  a_t \neq b_t \}$ and $d(p_i, p_j) = 1 + \frac{ k + 1  - J(p_i,p_j)}{k+1}$. It follows that for all $p_i, p_j \in A$ we have 
	$1 < d(p_i,p_j) < 2 $.  Similarly for $q_i, q_j \in B$ we define $d(q_i, q_j) = d(p_i,p_j)$ for all $i,j$. For any $i,j$ we define $d(q_i, p_j) = 2^{10}$. 
	
	\medskip
	
	\noindent
	Let us now prove that $(X,d)$ is a metric space. Notice that for any indices $i, j, t \in [1,2^{k}]$ we have 
	$d(p_i,p_j) + d(p_j, p_{t}) > 2$ and $d(p_i, p_t) < 2$. Therefore the triangle inequality is satisfied for all triplets inside $A$, and by the same argument for $B$ as well. Let $p_i \in A$ and $q_j,q_t \in B$ be arbitrary nodes . 
	Then $d(p_i, q_j) = 2^{10} \leq 2^{10} + d(q_t,q_j) \leq d(p_i,q_t) + d(q_t, q_j)$. We have now shown that $(R,d)$ is indeed a metric space. Let us now construct compressed cover tree $\T(R)$.
	
	\medskip
	
	\noindent
	Let $p_1$ be the root of $\T(R)$. Set $l(q_1) = 2$ and let $q_1$ be a child of $p_1$. Set $l(p_i) = 1$ for all $p_i \in A \setminus \{p_1\}$ and set their parent to be $p_1$. Similarly for all $q_i \in B \setminus \{q_1\}$ set $l(q_i) = 1$ and let $q_1$ be their parent.  We have $C_{12} = \{p_1\}$, $C_{11} = \{p_1, q_1\}$ , $C_{10} = \{p_1,q_1\}$ , ... , $C_{2} = \{p_1, q_1\}$, $C_{1} = R$. Since for all $i$ both inequalities $d(p_1, p_i) \leq 2$ and $d(q_1, q_i) \leq 2$ are satisfied we note that $\T(R)$ satisfies the covering condition \ref{dfn:cover_tree_compressed}(b). Note that $d(a,b) > 1$ for all $a,b \in R$, therefore the separation condition \ref{dfn:cover_tree_compressed}(c) holds as well.  \bs 
	
	
\end{exa}

\begin{figure}
	\centering
	\begin{tikzpicture}[scale=2.0, auto,swap]
  \node[vertex] (a) at (1,0) {$p_1$};
    \node[vertex] (b) at (0,1) {$p_2$};
      \node[vertex] (c) at (-1,0) {$p_3$};
        \node[vertex] (d) at (0,-1){$p_4$};

          \node[vertex] (d2) at (5,0) {$q_4$};
    \node[vertex] (b2) at (4,1) {$q_2$};
      \node[vertex] (a2) at (3,0) {$q_1$};
        \node[vertex] (c2) at (4,-1){$q_3$};

	 \path[edge] (b) -- node[weight] {a} (a);
	  \path[edge] (c) -- node[weight,yshift = 0.5mm, xshift = -2.5mm] {b} (a);
	   \path[edge] (d) -- node[weight] {b} (a);
	 \path[edge] (c) -- node[weight] {b} (b);
	 \path[edge] (b) -- node[weight,yshift = 2.5mm, xshift = 4.5mm] {b} (d);
	  \path[edge] (c) -- node[weight] {a} (d);
	  
	  	 \path[edge] (b2) -- node[weight] {a} (a2);
	  \path[edge] (c2) -- node[weight] {b} (a2);
	   \path[edge] (d2) -- node[weight,yshift = 0.5mm, xshift = -2.5mm] {b} (a2);
	 \path[edge] (c2) -- node[weight,yshift = -2.5mm, xshift = 1.5mm]  {b} (b2);
	 \path[edge] (b2) -- node[weight] {b} (d2);
	  \path[edge] (c2) -- node[weight] {a} (d2);
	  
	  \path[edge] (a2) --  node[weight] {$2^{10}$} (a);
	 


\end{tikzpicture}
	\caption{This figure illustrates clusters $A = \{p_1,p_2,p_3,p_4\}$,$B = \{q_1,q_2,q_3,q_4\}$ of Example \ref{exa:mst_counter_new} for $k = 2$.
		In this picture we have $a = 1\frac{1}{3}$ and $b = 1\frac{2}{3}$.}
	\label{fig:mst_example_2}
\end{figure}
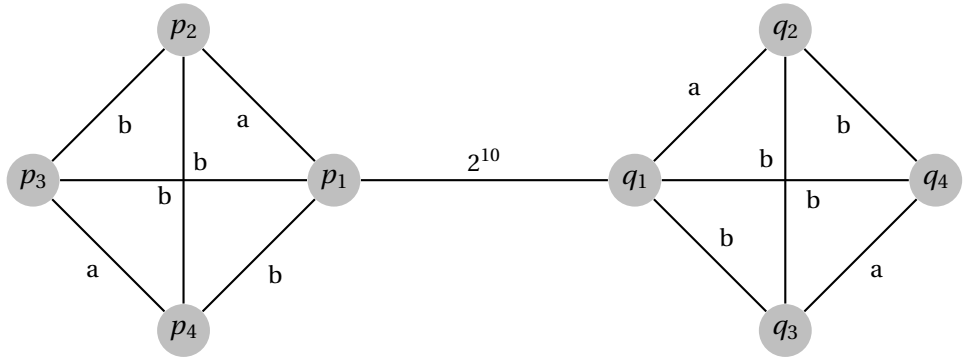

\begin{dfn}[ Bor\r{u}vka clustering, { \cite[Definition~2]{march2010fast} }]
	\label{dfn:boruvkaClustering}
	Given a finite metric space $(R,d)$, \emph{Bor\r{u}vka clustering} is an indexed collection of partitions $\mathcal{F}_i$. Partition $\mathcal{F}_0$ is the partition into isolated points
	$\{ \{p\} \mid p \in R\} $. For any $i$ recall from Definition \ref{dfn:nn_component} that $\NN(U, \mathcal{F}_{i})$ is the set of nearest neighboring components of $U \in \mathcal{F}_i$.
	Partition $\mathcal{F}_{i}$ is obtained from $\mathcal{F}_{i-1}$ by applying \emph{Bor\r{u}vka step} on $\mathcal{F}_{i-1}$:
	In every step for every $V \in \mathcal{F}_{i-1}$ we select $U \in \NN(V, \mathcal{F}_{i-1})$ and merge $V$ with chosen subset $U$. 
	This process, illustrated in Figure \ref{fig:boruvkaClustering} is repeated until $\mathcal{F}_i$ becomes a partition consisting of one element $\{R\}$. \bs
\end{dfn}

\begin{rem}[Cases where linkage expansion constant $c_l$ is undefined  ] 
	\label{rem:linkage_cluster_expansion_constants}
	Recall that linkage expansion constant of \cite[Page~605]{march2010fast} was defined as:
	"
	\cite[Definition~4]{march2010fast}: [ $r$ is replaced with $t$, Bor\r{u}vka clustering $D$ replaced is replaced with $\mathcal{F}$]. Linkage Expansion Constant. Let $C_1$ and $C_2$ be two
	clusters in the Bor\r{u}vka clustering at level $i$ and let $S_1 \subseteq C_1$ and $S_2 \subseteq C_2 $.
	Let $B^l_i(S_1, S_2, t)$ be the set of all pairs $(p,q)$ such that $p \in S_1$, $q \in S_2$ and $d(p,q) \leq t$.

	\smallskip
	
	\noindent
	\textbf{Definition 4:} The linkage expansion constant is the smallest real number $c_l$ such that 
	$$B^l_i(S_1,S_2, 2t) \leq c_l|B^l_i(S_1, S_2,t)|,$$
	for all levels of the Bor\r{u}vka clustering $\mathcal{F}_i$, clusters $C_1$ and $C_2$ at level $i$, subsets $S_1 \subseteq C_1$ and $S_2 \subseteq C_2$, and distances $t> 0$.
	"
	
	\medskip
	
	\noindent
	Bor\r{u}vka clustering $\mathcal{F}$ was originally introduced in \cite[Definition~1]{march2010fast} and restated as in Defintion \ref{dfn:boruvkaClustering} in this chapter.
	Let $R, A, B$ be as in Example \ref{exa:mst_counter_new}. 
	It will be shown in Counterexample \ref{cexa:mst_component}, that in the previous to last Bor\r{u}vka step has $\mathcal{F} = \{A,B\}$. Let $i$ be the index of this step.
	Then for $t = 2^{9}$ set $B^l_i(A, B,2t)$ all edges between $A$ and $B$, therefore $|B^l_i(A, B,2t)| = 2^{2k}$. However, by definition $|B^l_i(A, B,t)| = 0$. Therefore there does not exist real number $c_l$ satisfying $B^l_i(A,B, 2t) \leq c_l|B^l_i(A,B,t)|.$  We can conclude that Definition 4 is undefined for small values $t$. 
	\bs
\end{rem}

\begin{rem}[Cluster expansion constant]
	In this remark, we cite the Cluster expansion constant that will be used in Counterexample \ref{cexa:mst_component}
	
	\smallskip
	
	\noindent
	\cite[Definition~3]{march2010fast}: [ $r$ is replaced with $t$, set $S$ replaced with set $R$, Bor\r{u}vka clustering $D$ replaced is replaced with $\mathcal{F}$]." Given the Bor\r{u}vka clustering, we define the new expansion constants. Let 
	$B_i^c(q,t)$ be the set of all components $C_p$ with a point $p \in C_p$ such
	that $d(q, p) \leq t$. Using this component-wise ball, we define
	the cluster expansion constant. 
	
	\smallskip
	
	\noindent
	\textbf{Definition 3:} The cluster expansion constant is the smallest real number $c_p$ such that 
	$$|B^c_i(q,2t) \leq c_p|B^c_i(q,t)|,$$
	for all points $q \in R$, distances $t > 0$, and each level of the Bor\r{u}vka clustering $\mathcal{F}_i$
	
\end{rem}

\noindent
In Counterexample \ref{cexa:mst_component} it will be shown that there were challenges with using 
linkage expansion constant in the proof of \cite[Theorem~5.4]{march2010fast} 


\begin{cexa}[Linkage expansion constant $c_l$ cannot be used to estimate the size of reference set $\max_i R_i$]
	\label{cexa:mst_component}
	
	\textbf{The Buildup:} Let $R, d, \T(R), k, A,B$ be as in Example \ref{exa:mst_counter_new}. Let $\mathcal{F}_i$ be the Bor\r{u}vka clustering introduced in Definition \ref{dfn:boruvkaClustering}. Let us first show that after $k$-iterations of the while-loop in Algorithm \ref{alg:DualTreeboruvkaOriginal} on an input $\T(R)$ we have $\mathcal{F}_k = {A,B}$. 
	
	\medskip 
	\noindent
	For nodes $p_i,p_j \in A$ define relation $p_i \backsim^A_t p_j$ if indices $i$ and $j$ have exactly the same first $k-t$ numbers in their binary representations. Define relation $\backsim^B_i$ similarly for nodes in $B$.
	Let us prove by induction that for all  $t \in [0,k] $ we have $F_t = [\backsim^A_t] \cup [\backsim^B_t]$, where $[\backsim^A_t]$ and $[\backsim^B_t]$ 
	are collections of classes spanned by relation $\backsim^A_t$ and $\backsim^B_t$, respectively. 
	The basecase of induction follows by noting that 
	$F_0$ has every node in its own separate component. 
	
	\medskip
	\noindent
	Let us now prove the induction step. We will focus only on elements $p_i \in A$, since case for $q_i \in B$ can be proved similarly. 
	Assume that the claim holds for some $u \in [0,k-1]$, let us show that it holds for $u+1$. By definition of metric $d(p_i, p_j) = 1 + \frac{ k + 1 - J(p_i,p_j)}{k+1}$. By induction assumption any pair of nodes that share $u$ common numbers in the binary representations of their indices are already connected. Therefore for any $p_i$ we have $\min_{w \nsim_u p_i }d(p_i, w) > 1 + \frac{ k+1  - u}{k+1} $ . Since $A$ contains all indices of the interval $[1,2^k]$ for every $p_i$ we can find $w \in A$ in such a way that $p$ and $w$ have $k-t-1$ same number in their binary representations , but the number at $k-t$ differs. Since there are no integers between $k-1$ and $k-t-1$ we have $\min_{w \nsim_u p_i }d(p_i, w) = 1 + \frac{ k + 1 - u + 1}{k+1} $ for all $p_i$. It follows that in $F_{t+1}$ all the nodes that have $k-t-1$ same numbers in their binary representations will be merged. Note that $[\backsim^A_k] = \{A\} $  and $[\backsim^A_k] = \{B\} $. Therefore it follows that $\mathcal{F}_k = \{A,B\}$. Consider now the following quote:
	

	\medskip
	
	\noindent
	\emph{\textbf{Exact quote of} \cite[Theorem~5.4]{march2010fast}:[$d'$ replaced with $t$] "
	Consider the other case when $d > 2^{i+2}$ ... }
	
	\medskip
	
	\emph{... We now bound the number of points within a component
	that $q_j$ may have to consider. Let $C_r$ be a component distinct from $C_q$ . Let $L(q_j)$ denote the set of all leaves that are
	descendants of $q_j$ . Let $t = \min_{q \in L(q_j),r \in C_r}d(q, r)$. Then,}
	\begin{align*}
		|B^l_k(C_q, C_r, d + 2^{i+1} + 2^{i})| &\leq |B^l_k(C_q, C_r, 4(d -2^{i+1}))| \\
		&\leq c^2_l |B^l_k(C_q , C_r, d -2^{i+1} )| \\
		&\leq c^2_l |B^l_k(C_q , C_r, t)|.
	\end{align*}
	\emph{
	By the above argument, there can be at most one pair in
	$B^l_k (C_q \cap L(q_j), C_r , t)$. Therefore, there are at most $c^2_l$ points
	in $C_r$ contained in $B(q_j , d + 2^{i+1} + 2^i)$. In the worst case,
	each of the point is at level $C_{i-1}$ of the tree and must
	be considered in $R_{i-1}$. There are at most $c^2_p$ components
	$C_r$ that can contribute points, so the maximum number of
	points in $R_{i-1}$ is $c^2_p c^2_l$.
	"}
	
	\medskip
	
	\noindent
	Note that set of leaves $L(q_j)$ corresponds to nodes $\Sd_{j}(q_j, \T(R))$ of compressed cover tree. We denote the component of $q$ in $\mathcal{F}_k$ by $C_q$. Variable $d$ was introduced in line 10 of Algorithm \ref{alg:FindComponentNeighborsOriginal} and constant $c_p, c_l$ were defined in Remark \ref{rem:linkage_cluster_expansion_constants}.  The quote above states that
	if $d > 2^{i+2}$ number of the points is bounded $c^2_p \cdot c^2_l$ , where $c^2_p$ can be replaced by the number of compoonents at the current stage of the algorithm. Therefore for $k+1$th iteration of lines 3-6 of Algorithm \ref{alg:DualTreeboruvkaOriginal} the claim is 
	$$|R_{i-1}| \leq c^2_l \cdot (\text{Number of components in } \mathcal{F}_k).$$
	\textbf{The contradiction: }In this counterexample it will be shown that $\T(R)$ from Example \ref{exa:mst_counter_new} does not satisfy the above inequality in Algorithm \ref{alg:FindComponentNeighborsOriginal} launched with inputs $\T(R)$ and $\mathcal{F}_k$.
	
	\medskip
	
	\noindent
	By Remark \ref{rem:linkage_cluster_expansion_constants} $|B^l_k(C_q , C_r, t')|$ was indefined for all $t' < d(C_q,c_r)$. Let us change the definition by requiring $t' \geq d(C_q, C_r)$. In \cite{march2010fast} constant $c_l$ was defined as a maximum of constants over all partitions $F_n$. However, nothing prevents us by using the same argument of authors and focusing only on a single partition $\mathcal{F}_k$. In this case $\mathcal{F}_k$ consists of two components $A,B$ which satisfy: for all $a \in A$ and $b \in B$ we have $d(a,b) = 2^{10}$. Therefore for any $S_1 \subseteq A$ and $S_2 \subseteq B$ we have $B(S_1, S_2, x) = S_1 \times S_2$ when $ x \geq 2^{10}$ and indefined else. It follows that $c_{l} = 1$ in this case. 
	\medskip
	
	\noindent
	Let us now perform simulation of $k+1$th iteration of Algorithm \ref{alg:DualTreeboruvkaOriginal}. We are especially interested in iteration $(p_1, R_2, j = 1, i = 2)$ of Algorithm \ref{alg:FindComponentNeighborsOriginal}. Let us start by running Algorithm \ref{alg:FindComponentNeighborsOriginal} with using partition $\mathcal{F}_k$ and parameters $(p_1   R_{12} = \{p_1\}, i = 12, j = 12)$. Since $i = j = 12$ and $q_1$ is children of $p_1$ at level 11 we proceed into query expansion (lines 13-17), where we launch two separate instances 
	$(p_1, R_{12} = \{p_1\}, i = 12, j = 11)$ and  $(q_1, R_{12} = \{p_1\}, i = 12, j = 11)$. Let us focus only on iterations having $p_1$ as the query node. In $(p_1, R_{12} = \{p_1\}, i = 12, j = 11)$  we proceed into reference expansion (lines 8 -13) , where $\C(R_{12}) = \{p_1, q_1\}$, since $q_1$ is the only node that contains descendants that are not already connected to $p_1$ we get $R_{11} = \{q_1\}$. Note that both $q_1$ and $p_1$ do not contain any children on levels $[3, 11]$. Therefore we can skip to the case $i = 2$. Let us continue with iteration  $p_1, R_{2} = \{q_1\}, i = 2, j = 1)$, we proceed into reference expansion (lines 8-13). We have $\C(R_2) = B$. Now since all $b \in B$ are not connected to $p_1$ and $d(p_1, b) = 2^{10}$ for all $b$, we have $d = 2^{10}$ and $R_{1} = B$. Since $d \geq 2^{4} = 2^{i+2}$ we can use the arguments of the quote above. However, now we have:
	$$|R_{1}| \leq c^2_l \cdot (\text{Number of components in } \mathcal{F}_k) \leq 1 \cdot 2 \leq 2.$$
	Since $|R_1| = |B| = \frac{|R|}{2}$, this is a contradiction. \bs
	
\end{cexa}


\section{Time complexity proof for Singletree Bor\r{u}vka MST}
\label{sec:emst} 

The main result of this section is Theorem \ref{thm:single_cover_tree_mst_time} that resolves issues presented in Section \ref{sec:emst_challenge_reference_recursion} and Section \ref{sec:emst_challenge_reference_linkage_expansion} by giving weaker complexity result for Problem \ref{pro:MST}:
$$O(c_m(R)^{O(\log_2(\rho(R)))} \cdot |R| \cdot \log_2(|R|) \cdot \log_2(\Delta(R))),$$
where $\rho(R)$ is the maximal edge length of $\MST(R)$ divided by its minimum edge length and $\Delta(R)$ is aspect ratio of $R$ from
Definition \ref{dfn:radius+d_min}. Algorithm \ref{alg:single_tree_boruvka}, which solves the problem is a single-tree analog of the original dual-tree metric minimum spanning tree {\cite[Algorithm~3]{march2010fast}}.


\medskip

\noindent
Algorithm \ref{alg:single_tree_boruvka}, which resolves Problem \ref{pro:MST} is based on the idea of Bor\r{u}vka algorithm. The computation is started by placing all points of $R$ in their own components. During every Bor\r{u}vka step every component is merged with the component that minimizes the Hausdorff distance. It will be shown that there will be at most $O(\log(|R|))$ Bor\r{u}vka steps. Theorem \ref{thm:single_cover_tree_mst_time} shows that in every Bor\r{u}vka step for every component in $\mathcal{F}$ 
we can find its nearest components in a near-linear time.
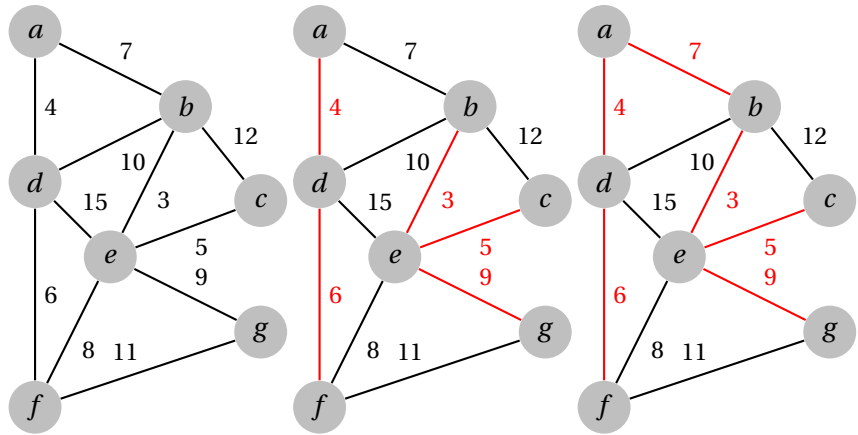
\begin{figure*}
	\centering
	\input \begin{tikzpicture}[scale=1.0, auto,swap]
    \foreach \pos/\name in {{(0,2)/a}, {(2,1)/b}, {(3,-0.25)/c},
                            {(0,0)/d}, {(1.0,-1)/e}, {(0,-3)/f}, {(3,-2)/g}}
        \node[vertex] (\name) at \pos {$\name$};
    \foreach \source/ \dest /\weight in {b/a/7, c/b/12,d/a/4,d/b/10,
                                         e/b/3, e/c/5,e/d/15,
                                         f/d/6,f/e/8,
                                         g/e/9,g/f/11}
        \path[edge] (\source) -- node[weight] {$\weight$} (\dest);
\end{tikzpicture}
\begin{tikzpicture}[scale=1.0, auto,swap]
  \foreach \pos/\name in {{(0,2)/a}, {(2,1)/b}, {(3,-0.25)/c},
                            {(0,0)/d}, {(1.0,-1)/e}, {(0,-3)/f}, {(3,-2)/g}}
        \node[vertex] (\name) at \pos {$\name$};
	 \path[edge] (b) -- node[weight] {7} (a);
	 \path[edge] (c) -- node[weight] {12} (b);
	 \path[edge,red] (d) -- node[weight] {4} (a);
	 \path[edge] (d) -- node[weight] {10} (b);
	 \path[edge,red] (e) -- node[weight] {3} (b);
	 \path[edge,red] (e) -- node[weight] {5} (c);
	 \path[edge] (e) -- node[weight] {15} (d);
	 \path[edge,red] (f) -- node[weight] {6} (d);
	 \path[edge] (f) -- node[weight] {8} (e);
	 \path[edge,red] (g) -- node[weight] {9} (e);
	 \path[edge] (g) -- node[weight] {11} (f);


\end{tikzpicture}
\begin{tikzpicture}[scale=1.0, auto,swap]
  \foreach \pos/\name in {{(0,2)/a}, {(2,1)/b}, {(3,-0.25)/c},
                            {(0,0)/d}, {(1.0,-1)/e}, {(0,-3)/f}, {(3,-2)/g}}
        \node[vertex] (\name) at \pos {$\name$};
	 \path[edge,red] (b) -- node[weight] {7} (a);
	 \path[edge] (c) -- node[weight] {12} (b);
	 \path[edge,red] (d) -- node[weight] {4} (a);
	 \path[edge] (d) -- node[weight] {10} (b);
	 \path[edge,red] (e) -- node[weight] {3} (b);
	 \path[edge,red] (e) -- node[weight] {5} (c);
	 \path[edge] (e) -- node[weight] {15} (d);
	 \path[edge,red] (f) -- node[weight] {6} (d);
	 \path[edge] (f) -- node[weight] {8} (e);
	 \path[edge,red] (g) -- node[weight] {9} (e);
	 \path[edge] (g) -- node[weight] {11} (f);


\end{tikzpicture}
	\caption{Bor\r{u}vka's clustering on an example graph.
		\textbf{Left: } In $\mathcal{F}_{0}$ every point exists in its own separate cluster , \textbf{Middle: } 
		$\mathcal{F}_1$ consists of two clusters $\{a,d,f\}$ and $\{b,c,e,g\}$, 
		\textbf{Right: } In $\mathcal{F}_2$ all the points are in the same cluster.  
	}
	\label{fig:boruvkaClustering}
\end{figure*}

\begin{dfn}[Partition $\mathcal{F}$ of a finite metric space $(R,d$]
	\label{dfn:partition_mst}
	Let $(R,d)$ be a finite metrics space. A \emph{a partition}  $\mathcal{F}$ of $(R,d)$ is a splitting of $R$ into disjoint subsets called \emph{clusters} if $R = \sqcup_{U \in F}U$ and $U \cap V = \emptyset$, for all distinct  $U,V \in \mathcal{F}$. \bs
\end{dfn}



\noindent
Recall that distinctive descendant set $\Sd_j(q, \T(R))$ was introduced in Definition \ref{dfn:distinctive_descendant_set} and that for any partition $\mathcal{F}$ and node $p \in R$ we denote $\mathcal{F}(p)$ to be the cluster of $p$ in the partition~$\mathcal{F}$.

\begin{dfn}[Cluster of descendants]
	\label{dfn:tau}
	Let $R$ be a finite subset of a metric space $(X,d)$. Let $\T(R)$ be a compressed cover and let $\mathcal{F}$ be arbitrary partition of the set $R$. Let $p \in \T(R)$ be arbitrary node , let $\mathcal{F}(p)$ be its cluster and let $i$ be arbitrary level of $\T(R)$. Define \emph{cluster of descendants} $\tau_i(p): R \rightarrow \mathcal{F} \times R$ in the following way:
	\begin{ceqn}
		\begin{equation*}
			\tau_i(p)=\begin{cases}
				(\mathcal{F}(p), \emptyset) & \text{If } \Sd_i(p,\T(R)) \subseteq \mathcal{F}(p)  \\
				(\emptyset, q)  \quad & \text{Else choose } q \in \Sd_i(p,\T(R)) \setminus \mathcal{F}(p) \\
			\end{cases}
		\end{equation*}
	\end{ceqn}
	Note that if $\Sd_i(p,\T(R))$ is not completely contained into some cluster, then $\tau_i(p)$ maps into $(\emptyset,q)$ where $q\in \Sd_i(p,\T(R))$ is arbitrary node that doesn't belong to $\mathcal{F}(p)$. \bs
	
	
	
	
	
\end{dfn}

\noindent
In Figure \ref{fig:uniqueDescendant} assume that $\mathcal{F} = \{U,V\}$, where $U = \{6,8,9\}$ and $V = \{2,3,4,5\}.$ Then
$\tau_2(2) = 6$, $\tau_1(6) = V$, $\tau_1(2) = U$. Let $\mathcal{F}$ be an arbitrary partition of $R$. For every $p \in R$ we denote $p \in \mathcal{F}(p)$ to be the cluster of $p$ in $\mathcal{F}$. 



\begin{algorithm}
	\caption{Singletree Bor\r{u}vka MST based on compressed cover trees.}
	\label{alg:single_tree_boruvka}
	
	\begin{algorithmic}[1]
		\STATE \textbf{Input: } A compressed cover tree $\T(R)$ with a root $r$
		\STATE \textbf{Output: } Minimum spanning tree $\MST(R)$
		\STATE Let $G$ be a graph with vertex set $R$ and an empty edge set. 
		\STATE Let $\mathcal{F}$ be a partition of $R$ into isolated points.
		\WHILE{$|\mathcal{F}| > 1$} \label{line:stb:while_start}
		\STATE Precompute $\tau$ of Definition \ref{dfn:tau} by running Algorithm \ref{alg:cover_tree_connected_components} on $(l_{\max}(\T(R)) ,r)$. \label{line:stb:precompute_tau}
		\FOR {$U \in \mathcal{F}$}\label{line:stb:inner_loop:begin}
		\STATE Set $\mathcal{C}(U)$ = Algorithm \ref{alg:single_tree_boruvka_step} executed on ($\T(R)$, $U$).
		\ENDFOR \label{line:stb:inner_loop:end}
		\FOR {$U \in \mathcal{F}$}
		\label{line:stb:merge_loop:start}
		\STATE Let $(q,p) = \mathcal{C}(U)$.
		\IF {$\mathcal{F}(q) \neq \mathcal{F}(p)$}
		\STATE Add edge $(q,p)$ to $G$.
		\STATE Merge components $\mathcal{F}(q)$ and $\mathcal{F}(p)$ in $\mathcal{F}$. 
		\ENDIF
		\ENDFOR \label{line:stb:merge_loop:end}
		\ENDWHILE \label{line:stb:while_end}
		\STATE \textbf{return} graph $G$
	\end{algorithmic}
\end{algorithm}

\begin{algorithm}
	\caption{This algorithm runs a depth-first traversal for a compressed cover tree $\T(R)$ to find all subtrees, where all the nodes of the subtree belong to the same component.}
	\label{alg:cover_tree_connected_components}
	\begin{algorithmic}[1]
		\STATE \textbf{Function} : FindClusters(a level $i$ of $\T(R)$, a node $p$ of $\T(R)$)
		\STATE \textbf{Output} : A connected component $U \in \mathcal{F}$ or $\emptyset$. 
		\STATE Set $U = F(p)$ \COMMENT{$U$ is the cluster containing $p$}
		\IF{$i > l_{\min(\T(R))}$ }
		\STATE Set $\mathcal{A} = \{p\} \cup \{q \in \Child(p) | l(q) = i-1\}$
		\STATE $j \leftarrow 1 + \nxt(p, i-1,\T(R))$.
		\FOR {$a \in \mathcal{A}$}
		\STATE $(V,b) = \text{FindClusters}(a,j)$.
		\IF{$U \neq V$}
		\STATE Set $\tau_i(p) = (\emptyset, b)$ and  \textbf{return} $(\emptyset, b)$.
		\ENDIF
		\ENDFOR 
		\ENDIF
		\STATE Set $\tau_i(p) = (U, \emptyset)$ and \textbf{return} $(U,\emptyset)$.
	\end{algorithmic}
\end{algorithm}

\begin{algorithm}
	\caption{Single tree Bor\r{u}vka step}
	\label{alg:single_tree_boruvka_step}
	\begin{algorithmic}[1]
		\STATE \textbf{Input} : a compressed cover tree $\T(R)$, a cluster $U \in \mathcal{F}(R)$
		\STATE Set $i \leftarrow l_{\max}(\T(R))$
		\STATE  Let $r$ be the root node of $\T(R)$. Set $R_{i}=\{r\}$.
		\WHILE{$i > l_{\min}$} \label{line:msts:loop_begin}
		\STATE Assign $\mathcal{C}(R_i) \leftarrow \{a \in \Child(p) \text{ for some }p \in R_i \mid l(a) \geq i-1 \}$ \\ \COMMENT{\textbf{Recall} that $\Child(p)$ contains node $p$ } \label{line:msts:dfn_C}
		\STATE Set $\C^{*}(R_i) = \{ p \in \C(R_i) \mid \tau_i(p) \neq U \} $\COMMENT{\textbf{Note: }$\tau$ appears in Definition \ref{dfn:tau}. } \label{line:msts:dfn_C_star}
		\STATE Set $l = \infty$.
		\FOR{$q \in U$}\label{line:msts:innerfor:begin}
		\STATE $l_{1} \leftarrow \min\limits_{p \in \C^{*}(R_i)} \{d(q,p) + 2^{i} \mid \mathcal{F}(p) = U \}$.
		\STATE $l_{2} \leftarrow \min\limits_{p \in \C^{*}(R_i)} \{d(q,p) \mid \mathcal{F}(p) \neq U \}$.
		\STATE $l \leftarrow \min\{l,l_1,l_2\} $.
		\ENDFOR \label{line:msts:innerfor:end}
		\STATE Find $R_{i-1} = \{p \in \C^{*}(R_i) \mid d(U,p) \leq l + 2^{i}\}$ \label{line:msts:dfnRi}
		\STATE Set $j \leftarrow \max_{ a \in R_{i-1}} \nxt(a,i-1,\T(R))$
		\COMMENT{If such $j$ is undefined, we set $j = l_{\min}$} \label{line:msts:dfnindexj}
		\STATE Set $R_j = R_{i-1}$ and $i = j$
		\ENDWHILE \label{line:msts:loop_end}
		\STATE \textbf{returns} pair $(q,p) \in U \times R_{l_{\min}} \setminus U$ minimizing $d(q,p)$
		\label{line:msts:final_line}
		
	\end{algorithmic}
\end{algorithm}

\begin{dfn}[Nearest neighboring component]
	\label{dfn:nn_component}
	Let $\mathcal{F}$ be an arbitrary partition of a finite metric space $(R,d)$. For any two sets $U, V \in \mathcal{F}$ define the Hausdorff distance as $d_H(U, V) =  \min \{d(x,y) \mid x \in U \text{ and }y \in V\}.$ For every component $U \in F$ the nearest neighbor set $\NN(U, \mathcal{F})$ consists of all the components $V \in \mathcal{F}$ satisfying
	$d_H(U,V) \leq \min_{W \in \mathcal{F}}d_H(U,W)$. \bs
\end{dfn}

\begin{lem}[True nearest components are always descendants of $R_i$ for all levels $i$]
	\label{lem:cover_tree_knn_correct_approx}
	Let $R$ be a finite subset of an ambient metric space $(X,d)$. 
	Let $\T(R)$ be a compressed cover tree of $R$. 
	Let $\mathcal{F}$ be arbitrary partition of $R$ and let $U \in \mathcal{F}$ be an arbitrary cluster.
	Then for any iteration $i \in H(\T(R))$ of lines 
	\ref{line:msts:loop_begin}-\ref{line:msts:loop_end} of Algorithm~\ref{alg:single_tree_boruvka_step} there exists $V \in \NN(U,\mathcal{F})$
	and node $\beta \in \bigcup_{p \in R_i}\Sd_i(p, \T(R)) \cap V$ for which $d(U, \beta) = d(U,V)$ for any $V' \in \NN(U,\mathcal{F})$.
	\bs
\end{lem}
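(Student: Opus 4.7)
My plan follows the template of Lemma~\ref{lem:cover_tree_knn_correct_lem} and proceeds by downward induction on the iteration level $i \in H(\T(R))$ visited inside the while-loop on lines~\ref{line:msts:loop_begin}--\ref{line:msts:loop_end} of Algorithm~\ref{alg:single_tree_boruvka_step}. Since the conclusion is vacuous when $V = U$ (any point of $U$ realizes $d(U,V) = 0$ and is automatically a descendant of the root), I restrict attention to a nearest component $V \in \NN(U,\mathcal{F})$ with $V \neq U$, so that any realizing $\beta$ must lie outside $U$. The invariant I maintain is: at each iteration level $i$ there exist such $V$ and $\beta$ with $\beta \in \bigcup_{p \in R_i}\Sd_i(p,\T(R))$. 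The base case $i = l_{\max}$ is immediate, because $R_{l_{\max}}=\{r\}$ and $\Sd_{l_{\max}}(r,\T(R))=R$ cover every admissible $\beta$.

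For the inductive step I write $\eta(i)$ for the previous iteration's level, so that $R_i$ arose from $R_{\eta(i)-1}$ via the renaming in line~\ref{line:msts:dfnindexj}. The hypothesis gives $u \in R_{\eta(i)}$ with $\beta \in \Sd_{\eta(i)}(u,\T(R))$. An adaptation of Lemma~\ref{lem:child_set_equivalence} to the one-level expansion executed on line~\ref{line:msts:dfn_C} yields
\[
\bigcup_{p \in \C(R_{\eta(i)})}\Sd_{\eta(i)-1}(p,\T(R))
\;=\;
\bigcup_{p \in R_{\eta(i)}}\Sd_{\eta(i)}(p,\T(R)),
\]
so there is $\alpha \in \C(R_{\eta(i)})$ with $\beta \in \Sd_{\eta(i)-1}(\alpha,\T(R))$; moreover the definition of $j$ in line~\ref{line:msts:dfnindexj} as the largest $\nxt$-value among members of $R_{\eta(i)-1}$ forces $\Sd_{\eta(i)-1}(\alpha,\T(R))=\Sd_i(\alpha,\T(R))$. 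It therefore suffices to show $\alpha \in R_{\eta(i)-1}$.

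I finish by a case split on why $\alpha$ could fail to land in $R_{\eta(i)-1}$. If $\alpha \notin \C^{*}(R_{\eta(i)})$, then Definition~\ref{dfn:tau} forces $\Sd_{\eta(i)-1}(\alpha,\T(R)) \subseteq U$, contradicting $\beta \notin U$. If instead $\alpha \in \C^{*}(R_{\eta(i)})$ but $d(U,\alpha) > l + 2^{\eta(i)}$, then Lemma~\ref{lem:distinctive_descendant_distance} gives $d(\alpha,\beta) \leq 2^{\eta(i)}$, so the triangle inequality yields $d(U,V) = d(U,\beta) \geq d(U,\alpha) - 2^{\eta(i)} > l$. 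The remaining goal is the upper bound $l \leq d(U,V)$ on the scalar computed in lines~\ref{line:msts:innerfor:begin}--\ref{line:msts:innerfor:end}, and this is, I expect, the main obstacle of the proof. I plan to establish it by tracing the hypothesis pair $(V,\beta)$ up to its $\C^{*}(R_{\eta(i)})$-ancestor $\gamma$ and showing that $\gamma$ contributes at most $d(U,V)$ to $l$: via $l_2$ when $\gamma \notin U$ and via $l_1$ (absorbing the $+2^{\eta(i)}$ correction into a companion application of Lemma~\ref{lem:distinctive_descendant_distance}) when $\gamma \in U$. Reconciling the constant $2^{\eta(i)}$ appearing in the algorithm with the factor $2^{\eta(i)+1}$ produced by Lemma~\ref{lem:distinctive_descendant_distance} is the delicate bookkeeping step that I anticipate needing to nail down carefully, paralleling the role of the $\lambda$-point argument in Lemma~\ref{lem:beta_point}.
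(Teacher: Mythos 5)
Your overall skeleton is the same as the paper's (the paper phrases it as a maximal-failing-level contradiction rather than downward induction, but the content is identical): base case at the root, Lemma~\ref{lem:child_set_equivalence} to pull $\beta$ down to an ancestor $\alpha\in\C(R_{\eta(i)})$ with $\beta\in\Sd_{\eta(i)-1}(\alpha,\T(R))$, the $\tau$-argument of Definition~\ref{dfn:tau} showing $\alpha\in\C^{*}(R_{\eta(i)})$ because $\beta\notin U$, and Lemma~\ref{lem:distinctive_descendant_distance} at level $\eta(i)-1$ giving $d(\alpha,\beta)\leq 2^{\eta(i)}$ (so the $2^{i+1}$-versus-$2^{i}$ bookkeeping you worry about is not actually an issue). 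However, the closing step of your plan has the key inequality reversed, and this is a genuine gap. After assuming $\alpha$ is pruned you correctly derive $d(U,V)=d(U,\beta)>l$; what you then need is the \emph{lower} bound $l\geq d(U,V)$ — equivalently $d(U,\alpha)\leq d(U,\beta)+2^{\eta(i)}\leq l+2^{\eta(i)}$, so that $\alpha$ in fact survives line~\ref{line:msts:dfnRi} — and it is this bound that produces the contradiction. Proving $l\leq d(U,V)$, as you propose, contradicts nothing (you already have its strict form) and does not show $\alpha\in R_{\eta(i)-1}$. Moreover your route to it fails on its own terms: the $\C^{*}$-ancestor $\gamma$ of $\beta$ does not contribute ``at most $d(U,V)$'' to $l$. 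If $\gamma\in U$ its $l_1$-term is $\min_{q\in U}d(q,\gamma)+2^{\eta(i)}=2^{\eta(i)}$, which can greatly exceed $d(U,V)$; if $\gamma\notin U$ its $l_2$-term is $d(U,\gamma)\geq d(U,\mathcal{F}(\gamma))\geq d(U,V)$, possibly strictly.

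The correct closing move — the one the paper makes in lines~\ref{line:msts:innerfor:begin}--\ref{line:msts:innerfor:end} of its proof — argues from the pair that actually attains $l$, not from $\beta$'s ancestor. If $l=l_1$ is attained by $q\in U$ and $\gamma\in\C^{*}(R_{\eta(i)})$ with $\mathcal{F}(\gamma)=U$, then membership in $\C^{*}$ guarantees some $w\in\Sd_{\eta(i)-1}(\gamma,\T(R))\setminus U$, and Lemma~\ref{lem:distinctive_descendant_distance} gives $d(\gamma,w)\leq 2^{\eta(i)}$, hence $d(U,w)\leq d(q,\gamma)+2^{\eta(i)}=l$; if $l=l_2$, the minimizer $\gamma$ itself lies outside $U$ with $d(U,\gamma)=l$. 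Either way some point outside $U$ lies within distance $l$ of $U$, so $d(U,V)\leq l$, which is exactly the bound you are missing and which contradicts $d(U,V)>l$. With this step repaired, the rest of your argument goes through and coincides with the paper's proof.
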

\begin{proof}
	Assume contrary.
	Since $R_{l_{\max}} = \{r\}$, where $r$ is the root $\T(R)$ we have $S_{l_{\max}}(r,\T(R)) = R$ and therefore 
	we always have $\bigcup_{p \in R_{l_{\max}}}\Sd_i(p, \T(R)) \cap V \neq \emptyset$. Let $i$ be largest index for which 
	$\bigcup_{p \in R_{i-1}}\Sd_i(p, \T(R)) \cap \NN(U,\mathcal{F}) = \emptyset$. 
	Let $\beta$ be arbitrary point satisfying $\beta \in \bigcup_{p \in R_i}\Sd_i(p, \T(R))$ and $d(U, \beta) = d(U, V)$ for any $V \in \NN(U,\mathcal{F})$.
	By assumption we have:
	$$\beta \in \bigcup_{p \in R_i}\Sd_i(p, \T(R)) \setminus \bigcup_{p \in R_{i-1}}\Sd_{i-1}(p, \T(R)).$$ 
	By Lemma \ref{lem:child_set_equivalence} we have:
	\begin{ceqn}
		\begin{equation}
			\label{eqa:mst_single:neighborsContained}
			\bigcup_{p \in \C(R_i)}\Sd_{i-1}(p, \T(R)) = \bigcup_{p \in R_i}\Sd_{i}(p, \T(R))
		\end{equation}
	\end{ceqn}
	It follows that $\beta \in \bigcup_{p \in \C(R_i)}\Sd_{i-1}(p, \T(R))$. We denote $\alpha \in \C(R_i)$ to be the node that satisfies $\beta \in \Sd_{i-1}(\alpha, \T(R))$. By counter assumption $d(U, \alpha) > l + 2^{i}$. By using triangle inequality we have:
	\begin{ceqn}
		\begin{equation}
			\label{eqa:march2010fast_single:correctness:one}
			d(U,\beta) \geq d(U,\alpha)  - d(\alpha,\beta)   \geq d(U, \alpha) - 2^{i} > l
		\end{equation}
	\end{ceqn}
	Note that $\beta \in \C^{*}(R_i)$ by line \ref{line:msts:dfn_C_star} of Algorithm \ref{alg:single_tree_boruvka_step} , therefore $\C^{*}(R_i)$ is non-empty. By definition $l = \min \{l_1, l_2 \}$.
	Let us assume first that 
	\begin{ceqn}
		\begin{equation}
			\label{eqa:mst_single::correctness:two:zero}
			l = l_1 =  \min_{q \in U}\min\limits_{p \in \C^{*}(R_i)} \{d(q,p) + 2^{i} \mid \mathcal{F}(p) = U \},
		\end{equation}
	\end{ceqn}
	Let $q \in U$ and $\gamma \in \C^{*}(R_i)$ be the points minimizing the distance of (\ref{eqa:mst_single::correctness:two:zero}).
	Let $w \in \Sd_{i-1}(\gamma,\T(R))$ be an arbitrary point for which $w \notin U$. By Lemma \ref{lem:distinctive_descendant_distance} applied on $i-1$ we have  $d(\gamma, w) \leq 2^{i}$. It follows:
	\begin{ceqn}
		\begin{equation}
			\label{eqa:march2010fast_single:correctness:two}
			d(w, q) \leq d(q,\gamma ) + d(\gamma, w) \leq d(q,\gamma ) + 2^{i} = l
		\end{equation}
	\end{ceqn}
	By combining inequalities (\ref{eqa:march2010fast_single:correctness:one}) and (\ref{eqa:march2010fast_single:correctness:two}) and since $q \in U$ we have:
	$$d(U, \mathcal{F}(w)) \leq d(U,w) \leq d(q,w) < d(U,\beta).$$ It follows that there exists $V \in \mathcal{F}$ for which $d(\beta,U) > d(V,U)$, which is a contradiction.
	The case where $l = l_2 = \min\limits_{p \in \C^{*}(R_i)} \{d(U,p) \mid \mathcal{F}(p) \neq U \}$ can be proven similarly. 
\end{proof}


\begin{lem}[the lowest cost edge of cluster is always contained in $\MST(R)$]
	\label{lem:mst_contains_loweset_edge}
	Let $(R,d)$ be a finite metric space and let $\MST(R)$ be a minimum spanning tree of $R$. 
	Let $U \subseteq R$ be a set for which graph $\MST(R)$ restricted to vertex set $U$ is connected. 
	Then for any $(a,b) \in \MST(R)$ having $a \in U$ and $b \in R \setminus U$ minimizing the distance $d(U, R \setminus U)$, there exists possibly another minimum spanning tree $P$ of $R$ having $P | U = R | U$ and
	$(a,b) \in P$.
\end{lem}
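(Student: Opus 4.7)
The plan is to use the classical cycle-exchange argument for minimum spanning trees, carefully adapted so that the subtree induced on $U$ is not disturbed. I would proceed in four steps.

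First, I would dispose of the easy case: if the edge $(a,b)$ already belongs to $\MST(R)$, then $P = \MST(R)$ satisfies the conclusion trivially. Otherwise, form the graph $\MST(R) \cup \{(a,b)\}$. Since $\MST(R)$ is a spanning tree, adding the extra edge $(a,b)$ creates exactly one cycle $\mathcal{C}$ containing $(a,b)$. The cycle $\mathcal{C}$ starts at $a \in U$, uses the edge $(a,b)$ to reach $b \in R \setminus U$, and must eventually return to $a$; therefore $\mathcal{C}$ contains at least one further edge $(a',b')$ with $a' \in U$ and $b' \in R \setminus U$. This is the edge I will exchange with $(a,b)$.

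Second, I would invoke the minimality hypothesis. Since $(a,b)$ was chosen to minimize $d(x,y)$ over all pairs $x \in U$, $y \in R \setminus U$, and since $(a',b')$ is one such pair, we have $d(a',b') \geq d(a,b)$. Consequently the graph $P = (\MST(R) \cup \{(a,b)\}) \setminus \{(a',b')\}$ has total edge weight at most that of $\MST(R)$. It remains a spanning connected subgraph (removing any edge of a cycle preserves connectedness) with $|R|-1$ edges, so $P$ is a spanning tree of $R$ of minimum total weight, hence itself a minimum spanning tree.

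Third, I would verify that $P|U = \MST(R)|U$. The only edges in which $P$ differs from $\MST(R)$ are $(a,b)$ (added) and $(a',b')$ (removed); both of these cross between $U$ and $R \setminus U$ and therefore lie outside the induced subgraph on $U$. Hence no edge of $\MST(R)|U$ is deleted and no new edge inside $U$ is added, which gives $P|U = \MST(R)|U$. Finally, $(a,b) \in P$ by construction, completing the proof.

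I do not anticipate a substantial obstacle here; the only point that requires mild care is the guaranteed existence of a second crossing edge in the cycle $\mathcal{C}$, which follows from the elementary observation that any cycle intersects any cut in an even number of edges, so the single crossing $(a,b)$ forces at least one more.
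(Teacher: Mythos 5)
Your proof is correct, and it takes a slightly different (dual) route from the paper's. The paper argues on the cut side: it picks the component $C$ of $\MST(R)\setminus U$ containing $b$, shows some edge $(f,g)\in\MST(R)$ must join $U$ to $C$, and swaps it for $(a,b)$; the fact that this swap yields a spanning tree is asserted tersely and actually relies on the hypothesis that $\MST(R)|_U$ is connected (so that $a$ stays on $f$'s side after deleting $(f,g)$, while $b$ stays on $g$'s side inside $C$). You instead run the classical cycle exchange: add $(a,b)$, take the unique cycle it creates, and remove a second cut-crossing edge $(a',b')$ on that cycle, whose existence follows from the even-intersection of cycles with cuts. This buys you two things: tree-ness of $P$ is automatic (you delete an edge of a cycle, so connectivity and the edge count $|R|-1$ are immediate), and you never use the connectedness hypothesis on $\MST(R)|_U$, so your argument is in fact slightly more general. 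You also explicitly check $P|_U=\MST(R)|_U$ by noting both exchanged edges cross the cut, a point the paper leaves implicit (the lemma's ``$P|U=R|U$'' should be read as $P|_U=\MST(R)|_U$, which is exactly what you verify). The weight comparison $d(a',b')\geq d(a,b)$ and the conclusion that $P$ is again minimum are the same in both proofs.
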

\begin{proof}
	Since $R$ is finite, we can pick 
	$a \in U $ and $b \in R \setminus U$ minimizing the distance $d(U, R \setminus U)$
	Let $C$ be a component of $\MST(R) \setminus U$ containing $b$.
	Assume that there is no edges between $C$ to $U$ in $\MST(R)$. Then since $\MST(R) \setminus U$ differs from $\MST(R)$ only by edges having an end point in $U$ it follows that $\MST(R)$ is not connected, which is a contradiction. Therefore there exists some $(f,g) \in \MST(R)$ connecting $U$ to $C$. Note that $P = \MST(R) \setminus (f,g) \cup (a,b)$ is a spanning tree of $R$.
	By assumption $d(a,b) \leq d(f,g)$. Denote the sum of edge lengths by $w(\cdot)$, then $w(P) \leq w(\MST(R))$. It follows that $P$ is a minimum spanning tree of $R$.
\end{proof}

\begin{thmm}[Correctness of Algorithm \ref{alg:single_tree_boruvka}]
	\label{thm:dtb_correctness}
	
	Given any compressed cover tree $\T(R)$ on a reference set $R$ Algorithm~\ref{alg:single_tree_boruvka} solves Problem \ref{pro:MST} by finding some Minimum Spanning Tree $\MST(R)$. 
	
	
\end{thmm}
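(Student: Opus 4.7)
The plan is to prove correctness in three stages: first that each Bor\r{u}vka step finds a correct minimum-weight edge leaving every current cluster, second that the greedy addition of these edges preserves an MST invariant, and third that the outer \textbf{while} loop terminates with a single component and therefore outputs a spanning tree (which is then an MST by the invariant).

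\textbf{Step 1 (correctness of one Bor\r{u}vka step).} I would first show that for any cluster $U\in\mathcal{F}$, Algorithm~\ref{alg:single_tree_boruvka_step} returns a pair $(q,p)\in U\times (R\setminus U)$ that achieves the Hausdorff distance $d_H(U,V)$ for some $V\in\NN(U,\mathcal{F})$ in the sense of Definition~\ref{dfn:nn_component}. The main ingredient is Lemma~\ref{lem:cover_tree_knn_correct_approx}, which guarantees that on every iterated level $i$ there is at least one descendant of $R_i$ lying in some nearest neighboring cluster $V\in\NN(U,\mathcal{F})$ at the correct distance. When the loop reaches the minimal level $l_{\min}$, the set $\bigcup_{p\in R_{l_{\min}}}\Sd_{l_{\min}}(p,\T(R))=R_{l_{\min}}$, so scanning $U\times(R_{l_{\min}}\setminus U)$ in line~\ref{line:msts:final_line} is guaranteed to locate an attaining pair. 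The role of line~\ref{line:msts:dfn_C_star} and the precomputed values $\tau_i(p)$ of Definition~\ref{dfn:tau} (produced by Algorithm~\ref{alg:cover_tree_connected_components} in line~\ref{line:stb:precompute_tau}) is purely for efficiency: since $\tau_i(p)=(U,\emptyset)$ exactly when every descendant of $p$ at level $i$ lies in $U$, removing such $p$ from $\C(R_i)$ cannot discard any candidate witness of a cross-cluster edge, and the proof of Lemma~\ref{lem:cover_tree_knn_correct_approx} still applies verbatim to $\C^{*}(R_i)$.

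\textbf{Step 2 (each merge preserves an MST invariant).} I would maintain the following invariant at the start of each iteration of the outer loop (lines~\ref{line:stb:while_start}--\ref{line:stb:while_end}): there exists some minimum spanning tree $P^{*}$ of $(R,d)$ such that every edge currently added to $G$ is an edge of $P^{*}$, and $P^{*}$ restricted to each cluster $U\in\mathcal{F}$ is connected. The invariant holds at initialization since $G$ is empty and $\mathcal{F}$ consists of isolated vertices. For the inductive step, I process the merges of lines~\ref{line:stb:merge_loop:start}--\ref{line:stb:merge_loop:end} one at a time. For each candidate edge $(q,p)=\mathcal{C}(U)$ with $\mathcal{F}(q)\ne\mathcal{F}(p)$, Step~1 ensures $d(q,p)=d_H(U,R\setminus U)$, so Lemma~\ref{lem:mst_contains_loweset_edge} (applied to the MST witnessing the invariant) produces a possibly new MST $P^{**}$ agreeing with $P^{*}$ inside $U$ (hence containing all previously added edges) and containing $(q,p)$. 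We then replace $P^{*}$ by $P^{**}$ and update $\mathcal{F}$ by merging $\mathcal{F}(q)$ with $\mathcal{F}(p)$; connectedness of $P^{**}$ on the enlarged cluster follows by attaching the two previously-connected subtrees via $(q,p)$. A small subtlety to treat carefully is that within a single Bor\r{u}vka round several merges are applied in sequence, so the cluster containing $q$ at the moment its edge is processed may already be larger than the cluster $U$ for which the edge was computed; the argument still goes through because $d(q,p)=d_H(U,R\setminus U)\le d_H(U',R\setminus U')$ for any $U'\supseteq U$ with $p\notin U'$, and the same application of Lemma~\ref{lem:mst_contains_loweset_edge} applies.

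\textbf{Step 3 (termination and conclusion).} Each Bor\r{u}vka round merges every cluster with at least one other cluster, so $|\mathcal{F}|$ at least halves in every iteration of the outer loop; thus after at most $\lceil\log_2|R|\rceil$ rounds the partition becomes the trivial one-block partition and the \textbf{while} loop terminates. The total number of accepted edges equals $|R|-1$ because each non-redundant merge reduces the component count by exactly one, and by the invariant every accepted edge belongs to a common MST $P^{*}$; hence $G$ is a subgraph of $P^{*}$ on $|R|$ vertices with $|R|-1$ edges, so $G=P^{*}$ is a minimum spanning tree of $(R,d)$. The main obstacle I anticipate is the bookkeeping in Step~2 showing that sequential merges within a single round remain consistent with a single global MST $P^{*}$ (the so-called ``cycle property'' issue of Bor\r{u}vka-type algorithms), which typically requires the tie-breaking convention already implicit in the choice of $\mathcal{C}(U)$ to avoid closing a cycle of equal-weight edges; this can be handled by replacing equality ``$\mathcal{F}(p)=U$'' with a total order on clusters, or equivalently by noting that every such tied edge still lies in \emph{some} MST and that Lemma~\ref{lem:mst_contains_loweset_edge} is flexible enough to reselect $P^{*}$ at each sub-merge.
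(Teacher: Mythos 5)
Your proposal follows essentially the same route as the paper's proof: Lemma~\ref{lem:cover_tree_knn_correct_approx} together with the fact that $\bigcup_{p\in R_{l_{\min}}}\Sd_{l_{\min}}(p,\T(R))=R_{l_{\min}}$ shows that each call of Algorithm~\ref{alg:single_tree_boruvka_step} returns a pair attaining $d(U,R\setminus U)$, and then an induction over Bor\r{u}vka rounds applies Lemma~\ref{lem:mst_contains_loweset_edge} repeatedly to keep $G$ inside some minimum spanning tree until $|\mathcal{F}|=1$, which is exactly the paper's argument (your Step~2 is in fact more careful than the paper's one-line ``apply the lemma repetitively''). One caveat: the inequality $d_H(U,R\setminus U)\le d_H(U',R\setminus U')$ is false for an arbitrary superset $U'\supseteq U$ with $p\notin U'$, so to use it you must justify it in the Bor\r{u}vka setting, e.g.\ by noting that every starting cluster $W$ absorbed into the grown component $U'$ was attached along a chain of chosen minimum outgoing edges, which forces $d_H(W,R\setminus W)\ge d_H(U,R\setminus U)$ and hence the claimed bound on every edge leaving $U'$.
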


\begin{proof}
	
	Since $l_{\min}$ is the minimal level of $\T(R)$ we have $\bigcup_{p \in R_{l_{\min}}}\Sd_i(p, \T(R)) = R_{l_{\min}}.$ By Lemma \ref{lem:cover_tree_knn_correct_approx} for any component $U \in \mathcal{F}$ we can find $\beta \in R_{l_{\min}}$ satisfying 
	$d(U, \beta) = d(U, R \setminus U )$.
	Therefore line \ref{line:msts:final_line} of Algorithm \ref{alg:single_tree_boruvka_step} returns correctly a pair $(q,\beta)$ that minimizes distance between $U$ and $R \setminus U$. 
	
	\medskip
	
	Let us now show by induction on Bor\r{u}vka step (lines \ref{line:stb:while_start}-\ref{line:stb:while_end}) that graph $G$ is always a subset of some $\MST(R)$. In base case $G$ is a graph with empty edge set and therefore the claim holds trivially. Assume now that $G$ satisfies the assumption at the beginning of some execution of Bor\r{u}vka step. During every execution of lines \ref{line:stb:merge_loop:start} - \ref{line:stb:merge_loop:end} for every $U \in \mathcal{F}$ 
	we add an edge $(a,b)$ if $\mathcal{F}(a) \neq \mathcal{F}(b)$ , where 
	$a \in U$ and $b \in R \setminus U$ minimizing distance $d(U , R \setminus U)$. By using Lemma \ref{lem:mst_contains_loweset_edge} repetitively we obtain that the resulting graph $G$ is a minimum spanning tree of $R$ at the end of each Bor\r{u}vka step. 
	Since the algorithm is terminated when $|\mathcal{F}| = 1$, using the induction step we conclude that the final output $G$ of the Algorithm \ref{alg:single_tree_boruvka} is a minimum spanning tree of $R$. 
	
\end{proof}

\noindent
Recall that the set of essential levels $\mathcal{E}(p,\T(R))$ was introduced in Definition \ref{dfn:essential_levels_node}.
Lemma \ref{lem:cover_tree_connecetd_components_precompute} shows that for any compressed cover tree $\T(R)$ and any partition $\mathcal{F}$ of $R$ it is possible to precompute $\tau_i(p,\T(R), F)$ for all $p \in \T(R)$ and $i \in \mathcal{E}(p,\T(R))$ in $O(|R|)$ time. 

\begin{lem}
	\label{lem:cover_tree_connecetd_components_precompute}
	Let $R$ be a finite subset of a metric space.
	Let $\T(R)$ be a compressed cover tree on $R$. 
	Let $F$ be an arbitrary partition of set $R$.
	Recall that $\mathcal{E}(\T(R), p)$ was introduced in Definition \ref{dfn:essential_levels_node}.
	Then Algorithm~\ref{alg:cover_tree_connected_components} launched for $(r,l_{\max}(\T(R)))$ finds $\tau(p,i,\T(R), F)$ of Definition \ref{dfn:tau} for all $p \in \T(R)$ and $i \in \mathcal{E}(p,\T(R))$ in $O(|R| \cdot \alpha(|R|))$. 
	\bs
\end{lem}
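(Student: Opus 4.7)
The plan is to establish correctness by structural induction on the compressed cover tree and then bound the running time by a direct amortized count that uses Lemma~\ref{lem:number_of_explicit_levels} together with the standard union--find complexity.

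First I would verify correctness. Note that Algorithm~\ref{alg:cover_tree_connected_components} is launched once with parameters $(l_{\max}(\T(R)), r)$, and by construction every recursive call has parameters $(j,a)$ where $j = 1 + \nxt(a, i-1, \T(R))$ for some ancestor call at level $i$ on the parent of $a$, so the pairs $(p,i)$ on which $\tau_i(p)$ is written are precisely $\{(p,i) \mid p \in R,\ i \in \Es(p,\T(R))\}$. I would then show by induction going up from the leaves that the value assigned to $\tau_i(p)$ agrees with Definition~\ref{dfn:tau}. The base case corresponds to $i = l_{\min}(\T(R))$, where $\Sd_i(p,\T(R)) = \{p\}$ and the algorithm returns $(\mathcal{F}(p), \emptyset)$ as required. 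For the inductive step, observe that by Lemma~\ref{lem:distinctive_descendant_child_level} we have the decomposition
\[
\Sd_i(p,\T(R)) = \{p\} \cup \bigcup_{a \in \Child(p),\, l(a) = j}\Sd_{j}(a,\T(R)),
\]
where $j = \nxt(p,i-1,\T(R)) + 1$ is exactly the index passed in the recursive calls. Hence $\Sd_i(p,\T(R)) \subseteq \mathcal{F}(p)$ if and only if for every $a$ in the iterated set the recursive call returns $(\mathcal{F}(p), \emptyset)$; otherwise the first child whose recursion returns a pair $(\emptyset, b)$ (or a cluster different from $\mathcal{F}(p)$) yields a witness $b \in \Sd_i(p,\T(R)) \setminus \mathcal{F}(p)$, matching the definition.

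Next I would bound the running time. Each pair $(p,i)$ with $i \in \Es(p,\T(R))$ triggers exactly one invocation of FindClusters, and by Lemma~\ref{lem:number_of_explicit_levels} the total number of such invocations is at most $2|R|$. Outside of the cluster comparison, the body of each invocation does $O(1)$ work (retrieving $\Child(p)$ restricted to level $i-1$ and updating $\nxt$ pointers in constant time per child, as discussed in Definition~\ref{dfn:implementation_compressed_cover_tree}). Summing the children across all invocations telescopes to $O(|R|)$ as in the argument of Lemma~\ref{lem:distinctive_descendants_precompute}. The only non-constant cost per call is the test $U \neq V$ in line~$9$, which compares the cluster label returned by a child with $\mathcal{F}(p)$. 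Assuming $\mathcal{F}$ is maintained by a disjoint-set (union--find) data structure with union by rank and path compression, any such comparison reduces to two Find operations and therefore takes $O(\alpha(|R|))$ amortized time. Combining these observations yields total time $O(|R|\cdot \alpha(|R|))$, as claimed.

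The main obstacle I anticipate is not the correctness argument, which is essentially a bookkeeping induction, but the careful accounting that shows the sum over nodes and essential levels of the children enumerated in line~$5$ is $O(|R|)$ rather than $O(|R|^2)$. This is what makes it necessary to store the children of each node in the hash-map structure of Definition~\ref{dfn:implementation_compressed_cover_tree}: without this, even listing $\Child(p,j)$ for the correct $j$ could become a bottleneck. Once this indexing is in place, Lemma~\ref{lem:number_of_explicit_levels} delivers the $O(|R|)$ bound on recursion invocations, and the $\alpha(|R|)$ factor is the price of performing cluster-equality tests through a union--find structure whose state is compatible with the partition $\mathcal{F}$ produced in lines \ref{line:stb:merge_loop:start}--\ref{line:stb:merge_loop:end} of Algorithm~\ref{alg:single_tree_boruvka}.
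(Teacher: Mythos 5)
Your time-complexity argument is essentially the paper's own proof: it likewise bounds the number of FindClusters invocations by $2|R|$ via Lemma~\ref{lem:number_of_explicit_levels} and charges $O(\alpha(|R|))$ per invocation for the union--find lookup of the cluster of $p$, giving $O(|R|\cdot\alpha(|R|))$. The correctness induction you add goes beyond the paper's proof (which only argues the running time) and is sound in outline, though your displayed decomposition of $\Sd_i(p,\T(R))$ is slightly imprecise: the recursion splits off the children of $p$ at level $i-1$ together with their full descendant sets and additionally recurses on $p$ itself at the level $j=1+\nxt(p,i-1,\T(R))$, rather than taking a union of $\Sd_j$ over children at level $j$; this does not affect the overall argument.
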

\begin{proof}
	Let us prove the time complexity claim of Algorithm~\ref{alg:cover_tree_connected_components}. By Lemma \ref{lem:number_of_explicit_levels} we have $|\sum_{p \in R} \mathcal{E}(\T(R), p)| \leq 2 \cdot |R|$. Since
	FindComponents() is launched once for every $(p,i)$, where $p \in Q$ and $i \in \mathcal{E}(\T(R), p)|.$ The total number of iterations is bounded by $2 \cdot |R|$. In every iteration we search for a component of $p$.
	This is done using the union-find data structure that performs the operation in $O(\alpha(|R|))$. 
	Therefore it follows that the total time complexity is: $O(|R| \cdot \alpha(|R|)) . $
\end{proof}

\begin{dfn}
	\label{dfn:edge_lengths}
	Let $\MST(R)$ be an arbitrary minimum spanning tree of a finite metric space $(R,d)$. Define $d_{\min}(R)$ to be the minimal edge length of $\MST(R)$ and  $d_{\max}(R)$ to be the maximal edge length of $\MST(R)$. 
	Define $\rho(R) = 17 + 8 \cdot \frac{d_{\max}(R)}{d_{\min}(R)}$.
	\bs
\end{dfn}
\noindent
Note that $d_{\min}$ was originally defined in Definition \ref{dfn:radius+d_min}.
Edge lengths $d_{\max}(R)$, $d_{\min}(R)$ do not depend on the choice of minimum spanning tree $\MST(R)$ and are therefore well-defined.
Let $\al(|R|)$ be the inverse Ackermann function \cite{seidel2006understanding}, which grows very slowly e.g. $\al(10^{80})\leq 4$.


\begin{lem}[Run-time of Algorithm \ref{alg:single_tree_boruvka_step}]
	\label{lem:single_cover_tree_innerf_time}
	Let $R$ be a finite reference set in a metric space $(X,d)$.
	Given a compressed cover tree $\T(R)$ and any cluster $U \subseteq R$, 
	the run-time of Algorithm \ref{alg:single_tree_boruvka_step} is $$O(c_m(R)^{4+\ceil{\log_2(\rho(R))}} \cdot |H(\T(R))| \cdot |U| \cdot \alpha(|R|)),$$ 
	where all parameters were introduced in Definitions \ref{dfn:expansion_constant}, \ref{dfn:depth}, \ref{dfn:edge_lengths}. 
\end{lem}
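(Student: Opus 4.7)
The plan is to mimic the structure of the complexity proof for Algorithm~\ref{alg:cover_tree_k-nearest} (Lemma~\ref{lem:knn:time} and Theorem~\ref{thm:cover_tree_knn_general_time}): bound the number of iterations of the outer \textbf{while}-loop, bound the per-iteration cost in terms of $|R_i|$ and $|U|$, and then bound $|R_i|$ by an application of the packing Lemma~\ref{lem:packing}. The outer loop descends through distinct levels of $\T(R)$, and by construction of the update $j \leftarrow \max_{a \in R_{i-1}}\nxt(a,i-1,\T(R))$ in line~\ref{line:msts:dfnindexj}, consecutive values of $i$ are drawn from the height set $H(\T(R))$ of Definition~\ref{dfn:depth}. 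Hence the outer loop runs at most $|H(\T(R))|$ times.

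For the per-iteration cost, line~\ref{line:msts:dfn_C} enumerates all children of nodes in $R_i$ at level $i-1$; by the width Lemma~\ref{lem:compressed_cover_tree_width_bound} this yields $|\C(R_i)| \leq c_m(R)^4 \cdot |R_i|$. Line~\ref{line:msts:dfn_C_star} filters $\C(R_i)$ by querying the precomputed value $\tau_i(p)$; each query reduces to a union--find \textsc{find} call (tracking which points lie in $U$) costing $O(\alpha(|R|))$. The nested loop in lines~\ref{line:msts:innerfor:begin}--\ref{line:msts:innerfor:end} and line~\ref{line:msts:dfnRi} together compute $d(q,p)$ for each pair $(q,p) \in U \times \C^*(R_i)$, giving $O(|U| \cdot |\C^*(R_i)|) = O(|U| \cdot c_m(R)^4 \cdot |R_i|)$ distance evaluations. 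Thus one iteration costs $O\bigl(|U| \cdot c_m(R)^4 \cdot |R_i| \cdot \alpha(|R|)\bigr)$.

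The crucial step is bounding $|R_i|$ by $c_m(R)^{\ceil{\log_2 \rho(R)}}$. By line~\ref{line:msts:dfnRi}, every $p \in R_i$ satisfies $d(U,p) \leq l + 2^{i+1}$ where $l$ is the value computed in the previous iteration. Inspecting the definitions of $l_1,l_2$, one shows that $l \leq d(U, R \setminus U) + 2^{i+2}$, because for any $p \in \C^*(R_i)$ with $\mathcal{F}(p) \neq U$ the token $\tau$ guarantees a descendant of $p$ outside $U$, and that descendant is within $2^{i+1}$ of $p$ by Lemma~\ref{lem:distinctive_descendant_distance}. During any Bor\r{u}vka step of Algorithm~\ref{alg:single_tree_boruvka}, every cluster $U \in \mathcal{F}$ contracts along edges of $\MST(R)$, so the Hausdorff distance to its nearest other cluster satisfies $d(U, R\setminus U) \leq d_{\max}(R)$. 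Fix any $u \in U$; then $R_i$ lies in the union of balls of radius $l + 2^{i+1} \leq d_{\max}(R) + 2^{i+3}$ around points of $U$. Applying Lemma~\ref{lem:packing} to the $2^i$-sparse set $R_i \subseteq C_i$ centered at each $u \in U$ with $t = d_{\max}(R) + 2^{i+3}$ and $\delta = 2^i$ gives $|R_i| \leq c_m(R)^{\ceil{\log_2(4t/\delta + 1)}}$. For $2^i \geq d_{\min}(R)$ we have $4t/\delta + 1 \leq 8 d_{\max}(R)/d_{\min}(R) + 33 \leq \rho(R)$, and for the remaining levels $2^i < d_{\min}(R)$ the sparseness of $C_i$ itself forces $|R_i|$ into the same envelope because $4t/\delta + 1 \leq 4 d_{\max}(R)/2^i + 33$ still fits under $\rho(R)$ uniformly on the $O(1)$ relevant small levels preceding the algorithm's halt.

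The hard part is rigorously ratifying the inequality $l \leq d(U,R\setminus U) + 2^{i+2}$, because $l$ is updated over the full inner loop and the $\tau$-filtered set $\C^*(R_i)$ needs to be shown to always contain a witness of the true nearest neighboring component --- this is exactly the analogue of Lemma~\ref{lem:cover_tree_knn_correct_approx} for the $\MST$ setting and is proved by the same invariant (induction on $i$, using Lemma~\ref{lem:child_set_equivalence} to propagate descendants across levels). Once this invariant is in hand, combining the three ingredients above, multiplying $|H(\T(R))|$ iterations by $O(|U| \cdot c_m(R)^4 \cdot |R_i| \cdot \alpha(|R|))$ and substituting the packing bound on $|R_i|$, yields the advertised complexity $O\bigl(c_m(R)^{4 + \ceil{\log_2 \rho(R)}} \cdot |H(\T(R))| \cdot |U| \cdot \alpha(|R|)\bigr)$.
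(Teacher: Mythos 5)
Your high-level route is the same as the paper's proof: bound the number of while-loop iterations by $|H(\T(R))|$, bound each iteration by $O\big(c_m(R)^4\cdot\max_i|R_i|\cdot|U|\cdot\alpha(|R|)\big)$ via the width Lemma~\ref{lem:compressed_cover_tree_width_bound}, and then cap $\max_i|R_i|$ by the packing Lemma~\ref{lem:packing} through $\rho(R)$. The problem is that your decisive packing step does not land under $\rho(R)$ as written. With $\delta=2^i$ and $t=d_{\max}(R)+2^{i+3}$ you get $4t/\delta+1=4\,d_{\max}(R)/2^i+33$, which for $2^i\geq d_{\min}(R)$ is only bounded by $4\,d_{\max}(R)/d_{\min}(R)+33$; your claimed chain ``$\leq 8\,d_{\max}(R)/d_{\min}(R)+33\leq\rho(R)$'' is false, since $\rho(R)=17+8\,d_{\max}(R)/d_{\min}(R)<8\,d_{\max}(R)/d_{\min}(R)+33$. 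So your argument only yields $|R_i|\leq c_m(R)^{\ceil{\log_2(4 d_{\max}(R)/d_{\min}(R)+33)}}$, which can exceed $c_m(R)^{\ceil{\log_2(\rho(R))}}$ by a full factor of $c_m(R)$ (e.g.\ $d_{\max}=d_{\min}$ gives $37$ versus $25$); since $c_m(R)$ is a parameter rather than an absolute constant, this is not absorbed by the $O(\cdot)$, and the advertised exponent $4+\ceil{\log_2(\rho(R))}$ is not reached. The paper closes exactly this step differently: it uses the sharper bound $l\leq d_{\max}(R)+2^i$, applies Lemma~\ref{lem:packing} with $\delta=2^{i-1}$ and $t=d_{\max}(R)+2^{i+1}$, and notes that every iteration has $i-1\geq l_{\min}$ with $2^{l_{\min}+1}\geq d_{\min}(R)$, so that $\mu=4t/\delta+1\leq 17+8\,d_{\max}(R)/d_{\min}(R)=\rho(R)$ exactly — $\rho(R)$ in Definition~\ref{dfn:edge_lengths} was chosen with precisely these constants. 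This also removes your vague special case for ``levels with $2^i<d_{\min}(R)$'': such levels do not exist in $\T(R)$, so no separate argument about ``$O(1)$ small levels'' is needed (and the one you sketch would again overshoot $\rho(R)$).

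A second point needing care: line~\ref{line:msts:dfnRi} filters by $d(U,p)$, the distance to the whole cluster, so your containment places $R_{i-1}$ inside a \emph{union} of balls centered at all points $u\in U$; invoking Lemma~\ref{lem:packing} ``centered at each $u\in U$'' then only gives $|R_{i-1}|\leq |U|\cdot c_m(R)^{\mu}$, an unwanted extra $|U|$ factor, unless you first reduce to a single center (fix one point realizing the relevant distances and enlarge the radius by an absolute constant, which is how the bound is meant to be read). The remaining ingredients of your proposal — the iteration count $|H(\T(R))|$, the $c_m(R)^4$ width bound, the $O(|U|\cdot|\C^{*}(R_i)|)$ inner loop, the $\alpha(|R|)$ from union–find, and the appeal to the analogue of Lemma~\ref{lem:cover_tree_knn_correct_approx} guaranteeing that $\C^{*}(R_i)$ retains a witness of the nearest neighboring component — coincide with the paper's argument.
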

\begin{proof}
	
	In while loop (lines \ref{line:msts:loop_begin}-\ref{line:msts:loop_end}) the algorithms encounters all levels $H(\T(R))$.
	Therefore the number of iterations is bounded by the height $|H(\T(R))|$. The total number of nodes in line \ref{line:msts:dfn_C} during a single iteration by Lemma \ref{lem:compressed_cover_tree_width_bound} is at most $(c_m(R))^4 \cdot \max_i|R_i|$. Since $\tau_i$ of Definition \ref{dfn:tau} was precomputed in Algorithm \ref{alg:cover_tree_connected_components}, the time complexity of line \ref{line:msts:dfn_C_star} is the same as line \ref{line:msts:dfn_C}. Lines \ref{line:msts:innerfor:begin} - \ref{line:msts:innerfor:end} takes at most $O(|U| \cdot |\C^{*}(R)|) \leq O(|U| \cdot (c_m(R))^4 \cdot \max_i|R_i|)$ time. Line \ref{line:msts:dfnRi} never does more work that line \ref{line:msts:dfn_C}. In line \ref{line:msts:dfnindexj} we keep track of $\nxt(a,i,\T(R))$, therefore its time complexity is the same as line \ref{line:msts:dfn_C}. It follows that the time complexity of lines \ref{line:msts:loop_begin} - \ref{line:msts:loop_end} is bounded by 
	$O(|U| \cdot (c_m(R))^4 \cdot \max_i|R_i|)$. 
	Line \ref{line:msts:final_line} takes at most $\max_i|R_i| \cdot \alpha(|R|)$ time.
	Consequently, the running time of the whole algorithm is bounded by:
	\begin{ceqn}
		\begin{equation} 
			\label{eqa:stb:one} 
			\centering
			O((c_m(R))^4 \cdot |U| \cdot \max_i|R_i| \cdot D(\T(R)) \cdot \alpha(|R|).
		\end{equation}
	\end{ceqn}
	To finish the proof we will show that $\max_i|R_i| \leq c_m(R)^{\ceil{\log_2(\rho(R))}}$. Since $l \leq d_{\max}(R) + 2^{i}$, 
	by line \ref{line:msts:dfnRi} of Algorithm \ref{alg:single_tree_boruvka_step} we have the following inclusion of subsets: 
	$$R_{i-1} = \{r \in \C^{*}(R_i) \mid d(q,r) \leq l + 2^{i}\} \subseteq \{r \in \C^{*}(R_i) \mid d(q,r) \leq d_{\max}(R) + 2^{i+1}\}$$
	Note that during any iteration we have $i-1 \geq l_{\min}(R)$ and $2^{l_{\min}(R)+1} \geq d_{\min}(R)$. It follows that 
	$2^{i-1} \geq 2^{l_{\min}(R)} \geq \frac{d_{\min}(R)}{2}$.
	Therefore $\frac{1}{2^{i-1}} \leq \frac{2}{d_{\min}(R)}$.
	By Lemma \ref{lem:packing} with $\delta = 2^{i-1}$ and $t = d_{\max}(R) + 2^{i+1}$ we have $$\mu = 4\frac{t}{\delta} + 1 = 4 \cdot \frac{2^{i+1} + d_{\max}(R)}{2^{i-1}} + 1 \leq 17 + 8 \cdot \frac{d_{\max}(R)}{d_{\min}(R)} = \rho(R).$$
	It follows that $\max_i|R_i| \leq c_m(R)^{\ceil{\log_2(\mu)})} \leq c_m(R)^{\ceil{\log_2(\rho(R)})}$. The  by substituting $\max_i|R_i|$ with $c_m(R)^{\ceil{\rho(R)}}$ in (\ref{eqa:stb:one}) we obtain the final bound.
\end{proof}

\noindent
Lemma \ref{lem:num_iterations} is used to show that Algorithm \ref{alg:single_tree_boruvka} performs at most $O(\log(|R|)$ bor\r{u}vka steps, lines \ref{line:stb:inner_loop:begin}- \ref{line:stb:inner_loop:end}. This lemma was skipped in the original work \cite{march2010fast}. 

\begin{lem}[Bor\r{u}vka step halves the number of components]
	\label{lem:num_iterations}
	Let $A$ be a finite set and let $f: A \rightarrow A$ be a function satisfying $f(x) \neq x$ for all $x \in A$. 
	For any $k \in \mathbb{N}$ define $f^{k}$ to be $k$-compositions of function $f$.
	Define relation $\backsim$ on $A$, in such a way that for any $y,z  \in A$ we have $y \backsim z$, if there exists $k \in \mathbb{N}$ having $f^{k}(y) = z$ or $f^{k}(z) = y$. Then $|A / \backsim_f| \leq \frac{|A|}{2}$.  
\end{lem}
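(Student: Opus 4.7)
The plan is to argue via the functional graph of $f$: form the directed graph on vertex set $A$ with an edge $x \to f(x)$ for every $x\in A$, and observe that the equivalence classes of $\backsim_f$ are exactly the connected components of the underlying undirected graph. The key structural input is that no vertex has a self-loop, because $f(x)\neq x$ by assumption.

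First I would verify that $\backsim_f$ is indeed an equivalence relation on $A$. Reflexivity comes from taking $k=0$ (so $f^{0}(x)=x$). Symmetry is built into the definition. For transitivity, one checks the three shapes of chains $x\to\cdots\to y\to\cdots\to z$: if $f^{k}(x)=y$ and $f^{m}(y)=z$, compose to get $f^{k+m}(x)=z$; if $f^{k}(y)=x$ and $f^{m}(z)=y$, compose to get $f^{k+m}(z)=x$; and if $f^{k}(y)=x$ and $f^{m}(y)=z$, then (taking WLOG $k\leq m$) we have $f^{m-k}(x)=f^{m-k}(f^{k}(y))=f^{m}(y)=z$. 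In the one remaining case the two orbits meet before splitting, which contradicts $f$ being a function; alternatively, one may simply take $\backsim_f$ to be the equivalence relation generated by the stated symmetric relation, which amounts to weak connectivity in the functional graph.

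Next, the central observation: for every $x\in A$ we have $f(x)\backsim_f x$ because $f^{1}(x)=f(x)$. Hence $x$ and $f(x)$ lie in the same equivalence class $[x]$. Since by hypothesis $f(x)\neq x$, the class $[x]$ contains at least the two distinct elements $x$ and $f(x)$, so $|[x]|\geq 2$ for every $x\in A$.

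Finally, since the equivalence classes partition $A$ into disjoint nonempty blocks each of size at least $2$, we conclude
\[
|A| \;=\; \sum_{[x]\in A/\backsim_f} |[x]| \;\geq\; 2\,|A/\backsim_f|,
\]
which rearranges to $|A/\backsim_f|\leq |A|/2$. The only genuinely delicate point is the transitivity check above; once that is in place, the result is immediate from the no-fixed-point assumption, so this step is where I would be most careful to write out cases explicitly.
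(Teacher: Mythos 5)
Your proof is correct, and it takes a genuinely different and substantially simpler route than the paper's. The paper proceeds by induction on $|A|$: it treats base cases $|A|=2,3$, removes a point $a_{n+1}$ (and possibly $f(a_{n+1})$), replaces $f$ by an auxiliary function $g$ equal to $f$ or $f^2$, splits into the cases $f^{2}(a_{n+1})=a_{n+1}$ and $f^{2}(a_{n+1})\neq a_{n+1}$, and compares the quotients by $\backsim_f$ and $\backsim_g$. Your argument bypasses all of this: every class of $\backsim_f$ is closed under $f$, so by the no-fixed-point hypothesis it contains the two distinct elements $x$ and $f(x)$, and summing $|[x]|\geq 2$ over the partition gives $|A|\geq 2\,|A/\backsim_f|$ directly. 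This is shorter, needs no case analysis, and makes transparent exactly where $f(x)\neq x$ is used. One caveat on your transitivity discussion: the sub-claim that in the remaining case ``the two orbits meet before splitting, which contradicts $f$ being a function'' is not correct --- the relation as literally stated is genuinely not transitive (take $f(x)=y=f(z)$ where neither $x$ nor $z$ lies on the other's forward orbit), so the right reading is the one you offer as the alternative, namely the equivalence relation generated by the stated relation, i.e.\ weak connectivity of the functional graph. This is also what the paper implicitly uses when it identifies $\backsim_f$ with the merging of components in a Bor\r{u}vka step, and your counting argument applies verbatim to that generated relation, so the caveat does not affect the validity of your proof.
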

\begin{proof}
	Assume that $A = \{a_1,...,a_n\}$
	Let us prove this lemma by induction. Base case $|A| = 2$ is trivial, since $f(a_1) = a_2$ and $f(a_2) = a_1$ is forced. Therefore $a_1 \backsim a_2$ and $|A / \backsim_f| = 1 \leq \frac{|A|}{2}$. For base case $|A| = 3$ consider all possible functions $f$. Since all the cases are symmetrical lets us consider only one variation $f(a_1) = a_2$, $f(a_2) = a_3$ and $f(a_3) = a_2$. By definition we have $a_1 \backsim a_2 \backsim a_3$ and therefore $|A / \backsim_f| = 1 \leq \frac{3}{2} = \frac{A}{2}$.
	
	\medskip
	
	\noindent
	Let us now prove the induction step. Assume that the claim holds for all $|A| \leq n$ and let us prove that it holds for $n+1$. Let $|A| = n+1$ and let $f: A \rightarrow A$ be a function having $f(x) \neq x$ for all $x \in A$. Assume first that $f^{2}(a_{n+1}) = a_{n+1}$. 
	Denote $B = A \setminus \{a_{n+1}, f(a_{n+1})\}$. 
	Define new function $g(x) = f(x)$ if $f(x) \neq a_{n+1}$ and $g(x) = f^{2}(x)$.
	Since $|B| \leq n-1 \leq n$, we can use induction assumption to obtain $|B / \backsim_g| \leq \frac{|B|}{2} \leq \frac{n-1}{2}$. 
	
	\medskip
	
	\noindent
	Let us show that if $u,v \in B$ are such that $u \backsim_f v$, then $u \backsim_g v$. Without loss of generality assume that $u = f^n(v)$ for some $n \in N$. By definition of $g$ for any $u$ we have $v = f^{n}(u) = g_{m}(u)$ for some $m \leq n$. Therefore we have
	$u \backsim_g v$. Since $f^{2}(a_{n+1}) = a_{n+1}$ pair $\{a_{n+1}, f(a_{n+1})\}$ is a component in $A / \backsim_f$. We conclude that
	$A / \backsim_f = \{a_{n+1}, f(a_{n+1})\}  \cup B \backsim_f \subseteq \{a_{n+1}, f(a_{n+1})\}  \cup B \backsim_g$.
	Therefore it follows that:
	$$|A / \backsim_f | \leq |B  / \backsim_g| + 1 \leq  \frac{n-1}{2} + 1 \leq \frac{n+1}{2} \leq \frac{|A|}{2}.$$
	Assume then that $f^{2}(a_{n+1}) \neq a_{n+1}$. Define similarly $g(x) = f(x)$ if $f(x) \neq a_{n+1}$ and $g(x) = f^{2}(x)$ else.
	Let $B = A \setminus \{a_{n+1}\}$. Since $|B| \leq n$, by induction assumption we have $|B / \backsim_g| \leq \frac{|B|}{2} \leq \frac{n}{2}$.
	Let us show that  $ |A / \backsim_f| \leq |B / \backsim_g|$. Let $u, v \in B$ be such that $u \backsim_f v$. Without loss of generality assume that $u = f^{n}(v)$. By definition of $g$ we have $u = f^{n}(v) = g^{m}(u)$ for some $m \leq n$. Therefore $u \backsim_g v$, which shows that $ |A / \backsim_f| \leq |B / \backsim_g|$.
	The claim follows by noting that:
	$$|A / \backsim_f | \leq |B  / \backsim_g| \leq  \frac{n}{2} \leq \frac{n+1}{2} \leq \frac{|A|}{2}. $$
\end{proof}

\begin{thmm}[complexity for exact single tree MST search]
	\label{thm:single_cover_tree_mst_time}
	Let $R$ be a finite reference set in a metric space $(X,d)$.
	Given a compressed cover tree $\T(R)$ Algorithm~\ref{alg:single_tree_boruvka} find $\MST(R)$ in time: 
	$$O\Big ((c_m(R))^{4+ \ceil{\log_2(\rho(R))}} \cdot |H(\T(R))| \cdot |R| \cdot \log_2(|R|) \cdot \alpha(|R|)\Big ),$$ 
	where all parameters were introduced in Definitions \ref{dfn:expansion_constant}, \ref{dfn:depth}, \ref{dfn:edge_lengths}. 
	\bs
\end{thmm}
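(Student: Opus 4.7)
The plan is to bound the total running time of Algorithm~\ref{alg:single_tree_boruvka} by breaking it into two independent factors: the cost of a single Bor\r{u}vka pass (the \textbf{while} body of lines~\ref{line:stb:while_start}-\ref{line:stb:while_end}) and the total number of such passes. The per-pass cost then follows from Lemma~\ref{lem:single_cover_tree_innerf_time} after summing over all current clusters, and the number of passes is bounded via Lemma~\ref{lem:num_iterations} by a structural argument about Bor\r{u}vka-style merging.

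First I would analyze the cost of one iteration of the outer \textbf{while} loop. The precomputation on line~\ref{line:stb:precompute_tau} of the function $\tau$ costs $O(|R|\cdot\alpha(|R|))$ by Lemma~\ref{lem:cover_tree_connecetd_components_precompute}. Inside the \textbf{for} loop of lines~\ref{line:stb:inner_loop:begin}-\ref{line:stb:inner_loop:end}, Algorithm~\ref{alg:single_tree_boruvka_step} is launched once per cluster $U\in\mathcal{F}$ and by Lemma~\ref{lem:single_cover_tree_innerf_time} runs in time $O\!\bigl((c_m(R))^{4+\ceil{\log_2(\rho(R))}}\cdot|H(\T(R))|\cdot|U|\cdot\alpha(|R|)\bigr)$. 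Since the clusters $U\in\mathcal{F}$ partition $R$, the sum $\sum_{U\in\mathcal{F}}|U|=|R|$, so the total cost of lines~\ref{line:stb:inner_loop:begin}-\ref{line:stb:inner_loop:end} is
$$O\!\Big((c_m(R))^{4+\ceil{\log_2(\rho(R))}}\cdot|H(\T(R))|\cdot|R|\cdot\alpha(|R|)\Big).$$
The merging loop on lines~\ref{line:stb:merge_loop:start}-\ref{line:stb:merge_loop:end} performs at most $|\mathcal{F}|\leq|R|$ union-find operations at cost $O(\alpha(|R|))$ each, which is absorbed by the above expression. Hence one Bor\r{u}vka pass has the same asymptotic cost.

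Second I would bound the number of Bor\r{u}vka passes by $O(\log_2|R|)$. Given the current partition $\mathcal{F}$, define $f:\mathcal{F}\to\mathcal{F}$ by sending each $U\in\mathcal{F}$ to the cluster $\mathcal{F}(p)$ of the point $p$ returned in the pair $(q,p)=\mathcal{C}(U)$ by Algorithm~\ref{alg:single_tree_boruvka_step}; by Lemma~\ref{lem:cover_tree_knn_correct_approx} this pair witnesses a truly nearest component, so $f(U)\neq U$. The union operations of lines~\ref{line:stb:merge_loop:start}-\ref{line:stb:merge_loop:end} merge any two clusters lying in the same orbit of the relation $\backsim_f$ from Lemma~\ref{lem:num_iterations}, so the updated partition has at most $|\mathcal{F}/\!\backsim_f|\leq|\mathcal{F}|/2$ clusters. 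Starting from $|\mathcal{F}|=|R|$ isolated points and halving each pass until $|\mathcal{F}|=1$, we therefore perform at most $\lceil\log_2|R|\rceil$ passes.

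Multiplying the per-pass cost by the number of passes yields the claimed total complexity $O\!\bigl((c_m(R))^{4+\ceil{\log_2(\rho(R))}}\cdot|H(\T(R))|\cdot|R|\cdot\log_2(|R|)\cdot\alpha(|R|)\bigr)$; correctness of the returned $\MST(R)$ has already been established in Theorem~\ref{thm:dtb_correctness}. The only potentially delicate point is the reduction to Lemma~\ref{lem:num_iterations}: one must argue that what Lemma~\ref{lem:num_iterations} bounds (the number of equivalence classes under the orbit relation of a fixed-point-free self-map) is indeed an upper bound on the number of clusters after the \textbf{for} loop on lines~\ref{line:stb:merge_loop:start}-\ref{line:stb:merge_loop:end}, even though those unions are performed sequentially rather than by first forming orbits; this is a straightforward invariant argument showing that every edge $(q,p)$ added to $G$ links $U$ to $f(U)$, so connected components of the union graph refine the $\backsim_f$-classes. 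Everything else is a direct assembly of the previously established lemmas.
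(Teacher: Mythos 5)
Your proof follows essentially the same route as the paper's: the per-pass cost is obtained from Lemma~\ref{lem:single_cover_tree_innerf_time} summed over the partition ($\sum_{U\in\mathcal{F}}|U|=|R|$) together with the $O(|R|\cdot\alpha(|R|))$ precomputation of Lemma~\ref{lem:cover_tree_connecetd_components_precompute}, the number of passes is bounded by $O(\log_2|R|)$ via Lemma~\ref{lem:num_iterations} applied to the same fixed-point-free map $f$, and correctness is delegated to Theorem~\ref{thm:dtb_correctness}, exactly as in the paper. One minor quibble with your closing remark: to get the halving bound you need each $\backsim_f$-class to lie inside a single merged cluster (the classes refine the connected components of the union graph, which holds because every $U$ gets joined to $f(U)$), not the converse refinement direction you stated; both inclusions are in fact immediate here, so the argument stands.
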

\begin{proof}
	
	Note first that during an iteration of while loop lines \ref{line:stb:while_start}-\ref{line:stb:while_end} of Algorithm \ref{alg:single_tree_boruvka} we merge every $U \in \mathcal{F}$ with some cluster $V \in \NN(U, \mathcal{F})$.
	Define function $f : \mathcal{F} \rightarrow \mathcal{F}$ by mapping $f(U) = V$ for all $U \in \mathcal{F}$, where $V$ is some cluster in $\NN(U, \mathcal{F})$. By assumptions we have $f(U) \neq U$ for all $U \in \mathcal{F}$, therefore Lemma \ref{lem:num_iterations} can be applied on $\mathcal{F}$ and function $f$. We note that relation $\backsim_f$ 
	corresponds to the merging operation of lines \ref{line:stb:merge_loop:start} - \ref{line:stb:merge_loop:end}. 
	Let $\mathcal{R}$ be the partition $\mathcal{F}$ after executions of lines lines \ref{line:stb:merge_loop:start} - \ref{line:stb:merge_loop:end}. Then by Lemma \ref{lem:num_iterations}
	$$|\mathcal{R}| = |\mathcal{F} / \backsim_f |\leq \frac{|\mathcal{F}|}{2}$$
	Denote by $k$ the total number of times lines \ref{line:stb:while_start}-\ref{line:stb:while_end} were executed.
	Since at the start of execution of Algorithm~\ref{alg:single_tree_boruvka} we have $|\mathcal{F}| = |R|$ and 
	the algorithm is terminated when $|\mathcal{F}| = 1$, we have $\frac{|R|}{2^{k-1}} \geq 1 $. It follows that $k \leq \log(|R|) + 1$.

	
	\medskip
	
	\noindent
	By Lemma \ref{lem:cover_tree_connecetd_components_precompute} the time complexity of line \ref{line:stb:precompute_tau} is $O(|R| \cdot \alpha(|R|))$. To prove the complexity 
	we need to estimate the runtime of lines \ref{line:stb:inner_loop:begin} - \ref{line:stb:inner_loop:end} that launch Algorithm \ref{alg:single_tree_boruvka_step}. 
	By Lemma \ref{lem:single_cover_tree_innerf_time} time complexity of line 8 is $O(c_m(R)^{4+\ceil{\log_2(\rho(R))}} \cdot |H(\T(R))| \cdot |U| \cdot \alpha(|R|)) $. Since $\mathcal{F}$ is partition of $R$, we have $\sum_{U \in \mathcal{F}}|U| = |R|$. It follows that runtime of lines \ref{line:stb:inner_loop:begin} - \ref{line:stb:inner_loop:end} is $O\Big ((c_m(R))^{4+ \ceil{\log_2(\rho(R))}} \cdot |H(\T(R))| \cdot |R|  \cdot \alpha(|R|)\Big ).$ Note that the time complexity of lines 10-16 is $O( |\mathcal{F}| \cdot \alpha(|\mathcal{F}|) ) \leq O(|R| \cdot \alpha(|R|))$. Therefore the time complexity inside while-loop , lines \ref{line:stb:while_start}-\ref{line:stb:while_end} is dominated by lines \ref{line:stb:inner_loop:begin} - \ref{line:stb:inner_loop:end}. The claim of the theorem follows by recalling that lines \ref{line:stb:while_start}-\ref{line:stb:while_end} were executed $O(\log(|R|)$ times.
	
\end{proof}
\begin{cor}
	\label{cor:single_cover_tree_mst_time}
	Let $R$ be a finite reference set in a metric space $(X,d)$.
	Given a compressed cover tree $\T(R)$ Algorithm~\ref{alg:single_tree_boruvka} find $\MST(R)$ in time: 
	$$O\Big ((c_m(R))^{4+ \ceil{\log_2(\rho(R))}} \cdot \log_2(\Delta(R)) \cdot |R| \cdot \log_2(|R|) \cdot \alpha(|R|)\Big ),$$ 
	where all parameters were introduced in Definitions \ref{dfn:expansion_constant}, \ref{dfn:radius+d_min}, \ref{dfn:edge_lengths}. 
	\bs
\end{cor}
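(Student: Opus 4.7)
The plan is to derive this corollary as an immediate consequence of Theorem~\ref{thm:single_cover_tree_mst_time} by replacing the height factor $|H(\T(R))|$ with the bound $\log_2(\Delta(R))$ coming from Lemma~\ref{lem:depth_bound}. First I would invoke Theorem~\ref{thm:single_cover_tree_mst_time}, which already gives the stated complexity except that the middle factor is $|H(\T(R))|$, the height of the compressed cover tree from Definition~\ref{dfn:depth}, rather than $\log_2(\Delta(R))$.

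Next, I would apply Lemma~\ref{lem:depth_bound}, which states that any finite set $R$ satisfies $|H(\T(R))| \leq 1 + \log_2(\Delta(R))$, where $\Delta(R)$ is the aspect ratio from Definition~\ref{dfn:radius+d_min}. Substituting this upper bound into the complexity expression from Theorem~\ref{thm:single_cover_tree_mst_time}, the additive constant $1$ is absorbed into the big-$O$ notation, yielding the desired bound $O\big((c_m(R))^{4+\ceil{\log_2(\rho(R))}} \cdot \log_2(\Delta(R)) \cdot |R| \cdot \log_2(|R|) \cdot \alpha(|R|)\big)$.

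Since the entire argument consists of combining a prior theorem with a prior lemma, there is no real obstacle here; the proof is essentially a one-line substitution. The only minor care needed is to note that whenever $\Delta(R) \geq 2$ (which holds trivially unless $R$ has a single point, a degenerate case), the $+1$ term in Lemma~\ref{lem:depth_bound} is bounded by $\log_2(\Delta(R))$ itself and thus disappears inside the big-$O$.
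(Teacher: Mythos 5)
Your proposal is correct and matches the paper's proof exactly: the paper also derives the corollary from Theorem~\ref{thm:single_cover_tree_mst_time} by bounding the height $|H(\T(R))|$ via Lemma~\ref{lem:depth_bound}, absorbing the additive constant into the big-$O$. Nothing further is needed.
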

\begin{proof}
	Follows from Theorem \ref{thm:single_cover_tree_mst_time} by noting that $|H(\T(R))| \leq \log_2(\Delta(|R|))$.
\end{proof}



\section{Discussions: past challenges, contributions, and future work}
\label{sec:emst_Conclusions}

The motivations were the past challenges in the proof of time complexities in 
\cite[Theorem~5.1]{march2010fast}, which attempted to solve metric minimum spanning tree problem in  $O(|R| \cdot \log_2(|R|))$ time.
Section \ref{sec:emst_challenge_reference_recursion} and \ref{sec:emst_challenge_reference_linkage_expansion} show that the provided proof was incorrect. In Section \ref{sec:emst_challenge_reference_recursion} Counterexample \ref{cexa:mst} provides a concrete example, where the step in the proof of \cite[Theorem~5.1]{march2010fast} claims incorrect bound for the number of total recursions during a single Bor\r{u}vka. Section \ref{sec:emst_challenge_reference_linkage_expansion} deals with problematic definition of linkage expansion constant $c_l$, which is undefined for small radius $t$. In Counterexample \ref{cexa:mst_component} we provide a concrete example, where the reference set size $|R_i|$ exceeds the claimed upper bound $c_p^2 \cdot c_l^2$, where $c_p$ is cluster expansion constant. This ultimately shows that the proof of \cite[Theorem~5.1]{march2010fast} is incorrect and there is no easy way to fix it. 

\medskip

\noindent
In Section \ref{sec:emst} we use a different approach to prove the time complexity of the construction of the exact minimum spanning tree on any finite metric space. To resolve Problem \ref{pro:knn} we use the idea of Bor\r{u}vka algorithm. However, 
to overcome past challenges the search algorithm is replaced with the modification of a single tree search for $k$-nearest neighbors.
 To avoid using multiple expansion constants, especially the undefined linkage expansion constant $c_l$ in the statement and proof of the main Theorem \ref{thm:single_cover_tree_mst_time} we use only a single, minimized expansion constant $c_m(R)$ with possibly big exponent $O(\rho(R)$, which depends on the ratio of maximal edge length and minimal edge length of $\MST(R)$.
 The main result, Theorem \ref{thm:single_cover_tree_mst_time} gives $$O(c_m(R)^{\log_2(\rho(R))} \cdot |R| \cdot \log_2(|R|) \cdot \log_2(\Delta(R)))$$ time complexity for finding $\MST(R)$, where $\Delta(R)$ is the aspect ratio of $R$.

\medskip

\noindent
 In the future we wish to improve the time complexity to $O(c(R)^{O(1)} \cdot |R| \log(|R|))$ by utilizing $\T(R)$ as both, query and reference tree, where $c(R)$ is expansion constant of $R$. However, for such time complexity estimates it would be required to show that a complete paired-tree traversal of monochromatic pair $(\T(R), \T(R))$ could be done in time $c(R)^{O(1)} \cdot |R|$. Another approach for obtaining such time complexity could be by using Kruskal's algorithm as the base algorithm. Assuming that $\T(R)$ has been already constructed. It would be enough to show that any minimal non-zero sized edge in any Bor\r{u}vka clustering $R / \backsim$ can be found in $O(c(R)^{O(1)}\log(|R|))$ time. This is plausible since to find a non-trivial pair $a \in R$, $b \in R$ that minimizes $d(a,b)$ we can use $\T(R)$ as both, query and reference tree, and unlike in the standard dual-tree traversal, we can eliminate nodes that are far-away, from both: the query and the reference tree.

\chapter{Mergegram, extends 0D persistence} 
\label{ch:mergegram} 

\section{Introduction}
\label{sec:intro}

Real-life objects are often represented by unstructured point clouds obtained by laser range scanning or by selecting salient or feature points in images \cite{pauly2002efficient}.
Point clouds are easy to store and can be used as primitives for visualization \cite{zwicker2002pointshop}.
The arguments above give a strong motivation for comparing and classifying unstructured point clouds.
\medskip

\noindent
Rigid objects are naturally studied up to rigid motion or isometry (including reflections), which is any map that preserves inter-point distances.
The recognition of point clouds with fixed number of points is practically solved by the histogram of all pairwise distances, which is a complete isometry invariant in general position \cite{boutin2004reconstructing}.
\medskip

\noindent
Real shapes are often given in a distorted form because of noisy measurements, when points are perturbed, missed or accidentally added. 
One of the first approaches to recognize nearly identical point clouds $A,B$ of different sizes in the same metric space, for example in $\R^m$, is to use the \emph{Hausdorff} distance Definition \ref{dfn:HausdorffDistance}: $\HD(A,B)=\min\ep\geq 0$ such that the first cloud $A$ is covered by $\ep$-balls centered at all points of $B$ and vice versa.  
\medskip

\noindent
However, we also need to take into account infinitely many potential isometries of the ambient space $\R^m$. 
The exact computation of $\inf_{f} \HD(f(A),B)$ minimized over isometries $f$ of $\R^m$ has a high polynomial complexity already for dimension $m=2$ \cite{chew1997geometric}.
An approximate algorithm is cubic in the number of points for $m=3$ \cite{goodrich1999approximate}.
\medskip

\noindent
This chapter extends the 12-page conference version \cite{elkin2020mergegram}, which introduced the new invariant mergegram but didn't prove its continuity under perturbations. In addition to the proof of continuity, another contribution is 
Theorem~\ref{thm:mergegram-to-dendrogram} showing how to reconstruct a dendrogram of single-linkage clustering from a mergegram in general position. 
\medskip

\noindent
The practical novelty is the harder recognition problem including perturbations of isometries within wider classes of affine and projective maps motivated by Computer Vision applications.
Indeed different positions of cameras produce projectively equivalent images of the same rigid shape. 
The new experiments in section~\ref{sec:experiments_mg} extensively compared several approaches on 15000 clouds obtained from real images, see examples in Fig.~\ref{fig:myth_images}.  
\medskip

\newcommand{\scale}{0.13}
\begin{figure}[h!]
    \centering
    \includegraphics[scale = \scale]{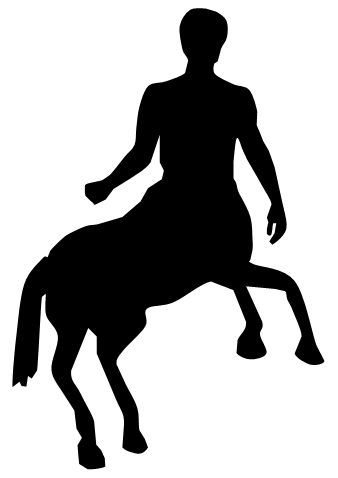}
    \includegraphics[scale = \scale]{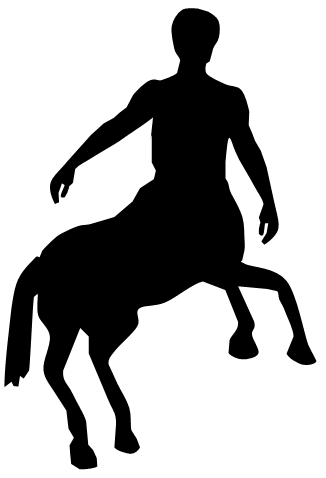}
    \includegraphics[scale = \scale]{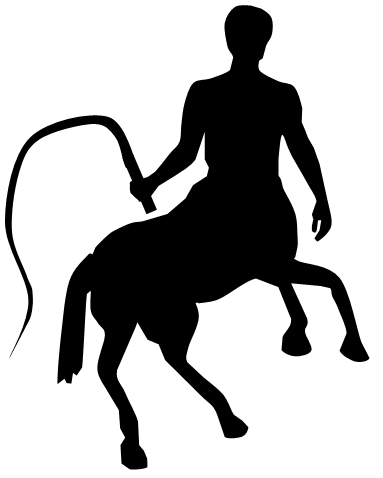}
    \includegraphics[scale = \scale]{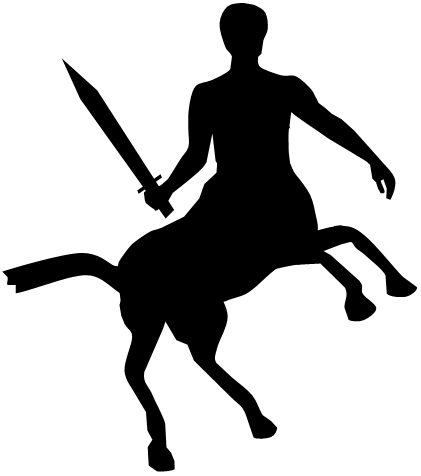}
    \includegraphics[scale = \scale]{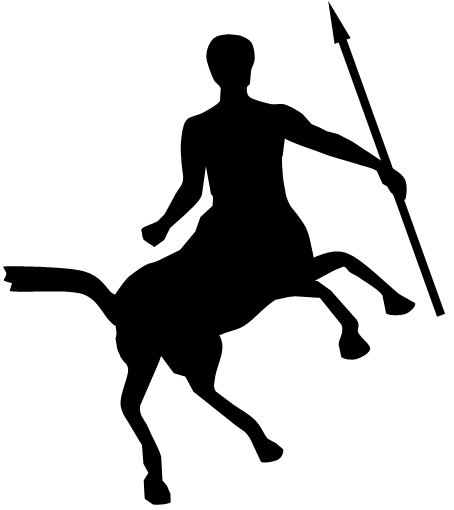}
    \includegraphics[scale = \scale]{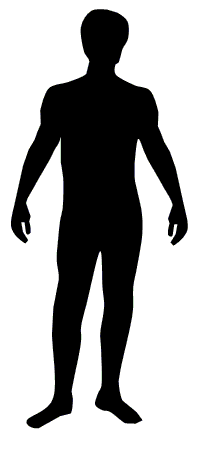}
    \includegraphics[scale = \scale]{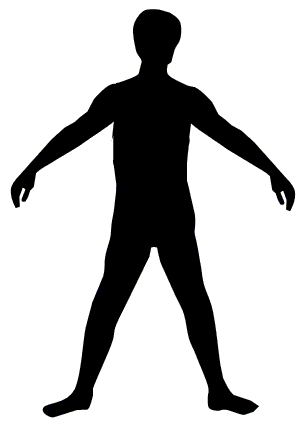}
    \includegraphics[scale = \scale]{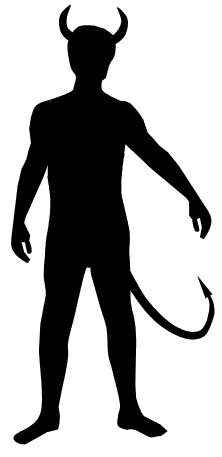}
    \includegraphics[scale = \scale]{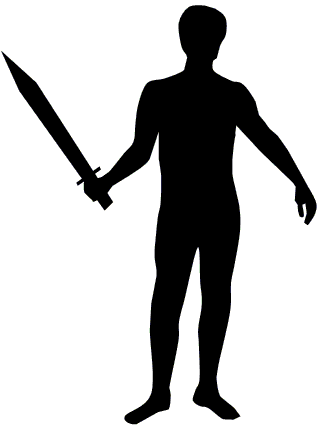}
    \includegraphics[scale = \scale]{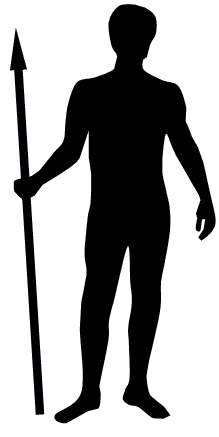}
    \includegraphics[scale = \scale]{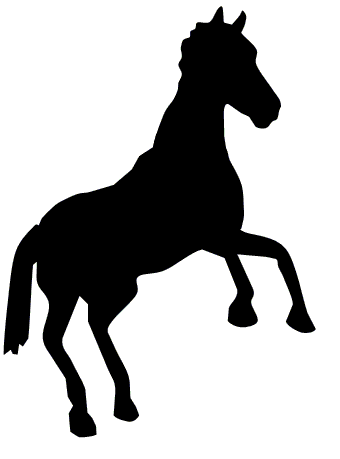}
    \includegraphics[scale = \scale]{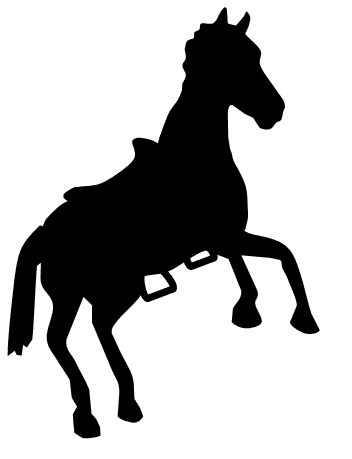}
    \includegraphics[scale = \scale]{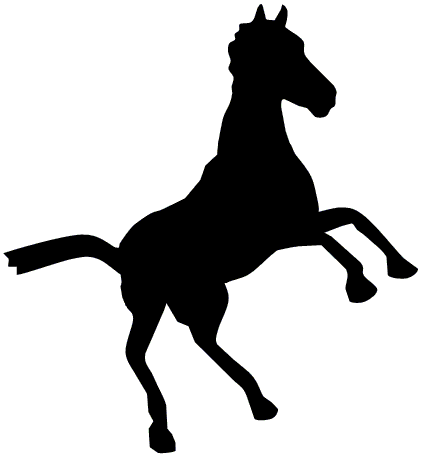}
    \includegraphics[scale = \scale]{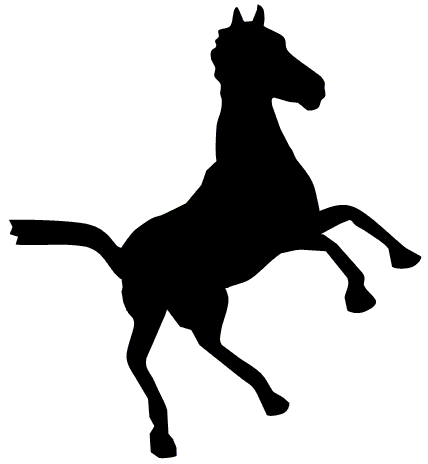}
    \includegraphics[scale = \scale]{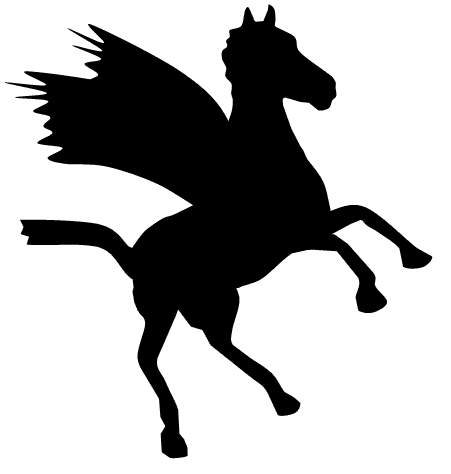}
\caption{Example images from the dataset of mythical creatures, which was introduced in \cite{bronstein2008analysis}.}
    \label{fig:myth_images}
\end{figure}

\begin{prob}[isometry shape recognition under noise]
\label{pro:recognition}
Find an isometry invariant of point clouds in $\R^m$ that is 
\begin{enumerate}[label=(\alph*)]
\item independent of a cloud size
\item provably continuous under perturbations of a cloud
\item computable in a near linear time in the size of a cloud
\item more efficient for recognizing isometry classes of clouds than past invariants.
\end{enumerate}
\bs
\end{prob}

\noindent
The key contributions are proofs of Theorems~\ref{thm:mergegram-to-dendrogram}, \ref{thm:stability_persistence} and Fig.~\ref{fig:max-layer_affine}, \ref{fig:max-layer_projective}, \ref{fig:image-layer_affine} showing that the mergegram achieves a state-of-the-art recognition on substantially distorted images.

\section{Past work}
\label{sec:review}

For the isometry classification of clouds consisting of the same number of points, the easiest invariant is the distribution of all pairwise distances, whose completeness (or injectivity) was proved for all point clouds in general (non-singular) position in $\R^m$ \cite{boutin2004reconstructing}.
\medskip

\noindent
Fixed point clouds $A,B\subset\R^m$ of different sizes can be pairwisely compared by the \emph{Hausdorff} distance \cite{huttenlocher1993comparing}
 $\HD(A,B)=\max\{\sup\limits_{p\in A} d_B(p), \sup\limits_{q\in B} d_A(q)\}$, where $d_B(p)=\inf\limits_{q\in B} d(p,q)$ is the (Euclidean or another) distance from a point $p\in A$ to the cloud $B$.
\medskip

\noindent
The rigid shape recognition problem for non-fixed clouds $A,B$ is harder because of infinitely many potential isometries that can match $A,B$ exactly or approximately. 
Partial cases of this problem were studied for clouds representing surfaces \cite{elad2001bending} and when two clouds have a given isometric matching of one pair of points \cite{ovsjanikov2010one}.
Shape Google by Ovsjanikov et al. practically extends these ideas to non-rigid shape recognition \cite{ovsjanikov2009shape}.
\medskip

\noindent
The most general framework for the isometry shape recognition of point cloud data was proposed by M\'emoi and Sapiro \cite{memoli2005theoretical}.
They studied the Gromov-Hausdorff distance $d_{GH}(A,B)=\inf\limits_{f,g,M} \HD(f(A),g(B))$ minimized over all isometric embeddings $f:A\to M$ and $g:B\to M$ of given point clouds into a metric space $M$.
Since the above definitions involve even more minimizations over infinitely many maps and spaces, $\GH$ can be only approximated.
\medskip

\noindent
The proposed invariant mergegram extends the 0-dimensional persistence in the area of Topological Data Analysis (TDA), which grew from the theory of size functions \cite{verri1993use}.
TDA views a point cloud $A\subset\R^m$ not by fixing any distance threshold but across all scales $s$, for example by blurring given points to balls of a variable radius $s$.
The resulting evolution of topological shapes is summarized by a persistence diagram, which is invariant under isometries of $\R^m$.
TDA can be combined with machine learning and statistical tools due to stability under noise, which was first proved by Cohen-Steiner et al. \cite{cohen2007stability} and then extended to a very general form by Chazal et al. \cite{chazal2016structure}.
\medskip

\noindent
In dimension 0 the \emph{persistence diagram} $\PD(A)$ for distance-based filtrations of a point cloud $A$ consists of the pairs $(0,s)\in\R^2$, where values of $s$ are distance scales at which subsets of $A$ merge by the single-linkage clustering.
These scales equal half-lengths of edges in a \emph{Minimum Spanning Tree} $\MST(A)$.
If distances between all points of $A$ are known, $\MST(A)$ is a connected graph with the vertex set $A$ and a minimal total length.
\medskip

\noindent
Representing a point cloud $A$ by $\PD(A)$ loses a lot of geometry of $A$, but gains stability under perturbations, which can be expressed in the case of point clouds as $\BD(\PD(A),\PD(B))\leq \HD(A,B)$.
Here the \emph{bottleneck distance} $\BD$ between diagrams is defined as a minimum $\ep\geq 0$ such that all pairs of $\PD(A)$ can be bijectively matched to $\ep$-close points of $\PD(B)$ or to diagonal pairs $(s,s)$, and vice versa.
Here $\ep$-closeness of pairs $(a,c)$ and $(b,d)$ in $\R^2$ is measured in the  distance $L_{\infty}=\max\{|a-b|,|c-d|\}$.
\medskip

\noindent
The \emph{mergegram} extends $\PD(A)$ to a stronger invariant whose stability under perturbations in the above sense is proved in section~\ref{sec:stability} for the first time. 
The idea of a mergegram is related to the Reeb graph \cite{parsa2013deterministic} or the merge tree \cite{morozov2013interleaving} for the sublevel set filtration of a scalar function. 
The mergegram $\MG$ is defined at a more abstract level for any clustering dendrogram, which can be reconstructed from $\MG$ in general position.
\medskip


\noindent
Since any persistence diagram and a mergegram are unordered collections of pairs, the experiments in section~\ref{sec:experiments_mg} will use the neural network PersLay \cite{carriere2019perslay} whose output is invariant under permutations of input points by design.
PersLay extends the neural network DeepSet \cite{zaheer2017deep} for unordered sets and introduces new layers to specifically handle persistence diagrams, as well as a new form of representing such permutation-invariant layers.
In other related work deep learning was recently applied to outputs of hierarchical clustering \cite{fu2019learning}, \cite{cirrincione2020gh}, \cite{karim2021deep} and to 0-dimensional persistence \cite{clough2020topological}, \cite{gabrielsson2020topology}. 

\section{Single-linkage clustering and mergegram}
\label{sec:mergegram}

This section introduces dendrogram in Definition~\ref{dfn:dendrogram} and mergegram in Definition~\ref{dfn:mergegram}.

\begin{exa}
\label{exa:5-point_line}
Fig.~\ref{fig:5-point_line} illustrates the key concepts before formal Definitions~\ref{dfn:sl_clustering}, \ref{dfn:dendrogram}, \ref{dfn:mergegram} for the point cloud $A = \{0,1,3,7,10\}$ in the real line $\R$.
Imagine that we gradually blur original data points by growing disks of the same radius $s$ around the given points.

\begin{figure}[h!]
\centering
\begin{tikzpicture}[scale = 1.1]
  \draw[->] (-1,0) -- (11,0) node[right]{} ;
   \foreach \x/\xtext in {0, 1, 3, 7, 10}
    \draw[shift={(\x,0)}] (0pt,2pt) -- (0pt,-2pt) node[below] {$\xtext$};   
   \filldraw (0,0) circle (2pt);
   \filldraw (1,0) circle (2pt);
   \filldraw (3,0) circle (2pt);
   \filldraw (7,0) circle (2pt);
   \filldraw (10,0) circle (2pt);
\end{tikzpicture}

\begin{tikzpicture}[thick,scale=0.50, every node/.style={transform shape}][sloped]
\draw[style=help lines,step = 1] (-1,0) grid (10.4,4.4);
\draw [->] (-1,0) -- (-1,5) node[above] {{\large scale $s$}};
\foreach \i in {0,0.5,...,2}{ \node at (-1.5,2*\i) {\i}; }
\node (a) at (0,-0.3) {0};
\node (b) at (1,-0.3) {1};
\node (c) at (3,-0.3) {3};
\node (d) at (7,-0.3) {7};
\node (e) at (10,-0.3) {10};
\node (x) at (5,5) {};
\node (ab) at (0.5,1){};
\node (abc) at (1.5,2){};
\node (de) at (8.5,3){};
\node (all) at (5,4){};
\draw [line width=0.5mm, blue ] (a) |- (ab.center);
\draw [line width=0.5mm, blue ] (b) |- (ab.center);
\draw [line width=0.5mm, blue ] (c) |- (abc.center);
\draw [line width=0.5mm, blue ] (d) |- (de.center);
\draw [line width=0.5mm, blue ] (e) |- (de.center);
\draw [line width=0.5mm, blue ] (ab.center) |- (abc.center);
\draw [line width=0.5mm, blue ] (abc.center) |- (all.center);
\draw [line width=0.5mm, blue ] (de.center) |- (all.center);
\draw [line width=0.5mm, blue ] [->] (all.center) -> (x.center);
\end{tikzpicture}
\hspace*{1mm}
\begin{tikzpicture}[thick,scale=0.9, every node/.style={transform shape}]
  \draw[style=help lines,step = 0.5] (0,0) grid (0.5,2.4);
  \draw[->] (-0.2,0) -- (0.8,0) node[right] {birth};
  \draw[->] (0,-0.2) -- (0,2.4) node[above] {};	
  \draw[-] (0,0) -- (1,1) node[right]{};
  \foreach \x/\xtext in {0.5/0.5}
    \draw[shift={(\x,0)}] (0pt,2pt) -- (0pt,-2pt) node[below] {$\xtext$};
  \foreach \y/\ytext in {0.5/0.5,  1/1, 1.5/1.5, 2.0/2}
    \draw[shift={(0,\y)}] (2pt,0pt) -- (-2pt,0pt) node[left] {$\ytext$};  
   \draw [blue,fill] (0,0.5) circle (2pt);
   \draw [blue,fill] (0.0,1) circle (2pt);
   \draw [blue,fill] (0,1.5) circle (2pt);
   \draw [blue,fill] (0,2) circle (2pt);
   \draw [blue,fill] (0,2.7) circle (2pt);
\end{tikzpicture}
\hspace*{1mm}
\begin{tikzpicture}[thick,scale=0.9, every node/.style={transform shape}]
  \draw[style=help lines,step = 0.5] (0,0) grid (2.4,2.4);
  \draw[->] (-0.2,0) -- (2.4,0) node[right] {birth};
  \draw[->] (0,-0.2) -- (0,2.4) node[above] {death};	
  \draw[-] (0,0) -- (2.4,2.4) node[right]{};
  \foreach \x/\xtext in {0.5/0.5, 1/1, 1.5/1.5, 2.0/2}
    \draw[shift={(\x,0)}] (0pt,2pt) -- (0pt,-2pt) node[below] {$\xtext$};
  \foreach \y/\ytext in {0.5/0.5,  1/1, 1.5/1.5, 2.0/2}
    \draw[shift={(0,\y)}] (2pt,0pt) -- (-2pt,0pt) node[left] {$\ytext$};  
   \draw [red,fill] (0,0.5) circle (2pt);
   \draw [red] (0,0.5) circle (4pt);
   \draw [red,fill] (0.0,1.0) circle (2pt);
   \draw [red,fill] (0.0,1.5) circle (2pt);
   \draw [red] (0,1.5) circle (4pt);
   \draw [red,fill] (0.5,1.0) circle (2pt);
   \draw [red,fill] (1,2) circle (2pt);
   \draw [red,fill] (1.5, 2.0) circle (2pt);
   \draw [red,fill] (2, 2.7) circle (2pt);
\end{tikzpicture}
\caption{\textbf{Top}: the 5-point cloud $A = \{0,1,3,7,10\}\subset\R$.
\textbf{Bottom} from left to right: single-linkage dendrogram $\De_{SL}(A)$ from Definition~\ref{dfn:sl_clustering}, the 0D persistence diagram $\PD$ from Definition~\ref{dfn:persistence_diagram} and the new mergegram $\MG$ from Definition~\ref{dfn:mergegram}, where double circles show pairs of multiplicity 2.}
\label{fig:5-point_line}
\end{figure}


\noindent
The disks of the closest points $0,1$ start overlapping at the scale $s=0.5$ when these points merge into one cluster $\{0,1\}$.
This merger is shown by blue arcs joining at the node at $s=0.5$ in the single-linkage dendrogram, see the bottom left picture in Fig.~\ref{fig:5-point_line}.
The persistence diagram $\PD$ in the bottom middle picture of
 Fig.~\ref{fig:5-point_line} represents this merger by the pair $(0,0.5)$ meaning that a singleton cluster of (say) point $1$ was born at the scale $s=0$ and then died later at $s=0.5$ by merging into another cluster of point $0$.
\medskip

\noindent
When clusters $\{0,1,3\}$ and $\{7,10\}$ merge at $s=2$, this event was previously encoded in the persistence diagram by the single pair $(0,2)$ meaning that one cluster inherited from (say) point 7 was born at $s=0$ and died at $s=2$.
The new mergegram in the bottom right picture of Fig.~\ref{fig:5-point_line} represents the above merger by the following two pairs.
The pair $(1,2)$ means that the cluster $\{0,1,3\}$ is merging at the current scale $s=2$ and was previously formed at the smaller scale $s=1$.
The pair $(1.5,2)$ means that another cluster $\{7,10\}$ is merging at the scale $s=2$ and was previously formed at $s=1.5$. 
\medskip

\noindent
The 0D persistence diagram represents the cluster of the whole cloud $A$  by the pair $(0,+\infty)$, because $A$ was inherited from a singleton cluster starting from $s=0$.
The mergegram represents the same cluster $A$ by the pair $(2,+\infty)$, because $A$ was formed during the last merger of $\{0,1,3\}$ and $\{7,10\}$ at $s=2$ and continues to live as $s\to+\infty$.
\medskip
%
\end{exa}

\noindent 
Definition~\ref{dfn:dendrogram} below extends a dendrogram Definition \ref{dfn:dendrogramOriginal} to arbitrary (possibly, infinite) sets $A$.
Since every partition of $A$ is finite by Definition~\ref{dfn:partition}, we don't need to add that an initial partition of $A$ is finite.
Non-singleton sets are now allowed.

\begin{dfn}[dendrogram $\De$ of merge sets]
\label{dfn:dendrogram}
A \emph{dendrogram} $\De$ over any set $A$ is a function $\Delta:[0,+\infty)\to\PS(A)$ of a scale $s\geq 0$ satisfying the following conditions.
\smallskip

\noindent
(\ref{dfn:dendrogram}a)
There exists a scale $r\geq 0$ such that $\De(A;s)$ is the single block partition for $s\geq r$. 
\smallskip

\noindent
(\ref{dfn:dendrogram}b)
If $s\leq t$, then $\De(A;s)$ \emph{refines} $\De(A;t)$, so any set from $\De(t)$ is a subset of some set from $\De(A;t)$.
These inclusions of subsets of $X$ induce the natural map $\De_s^t:\De(s)\to\De(t)$.
\smallskip

\noindent
(\ref{dfn:dendrogram}c)
There are finitely many \emph{merge scales} $s_i$ such that $$s_0 = 0 \text{ and  } s_{i+1} = \text{sup}\{s \mid \text{ the map }  \De_s^t \text{ is identity for } s' \in [s_i,s)\}, i=0,\dots,m-1.$$

\noindent
Since $\De(A;s_{i})\to\De(A;s_{i+1})$ is not an identity map, there is a subset $B\in\De(s_{i+1})$ whose preimage consists of at least two subsets from $\De(s_{i})$.
This subset $B\subset X$ is called a \emph{merge} set and its \emph{birth} scale is $s_i$.
All sets of $\De(A;0)$ are merge sets at the birth scale 0.
The $\life(B)$ is the interval $[s_i,t)$ from its birth scale $s_i$ to its \emph{death} scale $t=\sup\{s \mid \De_{s_i}^s(B)=B\}$.
\bs
\end{dfn}

\noindent
Dendrograms are usually drawn as trees whose nodes represent all sets from the partitions $\De(A;s_i)$ at merge scales.
Edges of such a tree connect any set $B\in\De(A;s_{i})$ with its preimages under $\De(A;s_{i})\to\De(A;s_{i+1})$.
Fig.~\ref{fig:3-point_dendrogram} shows $\De$ for $A=\{0,1,2\}$.
\medskip

\begin{figure}[h]
\parbox{90mm}{
\footnotesize
\begin{tabular}{lccccc}
partition $\De(A;2)$ at scale $s_2=2$ & & & $\{0,1,2\}$ & & \\
map $\De_1^2:\De(A;1)\to\De(A;2)$ & & & $\uparrow$ & $\nwarrow$ & \\
partition $\De(A;1)$ at scale $s_1=1$ & & & \{0, 1\} & & \{2\} \\
map $\De_0^1:\De(A;0)\to\De(A;1)$ & & $\nearrow$ & $\uparrow$ & & $\uparrow$  \\
partition $\De(A;0)$ at scale $s_0=0$ & $\{0\}$ & & $\{1\}$ & & \{2\} 
\end{tabular}}
\parbox{25mm}{
\begin{tikzpicture}[thick,scale=0.75, every node/.style={transform shape}]
  \draw[style=help lines,step = 1] (0,0) grid (2.4,2.4);
  \draw[->] (-0.2,0) -- (2.4,0) node[right] {birth};
  \draw[->] (0,-0.2) -- (0,2.4) node[above] {death};	
  \draw[-] (0,0) -- (2.4,2.4) node[right]{};
  \foreach \x/\xtext in {1/1, 2/2}
    \draw[shift={(\x,0)}] (0pt,2pt) -- (0pt,-2pt) node[below] {$\xtext$};
  \foreach \y/\ytext in {1/1, 2/2}
    \draw[shift={(0,\y)}] (2pt,0pt) -- (-2pt,0pt) node[left] {$\ytext$};
   \draw [red,fill] (0,1) circle (2pt);
   \draw [red] (0,1) circle (4pt);
   \draw [red,fill] (0,2) circle (2pt);
   \draw [red,fill] (1,2) circle (2pt);
   \draw [red,fill] (2,2.7) circle (2pt);
\end{tikzpicture}}
\caption{The dendrogram $\De$ on $A=\{0,1,2\}$ and its mergegram  $\MG(\De)$ from Definition~\ref{dfn:mergegram}.}
\label{fig:3-point_dendrogram}
\end{figure}
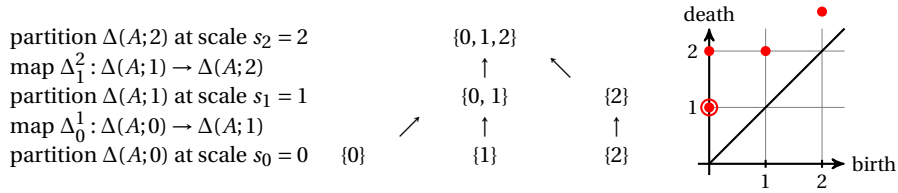

\noindent
In Fig.~\ref{fig:3-point_dendrogram} the partition $\De(A;1)$ consists of $\{0,1\}$ and $\{2\}$.
The maps $\De_s^t$ induced by inclusions respect the compositions in the sense that $\De_s^t\circ\De_r^s=\De_r^t$ for any $r\leq s\leq t$.
For example, $\De_0^1(\{0\})=\{0,1\}=\De_0^1(\{1\})$ and $\De_0^1(\{2\})=\{2\}$, so $\De_0^1$ is a well-defined map from the partition $\De(A;0)$ of 3 singleton sets to $\De(A;1)$, but isn't an identity.
\medskip

\noindent
At the scale $s_0=0$ the merge sets $\{0\},\{1\}$ have $\life=[0,1)$, while the merge set $\{2\}$ has $\life=[0,2)$.
At the scale $s_1=1$ the only merge set $\{0,1\}$ has $\life=[1,2)$.
At the scale $s_2=2$ the only merge set $\{0,1,2\}$ has $\life=[2,+\infty)$.
The notation $\De$ is motivated as the first (Greek) letter in the word dendrogram and by a $\De$-shape of a typical tree.
\medskip

\noindent
Condition~(\ref{dfn:dendrogram}a) says that 
a partition of a set $X$ is trivial for all large scales.
Condition~(\ref{dfn:dendrogram}b) means that if the scale $s$ is increasing, then sets from a partition $\De(s)$ can only merge but can not split. 
Condition~(\ref{dfn:dendrogram}c) implies that there are only finitely many mergers, when two or more subsets of $X$ merge into a larger merge set.
\medskip

\begin{lem}\cite[Lemma 3.3]{elkin2020mergegram}
\label{lem:sl_clustering}
Given a metric space $(X,d)$ and a finite set $A\subset X$, the single-linkage dendrogram $\De_{SL}(X)$ from Definition~\ref{dfn:sl_clustering} satisfies Definition~\ref{dfn:dendrogram}.
\bs
\end{lem}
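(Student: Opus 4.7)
The plan is to verify each of the three conditions (\ref{dfn:dendrogram}a)--(\ref{dfn:dendrogram}c) directly from the definition of $\De_{SL}$. Write $D=\{d(a,b)\mid a,b\in A, a\neq b\}\subset(0,+\infty)$, which is finite because $A$ is finite. The key observation I will use throughout is that, for any $a,b\in A$, the predicate ``$a$ and $b$ lie in the same SL-cluster at scale $s$'' depends only on which inequalities $s\geq d(a_i,a_{i+1})$ hold along candidate finite sequences in $A$, so it is locally constant in $s$ off the finite set $D$.

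First I would verify condition~(\ref{dfn:dendrogram}a): set $r=\max D=\diam(A)$. For $s\geq r$ and any two points $a,b\in A$ the trivial length-two sequence $a,b$ satisfies $d(a,b)\leq r\leq s$, so every pair is in the same cluster; hence $\De_{SL}(A;s)$ is the single-block partition. Next I would verify condition~(\ref{dfn:dendrogram}b): suppose $s\leq t$ and $a,b$ lie in the same block of $\De_{SL}(A;s)$, witnessed by a chain $a=a_1,\dots,a_m=b$ with $d(a_i,a_{i+1})\leq s\leq t$; the same chain witnesses $a\sim_t b$. Therefore each block of $\De_{SL}(A;s)$ is contained in a (unique) block of $\De_{SL}(A;t)$, which is exactly the refinement condition, and the inclusions define the map $\De_s^t$ unambiguously.

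The substantive step is condition~(\ref{dfn:dendrogram}c). Let $0=s_0<s_1<\dots<s_m$ be the elements of $\{0\}\cup D$ listed in increasing order, so $m\leq|D|<\infty$. I claim that for each $i\in\{0,\dots,m-1\}$ the map $\De_{s_i}^s$ is the identity for every $s\in[s_i,s_{i+1})$, while $\De_{s_i}^{s_{i+1}}$ is not the identity (provided a merger actually occurs at $s_{i+1}$; otherwise one simply deletes $s_{i+1}$ from the list, which only shortens a finite list). Indeed, for $s\in[s_i,s_{i+1})$ the set of pairs $(a,b)\in A\times A$ with $d(a,b)\leq s$ coincides with the corresponding set at scale $s_i$, since no element of $D$ lies in $(s_i,s)$; consequently the chain-connectivity relation and hence the partition are unchanged. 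Conversely, at $s=s_{i+1}\in D$ new pairs become admissible, so at least one pair of previously separated blocks merges, making $\De_{s_i}^{s_{i+1}}$ non-identity. This verifies~(\ref{dfn:dendrogram}c) with the prescribed merge scales.

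The only mild obstacle is bookkeeping: the elements of $\{0\}\cup D$ need not all be genuine merger scales (some distances occur strictly inside a block already formed), but this is handled by passing to the subsequence of scales at which a block actually merges, which remains finite. With that cleanup, the three conditions of Definition~\ref{dfn:dendrogram} are all satisfied by $\De_{SL}$, completing the proof.
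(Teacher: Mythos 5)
Your verification is correct: (a) via $r=\diam(A)$, (b) via the observation that any admissible chain at scale $s$ remains admissible at $t\geq s$, and (c) via the fact that the chain-connectivity relation can only change at the finitely many pairwise distances, with the honest cleanup that only those distances where a genuine merger occurs are kept as merge scales. The thesis itself defers the proof to the cited conference paper, and your argument is exactly the standard direct verification one would expect there, so no issues.
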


\noindent
A \emph{mergegram} represents life spans of merge sets by pairs 
$(\birth,\death)\in\R^2$.

\begin{dfn}[mergegram $\MG(\De)$]
\label{dfn:mergegram}
The \emph{mergegram} of a dendrogram $\De$ has the pair $(\birth,\death)\in\R^2$ for each merge set $B$ of $\De$ with $\life(B)=[\birth,\death)$.
If any life interval appears $k$ times, the pair (birth,death) has the multiplicity $k$ in $\MG(\De)$.
\bs
\end{dfn}

\noindent
If our input is a point cloud $A$ in a metric space, then the mergegram $\MG(\De_{SL}(A))$ is an isometry invariant of $A$, because $\De_{SL}(A)$ depends only on inter-point distances.
Though $\De_{SL}(A)$ as any dendrogram is unstable under perturbations of points, the key advantage of $\MG(\De_{SL}(A))$ is its stability, which will be proved in Theorem~\ref{thm:stability_mergegram}.


\medskip

\noindent
The dendrogram $\De$ in the first picture of Fig.~\ref{fig:5-point_set_mergegram} generates the mergegram as follows:
\begin{itemize}
\item 
each of the singleton sets $\{b\}$, $\{c\}$, $\{p\}$, $\{q\}$ has pair (0,1), so its multiplicity is 4; 
\item 
each of the merge sets $\{b,c\}$ and $\{p,q\}$ has the pair (1,2), so its multiplicity is 2; 
\item 
the singleton set $\{a\}$ has the pair $(0,3)$;
the merge set $\{b,c,p,q\}$ has the pair (2,3);
\item
the full set $\{a,b,c,p,q\}$ continues to leave up to $s=3$, hence has the pair $(3,+\infty)$.
\end{itemize}

\begin{figure}[H] 
\parbox{80mm}{
\begin{tikzpicture}[scale = 0.75][sloped]
\node (x) at (5,3) {x};
 \node (a) at (1,1) {a};
 \draw (a) -- node[above]{3} ++ (x);
 \node (b) at (3.5,4.0) {b};
 \draw (b) -- node[above]{1} ++ (x);
 \node (c) at (7,1) {c};
 \draw (c) -- node[below]{1} ++ (x);
 \node (y) at (8,3) {y};
 \draw (x) -- node[above]{2} ++ (y);
 \node (d) at (10,5){p};
 \node (e) at (10,1){q};
 \draw (y) -- node[below]{1} ++ (d);
 \draw (y) -- node[below]{1} ++ (e);
\end{tikzpicture}}
\parbox{40mm}{
\begin{tabular}{c|ccccc}
& a & b & c & p & q \\
\hline
a & 0 & 4 & 4 & 6 & 6 \\
b & 4 & 0 & 4 & 6 & 6 \\
c & 4 & 2 & 0 & 4 & 4 \\
p & 6 & 4 & 4 & 0 & 2 \\
q & 6 & 4 & 4 & 2 & 0 
\end{tabular}}
\caption{The set $X=\{a,b,c,d,e\}$ has the distance matrix defined by the shortest path metric.}
\label{fig:5-point_set}
\end{figure}
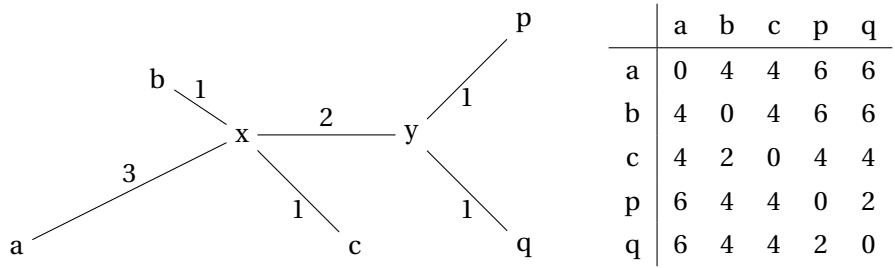

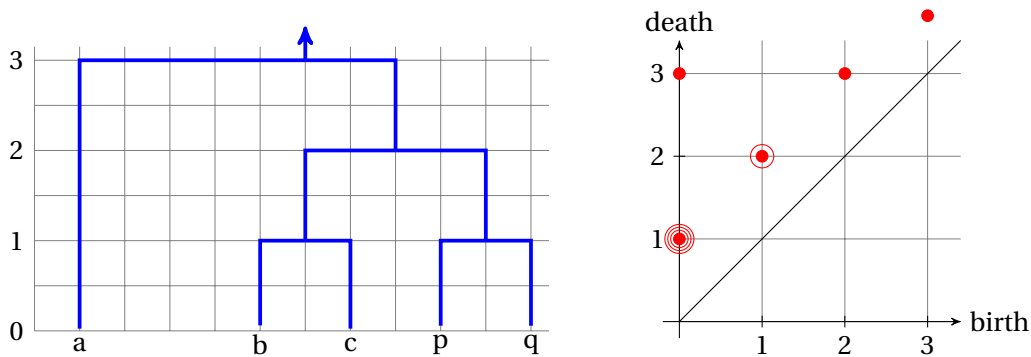
\begin{figure}[H]
\begin{tikzpicture}[scale = 0.6][sloped]
 \draw[style=help lines,step = 1] (-1,0) grid (10.4,6.3);
\foreach \i in {0,...,3.0} { \node at (-1.4,2*\i) {\i}; }
\node (a) at (0,-0.3) {a};
\node (b) at (4,-0.3) {b};
\node (c) at (6,-0.3) {c};
\node (d) at (8,-0.3) {p};
\node (e) at (10,-0.3) {q};
\node (x) at (5,6.75) {};

\node (de) at (9,2){};
\node (bc) at (5.0,2){};
\node (bcde) at (7.0,4){};
\node (all) at (5.0,6){};

\draw [line width=0.5mm, blue ] (a) |- (all.center);
\draw [line width=0.5mm, blue ] (b) |- (bc.center);
\draw [line width=0.5mm, blue ]  (c) |- (bc.center);
\draw [line width=0.5mm, blue ]  (d) |- (de.center);
\draw [line width=0.5mm, blue ]  (e) |- (de.center);
\draw [line width=0.5mm, blue ]  (de.center) |- (bcde.center);
\draw [line width=0.5mm, blue ]  (bc.center) |- (bcde.center);
\draw [line width=0.5mm, blue ]  (bcde.center) |- (all.center);
\draw [line width=0.5mm, blue] [->] (all.center) -> (x.center);
\end{tikzpicture}
\hspace*{1cm}
\begin{tikzpicture}[scale = 1.1]
 \draw[style=help lines,step = 1] (0,0) grid (3.4,3.4);
 \draw[->] (-0.2,0) -- (3.4,0) node[right] {birth};
  \draw[->] (0,-0.2) -- (0,3.4) node[above] {death};	
  \draw[-] (0,0) -- (3.4,3.4) node[right]{};

  \foreach \x/\xtext in {1/1, 2.0/2, 3.0/3}
    \draw[shift={(\x,0)}] (0pt,2pt) -- (0pt,-2pt) node[below] {$\xtext$};

  \foreach \y/\ytext in {1/1, 2.0/2, 3.0/3}
    \draw[shift={(0,\y)}] (2pt,0pt) -- (-2pt,0pt) node[left] {$\ytext$}; 
   \draw [red,fill] (0,1) circle (2pt);
   \draw [red] (0,1) circle (3pt);
   \draw [red] (0,1) circle (4pt);
   \draw [red] (0,1) circle (5pt);
   \draw [red,fill] (1,2) circle (2pt);
   \draw [red] (1,2) circle (4pt);
   \draw [red,fill] (2,3) circle (2pt);
\draw [red,fill] (0,3) circle (2pt);
\draw [red,fill] (3,3.7) circle (2pt);
\end{tikzpicture}
\caption{\textbf{Left}: the dendrogram $\De$ for the single linkage clustering of the set 5-point set $X=\{a,b,c,d,e\}$ in Fig.~\ref{fig:5-point_set}.
\textbf{Right}: the mergegram $\MG(\De)$ with one pair (0,1) of multiplicity 4.}
\label{fig:5-point_set_mergegram}
\end{figure}

\section{Relations between 0-dim persistence and mergegram}
\label{sec:relations}

This section recalls the concept of persistence and then shows how any 0D persistence and dendrogram in general position can be reconstructed from a mergegram.

\medskip

\noindent
Key results of this section are
Theorem~\ref{thm:mergegram_to_0D_persistence} which describes how the persistence diagram $\PD$ of the distance-based filtration of any point cloud $A$ can be obtained from the mergegram $\MG(\De_{\SL}(S))$ and Theorem~\ref{thm:mergegram-to-dendrogram} which shows that in some cases dendrogram $\De_{\SL}(S)$ can be reconstructed from mergegram $\MG(\De_{\SL}(S))$.

\begin{thmm}\cite[Theorem~5.3]{elkin2020mergegram}
\label{thm:mergegram_to_0D_persistence} 
For a finite set $A$ in a metric space $(X,d)$, let $d_A:X\to\R$ be the distance to $A$.
Let the mergegram $\MG(\De_{SL}(A))$ be a multiset $\{(b_i,d_i)\}_{i=1}^k$, where some pairs can be repeated.
Then the persistence diagram $\PD\{H_0(X_s^{d_A})\}$ is the difference of the multisets $\{(0,d_i)\}_{i=1}^{k}-\{(0,b_i)\}_{i=1}^{k}$ containing each pair $(0,s)$ exactly $\#b-\#d$ times, where $\#b$ is the number of births $b_i=s$ and  $\#d$ is the number of deaths $d_i=s$.
All trivial pairs $(0,0)$ are ignored, alternatively we take $\{(0,d_i)\}_{i=1}^{k}$ only with $d_i>0$.
\bs
\end{thmm}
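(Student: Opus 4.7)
My plan is to reduce the statement to a combinatorial bookkeeping identity after first reconciling the two filtrations. The distance-based sublevel filtration of $d_A$ and the single-linkage filtration $\De_{SL}(A)$ produce the same sequence of merger events: at scale $s$, two points $a,b \in A$ lie in the same connected component of $X_s^{d_A}$ iff there is a finite chain $a = a_0, \dots, a_m = b$ in $A$ with consecutive distances below the threshold that produces a continuous path in the sublevel set, which is exactly the SL-cluster condition of Definition~\ref{dfn:sl_cluster} (up to the common scale convention both constructions use for the merge events). So first I would prove the following correspondence lemma: the set of merge scales of $\De_{SL}(A)$, counted with the number of clusters merging at each such scale, coincides with the list of non-trivial death scales of the persistence module $\{H_0(X_s^{d_A})\}$, and conversely any connected component of $X_s^{d_A}\cap A$ at a merge scale equals a set in the partition $\De_{SL}(A;s)$.

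Next I would do the per-scale accounting. Fix a merge scale $s > 0$ at which $k_s \ge 2$ SL-clusters $C_1,\dots,C_{k_s}$ merge into one cluster $C$. On the mergegram side, each $C_j$ contributes a death at $s$ and $C$ contributes a birth at $s$, so $\#\{d_i = s\} = k_s$ and $\#\{b_i = s\} = 1$, giving a net $k_s - 1$ pairs of the form $(0,s)$ after the multiset subtraction $\{(0,d_i)\} - \{(0,b_i)\}$. On the persistence side, the rank of $H_0(X_t^{d_A})$ drops by exactly $k_s - 1$ at $s$ (the merger of $k_s$ components replaces them by one), and by the elder rule each of the $k_s - 1$ dying generators produces a pair of the form $(0, s)$ in $\PD\{H_0(X_s^{d_A})\}$ because every $H_0$-generator was born at the minimum scale $0$. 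Hence the multiset difference and the persistence diagram agree at every scale $s > 0$.

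Finally I would handle the boundary cases. At $s = 0$ there are $|A|$ singleton merge sets, all with birth $0$, contributing $|A|$ copies of $(0,0)$ to $\{(0,d_i)\} - \{(0,b_i)\}$ after the singleton deaths at their first merger are cancelled by births at those scales; these trivial diagonal pairs are excluded per the statement's convention, matching the convention that $\PD$ discards diagonal pairs. For the unique merge set with $\life = [s^\ast, +\infty)$, the birth $s^\ast$ cancels with one of the $k_{s^\ast}$ deaths at that last merger, so after subtraction the surviving unpaired death contributes the essential class $(0, +\infty)$ of the connected space $X_{+\infty}^{d_A}$. Summing over all scales assembles the persistence diagram from the mergegram.

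The main obstacle is the first lemma: one must carefully justify that connected components of the sublevel set $X_s^{d_A}$, restricted to the finite set $A$, agree with SL-clusters at each scale and that the merger combinatorics align. Once this correspondence is established, the rest is a straightforward scale-by-scale count exploiting that every $H_0$-generator is born at $0$, so only death scales can differ and the elder rule gives precisely one fewer surviving class per merger than the mergegram records as dying.
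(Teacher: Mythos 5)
Your overall plan has the right shape and, as far as one can tell, follows the natural route (the thesis itself gives no proof of Theorem~\ref{thm:mergegram_to_0D_persistence}, deferring to the cited conference paper): first identify connected components of the sublevel sets $X_s^{d_A}$ with single-linkage clusters of $A$, then do a per-merge-scale count, using that every $H_0$-class is born at $0$, so at a merger of $k_s$ clusters the mergegram records $k_s$ deaths and one birth while the elder rule produces $k_s-1$ pairs $(0,s)$, and finally handle $s=0$ and the essential class. Your bookkeeping (deaths minus births) is the correct one and agrees with the paper's worked example after the theorem, where one pair $(0,0.5)$ comes from two deaths and one birth; the statement's ``$\#b-\#d$'' is evidently a typo for $\#d-\#b$, so no issue there.

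The genuine gap is the correspondence lemma on which everything rests. You assert that $a,b\in A$ lie in the same component of $X_s^{d_A}$ iff there is a chain in $A$ with small consecutive gaps, ``which produces a continuous path in the sublevel set''. In a general metric space $(X,d)$ — the stated setting — a chain in $A$ produces no path at all: closed balls need not be connected, and $\bar{B}(a,s)$, $\bar{B}(b,s)$ with $d(a,b)\le 2s$ need not even intersect. For instance, if $X=A$ itself, then $X_s^{d_A}=A$ is discrete for every $s$, so no components of the sublevel sets ever merge, while $\De_{SL}(A)$ merges as usual; only the converse direction is automatic (a component of $X_s^{d_A}$ is covered by the finitely many closed balls $\bar{B}(a,s)$, and chain-connectivity of a finite closed cover yields a chain in $A$ with gaps at most $2s$). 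So the ``iff'' needs an extra hypothesis on the ambient space (e.g.\ $X=\R^m$ or a geodesic space) or a reformulation of which filtration's $H_0$ is meant, and in either case the scale normalization must be fixed explicitly: with Definition~\ref{dfn:sl_cluster} (merging when $d\le s$) the SL merge scales are twice the sublevel-set merge scales (ball overlap requires $d\le 2s$), and the paper's Example~\ref{exa:5-point_line} silently uses half-distances. Dismissing this as ``the common scale convention both constructions use'' leaves unproved exactly the point on which the theorem's truth hinges; once that lemma is correctly stated and proved, the rest of your scale-by-scale count goes through.
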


\noindent
Theorem~\ref{thm:mergegram_to_0D_persistence} is illustrated by Example~\ref{exa:5-point_line}, where $A=\{0,1,3,7,10\}$ has the diagram $\PD(A)=\{(0,0.5),(0,1),(0,1.5),(0,2),(0,+\infty)\}$ obtained from the mergegram
$$\MG(\De_{\SL}(A))=\{(0,0.5),(0,0.5),(0,1),(0,1.5),(0,1.5),(0.5,1),(1,2),(1.5,2),(2,+\infty)\}$$
as follows: one pair $(0,0.5)\in\PD(A)$ comes from two deaths and one birth $s=0.5$ in $\MG(\De_{\SL}(A))$.
Similarly each of the pairs $(0,1),(0,1.5),(0,2)\in\PD(A)$ comes from two deaths and one birth equal to the same scale $s$.
The cloud $B=\{0,4,6,9,10\}\subset\R$ in \cite[Example~1.1]{elkin2020mergegram} has exactly the same $\PD(B)=\PD(A)$ but different $\MG(\De_{\SL}(B))\neq\MG(\De_{\SL}(A))$.
This example together with Theorem~\ref{thm:mergegram_to_0D_persistence} justify that the mergegram is strictly stronger than 0D persistence as an isometry invariant of a point cloud.
\medskip
 

\noindent
New Reconstruction Theorem~\ref{thm:mergegram-to-dendrogram} below can be contrasted with the weakness of 0D persistence $\PD\{H_0(X_s^{d_A})\}$ consisting of only pairs $(0,s)$ whose finite deaths are half-lengths of edges in a Minimum Spanning Tree $\MST(A)$.
In Example~\ref{exa:5-point_line} these scales $s=0.5,1,1.5,2$ are insufficient to reconstruct the SL dendrogram in Fig.~\ref{fig:5-point_line}. 
Such a unique reconstruction is possible by using the richer invariant mergegram as follows. 

\begin{thmm}[from a mergegram to a dendrogram]
\label{thm:mergegram-to-dendrogram}
Let $A$ be a finite point cloud in \emph{general position} in the sense that all merge scales of $A$ in a dendrogram $\De$ from Definition~\ref{dfn:dendrogram} are different.
Then the dendrogram $\De$ can be reconstructed from its mergegram $\MG(\De)$, uniquely up to a permutation of nodes in $\De$ at scale $s=0$.
\bs
\end{thmm}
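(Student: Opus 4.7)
\medskip
\noindent
\textbf{Proof plan.} The plan is to reconstruct the partitions $\De(A;s)$ for all $s\geq 0$ layer by layer from the mergegram, using general position crucially so that nothing ambiguous happens at positive merge scales. First I would extract the discrete skeleton of $\De$ from $\MG(\De)$: let $0=s_0<s_1<\cdots<s_m$ be the sorted distinct birth values occurring in $\MG(\De)$ (these are exactly the merge scales of $\De$ by Definition~\ref{dfn:dendrogram}), and read off $n:=|A|$ as the multiplicity of the birth value $0$ in $\MG(\De)$ (every singleton in $\De(A;0)$ contributes exactly one pair $(0,d)$). The general position assumption then forces the multiplicity of each birth value $s_i$ with $i\geq 1$ to be exactly $1$, since otherwise two distinct merge sets would be born at the same scale.

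\medskip
\noindent
Next I would run an induction on $i=0,1,\ldots,m$ to reconstruct each partition $P_i:=\De(A;s_i)$, together with a bijection between the blocks of $P_i$ and a subset of pairs of $\MG(\De)$. Start with $n$ abstract singletons $\{x_1\},\ldots,\{x_n\}$ matched to the $n$ pairs with birth $0$ (in some order). For the inductive step, identify in $\MG(\De)$ the unique pair $(s_i,d_i)$ with birth $s_i$ and the collection of pairs with death $s_i$, say $(b_1,s_i),\ldots,(b_k,s_i)$: by Definition~\ref{dfn:mergegram} these are the life spans of the merge sets that die at $s_i$, and by condition~(\ref{dfn:dendrogram}b) together with general position they are exactly the blocks of $P_{i-1}$ that fuse into the single new block born at $s_i$. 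I then match each death pair $(b_j,s_i)$ to a block of $P_{i-1}$: if $b_j=s_{j'}$ for some $j'\geq 1$, general position gives a \emph{unique} block of $P_{i-1}$ born at $s_{j'}$, namely the one created at step $j'$ of the induction, so the match is forced; if $b_j=0$, any still-alive singleton in $P_{i-1}$ may be chosen. Forming $P_i$ by removing the matched blocks and inserting their union as a new block (tagged with birth $s_i$ and death $d_i$) completes the step. After processing $s_m$, the partition $P_m$ is the single block $A$ and the full tree of life spans agrees pair-by-pair with $\MG(\De)$.

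\medskip
\noindent
Finally, I would verify the uniqueness statement: any two runs of the procedure differ only by a permutation of the leaves $\{x_1\},\ldots,\{x_n\}$ at scale $s=0$. All choices made during the induction are of the form ``pick a still-alive singleton to play the role of a death pair $(0,s_i)$''; once such a singleton is absorbed into the merge set created at $s_i$, all subsequent steps only refer to that new merge set, which is uniquely pinned down by its birth scale $s_i$. Hence distinct runs produce dendrograms that coincide on every block born at a positive scale and differ only in the original labelling of the $n$ leaves, which is precisely the stated ambiguity. The main obstacle is ensuring the inductive matching in step two is genuinely forced whenever $b_j>0$; this is exactly where general position is indispensable, because without it several merge sets could share a birth scale and the mergegram would no longer distinguish them.
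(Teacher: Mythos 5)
Your proposal is correct and follows essentially the same route as the paper's proof: process merge scales in increasing order, use the dying pairs' birth values to identify the clusters that fuse (a positive birth pins down a unique previously built non-singleton cluster by general position, a zero birth allows any free point), so the only ambiguity is the labelling of the scale-$0$ nodes. The only cosmetic difference is that you organize this as an explicit induction on partitions and allow non-binary mergers, whereas the paper's argument reads general position as forcing exactly two clusters (two pairs $(b_1,s),(b_2,s)$) to merge at each scale.
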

\begin{proof}
\noindent
Consider all merge scales one by one in the increasing order starting from the smallest.
The general position implies that only two clusters merge at any merge scale. 
For any current scale $s$, the mergegram contains exactly two pairs $(b_1,s)$ and $(b_2,s)$.
\medskip

\noindent
For a smallest merge scale $s>0$, the births should be $b_1=b_2=0$.
We start drawing a dendrogram $\De$ by merging any two points of $A$ at this smallest scale $s$.
To realize a merger at any larger $s$, we should select two clusters representing $(b_1,s)$ and $(b_2,s)$. 
\medskip

\noindent
If $b_i=0$ then we take any of the unmerged points of $A$.
If $b_i>0$ then the already constructed dendrogram should contain a unique non-singleton cluster determined by the scale $b_i\in(0,s)$.
Hence at any merge scale $s$ we know how to select two clusters to merge.
The only choice comes from choosing points of $A$ or permuting notes of $\De$.
\end{proof}

\noindent 
Following the above proof of Theorem~\ref{thm:mergegram-to-dendrogram} for the cloud $A=\{0,1,3,7,10\}$ in Example~\ref{exa:5-point_line}, the first two pairs $(0,0.5)\in\MG(\De_{\SL}(A))$ indicate that we should merge two points of $A$ at $s=0.5$.
The scale $s=0.5$ uniquely determines this 2-point cluster. 
\medskip

\noindent
The next two pairs $(0,1),(0.5,1)$ mean that the above cluster born at $s=0.5$ should merge at $s=1$ with a singleton cluster (any free point of $A$).
The resulting 3-point cluster is uniquely determined by its merge scale $s=1$.
The further two pairs $(0,1.5),(0,1.5)$ say that a new 2-point cluster is formed at $s=1.5$ by the two remaining points of $A$.
\medskip

\noindent
The final pairs $(1,2),(1.5,2)$ tell us to merge at $s=2$ the two clusters formed earlier at $s=1$ and $s=1.5$.
The resulting dendrogram $\De$ has the expected combinatorial structure as in Fig.~\ref{fig:5-point_line}, though we can draw $\De$ in another way by permuting points of $A$.

\section{Stability of the mergegram}
\label{sec:stability}

This section fully proves the stability of a mergegram in Theorem~\ref{thm:stability_mergegram} with help of Lemmas~\ref{lem:merge_module_decomposition} and \ref{lem:merge_modules_interleaved}.
For simplicity, we consider vector spaces with coefficients only in $\Z_2=\{0,1\}$, which can be replaced by any field.
\medskip

\noindent
Definition~\ref{dfn:homo_modules} introduces homomorphisms between persistence modules, which are needed to state the stability of persistence diagrams $\PD\{H_0(X_s^f)\}$ under perturbations of a function $f:X\to\R$.
This result will imply a stability for the mergegram $\MG(\De_{SL}(A))$ for the dendrogram $\De_{SL}(A)$ of the single-linkage clustering of a set $A\subset X$.

\begin{dfn}[merge module $M(\De)$]
\label{dfn:merge_module}
For any dendrogam $\De$ on a finite set $X$,
the \emph{merge module} $M(\De)$ consists of the vector spaces $M_s(\De)$, $s\in\R$, and linear maps $m_s^t:M_s(\De)\to M_t(\De)$, $s\leq t$.
For any $s\in\R$ and $A\in\De(s)$, the space $M_s(\De)$ has the generator or a basis vector $[A]\in M_s(\De)$.
For $s<t$ and any set $A\in\De(s)$, 
if the image of $A$ under $\De_s^t$ coincides with $A\subset X$, so $\De_s^t(A)=A$, then $m_s^t([A])=[A]$, else $m_s^t([A])=0$. 
\bs
\end{dfn}

\begin{figure}[h]
\begin{tabular}{lccccccccc}
scale $s_3=+\infty$ & 0 & & & & & 0 \\
map $m_2^{+\infty}$ & $\uparrow$ & & & & & $\uparrow$\\
scale $s_2=2$ & $\Z_2$ & & & 0 & 0 & [\{0,1,2\}]\\
map $m_1^2$ & $\uparrow$ & & & $\uparrow$ & $\uparrow$\\
scale $s_1=1$ & $\Z_2\oplus\Z_2$ & 0 & 0 & [\{2\}] & [\{0,1\}] \\
map $m_0^1$ & $\uparrow$ & $\uparrow$ & $\uparrow$ & $\uparrow$ \\
scale $s_0=0$ & $\Z_2\oplus\Z_2\oplus\Z_2$ & [\{0\}] & [\{1\}] & [\{2\}] &
\end{tabular}
\caption{The merge module $M(\De)$ of the dendrogram $\De$ on the set $X=\{0,1,2\}$ in Fig.~\ref{fig:3-point_dendrogram}.}
\label{fig:3-point_module}
\end{figure}

\noindent
In a dendrogram $\De$ from Definition~\ref{dfn:dendrogram}, any merge set $A$ of $\De$ has $\life(A)=[b,d)$ from its birth scale $b$ to its death scale $d$.
Lemmas~\ref{lem:merge_module_decomposition} and~\ref{lem:merge_modules_interleaved} are proved for the first time. 

\begin{lem}[merge module decomposition]
\label{lem:merge_module_decomposition}
For any dendrogram $\De$ from Definition~\ref{dfn:dendrogram}, the merge module $M(\De)\cong\bigoplus\limits_{A}\mathbb{I}(\life(A))$ decomposes over all merge sets $A$.
\bs
\end{lem}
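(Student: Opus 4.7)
The plan is to build an explicit isomorphism $\phi : \bigoplus_{A}\mathbb{I}(\life(A)) \to M(\De)$ of persistence modules, indexed over all merge sets $A$ of $\De$, and then verify that $\phi$ is a level-wise bijection that commutes with all structural maps. On the summand indexed by a merge set $A$ with $\life(A) = [b_A, d_A)$, the scale-$s$ component $\phi_s$ sends the canonical generator of $\mathbb{I}(\life(A))_s$ to the basis vector $[A]\in M_s(\De)$ when $s\in\life(A)$, and to $0$ otherwise. Condition~(\ref{dfn:dendrogram}c) ensures that there are only finitely many merge scales and hence finitely many merge sets, so the direct sum is finite and no convergence issue arises.

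The key combinatorial observation, which I would establish first as a short lemma, is that for every real $s$, the set $\{A \text{ a merge set of } \De \vl s\in\life(A)\}$ is exactly the partition $\De(s)$. The inclusion $\supseteq$ is immediate from the definition of $\life$: a block currently in $\De(s)$ was born at some merge scale $b_A\leq s$ and, by condition~(\ref{dfn:dendrogram}c), persists unchanged until the next merge scale at which it merges, so $s\in[b_A,d_A)$. The inclusion $\subseteq$ follows because $A\in\De(b_A)$ by definition of the birth scale and the maps $\De_{b_A}^s$ are identity on $A$ for $s<d_A$, so $A\in\De(s)$. Consequently $\{[A]\vl s\in\life(A)\}$ is a basis of $M_s(\De)$, which proves $\phi_s$ is a $\Z_2$-linear isomorphism.

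Next I would check naturality, i.e., that $m_s^t\circ\phi_s = \phi_t\circ i_s^t$ on each summand $\mathbb{I}(\life(A))$ for any $s\leq t$. There are three cases. If $s,t\in\life(A)$, both compositions send the generator to $[A]$, which is consistent with $\De_s^t(A)=A$ (by the same identity argument as above, since $[s,t]$ contains no merge scale at which $A$ splits off). If $s\in\life(A)$ but $t\notin\life(A)$, then $i_s^t$ kills the generator, and on the other side $A$ has ceased to be a block of $\De(t)$ so $\De_s^t(A)\neq A$ and $m_s^t([A])=0$ by Definition~\ref{dfn:merge_module}. The remaining cases are trivially $0=0$.

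The main subtle point, and the one worth stating carefully, is the first lemma identifying $\{A\vl s\in\life(A)\}$ with $\De(s)$: it is here that condition~(\ref{dfn:dendrogram}c) — the fact that between consecutive merge scales all maps $\De_s^t$ are identities — does the real work, since without it a block could reappear and the notion of $\life(A)$ being a single interval would be ambiguous. Once this is in hand, the rest is bookkeeping: $\phi$ is a map of persistence modules that is a linear isomorphism at every scale, hence an isomorphism of persistence modules, giving the claimed decomposition.
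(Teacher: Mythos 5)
Your proof is correct and follows essentially the same route as the paper's: an explicit level-wise isomorphism between $M(\De)$ and $\bigoplus_{A}\mathbb{I}(\life(A))$, verified case by case to commute with the structure maps. The only difference is cosmetic — you isolate as a preliminary lemma the identification of $\De(s)$ with the set of merge sets whose life contains $s$, a fact the paper uses implicitly when asserting that $M_s(\De)$ is generated by the $[A]$ with $s\in\life(A)$.
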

\begin{proof}[Proof of Lemma~\ref{lem:merge_module_decomposition}]
The goal is to prove that $M(\triangle) \cong \bigoplus_{A}\mathbb{I}(\text{life}(A))$. 
Recall that the interval module $\mathbb{I}(\text{life}(A))$
 consists of only vector spaces $0$ and $\Z_2$.
For a scale $r$, let $\mathbb{I}_r(\text{life}(A))$ be its vector space, whose generator is denoted by $[\mathbb{I}_r(\text{life}(A))]$.  
Define
$$\psi_r:M_r(\triangle) \rightarrow \bigoplus_{A}\mathbb{I}_r(\text{life}(A)) \text{ such that } [A] \rightarrow [\mathbb{I}_r(\text{life}(A))] \text{ for all }A \in \triangle(r),$$
$$\phi_r:\bigoplus_{A}\mathbb{I}_r(\text{life}(A)) \rightarrow M_r(\Delta) \text{ such that } [\mathbb{I}_r(\text{life}(A))] \rightarrow [A] \text{ for all }\text{life}(A) \text{ containing }r.$$

\noindent
We will first prove that $\phi_r$ is well-defined. 
If $r \in \text{life}(A)$ then $A \in M_r(\triangle)$.
We know that $M_r(\triangle)$ is generated by elements $A \in \triangle(r)$ for which $r \in \text{life}(A)$. 
Thus the compositions satisfy $\phi_r \circ \psi_r = \text{id}_r$ and $\psi_r \circ \phi_r = \text{id}_r$.
It remains to prove that morphisms correctly behave under the functors $\psi, \phi$. 
The proofs for both cases are essentially the same, thus we will prove it only for $\psi$. 
The goal is to prove that the following diagram commutes:

\begin{figure}[H]
\centering 
\begin{tikzpicture}[scale=1.0]
  \matrix (m) [matrix of math nodes,row sep=3em,column sep=4em,minimum width=2em]
  {
     M_s(\triangle) & M_t(\triangle) \\
     \bigoplus_{A}\I_s(\text{life}(A)) & \bigoplus_{A}\I_t(\text{life}(A)) \\};
  \path[-stealth]
    (m-1-1) edge node [left] {$\psi_s$} (m-2-1)
            edge [->] node [above] {$m^t_s$} (m-1-2)
    (m-2-1.east|-m-2-2) edge node [above] {$i^t_s$}
            node [above] {} (m-2-2)
      (m-1-2) edge node [right] {$\psi_t$} (m-2-2);
\end{tikzpicture}
\end{figure}

\noindent
Here $i^t_s$ is the direct sum of the corresponding maps of interval modules $\bigoplus_{A}(i^t_s)^A $ . 
Let $[A]$ be an arbitrary generator of $M_r(\triangle)$. 
There are two possibilities how $m^t_s$ can map $[A]$. 
If $t \in \text{life}(A)$, then $m^t_s([A]) = [A] \in M_t(\triangle)$ and by definition $$\phi_t \circ m^t_s([A]) = [\mathbb{I}_t(\text{life}(A))].$$
Since both $s,t \in \text{life}(A)$, we also have that
$$m^t_s \circ \phi_t([A]) = [\mathbb{I}_t(\text{life}(A))] = \phi_t \circ m^t_s([A]).$$
Assume now that $t \notin \text{life}(A)$. 
Then $m^t_s([A]) = 0$ and thus $\phi_t(m^t_s([A])) = 0$. On the other hand $i^t_s\circ\phi_s([A]) = [\mathbb{I}_t(\text{life}(A))] = \mathbb{Z}_2$.
Since $t \notin \text{life}(A)$, we get $i^t_s\circ\phi_s([A]) = 0$.
\end{proof}


\begin{lem}[merge modules interleaved]
\label{lem:merge_modules_interleaved}
If subsets $A,B$ of a metric space $(X,d)$ have $\HD(A,B)=\de$, then the merge modules $M(\De_{SL}(A))$, $M(\De_{SL}(B))$ are $\de$-interleaved.
\bs
\end{lem}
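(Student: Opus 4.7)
The plan is to construct degree-$\delta$ persistence-module homomorphisms $\phi\colon M(\De_{SL}(A))\to M(\De_{SL}(B))$ and $\psi\colon M(\De_{SL}(B))\to M(\De_{SL}(A))$ and then verify that $\psi\circ\phi$ and $\phi\circ\psi$ agree with the internal shift maps $1^{2\delta}$ from Definition~\ref{dfn:interleaving_distance}. Since $\HD(A,B)=\delta$, I first fix selector maps $\sigma\colon A\to B$ and $\tau\colon B\to A$ with $d(a,\sigma(a))\le\delta$ and $d(b,\tau(b))\le\delta$ for every $a\in A$ and $b\in B$.

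To define $\phi_s$ on a generator $[U]$ with $U\in\De_{SL}(A;s)$, the key geometric ingredient is a chain lift: if $a_1,a_2\in U$ are joined in $A$ by a chain $a_1=c_0,c_1,\dots,c_k=a_2$ with $d(c_i,c_{i+1})\le s$, then the $\sigma$-image sequence $\sigma(c_0),\dots,\sigma(c_k)$ is a chain in $B$ whose steps satisfy $d(\sigma(c_i),\sigma(c_{i+1}))\le s+2\delta$ by two applications of the triangle inequality. Thus $\sigma(U)$ lies inside a single SL cluster of $B$ at scale $s+2\delta$ but can splinter across several clusters at the intermediate scale $s+\delta$, so I set $\phi_s([U])=\sum_V[V]$, summing over all $V\in\De_{SL}(B;s+\delta)$ that meet $\sigma(U)$; the map $\psi$ is defined symmetrically with $\tau$. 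Naturality of $\{\phi_s\}$ with respect to the structural maps $m_s^t$ and $n_{s+\delta}^{t+\delta}$ from Definition~\ref{dfn:merge_module} follows by case analysis on whether $U$ survives as a cluster at scale $t$: when $U$ survives, each summed $V$ either persists intact to scale $t+\delta$ and contributes in both $\phi_t\circ m_s^t([U])$ and $n_{s+\delta}^{t+\delta}\circ\phi_s([U])$, or is absorbed into a merger and contributes $0$ in both; when $U$ merges with other $A$-clusters at scale $t$, the $\Z_2$-sums associated to the merging $A$-clusters cancel consistently on the two sides.

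The central composition identity $\psi_{s+\delta}\circ\phi_s([U])=m_s^{s+2\delta}([U])$ is the main obstacle. The double-lift bound $d(a,\tau(\sigma(a)))\le 2\delta$ and a further chain argument place $\tau(\sigma(U))$ inside the scale-$(s+2\delta)$ cluster $U''\in\De_{SL}(A;s+2\delta)$ containing $U$ whenever $U$ still survives at that scale (and show the composition is $0$ otherwise), so every summand $[W]$ produced by the composition satisfies $W\subseteq U''$. To pin down the multiplicity, the cleanest route is to use Lemma~\ref{lem:merge_module_decomposition} to identify each merge module with a direct sum of interval summands indexed by merge sets and reduce the entire claim to a bijection between the mergegrams $\MG(\De_{SL}(A))$ and $\MG(\De_{SL}(B))$ whose birth and death coordinates differ by at most $\delta$ in $L_\infty$; the Isometry Theorem~\ref{thm:stability_persistence} then packages this matching back into the required $\delta$-interleaving. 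The hard part will be the combinatorial bookkeeping that $\Z_2$-counts the split-and-recollect behaviour of $\sigma$ and $\tau$ and exhibits precisely one non-cancelling summand $[U'']$ in the composition, which is why the interval-decomposition route appears preferable to a direct computation on generators.
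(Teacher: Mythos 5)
Your explicit maps are not homomorphisms of degree $\delta$ in the sense of Definition~\ref{dfn:homo_modules}: the naturality square fails, and the case analysis you sketch is where it breaks. Write $m,n$ for the structure maps of $M(\De_{SL}(A))$, $M(\De_{SL}(B))$ and take $A=\{0\}$, $B=\{-\delta,\delta\}\subset\R$, so $\HD(A,B)=\delta$, with $\sigma(0)=\delta$ and $U=\{0\}$. For $s=0$, $t=\delta$ you get $\phi_{\delta}\big(m_0^{\delta}([U])\big)=[\{-\delta,\delta\}]$, because the merged cluster of $\De_{SL}(B;2\delta)$ meets $\sigma(U)$, whereas $n_{\delta}^{2\delta}\big(\phi_0([U])\big)=n_{\delta}^{2\delta}([\{\delta\}])=0$ since $\{\delta\}$ is absorbed at scale $2\delta$. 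This is exactly the branch you claim "contributes $0$ in both": the absorbed summand does die on one side, but the absorbing cluster reappears with coefficient $1$ on the other side and nothing cancels it, because each cluster meeting $\sigma(U)$ enters your sum exactly once. The other branch fails as well: for $A=\{0,2\delta\}$, $B=\{\delta\}$, $U=\{0\}$, $s=0$, $t=2\delta$, one composite is $0$ and the other is $[\{\delta\}]$. The root cause is the bound you yourself derive: a scale-$s$ chain in $A$ only lifts to a scale-$(s+2\delta)$ chain in $B$, so this construction naturally produces maps shifted by $2\delta$; indeed in the first example the merge modules are $2\delta$- but not $\delta$-interleaved (their infinite summands are $\I[0,\infty)$ and $\I[2\delta,\infty)$), so no bookkeeping of your sums can yield a $\delta$-interleaving there, and the route can at best deliver a $2\delta$-interleaving (the same example is worth checking against the constant in the statement itself).

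The fallback you propose for the composition identity is circular. Decomposing by Lemma~\ref{lem:merge_module_decomposition} and "reducing the claim to a $\delta$-matching between $\MG(\De_{SL}(A))$ and $\MG(\De_{SL}(B))$" is precisely the bottleneck stability $\BD\leq\HD$ that this lemma exists to establish: in the paper the matching is deduced from the interleaving via Theorem~\ref{thm:stability_persistence} (that is how Theorem~\ref{thm:stability_mergegram} is proved), not the other way round, and you give no independent construction of such a matching, so the key identity $\psi_{s+\delta}\circ\phi_s=m_s^{s+2\delta}$ is never actually verified. For comparison, the paper's own proof uses neither selector maps nor sums over clusters: it sends a generator $[U]$, $U\in\De_{SL}(A;r)$, to the generator $[U]$ of $M_{r+\delta}(\De_{SL}(B))$ when the same subset $U$ occurs as a cluster of $\De_{SL}(B)$ whose life contains $r+\delta$, and to $0$ otherwise, and then checks the two interleaving diagrams directly on generators; your construction is genuinely different, but as it stands it neither commutes with the structure maps nor closes the composition step.
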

\begin{proof}[Proof of Lemma~\ref{lem:merge_modules_interleaved}]
The equality $\HD(A,B)=\de$ means that $A$ is covered by the union of closed balls that have the radius $\de$ and centers at all points $b\in B$.
This union is the preimage is $d_B^{-1}([0,\de])$, i.e. $A\subset d_B^{-1}([0,\de])$.
Extending the distance values by $s\geq 0$, we get $d_A^{-1}([0,s])\subset d_B^{-1}([0,s+\de])$ and similarly $d_B^{-1}([0,s])\subset d_A^{-1}([0,s+\de])$.
\medskip

\noindent
Let $U$ be an arbitrary set in $\De_{SL}(A)$.
Define map $\phi_r:M(A; r) \rightarrow M(B; r+\delta)$ 
$$\phi_r([U]) = \left\{ \begin{array}{ll} 
[U], & \mbox{ if } r+\delta\in \text{life}(\De_{SL}(B),U), \\
0, & \mbox{ otherwise }.
\end{array} \right.\qquad $$

\noindent
Symmetrically for any $V \in \De_{SL}(B)$ we define $\psi_r:M(B; r) \rightarrow M(A; r+\delta) $ 

$$\psi_r([V]) = \left\{ \begin{array}{ll} 
[V], & \mbox{ if } r+\delta\in \text{life}(\De_{SL}(A),V), \\
0, & \mbox{ otherwise }.
\end{array} \right.\qquad $$
In the notation above, $\text{life}(\De_{SL}(B),U)$ is the $\text{life}(U)$ in the dendrogram $\De_{SL}(B)$. 
If $U \notin \De_{SL}(B)(t)$ for all values $t$, then $\text{life}(U) = \emptyset$.
By symmetry it is enough to prove that the following diagrams commute:
\begin{figure}[H]
\centering
\begin{tikzpicture}[scale=1.2]
  \matrix (m) [matrix of math nodes,row sep=3em,column sep=4em,minimum width=2em]
  {
     M_s(\De_{SL}(A)) & M_t(\De_{SL}(A)) \\
     M_{s+\delta}(\De_{SL}(B)) & M_{t+\delta}(\De_{SL}(B)) \\};
  \path[-stealth]
    (m-1-1) edge node [left] {$\phi_s$} (m-2-1)
            edge [->] node [above] {$m^t_s$} (m-1-2)
    (m-2-1.east|-m-2-2) edge node [above] {$m^{t+\delta}_{s+\delta}$}
            node [above] {} (m-2-2)
      (m-1-2) edge node [right] {$\phi_t$} (m-2-2);
\end{tikzpicture}
\end{figure}
\medskip
\begin{figure}[H]
\centering
\begin{tikzcd}
 & M_s(\De_{SL}(B)) \arrow["\psi_s"]{dr}{} \\
M_{s-\delta}(\De_{SL}(A)) \arrow["\phi_{s-\delta}"]{ur}{} \arrow["m^{s+\delta}_{s-\delta}"]{rr}{} && M_{s+\delta}(\De_{SL}(A))
\end{tikzcd}
\end{figure}

\noindent
We note first that if $[a,b) = (\text{life}(\De_{SL}(A),U)$, then $(\text{life}(\De_{SL}(B),U) \subseteq [a-\epsilon, b+\epsilon)$
\medskip

\noindent
We begin by proving commutativity of the first diagram. Let $U$ be arbitrary element of $\De_{SL}(A)(s)$. If $t \notin \text{life}(\De_{SL}(A),U)$ then $\phi_t \circ m^t_s = 0$. If $s+\delta \notin \text{life}(\De_{SL}(B),U)$ or $t+\delta \notin \text{life}(\De_{SL}(B),U)$ then we are done. 
Since $t \notin \text{life}(\De_{SL}(A),U)$, it follows that $t+\delta \notin \text{life}(\De_{SL}(B),U)$.
And thus with given assumptions the diagram commutes.
\medskip

\noindent
Assume now that $t+\delta \notin \text{life}(\De_{SL}(A),U) $. Then both $\phi_t(m^t_s(U)) = 0 = m^{t+\delta}_{s+\delta}(\phi_s(U))$. 
In the last case we assume that $t \in \text{life}(\De_{SL}(A),U)$ and $t+\delta \in \text{life}(\De_{SL}(B),U)$. 
In this case obviously $s+\delta \in \text{life}(\De_{SL}(B),U)$ and thus $\phi_t(m^t_s([U])) = [U] = m^{t+\delta}_{s+\delta}(\phi_s([U]))$.
\medskip

\noindent
For the second diagram assume now that $U \in M(\De_{SL}(A))(s-\delta) $. 
Assume first that $s \notin \text{life}(\De_{SL}(B),U)$, then $s+\delta \notin \text{life}(\De_{SL}(B),U)$ and $m^{s+\delta}_{s-\delta}([U]) = 0 = \psi_s(\phi_{s-\delta}[U])$.
\medskip

\noindent
Assume then that $s \in \text{life}(\De_{SL}(B),U)$. Now outcome of both maps $\psi_s$ and $m^{s+\delta}_{s-\delta}$ depend on if $s+\delta \in \text{life}(\De_{SL}(A),U)$ and thus $m^{s+\delta}_{s-\delta}([U]) = \psi_s(\phi_{s-\delta}[U])$.
Since all the diagrams commute, the required conclusion follows.
\end{proof}

\begin{thmm}[stability of a mergegram]
\label{thm:stability_mergegram}
The mergegrams of any finite point clouds $A,B$ in a metric space $(X,d)$ satisfy $\BD(\MG(\De_{SL}(A)),\MG(\De_{SL}(B))\leq \HD(A,B)$.
Hence any small perturbation of a cloud $A$ in the Hausdorff distance yields a similarly small perturbation in the bottleneck distance for its mergegram $\MG(\De_{SL}(A))$.
\bs
\end{thmm}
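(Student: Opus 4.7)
The plan is to assemble Theorem~\ref{thm:stability_mergegram} from three ingredients that are already available in the paper: the interval decomposition of merge modules (Lemma~\ref{lem:merge_module_decomposition}), the interleaving estimate for merge modules built from Hausdorff-close clouds (Lemma~\ref{lem:merge_modules_interleaved}), and the isometry theorem for $q$-tame persistence modules (Theorem~\ref{thm:stability_persistence}). The strategy is to reinterpret the mergegram $\MG(\De_{SL}(A))$ as the persistence diagram $\PD(M(\De_{SL}(A)))$ of the merge module and then pull back the bottleneck bound from the interleaving distance on merge modules.

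The first step is the identification $\MG(\De_{SL}(A)) = \PD(M(\De_{SL}(A)))$. By Lemma~\ref{lem:merge_module_decomposition} the merge module decomposes as $M(\De_{SL}(A)) \cong \bigoplus_{U} \I(\life(U))$, where $U$ ranges over merge sets of $\De_{SL}(A)$ and $\life(U) = [\birth(U), \death(U))$. Applying Definition~\ref{dfn:persistence_diagram} to this decomposition, the associated persistence diagram is exactly the multiset of pairs $(\birth(U), \death(U))$, one for each merge set, which coincides with $\MG(\De_{SL}(A))$ by Definition~\ref{dfn:mergegram}. The same identification holds for $B$. In particular, each merge module is finitely decomposable, hence $q$-tame, so Theorem~\ref{thm:stability_persistence} is applicable.

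The second step sets $\de = \HD(A,B)$. Lemma~\ref{lem:merge_modules_interleaved} provides explicit homomorphisms of degree $\de$ between $M(\De_{SL}(A))$ and $M(\De_{SL}(B))$ that interleave the two modules, so by Definition~\ref{dfn:interleaving_distance} the interleaving distance satisfies $\ID(M(\De_{SL}(A)), M(\De_{SL}(B))) \leq \de$. The isometry theorem~\ref{thm:stability_persistence} then gives
\begin{equation*}
\BD\bigl(\PD(M(\De_{SL}(A))), \PD(M(\De_{SL}(B)))\bigr) = \ID\bigl(M(\De_{SL}(A)), M(\De_{SL}(B))\bigr) \leq \de.
\end{equation*}
Combining this with the identification from the first step yields $\BD(\MG(\De_{SL}(A)), \MG(\De_{SL}(B))) \leq \HD(A,B)$, as required.

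The main obstacle is really conceptual rather than technical at this stage, since the heavy lifting is already packaged in Lemmas~\ref{lem:merge_module_decomposition}–\ref{lem:merge_modules_interleaved}; one only needs to verify that the bookkeeping of birth–death pairs matches between Definition~\ref{dfn:mergegram} and Definition~\ref{dfn:persistence_diagram}, and that merge sets with infinite life (e.g.\ the full set $X$) are correctly encoded by the interval module $\I([b, +\infty))$ so that the pair $(b, +\infty)$ appears in both $\MG$ and $\PD$ with matching multiplicity. Once that dictionary is in place, the proof is a direct quotation of the isometry theorem.
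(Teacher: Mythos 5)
Your proposal is correct and follows essentially the same route as the paper's proof: it uses Lemma~\ref{lem:merge_modules_interleaved} to bound the interleaving distance by $\HD(A,B)$, Lemma~\ref{lem:merge_module_decomposition} to conclude the merge modules are finitely decomposable and hence $q$-tame, and then quotes the isometry Theorem~\ref{thm:stability_persistence} to pass to the bottleneck distance. The only difference is that you spell out the identification $\MG(\De_{SL}(A))=\PD(M(\De_{SL}(A)))$ explicitly, which the paper leaves implicit; this is a welcome clarification rather than a new argument.
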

\begin{proof}
The given clouds $A,B\subset X$ with $\HD(A,B)=\de$ have $\de$-interleaved merge modules by Lemma~\ref{lem:merge_modules_interleaved}, so $\ID(\MG(\De_{SL}(A)),\MG(\De_{SL}(B))\leq\de$.
Since any merge module $M(\De)$ is finitely decomposable, hence is $q$-tame by Lemma~\ref{lem:merge_module_decomposition}. 
The corresponding mergegram $\MG(M(\De))$ satisfies Theorem~\ref{thm:stability_persistence}, so
$\BD(\MG(\De_{SL}(A)),\MG(\De_{SL}(B))\leq\de$.
\end{proof}

\begin{figure}[h!]
\centering
\includegraphics[width=0.49\linewidth]{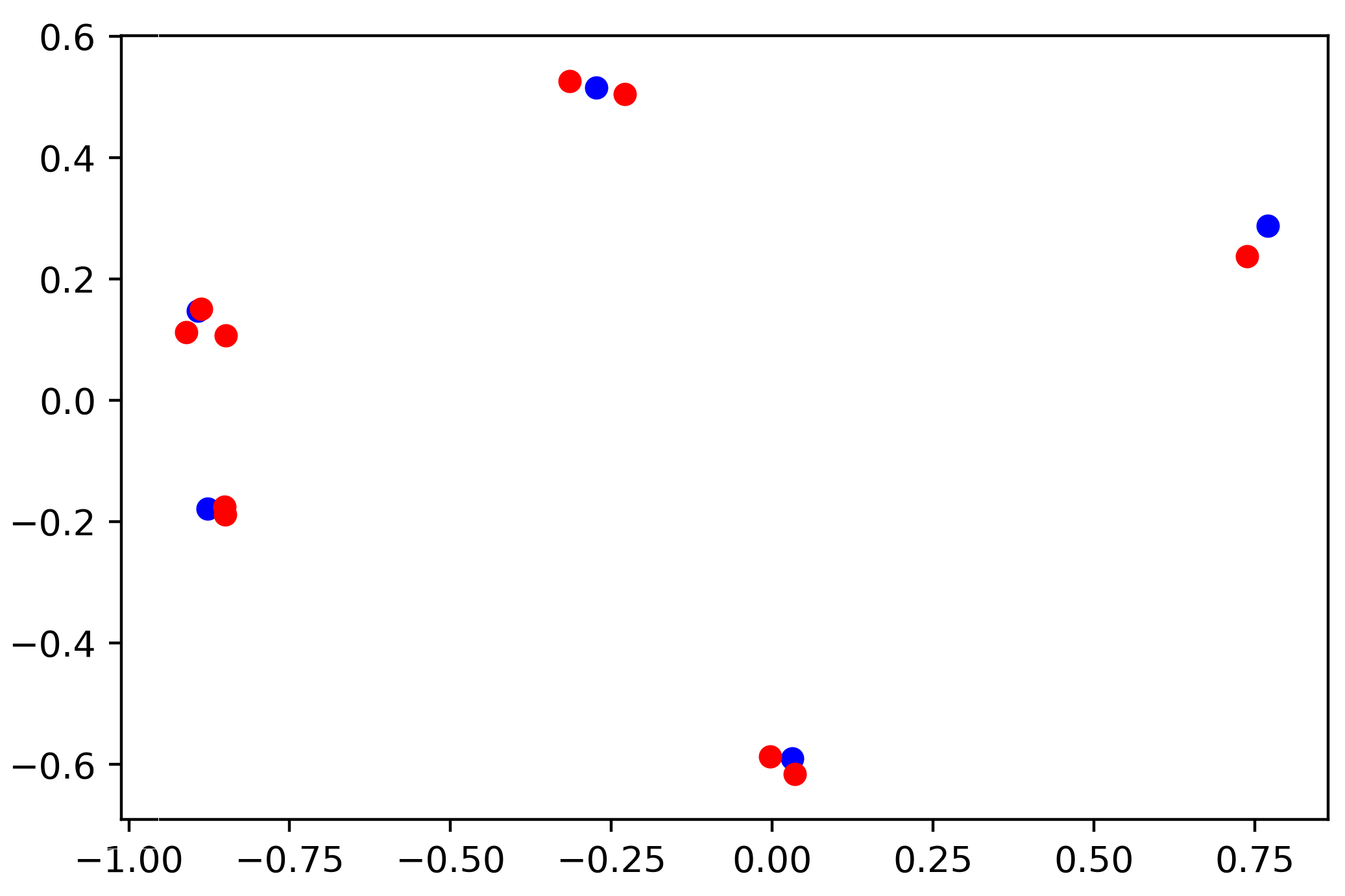}
\includegraphics[width=0.49\linewidth]{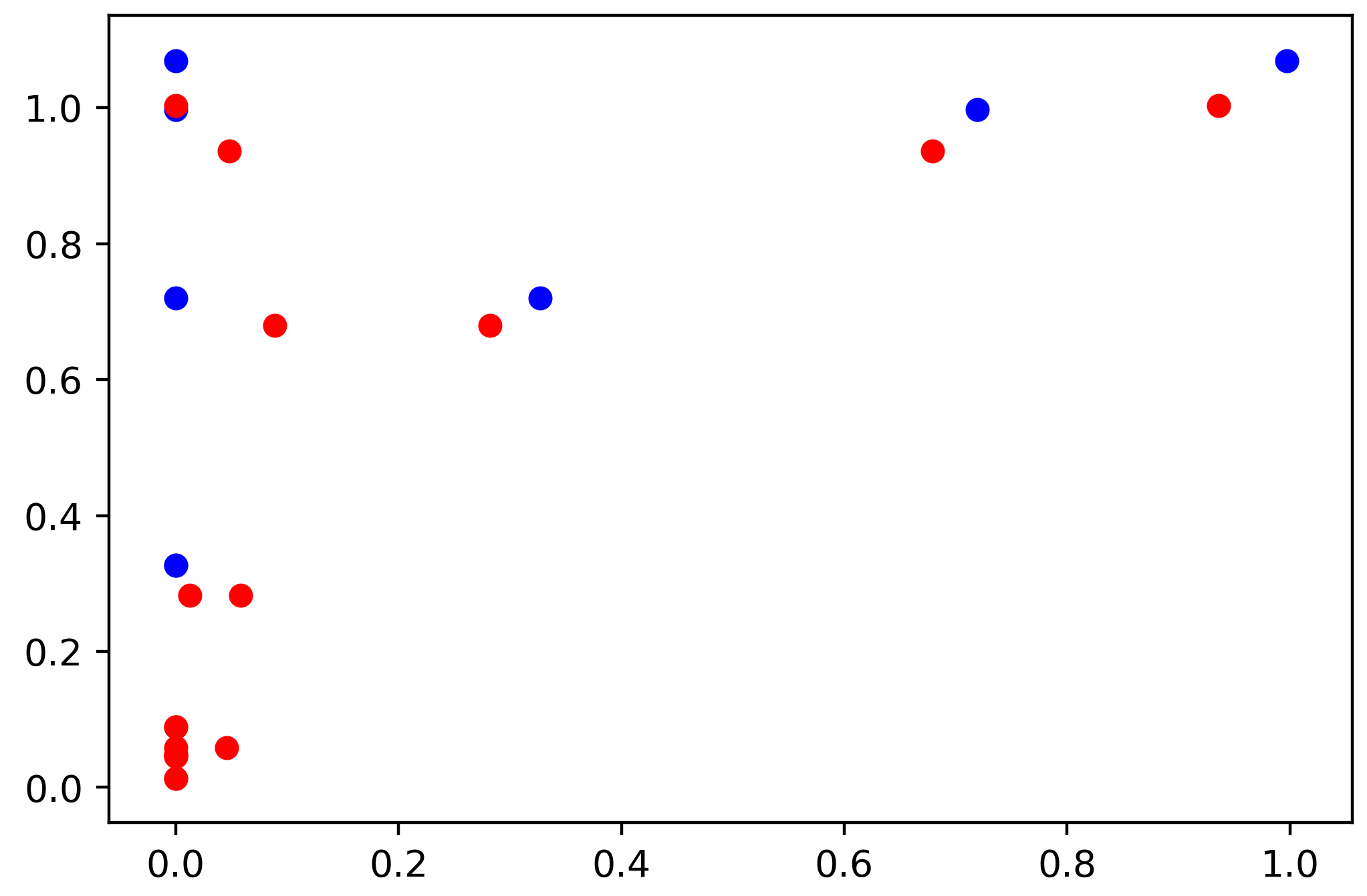}
\caption{\textbf{Left}: the cloud $C$ of 5 blue points is close to the cloud $C'$ of 10 red points in the Hausdorff distance. \textbf{Right}: the mergegrams are close in the bottleneck distance as predicted by Theorem~\ref{thm:stability_mergegram}.}
    \label{fig:perturbed_mergegram}
\end{figure}

\noindent 
Fig.~\ref{fig:perturbed_mergegram} illustrates Theorem~\ref{thm:stability_mergegram} on a cloud and its perturbation by showing their close mergegrams.
The more extensive experiment on 100 clouds  in \cite[Fig.~8]{elkin2020mergegram} similarly confirms that the mergegram is perturbed within expected bounds.  
The computational complexity of the mergegram $\MG(\De_{SL}(A))$ was proved to be near linear in the number $n$ of points in a cloud $A\subset\R^m$, see \cite[Theorem~8.2]{elkin2020mergegram}.
The results above justify that the invariant mergegram satisfies conditions (a,b,c) of Isometry Recognition Problem~\ref{pro:recognition}.

\medskip

\noindent 
Algorithm~\ref{alg:mergegram} computes the mergegram of the SL dendrogram for any finite set $R\subset\R^m$.  Let $\alpha(|R|)$ be the inverse Ackermann function.
Other constants below are defined in \cite{march2010fast}.
\begin{algorithm}
\caption{ This algorithm takes a finite metric space $(R,d)$ as input and 
computes its mergegram, which is returned as the output.  }
  \label{alg:mergegram}
\begin{algorithmic}[1]
   \STATE \textbf{Input} : a finite point cloud $R\subset\mathbb{R}^m$
   \STATE Compute $\MST(R)$ and sort all edges of $\MST(R)$ in increasing order of length
   \STATE Initialize Union-Find structure $U$ over $R$. Set all points of $R$ to be their components.
   \STATE Initialize the function $\text{prev: Components}[U] \rightarrow \mathbb{R}$ by setting $\text{prev}(t) = 0$ for all $t$
   \STATE Initialize the vector Output that will consists of pairs in $\mathbb{R} \times \mathbb{R}$
   \FOR{Edge $e = (a,b)$  in the set of edges (increasing order)}
   \STATE Find components $c_1$ and $c_2$ of $a$ and $b$ respectively in Union-Find $U$
   \STATE Add pairs (prev$[c_1]$,length($e$)), (prev$[c_2]$,length($e$))  $\in \mathbb{R}^2$ to Output
   \STATE Merge components $c_1$ and $c_2$ in Union-Find $U$ and denote the component by $t$
   \STATE Set prev$[t]$ = length($e$)
   \ENDFOR
   \STATE \textbf{return} Output
\end{algorithmic}
\end{algorithm}

\begin{thmm}[a fast mergegram computation]
\label{thm:complexity}
For any cloud $R\subset\mathbb{R}^m$, the mergegram $\MG(\De_{SL}(R))$ can be computed in time 
$$O\Big ((c_m(R))^{4+ \ceil{\log_2(\rho(R))}} \cdot \log_2(\Delta(|R|)) \cdot |R| \cdot \log_2(|R|) \cdot \alpha(|R|)\Big ),$$ 
	where all parameters were introduced in Definitions \ref{dfn:expansion_constant}, \ref{dfn:radius+d_min}, \ref{dfn:edge_lengths}. 
\end{thmm}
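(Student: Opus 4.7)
The plan is to bound the runtime of Algorithm~\ref{alg:mergegram} step by step and identify the dominant cost, which will come from the MST construction provided by the machinery of Chapter~\ref{ch:mst}.

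First, I would observe that the only geometric computation in Algorithm~\ref{alg:mergegram} is the construction of $\MST(R)$ in line~2; every subsequent step is a purely combinatorial post-processing. For the MST step, I would build a compressed cover tree $\T(R)$ on $R$ using Algorithm~\ref{alg:cover_tree_k-nearest_construction_whole} and then run the single-tree Bor\r{u}vka Algorithm~\ref{alg:single_tree_boruvka}. By Theorem~\ref{thm:construction_time}, the compressed cover tree can be constructed in time $O((c_m(R))^8 \cdot \log_2(\Delta(R)) \cdot |R|)$, and by Corollary~\ref{cor:single_cover_tree_mst_time}, the MST is then produced in time
\[
O\Big((c_m(R))^{4+\ceil{\log_2(\rho(R))}} \cdot \log_2(\Delta(R)) \cdot |R| \cdot \log_2(|R|) \cdot \alpha(|R|)\Big),
\]
which already matches the bound claimed in the statement.

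Next, I would account for the remaining steps of Algorithm~\ref{alg:mergegram}. Since $\MST(R)$ has exactly $|R|-1$ edges, sorting them by length (line~2) takes $O(|R|\log_2|R|)$ time using any comparison-based sort. Initialising the Union-Find structure and the $\mathrm{prev}$ map (lines~3--5) costs $O(|R|)$. The main loop (lines~6--11) iterates $|R|-1$ times; in each iteration it performs a constant number of Union-Find operations (two finds, one union), each of amortised cost $O(\alpha(|R|))$, and appends two pairs to the output. Hence the total cost of lines~2--12 is $O(|R| \log_2|R| \cdot \alpha(|R|))$.

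Finally, I would add these contributions and verify that the MST construction dominates. Both the cover-tree construction cost $O((c_m(R))^{8} \log_2(\Delta(R)) |R|)$ and the post-processing cost $O(|R|\log_2|R| \cdot \alpha(|R|))$ are absorbed into the bound of Corollary~\ref{cor:single_cover_tree_mst_time} since $c_m(R)^{4+\ceil{\log_2(\rho(R))}} \geq 1$ and the exponent $4+\ceil{\log_2(\rho(R))} \geq 4$ typically dwarfs the $c_m(R)^8$ prefactor once combined with the extra $\log_2(|R|)\cdot\alpha(|R|)$ factor. I expect no real obstacle here: the proof is essentially an assembly argument, and the main substantive input (time complexity of exact metric MST on a compressed cover tree) has already been established as Corollary~\ref{cor:single_cover_tree_mst_time}. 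The only care needed is the verification that the mergegram bookkeeping in the Union-Find loop genuinely outputs the pairs $(\birth,\death)$ specified by Definition~\ref{dfn:mergegram}, which follows from the standard fact that the single-linkage dendrogram on $R$ coincides with the hierarchy of merges produced by processing MST edges in increasing length order.
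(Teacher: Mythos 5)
Your proposal is correct and follows essentially the same route as the paper: invoke Corollary~\ref{cor:single_cover_tree_mst_time} for the MST step and observe that the remaining sorting and Union-Find bookkeeping, costing $O(|R|\log_2|R|\cdot\alpha(|R|))$, is absorbed into that bound. Your extra care in accounting for the compressed cover tree construction (Theorem~\ref{thm:construction_time}) is a harmless refinement the paper leaves implicit, and the absorption works since $\rho(R)\geq 17$ forces $4+\ceil{\log_2(\rho(R))}\geq 9\geq 8$.
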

\begin{proof}
By Theorem \ref{cor:single_cover_tree_mst_time} the Minimum Spanning Tree $\MST(R)$ needs 
$$O\Big ((c_m(R))^{4+ \ceil{\log_2(\rho(R))}} \cdot \log_2(\Delta(|R|)) \cdot |R| \cdot \log_2(|R|) \cdot \alpha(|R|)\Big ).$$ 
The rest of Algorithm~\ref{alg:mergegram} is dominated by $O(|R|\alpha(|R|))$ Union-Find operations. 
Hence the full algorithm has the same computational complexity as the MST.
\end{proof}


\section{Experiments on a classification of point sets}
\label{sec:experiments_mg_f}

\noindent 
The experiments summarized in Fig.~\ref{fig:100-point_clouds} show that the mergegram curve in blue outperforms other isometry invariants on the isometry classification by the state-of-the-art PersLay.
We generated 10 classes of 100-point clouds within the unit ball $\R^m$ for $m=2,3,4,5$.
For each class, we made 100 copies of each cloud and perturbed every point by a uniform random shift in a cube of the size $2\times\epsilon$, where $\epsilon$ is called a \emph{noise bound}. 
For each of 100 perturbed clouds, we added 25 points such that every new point is $\epsilon$-close to an original point.
Within each of 10 classes all 100 clouds were randomly rotated within the unit ball around the origin, see Fig.~\ref{fig:clouds}. 
For each of the resulting 1000 clouds, we computed the mergegram, 0D persistence diagram and the diagram of pairs of distances to two nearest neighbors for every point. 

\newcommand{\hh}{44mm}
\begin{figure}[h!]
\includegraphics[height=\hh]{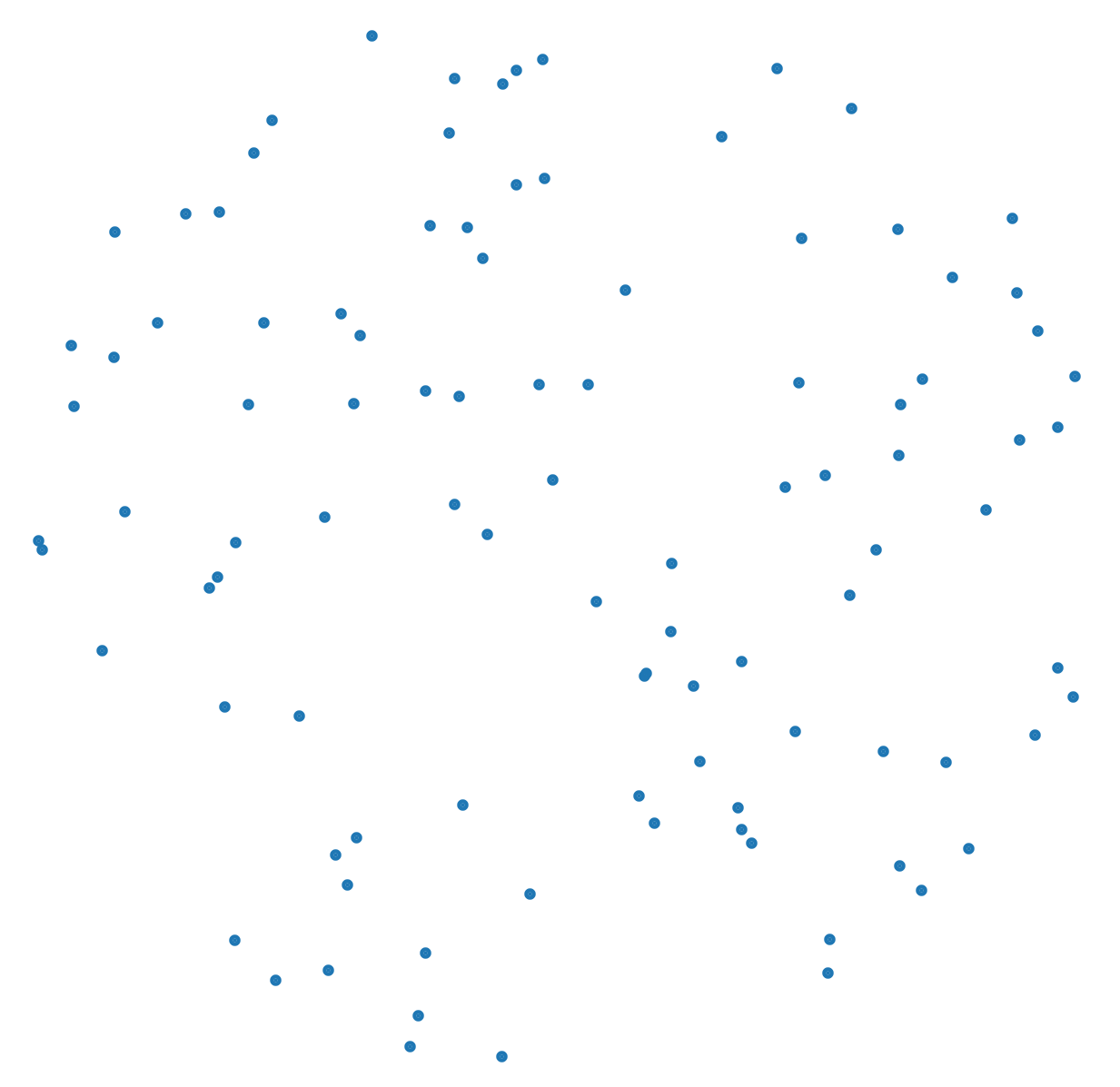}
\includegraphics[height=\hh]{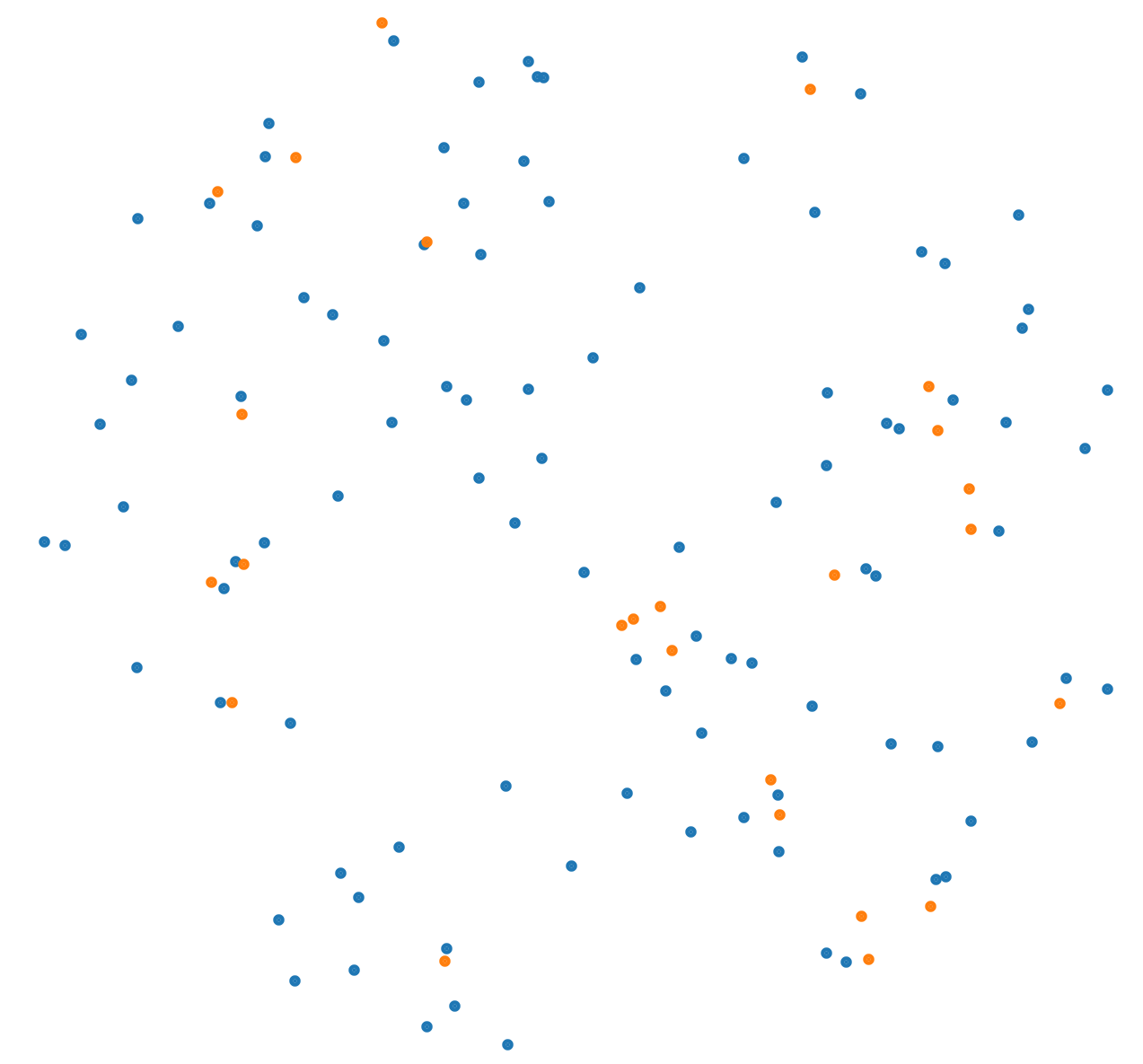}
\includegraphics[height=\hh]{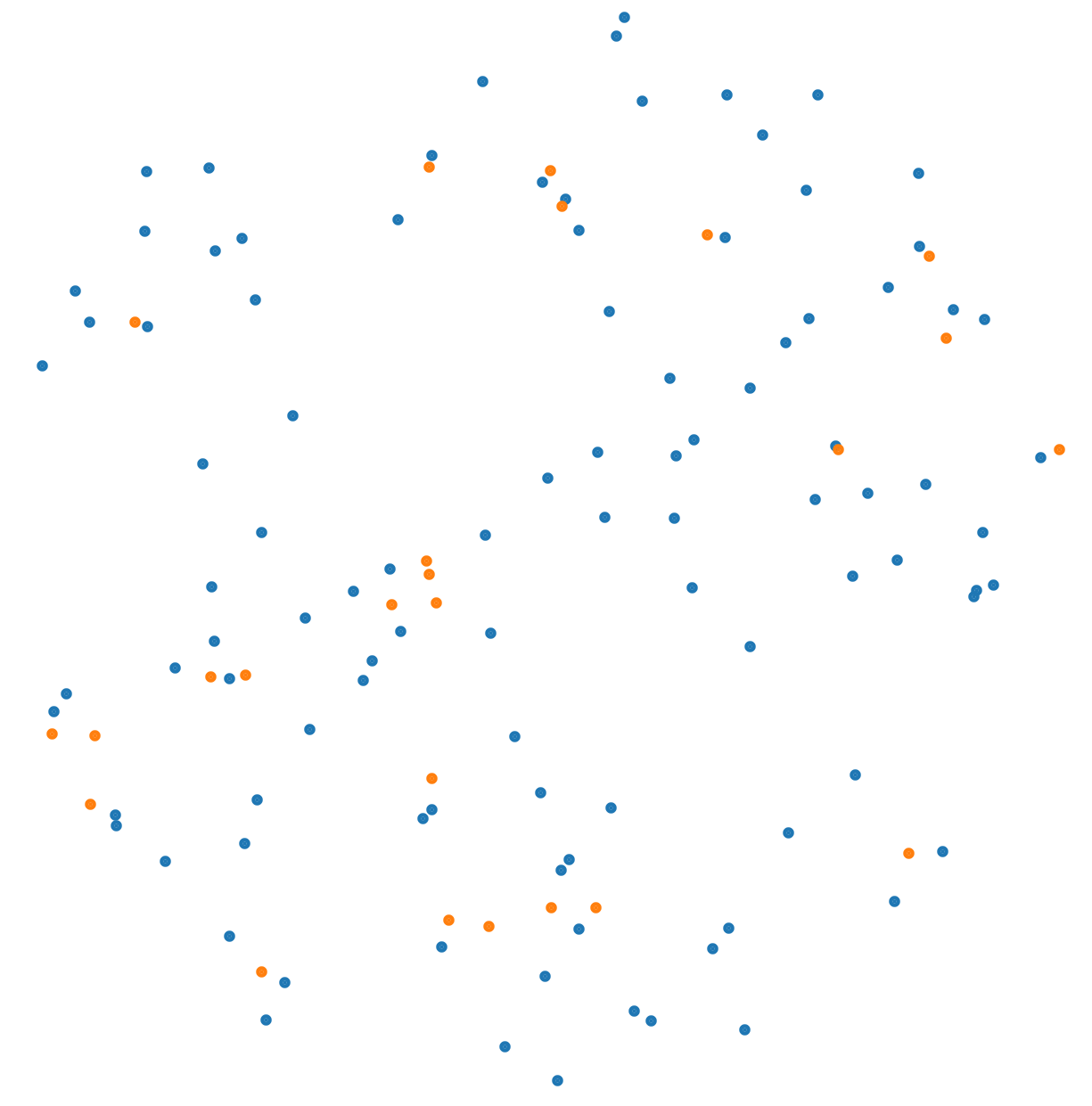}
\caption{\textbf{Left}: an initial random cloud with 100 blue points.
\textbf{Middle}: all blue points are perturbed, 25 extra orange points are added.
\textbf{Right}: a cloud is rotated through a random angle. 
Can we recognize that the initial and final clouds are in the same isometry class modulo small noise?}
\label{fig:clouds}
\end{figure}

\noindent 
The machine learning part has used the obtained diagrams as the input-data for the Perslay \cite{carriere2019perslay}. 
Each dataset was split into learning and test subsets in ratio 4:1. 
The learning loops ran by iterating over mini-batches consisting of 128 elements and going through the full dataset for a given number of epochs. 
The success rate was measured on the test subset.
\medskip

\noindent
The original Perslay module was rewritten in Tensorflow v2 and RTX 2080 graphics card was used to run the experiments.  
The technical concepts of PersLay are explained in \cite{carriere2019perslay}:

\begin{itemize}
    \item Adam(Epochs = 300, Learning rate = 0.01)
    \item Coefficents = Linear coefficents
    \item Functional layer = [PeL(dim=50), PeL(dim=50, operationalLayer=PermutationMaxLayer)]. 
    \item Operation layer = TopK(50)
\end{itemize}

The PersLay training has used the following invariants compared in Fig.~\ref{fig:100-point_clouds}:
\begin{itemize}
\item cloud : the initial cloud $A$ of points corresponds to the baseline curve in black;
\item PD0: the 0D persistence diagram $\PD$ for distance-based filtrations of sublevel sets in red;
\item NN(2) brown curve: for each point $a\in A$ includes distances to two nearest neighbors;
\item the mergegram $\MG(\De_{SL}(A))$ of the SL dendrogram has the blue curve above others.
\end{itemize}

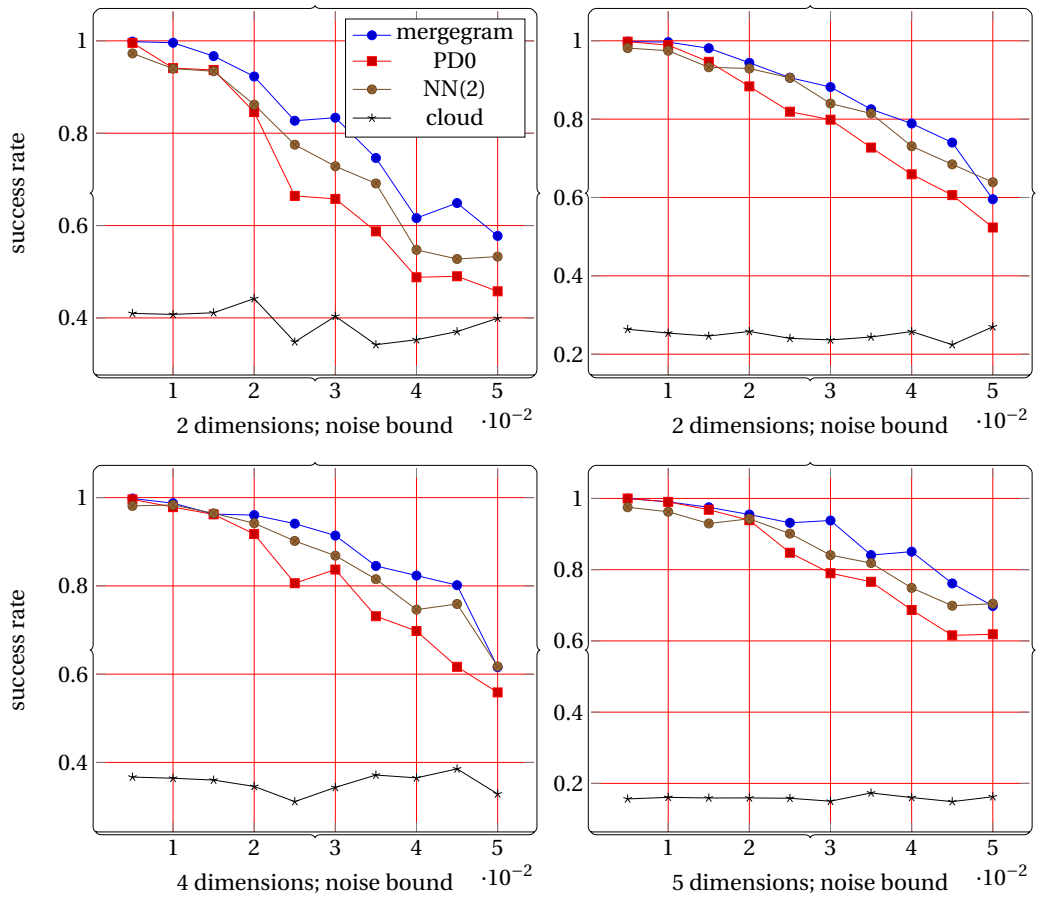
\begin{figure}
\begin{tikzpicture}[scale=0.85]
\begin{axis}[xlabel = 2 dimensions; noise bound, ylabel = success rate,grid]
\addplot table [x=e, y=m, col sep=comma] {100Points25NoiseResults/dim2.csv};
\addlegendentry{mergegram}
\addplot table [x=e, y=h, col sep=comma] {100Points25NoiseResults/dim2.csv};
\addlegendentry{PD0}
\addplot table [x=e, y=n, col sep=comma] {100Points25NoiseResults/dim2.csv};
\addlegendentry{NN(2)}
\addplot table [x=e, y=c, col sep=comma] {100Points25NoiseResults/dim2.csv};
\addlegendentry{cloud}
\end{axis}
\end{tikzpicture}
\begin{tikzpicture}[scale=0.85]
\begin{axis}[xlabel = 2 dimensions; noise bound, grid]
\addplot table [x=e, y=m, col sep=comma] {100Points25NoiseResults/dim3.csv};
\addplot table [x=e, y=h, col sep=comma] {100Points25NoiseResults/dim3.csv};
\addplot table [x=e, y=n, col sep=comma] {100Points25NoiseResults/dim3.csv};
\addplot table [x=e, y=c, col sep=comma] {100Points25NoiseResults/dim3.csv};
\end{axis}
\end{tikzpicture}
\medskip

\begin{tikzpicture}[scale=0.85]
\begin{axis}[xlabel = 4 dimensions; noise bound, ylabel = success rate,grid]
\addplot table [x=e, y=m, col sep=comma] {100Points25NoiseResults/dim4.csv};
\addplot table [x=e, y=h, col sep=comma] {100Points25NoiseResults/dim4.csv};
\addplot table [x=e, y=n, col sep=comma] {100Points25NoiseResults/dim4.csv};
\addplot table [x=e, y=c, col sep=comma] {100Points25NoiseResults/dim4.csv};
\end{axis}
\end{tikzpicture}
\begin{tikzpicture}[scale=0.85]
\begin{axis}[xlabel = 5 dimensions; noise bound,grid]
\addplot table [x=e, y=m, col sep=comma] {100Points25NoiseResults/dim5.csv};
\addplot table [x=e, y=h, col sep=comma] {100Points25NoiseResults/dim5.csv};
\addplot table [x=e, y=n, col sep=comma] {100Points25NoiseResults/dim5.csv};
\addplot table [x=e, y=c, col sep=comma] {100Points25NoiseResults/dim5Cor.csv};
\end{axis}
\end{tikzpicture}
\caption{Success rates of PersLay in identifying isometry classes of 100-point clouds uniformly sampled in a unit ball, averaged over 5 different clouds and 5 cross-validations with 20/80 splits. 
}
\label{fig:100-point_clouds}
\end{figure}

\noindent 
Fig.~\ref{fig:100-point_clouds} shows that the new mergegram has outperformed all other invariants on the isometry classification problem.
The 0D persistence turned out to be weaker than the pairs of distances to two neighbors.
The topological persistence has found applications in data skeletonization with theoretical guarantees \cite{kurlin2015homologically,kalisnik2019higher}. 
We are planning to extend the experiments in section~\ref{sec:experiments_mg} for classifying rigid shapes by comining the new mergegram with the 1D persistence, which has the fast $O(n\log n)$ time for any 2D cloud of $n$ points \cite{kurlin2014fast, kurlin2014auto}.

\newpage

\section{New experiments on isometry recognition of distorted shapes}
\label{sec:experiments_mg}

This section fulfills final condition (d) of Problem~\ref{pro:recognition} by experimentally comparing the mergegram with 0D persistence and distributions of distances to neighbors on 15000 clouds.
Section~\ref{sec:experiments_mg_f} did experiments only on randomly generated clouds.
\medskip

\noindent
We considered 15 classes of shapes represented by black and white images of mythical creatures \cite{bronstein2008analysis}, see Fig.~\ref{fig:myth_images}.
These shapes were chosen to make the shape recognition problem really challenging.
Indeed, similar creatures from this dataset are represented by slightly different shapes, which can be hard to isometrically distinguish from each other.
For example, several images of a horse include only minor differentiating features such as a saddle or a different tails, which makes horses nearly identical.
\medskip

\noindent
\textbf{Shape generation}.
For each image, we generated 1000 perturbed images by affine and projective transformations to get 15000 distorted shapes split into 15 classes.
\medskip

\noindent
First we rotated each image around its central point by an angle generated uniformly in the interval $[0,2\pi)$ using the function cv::rotate from the OpenCV library. 
If needed, we extended the resulting image to fit all black pixels of the rotated shape into a bounding box.
Then both affine and projective transformations distort each image by using a noise parameter $\de$ such that the value $\de=0$ represents the identity transformation.
\medskip

\noindent
Fig.~\ref{fig:distorted_shapes_corner_points} illustrates how an original image is randomly rotated, then randomly distorted by affine or projective transformations depending on the noise parameter $\de$.
\medskip

\begin{figure}
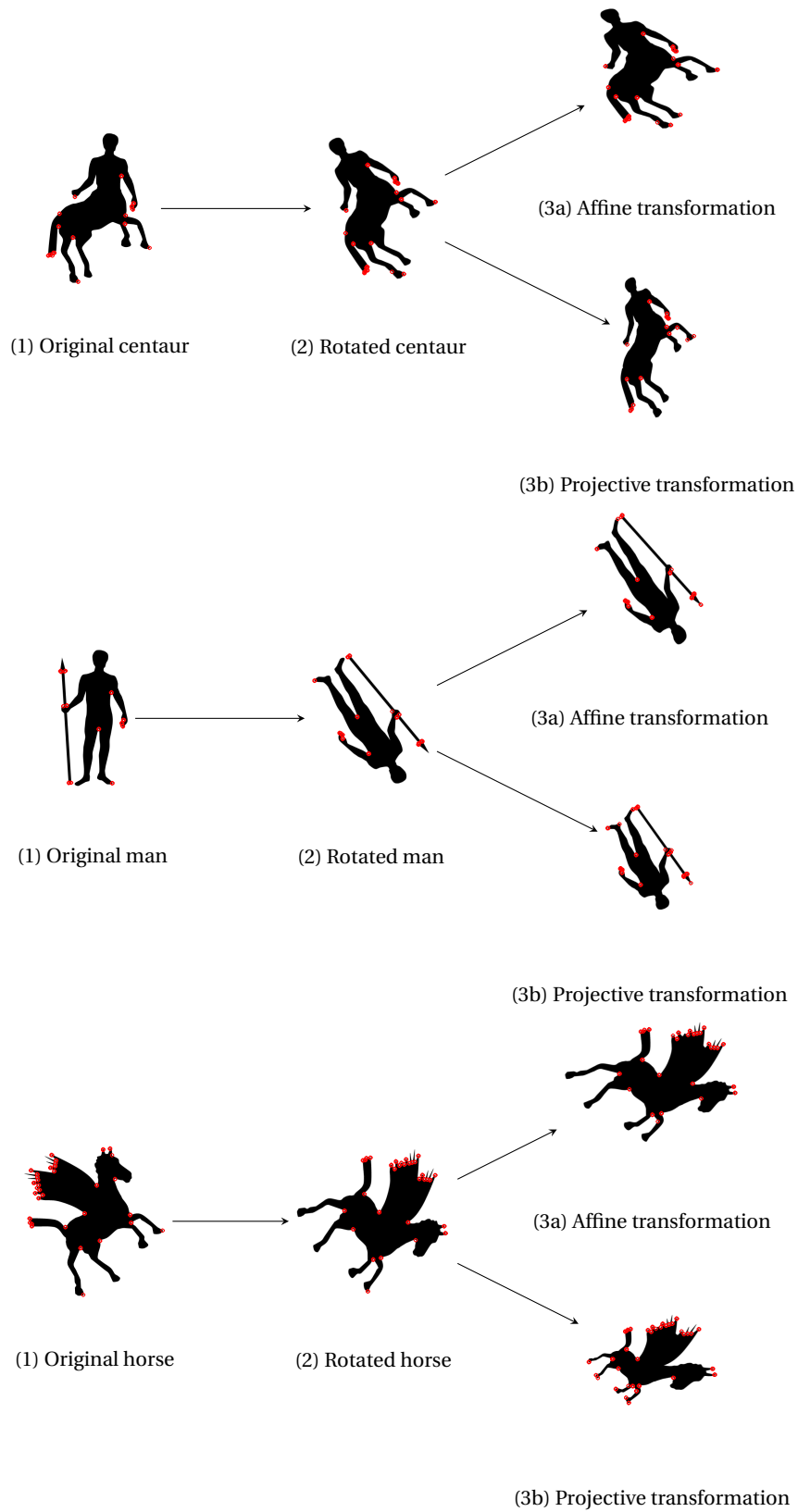

\footnotesize
\centering
\input figures/distorted_shapes_corner_points.txt
\caption{Generating distorted shapes by applying random rotations, affine and projective transformations, which substantially affect the extracted clouds of Harris corner points \cite{sanchez2018analysis} in red.}
\label{fig:distorted_shapes_corner_points}
\end{figure}

\noindent 
\textbf{Affine transformations} are implemented as compositions of the already applied rotations above and the function cv::resize() from the OpenCV library.
This function scales an image of size $w\times h$ by horizontal and vertical factors $a,b$ sampled as follows.
\medskip

\noindent
$\bullet$
\textbf{Uniform noise}:
$a \in [1-\delta w, 1+\delta w]$, 
$b \in [1-\delta h, 1+\delta h]$ have uniform distributions.
\medskip

\noindent
$\bullet$
\textbf{Gaussian noise}: 
$a \in \mathcal{N}(1,\delta h) \cap \R_{+}$ and 
$b \in \mathcal{N}(1,\delta w) \cap \R_{+}$
have Gaussian distributions with mean 1 and standard variance $\de h, \de v$, truncated to positive numbers.
\medskip

\noindent
\textbf{Projective transformation} are implemented as compositions of the already applied rotations above and the OpenCV function  cv::getPerspectiveTransform() function, which is parametrized by 4-dimensional array $v = (a_0, a_1, a_2, a_3)$ consisting of points $a_i\in\Z^2$, $i=0,1,2,3$.
This function maps the corners of the image as follows: $$(0,0)\mapsto a_0, (0,h) \mapsto a_1, (w, 0) \mapsto a_2  \text{ and } (w, h) \mapsto a_3.$$
Then the projective transformation of the rectangle $w\times h$ is uniquely determined by the above corners.
The above points $a_i$ are randomly sampled by using a noise parameter $\de$.
\medskip

\noindent
$\bullet$
\textbf{Uniform noise}: each coordinate has a uniform distribution with a noise parameter $\de$  
$$a_0 \in [0,\delta w] \times [0,\delta h],\qquad 
a_1 \in [0,\delta w] \times [h - \delta h, h],$$ 
$$a_2 \in [w - \delta w, w] \times [0,\delta h],\qquad 
a_3 \in [w - \delta w, w] \times [h - \delta h, h].$$

\noindent
$\bullet$
\textbf{Gaussian noise}: each coordinate has a Gaussian distribution truncated to the image
$$a_0 \in (\mathcal{N}(0,\delta w)\cap [0,w]) \times (\mathcal{N}(0,\delta h)\cap [0,h]),$$
$$a_1 \in (\mathcal{N}(0,\delta w)\cap [0,w]) \times (\mathcal{N}(h,\delta h)\cap [0,h]),$$ 
$$a_2 \in (\mathcal{N}(w,\delta w)\cap [0,w])\times(\mathcal{N}(0,\delta h)\cap [0,h]),$$
$$a_3 \in (\mathcal{N}(w,\delta w)\cap [0,w])\times(\mathcal{N}(h,\delta h)\cap [0,h]).$$

\noindent
\textbf{Point cloud extraction}. For each distorted image, we extract classical Harris point corners \cite{sanchez2018analysis} due to their simplicity, see the red points in Fig.~\ref{fig:distorted_shapes_corner_points}.
For detecting corner points, the OpenCV function cv::cornerHarris was used with the parameters blockSize = 3, apertureSize = 5, k = 0.04, thresh = 120.
However one can use any reliable algorithms such as FAST \cite{rosten2008faster} or scale-invariant feature transform (SIFT) \cite{lowe1999object}. 
\medskip

\noindent
After describing the available point cloud data above, we specify condition~(\ref{pro:recognition}d) of Isometry Recognition Problem~\ref{pro:recognition} in the context of supervised machine learning.

\begin{pro}[experimental recognition]
\label{pro:experimental}
Given a labeled dataset split into classes of similar but projectively distorted shapes, develop a supervised learning tool to recognize a class of distorted shapes with a high accuracy despite substantial noise.  
\end{pro}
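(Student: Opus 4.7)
The plan is to convert the supervised recognition task into a classification problem on point clouds derived from images, then exploit the isometry invariance and stability of the mergegram to obtain robust features for a permutation-invariant neural classifier. Concretely, I would first generate the labeled dataset described in Section~7.7: for each of the 15 mythical-shape classes, produce 1000 distorted variants via the rotation/affine/projective pipeline at several noise levels $\delta$, and from each distorted image extract a finite point cloud $R\subset\R^2$ of Harris corner points. Each point cloud inherits the class label of its source image, giving 15000 labeled clouds, to be split 80/20 into training and test subsets with 5-fold cross validation.

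Next I would compute, for every cloud $R$, four competing representations: (i) the raw coordinates as a baseline, (ii) the $0$D persistence diagram $\PD\{H_0(X_s^{d_R})\}$ of the distance-based filtration, (iii) the two-nearest-neighbour distribution $\NN(2)$, and (iv) the mergegram $\MG(\De_{\SL}(R))$ of Definition~\ref{dfn:mergegram}. Each can be computed in near-linear time in $|R|$ via Algorithm~\ref{alg:mergegram} combined with the MST construction of Chapter~\ref{ch:mst}, which keeps the full pipeline tractable across 15000 clouds. These feature sets are unordered multisets of points in $\R^2$, so I would feed them into PersLay~\cite{carriere2019perslay} using the same architectural choices as Section~\ref{sec:experiments_mg_f} (Adam with learning rate $0.01$, 300 epochs, linear coefficients, two PeL layers of dimension 50, TopK(50) operation layer), rewritten in TensorFlow~v2 and trained on an RTX~2080. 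Success is measured as the test-set top-1 classification accuracy averaged over cross-validation folds, and reported as a curve against the noise parameter $\delta$ for both uniform and Gaussian noise, and for both affine and projective distortion regimes.

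The expected hard part is twofold. First, the theoretical guarantee behind the mergegram — stability under Hausdorff perturbations via Theorem~\ref{thm:stability_mergegram} — only directly covers isometric perturbations of the underlying cloud, whereas the transformations in the experiment are affine and projective, which genuinely change inter-point distances and therefore the invariant itself. The working hypothesis is that for small $\delta$ a projective map on a bounded image region is close to an affine map, which in turn is close to a similarity, so the induced change in pairwise distances grows controllably with $\delta$; this would mean $\BD(\MG(\De_{\SL}(R)),\MG(\De_{\SL}(R')))$ remains small and that the learned PersLay decision boundary generalises across the class. Second, Harris corner extraction is itself unstable: under distortion, corners may appear, disappear, or shift, so $R$ and its distorted counterpart $R'$ can differ in cardinality and in Hausdorff distance by more than the geometric distortion alone. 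This is precisely the regime where the mergegram's advantage over the cloud representation should manifest, since Theorem~\ref{thm:stability_mergegram} gives a bound that depends only on $\HD(R,R')$ rather than on $|R|$ or on a point-to-point correspondence.

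Finally I would compare the four feature sets on the same folds and plot the resulting accuracy curves, analogous to Fig.~\ref{fig:100-point_clouds}, separately for the \emph{max-layer affine}, \emph{max-layer projective}, and \emph{image-layer affine} regimes referenced by Fig.~\ref{fig:max-layer_affine}, \ref{fig:max-layer_projective}, \ref{fig:image-layer_affine}. A successful resolution of Problem~\ref{pro:experimental} consists in showing that the mergegram curve dominates the $\PD_0$, $\NN(2)$, and raw-cloud curves across a wide range of $\delta$, thereby empirically confirming conditions (a)--(d) of Problem~\ref{pro:recognition} on realistic shape data rather than on synthetic point clouds; if the mergegram does not dominate in the projective regime at large $\delta$, the fallback plan is to concatenate it with $\NN(2)$ as a richer, still permutation-invariant feature for PersLay.
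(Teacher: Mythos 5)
Your proposal follows essentially the same route as the paper's own resolution in Section~\ref{sec:experiments_mg}: generate the 15000 affinely/projectively distorted shapes, extract Harris corner clouds, compute the competing isometry invariants, and compare them via the permutation-invariant PersLay classifier with cross-validated accuracy-versus-noise curves. The only minor deviations are in configuration details --- the paper strengthens $\NN(2)$ to $\NN(4)$ after finding the former too weak, and trains the dedicated MAX(75) and Im[20,20] layers (150 epochs, SparseCategoricalCrossEntropy) rather than the Section~\ref{sec:experiments_mg_f} setup --- which do not change the approach.
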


\noindent
Since all isometry invariants are independent of point ordering, the most suitable neural network is PersLay \cite{carriere2019perslay} whose output is invariant under permutations by design. 
Each layer is a combination of a coefficient layer $\omega(p):\mathbb{R}^m \rightarrow \mathbb{R}$, a transformation $\phi(p):\mathbb{R}^m \rightarrow \mathbb{R}^q$ and a permutation invariant layer $\text{op}$ combined as follows
$$\text{PersLay}(D) = \text{op}(\{w(p)\phi(p)\}_{p\in D}), \text{ where } D \text{ is a diagram or multiset of points in }\R^m.$$

\noindent
Coordinates of all input points are linearly normalized to $[0,1]$.
We have used the following parameters of the PersLay network for all experiments below.
\medskip

\noindent    
\textbf{The max layer} $\text{MAX}(q)$ consists of the following functions.
    \begin{itemize}
        \item The coefficient layer $w : \mathbb{R}^m \rightarrow \mathbb{R}$ is the weight $w(x_1,\dots,x_m) = k|x_1-x_2|$, where $k$ is a trainable scalar and the dimension is typically $m=2$. 
        \item The transformation layer $\phi: \{\text{diagrams of points in }\mathbb{R}^m\} \rightarrow \mathbb{R}^q$ is the function $\phi(D) = \sum_{p \in D}\lambda p + \gamma\text{maxpool}(D) + \beta$, where $\lambda$,$\gamma$ are $\mathbb{R}^{m\times q}$ trainable matrices, $\beta$ is a $\mathbb{R}^q$ trainable vector and $\text{maxpool}$ returns a maximum value for every $i=1,\dots,m$.
\item The operational layer $\text{op}:\mathbb{R}^q \rightarrow \mathbb{R}^t$ puts all coordinates in increasing order and composes the result with standard densely connected layer \cite{tensorflow2015-whitepaper} $\text{Dense}: \mathbb{R}^q \rightarrow \mathbb{R}^t$.
    \end{itemize}

\noindent
An output is a vector in $\R^t$ for $t=15$ of image classes.
A final prediction is obtained by choosing a class with a largest coordinate in the output vector.
\medskip

\noindent    
\textbf{The image layer} $\text{Im}[x,y]$ for integer parameters $x,y$ and a multiset of points in the unit square $[0,1]^2$ consists of the following functions.    
\begin{itemize}
        \item The coefficient layer $w:\R^2 \rightarrow \R$ is a piecewise constant function trained on $x\cdot y$ parameters, defined on the unit square partition 
        $$\mathcal{P}(x,y) = \left\{ \left[\frac{i}{x}, \frac{i+1}{x}\right] \times  \left[\frac{j}{y}, \frac{j+1}{y}\right]  \mid i = 0,\dots,x-1 \text{ and } j =0,\dots,y-1 \right\}. $$
\item Let $\phi_{p}:\mathbb{R}^2 \rightarrow \mathbb{R}$ be the Gaussian distribution centered at $p\in D$ with a trainable standard deviation $\sigma$. The transformation layer $\phi:\R \rightarrow \R^{xy}$ consists of $xy$ functions $\phi_p$, where $p$ runs over all centroids of the partition $\mathcal{P}(x,y)$.
 \item The operation layer op takes the sum over the given point cloud. A final prediction is made by composing the operation layer with the Dense layer.
    \end{itemize}
\medskip

\noindent
Finally, the PersLay network used the optimizer tf.keras.adam with the standard learning rate 0.01 and 150 epochs, the loss function SparseCategoricalCrossEntropy, the 80:20 of training and testing,
 a 5-fold Monte Carlo cross validation for each run.
\medskip

\noindent 
Fig.~\ref{fig:max-layer_affine}, \ref{fig:max-layer_projective}, \ref{fig:image-layer_affine} show that the mergegram $\MG$ consistently outperforms two other isometry invariants: 0D persistence and the multiset $NN(4)$ consisting of 4-tuple distances to neighbors per given point. 
The simpler multiset $NN(2)$ performed worse.
A given cloud $C\subset\R^2$ was considered as a baseline input.
The noise factors $\de$ reached 25\%, which means that original images were distorted up to a quarter of image sizes. 

\section{A discussion of novel contributions and further open problems}
\label{sec:discussion}

This chapter has further demonstrated that the provably stable-under-noise invariant mergegram of a dendrogram is a fast and efficient tool in the challenging problem of isometry shape recognition, especially for substantially distorted images.
\medskip

\noindent
In comparison with the conference version \cite{elkin2020mergegram}, section~\ref{sec:relations} proved new Theorem~\ref{thm:mergegram-to-dendrogram} describing how to reconstruct a single-linkage dendrogram in general position from its much simpler mergegram.   
It is hard to define a continuous metric between dendrograms, especially because they can be unstable under perturbations.
Theorem~\ref{thm:mergegram-to-dendrogram} allows us to measure a continuous similarity between dendrograms in general position as the bottleneck distance between their unique mergegrams.
This distance can be computed in time $O(n^{1.5}\log n)$ \cite{kerber2016geometry} for diagrams consisting of at most $n$ points.
\medskip

\noindent 
Section~\ref{sec:stability} provided a full proof of stability of the mergegram under perturbations of points, while the earlier paper \cite{elkin2020mergegram} only announced this result without proving highly non-trivial Lemmas~\ref{lem:merge_module_decomposition} and~\ref{lem:merge_modules_interleaved}, which required a heavy algebraic machinery.
\medskip

\begin{figure}[h!]
\centering
\begin{tikzpicture}[thick,scale=0.75, every node/.style={transform shape}]
\begin{axis}[xlabel = bound $\de$ of uniform noise relative to image size, ylabel = recognition rate, grid, grid style={gray},legend style={at={(0.5,-0.1)},anchor=north}]
\addplot[teal, mark=*] table [x=e, y=m, col sep=comma] {Tables/ProjectiveUniform2.csv};
\addplot table [x=e, y=p, col sep=comma] {Tables/ProjectiveUniform2.csv};
\addplot table [x=e, y=nn4, col sep=comma] {Tables/ProjectiveUniform.csv};
\addplot[blue,mark=*] table [x=e, y=c, col sep=comma] {Tables/ProjectiveUniform2.csv};
\end{axis}
\end{tikzpicture}
\begin{tikzpicture}[thick,scale=0.75, every node/.style={transform shape}]
\begin{axis}[xlabel = standard deviation $\de$ of Gaussian noise, grid,grid style={gray},legend style={at={(0.5,0.5)},anchor=center}]
\addplot[teal, mark=*] table [x=e, y=m, col sep=comma] {Tables/ScaleGaussian.csv};
\addlegendentry{mergegram}
\addplot table [x=e, y=p, col sep=comma] {Tables/ScaleGaussian.csv};
\addlegendentry{0D persistence}
\addplot table [x=e, y=nn4, col sep=comma] {Tables/ScaleGaussian.csv};
\addlegendentry{NN(4) distances}
\addplot[blue,mark=*] table [x=e, y=c, col sep=comma] {Tables/ScaleGaussian.csv};
\addlegendentry{point cloud}
\end{axis}
\end{tikzpicture}
\caption{Recognition rates are obtained by training the max layer MAX(75) of PersLay on three isometry invariants and a cloud of corner points extracted from 15000 affinely distorted images.}
\label{fig:max-layer_affine}
\end{figure}
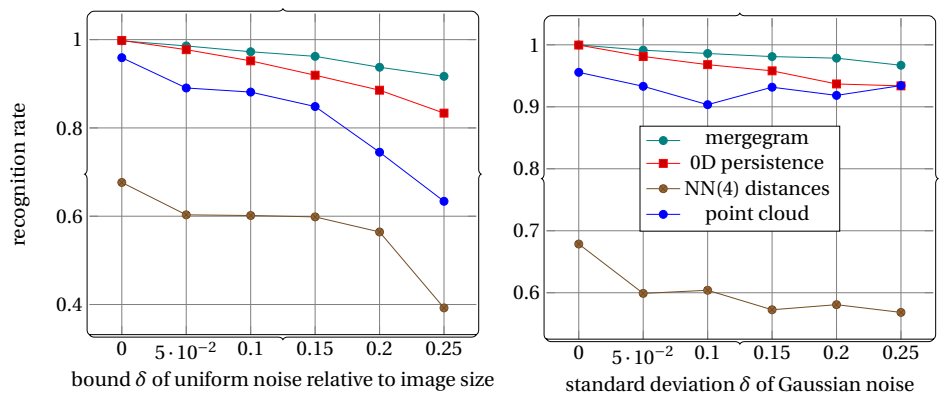

\begin{figure}[h!]
\centering
\begin{tikzpicture}[thick,scale=0.75, every node/.style={transform shape}]
\begin{axis}[xlabel = upper bound $\de$ of uniform noise, ylabel = recognition rate,grid,grid style={gray},legend style={at={(0.5,-0.1)},anchor=north}]
\addplot[teal, mark=*] table [x=e, y=m, col sep=comma] {Tables/ProjectiveUniform2.csv};
\addplot table [x=e, y=p, col sep=comma] {Tables/ProjectiveUniform2.csv};
\addplot table [x=e, y=nn4, col sep=comma] {Tables/ProjectiveUniform.csv};
\addplot[blue,mark=*] table [x=e, y=c, col sep=comma] {Tables/ProjectiveUniform2.csv};
\end{axis}
\end{tikzpicture}
\begin{tikzpicture}[thick,scale=0.75, every node/.style={transform shape}]
\begin{axis}[xlabel = standard deviation $\de$ of Gaussian noise, grid, grid style={gray},legend style={at={(0.25,0.32)},anchor=north}]
\addplot[teal, mark=*] table [x=e, y=m, col sep=comma] {Tables/ProjectiveGaussian.csv};
\addlegendentry{mergegram}
\addplot table [x=e, y=p, col sep=comma] {Tables/ProjectiveGaussian.csv};
\addlegendentry{0D persistence}
\addplot table [x=e, y=nn4, col sep=comma] {Tables/ProjectiveGaussian.csv};
\addlegendentry{NN(4) distances}
\addplot[blue,mark=*] table [x=e, y=c, col sep=comma] {Tables/ProjectiveGaussian.csv};
\addlegendentry{point cloud}
\end{axis}
\end{tikzpicture}
\caption{Recognition rates are obtained by training the max layer MAX(75) of PersLay on isometry invariants and corner points extracted from 15000 projectively distorted images.}
\label{fig:max-layer_projective}
\end{figure}
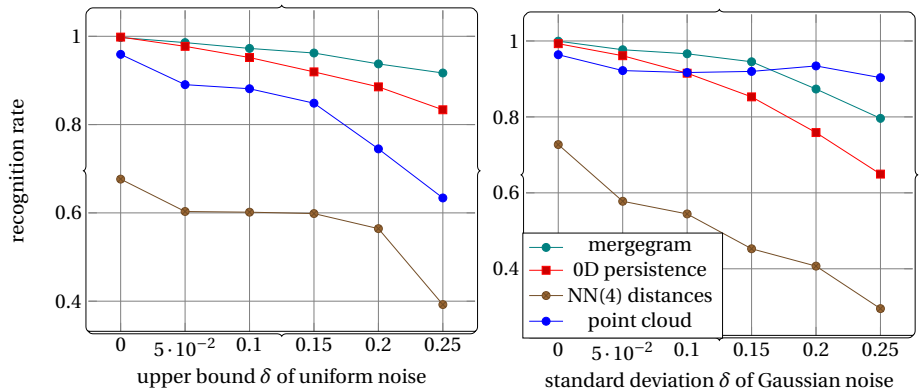

\begin{figure}[h!]
\centering
\begin{tikzpicture}[thick,scale=0.75, every node/.style={transform shape}]
\begin{axis}[xlabel = upper bound $\de$ of uniform noise, ylabel = recognition rate, grid, grid style={gray},legend style={at={(0.5,-0.1)},anchor=north}]
\addplot[teal, mark=*] table [x=e, y=m, col sep=comma] {Tables/RotateScaleImage.csv};
\addplot table [x=e, y=p, col sep=comma] {Tables/RotateScaleImage.csv};
\addplot table [x=e, y=n, col sep=comma] {Tables/RotateScaleImage.csv};
\addplot[blue,mark=*] table [x=e, y=c, col sep=comma] {Tables/RotateScaleImage.csv};
\end{axis}
\end{tikzpicture}
\begin{tikzpicture}[thick,scale=0.75, every node/.style={transform shape}]
\begin{axis}[xlabel = standard deviation $\de$ of Gaussian noise, grid, grid style={gray} ,legend style={at={(0.75,0.45)},anchor=center}]
\addplot[teal, mark=*] table [x=e, y=m, col sep=comma] {Tables/ScaleGaussianImage.csv};
\addlegendentry{mergegram}
\addplot table [x=e, y=p, col sep=comma] {Tables/ScaleGaussianImage.csv};
\addlegendentry{0D persistence}
\addplot table [x=e, y=n, col sep=comma] {Tables/ScaleGaussianImage.csv};
\addlegendentry{NN(4) distances}
\addplot[blue,mark=*] table [x=e, y=c, col sep=comma] {Tables/ScaleGaussianImage.csv};
\addlegendentry{point cloud}
\end{axis}
\end{tikzpicture}
\caption{Recognition rates are obtained by training the image layer IM[20,20] of PersLay on isometry invariants and a cloud of corner points extracted from 15000 affinely distorted images.}
\label{fig:image-layer_affine}
\end{figure}
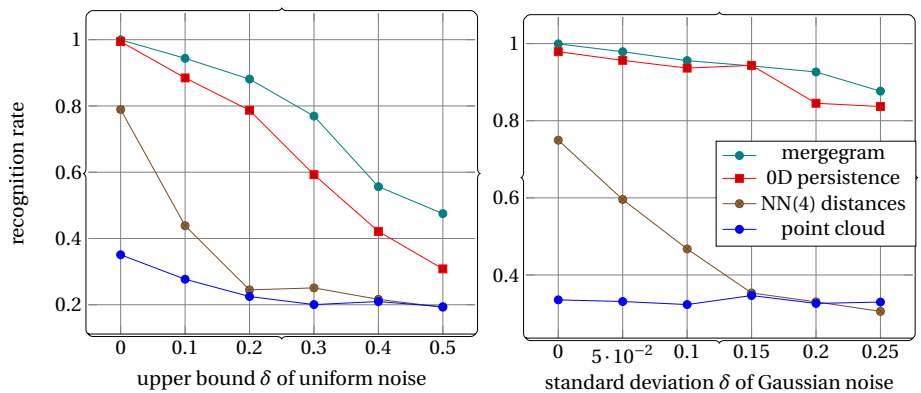

\noindent
Example~\ref{exa:5-point_line} and the discussion following Theorem~\ref{thm:mergegram_to_0D_persistence} justify that the invariant mergegram is strictly stronger than the 0D persistence.
This theoretical fact is now confirmed by the new experiments on 15000 point clouds extracted from substantially distorted real shapes.
In Fig.~\ref{fig:max-layer_affine}, \ref{fig:max-layer_projective}, \ref{fig:image-layer_affine} the mergegram outperformed other isometry invariants.
Since the distribution $NN(2)$ of distances to two closest neighbors per point performed badly, we have strengthened this invariant to $NN(4)$ of distances to four nearest neighbors.
However, even $NN(4)$ performed always always worse than the original point cloud, which can not be considered as an isometry invariant.
\medskip

\noindent
For very high levels of 20\% and 25\% distortions in projective transformations, the PersLay network trained on a point cloud achieved high recognition rates, because we have extensively tried many parameters in the layers MAX(75) and Im[20,20] for a best trade-off between accuracy and speed.
The C++ code for the mergegram is at \cite{elkin2020mergegram}.
\medskip


\chapter{Conclusion and future research work} 
\label{ch:conclusions} 

The main accomplishments of this thesis are the following.

\medskip

\noindent 
Chapter \ref{ch:knn} is dedicated to the $k$-nearest neighborhood problem, Problem \ref{pro:knn}.
In this chapter, we provide a concrete set of counterexamples that illustrates that the time complexity estimates of the past approaches 
based on cover tree data structure \cite{beygelzimer2006cover,ram2009linear} were proven incorrectly.
To overcome past challenges we introduce a new data structure, a compressed cover tree, as well as a new algorithm for the $k$-nearest neighborhood search. At the end of the chapter, we rigorously demonstrate that the new approach resolved Problem \ref{pro:knn} in 
$O(k \cdot \log(k) \cdot \log(R) \cdot |R|)$ time, parametrized by expansion constant $c(R)$ of Definition \ref{dfn:expansion_constant}.

\medskip

\noindent
Chapter \ref{ch:mst} is dedicated to a minimum spanning tree problem of any finite metric space. 
It is shown that in the past approach \cite{march2010fast} the time complexity result that claimed the parametrized $O(|R| \cdot \log(|R|)$ time complexity was proven incorrectly. In this chapter, we present a new simpler algorithm based on a compressed cover tree. 
In the end the problem gets resolved with a weaker $O(|R| \cdot  \log(|R|) \cdot \log(\Delta(R)))$ parametrized time complexity estimate.

\medskip

\noindent
Chapter \ref{ch:mergegram} is dedicated to a new object, mergegram. It is confirmed that mergegram is stable under perturbations of data set $R$ and contains more information than standard $0$-dimensional persistence. In the experimental part of the chapter, we apply mergegram to the recognition of distorted $2$d shapes - problem. It is experimentally verified that mergegram produces excellent results in 2d shape recognition when it is combined with Perslay-neural network architecture.

\medskip

\noindent
As no work is ever complete it is important to highlight a few 
perspective research directions:

\medskip

\noindent
\textbf{Paired-trees:} Worst-time complexities for $k$-nearest neighbors and the minimum spanning tree problem can be potentially improved by utilizing cover trees of both query and reference sets. Given query and reference sets $Q$, $R$ we wish to develop an algorithm that scans through all relevant pairs of nodes in $\T(Q) \times \T(R)$ in 
time  $O(c^{O(1)} \cdot \max \{|Q|,|R|\})$, where $c$ is an expansion constant that depends on sets $Q$,$R$. In \cite[Algorithm~1]{ram2009linear} such algorithm was developed, but Counterexample \ref{cexa:dualtreeproof} exposes its time complexity estimate was incorrect.
In \cite{curtin2015plug} a more general approach was adopted to obtain a time complexity bound for the paired traversal of $(\T(R), \T(Q))$. However the time complexity estimate of the provided algorithm depended on two extra parameters: imbalance $I$ and extra-reference recursions $\theta$. No proper time complexity estimate was provided for the imbalance $I$. This imbalance $I$ can be trivially bounded by $O(|R| \cdot |Q|)$. However this is not sufficient to achieve parametrized worst-time complexity $O(|R|)$. In the future we wish to apply ideas from past approaches \cite{ram2009linear,curtin2015plug}  to generate a new algorithm guaranteeing $O(c^{O(1)} \cdot \max \{|Q|,|R|\})$ complexity for the paired-tree traversal.  

\medskip 

\noindent
\textbf{Improving the minimum spanning tree algorithm:} To improve minimum spanning tree algorithm we can first consider the following smallest-pair problem: given a quotient space $R / \sim$ and compressed cover tree $\T(R)$, find pair $(a,b)$ such that $a$, $b$ belong to distinct components of $R / \sim$ and minimize the distance $d(a,b)$. This could be potentially done by augmented paired tree traversal of $\T(R) \times \T(R)$. It should be noted that this task is significantly easier than finding all non-trivial nearest neighbors of the set $R$.
Using results, similar to Lemma \ref{lem:construction_depth_bound} and Lemma \ref{lem:knn_depth_bound} it might be realistic to prove that such a task could be completed in $O(c(R)^{O(1)} \cdot \log_2(|R|))$ complexity. We can then exploit the idea of Kruskal's algorithm i.e. we complete $|R|$ iteration, in each of which we modify quotient space $R / \sim$ by merging two closest pairs belonging to distinct components. This algorithm could potentially achieve
$O(c(R)^{O(1)} \cdot |R| \cdot \log_2(|R|) \cdot \alpha(|R|))$ worst-case time complexity bound for the construction of a minimum spanning tree on a finite metric space $(R,d)$.

\medskip

\noindent
\textbf{Multiparameter mergegrams} By taking advantage of multiparameter dendrograms  \cite{carlsson2010multiparameter} it is possible to consider multiparameter mergegrams. It will be beneficial to explore different definitions of multiparameter mergegrams and metrics between them. One of the most interesting theoretical problems in multiparamter dendrograms in our view is to find out under which metrics and conditions multiparameter dendrograms would obtain a stability property similar to Theorem~\ref{thm:stability_mergegram}. 

\bibliographystyle{plainurl}
\addcontentsline{toc}{chapter}{Bibliography} 
\bibliography{referencesUpdated}


\end{document}